\newtheorem{theorem}{Theorem}[section]
\newtheorem{lemma}[theorem]{Lemma}
\newtheorem{exercise}[theorem]{Exercise}
\newtheorem{proposition}[theorem]{Proposition}
\newtheorem{corollary}[theorem]{Corollary} 
\theoremstyle{definition}  
\newtheorem{definition}[theorem]{Definition}
\newtheorem{example}[theorem]{Example}
\newtheorem{remark}[theorem]{Remark}
\newcommand{\Tr}{\text{Tr}}%{\text{Tr}\,} 
\newcommand{\G}{\mathcal{G}}
\newcommand{\g}{\mathfrak{g}}
\newcommand{\la}{\langle\,} 
\newcommand{\ra}{\,\rangle}
\newcommand{\ben}{\begin{enumerate}}
\newcommand{\een}{\end{enumerate}}
\renewcommand{\Bbb}{\mathbb}
\renewcommand{\frak}{\mathfrak}
\renewcommand{\bold}{\mathbf}
\numberwithin{equation}{section}
\begin{document}

\title{Mathematical ideas and notions of quantum field theory}

\author{Pavel Etingof} 

\address{Department of Mathematics, MIT, Cambridge, MA 02139, USA}

\maketitle

\centerline{\bf To Pierre Deligne on his 80th birthday with admiration} 

\tableofcontents

%%%%%%%%%%%%%%%%%%%%%%%%%%%%%%%%%%%%%%%%%%%%%%%%%%%%%%%%%%%%%%%%%%%%%%%%%%%%%%

\section*{Introduction} 

Physics has always been a major source of both motivation and applications for several central fields of mathematics, such as analysis, differential equations and probability. However, the development of quantum field theory and string theory in the
last four decades has taken interactions between these two disciplines to an unprecedented level, incorporating into physics such traditionally ``pure'' areas of mathematics as algebraic topology, category theory, differential and algebraic geometry, representation theory, combinatorics, 
and even number theory. This interaction has been 
highly fruitful in both directions, and led to a necessity for 
physicists to know the basics of modern mathematics and for mathematicians to
know the basics of modern physics. Physicists have been quick to learn, and 
nowadays good physicists often understand relevant areas of mathematics 
as deeply as professional mathematicians. On the other hand, 
many mathematicians have been dragging their feet, deterred by lack of
rigor in physical texts, and, more importantly, by a 
different manner of presentation. In particular, even the basic
setting of quantum field theory, necessary for
understanding its more advanced (and more mathematically exciting)
parts, is already largely unfamiliar to mathematicians. 
Nevertheless, many of the basic ideas of quantum field theory 
can in fact be presented in a rigorous and mathematically natural way. 
Doing so is the main goal of this text.  

Namely, these are slightly expanded lecture notes for a graduate course on basic mathematical structures of quantum field theory that I gave at the MIT Mathematics Department in 2002 and then again in 2023. The reader should not hope to learn quantum field theory from this text - this is impossible, for instance, because I know less about this subject than a beginning physics graduate student. Rather, as mentioned above, its aim is to present the basic setup of quantum field theory in a mathematically motivated manner, highlighting its connections with various fields of mathematics. As such, it could serve to prepare the reader for more advanced texts in this genre, such as \cite{QFS}, or for reading a regular QFT textbook or lecture notes, such as \cite{Co,W,IZ,PS} from a mathematician's viewpoint. Note that a lot of important material is contained in exercises, which I strongly recommend the reader to solve while reading the text. 

We begin with a general discussion of classical and quantum mechanics and field theory (Chapter 1). Then we proceed to prove the steepest descent and stationary phase formulas in classical asymptotic analysis, which serve as a finite dimensional model for perturbative computations with path integrals (Chapter 2). Then, in Chapter 3, we develop Feynman calculus, the main combinatorial tool in perturbative quantum field theory. To illustrate Feynman calculus, we give a number of ins applications to enumerative combinatorics (the matrix-tree theorem and its specializations). 

In Chapter 4, we extend Feynman calculus to matrix integrals, and show that for such integrals Feynman graphs are replaced by fat graphs (surfaces), so that 
the coefficients of the asymptotic expansion in $1/N$ (where $N$ is the matrix size) 
are sums over fat graphs of a given genus. This allows us, in Chapter 5, to prove 
Harer-Zagier's theorem on the Euler characteristic of the moduli space of curves, 
and in Chapter 6 to obtain non-trivial counts of planar graphs.  

All this material is, however, about quantum field theory in $0$ spacetime dimensions, or, as one may jokingly say, in $-1$-dimensional space. To connect to real physics, 
we must go up at least one dimension, i.e., consider quantum field theory 
in $0+1$ spacetime (or $0$ space) dimensions, which is quantum mechanics (Chapter 7). We begin with a review of Lagrangian formalism of classical mechanics (Lagrangians, least action principle), and then proceed to quantize this formalism, developing the path integral approach to quantum mechanics. Namely, we describe perturbative expansion of quantum-mechanical path integrals using Feynman diagrams and give several examples. We also explain that quantum mechanical path integrals are related to (rigorously defined) Wiener integrals in the theory of stochastic processes by the Wick rotation of the time, $t\mapsto it$. 

In Chapter 8, after reviewing Hamiltonian formalism in classical mechanics, 
we describe its quantization, which gives a rigorous basis for non-perturbative quantum mechanics. We also prove the Feynman-Kac formula which relates 
the correlation functions obtained in the Lagrangian and Hamiltonian approaches. 

In Chapters 9 and 10, we discuss the super-generalization of the material of the previous chapters, i.e., describe classical and quantum mechanics for fermions. We begin with a review of supergeometry and Berezin's integration theory on supermanifolds and proceed to extend Feynman calculus to the super-case. Then we discuss classical and quantum mechanics for fermions. 

In Chapter 11 we finally get to the actual quantum field theory, in $d+1$ spacetime dimension with $d\ge 1$. We start with reviewing Lagrangian classical field theory and then pass to its quantization, in particular describing the theories of free bosons and fermions. We review the classical theory of spinors (in particular, real Bott periodicity for them, modulo 8) and use it to describe 
the possible kinetic terms and mass terms in fermionic lagrangians.  Then we turn to hamiltonian formalism in both classical and quantum field theory and discuss Wightman axioms. We conclude with describing the quantum theory of a free scalar boson from this point of view. 

In Chapter 12, we describe the basics of the perturbative renomalization theory. 
In particular, we discuss ultraviolet divergences of Feynman amplitudes and regularization of such divergences by introducing counterterms in the Lagrangian depending on the cutoff $\Lambda$ in the momentum space. We define 
super-renormalizable, renormalizable and non-renormalizable theories, critical dimensions for various theories and terms in the Lagrangian, and discuss the key examples. 

Finally, in Chapter 13 we give a brief introduction to 2-dimensional conformal field theory. After a review of classical field theory of a massless scalar in 1+1 dimensions, we quantize it and construct its Hilbert space from the Fock representation of the infinite dimensional Heisenberg Lie algebra. We show that the partition function of this theory (normalized using the zeta function regularization) is modular invariant, reflecting its conformal symmetry. Then we show that the Hilbert space of the theory carries two commuting projective actions of the Lie algebra $W$ of polynomial vector fields on $\Bbb C^\times$, which expresses the infinitesimal conformal symmetry. We explain that this action is truly projective, i.e., both copies of $W$ are replaced by its non-trivial central extension - the Virasoro algebra (conformal anomaly). Then we discuss a circle-valued version of this theory, vertex operators and $T$-duality. We also briefly discuss the quantum theory of a free fermion in 1+1 dimensions and the Wess-Zumino-Witten model. 

{\bf Acknowledgements.} I am very grateful to my coauthors of \cite{QFS}; this text would definitely not have appeared had we not collaborated on this project almost 30 years ago. In particular, I'd like to thank David Kazhdan, who prompted me to study the basics of quantum field theory, Edward Witten, from whom I learned almost everything I know about it, Dan Freed, without whose careful notes and explanations this would have been impossible, and Pierre Deligne, who infused and greatly facilitated our learning with deep mathematical insights, clarity and elegance. I am also indebted to the participants of the MIT courses in 2002 and 2023 which gave rise to this text. This work was partially supported by the NSF grant DMS-2001318.

\section{Generalities on quantum field theory}

\subsection{Classical mechanics} In classical mechanics, we study the motion 
of a particle (say, of mass $1$) in a Euclidean space $V$ . This motion is described by a function of one variable, $q=q(t)\in V$, representing the position 
of the particle at a time $t$. This function must satisfy the {\it Newton equation 
of motion}, \index{Newton equation 
of motion}
$$
\ddot{q}=-U'(q),
$$ 
where $U$ is the potential energy.  

Another way 
to express this law of motion is to say that $q(t)$ must be 
a solution of a certain variational problem. Namely, one
introduces the {\it Lagrangian} \index{Lagrangian}
$$
{\mathcal L}(q):=\frac{{\dot{q}}^2}{2}-U(q)
$$
(the difference of kinetic and potential energy), 
and the {\it action} functional \index{action}
$$
S(q):=\int_{a}^{b}{\mathcal L}(q)dt
$$
(for some fixed $a<b$). 
Then the law of motion can be expressed as the {\it least action
  principle}\index{least action principle}: $q(t)$ must be a critical point of $S$ 
on the space of all functions with given
$q(a)$ and $q(b)$, i.e., the Newton equation is the Euler-Lagrange
equation for a solution of the variational problem defined by
$S$. 
Indeed, using integration by parts, for $\varepsilon \in C^1[a,b]$ with 
$\varepsilon(a)=\varepsilon(b)=0$ we have 
$$
\tfrac{d}{ds}|_{s=0}\int_a^b{\mathcal L}(q+s\varepsilon)dt=
\int_a^b(\tfrac{\partial \mathcal L}{\partial q}\varepsilon+
\tfrac{\partial \mathcal L}{\partial \dot{q}}\dot{\varepsilon})dt=
$$
$$
\int_a^b(-U'(q)\varepsilon+\dot{q}\dot{\varepsilon})dt=
-\int_a^b(U'(q)+\ddot{q})\varepsilon dt,
$$
and this vanishes for all $\varepsilon$ iff 
$q$ satisfies the Newton equation $\ddot{q}=-U'(q)$. 

\begin{remark} The name ``least action principle'' comes from the fact that 
in some cases (for example when $U''\le 0$) 
the action is not only extremized 
but also minimized at the solution $q(t)$.
In general, however, this is not the case, and 
the trajectory of the particle may be not a (local) minimum, 
but only a critical point of the action. 
Therefore, the law of motion is better formulated 
as the ``extremal (or stationary) action principle''; this is the way we will think 
of it in the future.
\end{remark}

\begin{exercise}\label{unii} (i) Consider the motion of a particle in a Euclidean space $V$.  
Show that if the potential is concave ($U''(q)\le 0$)
then for any $\bold a,\bold b\in V$ and $a<b\in\Bbb R$ there exists at most one solution 
of the Newton equation with $q(a)=\bold a$ and $q(b)=\bold b$, and it is the strict global minimum for the action with these boundary conditions (if exists). 

(ii) Show that the conclusion of (i) holds if $U''(q)<\frac{\pi^2}{(b-a)^2}$
(prove and use Wirtinger's inequality: if $\varepsilon\in C^1[a,b]$ and $\varepsilon(a)=\varepsilon(b)=0$ then $\int_a^b \varepsilon'(t)^2dt\ge \frac{\pi^2}{(b-a)^2}\int_a^b \varepsilon(t)^2dt$).  

(iii) Compute the unique solution in (i) if $U(q)=-\frac{1}{2}B(q,q)$, where 
$B$ is a nonnegative definite symmetric bilinear form on $V$. 

(iv) Show that the statements of (i) fail for $\dim V=1$, $U(q)=\frac{1}{2}q^2$ and $b-a\ge \pi$.

(v) Let $\dim V=1$ and $U$ be a smooth potential on $\Bbb R$. 
Suppose that $\limsup_{|x|\to \infty}\frac{U(x)}{x^2}\le 0$. 
Show that a solution in (i) (possibly non-unique) exists for any $a,b,\bold a,\bold b$.  
Give an example of a smooth potential $U$ for which a solution in (i) 
does not always exist.\footnote{One can show 
using calculus of variations that for any $\dim V$, if $U(q)\le 0$ for all $q$ then the solution always exists.}
\end{exercise} 

\begin{remark} 
Physicists often consider solutions of Newton's equation on the whole line
rather than on a fixed interval $[a,b]$. In this case, 
the naive definition of an extremal 
does not make sense,
since the action integral 
$S(q)=\int_{\Bbb R}{\mathcal L}(q)dt$ is improper and in general 
diverges. Instead, one makes the following ``corrected'' definition: 
a function $q(t)$ 
on $\Bbb R$ is an extremal of $S$ if the expression 
$$
\tfrac{d}{ds}|_{s=0}\int_{\Bbb R}{\mathcal L}(q+s\varepsilon)dt:=
\int_{\Bbb R}(\tfrac{\partial \mathcal L}{\partial q}\varepsilon+
\tfrac{\partial \mathcal L}{\partial \dot{q}}\dot{\varepsilon})dt,
$$
where $\varepsilon(t)$ is any compactly supported perturbation,
is identically zero. 
With this definition, the extremals are exactly the solutions of
Newton's equation (which, as before, is easily seen by integration by parts).  
\end{remark}

\begin{remark} Note that this formalism also describes the motion of a system of $n$ particles, if we combine the vectors representing their positions in a Euclidean space $V$ into a single vector in $V^{n}$.  More generally, we may consider a particle moving on a Riemannian manifold $M$. In this case $q(t)$ is a path on $M$, and the motion is described by the same equation, where $\ddot{q}$ means the covariant derivative $\nabla_{\dot q}\dot q$ of $\dot{q}$ with respect to the Levi-Civita connection. For example, if $U=0$, this is the {\it geodesic flow}\index{geodesic flow}, whose trajectories are the geodesics on $M$. The same applies to a system of $n$ particles on $M$, in which case $q(t)$ is a path on the configuration space $M^n$.
Finally, a similar analysis applies to more general Lagrangians, which are 
arbitrary smooth functions of (finitely many) derivatives of $q$. 
\end{remark} 

\subsection{Classical field theory} In classical field theory, the situation is
similar, but with infinitely many particles. Namely, in this case we should think not of a single particle or a finite system of particles, but
rather of a ``continuum of particles'' (e.g. a string, a membrane, a jet
of fluid); so in a $d+1$-dimensional classical field theory 
the motion is described by a {\it classical field}\index{classical field} -- a 
(vector-valued) function 
$\phi(x,t)$ depending on both space and time coordinates
($x\in \Bbb R^d$, $t\in \Bbb R$). 
Consequently, the equation of motion is a partial differential
equation. For example, for a string or a membrane  
the equation of motion is the {\it wave equation}\index{wave equation}
$\square \phi=0$, where $\square$ is the {\it D'Alembertian}\index{D'Alembertian} 
$\partial_t^2-v^{2}\Delta$ (here $\Delta$ is the Laplacian with
respect to the space coordinates, and $v$ the velocity of wave propagation, e.g. for the string $v^2$ is proportional to the string tension). 

As in classical mechanics, in classical field theory 
there is a Lagrangian ${\mathcal L}(\phi)$ (a smooth function of finitely many partial derivatives of $\phi$),
whose integral 
$$
S(\phi)=\int_{D}{\mathcal L}(\phi)dxdt
$$ 
over a compact region $D$ in the {\it spacetime}\index{spacetime} $\Bbb R^{d+1}$ is called the {\it action}.\index{action}
The law of motion can be expressed as the condition that 
the action must be extremized over any such region $D$ with fixed boundary conditions; 
so the equations of motion (also called 
the {\it field equations}\index{field equations}) are the Euler-Lagrange equations
for this variational problem. 
For example, in the case of string or membrane, 
the Lagrangian is 
$$
{\mathcal L}(\phi)=\tfrac{1}{2}(\phi_t^2-v^{2}(\nabla \phi)^2). 
$$

\begin{remark} Like in mechanics, in field theory solutions of 
the equations of motion on the whole space-time (rather than a compact
region $D$) are extremals of the action in the sense that 
$$
\tfrac{d}{ds}|_{s=0}\int_{\Bbb R^{d+1}}{\mathcal L}(u+s\varepsilon)dxdt=0,
$$
where $\varepsilon$ is a compactly supported perturbation. 
\end{remark} 

\subsection{Brownian motion} One of the main differences between classical and quantum
mechanics is, roughly speaking, that quantum particles do not have to 
obey the classical equations of motion, but can randomly deviate 
from their classical trajectories. 
Therefore, given the position and velocity of the
particle at a given time, we cannot determine its position at a
later time, but can only determine the density of probability 
that at this later time the particle will be found at a given
point. In this sense quantum particles are similar
to random (Brownian) particles. Brownian particles are a bit 
easier to understand conceptually, so let us begin with them.

The motion of a Brownian particle in $\Bbb R^k$ 
in a potential field $$U:\Bbb R^k\to \Bbb R$$ 
is described by a stochastic process 
$q=q(t)$, $q=(q_1, \ldots ,q_k)\in \Bbb R^k$. That is, 
for each real $t$ we have 
a random variable $q(t)\in \Bbb R^k$ (the position of the particle at a time $t$), 
such that the dependence of $t$ is 
regular in some sense. Namely, for $\bold a,\bold b\in \Bbb R^k$ the random dynamics of the particle conditioned to have $q(a)=\bold a$, $q(b)=\bold b$ is ``defined'' as follows:\footnote{We put the word ``defined'' in quotation marks 
because this definition is obviously heuristic and not rigorous;
see below for more explanations.} 
if $y:[a,b]\to \Bbb R^k$ 
is a continuously differentiable function with $y(a)=\bold a,y(b)=\bold b$, then the density of 
probability that $q(t)=y(t)$ for $t\in [a,b]$ is 
proportional to $e^{-S(y)/\kappa}$, where 
$$
S(y):=\int_a^b
(\tfrac{1}{2}{y'}^2+U(y))dt
$$ 
is the action 
and $\kappa>0$ is the diffusion coefficient. 
Thus, the likeliest 
$q(t)$ is the one that 
minimizes $S$ (in particular, solves the classical 
equations of motion $\ddot{q}=U'(q)$), while the likelihood 
of the other paths decays exponentially with the deviation 
of the action of these paths from the minimal possible.  

\begin{remark} 1. This discussion thus assumes that the extremum of $S$ at $q$ is 
actually a minimum, which we know is not always the case, but is so when $U$ is convex, i.e., $U''(q)\ge 0$ for all $q$ (see Exercise \ref{unii}). 

2. The reader must have noticed that compared to the discussion of classical mechanics, the sign in front of the potential $U$ has been changed to the opposite one. This is not a misprint! It has to do with the fundamental fact discussed below that statistical mechanics is related to usual (quantum) mechanics by the {\it Wick rotation} $t\mapsto it$, where $i=\sqrt{-1}$. 
In particular, this means that Brownian motion is well defined 
in the physically important case of convex potential, such as the 
multidimensional {\it harmonic oscillator}\index{harmonic oscillator} potential $\frac{1}{2}B(q,q)$ where $B$ is a positive definite bilinear form. 
\end{remark} 

All the information we can hope to get 
about the stochastic process $q(t)$ is contained in the {\it correlation functions}\index{correlation functions}
$$
\langle q_{j_1}(t_1) \ldots q_{j_n}(t_n)\rangle,
$$ 
which by definition are the expectation values 
of the products of random variables 
$q_{j_1}(t_1), \ldots, q_{j_n}(t_n)$, 
(more specifically, by Kolmogorov's theorem
the stochastic process $q(t)$ is completely determined by 
these functions). So such functions should be regarded as 
the output, or answer, of the theory of the Brownian particle. 

Thus the main question is how to compute the correlation functions. 
Physicists write down the following ``answer" motivated by the above definition: 
given points $t_1, \ldots ,t_n\in [a,b]$,
\begin{equation}\label{brmo}
\langle q_{j_1}(t_1) \ldots q_{j_n}(t_n)\rangle=
\int_{P_{\bold a,\bold b}} q_{j_1}(t_1) \ldots q_{j_n}(t_n)e^{-\frac{S(q)}{\kappa}}Dq, 
\end{equation}
where integration is carried out
over the space $P_{\bold a,\bold b}$ of paths 
$$
q: [a,b]\to \Bbb R^n,\ q(a)=\bold a,\ q(b)=\bold b,
$$ 
and 
$Dq$ is a Lebesgue measure on this space 
such that 
$$
\int_{P_{\bold a,\bold b}} e^{-\frac{S(q)}{\kappa}}Dq=1.
$$ 
Alternatively, when they do not want to normalize the Lebesgue measure, they write 
\begin{equation}\label{brmo1}
\langle q_{j_1}(t_1) \ldots q_{j_n}(t_n)\rangle=
\frac{1}{Z}\int_{P_{\bold a,\bold b}} q_{j_1}(t_1) \ldots q_{j_n}(t_n)e^{-\frac{S(q)}{\kappa}}Dq, 
\end{equation}
where 
$$
Z:=\int_{P_{\bold a,\bold b}} e^{-\frac{S(q)}{\kappa}}Dq
$$
is the {\it partition function}.\index{partition function}  Such an integral is called 
a {\it path integral},\index{path integral} since it is an integral over the space of
paths. 

It is clear, however, that such definition and answer are a priori 
not satisfactory from the mathematical viewpoint, since the infinite 
dimensional integration requires justification.  
In the case of Brownian motion, such a justification is actually possible within the framework 
of the Lebesgue measure theory, and the corresponding integration 
theory is called the theory of {\it Wiener integral}.\index{Wiener integral}
(To be more precise, one cannot define 
the measure $Dq$, but one can define the measure 
$e^{-\frac{S(q)}{\kappa}}Dq$ for sufficiently nice potentials $U(q)$). 

\begin{remark}\label{classlim} As $\kappa\to 0$, the non-optimal trajectories become increasingly less likely 
relatively to the optimal one, so in the limit we recover the deterministic system: 
$$
\langle q_{j_1}(t_1) \ldots q_{j_n}(t_n)\rangle\to \bold q_{j_1}(t_1) \ldots \bold q_{j_n}(t_n), 
$$
where $\bold q(t)$ is the classical trajectory with $\bold q(a)=\bold a,\bold q(b)=\bold b$ (note that if $U\ge 0$ then this trajectory is unique by Exercise \ref{unii}). 
\end{remark} 

\subsection{Quantum mechanics} Now let us turn to a quantum particle. 
Quantum mechanics is notoriously difficult to visualize,
and the randomness of the behavior of a quantum particle
is less intuitive and more subtle than that of a Brownian particle;
nevertheless, it was pointed out by Feynman that the behavior of a quantum 
particle in a potential field $U(q)$ is correctly described by the same model,
with the real positive parameter $\kappa$ replaced by 
the imaginary number $-i\hbar$ where $\hbar>0$ is the {\it Planck constant},\index{Planck constant} and the time $t$ is replaced by $it$. In other words, the dynamics
of a quantum particle can be expressed (we will discuss later how) via 
the {\it correlation functions}\index{correlation functions}
\begin{equation}\label{quantmech}
\langle q_{j_1}(t_1) \dots q_{j_n}(t_n)\rangle=
\int_{P_{\bold a,\bold b}} q_{j_1}(t_1) \ldots q_{j_n}(t_n)e^{\frac{iS(q)}{\hbar}}Dq, 
\end{equation}
where 
$Dq$ is normalized 
so that 
\begin{equation}\label{nor}
\int_{P_{\bold a,\bold b}} e^{\frac{iS(q)}{\hbar}}Dq=1,
\end{equation} 
and 
$S(q)$ is now given by the same formula 
as in classical mechanics (and differing by sign 
from Brownian motion):
$$
S(q)=\int_a^b (\tfrac{{\dot{q}}^2}{2}-U(q))dt.
$$

As before, we have to make sense of this path integral, 
and now the theory of Wiener integrals unfortunately does not work any more: for instance, the absolute value of the integrand in \eqref{nor} does not decay as the path $q(t)$ deviates from the classical trajectory (in fact, it identically equals to $1$!). 
So we will be able to make sense of \eqref{quantmech} only partially, and an effective mathematically rigorous approach to quantum mechanics is, in fact, based on different techniques (Hamiltonian formalism); this is discussed in more detail below. Still, formula \eqref{quantmech} is extremely helpful for motivational purposes and with appropriate care can be used for computation.

\begin{remark} Similarly to Brownian motion (cf. Remark \ref{classlim}), in the limit $\hbar\to 0$ we are supposed to recover the classical system: 
$$
\langle q_{j_1}(t_1) \ldots q_{j_n}(t_n)\rangle\to \bold q_{j_1}(t_1) \ldots \bold q_{j_n}(t_n). 
$$
However, now this is achieved not because individual non-optimal trajectories become less likely, but rather due to cancellation in the oscillatory integral \eqref{quantmech} which corresponds to the physical phenomenon of {\it quantum interference}.\index{quantum interference} We will observe how this cancellation 
occurs in finite-dimensional oscillatory integrals when we discuss the stationary phase formula below.  
\end{remark} 

\subsection{Quantum field theory}
The situation is the same in field theory, but with infinitely many particles. Namely, a useful 
theory of quantum fields 
(used in the study of interactions of elementary particles) 
is obtained when one considers
correlation functions 
\begin{equation}\label{quanft}
\langle\phi_{j_1}(x_1,t_1) \ldots \phi_{j_n}(x_n,t_n)\rangle= 
\int \phi_{j_1}(x_1,t_1) \ldots \phi_{j_n}(x_n,t_n)e^{\frac{iS(\phi)}{\hbar}}D\phi, 
\end{equation}
where 
$D\phi$ is normalized 
so that $\int e^{\frac{iS(\phi)}{\hbar}}D\phi=1$.

Of course, from the mathematical point of view, this setting is 
a priori even less satisfactory than the one for a quantum particle, since it involves integration with respect to the complex-valued measure $e^{\frac{iS(q)}{\hbar}}Dq$ on functions of $\ge 2$ variables which nobody knows how to define in general (even after the Wick rotation). Nevertheless, physicists imagine that certain integrals of this type exist and come to correct and interesting conclusions (both physical and mathematical). Therefore, making sense of such integrals 
is an interesting problem for mathematicians, and will be one of our main 
occupations.\footnote{To be more precise, we
will make sense of path integrals as power series in 
$\hbar$.} 

\section{The steepest descent and stationary phase formulas}

Now, let us forget for a moment that the integrals 
(\ref{brmo},\ref{quantmech},\ref{quanft}) are infinite dimensional and hence 
problematic to define, and ask ourselves the following question:
why should we expect to recover the usual classical mechanics or field theory when the parameter $\kappa$ or $\hbar$ goes to zero? 
The answer is that this expectation is based on the 
{\it steepest descent} (respectively, {\it stationary phase})\index{steepest descent principle}\index{stationary phase principle} principle
from classical analysis: if $f(x)$ is a function on $\Bbb R^d$ then 
the integrals $\int g(x)e^{-\frac{f(x)}{\kappa}}dx$, 
$\int g(x)e^{\frac{if(x)}{\hbar}}dx$ ``localize'' to minima, respectively 
critical points, of the function $f$. As this classical fact is of central 
importance to the whole subject, let us now discuss it in some detail. 

\subsection{Gaussian integrals} 

We start with auxiliary facts from linear algebra and analysis. Let $V$ be a real vector space of dimension $d$. Let $\bold M(V)$ be the set of non-degenerate complex-valued symmetric bilinear forms on $V$ with non-negative definite real part. We have an open dense subset $\bold M^\circ(V)\subset \bold M(V)$ of forms with positive definite real part. If $B=P+iQ\in \bold M^\circ(V)$ where $P,Q$ are the real and imaginary parts of $B$, then $P^{-1}Q: V\to V$ is a self-adjoint operator with respect to $P$, which therefore has real eigenvalues and diagonalizes in an orthonormal basis. In this basis $B(x,y)=\sum_{j=1}^d a_jx_jy_j$ where ${\rm Re}(a_j)=1$. 
Thus $B^{-1}\in \bold M^\circ(V^*)$. It follows that the map $B\mapsto B^{-1}$ is a homeomorphism $\bold M(V)\cong \bold M(V^*)$ which restricts to a homeomorphism $\bold M^\circ (V)\cong \bold M^\circ(V^*)$. 
 
Now fix a translation-invariant volume form $dx$ on $V$. Then for every complex-valued symmetric bilinear form $B$ on $V$ we can define its determinant $\det B$. Thus we can define a continuous function $(\det B)^{-\frac{1}{2}}$ on $\bold M(V)$ using the branch of the square root which is positive on positive definite forms (it exists and is unique because $\bold M(V)$ is star-like with respect to any point of $\bold M^\circ(V)$, hence simply connected). Note that 
if $B=iQ$ where $Q$ is a real non-degenerate form 
then $(\det B)^{-\frac{1}{2}}=e^{\frac{\pi i\sigma(Q)}{4}}|\det Q|^{-\frac{1}{2}}$, where $\sigma$ is the signature of $Q$. Indeed, it suffices to check the statement for diagonal forms, hence for $d=1$, in which case it is straighforward. 

Let $\mathcal S(V)$ be the {\it Schwartz space}\index{Schwartz space} of $V$, i,.e., the space of smooth functions on $V$ whose all derivatives are rapidly decaying at $\infty$ (faster than any power of $|x|$). In other words, $\mathcal S(V)$ is the space of smooth functions $f$ on $V$ such that 
$D(V)f\subset L^2(V)$, where $D(V)$ is the algebra of differential operators on $V$ with polynomial coefficients. 
The Schwartz space has a natural Fr\'echet topology defined by the seminorms 
$||Df||_{L^2}$, $D\in D(V)$. The topological dual space $\mathcal S'(V)$ is the space of {\it tempered distributions}\index{tempered distribution} on $V$. 
Note that we have natural inclusions $\mathcal S(V)\subset L^2(V)\subset \mathcal S'(V)$. Recall that the Fourier transform 
is the operator 
$$
\mathcal F: \mathcal S(V)\to \mathcal S(V^*)
$$
given by 
$$
\mathcal F(g)(p):=(2\pi)^{-\frac{d}{2}}\int_V g(x)e^{-i(p,x)}dx,
$$
which defines an isometry $L^2(V)\to L^2(V^*)$ such that 
$(\mathcal F^2g)(x)=g(-x)$. By duality, it defines an operator 
$$
\mathcal F: \mathcal S'(V)\to \mathcal S'(V^*) 
$$
which extends $\mathcal F$. 
For any complex symmetric bilinear form 
$B$ with ${\rm Re}B\ge 0$ the function $e^{-\frac{1}{2}B(x,x)}$ belongs 
to $\mathcal S'(V)$, and moreover to $\mathcal S(V)$ iff $B\in \bold M^\circ(V)$.  
Furthermore, it depends continuously on $B$ as an element of these spaces. 
We will call it the {\it complex Gaussian distribution}.\index{complex Gaussian distribution} 

\begin{lemma}\label{gi} (Gaussian integral) For any $B\in \bold M(V)$ we have 
$$
\mathcal F(e^{-\frac{1}{2}B(x,x)})=(\det B)^{-\frac{1}{2}} e^{-\frac{1}{2}B^{-1}(p,p)}.
$$
\end{lemma}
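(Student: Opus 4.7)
The plan is to reduce to a one-dimensional integral via diagonalization, then extend from the positive-real-part case to all of $\mathbf{M}(V)$ by continuity in $\mathcal{S}'(V^*)$.

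First I would argue that both sides of the asserted identity are continuous functions of $B \in \mathbf{M}(V)$ with values in $\mathcal{S}'(V^*)$. The left side is continuous because $B \mapsto e^{-\frac{1}{2}B(x,x)}$ is continuous into $\mathcal{S}'(V)$ (as remarked just before the lemma) and the Fourier transform $\mathcal{F}\colon \mathcal{S}'(V) \to \mathcal{S}'(V^*)$ is continuous. The right side is continuous because $B \mapsto (\det B)^{-\frac{1}{2}}$ is continuous on $\mathbf{M}(V)$ (by the discussion of the branch of the square root), and $B \mapsto B^{-1}$ is a homeomorphism $\mathbf{M}(V) \cong \mathbf{M}(V^*)$, so $B \mapsto e^{-\frac{1}{2}B^{-1}(p,p)}$ is continuous into $\mathcal{S}'(V^*)$. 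Hence it suffices to prove the formula on the dense open subset $\mathbf{M}^\circ(V)$, where the Gaussian lies in $\mathcal{S}(V)$ and Fourier transforms are honest absolutely convergent integrals.

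Next, for $B \in \mathbf{M}^\circ(V)$ I would diagonalize. Using the discussion preceding the lemma, there is a linear change of variables bringing $B$ to the form $B(x,x) = \sum_{j=1}^d a_j x_j^2$ with $\operatorname{Re}(a_j) > 0$, while transforming $dx$ by the inverse Jacobian. Both sides of the identity then factor as a product over $j$, so everything reduces to the one-dimensional identity
\[
\int_{\mathbb{R}} e^{-\frac{1}{2}ax^2} e^{-ipx}\,dx = \sqrt{\tfrac{2\pi}{a}}\, e^{-\frac{p^2}{2a}}, \qquad \operatorname{Re}(a) > 0,
\]
with $\sqrt{\cdot}$ the principal branch. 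For $a>0$ this is the classical Gaussian integral (completing the square and shifting the contour back to $\mathbb{R}$, which is legitimate because the integrand decays). For general $a$ in the right half-plane, both sides are holomorphic in $a$ (the left by dominated convergence, the right because the principal square root is holomorphic), so the identity propagates by analytic continuation.

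Finally I would reassemble: after undoing the diagonalization, the product $\prod_j\sqrt{2\pi/a_j}$ combines with the transformed measure to give $(2\pi)^{d/2}(\det B)^{-\frac{1}{2}}$ with the branch specified in the paper, and the quadratic form in the exponent becomes $-\frac{1}{2}B^{-1}(p,p)$; the normalization $(2\pi)^{-d/2}$ in the definition of $\mathcal{F}$ cancels the $(2\pi)^{d/2}$. The main delicate point is the bookkeeping of the branch of $(\det B)^{-\frac{1}{2}}$: one must check that the product of principal branches $\prod_j\sqrt{2\pi/a_j}$ arising from the one-dimensional formula agrees with the continuous branch singled out in the paper. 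Since both branches are continuous on the simply connected set $\mathbf{M}(V)$ and both are positive when $B$ is real positive definite, they coincide throughout $\mathbf{M}^\circ(V)$, and the extension to $\mathbf{M}(V)$ then follows from the continuity established in the first step.
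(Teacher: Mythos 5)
Your proposal is correct and follows essentially the same route as the paper: reduce to $\mathbf{M}^\circ(V)$ by continuity of both sides in $\mathcal{S}'(V^*)$, diagonalize $B$ to reduce to $d=1$, use analyticity in $a$ to reduce the one-dimensional identity to real $a$, and finish by completing the square, shifting the contour via Cauchy's theorem, and invoking the Poisson integral. The only difference is that you are somewhat more explicit than the paper about checking that the product of one-dimensional principal branches agrees with the global continuous branch of $(\det B)^{-1/2}$, which is a worthwhile (if implicit) point to spell out.
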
 

\begin{proof} By continuity, it suffices to prove this when ${\rm Re}B>0$.
In this case $B$ is diagonalizable, so the statement reduces to the case $d=1$. 
In this case we have to show that for every $a\in \Bbb C$ with ${\rm Re}a>0$,  
$$
\frac{1}{\sqrt{2\pi}}\int_{-\infty}^\infty e^{-ipx-\frac{1}{2}ax^2}dx=
\frac{1}{\sqrt{a}}e^{-\frac{1}{2a}p^2}.
$$
Since both sides are holomorphic in $a$, it is enough to check the statement when $a$ is real. The integral in question can be written as 
$$
\frac{e^{-\frac{1}{2}a^{-1}p^2}}{\sqrt{2\pi}}\int_{-\infty}^\infty e^{-\frac{1}{2}a(x+ia^{-1}p)^2}dx.
$$
But using Cauchy's theorem, 
$$
\frac{1}{\sqrt{2\pi}}\int_{-\infty}^\infty e^{-\frac{1}{2}a(x+ia^{-1}p)^2}dx=
\frac{1}{\sqrt{2\pi}}\int_{\Bbb R+ia^{-1}p} e^{-\frac{1}{2}ax^2}dx=
\frac{1}{\sqrt{2\pi}}\int_{\Bbb R} e^{-\frac{1}{2}ax^2}dx.
$$
Thus the result follows from the Poisson integral
$$
\int_{-\infty}^\infty e^{-x^2}dx=\sqrt{\pi}.
$$
by rescaling $x$. 
\end{proof} 

In the sense of Lemma \ref{gi} we can say, setting $p=0$, that 
\begin{equation}\label{gaui}
(2\pi)^{-\frac{d}{2}}\int_V e^{-\frac{1}{2}B(x,x)}dx=(\det B)^{-\frac{1}{2}}.
\end{equation} 
Note that this equality is also true in the sense of absolute convergence 
if $B\in \bold M^\circ(V)$ and conditional convergence otherwise 
(check it!). 

\subsection{Gaussian integrals with insertions} \label{inse}
Now let $g\in \mathcal S(V)$. Consider the integral 
$$
I_g(\hbar):=\int_{V}g(\hbar^{\frac{1}{2}}x)e^{-\frac{1}{2}B(x,x)}dx,\ \hbar\ge 0,
$$
where for $\hbar=0$ we use \eqref{gaui}, so 
\begin{equation}\label{valzer}
I_g(0)=(2\pi)^{\frac{d}{2}}(\det B)^{-\frac{1}{2}}g(0). 
\end{equation}

Let $\Delta_B: \mathcal S(V)\to \mathcal S(V)$ 
be the Laplace operator corresponding to $B$: 
$\Delta_B=\sum_{j=1}^d \partial_{B^{-1}e_j^*}\partial_{e_j}$ 
for a basis $\lbrace e_i\rbrace$ of $V$. 

\begin{theorem} \label{contt2}
We have 
$$
I_g'(\hbar)=I_{\tfrac{1}{2}\Delta_B g}(\hbar),\ \hbar\ge 0.
$$
Thus $I_g\in C^\infty[0,\infty)$. In particular, if 
$g$ vanishes at the origin to order $2n+1$ then $I_g(0)=...=I_g^{(n)}(0)=0$.
\end{theorem}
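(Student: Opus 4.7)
The plan is to Fourier-transform so that $\hbar$ enters only as a parameter in a Gaussian damping factor in momentum space, at which point differentiation in $\hbar$ becomes routine. Concretely, expand
$$
g(\hbar^{1/2}x)=(2\pi)^{-d/2}\int_{V^*}(\mathcal F g)(p)\,e^{i\hbar^{1/2}(p,x)}\,dp
$$
by Fourier inversion, substitute into the definition of $I_g(\hbar)$, apply Fubini, and evaluate the inner Gaussian integral using Lemma~\ref{gi}. This produces the key identity
\begin{equation}\label{Fside}
I_g(\hbar)=(\det B)^{-\frac{1}{2}}\int_{V^*}(\mathcal F g)(p)\,e^{-\frac{\hbar}{2}B^{-1}(p,p)}\,dp.
\end{equation}
Since $B^{-1}\in\bold M(V^*)$ we have $|e^{-\frac{\hbar}{2}B^{-1}(p,p)}|\le 1$ for $\hbar\ge 0$, and $\mathcal F g\in\mathcal S(V^*)$, so the integrand is majorized by a fixed Schwartz function uniformly in $\hbar\ge 0$.

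Now differentiate \eqref{Fside} in $\hbar$ under the integral sign. Each $\hbar$-derivative pulls down the polynomial factor $-\tfrac{1}{2}B^{-1}(p,p)$, which is tamed by the Schwartz decay of $\mathcal F g$, while the exponential stays bounded by $1$; hence the differentiation is legitimate. To identify the result as $I_{\tfrac{1}{2}\Delta_B g}$, I compute the symbol of $\Delta_B$: in a basis $\{e_j\}$ of $V$ one has $\Delta_B=\sum_{j,k}B^{jk}\partial_{e_j}\partial_{e_k}$, where $(B^{jk})$ is the matrix of $B^{-1}$, and since $\mathcal F$ intertwines $\partial_{e_j}$ with multiplication by $ip_j$, this gives $\mathcal F(\Delta_B g)(p)=-B^{-1}(p,p)(\mathcal F g)(p)$. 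Applying \eqref{Fside} to $\tfrac{1}{2}\Delta_B g$ in place of $g$ then yields $I_g'(\hbar)=I_{\tfrac{1}{2}\Delta_B g}(\hbar)$.

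Smoothness and the Taylor vanishing follow by iteration. Since $\Delta_B$ preserves $\mathcal S(V)$, induction gives $I_g^{(k)}(\hbar)=I_{(\tfrac{1}{2}\Delta_B)^k g}(\hbar)$, proving $I_g\in C^\infty[0,\infty)$. Specializing at $\hbar=0$ and using \eqref{valzer} yields $I_g^{(k)}(0)=(2\pi)^{d/2}(\det B)^{-\frac{1}{2}}\,2^{-k}(\Delta_B^k g)(0)$. If $g$ vanishes at $0$ to order $2n+1$, then $\Delta_B^k g$ vanishes there to order $2n+1-2k\ge 1$ for each $k\le n$, so $(\Delta_B^k g)(0)=0$, whence $I_g^{(k)}(0)=0$.

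The one real technical obstacle is the Fubini step when $B\in\bold M(V)\setminus\bold M^\circ(V)$: in that range $e^{-\frac{1}{2}B(x,x)}$ is merely a tempered distribution, so the original integral defining $I_g(\hbar)$ is a duality pairing rather than an absolutely convergent integral. The cleanest remedy is to establish \eqref{Fside} first on the open dense subset $\bold M^\circ(V)$, where every integral in sight is absolutely convergent, and then extend by continuity in $B$ using the continuous dependence of $e^{-\frac{1}{2}B(x,x)}\in\mathcal S'(V)$ on $B\in\bold M(V)$ already noted in the text; once \eqref{Fside} is secured on all of $\bold M(V)$, every subsequent step lives on the Fourier side where convergence is absolute.
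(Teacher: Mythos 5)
Your proposal is correct and gives a genuinely different route from the paper's proof, though the two share a common starting point. The paper also uses the Fourier-side identity $I_g(\hbar)=(\det B)^{-\frac12}\int_{V^*}\widehat g(p)\,e^{-\frac{\hbar}{2}B^{-1}(p,p)}\,dp$, but \emph{only} to establish continuity of $I_g$ at $\hbar=0$ (Lemma~\ref{contt}); the differential identity itself is proved on the $x$-side. Specifically, the paper differentiates the original integral for $\hbar>0$ to pick up the Euler vector field $E$, then invokes an integration-by-parts ``drop of $\hbar$'' identity $I_{\ell f}(\hbar)=\hbar\,I_{\partial_{B^{-1}\ell}f}(\hbar)$ (Lemma~\ref{contt1}) to turn $\frac{1}{2}\hbar^{-1}I_{Eg}$ into $I_{\frac12\Delta_B g}$, and finally establishes $C^1$ at the boundary by subtracting a Gaussian to reduce to $g(0)=0$, writing $g=\ell f$, and applying Lemma~\ref{contt1} once more. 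You instead stay on the Fourier side throughout: since $\mathrm{Re}\,B^{-1}\ge 0$ makes $|e^{-\frac{\hbar}{2}B^{-1}(p,p)}|\le 1$ for all $\hbar\ge 0$, each $\hbar$-derivative pulls down a polynomial factor dominated by the Schwartz decay of $\widehat g$, so differentiation under the integral (including the one-sided derivative at $\hbar=0$, via the mean-value inequality $\bigl|\frac{e^{-\frac{\hbar}{2}B^{-1}(p,p)}-1}{\hbar}\bigr|\le\frac12|B^{-1}(p,p)|$) is immediate by dominated convergence, and the computation of the symbol of $\Delta_B$ identifies the result. This is a more streamlined argument: once transferred to momentum space, the proof is pure real analysis with absolutely convergent integrals, avoiding the auxiliary subtraction of a Gaussian and the decomposition $g=\ell f$ near the origin. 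The paper's route, on the other hand, isolates the useful intermediate fact Lemma~\ref{contt1}, which is invoked again in the text (e.g.\ Exercise~\ref{gencontt2} and the Borel-summation discussion). Your handling of the case $B\notin \bold M^\circ(V)$ by first proving the Fourier-side formula for $B\in\bold M^\circ(V)$ and extending by continuity in $B$ is sound and matches the paper's logic in Lemma~\ref{contt}.
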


The rest of the subsection is occupied by the proof of Theorem \ref{contt2}.

\begin{lemma}\label{contt} $I_g$ is a continuous function. 
\end{lemma}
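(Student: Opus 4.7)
The strategy is to pass to the Fourier transform side, where $I_g(\hbar)$ can be rewritten as an integral of the Schwartz function $\mathcal F(g)$ against a bounded complex Gaussian and then handled by the dominated convergence theorem. For $\hbar>0$ I would interpret $I_g(\hbar)$ as the distributional pairing of $e^{-\frac{1}{2}B(x,x)}\in \mathcal S'(V)$ with the Schwartz function $g_\hbar(x):=g(\hbar^{1/2}x)$. Using the identity $\langle T,\phi\rangle=\langle \mathcal F T,\mathcal F^{-1}\phi\rangle$, Lemma \ref{gi}, the scaling rule $\mathcal F(g_\hbar)(p)=\hbar^{-d/2}\mathcal F(g)(p/\hbar^{1/2})$, and the change of variables $q=-p/\hbar^{1/2}$, this produces the representation
$$
I_g(\hbar)=(\det B)^{-\frac{1}{2}}\int_{V^*}\mathcal F(g)(q)\,e^{-\frac{\hbar}{2}B^{-1}(q,q)}\,dq, \qquad \hbar>0.
$$

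Next I would check consistency at $\hbar=0$: setting $\hbar=0$ on the right gives $(\det B)^{-\frac{1}{2}}\int\mathcal F(g)(q)\,dq=(2\pi)^{\frac{d}{2}}(\det B)^{-\frac{1}{2}}g(0)$ by Fourier inversion, which matches the stipulated value \eqref{valzer}. Hence the formula above holds for all $\hbar\ge 0$.

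Continuity on $[0,\infty)$ then follows from dominated convergence. Indeed, since $B^{-1}\in \bold M(V^*)$ has ${\rm Re}(B^{-1})\ge 0$, we have the uniform bound $|e^{-\frac{\hbar}{2}B^{-1}(q,q)}|\le 1$, so the integrand is majorized in absolute value by $|\mathcal F(g)(q)|\in L^1(V^*)$ (as $\mathcal F(g)\in \mathcal S(V^*)$). For each fixed $q$ the map $\hbar\mapsto e^{-\frac{\hbar}{2}B^{-1}(q,q)}$ is continuous, so dominated convergence yields continuity of $I_g$ at every $\hbar_0\ge 0$.

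The main obstacle is the case $B\in \bold M(V)\setminus \bold M^\circ(V)$, where $e^{-\frac{1}{2}B(x,x)}$ does not decay in some directions and the integral defining $I_g(\hbar)$ is only conditionally convergent; a naive dominated convergence argument on the original integral will not work. The Fourier transform is precisely the right tool here: it trades $B$ for $\hbar B^{-1}$, whose real part is still nonnegative so that the exponential remains bounded, while the rapid Schwartz decay of $\mathcal F(g)$ supplies the $\hbar$-uniform integrability needed to apply dominated convergence on the transformed side.
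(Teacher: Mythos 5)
Your proposal is correct and takes essentially the same approach as the paper: both pass through Plancherel and Lemma \ref{gi} to transfer $I_g(\hbar)$ to the Fourier side where the Gaussian factor becomes $e^{-\frac{\hbar}{2}B^{-1}(q,q)}$, with bounded modulus. The only difference is that the paper invokes the previously stated continuity of the complex Gaussian as an element of $\mathcal S'(V^*)$, whereas you make the same point concretely via dominated convergence (using the uniform bound $|e^{-\frac{\hbar}{2}B^{-1}(q,q)}|\le 1$ against the $L^1$ majorant $|\mathcal F(g)|$), which also lets you handle continuity at every $\hbar_0\ge 0$ in one stroke rather than just at $\hbar=0$.
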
 

\begin{proof} Only continuity at $\hbar=0$ requires proof. By Plancherel's theorem and Lemma \ref{gi}, 
$$
I_g(\hbar)=(g(\hbar^{\frac{1}{2}} x),e^{-\frac{1}{2}B(x,x)})=
$$
$$
\hbar^{-\frac{d}{2}}(\det B)^{-\frac{1}{2}}(\widehat g(\hbar^{-\frac{1}{2}}p), e^{-\frac{1}{2}B^{-1}(p,p)})=(\det B)^{-\frac{1}{2}}(\widehat g(p), e^{-\frac{\hbar}{2}B^{-1}(p,p)}),
$$
where $\widehat g$ is the Fourier transform of $g$. But $e^{-\frac{\hbar}{2}B^{-1}(p,p)}\to 1$ in $\mathcal S'(V^*)$ as $\hbar\to 0$ (as the complex Gaussian distribution depends continuously of the bilinear form). Thus 
$$
\lim_{\hbar\to 0}I_g(\hbar)=(\det B)^{-\frac{1}{2}}(\widehat g(p),1)=
(2\pi)^{\frac{d}{2}}(\det B)^{-\frac{1}{2}}g(0)=I_g(0),
$$
as desired. 
\end{proof} 

\begin{lemma}\label{contt1} If $\ell\in V^*$ and $f\in \mathcal S(V)$ then 
$$
I_{\ell f}(\hbar)=\hbar I_{\partial_{B^{-1}\ell}f}(\hbar).
$$
\end{lemma}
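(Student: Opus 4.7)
The plan is to derive the identity by a single integration by parts after peeling off the $\hbar^{1/2}$ factor that comes from the linearity of $\ell$. First I would rescale the argument of $\ell$: since $\ell(\hbar^{1/2}x)=\hbar^{1/2}\ell(x)$, we have
$$I_{\ell f}(\hbar)=\hbar^{1/2}\int_V \ell(x)\,f(\hbar^{1/2}x)\,e^{-\frac{1}{2}B(x,x)}\,dx.$$
The key algebraic identity is $\ell(x)\,e^{-\frac{1}{2}B(x,x)}=-\partial_{B^{-1}\ell}\,e^{-\frac{1}{2}B(x,x)}$, which follows because $\partial_{B^{-1}\ell}\bigl(\tfrac{1}{2}B(x,x)\bigr)=B(B^{-1}\ell,x)=\ell(x)$ by symmetry of $B$.

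Substituting this and integrating by parts along the direction $B^{-1}\ell\in V$, the boundary terms vanish (when $B\in\bold M^\circ(V)$ the exponential itself lies in $\mathcal S(V)$, and in any case $f\in\mathcal S(V)$ kills them), so the expression becomes
$$I_{\ell f}(\hbar)=\hbar^{1/2}\int_V \partial_{B^{-1}\ell}\bigl(f(\hbar^{1/2}x)\bigr)\,e^{-\frac{1}{2}B(x,x)}\,dx.$$
The chain rule gives $\partial_{B^{-1}\ell}\bigl(f(\hbar^{1/2}x)\bigr)=\hbar^{1/2}(\partial_{B^{-1}\ell}f)(\hbar^{1/2}x)$, producing another factor of $\hbar^{1/2}$ and yielding $I_{\ell f}(\hbar)=\hbar\, I_{\partial_{B^{-1}\ell}f}(\hbar)$ for $\hbar>0$. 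At $\hbar=0$ both sides vanish: the right-hand side trivially, and the left-hand side by \eqref{valzer} since $(\ell f)(0)=\ell(0)f(0)=0$.

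The main technical issue is justifying this manipulation when $B\in\bold M(V)\setminus\bold M^\circ(V)$, where $e^{-\frac{1}{2}B(x,x)}$ is only a tempered distribution and the integrals are at best conditionally convergent. I would handle this in two steps: first carry out the computation for $B\in\bold M^\circ(V)$, where everything is absolutely convergent and the integration by parts is routine; then use the continuity (noted just before Lemma \ref{gi}) of $B\mapsto e^{-\frac{1}{2}B(\cdot,\cdot)}$ as a map $\bold M(V)\to\mathcal S'(V)$. Paired against the fixed Schwartz test functions $(\ell f)(\hbar^{1/2}\cdot)$ and $(\partial_{B^{-1}\ell}f)(\hbar^{1/2}\cdot)$, both sides of the identity depend continuously on $B$, so the density of $\bold M^\circ(V)$ in $\bold M(V)$ extends the identity to all $B\in\bold M(V)$.
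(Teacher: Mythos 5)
Your argument is correct and rests on exactly the same key identity as the paper's proof, namely $\ell(x)\,e^{-\frac{1}{2}B(x,x)}=-\partial_{B^{-1}\ell}\,e^{-\frac{1}{2}B(x,x)}$, combined with integration by parts and the homogeneity of $\ell$. The difference is purely in the packaging: the paper keeps everything phrased as pairings $(\phi, T)$ between Schwartz functions and tempered distributions, so that the ``integration by parts'' step is just the definition of the distributional derivative $\partial_{B^{-1}\ell}$ acting on $e^{-\frac{1}{2}B(x,x)}\in\mathcal S'(V)$; this works uniformly for every $B\in\bold M(V)$ (and every $\hbar\ge 0$) with no boundary-term or convergence issues to check. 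You instead do the manipulation as an honest integral, which forces you to first restrict to $B\in\bold M^\circ(V)$ where the integral is absolutely convergent, and then bolt on a separate continuity-and-density argument to cover $\bold M(V)\setminus\bold M^\circ(V)$, plus a separate check at $\hbar=0$. That extra work is valid as written — your reduction to the continuity of $B\mapsto e^{-\frac{1}{2}B(\cdot,\cdot)}$ in $\mathcal S'(V)$ is exactly what is needed — but it is avoidable: if you simply reinterpret your integrals as pairings from the start (so that ``integration by parts'' is the definition of a distributional derivative rather than a limit statement about boundary terms), the restriction to $\bold M^\circ(V)$, the density step, and the separate $\hbar=0$ case all disappear, and the computation collapses to the one-line chain the paper gives.
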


\begin{proof}  
We have 
$$
I_{\ell f}(\hbar)=
\hbar^{\frac{1}{2}}(\ell(x)f(\hbar^{\frac{1}{2}}x),e^{-\frac{1}{2}B(x,x)})=
\hbar^{\frac{1}{2}}(f(\hbar^{\frac{1}{2}}x),\ell(x)e^{-\frac{1}{2}B(x,x)})=
$$
$$
-\hbar^{\frac{1}{2}}(f(\hbar^{\frac{1}{2}}x),\partial_{B^{-1}\ell}e^{-\frac{1}{2}B(x,x)})=
\hbar^{\frac{1}{2}}(\partial_{B^{-1}\ell}f(\hbar^{\frac{1}{2}}x),e^{-\frac{1}{2}B(x,x)})=
$$
$$
\hbar ((\partial_{B^{-1}\ell}f)(\hbar^{\frac{1}{2}}x),e^{-\frac{1}{2}B(x,x)})=\hbar I_{\partial_{B^{-1}\ell}f}(\hbar).
$$
This proves the lemma. 
\end{proof} 

Now we prove Theorem \ref{contt2}.
 If $\hbar>0$ then by direct differentiation we get  
$$
I_g'(\hbar)=\tfrac{1}{2}\hbar^{-1}I_{Eg}(\hbar),
$$
where $E:=\sum_{j=1}^d e_j^*\partial_{e_j}$ is the Euler vector field on $V$. 
Thus by Lemma \ref{contt1} we have 
\begin{equation}\label{diffo}
I_g'(\hbar)=I_{\tfrac{1}{2}\Delta_B g}(\hbar),\ \hbar>0.
\end{equation} 
So, using Lemma \ref{contt}, 
it suffices to show that $I_g\in C^1[0,\infty)$ (then smoothness will follow 
by repeated application of \eqref{diffo}).  
To this end, note that if $C$ is a positive definite form on $V$ then 
$$
I_{e^{-\frac{1}{2}C(x,x)}}(\hbar)=\int_{V}e^{-\frac{1}{2}(B+\hbar C)(x,x)}dx=
(2\pi)^{\frac{d}{2}}\det (B+\hbar C)^{-\frac{1}{2}},
$$
which is analytic, hence continuously differentiable on $[0,\infty)$. 
So subtracting from $g$ a multiple 
of such function, it suffices to prove that $I_g\in C^1[0,\infty)$ when $g(0)=0$. 
In this case $g$ is well known to be a linear combination of 
functions of the form $\ell f$ where $f\in \mathcal S(V)$ 
and $\ell\in V^*$. So it suffices to check that $I_g\in C^1[0,\infty)$ for $g=\ell f$. 
But then by Lemma \ref{contt1} 
$I_g'(0)=I_{\partial_{B^{-1}\ell}f}(0)=I_{\frac{1}{2}\Delta_Bg}(0)$,
as 
$$
\tfrac{1}{2}\Delta_Bg(0)=\tfrac{1}{2}\Delta_B(\ell f)(0)=\sum_j \ell(e_j)
\partial_{B^{-1}e_j^*}f(0)=\partial_{B^{-1}\ell}f(0).
$$
This completes the proof. 

\begin{exercise}\label{gencontt2} Let $\mathcal S_m(V)\subset C^m(V)$ be the subspace of functions whose derivatives of order $\le m$ are rapidly decaying. Prove that the differentiation formula of Theorem \ref{contt2} holds for $g\in \mathcal S_2(V)$. Deduce that if $g\in \mathcal S_{2n}(V)$ then $I\in C^n[0,\infty)$, and that if moreover $g$ vanishes at $0$ to order $2n+1$ then $I_g(0)=...=I_g^{(n)}(0)=0$. 
\end{exercise} 

\subsection{The steepest descent formula} \index{steepest descent formula}
Let $a<b$ be real numbers and $f,g: [a,b]\to \Bbb R$ be
continuous functions which are smooth on $(a,b)$.  

\begin{theorem}\label{statphase1} (Steepest descent formula)
Assume that 
$f$ attains a global minimum 
at a unique point $c\in [a,b]$, such that $a<c<b$ and $f''(c)>0$. 
Then one has 
\begin{equation}\label{inte}
\int_a^bg(x)e^{-\frac{f(x)}{\hbar}}dx=\hbar^{\frac{1}{2}}e^{-\frac{f(c)}{\hbar}}I(\hbar),
\end{equation}
where $I(\hbar)$ extends to a smooth function on $[0,\infty)$
such that $$I(0)=\sqrt{2\pi}\frac{g(c)}{\sqrt{f''(c)}}.$$ 
\end{theorem}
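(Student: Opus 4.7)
My plan is to localize the integral near the minimum $c$, apply a Morse-type substitution that turns $f$ into a pure quadratic, and then invoke Theorem \ref{contt2} to get smoothness in $\hbar^{1/2}$ around zero.

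First I would localize. Since $c \in (a,b)$ is the unique global minimum of the continuous function $f$, I can pick $\delta > 0$ small enough so that $[c-\delta, c+\delta] \subset (a,b)$, and by compactness there exists $\epsilon > 0$ with $f \geq f(c) + \epsilon$ on $[a,b] \setminus (c-\delta, c+\delta)$. Fixing a smooth bump $\chi$ equal to $1$ in a neighborhood of $c$ and supported in $(c-\delta, c+\delta)$, I split $\int_a^b g \, e^{-f/\hbar} \, dx$ into a local piece with $\chi$ and a tail piece with $1-\chi$. The tail is bounded in absolute value by $(b-a)\|g\|_\infty e^{-(f(c)+\epsilon)/\hbar}$, so after dividing by the prefactor $\hbar^{1/2}e^{-f(c)/\hbar}$ it becomes $O(\hbar^{-1/2}e^{-\epsilon/\hbar})$, which extends smoothly to $[0,\infty)$ and is flat at $0$; it thus contributes to $I(\hbar)$ without affecting $I(0)$ or any derivative there.

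For the local piece I would introduce the Morse substitution. By Taylor's formula with integral remainder, $f(x)-f(c) = (x-c)^2 h(x)$ with $h$ smooth and $h(c) = f''(c)/2 > 0$, so after shrinking $\delta$ the map $y(x) := (x-c)\sqrt{2h(x)}$ is smooth, with $y'(c) = \sqrt{f''(c)}$, and is a diffeomorphism from $[c-\delta,c+\delta]$ onto its image, satisfying $f(x)-f(c) = y(x)^2/2$. Writing $x(y)$ for the inverse and setting $G(y) := \chi(x(y))\, g(x(y))\, x'(y)$, extended by zero to all of $\mathbb{R}$, I obtain a compactly supported smooth function with $G(0) = g(c)/\sqrt{f''(c)}$, and the local piece becomes $e^{-f(c)/\hbar} \int_{\mathbb{R}} G(y)\, e^{-y^2/(2\hbar)}\, dy$.

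Finally, the substitution $y = \hbar^{1/2} u$ converts this to $\hbar^{1/2} e^{-f(c)/\hbar} \cdot I_G(\hbar)$, where $I_G(\hbar) = \int_{\mathbb{R}} G(\hbar^{1/2}u)\, e^{-u^2/2}\, du$ is exactly the integral studied in Theorem \ref{contt2} for $V=\mathbb{R}$ and $B(u,u)=u^2$. That theorem gives $I_G \in C^\infty[0,\infty)$ with $I_G(0) = \sqrt{2\pi}\, G(0) = \sqrt{2\pi}\, g(c)/\sqrt{f''(c)}$, and adding back the flat tail yields the desired $I(\hbar)$. The main point requiring care is verifying that $y(x) = (x-c)\sqrt{2h(x)}$ really is a smooth diffeomorphism on a fixed neighborhood of $c$ (letting us bypass the general Morse lemma); smoothness is clear since $h$ is smooth and positive there, and the diffeomorphism property follows from $y'(c) \neq 0$ together with the inverse function theorem, after a final shrinking of $\delta$.
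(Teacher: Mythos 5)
Your proof is correct and takes essentially the same route as the paper's: localize near $c$ with a bump function, dispose of the tail as exponentially small (flat at $\hbar = 0$), reduce $f$ to a pure quadratic by a smooth change of coordinates, rescale by $\hbar^{1/2}$, and invoke Theorem \ref{contt2}. The only difference is cosmetic — you carry out the Morse-type substitution $y(x) = (x-c)\sqrt{2h(x)}$ explicitly via Taylor's formula with integral remainder, whereas the paper simply asserts ``making a change of variable, we may reduce to $f(x) = \frac{M}{2}x^2$ near $0$'' and performs the bump-function split afterward rather than before.
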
 

\begin{proof} Without loss of generality we may put $c=0,f(c)=0$. 
Let $f''(c)=M$. Making a change of variable, we may reduce to a situation where 
$f(x)=\frac{M}{2}x^2$ when $x$ is in some neighborhood $U$ of $0$. Let $h$ be a ``bump" function - a smooth function supported in $U$ which equals $1$ in a smaller neighborhood $0\in U'\subset U$. Write 
$g=g_1+g_2$, where $g_1=hg$ and $g_2=(1-h)g$. 
Let $I$ be defined by equation (\ref{inte}), and $I_1,I_2$ be defined by the same equation for $g$ replaced by $g_1,g_2$, so $I=I_1+I_2$. 
Since $f$ has a unique global minimum, we see by direct differentiation that for all $n$, $I_2^{(n)}(\hbar)$ is rapidly decaying as $\hbar\to 0$. 
Thus for $g=g_2$ the result is obvious, and our job is 
to prove it for $g=g_1$. In other words, we may assume without loss of 
generality that $g=g_1$ and $g_2=0$. We extend $g$ by zero to the whole real line. 

Let us make a change of variables 
$y:=\hbar^{-\frac{1}{2}}x$. Then we get
\begin{equation}\label{inte1}
I(\hbar)=\int_{-\infty}^{\infty}g(\hbar^{\frac{1}{2}} y)e^{-\frac{M}{2}y^2}dy.
\end{equation} 
Thus the result follows from \eqref{valzer} and Theorem \ref{contt2}. 
\end{proof} 

\begin{remark}\label{exform} Theorem \ref{statphase1}, in fact, provides an explicit formula 
for the Taylor coefficients of $I(\hbar)$. Namely, as in the proof of Theorem \ref{statphase1}, assume that $c=0$ and $f(x)=\frac{1}{2}p(x)^2$ near $0$, where $$
p'(0)=\sqrt{f''(0)}>0.
$$ 
Ignoring limits of integration (which, as we have seen, are irrelevant for the asymptotic expansion of $I(\hbar)$), we have\footnote{Recall 
that for $I\in C^\infty[0,\varepsilon)$ we write $I(\hbar)\sim \sum_{n=0}^\infty a_n\hbar^n$ 
if for every $N\ge 0$ we have $I(\hbar)=\sum_{n=0}^{N-1} a_n\hbar^n+O(\hbar^N)$ as $\hbar\to 0$.} 
$$
I(\hbar)=\hbar^{-\frac{1}{2}}\int g(x)e^{-\frac{p(x)^2}{2\hbar}}dx\sim \int_{-\infty}^\infty \widetilde g(\hbar^{\frac{1}{2}}y)e^{-\frac{y^2}{2}}dy
$$
where 
$$
\widetilde g(z):=g(p^{-1}(z))(p^{-1})'(z)=\frac{g(p^{-1}(z))}{p'(p^{-1}(z))}.
$$ 
By Theorem \ref{contt2}, the first $n+1$ terms of the Taylor expansion 
of this integral are given by the integral 
$$
I_N(\hbar):=\int_{-\infty}^\infty \widetilde g_N(\hbar^{\frac{1}{2}}y)e^{-\frac{y^2}{2}}dy
$$
where $\widetilde g_N$ is the $2N$-th Taylor polynomial of $\widetilde g$ at $0$. 
Thus if 
$\widetilde g(z)\sim \sum_{n=0}^\infty b_nz^n$ then 
$$
I(\hbar)\sim \sum_{n=0}^\infty b_{2n}\hbar^n\int_{-\infty}^\infty y^{2n}e^{-\frac{y^2}{2}}dy.
$$
But, setting $u=\frac{y^2}{2}$, we have  
\begin{equation}\label{gammaint}
\int_{-\infty}^\infty y^{2n}e^{-\frac{y^2}{2}}dy=
2^{n+\frac{1}{2}}\int_0^\infty u^{n-\frac{1}{2}} e^{-u}du=
2^{n+\frac{1}{2}}\Gamma(n+\tfrac{1}{2})=(2\pi)^{\frac{1}{2}}(2n-1)!!,
\end{equation}
where $(2n-1)!!:=\prod_{1\le j\le n}(2j-1)$.
Hence 
$$
I(\hbar)\sim \sum_{n=0}^\infty b_{2n}
2^{n+\frac{1}{2}}\Gamma(n+\tfrac{1}{2})\hbar^n. 
$$
\end{remark} 

\subsection{Stationary phase formula}
Theorem \ref{statphase1} has the following imaginary analog,
called the {\it stationary phase formula}.\index{stationary phase formula} 
\begin{theorem}\label{statphase2} (Stationary phase formula)
Let $f,g: [a,b]\to \Bbb R$ be smooth functions. 
Assume that $f$ has a unique critical point $c\in [a,b]$, such that $a<c<b$ and $f''(c)\ne
0$, and $g$ has vanishing derivatives of all orders at $a$ and $b$. Then 
$$
\int_a^bg(x)e^{\frac{if(x)}{\hbar}}dx=\hbar^{\frac{1}{2}}e^{\frac{if(c)}{\hbar}}I(\hbar),
$$
where $I(\hbar)$ extends to a smooth function on $[0,\infty)$
such that 
$$
I(0)=\sqrt{2\pi}e^{\pm\frac{\pi i}{4}}\frac{g(c)}{\sqrt{|f''(c)|}},
$$
where $\pm$ is the sign of $f''(c)$.\footnote{This is called the stationary phase formula because the main contribution comes from the point where the phase 
$\frac{f(x)}{\hbar}$ is stationary.}     
\end{theorem}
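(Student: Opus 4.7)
The plan is to mimic the proof of Theorem \ref{statphase1}, but with the real quadratic form $Mx^2$ replaced by the purely imaginary form $-i\varepsilon Mx^2$, where $\varepsilon = \mathrm{sign}(f''(c))$ and $M = |f''(c)|$. The crucial point is that even though $-i\varepsilon M$ has zero real part, it still lies in $\bold M(\Bbb R)$ (it is non-degenerate with non-negative real part), so Theorem \ref{contt2} applies to it. The main obstacle, compared to the steepest descent case, is that the oscillatory integrand does not decay away from the critical point, so the localization near $c$ must be justified by integration by parts rather than by direct exponential decay — this is exactly where the hypothesis on vanishing derivatives of $g$ at the endpoints is used.

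First I would normalize, assuming $c=0$ and $f(0)=0$. Writing $f(x) = \tfrac{1}{2}x^2 R(x)$ with $R(0) = \varepsilon M \ne 0$ (using Taylor's theorem with integral remainder), I would set $p(x) := x\sqrt{R(x)/(\varepsilon M)}$, a smooth function near $0$ with $p(0)=0$, $p'(0)=1$. Then $f(x) = \tfrac{\varepsilon M}{2}p(x)^2$, and $p$ is a local diffeomorphism near $0$ (this is the one-dimensional Morse lemma). Choose a bump function $h$ supported in a neighborhood $U$ of $0$ where $p$ is a diffeomorphism, with $h\equiv 1$ on a smaller neighborhood, and split $g = g_1 + g_2$ with $g_1 = hg$, $g_2 = (1-h)g$.

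For the non-stationary part $g_2$, the function $f'$ is nowhere zero on $\mathrm{supp}(g_2)$. Then the identity $e^{if/\hbar} = \tfrac{\hbar}{i f'}\tfrac{d}{dx}e^{if/\hbar}$, applied iteratively via integration by parts, gives
$$
\int_a^b g_2(x) e^{if(x)/\hbar}\,dx = (i\hbar)^N \int_a^b L^N(g_2)(x)\, e^{if(x)/\hbar}\,dx,
$$
where $L(\varphi) = -(\varphi/f')'$. The assumption that all derivatives of $g$ vanish at $a$ and $b$ ensures that every boundary term produced by integration by parts vanishes (and $g_2$ vanishes identically near $c$), so this contribution is $O(\hbar^N)$ for every $N$. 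Hence $g_2$ contributes a smooth, rapidly-decaying function to $I(\hbar)$.

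For $g_1$, I would change variables $y = p(x)$, obtaining
$$
\int g_1(x) e^{if(x)/\hbar}\,dx = \int \widetilde g(y)\, e^{\frac{i\varepsilon M}{2\hbar}y^2}\,dy,\qquad \widetilde g(y):=g_1(p^{-1}(y))(p^{-1})'(y),
$$
where $\widetilde g$ is smooth and compactly supported (hence in $\mathcal S(\Bbb R)$). Substituting $y = \hbar^{1/2} u$ gives
$$
\hbar^{1/2}\int \widetilde g(\hbar^{1/2} u)\, e^{-\frac{1}{2}Bu^2}\,du,\qquad B := -i\varepsilon M \in \bold M(\Bbb R).
$$
Applying Theorem \ref{contt2} to the form $B$ shows that this integral is a smooth function of $\hbar$ on $[0,\infty)$, with value at $\hbar=0$ equal to $\sqrt{2\pi}\,(\det B)^{-1/2}\,\widetilde g(0)$. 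Using $p'(0)=1$ gives $\widetilde g(0) = g(c)$, and the computation of $(\det B)^{-1/2} = (-i\varepsilon M)^{-1/2} = e^{i\varepsilon\pi/4}/\sqrt{M}$ (using the branch from the discussion preceding Lemma \ref{gi}, applied to $B = iQ$ with $Q = -\varepsilon M$, whose signature is $-\varepsilon$) yields
$$
I(0) = \sqrt{2\pi}\, e^{\pm i\pi/4}\,\frac{g(c)}{\sqrt{|f''(c)|}}
$$
with the stated sign convention. Combining with the rapidly decaying contribution from $g_2$ completes the proof.
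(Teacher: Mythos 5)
Your proof is correct and takes essentially the same route as the paper's: a Morse-type change of variable near $c$ to make $f$ exactly quadratic, a bump-function split $g=g_1+g_2$, integration by parts (the content of Lemma \ref{riem}) to kill the non-stationary part $g_2$, and Theorem \ref{contt2} applied to the purely imaginary form $B=-if''(c)\in\mathbf M(\Bbb R)$ for the localized part $g_1$. The only stylistic differences are that you reprove Lemma \ref{riem} inline via the operator $L(\varphi)=-(\varphi/f')'$ rather than citing it, and you write out the one-dimensional Morse change of variable $p(x)$ explicitly where the paper compresses it into ``making a change of variable.''
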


\begin{remark} It is important to assume that $g$ has vanishing derivatives of all orders at $a$ and $b$. Otherwise we will get 
additional boundary contributions. 
\end{remark} 

\begin{proof} The proof is analogous to the proof of the steepest descent formula, but slightly more subtle, as we have to keep track of cancellations. 
First we need the following very simple but important lemma which allows us 
to do so. 

\begin{lemma}\label{riem} (Riemann lemma) (i) Let $f: [a,b]\to \Bbb R$ be a smooth function
such that $f'(x)>0$ for all $x\in [a,b]$ and $g: [a,b]\to \Bbb R$ a 
$C^n$-function such that 
$$
g(a)=...=g^{(n-1)}(a)=g(b)=...=g^{(n-1)}(b)=0.
$$
Let 
$$
I(\hbar):=\int_a^b g(x)e^{\frac{if(x)}{\hbar}}dx.
$$
Then $I(\hbar)=O(\hbar^{n}),\ \hbar\to 0$.

(ii) Suppose $g$ is smooth on $[a,b]$ and all derivatives of $g$ at $a$ and $b$ are zero. Then $I$ extends (by setting $I(0):=0$) to a smooth function on $[0,\infty)$  whose all derivatives are rapidly decaying as $\hbar\to 0$. 
\end{lemma}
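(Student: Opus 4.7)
The plan is to prove both parts by iterated integration by parts, exploiting the identity that $e^{if(x)/\hbar}$ is a fixed point of a first-order differential operator whose inverse is essentially integration against $\hbar/(if'(x))$. Since $f'>0$ on $[a,b]$, the function $1/f'$ is smooth there, which makes all manipulations legitimate.

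Concretely, I would introduce the operator $L u := \frac{\hbar}{if'(x)} u'(x)$, which satisfies $L(e^{if(x)/\hbar}) = e^{if(x)/\hbar}$. Integration by parts then gives
$$
\int_a^b g(x) e^{\frac{if(x)}{\hbar}} dx = \Bigl[\tfrac{\hbar g(x)}{if'(x)} e^{\frac{if(x)}{\hbar}}\Bigr]_a^b - \int_a^b L^*g(x)\, e^{\frac{if(x)}{\hbar}} dx,
$$
where $L^*g := \frac{d}{dx}\bigl(\tfrac{\hbar g(x)}{if'(x)}\bigr)$. The key observations are (a) $L^*$ is a first-order linear differential operator whose coefficients are smooth on $[a,b]$ and which carries a factor of $\hbar$ out front, and (b) evaluating $L^*g$ or any iterate $(L^*)^k g$ at $x=a$ or $x=b$ yields a linear combination of values $g^{(j)}(a)$ or $g^{(j)}(b)$ for $j\le k$.

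For part (i), I would apply this identity $n$ times. By hypothesis $g^{(j)}(a)=g^{(j)}(b)=0$ for $j\le n-1$, so observation (b) guarantees that every boundary term in the first $n$ integrations by parts vanishes. By observation (a), after $n$ iterations we land at
$$
I(\hbar) = (-1)^n \hbar^n \int_a^b G_n(x)\, e^{\frac{if(x)}{\hbar}} dx,
$$
where $G_n$ is a continuous function on $[a,b]$ (it involves up to $n$ derivatives of $g$, and $g\in C^n$). Bounding the integral trivially gives $I(\hbar)=O(\hbar^n)$.

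For part (ii), the hypothesis that all derivatives of $g$ vanish at $a$ and $b$ lets us iterate indefinitely, so $I(\hbar)=O(\hbar^n)$ for every $n$, which already shows that the extension $I(0):=0$ makes $I$ continuous and rapidly vanishing at $0$. To get $C^\infty[0,\infty)$ with rapidly decaying derivatives, I would differentiate under the integral sign: each $\hbar$-derivative introduces a factor of $-if(x)/\hbar^2$, so $I^{(k)}(\hbar)$ is a finite linear combination of terms of the form $\hbar^{-m}\int_a^b g(x)f(x)^j e^{if/\hbar}dx$ with $j\le k$ and $m\le 2k$. The functions $g(x)f(x)^j$ inherit the vanishing of all derivatives at the endpoints from $g$, so part (i) applies to each such integral with $n$ arbitrarily large, beating any negative power of $\hbar$; hence $I^{(k)}(\hbar)$ is rapidly decaying as $\hbar\to 0^+$, and in particular extends continuously to $0$ with $I^{(k)}(0)=0$.

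The main thing to verify carefully is the bookkeeping in step (ii): one must check that the $\hbar$-differentiation under the integral is justified (standard dominated-convergence argument on any interval $[\varepsilon,\infty)$, combined with the rapid decay bound to patch things together at $0$), and that the algebraic structure of $I^{(k)}$ is as claimed so that part (i) can be invoked termwise. This is where one has to be a bit careful but no deep obstacle arises.
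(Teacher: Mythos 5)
Your proposal is correct and is essentially the paper's argument. The only cosmetic difference is in part (i): the paper first changes variables to reduce to the case $f(x)=x$, after which each integration by parts simply replaces $g$ by $i\hbar g'$, whereas you keep $f$ general and absorb it into the adjoint operator $L^*g=\frac{d}{dx}\bigl(\tfrac{\hbar g}{if'}\bigr)$; both are the same iterated integration by parts (made legitimate by $f'>0$), with the boundary terms vanishing for the same reason. Your elaboration of part (ii) — differentiating under the integral sign, observing that $g(x)f(x)^j$ inherits the vanishing of all derivatives at the endpoints, and applying part (i) with arbitrary $n$ to beat the negative powers of $\hbar$ — is exactly what the paper's terse ``follows from (i) by repeated differentiation'' is asking you to fill in, and you have done it correctly.
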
 

\begin{proof} (i) By making a change of variables we may assume without loss of generality that $f(x)=x$. Then the proof is by induction in $n$. The base case $n=0$ is obvious. 
For $n>0$ note that 
 $$
\int_a^b g(x)e^{\frac{ix}{\hbar}}dx=i\hbar \int_a^b g'(x)e^{\frac{ix}{\hbar}}dx
$$
(integration by parts), which justifies the induction step. 

(ii) follows from (i) by repeated differentiation. 
\end{proof} 

Now we proceed to prove the theorem. As in the proof of the steepest 
descent formula, we may assume that $c=0$ and $f=\frac{M}{2}x^2$ near $0$ for some $M\ne 0$, and write $I$ as the sum $I_1+I_2$. 
Moreover, by Lemma \ref{riem}(ii)
$$
I_2(\hbar)=\int_a^bg_2(x)e^{\frac{if(x)}{\hbar}}dx
$$
is rapidly decaying with all derivatives, so it suffices to prove the theorem for $g=g_1$.  

Again following the proof of the steepest descent formula, we have 
\begin{equation}\label{inte12}
I(\hbar)=\int_{-\infty}^\infty
g(\hbar^{\frac{1}{2}} y)e^{\frac{iM}{2}y^2}dy,
\end{equation}
so as before the result follows from \eqref{valzer} and Theorem \ref{contt2}.
\end{proof} 

\begin{remark}\label{exform1} Since computation of the asymptotic expansion 
of $I(\hbar)$ is a purely algebraic procedure, the explicit formula for this expansion   
 in the imaginary case is the same as in the real case (Remark \ref{exform}) but with $\hbar$ replaced by $i\hbar$: 
$$
I(\hbar)\sim \sum_{n=0}^\infty b_{2n}2^{n+\frac{1}{2}}\Gamma(n+\tfrac{1}{2})(i\hbar)^n. 
$$
\end{remark}

\subsection{Non-analyticity of $I(\hbar)$ and Borel summation} Even though $I(\hbar)$ is smooth at $\hbar=0$, its Taylor series is usually only an asymptotic
expansion which diverges for any $\hbar\ne 0$, so that this function is not analytic at $0$. To illustrate this, 
consider the integral 
$$
\int_{-\infty}^\infty e^{-\frac{x^2+x^4}{2\hbar}}dx=
\hbar^{\frac{1}{2}}I(\hbar),
$$
where 
\begin{equation}\label{quartic} 
I(\hbar)=
\int_{-\infty}^\infty e^{-\frac{y^2+\hbar y^4}{2}}dy.
\end{equation} 
Since this integral is divergent for any $\hbar<0$, we cannot conclude its analyticity at $\hbar=0$, and it indeed fails to be so. Namely, as in Remark \ref{exform}, the asymptotic expansion of integral \eqref{quartic} is obtained by expanding the exponential $e^{-\frac{1}{2}\hbar y^4}$ into a Taylor series and integrating termwise using \eqref{gammaint}:
$$
I(\hbar)\sim \sum_{n=0}^\infty a_n\hbar^n,
$$
where 
$$
a_n=(-1)^n\int_{-\infty}^\infty 
e^{-\frac{y^2}{2}}\frac{y^{4n}}{2^n n!}dy=
$$
$$
(-1)^n\frac{2^{n+\frac{1}{2}}\Gamma(2n+\frac{1}{2})}{n!}=(-1)^n\sqrt{2\pi}\frac{(4n-1)!!}{2^nn!}.
$$
It is clear that this sequence has super-exponential growth, 
so the radius of convergence of the series is zero. 

Let us now discuss the question: to what extent
does the asymptotic expansion of the function $I(\hbar)$ 
(which we can find using Feynman diagrams as explained below)
actually determine this function? 

Suppose that 
$$
\widetilde I(\hbar)=\sum_{n\ge 0}a_n\hbar^n
$$ 
is 
a series with zero radius of convergence. In general, we cannot 
uniquely determine a function $I$ on $[0,\varepsilon)$ whose expansion 
is given by such a series: it always exists (check it!) but in general there is no canonical choice. However, assume that the exponential generating function of $a_n$  
$$
g(\hbar)=\sum_{n\ge 0}a_n\frac{\hbar^n}{n!}
$$ 
is convergent in some neighborhood of $0$, analytically continues 
to $[0,\infty)$, and has 
at most exponential growth as $\hbar\to \infty$. 
In this case there is a ``canonical''
way to construct a smooth function $I$ on $[0,\varepsilon)$ 
with (asymptotic) Taylor expansion $\widetilde I$, called the {\it Borel
summation}\index{Borel summation} of $\widetilde I$. Namely, the function 
$I$ is defined by the formula 
$$
I(\hbar)=\int_0^\infty g(\hbar u)e^{-u}du=\hbar^{-1}\int_0^\infty g(u)e^{-\frac{u}{\hbar}}du,
$$
i.e., $I(\hbar)=\hbar^{-1}(\mathcal Lg)(\hbar^{-1})$, where $\mathcal L$ is the Laplace transform (note that since $g$ grows at most exponentially at infinity, this is well defined for small enough $\hbar>0$). Note that 
$$
I(\hbar)=\int_{-\infty}^\infty |v|g(\hbar v^2)e^{-v^2}dv=\hbar^{-\frac{1}{2}}\int_{-\infty}^\infty g_*(\hbar^{\frac{1}{2}}v)e^{-v^2}dv,
$$
where $g_*(v)=|v|g(v^2)$. Thus Exercise \ref{gencontt2} 
implies that to compute the asymptotic expansion of $I$, 
we may replace $g$ by its Taylor polynomials at $0$. Hence 
the identity $\int_0^\infty x^ne^{-x}dx=n!$ implies that 
$I$ has the Taylor expansion $\widetilde I$. 

For example, consider the divergent series 
$$
\widetilde I:=\sum_{n\ge  0}(-1)^nn!\hbar^n.
$$ 
Then 
$$
g(\hbar)=
\sum_{n\ge
  0}(-1)^n\hbar^n=\frac{1}{1+\hbar}.
$$ 
Hence, the Borel summation yields 
$$
I(\hbar)=\int_0^\infty\frac{e^{-u}}{1+\hbar u}du=\hbar^{-1} e^{\hbar^{-1}}E_1(\hbar^{-1})
$$
where $E_1(x):=\int_{x}^\infty \frac{e^{-u}}{u}du$ is the integral exponential. 

Physicists expect that in physically interesting situations
perturbation expansions in quantum field theory
are Borel summable, and the actual answers are 
obtained from these expansions by Borel summation. 
The Borel summability of perturbation series 
has actually been established in a few 
nontrivial examples of QFT. 

\begin{exercise} Show that the function given by \eqref{quartic} equals the Borel sum of its asymptotic expansion. 

{\bf Hint.} The function $g(z)$ in this example is a special case of the hypergeometric function ${}_2F_{1}$ which does not express in elementary functions. But it satisfies a hypergeometric differential equation. 
Write down this equation and show that the Laplace transform turns it into another second order linear differential equation, and that the function $I(\hbar)$ given by \eqref{quartic} satisfies this equation. 
\end{exercise} 

\subsection{Application of steepest descent} 
Let us give an application of 
Theorem \ref{statphase1}.
Consider the integral 
$$
\Gamma(s+1)=\int_0^\infty t^se^{-t}dt,\ s>0.
$$
By doing a change of variable $t=sx$, we get 
$$
\frac{\Gamma(s+1)}{s^{s+1}}=\int_0^\infty x^se^{-sx}dx=
\int_0^\infty e^{-s(x-\log x)}dx.
$$
Thus, we can apply Theorem \ref{statphase1} 
for $\hbar=\frac{1}{s}$, $f(x)=x-\log x$, $g(x)=1$
(of course, the interval $[a,b]$ is now infinite, 
and the function $f$ blows up on the boundary, 
but one can easily see that the theorem is still applicable, with the same proof). 
The function $f(x)=x-\log x$ has a unique critical point 
on $[0,\infty)$, which is $c=1$, and we have $f''(c)=1$.  
Then we get
\begin{equation}\label{stir}
\Gamma(s+1)\sim s^se^{-s}\sqrt{2\pi s}(1+\tfrac{a_1}{s}+\tfrac{a_2}{s^2}+ \cdots).
\end{equation}
This is the celebrated {\it Stirling formula}.\index{Stirling formula} 

Moreover, we can compute the coefficients $a_1,a_2,...$ using Remark \ref{exform}. Namely, 
$$
p(x)=\sqrt{2(x-\log(1+x))}=x\sqrt{1-\tfrac{2x}{3}+\tfrac{x^2}{2}-\dots}=x-\tfrac{x^2}{3}+\tfrac{7x^3}{36}+...
$$ 
Thus 
$$
p^{-1}(z)=z+\tfrac{z^2}{3}+\tfrac{z^3}{36}+...,
$$
hence 
$$
(p^{-1})'(z)=1+\tfrac{2z}{3}+\tfrac{z^2}{12}+...,
$$
So for instance by Remark \ref{exform} $a_1=b_2=\frac{1}{12}$.

\begin{remark} Another way to compute this asymptotic expansion is to use the Euler product formula 
for the Gamma function. Differentiating the logarithm of this formula twice, we obtain (for $z>0$):
$$
(\log \Gamma)''(z)=\sum_{n=0}^\infty\frac{1}{(z+n)^2}=\sum_{n=0}^\infty \int_0^\infty te^{-(z+n)t}dt=
\int_0^\infty \frac{te^{-zt}}{1-e^{-t}}dt. 
$$
Recall that the {\it Bernoulli numbers}\index{Bernoulli numbers} are defined by the 
generating function 
$$
\sum_{n\ge 0}\frac{B_nt^n}{n!}=\frac{t}{1-e^{-t}},
$$
e.g. $B_0=1,B_1=\frac{1}{2}$, $B_{2n+1}=0$ for $n\ge 1$. 
Thus we get for $z\to \infty$
$$
(\log \Gamma)''(z)\sim \sum_{n\ge 0}B_nz^{-n-1}.
$$
Integrating, we get 
$$
(\log \Gamma)'(z)\sim \log z+C_1 -\sum_{n\ge 1}\frac{B_n}{n}z^{-n},
$$
so integrating again and adding $\log z$, we get 
$$
\log \Gamma(z+1)\sim z\log z-z +C_1z+\frac{1}{2}\log z+C_2+\sum_{n\ge 2}\frac{B_n}{n(n-1)}z^{-n+1}.
$$
From Stirling's formula we have $C_1=0,C_2=\frac{1}{2}\log(2\pi)$, so in the end we get 
\begin{equation}\label{loggapri}
(\log \Gamma)'(z)\sim \log z -\sum_{n\ge 1}\frac{B_n}{n}z^{-n},
\end{equation} 
\begin{equation}\label{loggapri1}
\log \Gamma(z+1)\sim z\log z +\frac{1}{2}\log z +\frac{1}{2}\log(2\pi)+\sum_{n\ge 2}\frac{B_n}{n(n-1)}z^{-n+1}.
\end{equation} 
So
$$
1+\tfrac{a_1}{s}+\tfrac{a_2}{s^2}+ \cdots=\exp(\sum_{n\ge 2}\tfrac{B_n}{n(n-1)}s^{-n+1}). 
$$
In particular, since $B_2=\frac{1}{6}$, we get $a_1=\frac{1}{12}$. 
\end{remark} 

\begin{exercise} Calculate $\int_0^\pi \sin^nxdx$ for nonnegative integers $n$  
using integration by parts. Then apply steepest descent to this integral 
and discover a formula for $\pi$ (the so called {\it Wallis formula}\index{Wallis formula}). 
\end{exercise} 

\begin{exercise} The Bessel function $I_0(a)$ is defined by the formula 
$$
I_0(a)=\frac{1}{2\pi}\int_0^{2\pi} e^{a\cos \theta} d\theta.
$$
It is an even entire function with Taylor expansion 
$$
I_0(a)=\sum_{n=0}^\infty \frac{a^{2n}}{2^{2n}n!^2}.
$$
Use the steepest descent/stationary phase formulas to 
find the asymptotic expansion of $I_0(a)$ as $a\to +\infty$ and $a\to i\infty$. Compute the first two terms of the expansion
(cf. Remark \ref{severa}). 
\end{exercise}

\subsection{Multidimensional versions of steepest descent and
  stationary phase} 
Theorems \ref{statphase1},\ref{statphase2} 
have multidimensional analogs. To formulate them, let $V$ 
be a real vector space of dimension $d$
with a fixed volume element $dx$ and $D\subset V$ be a compact region with smooth boundary.\footnote{The condition of smooth boundary 
is introduced for simplicity of exposition only and is not essential. The same results and proofs apply with trivial modifications to more general regions, e.g. those whose boundary is only piecewise smooth in an appropriate sense.}

\begin{theorem}\label{statphase3} (Multidimensional steepest descent formula) \linebreak
Let $f,g: D\to \Bbb R$ be continuous functions which are smooth in the interior 
of $D$. Assume that $f$ achieves global minimum on $D$ at a unique point $c$, such that $c$ is an interior point and $f''(c)>0$. Then 
\begin{equation}\label{intemult}
\int_{D} g(x)e^{-\frac{f(x)}{\hbar}}dx=\hbar^{\frac{d}{2}}e^{-\frac{f(c)}{\hbar}}I(\hbar),
\end{equation}
where $I(\hbar)$ extends to a smooth function on $[0,\infty)$
such that 
$$
I(0)=
(2\pi)^{\frac{d}{2}}\frac{g(c)}{\sqrt{\det f''(c)}}.
$$ 
\end{theorem}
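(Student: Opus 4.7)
The strategy is to mimic the one dimensional argument of Theorem \ref{statphase1}, using the Morse lemma to reduce to a Gaussian integral with a Schwartz insertion, and then invoking the multidimensional Gaussian analysis of Theorem \ref{contt2}. Without loss of generality we translate so that $c=0$ and subtract a constant so that $f(0)=0$; let $B:=f''(0)$, which is a positive definite symmetric bilinear form on $V$.

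First I would apply the Morse lemma: since $c=0$ is a non-degenerate critical point of $f$ and $B>0$, there is an open neighborhood $U\subset D$ of $0$ and a smooth diffeomorphism $\varphi: U\to \varphi(U)\subset V$ with $\varphi(0)=0$ and $d\varphi(0)=\id$, such that $f(\varphi^{-1}(y))=\tfrac12 B(y,y)$ on $\varphi(U)$. Next I would pick a bump function $h\in C_c^\infty(V)$ supported in $U$ and equal to $1$ on a smaller neighborhood $U'\ni 0$, and split $g=g_1+g_2$ with $g_1=hg$, $g_2=(1-h)g$, obtaining a corresponding splitting $I(\hbar)=I_1(\hbar)+I_2(\hbar)$ through \eqref{intemult}.

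The contribution of $g_2$ is easy: since $c$ is the unique global minimum of $f$ and $g_2$ vanishes on $U'$, there exists $\delta>0$ with $f(x)\ge \delta$ on $\supp(g_2)$. Hence $\int_D g_2(x)e^{-f(x)/\hbar}dx=O(e^{-\delta/\hbar})$, and the same estimate applies to all derivatives in $\hbar$ (since $\partial_\hbar$ brings down a factor $\hbar^{-2}f(x)$, which is still dominated by $e^{-\delta/(2\hbar)}$). Therefore $I_2\in C^\infty[0,\infty)$ with $I_2^{(n)}(0)=0$ for every $n$, and the problem reduces to $g=g_1$, which we extend by zero to all of $V$.

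For $g_1$, I would make the Morse change of variable $y=\varphi(x)$, writing $g_1(x)dx=\widetilde g(y)dy$ with $\widetilde g(y):=g_1(\varphi^{-1}(y))|J_{\varphi^{-1}}(y)|$, a compactly supported smooth (hence Schwartz) function with $\widetilde g(0)=g(0)$. This yields
\[
\int_V g_1(x)e^{-f(x)/\hbar}dx=\int_V \widetilde g(y)e^{-B(y,y)/(2\hbar)}dy.
\]
Rescaling $y=\hbar^{1/2}z$ puts this in the form $\hbar^{d/2}\int_V \widetilde g(\hbar^{1/2}z)e^{-B(z,z)/2}dz=\hbar^{d/2}I_{\widetilde g}(\hbar)$ in the notation of Section \ref{inse}. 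By Theorem \ref{contt2}, $I_{\widetilde g}\in C^\infty[0,\infty)$, and by \eqref{valzer} its value at $0$ is $(2\pi)^{d/2}(\det B)^{-1/2}\widetilde g(0)=(2\pi)^{d/2}g(c)/\sqrt{\det f''(c)}$, which gives the claimed $I(0)$.

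The only delicate step is the Morse normal form in $d$ dimensions; beyond that the argument is a direct multidimensional transcription of the one dimensional proof, with Theorem \ref{contt2} playing the role that the one variable version played in Theorem \ref{statphase1}.
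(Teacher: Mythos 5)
Your proof follows the same route as the paper's: reduce via the Morse lemma to the case where $f$ is exactly quadratic near the critical point, split off the contribution away from $c$ with a bump function, and apply Theorem~\ref{contt2} after rescaling. You have in fact filled in two details the paper leaves implicit when it says the proof is ``identical to the 1-variable case'': that the Morse change of coordinates can be chosen with differential $\mathrm{id}$ at $c$ (so the Jacobian factor does not disturb $g(c)$ in the leading term), and that the Jacobian density $|J_{\varphi^{-1}}|$ must be absorbed into the insertion $\widetilde g$ — both of which are exactly right.
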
 

\begin{theorem}\label{statphase4} (Multidimensional stationary phase formula)
Let \linebreak $f,g: D\to \Bbb R$ be smooth functions.  
Assume that $f$ has a unique critical point $c$ in $D$,
such that $c$ is an interior point and $\det f''(c)
\ne 0$, and $g$ 
has vanishing derivatives of all orders on $\partial D$. Then 
\begin{equation}\label{intemult1}
\int_{D} g(x)e^{\frac{if(x)}{\hbar}}dx=\hbar^{\frac{d}{2}}e^{\frac{if(c)}{\hbar}}I(\hbar),
\end{equation}
where $I(\hbar)$ extends to a smooth function on $[0,\infty)$
such that 
$$
I(0)=
(2\pi)^{\frac{d}{2}}e^{\frac{\pi i\sigma}{4}}\frac{g(c)}{\sqrt{|\det f''(c)|}},
$$ 
where $\sigma$ is the signature of the symmetric bilinear form $f''(c)$.
\end{theorem}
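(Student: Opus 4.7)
The plan is to adapt the one-dimensional proof of Theorem \ref{statphase2}. After translating so that $c=0$ and $f(c)=0$, choose a smooth bump $h$ supported in a small neighborhood $U$ of $c$ (inside which the Morse lemma for $f$ will apply), with $h\equiv 1$ on a still smaller neighborhood, and split $g=g_1+g_2$ with $g_1=hg$, $g_2=(1-h)g$. Note $\nabla f\ne 0$ on $\mathrm{supp}(g_2)$, while $g_1$ is supported inside $U$ where the Morse normal form is available. Write the corresponding integrals as $J_1(\hbar)+J_2(\hbar)$; the goal is to show $\hbar^{-d/2}J_2(\hbar)$ is smooth on $[0,\infty)$ with vanishing Taylor series at $0$, and $\hbar^{-d/2}J_1(\hbar)$ is smooth on $[0,\infty)$ with value at $0$ equal to the advertised constant.

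For $J_2$, the 1D integration-by-parts Lemma \ref{riem} is replaced by its multidimensional analog. On $\mathrm{supp}(g_2)$ define the operator $L:=|\nabla f|^{-2}\sum_j(\partial_j f)\partial_j$, so that $L(e^{if/\hbar})=(i/\hbar)e^{if/\hbar}$. Integration by parts gives
\[
J_2(\hbar)=(-i\hbar)\int_D L^*(g_2)\,e^{if/\hbar}\,dx,
\]
with no boundary contribution because all derivatives of $g_2$ vanish on $\partial D$ (inherited from $g$) and $g_2$ vanishes near $c$ (where $L$ is singular). Iterating $N$ times yields $J_2(\hbar)=O(\hbar^N)$ for every $N$, and the Taylor-remainder refinement from Lemma \ref{riem}(ii) upgrades this to the statement that $\hbar^{-d/2}J_2$ extends to a $C^\infty[0,\infty)$ function whose Taylor series at $\hbar=0$ is identically zero.

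For $J_1$, the Morse lemma supplies a local diffeomorphism $\psi$ with $\psi(0)=0$ and $f(\psi(y))=\tfrac12 Q_0(y,y)$, where $Q_0=\mathrm{diag}(\epsilon_1,\dots,\epsilon_d)$, $\epsilon_j=\pm1$, has signature $\sigma(Q_0)=\sigma$. Changing variables and substituting $y=\hbar^{1/2}z$:
\[
\hbar^{-d/2}J_1(\hbar)=\int_V \widetilde g(\hbar^{1/2}z)\,e^{iQ_0(z,z)/2}\,dz,\qquad \widetilde g(y):=g_1(\psi(y))\,|\det\psi'(y)|,
\]
with $\widetilde g$ compactly supported and smooth, hence in $\mathcal S(V)$. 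The complex symmetric form $B:=-iQ_0$ lies in $\bold M(V)$, so Theorem \ref{contt2} applies and shows the right-hand side extends to $C^\infty[0,\infty)$ with value $(2\pi)^{d/2}(\det B)^{-1/2}\widetilde g(0)$ at $\hbar=0$. Differentiating the Morse identity twice at $y=0$ gives $\psi'(0)^{T}f''(0)\psi'(0)=Q_0$, so $|\det\psi'(0)|=|\det f''(c)|^{-1/2}$, and therefore $\widetilde g(0)=g(c)|\det f''(c)|^{-1/2}$.

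The sole remaining point is to identify $(\det B)^{-1/2}$ with the phase $e^{i\pi\sigma/4}$ on the correct branch, as normalized in Section 2.1. Deform $B$ to $B_\varepsilon:=\varepsilon\,\mathrm{id}-iQ_0\in\bold M^\circ(V)$: its eigenvalues $\varepsilon-i\epsilon_j$ have arguments tending to $-\epsilon_j\pi/2$ as $\varepsilon\to 0^+$, so $\arg\det B_\varepsilon\to -\sigma\pi/2$, and the continuous branch of the square root from Section 2.1 gives $(\det B)^{-1/2}=e^{i\pi\sigma/4}$ (using $|\det Q_0|=1$). Combining everything yields $I(0)=(2\pi)^{d/2}e^{i\pi\sigma/4}g(c)/\sqrt{|\det f''(c)|}$, as claimed. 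The main new element beyond the 1D proof is this phase bookkeeping on a non-positive-definite form; the rest is a direct transcription, with the operator $L$ replacing the 1D integration by parts and the Morse lemma replacing the single smooth change of variable.
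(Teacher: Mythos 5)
Your proof is correct and has the same two-part skeleton as the paper's (bump-function splitting near the critical point, Morse lemma reduction, and Theorem~\ref{contt2} for the stationary part), but it handles the non-stationary piece differently. The paper proves the multivariable Riemann lemma (Lemma~\ref{mrl}) by covering $\mathrm{supp}\,g_2$ with charts in which $f$ is a coordinate, taking a finite partition of unity, and reducing to the one-variable Lemma~\ref{riem} by integrating out the other variables. You instead use the global first-order operator $L=|\nabla f|^{-2}\nabla f\cdot\nabla$ satisfying $L(e^{if/\hbar})=(i/\hbar)e^{if/\hbar}$ and iterate integration by parts, which gives the rapid decay of $J_2$ directly without charts or a partition of unity; this is the standard analytic shortcut and is entirely correct (the boundary terms vanish because all derivatives of $g_2$ vanish on $\partial D$, and $L^*$ maps that class to itself, so the iteration and the refinement giving rapid decay of all $\hbar$-derivatives go through). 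The other place you depart from the text is the phase: you identify $(\det B)^{-1/2}$ for $B=-iQ_0$ by deforming eigenvalues $\varepsilon-i\epsilon_j$ and tracking arguments, rather than invoking the formula $(\det(iQ))^{-1/2}=e^{\pi i\sigma(Q)/4}|\det Q|^{-1/2}$ stated in \S 2.1. Your deformation argument is correct and in fact sharper: as written, the \S 2.1 formula has the sign of the exponent reversed (for $B=iQ$ with $Q>0$, the continuous branch gives $(\det B)^{-1/2}=e^{-i\pi/4}|Q|^{-1/2}$, consistent with $\int e^{-\frac{i}{2}Qx^2}dx=\sqrt{2\pi/Q}\,e^{-i\pi/4}$). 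Your self-contained computation arrives at the correct phase $e^{i\pi\sigma/4}$ for the theorem independently of that formula.
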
 

\subsection{Morse lemma} For the proof of these theorems it is convenient to use a fundamental result in multivariable calculus called {\it the Morse lemma}\index{Morse lemma}. This lemma easily follows by induction in dimension from the following theorem. 

\begin{theorem}\label{sepvar} (Separation of variables) 
Let $f$ be a smooth function on an open ball $0\in B\subset \Bbb R^d$ which has a non-degenerate critical point at $0$, and suppose $f(0)=0$. 
Then there is a local coordinate system near $0$ (possibly defined in a smaller ball) 
in which 
$$
f(x_1,...,x_n)=f(x_1,...,x_{d-1})\pm x_d^2. 
$$
\end{theorem}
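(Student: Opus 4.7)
The plan is to use the implicit function theorem to straighten out the locus where $\partial f/\partial x_d$ vanishes, and then to absorb the remaining $x_d$-dependence into the quadratic $\pm x_d^2$ via a second coordinate change. The whole argument is local near $0$.

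First I would make a linear change of variables on $\mathbb R^d$ so that $\partial^2 f/\partial x_d^2(0)\neq 0$; this is possible since the Hessian $f''(0)$ is non-degenerate, so after diagonalizing it orthogonally some diagonal entry is nonzero, and we may assume it is the last. Set $\epsilon := \mathrm{sign}(\partial_d^2 f(0)) \in \{\pm 1\}$. Since $\partial_d f(0)=0$ (because $0$ is critical) and $\partial_d(\partial_d f)(0)\neq 0$, the implicit function theorem yields a smooth function $\psi(x_1,\ldots,x_{d-1})$ defined near $0$, with $\psi(0)=0$, such that $\partial_d f(x_1,\ldots,x_{d-1},\psi(x_1,\ldots,x_{d-1}))=0$ identically. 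Define new coordinates $y_j = x_j$ for $j<d$ and $y_d = x_d - \psi(x_1,\ldots,x_{d-1})$; this is a local diffeomorphism near $0$, and writing $g(y) := f(y_1,\ldots,y_{d-1},\,y_d+\psi(y_1,\ldots,y_{d-1}))$ we have $\partial_{y_d} g(y',0)=0$ for all $y'=(y_1,\ldots,y_{d-1})$ near $0$.

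Second, Taylor expansion with integral remainder in the variable $y_d$ (using that $g$ and $\partial_{y_d}g$ both vanish on $\{y_d=0\}$ in the appropriate senses) gives
$$g(y) \;=\; g(y',0) \;+\; y_d^2\, K(y),\qquad K(y) := \int_0^1 (1-t)\,\partial_{y_d}^2 g(y',t y_d)\,dt,$$
with $K$ smooth and $K(0) = \tfrac12 \partial_{y_d}^2 g(0) = \tfrac12 \partial_{x_d}^2 f(0)$, which has sign $\epsilon$. Shrinking the neighborhood, $\epsilon K(y)>0$ throughout, so $\sqrt{\epsilon K(y)}$ is smooth. Set $h(y'):=g(y',0)$ and define
$$z_d(y) \;:=\; y_d\sqrt{\epsilon K(y)}.$$
Its Jacobian with respect to $y_d$ at $0$ equals $\sqrt{\epsilon K(0)}\neq 0$, so by the inverse function theorem $(y_1,\ldots,y_{d-1},z_d)$ is a valid local coordinate system near $0$. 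In these coordinates $g(y) = h(y') + \epsilon z_d^2$, which is the required form (with the abuse of notation in the statement whereby $f$ on the right-hand side denotes the new function $h$ of $d-1$ variables). Moreover, since the Hessian of $h(y')+\epsilon z_d^2$ at $0$ is block-diagonal with blocks $h''(0)$ and $2\epsilon$, and this Hessian is linearly equivalent to the non-degenerate $f''(0)$, the function $h$ again has a non-degenerate critical point at $0$, which is exactly what is needed to iterate and deduce the Morse lemma.

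The main obstacle is the last coordinate change: one must verify both that $K(y)$ has constant nonzero sign in a full neighborhood of $0$ (not merely at $0$), so that $\sqrt{\epsilon K}$ is smooth, and that $y\mapsto (y',y_d\sqrt{\epsilon K(y)})$ has invertible differential at $0$. The linear diagonalization at the outset is what makes this work, by concentrating a guaranteed definite-sign second-derivative into the single direction $\partial_{y_d}$; if one tried to skip this step the function $K$ could vanish at the origin and the substitution would degenerate.
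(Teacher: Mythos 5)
Your proof is correct and follows essentially the same strategy as the paper's: straighten the hypersurface $\{\partial_{x_d} f = 0\}$ via the implicit function theorem, then use Taylor expansion with integral remainder to write $f = h(y') + y_d^2 K(y)$ and absorb $K$ by the square-root rescaling $z_d = y_d\sqrt{|K(y)|}$. The only cosmetic difference is in the first step: you translate $x_d$ by the implicitly defined function $\psi(x')$, while the paper takes $v := \pm\tfrac{1}{2}\partial_u f(y,u)$ directly as the new last coordinate — these accomplish the same normalization and the rest of the argument is identical.
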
 

\begin{proof} By making a linear change of variables, we can 
assume that the quadratic part of $f$ has the form $Q(y)\pm u^2$, where 
$y:=(x_1,...,x_{d-1})$, $u:=x_d$. Consider the hypersurface $S$
defined by the equation 
$$
\partial_uf(y,u)=0.
$$ 
The linear part 
of $\partial_uf(y,u)$ is $\pm 2u$, so by the implicit function theorem there is a change of coordinates $F$ near $0$ (with $dF(0)=1$) in which $u$ is replaced by $v:=\pm\frac{1}{2}\partial_uf(y,u)$ and $y$ is kept unchanged; so
$u=g(y,v)$ for some function $g$ with $(\partial_vg)(0,0)\ne 0$. Let 
$$
f_*(y,v):=f(y,u)=f(y,g(y,u)).
$$ 
Then by the chain rule
$$
\partial_v f_*(y,v)=\partial_u f_*(y,v)\tfrac{\partial u}{\partial v}=\partial_uf(y,u)\tfrac{\partial u}{\partial v}=\pm 2v\partial_{v}g(y,v).
$$
Thus the hypersurface $S$ in the new coordinates is defined by the equation 
$v=0$. So we may assume without loss of generality that $S$ is given by the equation $u=0$ to start with. Then $(\partial_uf)(y,0)=0$, so 
$$
f(y,u)-f(y,0)=h(y,u)u^2,
$$
where $h$ is a smooth function in $B$ with $h(0,0)=\pm 1$.
By replacing $u$ with $\widetilde u:=\sqrt{|h(y,u)|}u$ and keeping $y$ unchanged, we may assume that $h=\pm 1$. Then 
$$
f(u,y)=f(0,y)\pm u^2,
$$
as claimed. 
\end{proof} 

\begin{corollary}\label{morsele} (Morse lemma)
Let $f$ be a smooth function on an open ball $0\in B\subset \Bbb R^d$ which has a non-degenerate critical point at $0$, and suppose $f(0)=0$. Then 
there is a local coordinate system $(x_1,...,x_d)$ near $0$ (possibly defined in a smaller ball) in which $$f=x_1^2+...+x_m^2-x_{m+1}^2-...-x_d^2.$$ In other words, near a non-degenerate critical point a smooth function is equivalent 
by a change of coordinates to its quadratic part. 
\end{corollary}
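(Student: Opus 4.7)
The plan is to proceed by induction on the dimension $d$, using Theorem \ref{sepvar} as the engine that strips off one variable at a time, and then reordering coordinates at the end to group the positive and negative squares.

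The base case $d=1$ is immediate: since $f(0)=0$, $f'(0)=0$, and $f''(0)\neq 0$, Taylor's theorem with integral remainder gives $f(x)=x^2 h(x)$ with $h(0)=\tfrac{1}{2}f''(0)\neq 0$, so the change of coordinate $\widetilde x:=\sqrt{|h(x)|}\,x$ (well defined and a local diffeomorphism near $0$ since $h$ has constant sign there) puts $f$ into the form $\pm\widetilde x^2$. Alternatively, this base case is already contained in Theorem \ref{sepvar} applied in dimension $1$.

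For the inductive step, assume the result in dimension $d-1$, and let $f$ satisfy the hypotheses in dimension $d$. Apply Theorem \ref{sepvar} to obtain local coordinates $(x_1,\dots,x_{d-1},x_d)$ near $0$ in which
\[
f(x_1,\dots,x_d)=\widetilde f(x_1,\dots,x_{d-1})\pm x_d^2.
\]
The key point to check is that $\widetilde f$ again has a non-degenerate critical point at $0$ with $\widetilde f(0)=0$. The value is clear by setting $x_d=0$, and a critical point there is forced because the Hessian of $f$ at the origin in the new coordinates is block diagonal with blocks equal to the Hessian of $\widetilde f$ at $0$ and $\pm 2$. Since the Hessian of $f$ at $0$ is non-degenerate (this property is invariant under smooth changes of coordinates fixing the origin, as the Hessian transforms by a congruence by the Jacobian), the Hessian of $\widetilde f$ at $0$ must also be non-degenerate. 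By the inductive hypothesis there are coordinates $(y_1,\dots,y_{d-1})$ near $0\in\Bbb R^{d-1}$ in which $\widetilde f=\pm y_1^2\pm\dots\pm y_{d-1}^2$; pulling these back and appending the coordinate $x_d$ produces coordinates for $f$ of the stated diagonal form.

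Finally, to get the standard form $x_1^2+\dots+x_m^2-x_{m+1}^2-\dots-x_d^2$, permute the coordinates so that all the $+$ signs come first. The value of $m$ is of course an invariant (it is the number of positive eigenvalues of the Hessian, i.e., Sylvester's law of inertia applied to the quadratic form $f''(0)$). I expect the only subtle point to be the verification that the intermediate function $\widetilde f$ inherits non-degeneracy of its Hessian, but this is immediate from the block-diagonal structure above, so the whole argument is essentially bookkeeping on top of Theorem \ref{sepvar}.
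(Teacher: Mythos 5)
Your proposal is correct and follows exactly the paper's intended route: induction on dimension driven by Theorem \ref{sepvar}, which the paper states in one line ("this follows easily from Theorem \ref{sepvar} by induction in dimension"). You simply spell out the details—in particular the verification that the factor $\widetilde f$ inherits a non-degenerate critical point at the origin, which the block-diagonal Hessian argument handles correctly.
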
 

\begin{proof} As mentioned above, this follows easily from Theorem \ref{sepvar} by induction in dimension. 
\end{proof}

\begin{exercise} Let $f$ be a smooth function on $\Bbb R^2$ 
which is a cubic polynomial in $x$: 
$$
f(x,y)=a(y)+b(y)x+c(y)x^2+d(y)x^3.
$$
Assume that $a(0)=a'(0)=0$, $b(0)=b'(0)=0$, $a''(0)=c(0)=2$. 
Find explicitly local coordinates $u=u(x,y),v=v(x,y)$ near $0$ in which $f(x,y)=u^2+v^2$. 
\end{exercise} 

\subsection{Proof of the multidimensional steepest descent and stationary phase formulas} 
The proofs of the multidimensional steepest descent and stationary phase formulas are parallel to the proofs of their 
one-dimensional versions, using the Morse lemma. 
Namely, the Morse lemma allows us to assume without loss of 
generality that $f$ is quadratic near the critical point. After this, 
the proof of the steepest descent formula is identical to the 1-variable case. The same applies to the stationary phase formula, using 
the following multivariable analog of the Riemann lemma.

\begin{lemma}\label{mrl} Let $f,g: D\to \Bbb R$ be smooth functions such that 
all derivatives of $g$ vanish on $\partial D$ and $df$ does not vanish anywhere on the support of $g$. Then the function 
$$
I(\hbar):=\int_{D} g(x)e^{\frac{if(x)}{\hbar}}dx
$$
extends to a smooth function on $[0,\infty)$ and 
has rapidly decaying derivatives of all orders as $\hbar\to 0$. 
\end{lemma}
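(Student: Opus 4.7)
The plan is to establish rapid decay via repeated integration by parts against a first-order differential operator that inverts multiplication by $i/\hbar$ on the phase. Specifically, set
$$
L:=\frac{1}{i|\nabla f|^2}\sum_{j=1}^d(\partial_j f)\,\partial_j,
$$
which is defined and smooth on the open set $U:=\{x\in D: \nabla f(x)\ne 0\}$ and satisfies $L(e^{if/\hbar})=\hbar^{-1}e^{if/\hbar}$. Since $\mathrm{supp}(g)$ is a compact subset of $U$, substituting $e^{if/\hbar}=\hbar L(e^{if/\hbar})$ and integrating by parts yields
$$
I(\hbar)=-\hbar\int_D L^T(g)\,e^{if/\hbar}\,dx,
$$
where $L^T$ is the formal transpose of $L$. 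The boundary terms on $\partial D$ vanish because $g$ itself vanishes there (the order-zero case of the hypothesis), and no boundary terms arise from $\partial U$ since $g$ is identically zero in a neighborhood of that set inside $D$.

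The key inductive step is to check that $g_1:=L^T(g)$ again satisfies the hypotheses of the lemma. On an open neighborhood of $\mathrm{supp}(g)$ that is compactly contained in $U$, the coefficients of $L^T$ are smooth, so $L^T(g)$ is smooth there; off this neighborhood both $g$ and all its derivatives vanish, so we simply define $g_1$ to be zero. The resulting $g_1$ is smooth on $D$, has support contained in $\mathrm{supp}(g)\subset U$, and has all its derivatives vanishing on $\partial D$: at a boundary point lying in $U$ this holds because $L^T$ has smooth coefficients there and $g$ is flat, while at a boundary point outside $U$ it holds because $g_1\equiv 0$ nearby.

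Iterating $N$ times yields
$$
I(\hbar)=(-\hbar)^N\int_D g_N\,e^{if/\hbar}\,dx,\qquad g_N:=(L^T)^N g,
$$
with each $g_N$ uniformly bounded, whence $|I(\hbar)|\le C_N\hbar^N$ for every $N$. Thus $I$ extends continuously to $[0,\infty)$ with $I(0):=0$ and is rapidly decaying at zero. For the higher derivatives, one differentiates under the integral for $\hbar>0$: each $I^{(k)}(\hbar)$ becomes a finite linear combination of terms of the form $c_{k,\ell,m}\,\hbar^{-m}\int_D f^\ell g\,e^{if/\hbar}\,dx$ with $\ell,m$ bounded in terms of $k$, and $f^\ell g$ again satisfies the same hypotheses as $g$. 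Applying the same estimate to each such integral gives $I^{(k)}(\hbar)=O(\hbar^N)$ for every $N$, so $I\in C^\infty[0,\infty)$ with $I^{(k)}(0)=0$ for all $k$. The main technical obstacle is thus maintaining regularity under iteration: the coefficients of $L^T$ involve $1/|\nabla f|^2$ and are singular on $\{\nabla f=0\}$, which is handled by combining the compactness of $\mathrm{supp}(g)$ inside the open set $U$ with the flatness of $g$ at $\partial D$.
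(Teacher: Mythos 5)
Your proof is correct, but it takes a genuinely different route from the one in the paper. The paper covers $\mathrm{supp}\,g$ by finitely many charts in which $f$ becomes the last coordinate $x_d$, uses a subordinate partition of unity to localize to a single chart, changes variables so that $f=x_d$, integrates out the transverse variables $x_1,\dots,x_{d-1}$ by Fubini, and then invokes the one-dimensional Riemann lemma (Lemma~\ref{riem}) that was already established. Your argument is the global ``non-stationary phase'' method: you build the first-order operator $L$ satisfying $L(e^{if/\hbar})=\hbar^{-1}e^{if/\hbar}$, integrate by parts once against its formal transpose $L^{T}$ (boundary terms dying because $g$ is flat at $\partial D$, singularities of the coefficients of $L^{T}$ on $\{\nabla f=0\}$ being irrelevant because $\mathrm{supp}\,g$ is a compact subset of $\{\nabla f\ne 0\}$), and iterate. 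The book-keeping you do to show $(L^{T})^{N}g$ again satisfies the hypotheses is exactly the right point to check, and you handle the $\hbar$-derivatives of $I$ correctly by noting that $f^{\ell}g$ inherits the hypotheses, so the rapid decay of each $I^{(k)}$ follows from the same estimate. The trade-off: the paper's route reuses the already-proved 1D case and keeps the multivariable proof short at the cost of local coordinates and a partition of unity; your route is self-contained, coordinate-free, and avoids Fubini, at the cost of having to verify that the singular coefficients of $L^{T}$ never bite. Both are standard, and yours is arguably the cleaner blueprint for generalizing to phases that are not real-valued or to distributional amplitudes.
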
 
 
\begin{proof} Since $df$ does not vanish on ${\rm supp}g$, we can cover ${\rm supp}g$ by local charts $U_i$ in which $f(x)$ is the last coordinate $x_d$. By compactness this cover can be chosen finite. By using a partition of unity $\lbrace h_i\rbrace$ on ${\rm supp}g$ subordinate to this cover and replacing $g$ with $h_ig$, 
we may assume without loss of generality that $g$ is supported on a single chart. 
Then changing variables, we may also assume that $f(x)=x_d$. 
Then integrating out the variables $x_1,...,x_{d-1}$, we 
reduce to the 1-dimensional case covered by Lemma \ref{riem}. 
\end{proof} 

\begin{remark}\label{severa} It is clear from the proof of the stationary phase formula that it extends 
to the case when $f$ may have several critical points but all of them are interior and non-degenerate. In this case the asymptotic expansions coming from different critical points are simply added together. The same applies to the steepest descent formula if 
the global minimum is attained at several points all of which are interior and non-degenerate. 
\end{remark} 

\section{Feynman calculus}

\subsection{Wick's theorem}
Let $V$ be a real vector space of dimension $d$ with volume element $dx$. 
Let $S(x)$ be a smooth function on a compact region $D\subset V$ with smooth boundary which attains its minimum at a unique point 
$c\in D$ in the interior of $D$, and let $g$ be any smooth function on $D$. 
In the previous section we proved the steepest descent formula which implies that the function 
$$
I(\hbar)=\hbar^{-\frac{d}{2}}e^{\frac{S(c)}{\hbar}}\int_{D}g(x)e^{-\frac{S(x)}{\hbar}}dx
$$
admits an asymptotic power series expansion in $\hbar$:
\begin{equation}\label{A012}
I(\hbar)=a_0+a_1\hbar+ \cdots +a_m\hbar^m+ \cdots
\end{equation}
Our main question now will be: how to compute the coefficients $a_i$?

Our proof of the steepest descent formula shows that although 
the problem of computing $I(\hbar)$ is
transcendental, the problem of computing the coefficients 
$a_i$ is, in fact, purely algebraic, and involves only differentiation 
of the functions $S$ and $g$ at the point $c$. Indeed, recalling the proof 
of equation \eqref{A012},  
we see that the calculation of $a_i$ reduces to calculation 
of integrals of the form 
$$
\int_{V}P(x)e^{-\frac{B(x,x)}{2}}dx,
$$
where $P$ is a polynomial and $B$ is a positive definite bilinear form
(in fact, $B(v,u)=(\partial_v\partial_uS)(c)$). 
But such integrals can be exactly evaluated. Namely, it is sufficient 
to consider the case
when $P$ is a product of linear functions, in which case the answer 
is given by the following elementary formula, known to physicists as 
{\it Wick's theorem.}\index{Wick's theorem}

For a positive integer $k$, consider the set  
$\lbrace{1, \ldots ,2k\rbrace}$. By a {\it matching}\index{matching}
$\sigma$ on this set we will mean its partition  
into $k$  disjoint two-element subsets (pairs). 
A matching can be visualized by drawing $2k$ points and 
connecting two points with an edge if they 
belong to the same pair (see Fig.~\ref{fig:1}). This will give $k$ edges
which are not connected to each other. 

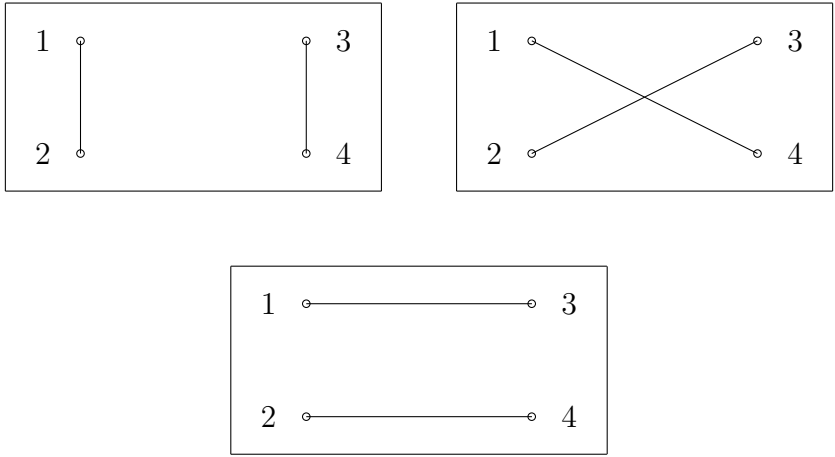
\begin{figure}[htbp]
  \begin{center}

    \setlength{\unitlength}{0.5cm}

    \begin{picture}(22,12)(0,4)
      %%% make first figure
      %%% make box
      \put(0,11){\line(1,0){10}}
      \put(0,16){\line(1,0){10}}
      \put(0,11){\line(0,1){5}}
      \put(10,11){\line(0,1){5}}
      %%% make graph
      \put(2,12){\line(0,1){3}}
      \put(8,12){\line(0,1){3}}
      \put(2,12){\circle*{0.2}}
      \put(2,15){\circle*{0.2}}
      \put(8,12){\circle*{0.2}}
      \put(8,15){\circle*{0.2}}
      \put(1,12){\makebox(0,0)[c]{$2$}}
      \put(1,15){\makebox(0,0)[c]{$1$}}
      \put(9,12){\makebox(0,0)[c]{$4$}}
      \put(9,15){\makebox(0,0)[c]{$3$}}

      %%% make second figure
      %%% make box
      \put(12,11){\line(1,0){10}}
      \put(12,16){\line(1,0){10}}
      \put(12,11){\line(0,1){5}}
      \put(22,11){\line(0,1){5}}
      %%% make graph
      \put(14,12){\line(2,1){6}}
      \put(20,12){\line(-2,1){6}}
      \put(14,12){\circle*{0.2}}
      \put(14,15){\circle*{0.2}}
      \put(20,12){\circle*{0.2}}
      \put(20,15){\circle*{0.2}}
      \put(13,12){\makebox(0,0)[c]{$2$}}
      \put(13,15){\makebox(0,0)[c]{$1$}}
      \put(21,12){\makebox(0,0)[c]{$4$}}
      \put(21,15){\makebox(0,0)[c]{$3$}}

      %%% make third figure
      %%% make box
      \put(6,4){\line(1,0){10}}
      \put(6,9){\line(1,0){10}}
      \put(6,4){\line(0,1){5}}
      \put(16,4){\line(0,1){5}}
      %%% make graph
      \put(8,5){\line(1,0){6}}
      \put(8,8){\line(1,0){6}}
      \put(8,5){\circle*{0.2}}
      \put(8,8){\circle*{0.2}}
      \put(14,5){\circle*{0.2}}
      \put(14,8){\circle*{0.2}}
      \put(7,5){\makebox(0,0)[c]{$2$}}
      \put(7,8){\makebox(0,0)[c]{$1$}}
      \put(15,5){\makebox(0,0)[c]{$4$}}
      \put(15,8){\makebox(0,0)[c]{$3$}}

    \end{picture}

    \caption{Matchings of the set $\{1, 2, 3, 4 \}$}
    \label{fig:1}
  \end{center}
\end{figure}

Let us denote the set of matchings on a set $T$ by $\Pi(T)$ and 
the set $\Pi(\lbrace{1, \ldots ,2k\rbrace})$ by 
$\Pi_k$. It is clear that $|\Pi_k|=\frac{(2k)!}{2^k\cdot k!}=(2k-1)!!$. 
For any $\sigma\in \Pi_k$,  
we can think of $\sigma$ as a permutation of $\lbrace{1, \ldots ,2k\rbrace}$, 
such that $\sigma^2=1$ and $\sigma$ has no fixed points. 
Namely, $\sigma$ maps any element $i$ to the second element 
$\sigma(i)$ of the pair containing $i$. 

\begin{theorem}\label{wick} (Wick's theorem) Let $B^{-1}$ denote the inverse form to $B$ on $V^*$, 
and $\ell_1, \ldots ,\ell_N\in V^*$. 
Then, if $N$ is even, we have 
$$
\int_{V}\ell_1(x) \ldots \ell_N(x)e^{-\frac{B(x,x)}{2}}dx=
\frac{(2\pi)^{\frac{d}{2}}}{\sqrt{\det B}}
\sum_{\sigma\in \Pi_{N/2}}\prod_{i\in \lbrace{1, \ldots ,N\rbrace}/\sigma}
B^{-1}(\ell_i,\ell_{\sigma(i)})
$$
If $N$ is odd, the integral is zero. 
\end{theorem}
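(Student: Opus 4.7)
The plan is to deduce Wick's theorem from a generating-function identity, reducing the claim to a purely combinatorial derivative computation.

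First I would introduce, for $p\in V^*$, the generating function
$$
Z(p):=\int_V e^{-\tfrac{1}{2}B(x,x)+p(x)}\,dx.
$$
By the substitution $x\mapsto x+B^{-1}p$ (which completes the square, using that $B$ is positive definite), or equivalently by evaluating the Fourier transform in Lemma \ref{gi} at imaginary argument, one obtains
$$
Z(p)=(2\pi)^{d/2}(\det B)^{-1/2}\,e^{\tfrac{1}{2}B^{-1}(p,p)}.
$$
Writing $D_\ell$ for the directional derivative on $V^*$ in the direction $\ell\in V^*$, differentiating $Z(p)$ under the integral sign gives
$$
\int_V \ell_1(x)\cdots\ell_N(x)\,e^{-\tfrac{1}{2}B(x,x)}\,dx=D_{\ell_1}\cdots D_{\ell_N}Z(p)\big|_{p=0}.
$$
So everything reduces to evaluating $D_{\ell_1}\cdots D_{\ell_N}\,e^{\tfrac{1}{2}B^{-1}(p,p)}\big|_{p=0}$.

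Next, I would dispose of the odd case. Since $e^{\tfrac{1}{2}B^{-1}(p,p)}$ is an even function of $p$, any odd-order derivative vanishes at $p=0$; alternatively, the substitution $x\mapsto -x$ in the original integral shows it equals its own negative when $N$ is odd. For $N$ even, I would proceed by induction on $N$, using the elementary identity
$$
D_\ell\,e^{\tfrac{1}{2}B^{-1}(p,p)}=B^{-1}(\ell,p)\,e^{\tfrac{1}{2}B^{-1}(p,p)}.
$$
Applying $D_{\ell_1}$ first and then $D_{\ell_2},\ldots,D_{\ell_N}$, Leibniz distributes the remaining $N-1$ derivatives over the two factors $B^{-1}(\ell_1,p)$ and $e^{\tfrac{1}{2}B^{-1}(p,p)}$. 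Since $B^{-1}(\ell_1,p)$ is linear in $p$, it is killed by two or more derivatives and vanishes at $p=0$ if undifferentiated. Hence only terms in which exactly one $D_{\ell_k}$ ($k\ge 2$) hits $B^{-1}(\ell_1,p)$ survive, producing the factor $B^{-1}(\ell_1,\ell_k)$ and leaving $D$'s indexed by $\{2,\dots,N\}\setminus\{k\}$ acting on $e^{\tfrac{1}{2}B^{-1}(p,p)}$. The inductive hypothesis expands each such remainder as a sum over matchings of the complementary set, and summing over $k$ glues these into a sum over all matchings in $\Pi_{N/2}$, with the pair containing $1$ singled out as $\{1,k\}$. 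This is exactly the claimed formula.

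The argument is clean and mostly formal once the generating function is in hand; the only genuine step with mathematical content is the evaluation of $Z(p)$, which is already available from Lemma \ref{gi}. The main potential pitfall is the combinatorial bookkeeping in the induction: one must check that the map ``choose $k$, then a matching of $\{2,\dots,N\}\setminus\{k\}$'' bijects with $\Pi_{N/2}$ (sending each $\sigma\in\Pi_{N/2}$ to the pair $\{1,\sigma(1)\}$ together with the induced matching of the remaining $N-2$ elements), which is straightforward. A sanity check in the scalar case $d=1$, $B=1$, $\ell_i(x)=x$, recovers the familiar identity $\int_{\mathbb{R}} x^{2k}e^{-x^2/2}dx=\sqrt{2\pi}\,(2k-1)!!$ and matches $|\Pi_k|=(2k-1)!!$.
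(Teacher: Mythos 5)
Your proof is correct, but it takes a genuinely different route from the paper's. The paper argues by polarization: since both sides of the identity are symmetric $N$-linear forms in $\ell_1,\ldots,\ell_N$, it suffices to treat the diagonal case $\ell_1=\cdots=\ell_N=\ell$; a linear change of coordinates then reduces to $d=1$, $B(x,x)=x^2$, $\ell(x)=x$, where the identity becomes $\int_{\mathbb R}x^{2k}e^{-x^2/2}dx=(2\pi)^{1/2}(2k-1)!!$, which is exactly formula \eqref{gammaint}; the combinatorics is entirely absorbed into the count $|\Pi_k|=(2k-1)!!$. Your generating-function argument instead computes $Z(p)=\int_V e^{-\frac12 B(x,x)+p(x)}dx$ by completing the square, differentiates it, and then makes the appearance of matchings explicit through Leibniz and induction: the pair containing $1$ is $\{1,k\}$ and the rest is a matching of the complement. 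The two proofs are of comparable length; the paper's is slicker but hides the combinatorial structure inside the counting identity, whereas yours explains \emph{why} matchings arise (a single $D_{\ell_1}$ brings down a factor linear in $p$, which must be paired with exactly one subsequent $D_{\ell_k}$ to survive at $p=0$). Your approach also has the advantage of deriving, as a by-product, the full moment-generating function $Z(p)=(2\pi)^{d/2}(\det B)^{-1/2}e^{\frac12 B^{-1}(p,p)}$, which is useful later in the text. Both arguments are standard and correct; no gaps.
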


\begin{proof} If $N$ is odd, the statement is obvious, because the integrand is an odd
function. So consider the even case $N=2k$. 
Since both sides of the equation are symmetric 
polylinear forms in $\ell_1, \ldots ,\ell_N$, 
it suffices to prove the result when $\ell_1= \cdots =\ell_N=\ell$. 
Further, it is clear that the formula in question is 
stable under linear changes of variable, 
so we can choose a coordinate system  
in such a way that $B(x,x)=x_1^2+ \cdots +x_d^2$, and $\ell(x)=x_1$. 
Therefore, it is sufficient to assume that $d=1$ and $\ell(x)=x$. 
In this case, the theorem says that 
$$
\int_{-\infty}^{\infty}x^{2k}e^{-\frac{x^2}{2}}dx=(2\pi)^{\frac{1}{2}}(2k-1)!!,
$$
which is formula \eqref{gammaint}. 
\end{proof}

\begin{example} We have
$$
\int_{V}\ell_1(x)\ell_2(x)e^{-\frac{B(x,x)}{2}}dx=
\frac{(2\pi)^{\frac{d}{2}}}{\sqrt{\det B}}
B^{-1}(\ell_1,\ell_2),
$$
\scriptsize
$$
\int_{V}\ell_1(x)\ell_2(x)\ell_3(x)\ell_4(x)e^{-\frac{B(x,x)}{2}}dx=
$$
$$
\frac{(2\pi)^{\frac{d}{2}}}{\sqrt{\det B}}
(B^{-1}(\ell_1,\ell_2)B^{-1}(\ell_3,\ell_4)+
B^{-1}(\ell_1,\ell_3)B^{-1}(\ell_2,\ell_4)+
B^{-1}(\ell_1,\ell_4)B^{-1}(\ell_2,\ell_3)).
$$
\normalsize
\end{example} 

Wick's theorem shows that the problem of computing $a_i$ is 
of combinatorial nature. In fact, the central role in this computation 
is played by certain finite graphs, which are called  
{\it Feynman diagrams}.\index{Feynman diagram} They are the main subject of the remainder of this section. 

\subsection{Feynman diagrams and Feynman's theorem} \label{fdft}
We come back to the problem of computing the coefficients $a_i$.
Since each particular $a_i$ depends only on a finite number of
derivatives of $g$ at $c$, it suffices to assume that $g$ is a
polynomial, or, more specifically, a product of linear functions:
$g=\ell_1 \ldots \ell_N$, $\ell_i\in V^*$.  Thus, it suffices to be
able to compute the series expansion of the integral
\begin{equation}\label{expaa}
\langle\ell_1 \ldots \ell_N\rangle:= \hbar^{-\frac{d}{2}}e^{\frac{S(c)}{\hbar}}\int_{D}
\ell_1(x) \ldots\ell_N(x)e^{-\frac{S(x)}{\hbar}}dx.
\end{equation} 

Without loss of generality we may assume that 
$c=0$ and $S(c)=0$. 
Then the (asymptotic) Taylor expansion of $S$ 
at $c$ is 
$$
S(x)=\frac{B(x,x)}{2}-\sum_{i\ge 3}\frac{B_i(x,\ldots,x)}{i!},
$$ 
where $B_i:=d^if(0)$. Therefore, regarding 
the left hand side of \eqref{expaa} as a power series in $\hbar$
and making a change of variable $x\mapsto \hbar^{\frac{1}{2}}x$
(like in the last section), we get 
$$
\langle\ell_1 \ldots \ell_N\rangle=
\hbar^{\frac{N}{2}}\int_V \ell_1(x) \ldots \ell_N(x)e^{-\frac{B(x,x)}{2}+
\sum_{i\ge 3}\hbar^{\frac{i}{2}-1}\frac{B_i(x, \ldots ,x)}{i!}}dx.
$$
Note that this is only an identity of asymptotic expansions in $\hbar$, as we ignored the rapidly decaying error which comes from replacing the region $D$ by the
whole space. But it implies in particular that $\langle\ell_1 \ldots \ell_N\rangle=O(\hbar^{\lceil\frac{N}{2}\rceil})$ as $\hbar\to 0$ (as the expansion contains only integer powers of $\hbar$). 

The theorem below, due to Feynman, 
gives the value of this integral in terms of Feynman diagrams.  
This theorem is easy to prove but is central in quantum field theory,
and will be one of our main theorems. Before formulating 
Feynman's theorem, let us introduce some notation. 

Let $G_{\ge 3}(N)$ be the set of isomorphism classes of graphs
with $N$ \linebreak 1-valent ``external'' vertices, labeled by $1, \ldots ,N$, 
and a finite number of unlabeled ``internal'' vertices, of any 
valency $\ge 3$. Note that here and below graphs are allowed
to have multiple edges between two vertices and loops from a
vertex to itself
(see Fig.~\ref{fig:2}).

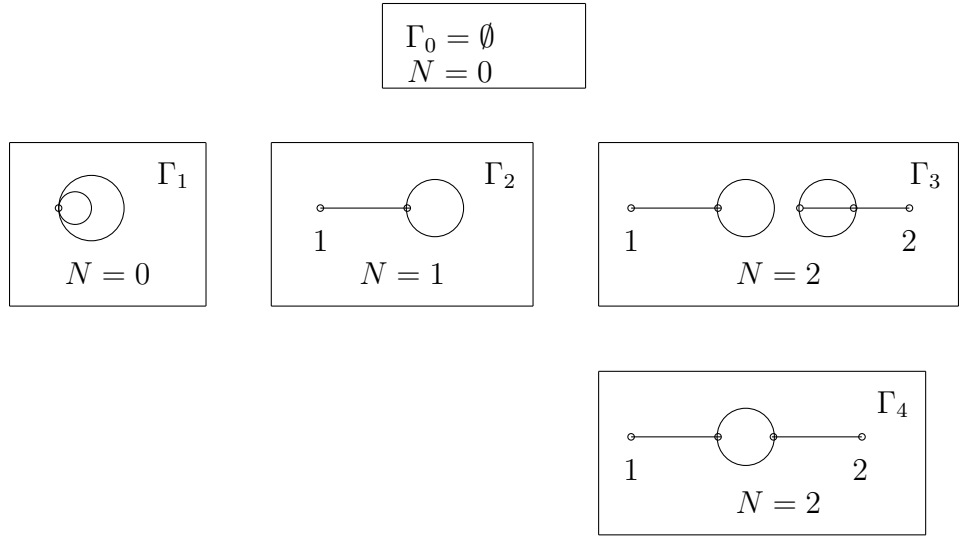
\begin{figure}[b]
  \begin{center}

      \setlength{\unitlength}{0.45cm}
      \begin{picture}(6,2.5)(0,0)

        \put(0,0){\line(1,0){6}}
        \put(0,2.5){\line(1,0){6}}
        \put(0,0){\line(0,1){2.5}}
        \put(6,0){\line(0,1){2.5}}
        \put(2,.5){\makebox(0,0)[c]{$N = 0$}}
        \put(2,1.5){\makebox(0,0)[c]{$\Gamma_0=\emptyset$}}
      \end{picture}
      
    \bigskip
    \bigskip

    \setlength{\unitlength}{0.435cm}
    \begin{picture}(29,12)(0,4)

      %%% make first figure
      %%% make box
      \put(0,11){\line(1,0){6}}
      \put(0,16){\line(1,0){6}}
      \put(0,11){\line(0,1){5}}
      \put(6,11){\line(0,1){5}}
      %%% make graph
      \put(1.5,14){\circle*{0.2}}
      \put(2,14){\circle{1}}
      \put(2.5,14){\circle{2}}
      \put(3,12){\makebox(0,0)[c]{$N = 0$}}
      \put(5,15){\makebox(0,0)[c]{$\Gamma_1$}}

      %%% make second figure
      %%% make box
      \put(8,11){\line(1,0){8}}
      \put(8,16){\line(1,0){8}}
      \put(8,11){\line(0,1){5}}
      \put(16,11){\line(0,1){5}}
      %%% make graph
      \put(9.5,14){\circle*{0.2}}
      \put(9.5,14){\line(1,0){2.75}}
      \put(12.15,14){\circle*{0.2}}
      \put(13,14){\circle{1.75}}
      \put(12,12){\makebox(0,0)[c]{$N = 1$}}
      \put(15,15){\makebox(0,0)[c]{$\Gamma_2$}}
      \put(9.5,13){\makebox(0,0)[c]{$1$}}

      %%% make third figure
      %%% make box
      \put(18,11){\line(1,0){11}}
      \put(18,16){\line(1,0){11}}
      \put(18,11){\line(0,1){5}}
      \put(29,11){\line(0,1){5}}
      %%% make graph
      \put(19,14){\circle*{0.2}}
      \put(19,14){\line(1,0){2.75}}
      \put(21.65,14){\circle*{0.2}}
      \put(22.5,14){\circle{1.75}}
      \put(23.5,12){\makebox(0,0)[c]{$N = 2$}}
      \put(28,15){\makebox(0,0)[c]{$\Gamma_3$}}
      \put(19,13){\makebox(0,0)[c]{$1$}}
      \put(25,14){\circle{1.75}}
      \put(24.25,14){\line(1,0){3.25}}
      \put(24.15,14){\circle*{0.2}}
      \put(25.8,14){\circle*{0.2}}
      \put(27.5,14){\circle*{0.2}}
      \put(27.5,13){\makebox(0,0)[c]{$2$}}

      %%% make fourth figure
      %%% make box
      \put(18,4){\line(1,0){10}}
      \put(18,9){\line(1,0){10}}
      \put(18,4){\line(0,1){5}}
      \put(28,4){\line(0,1){5}}
      %%% make graph
      \put(19,7){\circle*{0.2}}
      \put(19,7){\line(1,0){2.75}}
      \put(21.65,7){\circle*{0.2}}
      \put(22.5,7){\circle{1.75}}
      \put(19,6){\makebox(0,0)[c]{$1$}}
      \put(23.3,7){\line(1,0){2.75}}
      \put(26,6){\makebox(0,0)[c]{$2$}}
      \put(23.35,7){\circle*{0.2}}
      \put(26.05,7){\circle*{0.2}}
      \put(23.5,5){\makebox(0,0)[c]{$N = 2$}}
      \put(27,8){\makebox(0,0)[c]{$\Gamma_4$}}

    \end{picture}
    \caption{Examples of elements of $G_{\ge 3} (N)$.}
    \label{fig:2}
  \end{center}
\end{figure}

For each graph $\Gamma\in G_{\ge 3}(N)$, 
we define the {\it Feynman amplitude}\index{Feynman amplitude} of $\Gamma$ as follows. 

1. Put 
the covector $\ell_j$ at the $j$-th external vertex. 

2. Put the tensor $B_i$ at each $i$-valent internal vertex. 

3. Take the contraction of the tensors along edges 
of $\Gamma$, using the bilinear form $B^{-1}$. 
This will produce a number, called the {\it (Feynman) amplitude}\index{Feynman amplitude} of $\Gamma$ and 
denoted $F_\Gamma(\ell_1, \ldots ,\ell_N)$. 

\begin{remark} If $\Gamma$ is not connected, then 
$F_\Gamma$ is defined to be the product of numbers obtained
from the connected components. Also, the amplitude 
of the empty diagram is defined to be $1$. 
\end{remark} 

\begin{example} Let 
$$
B_3:=\sum_i b_i^{13}\otimes b_i^{23}\otimes b_i^{33},\ 
B_4:=\sum_j b_j^{14}\otimes b_j^{24}\otimes b_j^{34}\otimes b_j^{44},
$$ 
where $b_i^{jk}\in V^*$. Then for the graph $\Gamma_3$ in Fig.~\ref{fig:2}
the amplitude equals 
\scriptsize
$$
F_{\Gamma_3}(\ell_1,\ell_2)=
$$
$$
\sum_i B^{-1}(\ell_1,b_i^{13})B^{-1}(b_i^{23},b_i^{33})\cdot \sum_{i,j} B^{-1}(b_i^{13},b_j^{14})B^{-1}(b_i^{23},b_j^{24})B^{-1}(b_i^{33},b_j^{34})B^{-1}(b_j^{44},\ell_2).
$$
\normalsize
\end{example} 

\begin{theorem}\label{Feyn} (Feynman)
One has
\begin{equation}\label{Feynsum}
\langle\ell_1 \ldots \ell_N\rangle=
\frac{(2\pi)^{\frac{d}{2}}}{\sqrt{\det B}}
\sum_{\Gamma\in G_{\ge 3}(N)}\frac{\hbar^{b(\Gamma)}}{|{\rm Aut}(\Gamma)|}
F_\Gamma(\ell_1, \ldots ,\ell_N),
\end{equation} 
where $b(\Gamma)$
is the number of edges minus the number of internal vertices of $\Gamma$.
\end{theorem}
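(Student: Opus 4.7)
The plan is to start from the identity
\[
\langle\ell_1 \cdots \ell_N\rangle = \hbar^{N/2}\! \int_V \ell_1(x)\cdots\ell_N(x)\exp\!\Bigl(-\tfrac12 B(x,x) + \sum_{i\ge 3}\hbar^{i/2-1}\tfrac{B_i(x,\ldots,x)}{i!}\Bigr) dx
\]
derived just before the theorem, formally expand the perturbative part of the exponential as a Taylor series in $\hbar^{1/2}$, apply Wick's theorem (Theorem~\ref{wick}) to the resulting Gaussian integrals, and reorganize the combinatorics as a sum over Feynman diagrams. The validity of this manipulation as an identity of asymptotic expansions in $\hbar$ is ensured by the proof of the multidimensional steepest descent formula (Theorem~\ref{statphase3}) together with Theorem~\ref{contt2}: the asymptotic expansion at $\hbar=0$ is unchanged if one truncates the Taylor series of the integrand at any finite order and integrates term by term against $e^{-B(x,x)/2}\,dx$.

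Concretely, the expansion yields a sum over multi-indices $(n_3,n_4,\ldots)$ with finitely many nonzero entries, with combinatorial weight $\prod_{i\ge 3}\hbar^{n_i(i/2-1)}/(n_i!\,(i!)^{n_i})$ multiplying the polynomial $\ell_1(x)\cdots\ell_N(x)\prod_i B_i(x,\ldots,x)^{n_i}$. Wick's theorem evaluates the corresponding Gaussian integral as $\frac{(2\pi)^{d/2}}{\sqrt{\det B}}$ times a sum over all matchings of the $N+\sum_i i n_i$ legs — one for each $\ell_j$ and $i$ for each of the $n_i$ copies of $B_i$ — each matched pair contributing the $B^{-1}$-contraction of the tensors at its endpoints. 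Identifying external and internal legs with vertices, the combined datum ``$(n_i)$, a labeling of the copies of $B_i$, a labeling of their legs, and a matching $\sigma$'' is precisely a \emph{labeled} Feynman diagram whose contribution to the integral equals the Feynman amplitude $F_\Gamma(\ell_1,\ldots,\ell_N)$ of the underlying unlabeled $\Gamma\in G_{\ge 3}(N)$. The power of $\hbar$ comes out automatically: $N/2+\sum_i n_i(i/2-1) = E-V = b(\Gamma)$, where $V=\sum_i n_i$ is the number of internal vertices and $E=(N+\sum_i in_i)/2$ the number of edges.

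It then remains to pass from labeled to unlabeled diagrams. The group
\[
G_{(n_i)} := \prod_{i\ge 3}\bigl((S_i)^{n_i}\rtimes S_{n_i}\bigr),\qquad |G_{(n_i)}| = \prod_i (i!)^{n_i}\,n_i!,
\]
acts on the set of labelings by permuting internal vertices of each valence and by permuting the legs at each internal vertex; by symmetry of the tensors $B_i$ and $B^{-1}$ this action preserves the Feynman amplitude. Its orbits are exactly the isomorphism classes $\Gamma\in G_{\ge 3}(N)$ with the prescribed valence multiplicities $(n_i)$, and the stabilizer of a labeling of $\Gamma$ is, by definition, $\mathrm{Aut}(\Gamma)$. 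Orbit-stabilizer then produces $\prod_i(i!)^{n_i}\,n_i!/|\mathrm{Aut}(\Gamma)|$ labeled diagrams per $\Gamma$, which cancels the combinatorial denominator from the Taylor expansion and collapses the sum to $\sum_\Gamma \hbar^{b(\Gamma)} F_\Gamma/|\mathrm{Aut}(\Gamma)|$, as claimed. The main obstacle will be this last orbit-stabilizer identification: one must be careful about what constitutes a labeling, verify that $G_{(n_i)}$ indeed acts on labelings preserving the amplitude, and recognize that its stabilizer is $\mathrm{Aut}(\Gamma)$. Everything else — Wick's theorem, power counting, and the justification of the formal expansion — is routine given the results already established.
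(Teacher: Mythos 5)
Your proof is correct and uses essentially the same combinatorial argument as the paper: Wick's theorem, power counting, and the orbit--stabilizer identification for the group $\prod_i(S_i^{n_i}\rtimes S_{n_i})$ acting on matchings of flower legs. The one organizational difference is that the paper first reduces Theorem~\ref{Feyn} to the weighted Theorem~\ref{Feyn1} (where all vertices, including the 1-valent ones playing the role of $\ell_j$, are unlabeled) via the generating function $\langle e^\ell\rangle=\sum_N\langle\ell^N\rangle/N!$, whereas you carry the $N$ labeled external legs through the Wick/orbit-stabilizer argument directly; this is a perfectly valid alternative that avoids the $N!$ bookkeeping step, and the delicate point you flag at the end (that the stabilizer of a matching is exactly ${\rm Aut}(\Gamma)$) is treated as ``easy to see'' in the paper as well.
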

Here ${\rm Aut}(\Gamma)$ denotes the group of automorphisms of $\Gamma$, and by an automorphism of $\Gamma$ we mean a permutation of
vertices {\bf and} edges (possibly flipping the self-loops) which fixes each external vertex and preserves the graph structure, see Fig.~\ref{fig:3}. Thus there can exist nontrivial automorphisms 
which act trivially on the set of vertices and even ones also acting trivially on the set of edges. For example, there is an automorphism 
of $\Gamma_4$ that flips the upper and lower arc, and 
an automorphism of $\Gamma_2$ that flips the self-loop.  

\begin{figure}[htbp]
  \begin{center}
    \setlength{\unitlength}{0.5cm}

    \begin{picture}(14,5)(1,0)
      %%% make first figure
      %%% make box
      \put(1,0){\line(1,0){14}}
      \put(1,5){\line(1,0){14}}
      \put(1,0){\line(0,1){5}}
      \put(15,0){\line(0,1){5}}
      %%% make graph
      \put(2,2.5){\circle*{0.2}}
      \put(2,2.5){\line(1,0){3}}
      \put(1.75,1.5){$1$}
      %%% first oval
      \put(5,2.5){\circle*{0.2}}
      \put(8,2.5){\circle*{0.2}}
      \qbezier(5,2.5)(6.5,4)(8,2.5)
      \qbezier(5,2.5)(6.5,1)(8,2.5)
      \put(6.5,3.15){\vector(0,-1){1.2}}
      %%% second oval
      \put(8,2.5){\line(1,0){2.5}}
      \put(10,2.5){\line(1,0){2.5}}
      \put(11.25,1){\line(0,1){3}}
      \put(10,2.5){\circle*{0.2}}
      \put(11.25,2.5){\circle*{0.2}}
      \put(12.5,2.5){\circle*{0.2}}
      \put(11.25,4){\circle*{0.2}}
      \put(11.25,1){\circle*{0.2}}
      \qbezier(11.25,1)(8.73,2.5)(11.25,4)
      \qbezier(11.25,1)(13.77,2.5)(11.25,4)
      %%% last arrow
      \qbezier(12.75,1.25)(15.25,2.5)(12.75,3.75)
      \put(12.75,1.25){\vector(-4,-3){.2}}
    \end{picture}
    \caption{An automorphism of a graph}
    \label{fig:3}
  \end{center}
\end{figure}
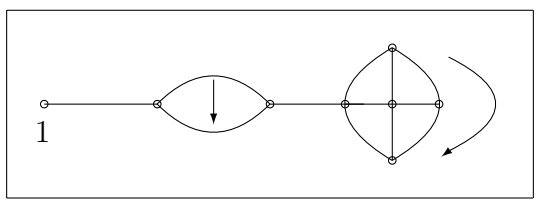

\begin{remark} 1. Note that this sum is infinite, but 
$\hbar$-adically convergent. 

2. Theorem \ref{Feyn} is a generalization of Wick's theorem: 
the latter is obtained if $S(x)=\frac{B(x,x)}{2}$. Indeed, in this case graphs
which give nonzero amplitudes do not have internal vertices, and thus reduce to graphs
corresponding to matchings $\sigma$. 
\end{remark}

Let us now make some comments about the terminology. In quantum 
field theory, the function $\langle\ell_1 \ldots \ell_N\rangle$ is called the 
{\it N-point correlation function},\index{N-point correlation function} and graphs
$\Gamma$ are called {\it Feynman diagrams}.\index{Feynman diagram}  The form $B^{-1}$
which is put on the edges is called the {\it propagator}.\index{propagator}.  The
cubic and higher terms $\frac{B_i}{i!}$ in the expansion of the function
$S$ are called {\it interaction terms},\index{interaction term} since such terms (in the
action functional) describe interaction between particles. The
situation in which $S$ is quadratic (i.e., there is no
interaction) is called a {\it free theory}\index{free theory}; i.e. for the free
theory the correlation functions are determined by Wick's
formula.

\begin{remark}\label{G3*} Sometimes it is convenient to consider 
normalized correlation functions 
$$
\langle\ell_1\ldots\ell_N\rangle_{\rm norm}:=\frac{\langle\ell_1\ldots\ell_N\rangle}{\langle\emptyset\rangle}
$$ 
where $\langle\emptyset\rangle$ denotes the integral without insertions. 
Feynman's theorem implies that they are given by the formula 
$$
\langle\ell_1 \ldots \ell_N\rangle_{\rm norm}=
\sum_{\Gamma\in G_{\ge 3}^*(N)}\frac{\hbar^{b(\Gamma)}}{|{\rm Aut}(\Gamma)|}
F_\Gamma(\ell_1, \ldots ,\ell_N),
$$
where $G^*_{\ge 3}(N)$ is the subset of all graphs 
in $G_{\ge 3}(N)$ which have no components without external 
vertices. 
\end{remark}  

\subsection{A weighted version of Feynman's theorem}
Before proving Theorem \ref{Feyn}, we would
like to slightly modify and generalize it. Namely, in quantum field theory 
it is often useful to consider an interacting theory as a 
deformation of a free theory. 
This means that $S(x)=\frac{B(x,x)}{2}+\widetilde S(x)$, 
where $\widetilde S(x)$ is a perturbation 
$$
\widetilde S(x):=-\sum_{i\ge 0}g_i\frac{B_i(x, \ldots ,x)}{i!}
$$ 
in which $g_r, r\ge 0$ are (formal) parameters. One benefit of these parameters is that they will 
allow us to group the amplitudes of Feynman diagrams in the sum \eqref{Feynsum} 
by the numbers of vertices of each valency. 
Namely, consider the {\it partition function}\index{partition function}
$$
Z=\hbar^{-\frac{d}{2}}\int_V e^{-\frac{S(x)}{\hbar}}dx
$$
as a series in $g_i$.
Let $\bold n=(n_0,n_1,n_2, \ldots )$ be a sequence 
of nonnegative integers, almost all zero. 
Let $G(\bold n)$ denote the set of isomorphism classes of 
graphs with $n_0$ 0-valent vertices, $n_1$ 1-valent vertices, $n_2$ 2-valent vertices, etc. 
(thus, now we are considering graphs without external vertices). For $\Gamma\in G(\bold n)$, 
let $F_\Gamma$ is the amplitude of $\Gamma$ defined as before. Thus 
$$
F_\Gamma=\prod_i{g_i^{n_i}}\cdot \Bbb F_\Gamma, 
$$
where $\Bbb F_\Gamma$ is the Feynman amplitude computed without the factors $g_j$. 

\begin{theorem}\label{Feyn1}
One has
$$
Z=
\frac{(2\pi)^{\frac{d}{2}}}{\sqrt{\det B}}
\sum_{\bold n}
\sum_{\Gamma\in G(\bold n)}\frac{\hbar^{b(\Gamma)}}{|{\rm Aut}(\Gamma)|}
F_\Gamma=
$$
$$
\frac{(2\pi)^{\frac{d}{2}}}{\sqrt{\det B}}
\sum_{\bold n}\prod_i({g_i\hbar^{\frac{i}{2}-1})^{n_i}}
\sum_{\Gamma\in G(\bold n)}\frac{\Bbb F_\Gamma}{|{\rm Aut}(\Gamma)|},
$$
where $b(\Gamma)=\sum_i n_i(\frac{i}{2}-1)$ 
is the number of edges minus the number of vertices of $\Gamma$. 
\end{theorem}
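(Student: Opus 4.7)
The plan is to reduce the formula to a sum over matchings (Wick's theorem) and then repackage the sum of matchings as a sum over isomorphism classes of graphs via an orbit--stabilizer argument. First, I rescale $x \mapsto \hbar^{1/2} x$ as in the proof of the steepest descent formula. This converts the Gaussian weight to the universal $e^{-B(x,x)/2}$ and turns the interaction $\widetilde S(\hbar^{1/2} x)/\hbar$ into $-\sum_i g_i \hbar^{i/2-1} B_i(x,\ldots,x)/i!$, picking up the Jacobian $\hbar^{d/2}$ that cancels against the $\hbar^{-d/2}$ in the definition of $Z$.

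Next, I expand $\exp(\sum_i g_i \hbar^{i/2-1} B_i/i!)$ as a multinomial series. The coefficient of $\prod_i (g_i \hbar^{i/2-1})^{n_i}$ is $\prod_i \tfrac{1}{n_i!\,(i!)^{n_i}}$ times the Gaussian integral
\[
\int_V \prod_i B_i(x,\ldots,x)^{n_i}\, e^{-B(x,x)/2}\, dx,
\]
which by Wick's theorem (Theorem~\ref{wick}) equals $\frac{(2\pi)^{d/2}}{\sqrt{\det B}}$ times a sum of $B^{-1}$-contractions over all matchings of the $\sum_i i\,n_i$ half-edges distributed among $n_i$ vertices of valency $i$ for each $i$.

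Now comes the main combinatorial step, and the place I expect the most care is needed: I claim each such matching produces a graph $\Gamma \in G(\bold n)$ whose amplitude is exactly $\Bbb F_\Gamma$, and the number of matchings yielding a fixed $\Gamma$ equals $\prod_i n_i!\,(i!)^{n_i}/|\mathrm{Aut}(\Gamma)|$. This is an orbit--stabilizer argument: the group $\prod_i \bigl(S_{n_i} \ltimes (S_i)^{n_i}\bigr)$ acts on the set of matchings by relabeling vertices of equal valency and reordering half-edges at each vertex; two matchings lie in the same orbit iff they yield the same isomorphism class of graph, and the stabilizer of any matching in its orbit is canonically $\mathrm{Aut}(\Gamma)$. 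The subtle point is that automorphisms may act trivially on vertices yet still nontrivially permute half-edges (e.g.\ flipping a self-loop), so one really must work with the half-edge formalism rather than just vertex permutations.

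Substituting this count into the expansion, the factors $n_i!$ and $(i!)^{n_i}$ cancel exactly against those in the multinomial coefficient, leaving $\frac{(2\pi)^{d/2}}{\sqrt{\det B}} \sum_{\bold n} \prod_i (g_i \hbar^{i/2-1})^{n_i} \sum_{\Gamma \in G(\bold n)} \Bbb F_\Gamma/|\mathrm{Aut}(\Gamma)|$. Finally, I check the $\hbar$-exponent: a graph in $G(\bold n)$ has $V = \sum_i n_i$ vertices and $E = \tfrac12 \sum_i i\,n_i$ edges, so $\sum_i n_i(i/2-1) = E - V = b(\Gamma)$, and $\prod_i g_i^{n_i} \Bbb F_\Gamma = F_\Gamma$ by definition. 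This yields both forms of the claimed identity.
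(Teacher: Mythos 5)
Your proposal is correct and follows essentially the same route as the paper: rescale to pull out the $\hbar$-dependence, expand the interaction exponential multinomially, apply Wick's theorem to get a sum over matchings of half-edges, and then use orbit–stabilizer for the action of $\prod_i(S_{n_i}\ltimes S_i^{n_i})$ to convert the sum over matchings into a sum over graph isomorphism classes weighted by $1/|\mathrm{Aut}(\Gamma)|$. Your explicit observation that automorphisms must be understood in the half-edge formalism (so that self-loop flips are not lost) is the same caveat the paper addresses by defining $\mathrm{Aut}(\Gamma)$ as permutations of vertices and edges that may flip self-loops.
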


Note that we may view $Z$ as an element of the algebra
$$
\Bbb C[g_0\hbar^{-\frac{3}{2}},g_1\hbar^{-1},g_2\hbar^{-\frac{1}{2}}; g_j,j\ge 3][[\hbar^{\frac{1}{2}}]],
$$ 
i.e., it can be specialized to numerical values of 
$$
g_0\hbar^{-\frac{3}{2}},g_1\hbar^{-1},g_2\hbar^{-\frac{1}{2}},g_3,g_4,...,
$$ 
giving an element of $\Bbb C[[\hbar^{\frac{1}{2}}]]$. 
Also $Z$ can be specialized to $\hbar=1$, giving an element 
of $\Bbb C[[g_j, j\ge 0]]$, and the theorem is, in fact, equivalent to this specialization. 
Still we choose to keep $\hbar$ to be able to take the classical limit $\hbar\to 0$. 

We will prove Theorem \ref{Feyn1} in the next subsection. 
Meanwhile, let us show that Theorem \ref{Feyn} is in fact a special case 
of Theorem \ref{Feyn1}.
Indeed, because of symmetry of the correlation functions 
with respect to $\ell_1, \ldots ,\ell_N$, it is sufficient to consider 
the case $\ell_1= \cdots =\ell_N=\ell$. 
In this case, denote the correlation 
function $\langle\ell^N\rangle$ (expectation value of $\ell^N$). Clearly, to compute 
$\langle\ell^N\rangle$ for all $N$, it is sufficient to compute the generating function 
$$
\langle e^{\ell}\rangle=\hbar^{-\frac{d}{2}}\int_Ve^{\ell(x)-\frac{S(x)}{\hbar}}dx:=\sum_{N=0}^\infty \frac{\langle\ell^N\rangle}{N!},
$$
which up to scaling and multiplication of $\ell$ by $i$ is the Fourier transform of the Feynman density $e^{-\frac{S(x)}{\hbar}}dx$. But this 
expectation value is exactly 
the one given by Theorem \ref{Feyn1} for $g_i=1$, $i\ge 3$, 
$g_0=g_2=0$, $g_1=\hbar $, $B_1=\ell$, $B_0=0$, $B_2=0$. Thus, Theorem \ref{Feyn1} 
implies Theorem \ref{Feyn} (the factor $N!$ in the denominator 
is accounted for by the fact that in Theorem \ref{Feyn1} 
we consider unlabeled, rather than labeled, 
1-valent vertices). 

\subsection{Proof of Feynman's theorem} 
Now we will prove Theorem \ref{Feyn1}.
Let us make a change of variable $y=\hbar^{-\frac{1}{2}}x$. 
Expanding the exponential in a Taylor series, we obtain 
$$
Z=\sum_{\bold n}Z_\bold n, 
$$
where
$$
Z_\bold n=\int_V e^{-\frac{B(y,y)}{2}}
\prod_i \frac{g_i^{n_i}}{i!^{n_i}n_i!}
(\hbar^{\frac{i}{2}-1}B_i(y, \ldots ,y))^{n_i}dy.
$$
Writing $B_i$ as a sum of products of linear functions, and 
using Wick's theorem, we find that the value of the integral for each $\bold n$
can be expressed combinatorially as follows. 

1. Attach 
to each factor $B_i$ a ``flower'' --- a vertex with $i$ outgoing
edges (see Fig.~\ref{fig:4}).

\begin{figure}[htbp]
  \begin{center}
    \setlength{\unitlength}{0.5cm}
    \begin{picture}(12,13)

      %%% make first figure
      %%% make box
      \put(0,10){\line(1,0){12}}
      \put(0,13){\line(1,0){12}}
      \put(0,10){\line(0,1){3}}
      \put(12,10){\line(0,1){3}}
      %%% make graph
      \put(1,11.5){\circle*{0.2}}
      \put(3,11.5){\makebox(0,0)[l]{$0$-valent flower}}

      %%% make second figure
      %%% make box
      \put(0,5){\line(1,0){12}}
      \put(0,8){\line(1,0){12}}
      \put(0,5){\line(0,1){3}}
      \put(12,5){\line(0,1){3}}
      %%% make graph
      \put(1,6.5){\circle*{0.2}}
      \put(1,6.5){\line(1,0){3}}
      \put(5.5,6.5){\makebox(0,0)[l]{$1$-valent flower}}

      %%% make third figure
      %%% make box
      \put(0,0){\line(1,0){12}}
      \put(0,3){\line(1,0){12}}
      \put(0,0){\line(0,1){3}}
      \put(12,0){\line(0,1){3}}
      %%% make graph
      \put(1,1.5){\line(1,0){3}}
      \put(4,1.5){\circle*{0.2}}
      \put(4,1.5){\line(1,1){1.2}}
      \put(4,1.5){\line(1,-1){1.2}}
      \put(6,1.5){\makebox(0,0)[l]{$3$-valent flower}}

    \end{picture}
    \caption{ }
    \label{fig:4}
  \end{center}
\end{figure}

2. Consider the set $T_{\bold n}$ of ends of these outgoing edges (see
Fig.~\ref{fig:5}), and for any matching $\sigma$ of this set,
consider the corresponding contraction of the tensors $B_i$ using
the form $B^{-1}$.  This will produce a scalar $\Bbb F(\sigma)$.

\begin{figure}[btp]
  \begin{center}

    \setlength{\unitlength}{0.5cm}
    \begin{picture}(9,13.2)(0,.4)

      %%% make first figure
      \put(0,11){\line(1,0){2.5}}
      \put(0,11){\line(1,1){2.5}}
      \put(0,11){\line(1,-1){2.5}}
      \put(2.7,8.4){\circle{0.5}}
      \put(2.7,11){\circle{0.5}}
      \put(2.7,13.6){\circle{0.5}}      

      %%% make second figure
      \put(0,3){\line(1,0){2.5}}
      \put(0,3){\line(1,1){2.5}}
      \put(0,3){\line(1,-1){2.5}}
      \put(2.7,0.4){\circle{0.5}}
      \put(2.7,3){\circle{0.5}}
      \put(2.7,5.6){\circle{0.5}}      

      %%% make third figure
%      \put(9,11){\line(-1,0){2.5}}
%      \put(9,11){\line(-1,1){2.5}}
%      \put(9,11){\line(-1,-1){2.5}}
%      \put(6.3,8.4){\circle{0.5}}
%      \put(6.3,11){\circle{0.5}}
%      \put(6.3,13.6){\circle{0.5}}      
      \put(9,11){\line(-1,1){2.5}}
      \put(9,11){\line(-3,1){2.5}}
      \put(9,11){\line(-3,-1){2.5}}
      \put(9,11){\line(-1,-1){2.5}}
      \put(6.3,8.4){\circle{0.5}}
      \put(6.3,10.13){\circle{0.5}}
      \put(6.3,11.86){\circle{0.5}}
      \put(6.3,13.6){\circle{0.5}}      

    \end{picture}
    
    \caption{The set $T_\bold n$ for $\bold n = (0, 0, 0, 2, 1, 0, 0,
      \ldots)$ (the set of white circles)} 
    \label{fig:5}
    
  \end{center}
\end{figure}

3. The integral $Z_{\bold n}$ is given by 
\begin{equation}\label{sumpair}
Z_{\bold n}=\frac{(2\pi)^{\frac{d}{2}}}{\sqrt{\det B}}\prod_i
\frac{g_i^{n_i}}{i!^{n_i}n_i!}\hbar^{n_i(\frac{i}{2}-1)} \sum_{\sigma\in \Pi(T_{\bold n}) }\Bbb F(\sigma).
\end{equation}

Now, recall that matchings on a set can be visualized by drawing
its elements as points and connecting them with edges. If we do
this with the set $T_{\bold n}$, all ends of outgoing edges will become
connected with each other in some way, i.e. we will obtain a
certain (unoriented) graph $\Gamma=\Gamma_\sigma$ (see Fig.~\ref{fig:6}).
Moreover, it is easy to see that the scalar $\Bbb  F(\sigma)$ is
nothing but the amplitude $\Bbb F_\Gamma$.

\begin{figure}[htbp]
  \begin{center}

    \setlength{\unitlength}{0.5cm}
    \begin{picture}(21,15)(0,.4)

      \put(4.5,15){\makebox(0,0)[c]{$\sigma$:}}
      %*\put(17.5,15){\makebox(0,0)[c]{$r$:}}

      %%% make first figure
      \put(0,11){\line(1,0){2.5}}
      \put(0,11){\line(1,1){2.5}}
      \put(0,11){\line(1,-1){2.5}}
      \put(2.7,8.4){\circle{0.5}}
      \put(2.7,11){\circle{0.5}}
      \put(2.7,13.6){\circle{0.5}}      

      %%% make second figure
      \put(0,3){\line(1,0){2.5}}
      \put(0,3){\line(1,1){2.5}}
      \put(0,3){\line(1,-1){2.5}}
      \put(2.7,0.4){\circle{0.5}}
      \put(2.7,3){\circle{0.5}}
      \put(2.7,5.6){\circle{0.5}}      

      %%% make third figure
%      \put(9,11){\line(-1,0){2.5}}
%      \put(9,11){\line(-1,1){2.5}}
%      \put(9,11){\line(-1,-1){2.5}}
%      \put(6.3,8.4){\circle{0.5}}
%      \put(6.3,11){\circle{0.5}}
%      \put(6.3,13.6){\circle{0.5}}      
      \put(9,11){\line(-1,1){2.5}}
      \put(9,11){\line(-3,1){2.5}}
      \put(9,11){\line(-3,-1){2.5}}
      \put(9,11){\line(-1,-1){2.5}}
      \put(6.3,8.4){\circle{0.5}}
      \put(6.3,10.13){\circle{0.5}}
      \put(6.3,11.86){\circle{0.5}}
      \put(6.3,13.6){\circle{0.5}}      

      %%% make lines and curves
%      \put(2.7,13.6){\circle{0.5}}      
%      \put(6.3,11.86){\circle{0.5}}
      \put(2.92,13.6){\line(2,-1){3.23}}
      \put(2.92,11){\line(4,3){3.23}}
%      \put(2.7,8.4){\circle{0.5}}
      \qbezier(2.92,8.35)(9,4)(2.92,0.4)
%      \put(2.7,3){\circle{0.5}}
%      \put(2.7,5.6){\circle{0.5}}      
      \qbezier(2.92,5.6)(5,4.3)(2.92,3)
      \qbezier(6.06,8.4)(4,9.35)(6.06,10.18)

      %%% make fourth figure
      %*\put(14,11){\line(1,0){7}}
      %*\qbezier(14,11)(17.5,14)(21,11)
%      \put(0,11){\line(1,1){2.5}}
%      \put(0,11){\line(1,-1){2.5}}
      %*\put(14,11){\circle*{0.2}}
      %*\put(21,11){\circle*{0.2}}
%      \put(2.7,13.6){\circle{0.5}}      
      %*\spline(14,11)(15.15,10.5)(16.2,10)(17,9)(16.5,8)(15.5,8.1)(15,8.7)(15,9)
      %*\put(15,9){\circle*{0.2}}
      %*\spline(15,9)(15.5,9.7)(16.2,9.75)(15.7,9.1)(15,9)
%%%      \put(15.8,8.7){\ellipse{1}{.5}}
      %*\spline(21,11)(19.25,10.7)(18.5,9.25)(20.1,9.4)(21,11)

    \end{picture}
    
    \caption{A matching $\sigma$ of $T_{\bold n}$ and the corresponding
      graph $\Gamma$.} 
    \label{fig:6}
    
  \end{center}
\end{figure}
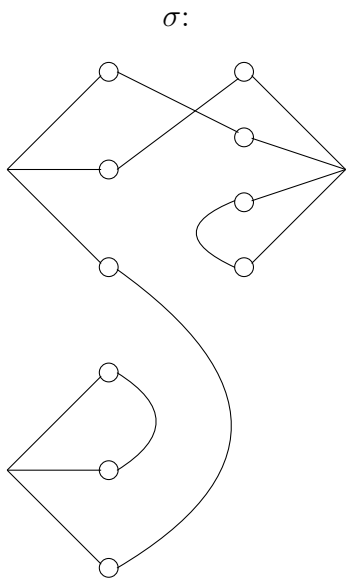

It is clear that any graph $\Gamma$ with $n_i$ $i$-valent vertices
for each $i$ can be obtained in this way. However, the same graph
can be obtained in many different ways, so if we want to collect identical terms in
the sum over $\sigma$, and turn it into a sum over $\Gamma$, we
must find the number of $\sigma$ which yield a given $\Gamma$.

For this purpose, we will consider the group $\Bbb G_{\bold n}$ of permutations of $T_{\bold n}$, 
which preserves ``flowers'' (i.e. endpoints of any two edges outgoing
from the same flower end up again in the same flower). 
This group involves 

1) permutations of ``flowers'' with a given valency;

2) permutation of the $i$ edges inside each $i$-valent ``flower''. 

More precisely, the group $\Bbb G_{\bold n}$ is the semidirect product of symmetric groups
$$
\Bbb G_{\bold n}=\prod_i (S_{n_i}\ltimes S_i^{n_i}).
$$ 
Note that $|\Bbb G_{\bold n}|=\prod_i i!^{n_i}n_i!$, which is the product of the numbers 
in the denominator of formula (\ref{sumpair}). 

The group $\Bbb G_{\bold n}$ acts   
on the set $\Pi(T_{\bold n})$ of all matchings $\sigma$ of $T_{\bold n}$. Moreover, it acts 
transitively on the set 
$\Pi_\Gamma(T_{\bold n})$ of matchings of $T_{\bold n}$ which yield a given graph $\Gamma$. 
Furthermore, it is easy to see that the stabilizer of a given matching is 
${\rm Aut}(\Gamma)$. Thus, the number of matchings giving $\Gamma$ is 
$$
N_\Gamma=\frac{\prod_i i!^{n_i}n_i!}{|{\rm Aut}(\Gamma)|}.
$$
Hence, 
$$
\sum_{\sigma\in \Pi(T_{\bold n})} \Bbb F(\sigma)=\sum_\Gamma \frac{\prod_i i!^{n_i}n_i!}{|{\rm Aut}(\Gamma)|}
\Bbb F_\Gamma.
$$
Finally, note that the exponent of $\hbar$ in equation (\ref{sumpair})
is $\sum_in_i(\frac{i}{2}-1)$, which is the number of edges of $\Gamma$ minus the number 
of vertices, i.e. $b(\Gamma)$. 
Substituting this into (\ref{sumpair}), we get the result.

\begin{example}\label{treeexa} Let $d=1$, $V=\Bbb R$, $g_i=g$, $B_i=z^i$ for all $i\ge 0$ (where $z$ is a formal variable), $\hbar=1$. 
Then we find the asymptotic expansion
$$
\frac{1}{\sqrt{2\pi}}\int_{-\infty}^\infty e^{-\frac{x^2}{2}+ge^{zx}}=
\sum_{n\ge 0}g^n\sum_{\Gamma\in G(n,k)}\frac{z^{2k}}{|{\rm Aut}(\Gamma)|},
$$
where $G(n,k)$ is the set of isomorphism classes of graphs with $n$ vertices
and $k$ edges.\footnote{This integral converges for $g<0$, $z\in \Bbb R$, but this is not important for us here, since we consider the integral formally.}
Expanding the left hand side, we get 
$$
\sum_k\sum_{\Gamma\in G(n,k)}\frac{z^{2k}}{|{\rm Aut}(\Gamma)|}=\frac{e^{\frac{z^2n^2}{2}}}{n!},
$$
and hence
$$
\sum_{\Gamma\in G(n,k)}\frac{1}{|{\rm Aut}(\Gamma)|}=\frac{n^{2k}}{2^kk!n!}.
$$
\end{example} 

\begin{exercise} Check this by direct combinatorics.
\end{exercise} 

\subsection{Sum over connected diagrams} 
Now we will show that the logarithm of the partition function $Z$ 
is also given by summation over diagrams, but with only connected diagrams 
taken into account. This significantly simplifies the analysis of $Z$ 
in the first few orders of perturbation theory, since the number of connected 
diagrams with a given number of vertices and edges 
is significantly smaller than the number of all diagrams. 
\begin{theorem}\label{Feyn2} Let $Z_0=\frac{(2\pi)^{\frac{d}{2}}}{\sqrt{\det B}}$. Then one has 
$$
\log\frac{Z}{Z_0}=\sum_{\bold n}
\prod_i(g_i\hbar^{\frac{i}{2}-1})^{n_i}
\sum_{\Gamma\in G_c(\bold n)}\frac{\Bbb F_\Gamma}{|{\rm Aut}(\Gamma)|}
$$
where $G_c(\bold n)$ is the set of connected graphs in $G(\bold n)$.\footnote{We define a connected graph as a graph with exactly one connected component. So the empty graph, which has zero connected components, is not considered connected.} 
\end{theorem}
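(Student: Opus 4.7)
The plan is to deduce this from Theorem \ref{Feyn1} by the classical exponential formula for species: the generating function summed over all structures equals the exponential of the generating function summed over connected structures, provided the statistic (here $\hbar^{b(\Gamma)}\prod_i g_i^{n_i}\Bbb F_\Gamma/|\mathrm{Aut}(\Gamma)|$) is \emph{multiplicative} under disjoint union in an appropriate sense.

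First, I would verify the three multiplicativity properties. (1) The Feynman amplitude factorizes: $\Bbb F_{\Gamma_1\sqcup\Gamma_2}=\Bbb F_{\Gamma_1}\Bbb F_{\Gamma_2}$, by the convention that the amplitude of a disconnected graph is the product of the amplitudes of its components (stated right after the definition of $F_\Gamma$). (2) The quantity $b(\Gamma)=\#\text{edges}-\#\text{internal vertices}$ is additive under disjoint union, as is the vertex-degree vector $\bold n(\Gamma)$, so $\hbar^{b(\Gamma)}\prod_i g_i^{n_i(\Gamma)}$ is multiplicative. (3) The key combinatorial point: if $\Gamma$ decomposes into connected components, grouping identical ones as $\Gamma=\bigsqcup_C C^{\sqcup k_C(\Gamma)}$ where $C$ ranges over isomorphism classes of connected graphs in $G_c(\bold n')$ (for various $\bold n'$), then
$$
|\mathrm{Aut}(\Gamma)| \;=\; \prod_C |\mathrm{Aut}(C)|^{k_C}\,k_C!,
$$
because an automorphism of $\Gamma$ permutes the $k_C$ copies of each $C$ and acts by $\mathrm{Aut}(C)$ within each copy (a wreath product decomposition).

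Combining these three facts,
$$
\frac{\hbar^{b(\Gamma)}\prod_i g_i^{n_i(\Gamma)}\Bbb F_\Gamma}{|\mathrm{Aut}(\Gamma)|}
\;=\;\prod_C \frac{1}{k_C!}\!\left(\frac{\hbar^{b(C)}\prod_i g_i^{n_i(C)}\Bbb F_C}{|\mathrm{Aut}(C)|}\right)^{\!k_C}.
$$
Summing over all graphs is the same as summing independently over all tuples $(k_C)_{C\in\bigcup_{\bold n}G_c(\bold n)}$ of nonnegative integers with finitely many nonzero entries. Therefore, by Theorem \ref{Feyn1},
$$
\frac{Z}{Z_0}\;=\;\sum_{(k_C)}\prod_C \frac{1}{k_C!}\!\left(\frac{\hbar^{b(C)}\prod_i g_i^{n_i(C)}\Bbb F_C}{|\mathrm{Aut}(C)|}\right)^{\!k_C}
\;=\;\prod_C\exp\!\left(\frac{\hbar^{b(C)}\prod_i g_i^{n_i(C)}\Bbb F_C}{|\mathrm{Aut}(C)|}\right),
$$
and taking the logarithm yields the claimed formula. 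The interchange of sum and product, and the convergence of the exponential series, are justified because we are working in the $(g_i)$-adic (equivalently $\hbar$-adic when $g_i$ are given appropriate weights) topology on the formal power series ring, in which only finitely many $C$ contribute to any given order.

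The main obstacle is verifying the wreath-product formula for $\mathrm{Aut}(\Gamma)$ carefully, in particular making sure the definition of graph automorphism used in Theorem \ref{Feyn} (permutations of both vertices and edges, fixing external vertices, with self-loops allowed to flip) is compatible with the component decomposition. Since there are no external vertices here (we are computing $Z$, not a correlator), this reduces to the standard wreath-product formula for automorphisms of a disjoint union of unlabeled graphs, which is straightforward.
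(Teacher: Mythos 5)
Your proof is correct and is essentially the paper's argument run in the opposite direction: the paper exponentiates the claimed formula and matches the result against Theorem \ref{Feyn1}, while you start from Theorem \ref{Feyn1}, group terms by connected components via the wreath-product automorphism formula, and take the logarithm. The key ingredients (unique decomposition into connected components, multiplicativity of $\Bbb F_\Gamma$ and of $\hbar^{b(\Gamma)}\prod_i g_i^{n_i}$, and $|{\rm Aut}(\Gamma_1^{k_1}\cdots\Gamma_l^{k_l})|=\prod_j |{\rm Aut}(\Gamma_j)|^{k_j}k_j!$) are identical.
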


\begin{proof}
  For any graphs $\Gamma_1$, $\Gamma_2$, let $\Gamma_1\Gamma_2$
  stand for the disjoint union of $\Gamma_1$ and $\Gamma_2$, and
  for any graph $\Gamma$ let $\Gamma^n$ denote the disjoint union
  of $n$ copies of $\Gamma$.  Then every graph can be uniquely
%%%  written as $\Gamma_1^{k_1}...\Gamma_l^{k_l}$, where $\Gamma_j$
  written as $\Gamma_1^{k_1} \ldots \Gamma_l^{k_l}$, where $\Gamma_j$
  are connected non-isomorphic graphs.  Moreover, it is clear
  that $\Bbb F_{\Gamma_1\Gamma_2}=\Bbb F_{\Gamma_1}\Bbb F_{\Gamma_2}$,
  $b(\Gamma_1\Gamma_2)=b(\Gamma_1)+b(\Gamma_2)$, and 
  $$
  |{\rm
    Aut}(\Gamma_1^{k_1} \ldots \Gamma_l^{k_l})|= \prod_j|{\rm
    Aut}(\Gamma_j)|^{k_j}k_j!.
    $$ 
    Thus, exponentiating the
  equation of Theorem \ref{Feyn2}, and using the above facts
  together with the Taylor series for the function $e^x$, we
  arrive at Theorem \ref{Feyn1}.  As Theorem \ref{Feyn1} has
  been proved, so is Theorem \ref{Feyn2}
\end{proof}

\subsection{The loop expansion} 
Note that since summation in Theorem
\ref{Feyn2} is over connected Feynman diagrams, the number
$b(\Gamma)$ is the number of loops in $\Gamma$ minus 1.  In
particular, the lowest coefficient in $\hbar$ is that of
$\hbar^{-1}$, and it is the sum over all trees; the next
coefficient is to $\hbar^0$, and it is the sum over all diagrams
with one loop (cycle); the next coefficient to $\hbar$ is the sum
over two-loop diagrams, and so on. Therefore, physicists refer to
the expansion of Theorem \ref{Feyn2} as the {\it loop expansion}.\index{loop expansion}

Let us study the two most singular terms in this expansion 
(with respect to $\hbar$), i.e. the terms given by the sum over trees
and 1-loop graphs. 

Let $x_0$ be the critical point of the function $S$. 
It exists and is unique, since $g_i$ are assumed to be formal parameters. 
Let $G^{(j)}(\bold n)$ denote the set of classes of graphs in $G_c(\bold n)$ 
with $j$ loops. 
Let 
$$
\left(\log\frac{Z}{Z_0}\right)_j:=
\sum_{\bold n}\prod_ig_i^{n_i}
\sum_{\Gamma\in G^{(j)}(\bold n)}\frac{\Bbb F_\Gamma}{|{\rm Aut}(\Gamma)|},
$$
so that 
$$
\log\frac{Z}{Z_0}=\sum_{j=0}^\infty \left(\log\frac{Z}{Z_0}\right)_j \hbar^{j-1}.
$$

\begin{theorem}\label{trees1l}
\begin{equation}\label{treeeq}
\left(\log\frac{Z}{Z_0}\right)_0=-S(x_0),
\end{equation}
and 
\begin{equation}\label{1loopeq}
\left(\log\frac{Z}{Z_0}\right)_1=\frac{1}{2}\log \frac{\det B}{\det S''(x_0)}.
\end{equation}
\end{theorem}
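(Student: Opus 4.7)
My plan is to prove both formulas simultaneously by shifting the integration variable to the critical point of $S$ and then applying Theorem \ref{Feyn2} to the shifted integral. Since the $g_i$ are formal parameters and $S''(0) = B$ is non-degenerate at $g_i = 0$, the formal implicit function theorem gives a unique critical point $x_0 = x_0(g)$ as a formal power series in the $g_i$ with $x_0 = 0$ at $g_i = 0$. Using $S'(x_0) = 0$, Taylor expansion yields
\[
S(x_0 + y) - S(x_0) = \tfrac{1}{2} S''(x_0)(y,y) + \sum_{k \geq 3} \tfrac{1}{k!} S^{(k)}(x_0)(y,\ldots,y),
\]
so after the substitution $x = x_0 + y$ we obtain $Z = e^{-S(x_0)/\hbar}\, \tilde Z$, where
\[
\tilde Z := \hbar^{-d/2} \int_V e^{-[S(x_0+y)-S(x_0)]/\hbar}\, dy
\]
is precisely a partition function of the type covered by Theorem \ref{Feyn2}, with new quadratic form $\tilde B := S''(x_0)$, new interaction tensors $\tilde B_k := S^{(k)}(x_0)$ for $k \geq 3$, and coupling constants $\tilde g_k = 1$.

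Applying Theorem \ref{Feyn2} to $\tilde Z$ with $\tilde Z_0 = (2\pi)^{d/2}/\sqrt{\det S''(x_0)}$ gives
\[
\log \frac{\tilde Z}{\tilde Z_0} = \sum_{\bold n} \prod_i \hbar^{n_i(i/2-1)} \sum_{\Gamma \in \tilde G_c(\bold n)} \frac{\tilde{\Bbb F}_\Gamma}{|{\rm Aut}(\Gamma)|},
\]
where now every vertex of $\Gamma$ has valency $\geq 3$. Combining with $\log(\tilde Z_0/Z_0) = \tfrac{1}{2}\log(\det B/\det S''(x_0))$ yields
\[
\log \frac{Z}{Z_0} = -\frac{S(x_0)}{\hbar} + \tfrac{1}{2} \log \frac{\det B}{\det S''(x_0)} + \log \frac{\tilde Z}{\tilde Z_0}.
\]
It remains to show that $\log(\tilde Z/\tilde Z_0) = O(\hbar)$, i.e.\ that every connected graph $\Gamma$ with all valencies $\geq 3$ satisfies $b(\Gamma) \geq 1$. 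If $V \geq 2$ and $E$ denote the numbers of vertices and edges, then $2E = \sum_i i\, n_i \geq 3V$, so $b(\Gamma) = E - V \geq V/2 \geq 1$; and for $V = 1$ the graph is a bouquet of $E \geq 2$ self-loops, again giving $b(\Gamma) \geq 1$. Reading off the coefficients of $\hbar^{-1}$ and $\hbar^0$ yields \eqref{treeeq} and \eqref{1loopeq}.

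The main obstacle, and also the conceptual content, is the valency-versus-loop-count estimate: the whole argument hinges on the fact that after shifting to $x_0$ the action has no linear term, which eliminates all 1-valent vertices from the expansion and forces connected graphs of the shifted theory to have at least two loops. The other potential pitfall is the justification of the change of variables as an identity of formal power series in the $g_i$; this is routine provided one checks that $x_0$ and the coefficients $S^{(k)}(x_0)$ are well-defined as formal power series, which follows from the implicit function theorem applied order-by-order in $g_i$.
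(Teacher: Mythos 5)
Your proof is correct and takes a genuinely different route from the paper. The paper's proof observes that the claim is a purely combinatorial identity, then verifies it by specializing the $g_i$ to small real numbers and $\hbar>0$ and applying the multidimensional steepest descent formula (Theorem \ref{statphase3}) directly to $Z$, which gives the leading two terms of $\log(Z/Z_0)$ at once. You instead shift the integration variable to $x_0$, apply Theorem \ref{Feyn2} to the shifted action (which has interaction vertices only of valency $\ge 3$ because the linear term vanishes by criticality and the quadratic term is absorbed into the new propagator $S''(x_0)$), and then use the valency-versus-loop estimate to conclude that the remaining connected diagrams all have $b(\Gamma)\ge 1$. Your approach makes the combinatorial mechanism more transparent: it shows exactly \emph{why} the tree and one-loop sums of the original theory collapse to $-S(x_0)$ and $\tfrac12\log(\det B/\det S''(x_0))$, namely because the shifted theory has no trees and no one-loop graphs at all. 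The one step you pass over quickly is the validity of the affine change of variables $x=x_0+y$ for the \emph{formal} integral (an identity of Feynman diagram expansions, not of Lebesgue integrals). This is true but not automatic; the cleanest justification is precisely the maneuver the paper uses — specialize $g_i$ and $\hbar$ to small positive reals, where the change of variables is literal calculus, and appeal to uniqueness of asymptotic expansions. So both proofs ultimately lean on the same analytic input, but yours isolates and exploits the graph-theoretic content more explicitly.
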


\begin{proof}
First note that the statement is purely combinatorial. 
This means, in particular, that it is sufficient to check 
that the statement yields the correct asymptotic expansion 
of the right hand sides of equations (\ref{treeeq}),(\ref{1loopeq}) 
in the case when $S$ is a polynomial with real coefficients
of the form $\frac{B(x,x)}{2}-\sum_{i=0}^Ng_i\frac{B_i(x, \ldots,x)}{i!}$ and $\hbar>0$. To do so, let $Z:=\hbar^{-\frac{d}{2}}\int_{\bold B}e^{-\frac{S(x)}{\hbar}}dx$, 
where $\bold B$ is a ball centered at $0$.
For sufficiently small $g_i$, the function $S$ has a unique 
global minimum point $x_0$ in $\bold B$, which is non-degenerate. 
Thus, 
by the steepest descent formula, we have 
$$
\frac{Z}{Z_0}=e^{-\frac{S(x_0)}{\hbar}}I(\hbar), 
$$
where $I(\hbar)\sim\sqrt{\frac{\det B}{\det
    S''(x_0)}}(1+a_1\hbar+a_2\hbar^2+ \cdots )$ (asymptotically).
Thus,
$$
\log\frac{Z}{Z_0}=-S(x_0)\hbar^{-1}+\frac{1}{2}\log
\frac{\det B}{\det S''(x_0)}+O(\hbar).
$$
This implies the result.
\end{proof}
 
Physicists call the expression 
$(\log\frac{Z}{Z_0})_0$ the {\it classical (or tree) approximation}\index{classical approximation}\index{tree approximation}
to the quantum mechanical quantity $\hbar\log\frac{Z}{Z_0}$, and the sum 
$(\log\frac{Z}{Z_0})_0+\hbar(\log\frac{Z}{Z_0})_1$ the {\it one-loop approximation}.\index{one-loop approximation}
Similarly one defines higher loop approximations. Note that the
classical approximation 
is obtained by finding the critical point and value of the classical action $S(x)$, 
which in the classical mechanics and field theory situation corresponds to solving the classical 
equations of motion.

\subsection{Nonlinear equations and trees}\label{treepar}
As we have noted, Theorem \ref{trees1l} 
does not involve integrals and is purely combinatorial. 
Therefore, there should exist 
a purely combinatorial proof 
of this theorem. Such a proof indeed exists. Here we will give
a combinatorial proof of the first statement of the Theorem  
(formula (\ref{treeeq})).

Consider the equation $S'(x)=0$, defining the critical point 
$x_0$. This equation can be written as $x=\beta(x)$, where
$$
\beta(x):=\sum_{i\ge 1}g_i\frac{B^{-1}B_i(x, \ldots ,x,-)}{(i-1)!},
$$
where $B^{-1}: V^*\to V$ is the operator corresponding 
to the form $B^{-1}$. 

In the sense of power series norm, $\beta$ is a contracting mapping. 
Thus, $x_0=\lim_{N\to \infty}\beta^N(x)$, for any initial vector, for example $0\in V$. 
In other words, we will obtain $x_0$ if we keep substituting 
the series $\beta(x)$ into itself. This leads to summation over trees
(explain why!).
More precisely, we get the following expression
for $x_0$:
$$
x_0=\sum_{\bold n}\prod_i g_i^{n_i}\sum_{\Gamma\in G^{(0)}(\bold n,1)}
\frac{\Bbb F_\Gamma}{|{\rm Aut}(\Gamma)|},
$$
where $G^{(0)}(\bold n,1)$ is the set of trees with one external vertex and 
$n_i$ internal vertices of degree $i$. 
Now, since $S(x)=\frac{B(x,x)}{2}-\sum_i g_i\frac{B_i(x, \ldots ,x)}{i!}$, 
the expression $-S(x_0)$ equals the sum of expressions 
$\prod_i g_i^{n_i}\frac{\Bbb F_\Gamma}{|{\rm Aut}(\Gamma)|}$
over all trees (without external vertices).
Indeed, the term $\frac{B(x_0,x_0)}{2}$ corresponds to gluing 
two trees with external vertices (identifying 
the two external vertices, so that 
they disappear); so it corresponds to 
summing over trees with a marked edge, i.e. counting each tree
as many times as it has edges. On the other hand, 
the term $g_i\frac{B_i(x_0, \ldots ,x_0)}{i!}$ corresponds 
to gluing $i$ trees with external vertices 
together at these vertices (making a tree with 
a marked vertex). So $\sum_i g_i\frac{B_i(x_0, \ldots ,x_0)}{i!}$
corresponds to summing over trees with a marked vertex, i.e.
counting each tree as many times as it has vertices. 
But the number of vertices of a tree exceeds the number of edges by $1$. 
Thus, the difference $-S(x_0)$ of the above two 
contributions corresponds to summing over trees, counting each exactly once.
This implies formula (\ref{treeeq}).

\subsection{The case $d=1$}

In the case $d=1$ we can compute the tree sum $-S(x_0)$ 
even more explicitly. Namely, let 
$$
S(x):=\frac{x^2}{2}-gh(x)
$$
where $h(x)=\sum_{n\ge 0}c_nx^n$ with $c_1\ne 0$.   
Then $x_0$ is the solution of the equation 
$x=gh'(x)$, i.e., $x_0=f(g)$ where $x=f(y)$
is the inverse function to $y=\frac{x}{h'(x)}$. 
So the tree approximation takes the form 
$-S(x_0)=F(g)$ where 
$$
F(g)=-\frac{f(g)^2}{2}+gh(f(g)). 
$$
Thus 
$$
F'(g)=-f(g)f'(g)+h(f(g))+gh'(f(g))f'(g).
$$
But $h'(f(g))=\frac{f(g)}{g}$, so the first and third summands cancel and we get 
$$
F'(g)=h(f(g)), 
$$
hence 
\begin{equation}\label{deq1}
-S(x_0)=\int_0^g h(f(a))da.
\end{equation} 

\subsection{Counting trees and Cayley's theorem}
In this section we will apply Theorem \ref{trees1l} 
to tree counting problems, in particular will prove a classical
theorem due to Cayley that the number of labeled trees with $n$
vertices is $n^{n-2}$. 

We consider essentially the same situation as we considered above in Example \ref{treeexa}: 
$d=1$, $B_i=1$, $g_i=g$. Thus, we have 
$S(x)=\frac{x^2}{2}-ge^x$. 
By Theorem \ref{trees1l}, we have 
$$
\sum_{n\ge 0}g^n\sum_{\Gamma \in T(n)}\frac{1}{|{\rm Aut}(\Gamma)|}=
-S(x_0),
$$
where $T(n)$ is the set of 
isomorphism classes of trees with $n$ vertices, 
and $x_0$ is the root of the equation $S'(x)=0$, i.e. 
$x=ge^x$. 

In other words, let $x=f(y)$ be the function inverse to the function $y=xe^{-x}$ near $x=0$, then $x_0=f(g)$. The function $f(y)$ is related to (the principal branch of) the {\it Lambert function}\index{Lambert function} $W(y)$ by the formula $f(y)=-W(-y)$. By \eqref{deq1} 
$$
-S(x_0)=\int_0^g e^{f(a)}da=\int_0^g\frac{f(a)}{a}da.
$$

Thus it remains to find the Taylor expansion of $f$. This expansion is given by the following classical result. 

\begin{proposition}
One has
$$
f(g)=\sum_{n\ge 1}\frac{n^{n-2}}{(n-1)!}g^n.
$$
\end{proposition}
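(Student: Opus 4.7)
The plan is to compute the Taylor coefficients of $f$ directly via the Lagrange inversion formula, which is perfectly adapted to the situation since $f$ is defined as the compositional inverse of an explicit power series.

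Recall the Lagrange inversion formula: if $\phi(x)$ is a power series with $\phi(0)=0$ and $\phi'(0)\ne 0$, and $f(y)$ is its compositional inverse, then for $n\ge 1$
$$[y^n]\,f(y) \;=\; \frac{1}{n}\,[x^{n-1}]\!\left(\frac{x}{\phi(x)}\right)^{\!n}.$$
In our situation $\phi(x)=xe^{-x}$, so $x/\phi(x)=e^{x}$ and $(x/\phi(x))^n=e^{nx}$. Therefore
$$[y^n]\,f(y)\;=\;\frac{1}{n}\,[x^{n-1}]\,e^{nx}\;=\;\frac{1}{n}\cdot\frac{n^{n-1}}{(n-1)!}\;=\;\frac{n^{n-2}}{(n-1)!},$$
which is the claimed formula. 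The only real content is Lagrange inversion itself.

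For completeness I would include a short derivation of Lagrange inversion. The cleanest approach is via formal residues. Write $y=\phi(x)$ with $\phi(x)=x\psi(x)$, $\psi(0)\ne 0$, and let $x=f(y)$. For $n\ge 1$, taking the formal residue at $x=0$,
$$n\,[y^n]\,f(y)\;=\;\mathrm{Res}_{y=0}\frac{f'(y)}{y^n}\,dy\;=\;\mathrm{Res}_{x=0}\frac{dx}{\phi(x)^n},$$
where in the second equality I have changed variables via $y=\phi(x)$, noting that the change of variables is a formal isomorphism near $0$ and that $f'(y)\,dy=dx$. Since $\phi(x)^n=x^n\psi(x)^n$, this residue equals $[x^{n-1}]\psi(x)^{-n}=[x^{n-1}](x/\phi(x))^n$, giving the formula above.

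The expected main obstacle is purely bookkeeping: justifying the change of variables $y=\phi(x)$ in the formal residue and verifying the sign/normalization conventions. Once Lagrange inversion is in hand, the proposition is one line, since $e^{nx}$ has completely explicit Taylor coefficients. Note that this also yields $e^{f(g)}=f(g)/g=\sum_{n\ge 1}\frac{n^{n-1}}{n!}g^{n-1}$, which when combined with the integral representation $-S(x_0)=\int_0^g \frac{f(a)}{a}\,da$ from the previous subsection produces the generating function $\sum_{n\ge 1}\frac{n^{n-2}}{n!}g^n$ for $|T(n)|/|\mathrm{Aut}|$-weighted tree counts, from which Cayley's theorem is extracted in the next step of the text.
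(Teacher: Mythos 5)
Your proof is correct and is essentially the same computation as the paper's, packaged as the Lagrange inversion theorem. The paper starts from the Cauchy coefficient formula $a_n=\frac{1}{2\pi i}\oint f(g)g^{-n-1}\,dg$ and substitutes $g=xe^{-x}$, producing the integrand $e^{nx}(1-x)x^{-n}\,dx$ whose residue is $\frac{n^{n-1}}{(n-1)!}-\frac{n^{n-2}}{(n-2)!}=\frac{n^{n-2}}{(n-1)!}$; your version starts one integration by parts earlier, from $na_n=\mathrm{Res}_{y=0}\,f'(y)y^{-n}\,dy$, which after the same substitution $y=xe^{-x}$ (with $f'(y)\,dy=dx$) gives the slightly tidier $\mathrm{Res}_{x=0}\,e^{nx}x^{-n}\,dx=\frac{n^{n-1}}{(n-1)!}$, so that $a_n=\frac{n^{n-2}}{(n-1)!}$ without the two-term subtraction. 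Either form works; yours has the mild advantage of dispensing with the $(1-x)$ factor, at the cost of invoking (or re-deriving, as you do) the named theorem rather than applying Cauchy's formula directly.
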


\begin{proof}
Let $f(g)=\sum_{n\ge 1}a_ng^n$. 
Then 
$$
a_n=\frac{1}{2\pi i}\oint\frac{f(g)}{g^{n+1}}dg=
\frac{1}{2\pi i}\oint\frac{x}{(xe^{-x})^{n+1}}d(xe^{-x})=
$$
$$
\frac{1}{2\pi i}\oint e^{nx}\frac{1-x}{x^n}dx=
\frac{n^{n-1}}{(n-1)!}-\frac{n^{n-2}}{(n-2)!}=
\frac{n^{n-2}}{(n-1)!}.
$$
\end{proof}

So we get  
$$
-S(x_0)=\int_0^g\frac{f(a)}{a}da=\sum_{n\ge 1}\frac{n^{n-2}}{n!}g^n.
$$
This shows that 
$$
\sum_{\Gamma\in T(n)}\frac{1}{|{\rm Aut}(\Gamma)|}=\frac{n^{n-2}}{n!}.
$$
But each isomorphism class of unlabeled trees 
with $n$ vertices has $\frac{n!}{|{\rm
    Aut}(\Gamma)|}$ nonisomorphic labelings. Thus we obtain
    
\begin{corollary} (A. Cayley) The number of labeled trees with
$n$ vertices is $n^{n-2}$.
\end{corollary}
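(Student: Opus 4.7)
The plan is to derive Cayley's formula as a direct consequence of the identity
$$
\sum_{\Gamma\in T(n)}\frac{1}{|{\rm Aut}(\Gamma)|}=\frac{n^{n-2}}{n!}
$$
proved just above, combined with the orbit-stabilizer principle. At that point all the analytic and Feynman-calculus work has already been done; what remains is to relate the weighted sum over \emph{unlabeled} trees to the naive count of \emph{labeled} trees.

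First I would make precise what is meant by a labeled tree with $n$ vertices: it is a tree whose vertex set is exactly $\{1,\dots,n\}$, and two such trees are identified only if they are equal as graphs on this labeled vertex set (not merely isomorphic as abstract graphs). Equivalently, a labeled tree is a pair $(\Gamma,\phi)$ where $\Gamma\in T(n)$ is an isomorphism class of unlabeled trees and $\phi$ is a bijection from $\{1,\dots,n\}$ to the vertex set of $\Gamma$, modulo the equivalence $(\Gamma,\phi)\sim(\Gamma,\psi\circ\phi)$ for $\psi\in\mathrm{Aut}(\Gamma)$.

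Next, for a fixed unlabeled tree $\Gamma\in T(n)$, the symmetric group $S_n$ acts on the set of bijections $\phi\colon\{1,\dots,n\}\to V(\Gamma)$ by precomposition, freely and transitively, so there are $n!$ such bijections; and the group $\mathrm{Aut}(\Gamma)$ acts freely on this set by postcomposition, with orbits precisely the labelings of $\Gamma$ up to isomorphism. Hence the number of distinct labeled trees whose underlying unlabeled tree is $\Gamma$ equals $n!/|\mathrm{Aut}(\Gamma)|$.

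Summing over all isomorphism classes of unlabeled trees and using the identity displayed above gives
$$
\#\{\text{labeled trees with $n$ vertices}\}
= \sum_{\Gamma\in T(n)} \frac{n!}{|\mathrm{Aut}(\Gamma)|}
= n!\cdot \frac{n^{n-2}}{n!}
= n^{n-2},
$$
which is Cayley's formula. There is no real obstacle here; the only conceptual point worth stating carefully is the orbit-stabilizer count in the previous paragraph, since the rest of the argument is bookkeeping.
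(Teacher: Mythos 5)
Your proof is correct and takes essentially the same approach as the paper: both derive Cayley's formula from the previously established identity $\sum_{\Gamma\in T(n)}|{\rm Aut}(\Gamma)|^{-1}=n^{n-2}/n!$ together with the observation that each isomorphism class of unlabeled trees yields $n!/|\mathrm{Aut}(\Gamma)|$ distinct labeled trees. You merely spell out the orbit-stabilizer bookkeeping that the paper leaves as a one-line remark.
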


\subsection{Counting trees with conditions}

In a similar way we can count labeled trees with conditions on
vertices. For example, let us compute the number of labeled
trivalent trees with $m$ vertices (i.e. trees 
that have only 1-valent and 3-valent vertices). 
Clearly, $m=2k$, otherwise there is no such trees.
The relevant action functional is 
$$
S(x)=\tfrac{x^2}{2}-g(x+\tfrac{x^3}{6}).
$$ 
Then the critical point $x_0$ is obtained from 
the equation 
$$
x=g(1+\tfrac{x^2}{2}),
$$ which yields
$$
x_0=\frac{1-\sqrt{1-2g^2}}{g}.
$$ 
Thus, by \eqref{deq1} the tree sum equals
$$
-S(x_0)=\int_0^g\left(\tfrac{1-\sqrt{1-2a^2}}{a}+\tfrac{(1-\sqrt{1-2a^2})^3}{6a^3}\right)da=
$$
$$
\frac{2}{3}\int_0^g\tfrac{1-(1+a^2)\sqrt{1-2a^2}}{a^3}da=
\frac{(1-2g^2)^{\frac{3}{2}}-(1-3g^2)}{3g^2}.
$$
Expanding this in a Taylor series, we find 
$$
-S(x_0)=\sum_{n=1}^\infty \frac{1\cdot 3\cdot \, \cdots \,\cdot
  (2n-3)}{(n+1)!}g^{2n}.
$$
Hence, we get 
\begin{corollary}
The number $N_k$ of trivalent labeled trees with $2n$ vertices is
$(2k-3)!!\frac{(2k)!}{(k+1)!}$.
\end{corollary}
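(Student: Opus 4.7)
The plan is to read off the corollary from formula \eqref{deq1} together with Theorem \ref{trees1l}, which already tells us that $-S(x_0)$ is a weighted generating function for trees counted by the reciprocal of their automorphism group. The trick is to convert that weighted count into a labeled count.

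First I would invoke Theorem \ref{trees1l} with $S(x)=\frac{x^2}{2}-g(x+\frac{x^3}{6})$, i.e., with $d=1$, $B=1$, $g_1=g_3=g$, $B_1=B_3=1$, and $g_i=0$ otherwise. Since each edge contributes a factor of $B^{-1}=1$ and each vertex contributes a factor $B_i=1$, every Feynman amplitude $\mathbb F_\Gamma$ is simply $1$. Moreover the only graphs contributing are trees all of whose vertices are $1$- or $3$-valent; a standard handshake count ($n_1+3n_3=2(n_1+n_3-1)$) shows that such a tree has $2k$ vertices exactly when $n_3=k-1$ and $n_1=k+1$, in which case the power of $g$ is $g^{2k}$. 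Hence
\[
-S(x_0)=\sum_{k\ge 1} g^{2k}\sum_{\Gamma\in T_3(2k)}\frac{1}{|\mathrm{Aut}(\Gamma)|},
\]
where $T_3(2k)$ is the set of isomorphism classes of trivalent (in this generalized sense) trees on $2k$ vertices.

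Next I would extract the Taylor coefficients of the closed-form expression $-S(x_0)=\frac{(1-2g^2)^{3/2}-(1-3g^2)}{3g^2}$ already computed in the text. Applying the generalized binomial theorem to $(1-2g^2)^{3/2}$, the $n=0$ and $n=1$ terms cancel against $-(1-3g^2)$, and a short simplification of $\binom{3/2}{n}(-2)^n$ for $n\ge 2$ leaves
\[
-S(x_0)=\sum_{m\ge 1}\frac{(2m-3)!!}{(m+1)!}\,g^{2m},
\]
with the convention $(-1)!!=1$. Comparing coefficients of $g^{2k}$ gives
\[
\sum_{\Gamma\in T_3(2k)}\frac{1}{|\mathrm{Aut}(\Gamma)|}=\frac{(2k-3)!!}{(k+1)!}.
\]

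Finally I would pass from the automorphism-weighted count to the labeled count using the standard orbit-stabilizer observation already exploited in the proof of Cayley's theorem: the number of distinct labelings of an unlabeled tree $\Gamma$ on $2k$ vertices is $(2k)!/|\mathrm{Aut}(\Gamma)|$. Summing over isomorphism classes yields
\[
N_k=(2k)!\sum_{\Gamma\in T_3(2k)}\frac{1}{|\mathrm{Aut}(\Gamma)|}=(2k-3)!!\,\frac{(2k)!}{(k+1)!},
\]
as claimed. The only step that is not entirely mechanical is the binomial/algebraic simplification identifying $\binom{3/2}{n}(-2)^n$ with $3(2n-5)!!/n!$ for $n\ge 2$ so that the pattern $(2m-3)!!/(m+1)!$ emerges cleanly; I would handle this by separating out the first two binomial terms (which cancel against $-(1-3g^2)$) and then shifting the index $m=n-1$ after dividing by $3g^2$.
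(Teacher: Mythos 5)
Your proposal is correct and follows essentially the same route as the paper: you set up the action $S(x)=\frac{x^2}{2}-g(x+\frac{x^3}{6})$, invoke Theorem \ref{trees1l} to identify $-S(x_0)$ as the automorphism-weighted generating function for trivalent trees, plug into the closed form $\frac{(1-2g^2)^{3/2}-(1-3g^2)}{3g^2}$ from \eqref{deq1}, extract coefficients by the binomial theorem, and convert to a labeled count via the orbit-stabilizer observation. You merely make explicit a few steps the paper leaves implicit (the handshake count showing $n_3=k-1$, $n_1=k+1$, and the details of the binomial expansion), but the argument is identical in structure.
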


For example, $N_1=1$ (a single edge), $N_2=4$ (a single tree with $4!$ labelings modulo a group of order $6$), $N_3=90$ (a single tree with $6!$ labelings modulo a group of order $8$), etc.

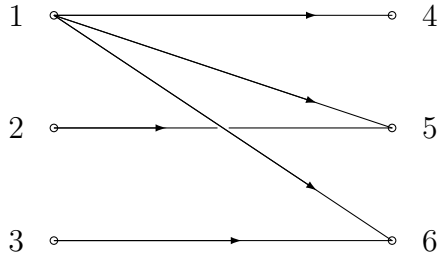
\begin{figure}[bp]
  \begin{center}
    \setlength{\unitlength}{0.5cm}
    \begin{picture}(11,8)(0,2)

      %%% straight lines and end dots
      \put(1,9){\line(1,0){9}}
%%%      \put(1,6){\line(1,0){9}}
      \put(1,6){\line(1,0){4.35}}
      \put(5.65,6){\line(1,0){4.35}}
      \put(1,3){\line(1,0){9}}
      \put(1,9){\circle*{0.2}}
      \put(1,6){\circle*{0.2}}
      \put(1,3){\circle*{0.2}}
      \put(10,9){\circle*{0.2}}
      \put(10,6){\circle*{0.2}}
      \put(10,3){\circle*{0.2}}

      %%% numbers
      \put(0,9){\makebox(0,0)[c]{$1$}}
      \put(0,6){\makebox(0,0)[c]{$2$}}
      \put(0,3){\makebox(0,0)[c]{$3$}}
      \put(11,9){\makebox(0,0)[c]{$4$}}
      \put(11,6){\makebox(0,0)[c]{$5$}}
      \put(11,3){\makebox(0,0)[c]{$6$}}

      %%% vectors
      \put(1,9){\vector(1,0){7}}      
      \put(1,6){\vector(1,0){3}}
      \put(1,3){\vector(1,0){5}}

      %%% angled lines & vectors
      \put(1,9){\line(3,-2){9}}
      \put(1,9){\line(3,-1){9}}
      \put(1,9){\vector(3,-2){7}}
      \put(1,9){\vector(3,-1){7}}

    \end{picture}
    \caption{A labeled oriented tree with 3 sources and 3 sinks.}
    \label{fig:7}
  \end{center}
\end{figure}

\subsection{Counting oriented trees}

Feynman calculus can be used to count not only non-oriented, but
also oriented graphs. For example, suppose we want to count
labeled oriented trees, whose vertices are either sources
or sinks (see Fig.~\ref{fig:7}). In this case, it is easy to see (check it!)
that the relevant integration problem is in two dimensions,
with the action $S=xy-be^x-ae^y$ (the form $xy$ is not positive definite, but this is immaterial since our computations are purely formal). 
So the critical point is found from the equations 
$$
xe^{-y}=a,\ ye^{-x}=b.
$$
Like before, look for a solution $(x,y)=(x_0,y_0)$ in the form 
$$
x=a+\sum_{p\ge 1,q\ge 1} c_{pq}a^pb^q,\ y=b+\sum_{p\ge 1,q\ge 1} d_{pq}a^pb^q.
$$
A calculation with residues similar to the one we did for 
unoriented trees yields
$$
c_{pq}=\frac{1}{(2\pi i)^2}\oint\oint\frac{x}{a^{p+1}b^{q+1}}da\wedge db=
$$
$$
\frac{1}{(2\pi i)^2}\oint\oint \frac{e^{qx+py}}{x^py^{q+1}}(1-xy)dx\wedge dy=
\frac{q^{p-1}p^{q-1}}{(p-1)!q!}.
$$
Similarly, $d_{pq}=\frac{q^{p-1}p^{q-1}}{p!(q-1)!}$. 
Now, similarly to the unoriented case, we find that
$-a\partial_aS(x,y)=x$, $-b\partial_bS(x,y)=y$, so
$$
-S(x,y)=b+\int_0^a\frac{x}{u}du=
a+b+\sum_{p,q\ge 1}\frac{p^{q-1}q^{p-1}}{p!q!}a^pb^q
$$
This implies that the number of labeled trees
with $p$ sources and $q$ sinks is
$p^{q-1}q^{p-1}\frac{(p+q)!}{p!q!}$.
In particular, if we specify which vertices are sources
and which are sinks, the number of labeled trees is $p^{q-1}q^{p-1}$.

\begin{exercise} Do this calculation in detail.
\end{exercise} 

\subsection{The matrix-tree theorem}

These calculations can be generalized to compute the number of {\it colored} labeled trees. 
For this we first need to define the {\it Kirchhoff polynomial}\index{Kirchhoff polynomial} $K_m(\bold u)$. 
Namely, for a collection 
of variables $\bold u:=(u_{ik})$, $1\le i\ne k\le m$, $u_{ik}=u_{ki}$
consider the quadratic form 
$$
U(\bold y):=\sum_{1\le i<k\le m}u_{ik}(y_i-y_k)^2.
$$ 
Generically it has a 1-dimensional kernel spanned by $\bold 1=(1,...,1)$, so it is nondegenerate 
on the subspace defined by the equation $\sum_i y_i=0$. This subspace 
carries a volume form $\omega_0(v_1,...,v_{m-1}):=\omega(v_1,...,v_{m-1},\bold 1)$, 
where $\omega$ is the standard volume form  on $\Bbb R^m$, and with respect to this 
form we have 
$$
K_m(\bold u):=\det U=\det(\delta_{i\ell}\sum_{k\ne \ell} u_{k\ell }-u_{i\ell})_{(j)}
$$
for any $1\le j\le m$, where the subscript $(j)$ means that the $j$-th row and column are removed. 
The polynomial $K_m$ is called the {\it Kirchhoff polynomial}.
For instance, $K_2=u_{12}$, $K_3=u_{12}u_{13}+u_{13}u_{23}+u_{12}u_{23}$, etc. 

Now let $\bold p=(p_1,...,p_m)$ be a $m$-tuple of positive integers 
and $\bold r=(r_{ij},1\le i\le j\le m)$ be a collection of nonnegative integers with 
$|\bold r|=|\bold p|-1$, where $|\bold r|:=\sum_{i\le j}r_{ij}$, $|\bold p|:=\sum_k p_k$.
Suppose vertices of the tree are given colors $1,...,m$, and we want to compute the number $N(\bold p,\bold r)$ of labeled trees with the first $p_1$ vertices colored with $1$, the next $p_2$ with $2$,..., the last $p_m$ with $m$, and 
$r_{ij}$ edges going between vertices 
of color $i$ and vertices of color $j$. 

It suffices to compute the polynomial 
$$
Q_{\bold p}(\bold z):=\sum_{\bold r: |\bold r|=|\bold p|-1} N(\bold p,\bold r)\prod_{i\le j}z_{ij}^{r_{ij}}. 
$$

\begin{theorem}\label{mt} We have 
$$
Q_\bold p(\bold z)=
(p_1...p_m)^{-1}K(p_kz_{k\ell}p_\ell, k\ne \ell)\prod_\ell(\sum_k p_kz_{k\ell})^{p_\ell-1}.
$$
\end{theorem}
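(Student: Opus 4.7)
The plan is to reduce the statement to the classical weighted matrix-tree theorem, and then to compute the resulting determinant using the block structure produced by the coloring.

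First, interpret $Q_{\bold p}(\bold z)$ directly as a weighted tree count: by definition it equals $\sum_T \prod_{e\in E(T)} z_{c(e)}$, summed over labeled spanning trees $T$ of the complete graph $K_n$ on $n=|\bold p|$ vertices (with color classes of sizes $p_1,\dots,p_m$), where $c(e)$ is the unordered color pair of the endpoints of $e$. By the weighted matrix-tree theorem this equals $\det L_{(v)}$ for any vertex $v$, where $L$ is the Laplacian of the weighted $K_n$ with edge weights $z_{c(u)c(u')}$.

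Next, exploit the block structure of $L$. Ordering the vertices by color makes $L$ an $m\times m$ block matrix with off-diagonal block $L_{k\ell}=-z_{k\ell}J_{p_k\times p_\ell}$ and diagonal block $L_{kk}=b_k I_{p_k}-z_{kk}J_{p_k}$, where $J$ denotes an all-ones matrix and $b_\ell:=\sum_k p_k z_{k\ell}$. With respect to the orthogonal decomposition $\Bbb R^n=\bigoplus_k(\Bbb R\bold 1_k\oplus W_k)$, where $\bold 1_k$ is the indicator of the color-$k$ block and $W_k:=\bold 1_k^\perp$ inside that block, one checks by direct computation that $L$ is block-diagonal: on each $W_k$ it acts as the scalar $b_k$, and on $\bigoplus_k\Bbb R\bold 1_k\cong\Bbb R^m$ it reduces to an $m\times m$ matrix $\tilde L$ with $\tilde L_{kk}=b_k-z_{kk}p_k$ and $\tilde L_{k\ell}=-z_{k\ell}p_\ell$ for $k\ne\ell$. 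Consequently the nonzero spectrum of $L$ consists of $b_k$ with multiplicity $p_k-1$ for each $k$, together with the nonzero eigenvalues $\mu_1,\dots,\mu_{m-1}$ of $\tilde L$ (and $\tilde L\bold 1=0$ by a direct verification).

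Applying the spectral form of the matrix-tree theorem (for a symmetric Laplacian with $\ker=\Bbb R\bold 1$, $\det L_{(v)}=\tfrac{1}{n}\prod_{\lambda\ne 0}\lambda$), we obtain
\[
Q_{\bold p}(\bold z)=\tfrac{1}{n}\Bigl(\prod_k b_k^{p_k-1}\Bigr)\prod_{i=1}^{m-1}\mu_i.
\]
The product $\prod\mu_i$ is then obtained by comparing the coefficient of $\lambda$ in $\det(\lambda I-\tilde L)=\lambda\prod_i(\lambda-\mu_i)$ with the general expansion $\det(\lambda I-\tilde L)=\sum_S(-1)^{|S|}\det(\tilde L_{SS})\lambda^{m-|S|}$, which gives $\prod\mu_i=\sum_j\det\tilde L_{(j)}$.

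Finally, identify $\det\tilde L_{(j)}$ with the Kirchhoff polynomial: writing $\tilde L=P^{-1}\hat L$ with $P=\diag(p_k)$ and $\hat L_{k\ell}=-p_kp_\ell z_{k\ell}$ for $k\ne\ell$, the matrix $\hat L$ is the symmetric Laplacian for edge weights $p_kp_\ell z_{k\ell}$, so $\det\tilde L_{(j)}=\det\hat L_{(j)}/\prod_{k\ne j}p_k = K(p_kp_\ell z_{k\ell},k\ne\ell)/\prod_{k\ne j}p_k$, independently of $j$. Summing over $j$ gives $\prod\mu_i=K\sum_j(\prod_{k\ne j}p_k)^{-1}=nK/\prod_k p_k$, and substituting back yields exactly $Q_{\bold p}=(p_1\cdots p_m)^{-1}K(p_kz_{k\ell}p_\ell,k\ne\ell)\prod_\ell b_\ell^{p_\ell-1}$. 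The one step that demands care is verifying the block-diagonalization of $L$ with respect to the splitting $\Bbb R^n=(\bigoplus_k\Bbb R\bold 1_k)\oplus(\bigoplus_k W_k)$; once that is in hand, the remainder is a routine assembly of the pieces, and an alternative derivation via Feynman tree calculus applied to $S(\bold x)=\tfrac12(Z^{-1}\bold x,\bold x)-\sum_k g_k e^{x_k}$ followed by multivariate Lagrange inversion on the critical-point equation $y_k=g_k\exp(\sum_\ell z_{k\ell}y_\ell)$ leads to the same formula.
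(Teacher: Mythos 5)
Your proof is correct, but it takes a genuinely different route from the paper's. The paper applies the Feynman tree calculus (Theorem~\ref{trees1l}) to the action $S(x)=\frac{1}{2}x^TBx-\sum_j a_je^{x_j}$ with $B=\bold z^{-1}$, and then extracts the coefficient of $\prod_k a_k^{p_k}$ in $-S(x_0)$ by a multi-variable residue computation, ending with an identity between determinants. You instead reduce the colored count directly to the uncolored weighted matrix-tree theorem: you write $Q_{\bold p}$ as $\det L_{(v)}$ for the Laplacian of the weighted $K_n$ (edge weight $z_{c(u)c(v)}$), observe that $L$ respects the decomposition $\Bbb R^n=\bigl(\bigoplus_k W_k\bigr)\oplus\bigl(\bigoplus_k\Bbb R\bold 1_k\bigr)$, and read off the nonzero spectrum as $\lbrace b_k\text{ with multiplicity }p_k-1\rbrace\cup\lbrace\mu_1,\dots,\mu_{m-1}\rbrace$; the factor $\prod_i\mu_i=nK/\prod_kp_k$ then comes from the characteristic polynomial of $\tilde L=P^{-1}\widehat L$ and the observation that $\widehat L$ is itself the Laplacian with weights $p_kp_\ell z_{k\ell}$. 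This is a self-contained linear-algebra argument that makes the mechanism ($K_n$ ``collapsing'' to $K_m$ after taking color-averages) quite transparent, at the cost of treating the classical weighted matrix-tree theorem as a known input rather than something that gets proved along the way; note in particular that when $p_1=\dots=p_m=1$, Theorem~\ref{mt} \emph{is} the weighted matrix-tree theorem, so your argument is a reduction of the general case to the case $\bold p=(1,\dots,1)$, whereas the paper's Feynman-calculus derivation proves that case from scratch. Both approaches are fine; the paper's fits the surrounding narrative by exhibiting the matrix-tree theorem as a piece of tree-level Feynman calculus, while yours is the shorter and more elementary derivation if one already has Kirchhoff's theorem in hand. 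One small point to spell out: $\tilde L$ is not symmetric, but it is the matrix of the symmetric operator $L|_{\bigoplus_k\Bbb R\bold 1_k}$ in the non-orthonormal basis $\bold 1_k$ (Gram matrix $P$), so its eigenvalues are real and coincide with those of $L$ on that invariant subspace, which is what your spectral bookkeeping requires.
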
 

Note that for $m=1$ and $\bold z=1$ this recovers Cayley's theorem, while for $m=2$ 
and $\bold z=\begin{pmatrix} 0& 1\\ 1& 0\end{pmatrix}$ it recovers our count of 
oriented trees. 

\begin{proof} 
We attach to each color $j$ a real variable $x_j$. Then the corresponding action is 
$$
S(x,y)=\frac{1}{2}x^TB x-\sum_{j=1}^m a_je^{x_j},
$$
where $B=(b_{ij})$ is inverse to the matrix $\bold z:=(z_{ij})$ with $z_{ij}=z_{ji}$. 
Then by Theorem \ref{trees1l}, $Q_{\bold p}(\bold z)$ is the coefficient to $\prod_k a_k^{p_k}$ in $-S(x)$, where $x$ is the critical point of $S$. 

The equation for the critical point of $S$ is  
$$
\sum_i x_ib_{ij}e^{-x_j}=a_j.
$$
Let $X_j:=\sum_i x_ib_{ij}$, then 
$x_i=\sum_jz_{ij}X_j$, 
$a_i=X_ie^{-x_i}$, and 
$$
-S(x)=\int X_j\frac{da_j}{a_j}
$$ 
for all $j$. In other words, the coefficient to $\prod_k a_k^{p_k}$ in $-S(x)$ equals
the coefficient to the same monomial in $X_j(\bold z,\bold a)$ divided by $p_j$. 
Thus, denoting by $D_T(\bold z)$ the principal minor of $\bold z$ 
corresponding to a subset $T\subset \lbrace 1,...,m\rbrace$, we get 
$$
Q_{\bold p}(\bold z)=\tfrac{p_j^{-1}}{(2\pi i)^m}\oint X_j(\prod_k a_k^{-p_k-1})d\bold a=
$$
$$
\tfrac{p_j^{-1}}{(2\pi i)^m}\oint X_j(\prod_k (X_ke^{-x_k})^{-p_k-1})d(X_1e^{-x_1})\wedge...\wedge d(X_me^{-x_m})=
$$
$$
\tfrac{p_j^{-1}}{(2\pi i)^m}\oint \sum_{T\subset \lbrace 1,...,m\rbrace}(-1)^{|T|}D_T(\bold z)X_j(\prod_{\ell\notin T}X_\ell^{-1})(\prod_\ell X_\ell^{-p_\ell})e^{\sum_{k,\ell} p_kz_{k\ell}X_\ell}dX_1\wedge...\wedge dX_m
$$
$$
=p_j^{-1}\sum_{T\subset \lbrace 1,...,m\rbrace}(-1)^{|T|}\frac{p_j-1+\delta_{jT^c}}{\sum_k p_kz_{kj}}D_T(\bold z)\prod_\ell\frac{(\sum_k p_kz_{k\ell})^{p_\ell-\delta_{\ell T}}}{(p_\ell-\delta_{\ell T})!}=
$$
\scriptsize
$$
p_j^{-1}
\left(\frac{p_j-1}{\sum_k p_kz_{kj}}\det(\delta_{i\ell}\sum_k p_kz_{k\ell}-z_{i\ell}p_\ell)+
\det(\delta_{i\ell}\sum_k p_kz_{k\ell}-z_{i\ell}p_\ell)_{(j)}\right)\prod_\ell\frac{(\sum_k p_kz_{k\ell})^{p_\ell-1}}{p_\ell!},
$$
\normalsize 
where $\delta_{\ell T}=1$ if $\ell\in T$ and $0$ otherwise. 
The first determinant is zero, so we get 
$$
Q_{\bold p}(\bold z)=(p_1...p_m)^{-1}\det(\delta_{i\ell}\sum_k p_kz_{k\ell}p_\ell-p_iz_{i\ell}p_\ell)_{(j)}\prod_\ell\frac{(\sum_k p_kz_{k\ell})^{p_\ell-1}}{p_\ell!}.
$$
This implies the theorem. 
\end{proof} 

Theorem \ref{mt} is a weighted version of {\it Kirchhoff's matrix-tree theorem},\index{Kirchhoff's matrix-tree theorem}
which is a generalization of Cayley's theorem. More precisely, take 
$\bold z=A_\Gamma$ to be the adjacency matrix of a graph $\Gamma$ (without self-loops), $m$ the number of vertices of $\Gamma$, and $p_i=1$ for all $i$. Then 
$Q_{\bold p}(\bold z)=N_\Gamma$ is the number of spanning trees of 
$\Gamma$, and Theorem \ref{mt} says that 
$$
N_\Gamma=\det U,
$$
where $U$ is the quadratic form 
$$
U(\bold y)=\sum_{i<j}(A_\Gamma)_{ij}(y_i-y_j)^2=(\Delta_\Gamma \bold y,\bold y),
$$ 
where $\Delta_\Gamma=D_\Gamma-A_\Gamma$ is the Laplace operator of $\Gamma$ 
($D_\Gamma$ being the diagonal matrix of vertex degrees). Thus we get 

\begin{corollary} (The matrix-tree theorem)
$$
N_\Gamma=\frac{1}{m}\lambda_1...\lambda_{m-1},
$$
where $\lambda_i$ are the non-zero eigenvalues of $\Delta_\Gamma$. 
\end{corollary}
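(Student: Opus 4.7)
The plan is to unpack how Theorem \ref{mt} specializes to this corollary when all $p_i=1$ and $\bold z=A_\Gamma$, and then convert the determinant of the quadratic form $U$ into the product of the nonzero eigenvalues of the Laplacian $\Delta_\Gamma$.

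First, the specialization of Theorem \ref{mt}. With $\bold p=(1,\dots,1)$, we have $|\bold p|=m$ and $|\bold r|=m-1$, so a labeled tree with $r_{ij}$ edges between vertices of color $i$ and $j$ is just a labeled spanning tree of the complete graph with each edge weighted by $z_{ij}$. Taking $z_{ij}=(A_\Gamma)_{ij}$ forces each chosen edge to lie in $\Gamma$, so $Q_{\bold p}(A_\Gamma)$ counts spanning trees of $\Gamma$. The product $\prod_\ell (\sum_k p_k z_{k\ell})^{p_\ell-1}$ becomes $\prod_\ell 1=1$, and the formula reduces to $N_\Gamma = K_m(A_\Gamma) = \det U$, where $U$ is the quadratic form $U(\bold y)=\sum_{i<j}(A_\Gamma)_{ij}(y_i-y_j)^2=(\Delta_\Gamma \bold y,\bold y)$ restricted to $W:=\{\bold y\in\Bbb R^m:\sum_i y_i=0\}$ and $\det U$ is taken with respect to $\omega_0$.

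Next, I would relate $\det U$ to the eigenvalues of $\Delta_\Gamma$. The operator $\Delta_\Gamma$ is symmetric on $\Bbb R^m$ and annihilates $\bold 1=(1,\dots,1)$; its other eigenvalues $\lambda_1,\dots,\lambda_{m-1}$ are the eigenvalues of the restriction $\Delta_\Gamma|_W$, since $W=\bold 1^\perp$ with respect to the standard inner product. Pick any orthonormal basis $e_1,\dots,e_{m-1}$ of $W$ (orthonormal in the standard inner product). Then the Gram matrix $M_{ij}=(\Delta_\Gamma e_j,e_i)$ has determinant $\lambda_1\cdots\lambda_{m-1}$.

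The delicate step, and the one to handle with care, is that the volume form $\omega_0$ is not the volume form induced by the standard inner product on $W$. By definition $\omega_0(e_1,\dots,e_{m-1})=\omega(e_1,\dots,e_{m-1},\bold 1)$, and since $\bold 1/\sqrt{m}$ is a unit vector orthogonal to $W$, one computes $\omega(e_1,\dots,e_{m-1},\bold 1)=\sqrt{m}$. Using the basis-independent definition
$$\det U=\frac{\det(U(e_i,e_j))}{\omega_0(e_1,\dots,e_{m-1})^2}=\frac{\lambda_1\cdots\lambda_{m-1}}{m},$$
which combined with $N_\Gamma=\det U$ yields the claimed formula. The main obstacle is simply keeping careful track of this normalization factor $\sqrt{m}$ coming from the nonstandard volume form $\omega_0$; once that is pinned down, the identification of $\det(\Delta_\Gamma|_W)$ with the product of nonzero eigenvalues is routine linear algebra.
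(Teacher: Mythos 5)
Your proposal is correct and follows the paper's approach: the paper specializes Theorem \ref{mt} at $\bold p=(1,\dots,1)$, $\bold z=A_\Gamma$ to obtain $N_\Gamma=\det U$ and then states the corollary without spelling out the eigenvalue computation, which you supply. Your treatment of the normalization — using $\det U = \det(U(e_i,e_j))/\omega_0(e_1,\dots,e_{m-1})^2$ for an orthonormal basis $e_1,\dots,e_{m-1}$ of $W=\bold 1^\perp$, noting that $\omega_0(e_1,\dots,e_{m-1})=\pm\sqrt m$, and identifying $\det(U(e_i,e_j))$ with $\lambda_1\cdots\lambda_{m-1}$ — is exactly the missing linear-algebra step, and it is correct.
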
 

Cayley's theorem is obtained from this result when $\Gamma$ is a complete graph, in which case $\lambda_i=m$ for all $i$, so we get $N_\Gamma=m^{m-2}$.

\subsection{1-particle irreducible diagrams and the effective
  action}

Let $Z=Z_S$ be the partition function corresponding to the action
$S$. In the previous subsections we have seen that the ``classical''
(or ``tree'') part 
$(\log\frac{Z_S}{Z_0})_0$ 
of the quantity $\hbar\log\frac{Z_S}{Z_0}$ is quite elementary to
compute -- it is just minus the critical value of the action
$S(x)$. Thus, if we could find a new ``effective'' action 
$S_{\rm eff}$ (a ``deformation'' of $S$) such that  
$$
\hbar^{-1}(\log\tfrac{Z_{\rm S_{\rm eff}}}{Z_0})_0=\log\tfrac{Z_S}{Z_0}
$$
(i.e. the classical answer for the effective action is the quantum
answer for the original one), then we can consider the quantum theory 
for the action $S$ solved. In other words, 
the problem of solving the quantum theory attached to $S$
(i.e. finding the corresponding integrals) essentially reduces 
to the problem of computing the effective action $S_{\rm eff}$. 

We will now give a recipe of computing the effective
action in terms of amplitudes of Feynman diagrams, 
and see that it is computationally easier than computing the sum over connected diagrams. 

\begin{definition}
An edge $e$ of a connected graph $\Gamma$ is said to be a {\it bridge}\index{bridge}
if the graph $\Gamma\setminus\ e$ is disconnected. 
A connected graph without bridges is called {\it 1-particle irreducible}\index{1-particle irreducible diagram} (1PI).\footnote{This is the physical terminology. The mathematical 
term is ``2-connected''.}  
\end{definition}

To compute the effective action, 
we will need to consider graphs with external edges
(but having at least one internal vertex). 
Such a graph $\Gamma$ (with $N$ external edges) 
will be called 1-particle irreducible if so 
is the corresponding ``amputated'' graph (i.e. the graph
obtained from $\Gamma$ by removal of the external edges).
In particular, a graph with one internal vertex is always 1-particle
irreducible, while a single edge graph
without internal vertices is defined {\it not} to be 1-particle irreducible. 
The notions of a bridge and a 1-particle irreducible graph 
are illustrated by Fig.~\ref{fig:8}.

\begin{figure}[htp]
  \begin{center}
    \setlength{\unitlength}{0.5cm}
    \begin{picture}(17,19)(3,1)

      %%% make first figure
      %%% make box
      \put(5,12){\line(1,0){12}}
      \put(5,20){\line(1,0){12}}
      \put(5,12){\line(0,1){8}}
      \put(17,12){\line(0,1){8}}

      %%% make graph
      \put(5.5,17){\line(1,0){5}}
      \put(6.25,17){\circle{1.5}}
      \put(9.75,17){\circle{1.5}}

      %%% make arrows and descriptions
      \put(8,14){\vector(0,1){2.8}}
      \put(8,13){\makebox(0,0)[c]{a bridge}}
      \put(11.5,19){\vector(-2,-3){1}}
      \put(11.75,19){\makebox(0,0)[l]{not a bridge}}

      %%% make secondfigure
      %%% make box
      \put(3,1){\line(1,0){16}}
      \put(3,10){\line(1,0){16}}
      \put(3,1){\line(0,1){9}}
      \put(19,1){\line(0,1){9}}

      %%% make upper graph
      \put(6.75,7.5){\circle{1.5}}
      \put(6.75,6.75){\line(0,1){1.5}}
      \put(6,7.5){\line(-1,0){1}}
      \put(7.5,7.5){\line(1,0){1}}

      %%% make lower graph
      \put(5.5,3.5){\circle{1.5}}
      \put(8,3.5){\circle{1.5}}
      \put(6.25,3.5){\line(1,0){1}}
      \put(4.95,4){\line(-1,1){.75}}
      \put(4.95,3){\line(-1,-1){.75}}
      \put(8.55,4){\line(1,1){.75}}
      \put(8.55,3){\line(1,-1){.75}}

      %%% make descriptions
      \put(10.5,7.5){\makebox(0,0)[l]{\shortstack[l]{1PI graph with \\
                                    two external edges}}}
      \put(10.5,3.5){\makebox(0,0)[l]{\shortstack[l]{non-1PI graph with \\
                                     four external edges}}}

    \end{picture}
    \caption{ }
    \label{fig:8}
  \end{center}
\end{figure}

Denote by $G_{\rm 1PI}(\bold n,N)$ the set of
isomorphism classes of 1-particle irreducible graphs with $N$
external edges and $n_i$ $i$-valent internal vertices for each $i$
(where isomorphisms are not allowed to move external edges).

\begin{theorem} \label{effac}
The effective action $S_{\rm eff}$ is given by the formula 
$$
S_{\rm eff}(x)=
 \frac{B(x,x)}{2}-
\sum_{i\ge 0}\frac{{\mathcal B}_i(x,...,x)}{i!},
$$
where 
$$
{\mathcal B}_N(x, \ldots ,x)=\hbar\sum_{\bold n}\prod_i (g_i\hbar^{\frac{i}{2}-1})^{n_i}
\sum_{\Gamma\in G_{\rm 1PI}(\bold
  n,N)}\frac{\Bbb F_\Gamma(Bx, \ldots ,Bx)}{|{\rm Aut}(\Gamma)|}.
  $$
 \end{theorem}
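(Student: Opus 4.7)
Apply Theorem \ref{trees1l} to the action $S_{\rm eff}$: its tree contribution is $(\log(Z_{S_{\rm eff}}/Z_0))_0 = -S_{\rm eff}(x_0^{\rm eff})$ where $x_0^{\rm eff}$ is the critical point of $S_{\rm eff}$, so the theorem is equivalent to the identity $-\hbar^{-1} S_{\rm eff}(x_0^{\rm eff}) = \log(Z_S/Z_0)$. I would first expand the right side by Theorem \ref{Feyn2} as a sum over connected Feynman graphs for $S$, and then expand the left side by the combinatorial argument of Section \ref{treepar} (applied to $S_{\rm eff}$) as a sum over trees whose $N$-valent internal vertices carry the tensor $\mathcal B_N$ and whose edges carry the propagator $B^{-1}$. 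The goal is to match these two expansions term by term.

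The matching is driven by the \emph{block decomposition} of a connected graph $\Gamma$: declare two edges equivalent iff no bridge separates them. Each equivalence class spans a maximal 1PI subgraph (a ``block'') $P_j$, and reattaching the bridges recovers $\Gamma$ as a tree $T(\Gamma)$ whose internal vertices are the blocks and whose edges are the bridges of $\Gamma$; each $P_j$ inherits $N_j$ external legs from the bridge-ends incident to it. Conversely, any tree together with a choice of 1PI block having $N_j$ ordered external legs at each $N_j$-valent internal vertex assembles to a unique connected $\Gamma$. Under this bijection the amplitude factorizes: contracting within each block yields a tensor $\Bbb F_{P_j}\in (V^*)^{\otimes N_j}$, and contracting these along bridges using $B^{-1}$ recovers $\Bbb F_\Gamma$. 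The $Bx$-decoration of external legs in the definition of $\mathcal B_N$ is precisely the operation that converts each $(V^*)^{\otimes N_j}$-valued block tensor into a symmetric $N_j$-form on $V$ fit to serve as an interaction vertex in $S_{\rm eff}$, and the pairing of two $Bx$-decorated legs via the propagator $B^{-1}$ on a tree edge reproduces the original bridge contraction inside $\Bbb F_\Gamma$. The identity $b(\Gamma) = |E(T(\Gamma))| + \sum_j b_{\rm int}(P_j)$ reconciles the $\hbar$-powers: the $\hbar^{b_{\rm int}(P_j)+N_j/2+1}$ built into $\mathcal B_{N_j}$, together with the overall $\hbar^{-1}$ from $-\hbar^{-1}S_{\rm eff}(x_0^{\rm eff})$, totals $\hbar^{b(\Gamma)}$.

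The principal obstacle is the bookkeeping of symmetry factors. An automorphism of $\Gamma$ preserves the canonical block decomposition, yielding a short exact sequence
\begin{equation*}
1\to \prod_j {\rm Aut}_{\rm ext}(P_j)\to {\rm Aut}(\Gamma)\to H\to 1,
\end{equation*}
where ${\rm Aut}_{\rm ext}(P_j)$ fixes the external legs of $P_j$ and $H\subseteq {\rm Aut}(T(\Gamma))$ is the subgroup permuting only isomorphic decorated blocks. The $1/N!$ in $\mathcal B_N/N!$ appearing inside $S_{\rm eff}$, the $1/|{\rm Aut}(P_j)|$ inside the definition of $\mathcal B_{N_j}$ (where ${\rm Aut}(P_j)$ may permute its $N_j$ external legs), and the tree symmetry factor from the expansion of $-S_{\rm eff}(x_0^{\rm eff})$ must combine to reproduce $1/|{\rm Aut}(\Gamma)|$. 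This orbit-counting is the standard but delicate step, entirely parallel to the multiplicity argument in the proof of Theorem \ref{Feyn1}; once it is verified, termwise identification of the two expansions completes the proof.
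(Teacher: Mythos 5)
Your approach coincides with the paper's: the proof rests on the unique block decomposition of a connected graph into maximal 1PI subgraphs joined by bridges (Lemma \ref{obv}), which converts the sum over connected graphs in $\log(Z_S/Z_0)$ into a sum over skeleton trees decorated by the aggregate 1PI tensors $\mathcal B_N$, and this is then compared against the tree expansion \eqref{treeeq} of $-\hbar^{-1}S_{\rm eff}(x_0^{\rm eff})$. The paper states this as an immediate consequence of Lemma \ref{obv} and suppresses all the bookkeeping; the details you supply — the automorphism short exact sequence $1\to\prod_j{\rm Aut}_{\rm ext}(P_j)\to{\rm Aut}(\Gamma)\to H\to 1$, and the observation that the $Bx$-decoration is exactly what makes the block tensor contract correctly against the propagator $B^{-1}$ on tree edges — are the right ones and are what the paper's ``it is easy to see'' is silently invoking.

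One arithmetic slip: with $\sum_j N_j=2E(T)$ and $V(T)=E(T)+1$, one gets $\hbar^{-1}\prod_j\hbar^{1+b_{\rm int}(P_j)+N_j/2}=\hbar^{2E(T)+\sum_j b_{\rm int}(P_j)}$, which exceeds $\hbar^{b(\Gamma)}=\hbar^{E(T)+\sum_j b_{\rm int}(P_j)}$ by $\hbar^{E(T)}$, so the cancellation you assert does not hold as written. The blame, though, lies with the theorem's own formula for $\mathcal B_N$: the prefactor $\hbar$ forces the one-vertex contribution to $\mathcal B_N$ to be $g_N\hbar^{N/2}B_N$ rather than $g_NB_N$, which already contradicts the remark $S_{\rm eff}=S+\hbar S_1+\cdots$ that follows the theorem. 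Taking the prefactor to be $\hbar^{1-N/2}$ repairs both problems at once and makes your exponent matching close exactly, so this is a normalization issue in the statement rather than a flaw in your method.
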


Thus, $S_{\rm eff}=S+\hbar S_1+\hbar^2 S_2+..$ 
The expressions $\hbar^j S_j$ are called 
the {\it $j$-loop corrections}\index{$j$-loop correction} to the effective action.

This theorem allows physicists to worry only about 1-particle 
irreducible diagrams, and is the reason why you will rarely see 
other diagrams in a QFT textbook. 
As before, it is very useful in doing low order computations, since 
the number of 1-particle irreducible diagrams with a given number of loops 
is much smaller than the number of connected diagrams with the same 
number of loops. 

\begin{proof} The proof is based on the following lemma from graph theory. 

\begin{lemma}\label{obv} Any connected graph $\Gamma$ can be uniquely represented as a tree whose vertices are 1-particle irreducible subgraphs (with external edges), 
and edges are the bridges of $\Gamma$.
\end{lemma}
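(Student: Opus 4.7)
The plan is to construct the required decomposition by deleting bridges. Let $B$ denote the set of bridges of $\Gamma$, and let $C_1,\ldots,C_k$ be the connected components of the graph $\Gamma\setminus B$ (obtained by removing the bridges but keeping their endpoints). Each bridge $b\in B$ has its two endpoints lying in two distinct components $C_i$, $C_j$, since otherwise $\Gamma\setminus b$ would still be connected. Form the candidate tree $T$ with vertex set $\{C_1,\ldots,C_k\}$, placing one edge between $C_i$ and $C_j$ for each bridge joining them, and regard each $C_i$ together with the bridges incident to it (viewed as external edges) as a subgraph with external legs.

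First I would verify that $T$ is a tree. Connectedness of $T$ is immediate from that of $\Gamma$. For acyclicity, assume for contradiction that bridges $b_1,\ldots,b_m$ form a cycle $C_{i_1}\to C_{i_2}\to\cdots\to C_{i_m}\to C_{i_1}$ in $T$. Then for any single $b_j$, the endpoints of $b_j$ remain connected in $\Gamma\setminus b_j$ via the remaining bridges of the cycle together with internal paths inside each $C_{i_\ell}$, contradicting the fact that $b_j$ is a bridge of $\Gamma$.

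Next I would verify that each $C_i$ with its external edges is 1PI, i.e., that the amputation of $C_i$ has no bridge. Suppose $e$ is an internal edge of $C_i$ whose removal splits $C_i$ into $C_i^1\sqcup C_i^2$. Since $T$ is a tree, deleting the vertex $C_i$ from $T$ decomposes $T$ into subtrees, each attached to $C_i$ through a single bridge that enters either $C_i^1$ or $C_i^2$. Consequently the vertex set of $\Gamma\setminus e$ partitions into a ``$C_i^1$-side'' and a ``$C_i^2$-side'' with no remaining edge of $\Gamma$ crossing the partition, so $e$ would itself be a bridge of $\Gamma$, contradicting $e\notin B$. The degenerate case in which some $C_i$ is a single vertex is allowed by the convention that a graph with one internal vertex is 1PI.

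Finally, for uniqueness, suppose $\Gamma$ is represented by some tree $T'$ whose vertices are 1PI subgraphs $C_1',\ldots,C_l'$ and whose edges $B'$ are certain edges of $\Gamma$. Every $e\in B'$ is a bridge of $\Gamma$, since removing it disconnects $T'$. Conversely, if a bridge $e$ of $\Gamma$ were an internal edge of some $C_i'$, then the argument of the previous paragraph (applied inside $T'$) shows that removing $e$ would disconnect the amputation of $C_i'$, contradicting 1PI. Hence $B'$ equals the set of bridges of $\Gamma$, which forces the $C_j'$ to be the connected components of $\Gamma\setminus B$. The main obstacle in all of this is not conceptual but purely bookkeeping: one must track carefully that each bridge plays the dual role of an external edge for each of its two 1PI endpoints as well as that of a tree edge of $T$, and that the tree property of $T$ and the 1PI property of the $C_i$ are established in the correct logical order.
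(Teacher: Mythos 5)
Your proof is correct and follows the same strategy as the paper: delete all bridges, take the resulting connected components as the tree's vertices and the bridges themselves as the tree's edges. The paper simply declares this ``obvious'' with a one-line sketch, whereas you carefully supply the verifications (acyclicity of $T$, 1PI-ness of the components, and uniqueness) that the paper leaves implicit.
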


The lemma is obvious. Namely, let us remove all bridges from $\Gamma$.
Then $\Gamma$ will turn into a disjoint union of 1-particle irreducible graphs 
which should be taken to be the vertices of the said tree. 

The tree corresponding to the graph $\Gamma$ is called the
{\it skeleton}\index{skeleton of a graph} of $\Gamma$
(see Fig.~\ref{fig:9}). 

\begin{figure}[htbp]
  \begin{center}
    \setlength{\unitlength}{0.5cm}
    \begin{picture}(16,8)(2,0)

      %%% make descriptions
      \put(5,7){\makebox(0,0)[c]{Graph:}}
      \put(15,7){\makebox(0,0)[c]{Skeleton:}}

      %%% make first graph
      \put(5,2.5){\circle{1.5}}
      \put(2,4){\circle{1.5}}
      \put(8,1){\circle{1.5}}
      \put(7,5.25){\circle{1.5}}
      \put(4.3,2.85){\line(-2,1){1.65}}
      \put(2,3.25){\line(0,1){1.5}}
%%%      \put(5.7,2.15){\line(2,-1){1.65}}
      \put(5.7,2.15){\line(2,-1){3}}
      \put(5.5,3){\line(2,3){1.1}}

      %%% make second graph
      \put(15,2.5){\circle*{.5}}
      \put(12,4){\circle*{.5}}
      \put(18,1){\circle*{.5}}
      \put(17,5.25){\circle*{.5}}
      \put(15,2.5){\line(-2,1){3}}
%%%      \put(5.7,2.15){\line(2,-1){1.65}}
      \put(15,2.5){\line(2,-1){3}}
      \put(15,2.5){\line(2,3){2}}

    \end{picture}
    \caption{The skeleton of a graph.}
    \label{fig:9}
  \end{center}
\end{figure}
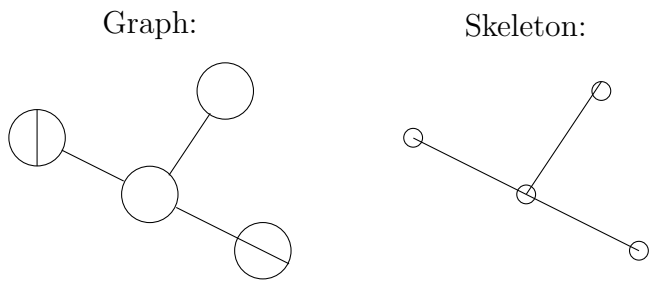

It is easy to see that Lemma \ref{obv} implies Theorem \ref{effac}. Indeed,
it implies that the sum over 
all connected graphs occuring in the expression of
$\log\frac{Z_S}{Z_0}$ can be written as a sum over skeleton trees, 
so that the contribution from each tree is 
(proportional to) the contraction of tensors ${\mathcal B}_i$ put in 
its vertices, and ${\mathcal B}_i$ is the (weighted) sum of amplitudes of 
all 1-particle irreducible graphs with $i$ external edges. 
\end{proof}

\subsection{1-particle irreducible diagrams and the Legendre transform}

Recall the notion of {\it Legendre transform}.\index{Legendre transform}
Let $f$ be a smooth function on a vector space $Y$, such that 
the map $Y\to Y^*$ given by $x\to df(x)$ is a diffeomorphism. 
Then one can define the Legendre transform of $f$ as follows. 
For $p\in Y^*$, let $x_0=x_0(p)$ be the critical point of the function 
$(p,x)-f(x)$ (i.e. the unique solution of the equation $df(x)=p$).
Then the Legendre transform of $f$ is the function on $Y^*$ 
defined by 
$$
L(f)(p)=(p,x_0)-f(x_0).
$$
It is easy to see that the differential of $L(f)$ is also a diffeomorphism
$Y^*\to Y$ (in fact, inverse to $df(x)$), and that $L^2(f)=f$. 

\begin{example} Let $f(x)=\frac{ax^2}{2}$, $a\ne 0$. 
Then $px-f=px-\frac{x^2}{2}$ has a critical point at 
$p=\frac{x}{a}$, and the critical value is $\frac{p^2}{2a}$. Thus $L(\frac{ax^2}{2})=
\frac{p^2}{2a}$. More generally, if $f(x)=\frac{B(x,x)}{2}$ where $B$ is a non-degenerate symmetric form on $Y$ then $L(f)(p)=\frac{B^{-1}(p,p)}{2}$. E.g.,
the Legendre transform of a Lagrangian $\frac{mv^2}{2}-U(x)$ 
of a particle of mass $m$ with respect to velocity $v=\dot{x}$ 
is its Hamiltonian (energy) $\frac{p^2}{2m}+U(x)$, and vice versa. 
This is, in fact, so in complete generality, which is why
Legendre transform plays an important role in classical mechanics and field theory.   
\end{example} 

Note that the stationary phase formula implies that the Legendre transform is the classical analog of the Fourier transform. Indeed, 
the leading term of the asymptotics as $\hbar\to 0$ of the logarithm of the (suitably normalized) Fourier transform 
$\hbar^{-\frac{d}{2}}\int_V 
e^{\frac
{i(-(p,x)+S(x))}
{\hbar}}dx$ of the Feynman density $e^{\frac
{iS(x)}
{\hbar}}dx$
(where the integral is understood in the sense of distributions) is $-\frac{iL(S)(p)}{\hbar}$.

Now let us consider Theorem \ref{effac} in the situation of Theorem \ref{Feyn}.
Thus, $S(x)=\frac{B(x,x)}{2}+O(x^3)$, and we look at 
$$
Z(p)=\hbar^{-\frac{d}{2}}\int_Ve^{\frac{(p,x)-S(x)}{\hbar}}dx.
$$
By Theorem \ref{effac}, 
one has 
$$
\log\frac{Z(p)}{Z_0}=-\hbar^{-1}S_{\rm eff}(x_0,p),
$$
where the effective action $S_{\rm eff}(x,p)$ is the sum over 
$1$-particle irreducible graphs and $x_0=x_0(p)$ is its critical point. 

Now, we must have $S_{\rm eff}(x,p)=-p\cdot x+S_{\rm eff}(x)$, 
since the only 1PI graph which contains 1-valent internal vertices (corresponding
to $p$) is the graph with one edge, connecting an internal vertex with an external one
(so it yields the term $-p\cdot x$, and other graphs contain no $p$-vertices).
This shows that $\hbar\log\frac{Z(p)}{Z_0}$ is the critical value of $p\cdot x-S_{\rm eff}(x)$.
Thus we have proved the following.

\begin{proposition} We have 
$$
S_{\rm eff}(x)=L(\hbar\log\tfrac{Z(p)}{Z_0}),\ \hbar\log\tfrac{Z(p)}{Z_0}=L(S_{\rm eff}(x)).
$$
\end{proposition}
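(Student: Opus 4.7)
The plan is to apply Theorem \ref{effac} to the modified action $\widetilde S(x) := S(x) - (p,x)$, treating the linear term $-(p,x)$ as a perturbation that introduces 1-valent internal vertices with weight $p$. In the notation $S(x) = \tfrac{B(x,x)}{2} - \sum_i g_i \tfrac{B_i(x,\ldots,x)}{i!}$, this amounts to setting $g_1 = 1$ and $B_1 = p$ while leaving the other $B_i$, $g_i$ unchanged. Theorem \ref{effac} then gives an effective action $S_{\rm eff}(x;p)$ (depending on $p$) such that
\[
\hbar \log \tfrac{Z(p)}{Z_0} \;=\; -\,S_{\rm eff}(x_0(p);p),
\]
where $x_0(p)$ is the critical point of $S_{\rm eff}(\,\cdot\,;p)$.

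Next I would establish the key structural identity
\[
S_{\rm eff}(x;p) \;=\; S_{\rm eff}(x) - (p,x),
\]
where $S_{\rm eff}(x)$ is the effective action attached to $S$ alone (i.e.\ with no $p$-vertices). This rests on the following graph-theoretic observation, which I expect to be the main (though still easy) obstacle: \emph{a 1PI graph cannot contain a 1-valent internal vertex unless that vertex is the only internal vertex}. Indeed, if an internal $p$-vertex of valency $1$ were attached by an edge $e$ to any larger subgraph, removing $e$ would disconnect the $p$-vertex from the rest, making $e$ a bridge and contradicting 1-particle irreducibility. The only 1PI graph with a 1-valent internal $p$-vertex is therefore the one-edge graph joining this vertex to a single external leg; a direct application of the Feynman rules of Theorem \ref{effac} (contracting $Bx$ at the external leg with $p$ at the internal vertex through $B^{-1}$) shows that this graph contributes exactly $-(p,x)$ to $S_{\rm eff}(x;p)$. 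All other 1PI graphs contributing to $S_{\rm eff}(x;p)$ have no $p$-vertices and therefore coincide with the 1PI graphs computing $S_{\rm eff}(x)$.

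From here the two claimed identities are immediate from the definition of the Legendre transform. The critical point equation $\partial_x S_{\rm eff}(x;p) = 0$ becomes $S_{\rm eff}'(x_0) = p$, and the critical value is
\[
S_{\rm eff}(x_0;p) \;=\; S_{\rm eff}(x_0) - (p,x_0),
\]
so
\[
\hbar\log\tfrac{Z(p)}{Z_0} \;=\; (p,x_0) - S_{\rm eff}(x_0) \;=\; L(S_{\rm eff})(p),
\]
which is precisely the second claimed equality. The first equality $S_{\rm eff}(x) = L\bigl(\hbar\log \tfrac{Z(p)}{Z_0}\bigr)$ then follows by applying $L$ to both sides and invoking the involutivity $L^2 = \mathrm{id}$ recalled at the beginning of the subsection (the relevant smoothness and non-degeneracy of $d S_{\rm eff}$ near the origin are automatic since $S_{\rm eff}(x) = \tfrac{B(x,x)}{2} + O(x^3)$ as a formal series, so $dS_{\rm eff}$ is a formal diffeomorphism).
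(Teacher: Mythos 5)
Your argument is correct and follows the same route as the paper: apply Theorem \ref{effac} to the action $S(x)-(p,x)$, observe that the only 1PI graph containing 1-valent $p$-vertices is the single-edge graph connecting one such vertex to an external edge, deduce $S_{\rm eff}(x;p)=S_{\rm eff}(x)-(p,x)$, and then read off the Legendre transform and invoke $L^2=\mathrm{id}$. Your writeup helpfully supplies the short bridge argument for why a 1-valent internal vertex cannot occur in any 1PI graph with other internal vertices, a fact the paper asserts without proof.
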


Physicists formulate this result as follows: 
the effective action is the Legendre transform 
of $\hbar$ times the logarithm of the generating function for 
quantum correlators
(and vice versa).

\begin{exercise} Compute the 1-loop contribution to 
$\log\frac{Z}{Z_0}$ 
for 
$$
S(x)=\tfrac{x^2}{2}-g(x+\tfrac{x^3}{6}).
$$ 
Using this, compute the number of labeled n-vertex 1-loop 
graphs with 1-valent and 3-valent vertices only (be careful with double edges and self-loops!). Check your answer by directly enumerating such graphs with small number of vertices.  
\end{exercise}

\begin{exercise} 
Find the exponential generating function 
$\sum_n a_n\frac{z^n}{n!}$ for the numbers $a_n$ of labeled 
n-vertex trees with 1-valent and 4-valent vertices. 
You may express the answer via inverse functions to polynomials.  
\end{exercise} 

\begin{exercise} Find the one-loop contribution to the 
effective action for $S(x)=\frac{x^2}{2}-\frac{gx^3}{6}$. 
That is, one has $S_{\rm eff}=S+\hbar S_1+O(\hbar^2)$, and 
you need to find $S_1$. Which Feynman diagrams need to be considered? 
\end{exercise} 

\begin{exercise} Consider the heat equation $u_t=\frac{1}{2}\Delta_B u$, where 
$\Delta_B$ is the Laplace operator attached to $B$ defined in Subsection \ref{inse}. 
It is solved by the heat flow $u(x,t)=e^{\frac{t\Delta_B}{2}}u(x,0)$. 
Show that the effective action $S_{\rm eff}$ for the action $S(x)=\frac{B(x,x)}{2}-\widetilde S(x)$ can be computed as the sum of contributions of 1PI Feynman diagrams without self-loops 
for the action $S^\circ(x):=\frac{B(x,x)}{2}-\widetilde S^\circ(x)$ where $\widetilde S^\circ(x):=
e^{\frac{\hbar\Delta_B}{2}}\widetilde S(x)$ obtained by transforming $\widetilde S$ by the heat flow for time $\hbar$. 
\end{exercise}

\section{Matrix integrals}

Let ${\frak h}_N$ be the space of Hermitian matrices of size $N$. 
The inner product on ${\frak h}_N$ is given by $B(A_1,A_2)=\Tr(A_1A_2)$.
In this section we will consider integrals of the form
$$
Z_N:=\hbar^{-\frac{N^2}{2}}\int_{{\frak h}_N}e^{-\frac{S(A)}{\hbar}}dA,
$$
where the Lebesgue measure $dA$ is normalized by the condition
$$
\int_{{\frak h}_N} e^{-\frac{{\rm Tr}(A^2)}{2}}dA=1
$$ 
(so we don't have to drag around the $\sqrt{2\pi}$-factors), and 
$$
S(A):=\frac{\Tr(A^2)}{2}-\sum_{m\ge 1} g_m\frac{\Tr(A^m)}{m}
$$
is the action functional.\footnote{Note that 
we divide by $m$ and not by $m!$. We will see below why 
such normalization will be more convenient.}
We will be interested in the behavior of the coefficients of the expansion 
of $Z_N$ in $g_i$ for large $N$. The study of this behavior 
will lead us to considering not simply Feynman graphs, but actually fat (or ribbon)
graphs, which are in fact 2-dimensional surfaces. 
Thus, before we proceed further, we need to do some 
2-dimensional combinatorial topology. 

\subsection{Fat graphs}
Recall from the proof of Feynman's theorem 
that given a finite collection of flowers 
and a matching $\sigma$ on the set $T$ of endpoints of their edges, 
we can obtain a graph $\Gamma_\sigma$ by connecting (or gluing)
the points which fall into the same pair. 

Now, given an $i$-flower, let us inscribe 
it in a closed disk $D$ (so that the ends of the edges 
are on the boundary). 
Then take its small tubular neighborhood in $D$. 
This produces a region with piecewise smooth boundary. We will 
equip this region and its boundary with the standard orientation, and call it 
a {\it fat i-valent flower}.\index{fat flower} The boundary of a fat $i$-valent flower  
has the form $P_1Q_1P_2Q_2 \ldots P_iQ_iP_1$, where 
$P_i,Q_i$ are the angle points, the intervals $P_jQ_j$ are 
arcs on $\partial D$, and $Q_jP_{j+1}$ are (smooth) arcs lying
inside $D$
(see Fig.~\ref{fig:10}).

\begin{figure}[htbp]
  \begin{center}
%%% figure 10
    \setlength{\unitlength}{2ex}
    \begin{picture}(22,12)(2,0)
      %%% make first figure
      \put(4,11.5){\makebox(0,0)[c]{3-valent flower}}
      \put(4,5){\circle*{.5}}
      \put(4,5){\line(1,0){3.6055}}
      \put(4,5){\line(-2,3){2}}
      \put(4,5){\line(-2,-3){2}}
 
      \put(18,11.5){\makebox(0,0)[c]{fat 3-valent flower}}
      \put(18,5){\circle{7.2111}}
      \put(18,5){\circle*{.4}}
      \put(18,5){\line(1,0){3.6055}}
      \put(18,5){\line(-2,3){2}}
      \put(18,5){\line(-2,-3){2}}

      \put(15.1,7.2){\circle*{.3}}
%%%      \put(13.5,7.5){$P_1$}
      \put(14.25,8){\makebox(0,0)[c]{$Q_1$}}

      \put(15.1,2.8){\circle*{.3}}
%%%      \put(13.5,1.8){$Q_3$}
      \put(14.25,2){\makebox(0,0)[c]{$P_2$}}

      \put(17,8.5){\circle*{.3}}
%%%      \put(16.5,9.3){$Q_1$}
      \put(16.75,9.5){\makebox(0,0)[c]{$P_1$}}

      \put(17,1.5){\circle*{.3}}
%%%      \put(16.5,0){$P_3$}
      \put(16.75,0.5){\makebox(0,0)[c]{$Q_2$}}

      \put(21.4,6){\circle*{.3}}
%%%      \put(22,6){$P_2$}
      \put(22.6,6){\makebox(0,0)[c]{$Q_3$}}

      \put(21.4,4){\circle*{.3}}
%%%      \put(22,3.5){$Q_2$}
      \put(22.6,4){\makebox(0,0)[c]{$P_3$}}

      \thicklines
      \qbezier(17,8.5)(18.5,5.5)(21.4,6)
      \qbezier(21.4,4)(18.5,4.5)(17,1.5)
      \qbezier(15.1,2.8)(17.5,5)(15.1,7.2)

%      \qbezier(15.1,2.8)(16.05,2.15)(17,1.5)
      \qbezier(15.1,2.8)(15.85,1.85)(17,1.5)

%      \qbezier(15.1,7.2)(16.05,7.85)(17,8.5)
      \qbezier(15.1,7.2)(15.85,8.15)(17,8.5)

%      \qbezier(21.4,6)(21.4,5)(21.4,4)
      \qbezier(21.4,6)(21.7,5)(21.4,4)

    \end{picture}

    \caption{ }
    \label{fig:10}
  \end{center}
\end{figure}

Now, given a collection of usual flowers and a matching $\sigma$ as above,
we can consider the corresponding fat flowers, and glue them, respecting 
the orientation, along intervals $P_jQ_j$ according to $\sigma$. This will produce a compact
oriented surface with boundary (the boundary is glued from intervals $Q_{j}P_{j+1}$). 
We will denote this surface by $\widetilde \Gamma_{\sigma}$, and call it the {\it 
fattening}\index{fattening of a graph} of $\Gamma$ with respect to $\sigma$. 
A fattening of a graph will be called a {\it fat (or ribbon) graph}.\index{fat graph}\index{ribbon graph}

Thus, a fat graph is not just an oriented
 surface with boundary, but 
such a surface together with a partition of this surface into fat flowers. 

Note that the same graph $\Gamma$ can have many 
fattenings which are non-homeomorphic 
(albeit homotopy equivalent) surfaces, and in particular the genus ${\rm g}$ of the
fattening is {\it not} determined by $\Gamma$ (see Fig.~\ref{fig:11}). 

\begin{figure}[htbp]
  \begin{center}
    
    %%% figure 11
    \setlength{\unitlength}{1ex}
    \begin{picture}(60,40)

      %%% TOP LEFT
      \qbezier(7,28)(.5,28)(.5,34)
      \qbezier(7,40)(.5,40)(.5,34)
      \qbezier(7,38)(3,38)(3,35)
      \qbezier(7,30)(3,30)(3,33)
      \put(3,35){\line(1,0){4}}
      \put(3,33){\line(1,0){4}}
      \put(7,38){\line(0,1){2}}
      \put(7,33){\line(0,1){2}}
      \put(7,28){\line(0,1){2}}

      \put(7,40){\line(1,0){12}}
      \put(7,38){\line(1,0){12}}
      \put(7,35){\line(1,0){12}}
      \put(7,33){\line(1,0){12}}
      \put(7,30){\line(1,0){12}}
      \put(7,28){\line(1,0){12}}
      
      \qbezier(19,28)(25.5,28)(25.5,34)
      \qbezier(19,40)(25.5,40)(25.5,34)
      \qbezier(19,38)(23,38)(23,35)
      \qbezier(19,30)(23,30)(23,33)
      \put(19,35){\line(1,0){4}}
      \put(19,33){\line(1,0){4}}
      \put(19,38){\line(0,1){2}}
      \put(19,33){\line(0,1){2}}
      \put(19,28){\line(0,1){2}}
      
%%%      \put(8,25){$\Gamma_1$}
      \put(8,25){\makebox(0,0)[c]{$\Gamma_1$}}
%%%      \put(20,25){$g=0$}
      \put(20,25){\makebox(0,0)[c]{${\rm g}=0$}}

      %%% TOP RIGHT
      \qbezier(41,28)(34.5,28)(34.5,34)
      \qbezier(41,40)(34.5,40)(34.5,34)
      \qbezier(41,38)(37,38)(37,35)
      \qbezier(41,30)(37,30)(37,33)
      \put(37,35){\line(1,0){4}}
      \put(37,33){\line(1,0){4}}
      \put(41,38){\line(0,1){2}}
      \put(41,33){\line(0,1){2}}
      \put(41,28){\line(0,1){2}}
      
      \cbezier(41,40)(45.5,40)(49.5,35)(53,35)
      \cbezier(41,38)(44.5,38)(48.5,33)(53,33)
      \cbezier(41,35)(45.5,35)(49.5,30)(53,30)
      \cbezier(41,33)(44.5,33)(48.5,28)(53,28)
      \qbezier(53,40)(50,40)(48,37)
      \qbezier(53,38)(51,38)(49.6,35.7)
      \put(45.8,34.1){\line(2,3){.75}}
      \put(47.3,32.7){\line(2,3){.75}}
      \qbezier(41,28)(44,28)(46,31)
      \qbezier(41,30)(43,30)(44.4,32.3)
      
      \qbezier(53,28)(59.5,28)(59.5,34)
      \qbezier(53,40)(59.5,40)(59.5,34)
      \qbezier(53,38)(57,38)(57,35)
      \qbezier(53,30)(57,30)(57,33)
      \put(53,35){\line(1,0){4}}
      \put(53,33){\line(1,0){4}}
      \put(53,38){\line(0,1){2}}
      \put(53,33){\line(0,1){2}}
      \put(53,28){\line(0,1){2}}
      
%%%      \put(42,25){$\Gamma_2$}
      \put(42,25){\makebox(0,0)[c]{$\Gamma_2$}}
%%%      \put(54,25){$g=0$}
      \put(54,25){\makebox(0,0)[c]{${\rm g}=0$}}

      %%% BOTTOM LEFT
      \qbezier(7,4)(.5,4)(.5,10)
      \qbezier(7,16)(.5,16)(.5,10)
      \qbezier(7,14)(3,14)(3,11)
      \qbezier(7,6)(3,6)(3,9)
      \put(3,11){\line(1,0){4}}
      \put(3,9){\line(1,0){4}}
      \put(7,14){\line(0,1){2}}
      \put(7,9){\line(0,1){2}}
      \put(7,4){\line(0,1){2}}
      
      \put(7,4){\line(1,0){12}}
      \put(7,6){\line(1,0){12}}
      \cbezier(7,16)(11.5,16)(15.5,11)(19,11)
      \cbezier(7,14)(10.5,14)(14.5,9)(19,9)
      \qbezier(7,11)(10,11)(11.5,12.5)
      \qbezier(7,9)(11,9)(13,11)
      \qbezier(19,16)(15,16)(13,14)
      \qbezier(19,14)(16,14)(14.5,12.5)
      
      \qbezier(19,4)(25.5,4)(25.5,10)
      \qbezier(19,16)(25.5,16)(25.5,10)
      \qbezier(19,14)(23,14)(23,11)
      \qbezier(19,6)(23,6)(23,9)
      \put(19,11){\line(1,0){4}}
      \put(19,9){\line(1,0){4}}
      \put(19,14){\line(0,1){2}}
      \put(19,9){\line(0,1){2}}
      \put(19,4){\line(0,1){2}}

%%%      \put(8,1){$\Gamma_3$}
      \put(8,1){\makebox(0,0)[c]{$\Gamma_3$}}
%%%      \put(20,1){$g=1$}
      \put(20,1){\makebox(0,0)[c]{${\rm g}=1$}}

    \end{picture}
    
    \caption{Gluing a fat graph from fat flowers}
    
    \label{fig:11}
  \end{center}
\end{figure}

\subsection{Matrix integrals in large $N$ limit, planar graphs, and the genus expansion}

Let us now return to the study of the integral $Z_N$. 
We have 
$$
B_m(A,...,A)=(m-1)!\Tr(A^m).
$$ 
Thus 
by Feynman's theorem, 
$$
\log Z_N=\sum_{\bold n}\prod_i \frac{(g_i\hbar^{\frac{i}{2}-1})^{n_i}}{i!^{n_i}n_i!}
\sum_{\sigma\in \Pi_c(T_{\bold n})}\Bbb F(\sigma),
$$
where the summation is taken over the set $\Pi_c(T_{\bold n})$ of all matchings 
of $T=T_{\bold n}$ that produce a connected graph $\Gamma_\sigma$,
and $\Bbb F(\sigma)$ denotes the contraction of 
the tensors $(m-1)!\Tr(A^m)$ using $\sigma$. 
So let us compute $\Bbb F(\sigma)$. 

Let $\lbrace e_i\rbrace$ be the standard basis of $\Bbb C^N$, and 
$\lbrace e_i^*\rbrace$ the dual basis of the dual space.
Then the tensor $\Tr(A^m)$ can be written as 
$$
\Tr(A^m)=\sum_{i_1, \ldots ,i_m=1}^N
(e_{i_1}\otimes e_{i_2}^*\otimes e_{i_2}\otimes e_{i_3}^*\otimes \cdots
\otimes e_{i_m}\otimes e_{i_1}^*,A^{\otimes m}).
$$
Thus 
$$
B_m=\sum_{s\in S_{m-1}}\sum_{i_1, \ldots ,i_m=1}^N
s(e_{i_1}\otimes e_{i_2}^*\otimes e_{i_2}\otimes e_{i_3}^*\otimes \cdots
\otimes e_{i_m}\otimes e_{i_1}^*)
$$
(sum over all possible cyclic orderings of edges of an $m$-valent flower). 
Hence
$$
\Bbb F(\sigma)=\sum_{s\in \prod_i S_{i-1}^{n_i}} \widetilde {\Bbb F}(s\sigma),
$$
where $\widetilde {\Bbb F}(\sigma)$ is obtained by contracting the 
tensors 
\begin{equation}\label{tenso}
\sum_{i_1,...,i_m=1}^Ne_{i_1}\otimes e_{i_2}^*\otimes e_{i_2}\otimes e_{i_3}^*\otimes \cdots
\otimes e_{i_m}\otimes e_{i_1}^*
\end{equation} 
according to the fat graph $\widetilde\Gamma_{\sigma}$. 
It follows that 
$$
\log Z_N=
\sum_{\bold n}\prod_i \frac{g_i^{n_i}\hbar^{n_i(\frac{i}{2}-1)}}{i!^{n_i}n_i!}
\sum_{\sigma\in \Pi(T_{\bold n})}\sum_{s\in \prod_i S_{i-1}^{n_i}}\widetilde {\Bbb F}(s\sigma)=
$$
$$
\sum_{\bold n}\prod_i \frac{g_i^{n_i}\hbar^{n_i(\frac{i}{2}-1)}}{i^{n_i}n_i!}
\sum_{\sigma}\widetilde {\Bbb F}(\sigma)
$$
(the product $\prod_i i!^{n_i}$ in the denominator got replaced by $\prod_i i^{n_i}$ since in the sum $\sum_{s,\sigma}\widetilde {\Bbb F}(s\sigma)$ every term 
$\widetilde {\Bbb F}(\sigma)$ occurs $|\prod_i S_{i-1}^{n_i}|=\prod_i (i-1)!^{n_i}$ times). 

For a surface $\Sigma$ with boundary, let $\nu(\Sigma)$ denote the number 
of connected components of the boundary. 

\begin{proposition}\label{fatt}
$\widetilde {\Bbb F}(\sigma)=N^{\nu(\widetilde\Gamma_{\sigma})}$.
\end{proposition}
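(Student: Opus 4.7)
The plan is to track the matrix indices through the contraction and show that the indices are constant along each boundary component of the fat graph, so that each boundary component contributes an independent sum over $\{1,\ldots,N\}$.

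First I would decompose the tensor at each $m$-valent flower according to its half-edges. The tensor in \eqref{tenso} groups naturally into $m$ factors, one per half-edge in cyclic order:
$$
\sum_{i_1,\ldots,i_m}\bigotimes_{k=1}^{m}\bigl(e_{i_k}\otimes e_{i_{k+1}}^*\bigr)\qquad (\text{indices mod }m).
$$
Thus the $k$-th half-edge carries a ``doubled'' index $(i_k,i_{k+1})$, which I would visualize as two strands running along the two long sides of the ribbon corresponding to the $k$-th half-edge. The index $i_{k+1}$, which is shared between the $k$-th and $(k+1)$-st half-edges, is exactly the label on the interior boundary arc $Q_kP_{k+1}$ of the fat flower (cf.\ Fig.~\ref{fig:10}). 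So the $m$ interior boundary arcs of the $m$-valent fat flower are naturally labeled by the $m$ summation indices $i_1,\ldots,i_m$.

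Next I would work out the gluing. The bilinear form $B$ on $\mathfrak h_N$ is $B(A_1,A_2)=\Tr(A_1A_2)$, so its inverse on $V\otimes V^*$ satisfies
$$
B^{-1}\bigl(e_a\otimes e_b^*,\,e_c\otimes e_d^*\bigr)=\delta_{ad}\,\delta_{bc}.
$$
Hence when $\sigma$ matches a half-edge carrying $(i_k,i_{k+1})$ with one carrying $(j_\ell,j_{\ell+1})$, the contraction sets $i_k=j_{\ell+1}$ and $i_{k+1}=j_\ell$. Topologically, this identification is exactly the rule for how the boundary strands of the two ribbons continue into each other after the fat flowers are glued along the boundary arcs $P_kQ_k$ and $P_\ell Q_\ell$ in an orientation-preserving way: each strand on one side of the ribbon joins a strand on the same side of the neighbouring ribbon, and the labels must agree because of the two Kronecker deltas.

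Combining these two observations, the index assignments $\{i_k\}$ propagate along the strands at each fat vertex (via the arcs $Q_kP_{k+1}$) and across each edge (via the matching $\sigma$), and are forced to remain constant along every closed strand loop obtained this way. But these closed strand loops are by construction exactly the boundary circles of the oriented surface $\widetilde\Gamma_\sigma$. Thus all index constraints reduce to one free summation $\sum_{i=1}^N 1=N$ per boundary component, and the remaining tensor factors pair off trivially (each $e_a\otimes e_b^*$ contributing $1$ once its labels are fixed). This gives
$$
\widetilde{\Bbb F}(\sigma)=N^{\nu(\widetilde\Gamma_\sigma)},
$$
as claimed. The main obstacle I anticipate is a careful verification that the combinatorial identification of strands dictated by $B^{-1}$ matches, orientation by orientation, the identification of boundary arcs used to build $\widetilde\Gamma_\sigma$; once that correspondence is set up cleanly the index count is immediate.
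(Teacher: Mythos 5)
Your proof is correct and follows essentially the same approach as the paper's own (quite terse) argument: labeling the boundary arcs of the fat flowers by the summation indices and observing that the contraction forces the labels to be constant along each boundary component. You have merely spelled out the details that the paper leaves to the reader -- the explicit per-half-edge factorization of the trace tensor and the identity $B^{-1}(e_a\otimes e_b^*,\,e_c\otimes e_d^*)=\delta_{ad}\delta_{bc}$ showing that the propagator implements exactly the orientation-preserving strand identification.
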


\begin{proof}
One can visualize each summand in the sum \eqref{tenso} as a labeling of 
the angle points $P_1,Q_1, \ldots ,P_m,Q_m$ on the boundary of a fat 
$m$-valent flower by $i_1,i_2,i_2,i_3, \ldots ,i_m,i_1$.  
Now, the contraction using $\sigma$ of some set of such monomials 
is nonzero iff the subscript is constant along each boundary component
of $\widetilde\Gamma_{\sigma}$ (see Fig. 12). This implies the result. 
\end{proof}

\begin{figure}[htbp]
  \begin{center}
    
    %%% figure 12
    \setlength{\unitlength}{1ex}
    \begin{picture}(26,16)(5,-1)

      \qbezier(7,0)(.5,0)(.5,6)
      \qbezier(7,12)(.5,12)(.5,6)
      \qbezier(7,10)(3,10)(3,7)
      \qbezier(7,2)(3,2)(3,5)
      \put(3,7){\line(1,0){4}}
      \put(3,5){\line(1,0){4}}
      \put(7,10){\line(0,1){2}}
      \put(7,5){\line(0,1){2}}
      \put(7,0){\line(0,1){2}}
      
      \put(7,0){\line(1,0){12}}
      \put(7,2){\line(1,0){12}}
      \cbezier(7,12)(11.5,12)(15.5,7)(19,7)
      \cbezier(7,10)(10.5,10)(14.5,5)(19,5)
      \qbezier(7,7)(10,7)(11.5,8.5)
      \qbezier(7,5)(11,5)(13,7)
      \qbezier(19,12)(15,12)(13,10)
      \qbezier(19,10)(16,10)(14.5,8.5)

      \qbezier(19,0)(25.5,0)(25.5,6)
      \qbezier(19,12)(25.5,12)(25.5,6)
      \qbezier(19,10)(23,10)(23,7)
      \qbezier(19,2)(23,2)(23,5)
      \put(19,7){\line(1,0){4}}
      \put(19,5){\line(1,0){4}}
      \put(19,10){\line(0,1){2}}
      \put(19,5){\line(0,1){2}}
      \put(19,0){\line(0,1){2}}

%%%      \put(7,13){$e_i$}
      \put(7,12.75){\makebox(0,0)[c]{\scriptsize $e_i$}}
%%%      \put(7,9){$e_j^*$}
      \put(7,9.3){\makebox(0,0)[c]{\scriptsize $e_j^*$}}
%%%      \put(7,7){$e_j$}
      \put(7,7){\makebox(0,0)[c]{\scriptsize $e_j$}}
%%%      \put(7,4){$e_k^*$}
      \put(7,4.2){\makebox(0,0)[c]{\scriptsize $e_k^*$}}
%%%      \put(7,2){$e_k$}
      \put(7,2){\makebox(0,0)[c]{\scriptsize $e_k$}}
%%%      \put(7,0){$e_l^*$}
      \put(7,-0.75){\makebox(0,0)[c]{\scriptsize $e_l^*$}}

%%%      \put(19,13){\hbox to 0pt{\hss{$e_m^*$}}}
      \put(19,12.75){\makebox(0,0)[c]{\scriptsize $e_m^*$}}
%%%      \put(19,9){\hbox to 0pt{\hss{$e_n$}}}
      \put(19,9.3){\makebox(0,0)[c]{\scriptsize $e_n$}}
%%%      \put(19,7){\hbox to 0pt{\hss{$e_n^*$}}}
      \put(19,7){\makebox(0,0)[c]{\scriptsize $e_n^*$}}
%%%      \put(19,4){\hbox to 0pt{\hss{$e_n^*$}}}
      \put(19,4.2){\makebox(0,0)[c]{\scriptsize $e_p$}}
%%%      \put(19,2){\hbox to 0pt{\hss{$e_n^*$}}}
      \put(19,2){\makebox(0,0)[c]{\scriptsize $e_p^*$}}
%%%      \put(19,0){\hbox to 0pt{\hss{$e_n^*$}}}
      \put(19,-0.75){\makebox(0,0)[c]{\scriptsize $e_m$}}

      \put(35,6){\makebox(0,0)[l]{\shortstack[l]{Contraction 
            nonzero iff \\ 
            $i = r$, $j = p$, $j = m$, $k = r$, \\
            $k = p$, $i = m$, \\
            that is \\
            $i = r = k = p = j = m$.}}}
    \end{picture}

    \caption{Contraction defined by a fat graph.}
    
    \label{fig:12}
  \end{center}
\end{figure}

Let $\widetilde G_c(\bold n)$ be the set of isomorphism classes of connected  
fat graphs with $n_i$ $i$-valent vertices for $i\ge 1$. For $\widetilde\Gamma\in \widetilde G_c(\bold n)$,
let $b(\widetilde\Gamma)$ be the number of edges minus the number of 
vertices of the underlying usual graph $\Gamma$. 

\begin{corollary}\label{fatt1}
$$
\log Z_N=\sum_{\bold n}\prod_i (g_i\hbar^{\frac{i}{2}-1})^{n_i}
\sum_{\widetilde\Gamma\in \widetilde G_c(\bold n)}\frac{N^{\nu(\widetilde\Gamma)}}
{|{\rm Aut}(\widetilde\Gamma)|}=
$$
$$
\sum_{\bold n}\prod_i g_i^{n_i}
\sum_{\widetilde\Gamma\in \widetilde G_c(\bold n)}\frac{N^{\nu(\widetilde\Gamma)}\hbar^{b(\widetilde\Gamma)}}
{|{\rm Aut}(\widetilde\Gamma)|}.
$$
\end{corollary}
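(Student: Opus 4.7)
The plan is to follow the pattern of the proof of Theorem \ref{Feyn1}, but with ``flowers'' replaced by ``fat flowers'' and the full symmetric group on edges at each flower replaced by the cyclic subgroup, since each $i$-valent fat flower carries a canonical cyclic ordering of its edges (coming from the cyclic ordering of the factors in $\mathrm{Tr}(A^m)$). What remains after the chain of identities displayed just before the statement is the formula
$$
\log Z_N=\sum_{\bold n}\prod_i \frac{g_i^{n_i}\hbar^{n_i(\frac{i}{2}-1)}}{i^{n_i}n_i!}\sum_{\sigma\in\Pi(T_\bold n)}\widetilde{\Bbb F}(\sigma),
$$
where $\widetilde{\Bbb F}(\sigma)=N^{\nu(\widetilde\Gamma_\sigma)}$ by Proposition \ref{fatt}. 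So the only thing left is to collect matchings $\sigma$ by the isomorphism class of the fat graph $\widetilde\Gamma_\sigma$ they produce, and to show that the weighted count precisely yields the factor $1/|\mathrm{Aut}(\widetilde\Gamma)|$.

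The key step is an orbit--stabilizer argument with the correct symmetry group. Let
$$
\Bbb G_\bold n^{\rm cyc}:=\prod_i\bigl(S_{n_i}\ltimes(\Bbb Z/i)^{n_i}\bigr),
$$
which permutes $i$-valent fat flowers among themselves and rotates the edges cyclically within each fat flower; its order is exactly $\prod_i i^{n_i}n_i!$, matching the denominator above. The group $\Bbb G_\bold n^{\rm cyc}$ acts on the set of matchings $\Pi(T_\bold n)$, and acts transitively on the subset $\Pi_{\widetilde\Gamma}(T_\bold n)$ of those matchings producing a given fat graph $\widetilde\Gamma$ (transitivity is the observation that any two realizations of $\widetilde\Gamma$ as a gluing of the given fat flowers differ by relabeling flowers of equal valency and cyclically rotating the edges at each flower). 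The stabilizer of a matching is, by definition of a fat-graph automorphism, the group $\mathrm{Aut}(\widetilde\Gamma)$ — here one must use that automorphisms of a fat graph are required to respect the cyclic orderings at vertices, which is precisely the ``symmetry respected by $\Bbb G_\bold n^{\rm cyc}$'' condition. Hence
$$
|\Pi_{\widetilde\Gamma}(T_\bold n)|=\frac{|\Bbb G_\bold n^{\rm cyc}|}{|\mathrm{Aut}(\widetilde\Gamma)|}=\frac{\prod_i i^{n_i}n_i!}{|\mathrm{Aut}(\widetilde\Gamma)|}.
$$

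Restricting to matchings producing connected fat graphs (justified by the earlier reduction of $\log Z_N$ to connected diagrams, exactly as in Theorem \ref{Feyn2}), and using Proposition \ref{fatt} to replace $\widetilde{\Bbb F}(\sigma)$ by $N^{\nu(\widetilde\Gamma)}$, I substitute back to get
$$
\sum_{\sigma}\widetilde{\Bbb F}(\sigma)=\sum_{\widetilde\Gamma\in\widetilde G_c(\bold n)}\frac{\prod_i i^{n_i}n_i!}{|\mathrm{Aut}(\widetilde\Gamma)|}\,N^{\nu(\widetilde\Gamma)},
$$
and the prefactor $\prod_i i^{n_i}n_i!$ cancels the denominator in the earlier display, giving the first equality of the corollary. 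The second equality is purely bookkeeping: $b(\widetilde\Gamma)=E-V=\sum_i n_i\bigl(\tfrac{i}{2}-1\bigr)$ since each $i$-valent vertex contributes $i$ half-edges, so moving $\hbar^{\sum_i n_i(i/2-1)}$ inside to become $\hbar^{b(\widetilde\Gamma)}$ completes the proof.

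The main obstacle is the verification that $\mathrm{Aut}(\widetilde\Gamma)$ in the sense of fat graphs (permutations of vertices and edges preserving the gluing and the cyclic orders at each vertex, possibly flipping self-loops and multi-edges) is indeed the stabilizer in $\Bbb G_\bold n^{\rm cyc}$ of a representative matching. This is a careful but straightforward combinatorial check: an element of $\Bbb G_\bold n^{\rm cyc}$ fixing $\sigma$ is exactly a symmetry of the labeled fat graph, and forgetting the labels identifies such symmetries with $\mathrm{Aut}(\widetilde\Gamma)$. Once this identification is clear, the rest is a formal manipulation.
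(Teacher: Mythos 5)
Your proof is correct and follows essentially the same approach as the paper: the orbit--stabilizer argument for the group $\Bbb G_{\bold n}^{\rm cyc}=\prod_i\bigl(S_{n_i}\ltimes(\Bbb Z/i\Bbb Z)^{n_i}\bigr)$ acting on matchings of $T_{\bold n}$, with the stabilizer of any $\sigma$ identified as $\mathrm{Aut}(\widetilde\Gamma_\sigma)$ and the factor $|\Bbb G_{\bold n}^{\rm cyc}|=\prod_i i^{n_i}n_i!$ cancelling the denominators, followed by the bookkeeping $b(\widetilde\Gamma)=\sum_i n_i(\tfrac{i}{2}-1)$. The paper asserts the transitivity and stabilizer facts without elaboration; your careful check of those points is exactly what is being used, and there is no gap.
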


\begin{proof} 
Let $\Bbb G_{\bold n}^{\rm cyc}:
=\prod_i (S_{n_i}\ltimes (\Bbb Z/i\Bbb Z)^{n_i})$.
This group acts on $T_{\bold n}$, so that 
$\widetilde\Gamma_{\sigma}=\widetilde\Gamma_{g\sigma}$, 
for any $g\in \Bbb G_{\bold n}^{\rm cyc}$. 
Moreover, the group acts transitively on the set of $\sigma$ giving a fixed fat graph
$\widetilde\Gamma_{\sigma}$, and 
the stabilizer of any $\sigma$ is 
${\rm Aut}(\widetilde\Gamma_{\sigma})$. This implies the result, as $|\Bbb G_{\bold n}^{\rm cyc}|=\prod_i i^{n_i}n_i!$ 
which cancels the denominators.
\end{proof}

Now for any compact connected surface $\Sigma$ with boundary, let ${\rm g}(\Sigma)$ be the 
genus of $\Sigma$. Then for a connected fat graph 
$\widetilde\Gamma$, 
$$
b(\widetilde\Gamma)=
2{\rm g}(\widetilde\Gamma)-2+\nu(\widetilde\Gamma)
$$ 
(minus the Euler characteristic).
Thus, defining 
$$
\widehat Z_N(\hbar):=Z_N(\tfrac{\hbar}{N}),
$$
we obtain

\begin{theorem}\label{lnZ}
$$
\log\widehat Z_N=
\sum_{\bold n}\prod_i (g_i\hbar^{\frac{i}{2}-1})^{n_i}
\sum_{\widetilde\Gamma\in \widetilde G_c(\bold n)}
\frac{N^{2-2{\rm g}(\widetilde\Gamma)}
}{|{\rm Aut}(\widetilde\Gamma)|}.
$$
\end{theorem}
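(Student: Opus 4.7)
The plan is to derive Theorem \ref{lnZ} as an essentially cosmetic reformulation of Corollary \ref{fatt1}, using two ingredients: the rescaling $\hbar \mapsto \hbar/N$ and the Euler-characteristic identity $b(\widetilde\Gamma) = 2{\rm g}(\widetilde\Gamma) - 2 + \nu(\widetilde\Gamma)$ stated right before the theorem. So the only real content to verify is this identity; everything else is bookkeeping.

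First I would start from Corollary \ref{fatt1} and substitute $\hbar \mapsto \hbar/N$ inside $Z_N$. Since $b(\widetilde\Gamma)$ is the exponent of $\hbar$ in each term, this substitution produces the factor $N^{-b(\widetilde\Gamma)}$, which combines with the existing $N^{\nu(\widetilde\Gamma)}$ to yield $N^{\nu(\widetilde\Gamma) - b(\widetilde\Gamma)}$. Thus
$$\log\widehat Z_N = \sum_{\bold n}\prod_i g_i^{n_i} \sum_{\widetilde\Gamma\in \widetilde G_c(\bold n)} \frac{N^{\nu(\widetilde\Gamma) - b(\widetilde\Gamma)}\, \hbar^{b(\widetilde\Gamma)}}{|{\rm Aut}(\widetilde\Gamma)|}.$$

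Next I would establish the Euler-characteristic identity. Viewing $\widetilde\Gamma$ as a compact oriented surface with boundary that deformation retracts onto its underlying graph $\Gamma$ (the retraction being evident from the construction: each fat flower contracts to its central vertex, each gluing arc to an edge), one has $\chi(\widetilde\Gamma) = V - E$, where $V = \sum_i n_i$ is the number of vertices and $E$ the number of edges of $\Gamma$. Capping off each of the $\nu(\widetilde\Gamma)$ boundary circles with a disk produces a closed oriented surface of genus ${\rm g}(\widetilde\Gamma)$, whose Euler characteristic is therefore $V - E + \nu(\widetilde\Gamma) = 2 - 2{\rm g}(\widetilde\Gamma)$. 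Rearranging gives exactly $\nu(\widetilde\Gamma) - b(\widetilde\Gamma) = 2 - 2{\rm g}(\widetilde\Gamma)$, so the $N$-exponent above becomes $N^{2-2{\rm g}(\widetilde\Gamma)}$.

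Finally I would regroup the $\hbar$ powers to match the stated form $\prod_i(g_i\hbar^{i/2-1})^{n_i}$. Counting half-edges gives $2E = \sum_i i\, n_i$, so $b(\widetilde\Gamma) = E - V = \sum_i n_i(\tfrac{i}{2}-1)$, and therefore $\hbar^{b(\widetilde\Gamma)} = \prod_i \hbar^{(i/2-1)n_i}$, which absorbs into $\prod_i(g_i\hbar^{i/2-1})^{n_i}$. No step here is really an obstacle; the mildest subtlety is just being careful that the deformation retract of $\widetilde\Gamma$ onto $\Gamma$ and the disk-capping argument are applied to an actual surface, which is guaranteed by the fat-flower-plus-gluing construction preceding the theorem.
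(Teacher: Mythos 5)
Your proof is correct and takes essentially the same route the paper does: the paper also derives the theorem directly from Corollary~\ref{fatt1} together with the substitution $\hbar\mapsto\hbar/N$ and the identity $b(\widetilde\Gamma)=2{\rm g}(\widetilde\Gamma)-2+\nu(\widetilde\Gamma)$, which it simply asserts as ``minus the Euler characteristic.'' You have merely filled in the Euler-characteristic computation (deformation retract to the graph, cap off boundary circles) and the half-edge count $2E=\sum_i i\,n_i$, both of which the paper leaves implicit.
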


This implies the following important result, due to t'Hooft. 

\begin{theorem} \label{tHooft} (1) There exists a limit 
$W_\infty:=\lim_{N\to \infty}\frac{\log\widehat Z_N}{N^2}$. This 
limit is given by the formula 
$$
W_\infty=
\sum_{\bold n}\prod_i (g_i\hbar^{\frac{i}{2}-1})^{n_i}
\sum_{\widetilde\Gamma\in \widetilde G_c(\bold n)[0]}
\frac{1}{|{\rm Aut}(\widetilde\Gamma)|},
$$
where $\widetilde G_c(\bold n)[0]$ denotes the set of {\bf planar}
connected fat graphs, i.e. those which have genus zero. 

(2) Moreover, there exists an expansion
$$
\frac{\log\widehat Z_N}{N^2}=\sum_{{\rm g}\in \Bbb Z_{\ge  0}}a_{\rm g}N^{-2\rm g},
$$
where 
$$
a_{\rm g}=\sum_{\bold n}\prod_i (g_i\hbar^{\frac{i}{2}-1})^{n_i}
\sum_{\widetilde\Gamma\in \widetilde G_c(\bold n)[{\rm g}]}
\frac{1}{|{\rm Aut}(\widetilde\Gamma)|},
$$
and 
$\widetilde G_c(\bold n)[{\rm g}]$ denotes the set of 
connected fat graphs of genus ${\rm g}$.
\end{theorem}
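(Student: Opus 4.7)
The plan is to prove both statements simultaneously by reorganizing the sum in Theorem \ref{lnZ} according to the genus of the fat graph rather than according to the valency vector $\bold n$. All identities will be understood as identities of formal power series in the $g_i$ and $\hbar^{1/2}$, so that ``limit'' and ``expansion in $N^{-2}$'' are interpreted coefficient by coefficient.

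First I would observe that for any fixed $\bold n = (n_0, n_1, \ldots)$ the set $\widetilde G_c(\bold n)$ is finite, and only finitely many genera occur among its elements. Indeed, from the identity $b(\widetilde\Gamma) = 2{\rm g}(\widetilde\Gamma) - 2 + \nu(\widetilde\Gamma)$ and the bound $\nu(\widetilde\Gamma) \ge 1$ one gets
$$
{\rm g}(\widetilde\Gamma) \;\le\; \tfrac{1}{2}\bigl(b(\widetilde\Gamma)+1\bigr) \;=\; \tfrac{1}{2}\Bigl(1 + \sum_i n_i(\tfrac{i}{2}-1)\Bigr),
$$
so the decomposition
$$
\widetilde G_c(\bold n) \;=\; \bigsqcup_{{\rm g}\ge 0}\widetilde G_c(\bold n)[{\rm g}]
$$
is a finite disjoint union. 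I can therefore swap the order of summation in Theorem \ref{lnZ} without any convergence subtlety, obtaining
$$
\log\widehat Z_N \;=\; \sum_{{\rm g}\ge 0} N^{2-2{\rm g}}\, a_{\rm g},
$$
with $a_{\rm g}$ exactly as defined in the statement.

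Dividing through by $N^2$ immediately yields the expansion claimed in part (2). For part (1), I note that the coefficient of any fixed monomial $\prod_i g_i^{n_i}\hbar^k$ in $\tfrac{1}{N^2}\log\widehat Z_N$ is a polynomial in $N^{-2}$ with nonnegative exponents; taking $N\to\infty$ coefficient-wise therefore picks out precisely the constant term, which is the ${\rm g}=0$ contribution $a_0$, i.e.\ the sum over planar connected fat graphs. This gives the stated formula for $W_\infty$.

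The only point that really needs attention is the finiteness of the set of genera appearing for each fixed $\bold n$, since this is what legitimizes the reordering and lets us interpret the genus expansion as an honest identity of formal power series; once this is checked via the Euler-characteristic inequality above, the theorem follows from Theorem \ref{lnZ} essentially by relabeling the summation index.
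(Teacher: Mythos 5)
Your proof is correct and follows the same route the paper implicitly takes: regrouping the sum in Theorem \ref{lnZ} by genus and noting that for each fixed $\bold n$ only finitely many genera occur (via $2{\rm g} = b(\widetilde\Gamma) + 2 - \nu(\widetilde\Gamma) \le b(\widetilde\Gamma)+1$). The paper presents the theorem as an immediate consequence of Theorem \ref{lnZ}, and you have simply filled in the reindexing and the coefficient-wise limit argument that justify this.
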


\begin{remark} Genus zero fat graphs are said to be planar 
because the underlying usual graphs can be put on the 2-sphere (and hence on the plane)
without self-intersections. 
\end{remark}

\begin{remark} t'Hooft's theorem may be interpreted
in terms of the usual Feynman diagram expansion. 
Namely, it implies that for large $N$, 
the leading contribution to $\log Z_N(\frac{\hbar}{N})$ 
comes from the terms in the Feynman diagram expansion 
corresponding to planar graphs (i.e. those that admit 
an embedding into the 2-sphere).
\end{remark} 

\subsection{Integration over real symmetric matrices}

One may also consider the matrix integral over the space 
${\frak s}_N$ of real symmetric matrices of size $N$. 
Namely, one puts
$$
Z_N=\hbar^{-\frac{N(N+1)}{4}}\int_{{\frak s}_N} e^{-\frac{S(A)}{\hbar}}dA,
$$ 
where $S$ and $dA$ are as above. Let us generalize 
Theorem \ref{tHooft} to this case. 

As before, consideration of the large $N$ limit leads 
to consideration of fat flowers and gluing of them. 
However, the exact nature of gluing 
is now somewhat different. 
Namely, in the Hermitian case we had 
$(e_i\otimes e_j^*,e_k\otimes e_l^*)=
\delta_{il}\delta_{jk}$, which forced us to glue 
fat flowers preserving orientation. 
On the other hand, 
in the real symmetric case $e_i^*=e_i$, and 
the inner product of the functionals 
$e_i\otimes e_j$ on the space of symmetric matrices
is given by 
$(e_i\otimes e_j,e_k\otimes e_l)=
\delta_{ik}\delta_{jl}+\delta_{il}\delta_{jk}$.
This means that besides the usual (orientation preserving) gluing of fat flowers,
we now must allow gluing with a twist of the ribbon by $180^\circ$. 
Fat graphs thus obtained will be called {\it twisted fat graphs}.\index{twisted fat graph}
That means, a twisted fat graph is a surface with boundary (possibly 
not orientable), together with a partition into fat flowers, and 
orientations on each of them (which may or may not match 
at the cuts, see Fig.13). 

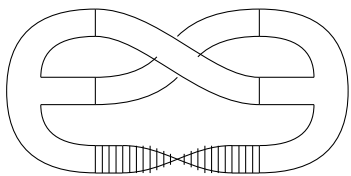
\begin{figure}[htbp]
  \begin{center}
    
    \setlength{\unitlength}{1ex}
    \begin{picture}(25.5,16)(0,3)

      \qbezier(7,4)(.5,4)(.5,10)
      \qbezier(7,16)(.5,16)(.5,10)
      \qbezier(7,14)(3,14)(3,11)
      \qbezier(7,6)(3,6)(3,9)
      \put(3,11){\line(1,0){4}}
      \put(3,9){\line(1,0){4}}
      \put(7,14){\line(0,1){2}}
      \put(7,9){\line(0,1){2}}
      \put(7,4){\line(0,1){2}}
      
%      \put(7,4){\line(1,0){12}}
%      \put(7,6){\line(1,0){12}}
      \put(7,4){\line(1,0){2}}
      \put(7,6){\line(1,0){2}}
      \put(17,4){\line(1,0){2}}
      \put(17,6){\line(1,0){2}}
      \cbezier(9,4)(12,4)(14,6)(17,6)
      \cbezier(9,6)(12,6)(14,4)(17,4)

%%% shading
%%% left side
      \put(7.5,4){\line(0,1){2}}
      \put(8,4){\line(0,1){2}}
      \put(8.5,4){\line(0,1){2}}
      \put(9,4){\line(0,1){2}}
      \put(9.5,4){\line(0,1){2}}
      \put(10,4){\line(0,1){2}}
      \put(10.5,4){\line(0,1){2}}
      \put(11,4){\line(0,1){2}}
      \put(11.5,4.2){\line(0,1){1.6}}
      \put(12,4.4){\line(0,1){1.2}}
      \put(12.5,4.6){\line(0,1){.8}}
%%% right side
      \put(18.5,4){\line(0,1){2}}
      \put(18,4){\line(0,1){2}}
      \put(17.5,4){\line(0,1){2}}
      \put(17,4){\line(0,1){2}}
      \put(16.5,4){\line(0,1){2}}
      \put(16,4){\line(0,1){2}}
      \put(15.5,4){\line(0,1){2}}
      \put(15,4){\line(0,1){2}}
      \put(14.5,4.2){\line(0,1){1.6}}
      \put(14,4.4){\line(0,1){1.2}}
      \put(13.5,4.6){\line(0,1){.8}}

      \cbezier(7,16)(11.5,16)(15.5,11)(19,11)
      \cbezier(7,14)(10.5,14)(14.5,9)(19,9)
      \qbezier(7,11)(10,11)(11.5,12.5)
      \qbezier(7,9)(11,9)(13,11)
      \qbezier(19,16)(15,16)(13,14)
      \qbezier(19,14)(16,14)(14.5,12.5)
      
      \qbezier(19,4)(25.5,4)(25.5,10)
      \qbezier(19,16)(25.5,16)(25.5,10)
      \qbezier(19,14)(23,14)(23,11)
      \qbezier(19,6)(23,6)(23,9)
      \put(19,11){\line(1,0){4}}
      \put(19,9){\line(1,0){4}}
      \put(19,14){\line(0,1){2}}
      \put(19,9){\line(0,1){2}}
      \put(19,4){\line(0,1){2}}

%%%      \put(8,1){\makebox(0,0)[c]{$\Gamma_3$}}
%%%      \put(20,1){\makebox(0,0)[c]{$g=1$}}

    \end{picture}
    
    \caption{Twisted fat graph}
    
    \label{fig:13}
  \end{center}
\end{figure}

Now one can show analogously to the 
Hermitian case that the $\frac{1}{N}$ expansion of $\log \widehat Z_N$ 
(where $\widehat Z_N:=Z_N(\frac{2\hbar}{N}))$ is given by the same formula 
as before, but with summation over 
the set $\widetilde G_c^{\rm tw}(\bold n)$ of twisted 
fat graphs:

\begin{theorem}\label{lnZ1}
$$
\log\widehat Z_N=
\sum_{\bold n}\prod_i (g_i\hbar^{\frac{i}{2}-1})^{n_i}
\sum_{\widetilde\Gamma\in \widetilde G_c^{\rm tw}(\bold n)}
\frac{N^{2-2{\rm g}(\widetilde\Gamma)}
}{|{\rm Aut}(\widetilde\Gamma)|}.
$$
\end{theorem}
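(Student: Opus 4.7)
The plan is to parallel the proof of Theorem \ref{lnZ}. I would start by applying Feynman's theorem (Theorem \ref{Feyn1}) to the action $S$ on the vector space $\mathfrak s_N$ with quadratic form $B(A_1,A_2)=\Tr(A_1A_2)$. As in the Hermitian case, the tensor at each internal $i$-valent vertex comes from $B_i(A,\ldots,A)=(i-1)!\Tr(A^i)$, which decomposes as a cyclic sum of products of matrix-element functionals $\lambda_{pq}(A):=A_{pq}$; this is what gives rise to the flower/fat-graph structure, identically to the Hermitian analysis. The only place where the Hermitian argument must genuinely be modified is the computation of the propagator $B^{-1}$.

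The key calculation is that on $\mathfrak s_N$ we have $\lambda_{pq}=\lambda_{qp}$, and a direct computation using, for example, the orthonormal basis $\{E_{ii},\,(E_{ij}+E_{ji})/\sqrt 2\}$ gives
\[
B^{-1}(\lambda_{ij},\lambda_{kl})=\tfrac12\bigl(\delta_{ik}\delta_{jl}+\delta_{il}\delta_{jk}\bigr),
\]
in contrast with the single term $\delta_{il}\delta_{jk}$ of the Hermitian case. The first summand encodes the standard orientation-preserving gluing of two ribbon ends used in Proposition \ref{fatt}, while the second summand interchanges the identifications at the two sides of the edge, which is precisely the effect of a $180^\circ$ twist of the ribbon. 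Expanding $\prod_{\rm edges}B^{-1}$ inside Wick's formula therefore converts the Hermitian sum over matchings $\sigma$ into a sum over pairs $(\sigma,\tau)$ of a matching together with a $\mathbb Z/2$-valued twist assignment $\tau$ on edges, each pair determining a well-defined twisted fat graph $\widetilde\Gamma_{\sigma,\tau}$. The same index-cycle argument as in the proof of Proposition \ref{fatt} then shows that each $(\sigma,\tau)$ contributes $(1/2)^E\,N^{\nu(\widetilde\Gamma_{\sigma,\tau})}$, where $\nu$ counts the boundary components of the (possibly non-orientable) surface.

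The rest is bookkeeping analogous to Corollary \ref{fatt1}. The group $\mathbb G_{\bold n}^{\rm cyc}=\prod_i(S_{n_i}\ltimes(\mathbb Z/i\mathbb Z)^{n_i})$ acts naturally on $(\sigma,\tau)$-pairs with orbits equal to isomorphism classes of twisted fat graphs and stabilizers equal to ${\rm Aut}(\widetilde\Gamma)$, converting the sum over $(\sigma,\tau)$ into the sum over $\widetilde\Gamma\in\widetilde G_c^{\rm tw}(\bold n)$. Combined with the Euler-characteristic identity $V-E+\nu=2-2{\rm g}(\widetilde\Gamma)$ — which continues to hold for possibly non-orientable twisted fat graphs provided ${\rm g}$ is allowed to take half-integer values, so that $2-2{\rm g}$ is the Euler characteristic of the closed surface obtained by capping the boundary circles — the rescaling $\widehat Z_N:=Z_N(2\hbar/N)$ is designed precisely to absorb both the $N$-dependence (exactly as in the Hermitian proof) and the extra $(1/2)^E$ factor from the propagator, producing the clean formula in the theorem. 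The main subtlety I expect is the orbit-stabilizer bookkeeping for self-loops: a self-loop admits an extra $\mathbb Z/2$ of ``flip'' symmetries that interacts nontrivially with its twist label, and one must check that these are absorbed into ${\rm Aut}(\widetilde\Gamma)$ in exactly the way needed for the factors of $2$ and $1/2$ to cancel against the $2\hbar/N$ rescaling.
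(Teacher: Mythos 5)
Your computation of the propagator on $\mathfrak s_N$, namely $B^{-1}(\lambda_{ij},\lambda_{kl})=\tfrac12(\delta_{ik}\delta_{jl}+\delta_{il}\delta_{jk})$, is correct (the paper's displayed formula for the pairing writes $\delta_{ik}\delta_{jl}+\delta_{il}\delta_{jk}$ without the $\tfrac12$, but as you compute the factor is really there), and your identification of the two terms with the orientation-preserving and the $180^\circ$-twisted ribbon gluings is the right mechanism for producing twisted fat graphs, as is the Euler-characteristic identity $V-E+\nu=2-2{\rm g}$ with ${\rm g}\in\tfrac12\Bbb Z_{\ge 0}$. The paper gives this proof only as a sketch (``one can show analogously''), and your plan follows the intended route.

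There is, however, a genuine gap in the orbit--stabilizer bookkeeping, and contrary to your last sentence it is not confined to self-loops. You use the cyclic group $\Bbb G_{\bold n}^{\rm cyc}=\prod_i(S_{n_i}\ltimes(\Bbb Z/i\Bbb Z)^{n_i})$ of order $\prod_i i^{n_i}n_i!$ exactly as in the Hermitian case, and assert that its orbits on $(\sigma,\tau)$-pairs are the isomorphism classes of twisted fat graphs. That was correct for Hermitian integrals precisely because only orientation-preserving gluings occur, so reversing the cyclic order of a flower changes the resulting oriented fat graph. In the twisted setting, reversing the cyclic order of one $i$-valent flower while simultaneously flipping the twist label on every edge incident to that flower produces a \emph{homeomorphic} twisted fat graph: the underlying surface, the partition into flowers, and the attaching data are unchanged, only the arbitrary disk orientations differ. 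So the group acting on $(\sigma,\tau)$-pairs with orbits the isomorphism classes is the dihedral version $\Bbb G_{\bold n}^{\rm dih}=\prod_i(S_{n_i}\ltimes D_i^{n_i})$ (with $D_i$ the dihedral group of order $2i$), which is $2^V$ times larger than $\Bbb G_{\bold n}^{\rm cyc}$, where $V=\sum_i n_i$ is the number of internal vertices.

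This factor of $2^V$ is not optional, and it exposes an arithmetic error in your final claim: the substitution $\hbar\mapsto 2\hbar/N$ contributes $2^{b(\widetilde\Gamma)}=2^{E-V}$, so it absorbs only $(1/2)^{E-V}$, not the full $(1/2)^E$ from the propagator. The leftover $(1/2)^V$ is exactly what the ratio $|\Bbb G^{\rm dih}_{\bold n}|/|\Bbb G^{\rm cyc}_{\bold n}|=2^V$ supplies. You can already see the discrepancy in the simplest nontrivial case, $\bold n$ with $n_4=1$: a direct Wick computation gives $\int_{\mathfrak s_N}\Tr(A^4)\,e^{-\Tr(A^2)/2}dA=\tfrac14(2N^3+5N^2+5N)$ (easy to verify for $N=1,2$), which after the rescaling forces $\sum_{\widetilde\Gamma}N^{2-2{\rm g}(\widetilde\Gamma)}/|{\rm Aut}(\widetilde\Gamma)|=\tfrac14 N^2+\tfrac58 N+\tfrac58$, whereas the orbit count with the cyclic group of order $4$ yields exactly twice this. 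Once you replace $\Bbb G^{\rm cyc}$ by $\Bbb G^{\rm dih}$ --- and correspondingly read ${\rm Aut}(\widetilde\Gamma)$ as automorphisms of the twisted fat graph that are allowed to reverse the orientation of individual flowers --- the argument closes.
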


Here the genus ${\rm g}$ of a (possibly non-orientable) surface is defined for closed surfaces by 
${\rm g}:=1-\frac{\chi}{2}$, where $\chi$ is the Euler characteristic. 
Thus the genus of $\Bbb R\Bbb P^2$ is $\frac{1}{2}$, the genus of the Klein bottle is $1$, and 
so on. 

In particular, we have the following analog of t'Hooft's theorem.

\begin{theorem} \label{tHooft1} (1) There exists a limit 
$W_\infty:=\lim_{N\to \infty}\frac{\log\widehat Z_N}{N^2}$. This 
limit is given by the formula 
$$
W_\infty=
\sum_{\bold n}\prod_i (g_i\hbar^{\frac{i}{2}-1})^{n_i}
\sum_{\widetilde\Gamma\in \widetilde G_c^{\rm tw}(\bold n)[0]}
\frac{1}{|{\rm Aut}(\widetilde\Gamma)|},
$$
where $\widetilde G_c^{\rm tw}(\bold n)[0]$ denotes the set of {\bf planar}
connected twisted fat graphs, i.e. those which have genus zero. 

(2) Moreover, there exists an expansion
$$
\frac{\log\widehat Z_N}{N^2}=\sum_{{\rm g}\in \frac{1}{2}\Bbb Z_{\ge 0}}a_{\rm g}N^{-2\rm g},
$$
where 
$$
a_{\rm g}=\sum_{\bold n}\prod_i (g_i\hbar^{\frac{i}{2}-1})^{n_i}
\sum_{\widetilde\Gamma\in \widetilde G_c^{\rm tw}(\bold n)[{\rm g}]}
\frac{1}{|{\rm Aut}(\widetilde\Gamma)|},
$$
and 
$\widetilde G_c^{\rm tw}(\bold n)[{\rm g}]$ denotes the set of 
connected twisted fat graphs which have genus ${\rm g}$.
\end{theorem}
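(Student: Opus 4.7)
The proof will mirror the Hermitian case (Theorem \ref{tHooft}) but use Theorem \ref{lnZ1} as the starting point. The plan is to start from the identity in Theorem \ref{lnZ1}, divide through by $N^2$, and reorganize the resulting double sum by genus. Concretely, Theorem \ref{lnZ1} gives
$$
\log \widehat Z_N = \sum_{\bold n} \prod_i (g_i\hbar^{\frac{i}{2}-1})^{n_i} \sum_{\widetilde\Gamma \in \widetilde G_c^{\rm tw}(\bold n)} \frac{N^{2-2{\rm g}(\widetilde\Gamma)}}{|{\rm Aut}(\widetilde\Gamma)|},
$$
so dividing by $N^2$ and grouping terms by genus yields
$$
\frac{\log \widehat Z_N}{N^2} = \sum_{{\rm g}\in \frac{1}{2}\Bbb Z_{\ge 0}} N^{-2{\rm g}} \sum_{\bold n} \prod_i (g_i\hbar^{\frac{i}{2}-1})^{n_i} \sum_{\widetilde\Gamma \in \widetilde G_c^{\rm tw}(\bold n)[{\rm g}]} \frac{1}{|{\rm Aut}(\widetilde\Gamma)|} = \sum_{\rm g} N^{-2{\rm g}}a_{\rm g}.
$$
This interchange of summation is legitimate because for each fixed $\bold n$ the set $\widetilde G_c^{\rm tw}(\bold n)$ is finite (it is a quotient of the finite set of matchings $\Pi(T_{\bold n})$), so the genus variable takes only finitely many values within the $\bold n$-term, and the whole identity is understood as an equality of formal power series in the coupling constants $g_i$ (with coefficients polynomial in $N^{-1}$ of bounded degree for fixed total weight). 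This proves part (2).

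For part (1), taking $N\to\infty$ coefficient-by-coefficient in the $g_i$ expansion kills every term with ${\rm g}>0$ and leaves exactly $a_0$, i.e.\ the sum over planar twisted fat graphs. The only place one has to be slightly careful compared to the Hermitian case is that now ${\rm g}$ ranges over $\frac{1}{2}\Bbb Z_{\ge 0}$ rather than $\Bbb Z_{\ge 0}$, because the gluing process may produce non-orientable surfaces (as in Figure \ref{fig:13}); this is accommodated by the convention ${\rm g}=1-\chi/2$ for closed surfaces, under which $\Bbb RP^2$ has genus $\frac{1}{2}$. The relation $b(\widetilde\Gamma)=2{\rm g}(\widetilde\Gamma)-2+\nu(\widetilde\Gamma)$, which is just $-\chi$ of the fat graph surface, continues to hold in the non-orientable setting since it is topologically invariant, so the power of $N$ in Theorem \ref{lnZ1} is correct.

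The expected main obstacle is essentially bookkeeping rather than substance: verifying that the rescaling $\widehat Z_N:=Z_N(\tfrac{2\hbar}{N})$ in the real symmetric case (with the factor of $2$ reflecting the doubling in $(e_i\otimes e_j,e_k\otimes e_l)=\delta_{ik}\delta_{jl}+\delta_{il}\delta_{jk}$) indeed converts the $N$-power $N^{\nu(\widetilde\Gamma)}\hbar^{b(\widetilde\Gamma)}$ coming from Proposition \ref{fatt} (adapted to twisted fat graphs) into $N^{2-2{\rm g}(\widetilde\Gamma)}\hbar^{b(\widetilde\Gamma)}$. Once Theorem \ref{lnZ1} is granted, however, both parts of Theorem \ref{tHooft1} follow immediately by the reorganization described above, exactly as in the proof of Theorem \ref{tHooft}.
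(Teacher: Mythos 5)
Your proposal is correct and follows the same route as the paper, which derives Theorem \ref{tHooft1} as an immediate corollary of Theorem \ref{lnZ1} by regrouping the sum over twisted fat graphs by genus (now ranging over $\frac{1}{2}\Bbb Z_{\ge 0}$) and, for part (1), passing to the $N\to\infty$ limit coefficient-by-coefficient in the couplings $g_i$. Your auxiliary remarks --- the formal-power-series justification of the interchange of sums, the convention ${\rm g}=1-\chi/2$ for non-orientable surfaces so that $b(\widetilde\Gamma)=2{\rm g}-2+\nu$ still equals $-\chi$, and the deferral of the $2\hbar/N$ bookkeeping to the (granted) Theorem \ref{lnZ1} --- are accurate and match the paper's presentation.
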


\begin{exercise} Consider the matrix integral over the  
space ${\frak q}_N$  of quaternionic Hermitian matrices of size $N$. 
Show that in this case the results are the same as in the real case, except
that each twisted fat graph counts with a sign equal 
to $(-1)^\nu$, where $\nu$ is the number of boundary components. In other words, 
$\log Z_N^{\rm quat}(\hbar)$ equals
$\log Z_{2N}^{\rm real}(\hbar)$ with $N$ replaced by $-N$. 

{\bf Hint:} Use that the quaternionic unitary group $U(N,\Bbb H)$ is a real form of $Sp(2N)$, 
and ${\frak q}_N$ is a real form of the representation 
of $\Lambda^2V$, where $V$ is the standard (vector) representation
of $Sp(2N)$. Compare to the case of real symmetric matrices,
where the relevant representation is $S^2V$ for $O(N)$, and the case of complex 
Hermitian matrices, where it is $V\otimes V^*$ for $GL(N)$. 
\end{exercise} 

\subsection{The number of ways to glue a surface from a polygon and the Wigner semicircle law}

Matrix integrals are so rich that even the simplest 
possible example reduces to a nontrivial counting problem. 
Namely, consider the matrix integral $Z_N$ over 
complex Hermitian matrices with $\hbar=1$ in the case 
$S(A)=\frac{\Tr(A^2)}{2}-s\frac{\Tr(A^{2m})}{2m}$, where $s^2=0$ 
(i.e. we work over the ring $\Bbb C[s]/(s^2)$). 
Then from Theorem \ref{tHooft} we get 
$$
\int_{{\frak h}_N}\Tr(A^{2m})e^{-\frac{\Tr(A^2)}{2}}dA
=P_m(N),
$$
where $P_m(N)$ is a polynomial given by the formula 
$$
P_m(N)=\sum_{{\rm g}\ge 0}\varepsilon_{\rm g}(m)N^{m+1-2\rm g},
$$ 
and $\varepsilon_{\rm g}(m)$ is the number of ways to glue 
a surface of genus ${\rm g}$ from a $2m$-gon with labeled sides, i.e., to match 
the sides and then glue the matching ones to each other in an orientation-preserving manner.
Indeed, in this case we have only one fat flower of valency $2m$, 
which has to be glued with itself; so a direct application of our Feynman rules
leads to counting ways to glue a surface of a given genus from a polygon.

The value of this integral is given by the following 
non-trivial theorem. 

\begin{theorem} (Harer-Zagier, \cite{HZ} 1986) \label{howmany}
$$
P_m(x)=\frac{(2m)!}{2^mm!}\sum_{p=0}^m\begin{pmatrix}m\\ p\end{pmatrix}
%%%2^p\frac{x(x-1)...(x-p)}{(p+1)!}.
2^p\frac{x(x-1) \ldots (x-p)}{(p+1)!}.
$$
\end{theorem}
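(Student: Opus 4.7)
My plan is to compute the exponential generating function
$$G(s) := \int_{\mathfrak{h}_N} \Tr(e^{sA})\, e^{-\Tr(A^2)/2}\, dA = \sum_{m\geq 0}\frac{s^{2m}}{(2m)!}\, P_m(N)$$
in closed form (odd moments vanish by $A\mapsto -A$ and $P_0(N)=N$), and then read off $P_m(N)$ by comparing coefficients. Since $\Tr(e^{sA})$ is $U(N)$-invariant, the Weyl integration formula reduces $G(s)$ to an integral of $\sum_i e^{s\lambda_i}$ against the joint eigenvalue density proportional to $\prod_{i<j}(\lambda_i-\lambda_j)^2 \prod_i e^{-\lambda_i^2/2}$. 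The classical orthogonal-polynomial trick---writing the Vandermonde as a determinant of probabilist's Hermite polynomials $He_k$ (orthogonal for $e^{-\lambda^2/2}d\lambda$)---collapses this to $G(s) = \sum_{k=0}^{N-1}\int_{\Bbb R} e^{s\lambda}\psi_k(\lambda)^2\,d\lambda$, where $\psi_k(\lambda) := He_k(\lambda)\, e^{-\lambda^2/4}/\sqrt{k!\sqrt{2\pi}}$ are the normalized Hermite functions; the normalization is pinned down by $\int \rho_N = N = P_0(N)$.

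Next I would use the Hermite generating function $\sum_k He_k(\lambda) z^k/k! = e^{\lambda z - z^2/2}$ to write $He_k(\lambda)^2 = (k!)^2 [z^k w^k]\, e^{\lambda(z+w)-(z^2+w^2)/2}$, and evaluate the Gaussian integral in $\lambda$ by completing the square:
$$\int_{\Bbb R} e^{s\lambda + \lambda(z+w) - (z^2+w^2)/2 - \lambda^2/2}\, d\lambda = \sqrt{2\pi}\, e^{s^2/2 + s(z+w) + zw}.$$
Expanding $e^{s(z+w)+zw} = e^{sz}e^{sw}e^{zw}$ as a triple product and reading off $[z^k w^k]$ (the balance conditions force an equal number $a$ of $sz$- and $sw$-factors, with the $zw$-factor filling in) gives
$$\int_{\Bbb R} e^{s\lambda}\psi_k(\lambda)^2\, d\lambda = e^{s^2/2}\sum_{a=0}^k \binom{k}{a}\frac{s^{2a}}{a!}.$$
Summing over $k=0,\ldots,N-1$ and applying the hockey-stick identity $\sum_{k=a}^{N-1}\binom{k}{a} = \binom{N}{a+1}$ then yields
$$G(s) = e^{s^2/2}\sum_{a=0}^{N-1}\frac{s^{2a}}{a!}\binom{N}{a+1}.$$

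Substituting $u := s^2/2$ so that $\tfrac{s^{2m}}{(2m)!} = \tfrac{u^m}{m!\,(2m-1)!!}$, the identity becomes
$$\sum_{m\geq 0}\frac{P_m(N)\, u^m}{m!\,(2m-1)!!} = e^u\sum_{a=0}^{N-1}\frac{(2u)^a}{a!}\binom{N}{a+1}.$$
Extracting $[u^m]$ from the right side via $[u^{m-a}]e^u = 1/(m-a)!$ and multiplying through by $m!$ produces $P_m(N)/(2m-1)!! = \sum_{a=0}^m \binom{m}{a}\, 2^a\, \binom{N}{a+1}$, which is the claimed formula since $(2m-1)!! = (2m)!/(2^m m!)$ and $\binom{N}{a+1} = N(N-1)\cdots(N-a)/(a+1)!$. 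The main obstacle is the orthogonal-polynomial reduction in the first paragraph: the Vandermonde-to-Hermite manipulation and the collapse to a one-point determinantal density is a nontrivial structural fact not established in the excerpt, and would either need a separate subsection, or could be sidestepped by attacking $G(s)$ via Wick's theorem and a symmetric-group/character-theoretic resummation of the fat-graph sum (staying within the excerpt's tools but at the price of heavier combinatorics).
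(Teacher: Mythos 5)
Your proposal is correct, and it follows the same overall scheme as the paper's proof: reduce to the eigenvalue integral by the Weyl integration formula, rewrite the Vandermonde as a Hermite determinant, and use orthogonality to collapse to a single sum over $k=0,\dots,N-1$. Where you diverge is in the second half: the paper keeps $m$ fixed and evaluates $\int x^{2m}H_j(x)^2 e^{-x^2}dx$ by first linearizing $H_j^2$ in the Hermite basis (Theorem~\ref{Hermite}(v)), then applying the closed-form moment formula (Theorem~\ref{Hermite}(iv)), and finally resumming over $j$ by a constant-term/geometric-series manipulation. You instead package the $m$-dependence into the exponential generating function $G(s)=\int \operatorname{Tr}(e^{sA})\,e^{-\operatorname{Tr}(A^2)/2}\,dA$, exploit the Hermite generating function $\sum_k He_k(\lambda)z^k/k!=e^{\lambda z-z^2/2}$ to reduce $\int e^{s\lambda}He_k(\lambda)^2 e^{-\lambda^2/2}\,d\lambda$ to a single Gaussian integral, and then read off $P_m(N)$ by coefficient extraction after the substitution $u=s^2/2$, with the sum over $k$ handled by the hockey-stick identity. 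The two routes buy essentially the same thing; yours is arguably tidier because it needs only the Hermite generating function rather than the linearization and moment identities. One small correction: you flag the ``collapse to the one-point determinantal density'' as an unestablished structural fact, but the paper does carry out precisely this reduction (writing $\prod_{i<j}(\lambda_i-\lambda_j)=\det(H_k(\lambda_\ell))$ up to normalization, expanding $\det^2$ as a double sum over $S_N\times S_N$, and using orthogonality to force the two permutations to coincide), so this step is neither missing from the paper's toolkit nor an obstacle to your argument — you just need to spell it out the same way.
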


The theorem is proved in the next subsections. 

Looking at the leading coefficient of $P_m$, we get

\begin{corollary}\label{Catalan}
The number of ways to glue a sphere from a $2m$-gon is 
the Catalan number $C_m=\frac{(2m)!}{m!(m+1)!}=\frac{1}{m+1}\binom{2m}{m}$.
\end{corollary}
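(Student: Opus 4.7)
The plan is to simply extract the leading coefficient of the polynomial $P_m(x)$ given by Theorem \ref{howmany} and identify it with $\varepsilon_0(m)$. Recall that, by the discussion preceding Theorem \ref{howmany}, we have the genus expansion
\[
P_m(N)=\sum_{{\rm g}\ge 0}\varepsilon_{\rm g}(m)N^{m+1-2{\rm g}},
\]
so $\varepsilon_0(m)$ is precisely the coefficient of $x^{m+1}$ in $P_m(x)$.

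First I would examine the Harer--Zagier formula
\[
P_m(x)=\frac{(2m)!}{2^mm!}\sum_{p=0}^m\binom{m}{p}2^p\frac{x(x-1)\cdots(x-p)}{(p+1)!}
\]
term by term. The factor $x(x-1)\cdots(x-p)$ is a polynomial of degree $p+1$ in $x$ with leading coefficient $1$, so the highest power of $x$ occurring in the sum comes uniquely from the term $p=m$, contributing $x^{m+1}$ with coefficient $1/(m+1)!$ (times the binomial and the $2^p$ prefactor).

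Next I would compute this contribution explicitly. Taking $p=m$, the coefficient of $x^{m+1}$ in $P_m(x)$ is
\[
\frac{(2m)!}{2^mm!}\cdot\binom{m}{m}\cdot 2^m\cdot\frac{1}{(m+1)!}=\frac{(2m)!}{m!\,(m+1)!}=C_m.
\]
Since this coefficient equals $\varepsilon_0(m)$, the number of ways to glue a sphere from a $2m$-gon with labeled sides is $C_m$, as claimed.

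There is essentially no obstacle here: the whole content is that the leading term of the polynomial in $x$ is controlled solely by the $p=m$ summand, and the arithmetic collapses the prefactors to the standard Catalan formula. The only thing one should double-check is that the topmost power $x^{m+1}$ indeed corresponds to genus zero (which is immediate from $m+1-2{\rm g}=m+1\Leftrightarrow{\rm g}=0$) and that no lower-$p$ term contributes to $x^{m+1}$ (clear since $\deg x(x-1)\cdots(x-p)=p+1\le m$ for $p<m$).
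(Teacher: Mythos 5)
Your proof is correct and matches the paper's own argument: the paper derives this corollary precisely by ``looking at the leading coefficient of $P_m$,'' and you have simply spelled out that the $p=m$ term is the unique contributor to $x^{m+1}$ and that the prefactors collapse to $C_m$. (The paper also sketches a second, elementary combinatorial proof via noncrossing matchings of midpoints and a Catalan recursion, but that is presented as an alternative, not the primary route.)
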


Corollary \ref{Catalan} actually has another (elementary
combinatorial) proof, which is as follows. For each matching
$\sigma$ on the set of sides of the $2m$-gon, let us connect the
midpoints of the matched sides by straight lines (Fig.14). It is
geometrically evident that if these lines don't intersect then
the gluing will give a sphere. We claim that the converse is true
as well. Indeed, assume the contrary, i.e. that for cyclically ordered edges $a,b,c,d$, the edge
$a$ connects to $c$ and $b$ to $d$. Then it is easy to see that gluing these two pairs of edges 
gives a torus with a hole (or without if $m=2$). But an (open) torus with a hole can't be embedded into a sphere (e.g. it contains a copy of $K_5$), contradiction. 

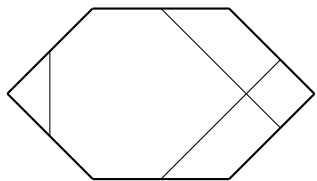
\begin{figure}[htbp]
  \begin{center}
    
    \setlength{\unitlength}{2.5ex}
    \begin{picture}(10,5)

      \thicklines
      \put(3,0){\line(1,0){4}}
      \put(3,0){\line(-1,1){2.5}}
      \put(0.5,2.5){\line(1,1){2.5}}
      \put(3,5){\line(1,0){4}}
      \put(7,0){\line(1,1){2.5}}
      \put(7,5){\line(1,-1){2.5}}

      \thinlines
      \put(1.75,1.25){\line(0,1){2.5}}
      \put(5,0){\line(1,1){3.5}}
      \put(5,5){\line(1,-1){3.5}}

    \end{picture}
    \caption{Matching of sides of a $6$-gon.}
    \label{fig:14}
  \end{center}
\end{figure}

Now it remains to count the number of ways to connect midpoints
of sides with lines without intersections. Suppose we draw one
such line, such that the number of sides on the left of it is
$2k$ and on the right is $2l$ (so that $k+l=m-1$). Then we face
the problem of connecting the two sets of $2k$ and $2l$ sides
without intersections. This shows that the number of gluings
$D_m$ satisfies the recursion
$$
D_m=\sum_{k+l=m-1}D_kD_l,\ D_0=1.
$$
%%%In other words, the generating function $\sum D_mx^m=1+x+...$ satisfies 
In other words, the generating function 
$$
h(x):=\sum_m D_mx^m=1+x+ \cdots
$$ 
satisfies the equation $h(x)-1=xh(x)^2$. This implies that 
$$
h(x)=\frac{1-\sqrt{1-4x}}{2x},
$$ 
which yields that $D_m=C_m$.  
We are done. 

Corollary \ref{Catalan} can be used to derive 
the following 
fundamental result from the theory 
of random matrices, discovered by Wigner in 1955. 

\begin{theorem} (Wigner's semicircle law)
Let $f$ be a continuous function on $\Bbb R$ 
of at most polynomial growth at infinity. 
Then 
$$
\lim_{N\to \infty} \frac{1}{N}\int_{{\frak h}_N}{\rm }{\rm Tr} f(\tfrac{A}{\sqrt{N}})e^{-\frac{{\rm Tr}(A^2)}{2}}=
\frac{1}{2\pi}\int_{-2}^2f(x)\sqrt{4-x^2}dx.
$$
\end{theorem}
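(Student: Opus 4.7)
The plan is to use the moment method. Recall that Hermitian $A$ has real eigenvalues $\lambda_1,\dots,\lambda_N$, and $\frac{1}{N}\mathrm{Tr}\,f(A/\sqrt N)=\int f(x)\,d\mu_N(x)$ where $\mu_N:=\frac{1}{N}\sum_i\delta_{\lambda_i/\sqrt N}$ is the (random) empirical spectral measure. Let $\mu_{sc}$ denote the probability measure $\frac{1}{2\pi}\sqrt{4-x^2}\,\mathbf 1_{[-2,2]}(x)\,dx$ on $\Bbb R$. The goal is to prove that $\mathbb E\int f\,d\mu_N\to \int f\,d\mu_{sc}$.

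\textbf{Step 1 (Moment convergence).} For odd $n$, $\mathbb E\int x^n\,d\mu_N=0$ by the symmetry $A\mapsto -A$. For even $n=2m$,
$$\mathbb E\int x^{2m}\,d\mu_N=\frac{1}{N^{m+1}}\int_{\mathfrak h_N}\mathrm{Tr}(A^{2m})e^{-\mathrm{Tr}(A^2)/2}dA=\frac{P_m(N)}{N^{m+1}}.$$
By Theorem~\ref{howmany}, $P_m(N)$ is a polynomial of degree $m+1$ in $N$ whose leading coefficient, by Corollary~\ref{Catalan}, equals the Catalan number $C_m$. Hence $\mathbb E\int x^{2m}\,d\mu_N\to C_m$ as $N\to\infty$.

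\textbf{Step 2 (Catalan numbers are the semicircle moments).} A direct computation using the substitution $x=2\sin\theta$ gives
$$\frac{1}{2\pi}\int_{-2}^{2}x^{2m}\sqrt{4-x^2}\,dx=\frac{4^{m+1}}{2\pi}\int_{-\pi/2}^{\pi/2}\sin^{2m}\theta\cos^2\theta\,d\theta=C_m,$$
by the standard Beta-integral evaluation. All odd moments of $\mu_{sc}$ vanish by parity. Thus Steps 1--2 give the theorem whenever $f$ is a polynomial.

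\textbf{Step 3 (From polynomials to continuous functions of polynomial growth).} Since $C_m\le 4^m$, Step 1 shows that for every $m\ge 0$ there is a constant $K_m$ with $\mathbb E\int x^{2m}\,d\mu_N\le K_m$ for all $N$. By Chebyshev's inequality this gives a uniform tail bound
$$\mathbb E\bigl[\mu_N(\{|x|>R\})\bigr]\le K_m R^{-2m},\qquad R>0,\ N\ge 1,$$
which can be made arbitrarily small by choosing first $m$ and then $R$ large. Given $f$ continuous with $|f(x)|\le C(1+|x|^k)$, pick $R$ so that the above tail (combined with the polynomial bound on $|f|$) contributes less than $\varepsilon$ to $\mathbb E\int|f|\,d\mu_N$ uniformly in $N$, and less than $\varepsilon$ to $\int|f|\,d\mu_{sc}$. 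On $[-R,R]$ apply Weierstrass approximation to find a polynomial $p$ with $\sup_{[-R,R]}|f-p|<\varepsilon$. Then Step 1 applied to $p$ together with the tail bounds yields
$$\limsup_{N\to\infty}\Bigl|\mathbb E\!\int\! f\,d\mu_N-\!\int\! f\,d\mu_{sc}\Bigr|\le O(\varepsilon),$$
and sending $\varepsilon\to 0$ completes the proof.

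The substantive content sits entirely in Step~1, i.e.\ in Theorem~\ref{howmany}, which is already proved. The only genuinely non-trivial point in the argument above is making the tail estimate of Step~3 quantitative enough to absorb the polynomial growth of $f$; this is where one must exploit that the limiting moments $C_m$ grow only like $4^m$, matching the support $[-2,2]$ of $\mu_{sc}$.
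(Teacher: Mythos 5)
Your argument follows the same moment-method strategy as the paper: reduce to $f=x^{2m}$, identify $\lim_N P_m(N)/N^{m+1}=C_m$ via Theorem~\ref{howmany}/Corollary~\ref{Catalan}, and match the Catalan numbers to the even moments of the semicircle density. The only addition is your Step~3, which correctly carries out the uniform-integrability/tail-bound argument that the paper leaves as an exercise (``Justify this step'') to pass from polynomials to continuous functions of polynomial growth; this is a sound and complete treatment of that gap.
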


This theorem is called the semicircle law because 
it says that the graph of the density of eigenvalues 
of a large random Hermitian matrix 
distributed 
according to the ``Gaussian unitary ensemble'' 
(i.e. with density $e^{-\frac{\Tr(A^2)}{2}}dA$)
is a semicircle. In particular, we see that for large $N$ 
almost all eigenvalues of $A$ 
belong to the interval $[-2\sqrt{N},2\sqrt{N}]$, so  
the limit does not depend on the values of $f$ 
outside $[-2,2]$. 

\begin{proof} 
By Weierstrass' theorem on uniform approximation of a continuous function on an interval
 by polynomials, we may assume 
that $f$ is a polynomial. (Exercise: Justify this step). 
Thus, it suffices to check the result if 
$f(x)=x^{2m}$. In this case, 
by Corollary \ref{Catalan}, the left hand side is $C_m$. 
On the other hand, an elementary computation yields
$$
\frac{1}{2\pi}\int_{-2}^2x^{2m}\sqrt{4-x^2}dx=C_m,
$$ 
which implies 
the theorem. 
\end{proof}

\subsection{Hermite polynomials}

The proof\footnote{I adopted this proof from 
D.Jackson's notes.} of Theorem \ref{howmany} given below
uses Hermite polynomials. So let us recall their properties. 

Hermite polynomials are defined by the formula 
$$
H_n(x)=(-1)^ne^{x^2}\frac{d^n}{dx^n}e^{-x^2}.
$$
So the leading term of $H_n(x)$ is $(2x)^n$. 

We collect the standard properties of $H_n(x)$ in the following theorem. 

\begin{theorem} \label{Hermite}
(i) The exponential generating function of $H_n(x)$ is 
$$
f(x,t)=\sum_{n\ge 0}H_n(x)\frac{t^n}{n!}
=e^{2xt-t^2}.
$$ 

(ii) $H_n(x)$ satisfy the differential equation
$f''-2xf'+2nf=0$. In other words,
$H_n(x)e^{-x^2/2}$ are eigenfunctions 
of the operator $L=-\frac{1}{2}\partial^2+\frac{1}{2}x^2$ 
(Hamiltonian of the quantum harmonic oscillator) with eigenvalues 
$n+\frac{1}{2}$. 

(iii) $H_n(x)$ are orthogonal:
$$
\frac{1}{\sqrt{\pi}}\int_{-\infty}^{\infty}e^{-x^2}H_m(x)H_n(x)dx=
2^nn!\delta_{mn}.
$$
Moreover, the functions $H_n(x)e^{-\frac{x^2}{2}}$ form an orthogonal basis of $L^2(\Bbb R)$.  

(iv) One has 
$$
\frac{1}{\sqrt{\pi}}\int_{-\infty}^{\infty}e^{-x^2}x^{2m}H_{2k}(x)dx=
\frac{(2m)!}{(m-k)!}2^{2(k-m)}
$$
(if $k>m$, the answer is zero).

(v) One has
$$
\frac{H_r^2(x)}{2^rr!}=\sum_{k=0}^r\frac{r!}{2^kk!^2(r-k)!}H_{2k}(x).
$$
\end{theorem}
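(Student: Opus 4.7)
The plan is to derive everything from the exponential generating function in (i), which I will establish first, and then use integration by parts and Gaussian integral computations for the orthogonality statements.

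For (i), the idea is to Taylor expand $e^{-(x-t)^2}$ in $t$ around $t=0$: since $\partial_x^n e^{-x^2}|_{x\mapsto x-t}$ evaluated suitably gives $(-1)^n (\partial_x^n e^{-x^2})(x) t^n/n!$, one obtains $e^{-(x-t)^2} = \sum_n \tfrac{t^n}{n!}(-1)^n \partial_x^n e^{-x^2}$. Multiplying by $e^{x^2}$ and using $x^2 - (x-t)^2 = 2xt - t^2$ yields the generating function identity. Part (ii) follows from differentiating $f(x,t) = e^{2xt-t^2}$: the relations $\partial_x f = 2tf$ and $\partial_t f = (2x - 2t)f$ translate termwise into $H_n' = 2nH_{n-1}$ and $H_{n+1} = 2xH_n - 2nH_{n-1}$; eliminating $H_{n-1}$ gives the ODE. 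Conjugating the ODE by $e^{-x^2/2}$ via $\psi_n := H_n e^{-x^2/2}$ turns $H_n'' - 2xH_n' + 2nH_n = 0$ into $-\tfrac12 \psi_n'' + \tfrac12 x^2 \psi_n = (n+\tfrac12)\psi_n$.

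For the orthogonality half of (iii), assume $m \le n$ and integrate by parts $n$ times in $\int H_m(x) \cdot (-1)^n \partial_x^n e^{-x^2} dx$ (boundary terms vanish by the Gaussian decay), producing $\int H_m^{(n)}(x) e^{-x^2} dx$. If $m<n$ this vanishes; if $m=n$, then $H_n^{(n)} = 2^n n!$ (leading coefficient times $n!$), and the Poisson integral finishes the job. Completeness of $\{H_n e^{-x^2/2}\}$ in $L^2(\mathbb{R})$ follows from the standard fact that if $f\in L^2$ is orthogonal to all $x^n e^{-x^2/2}$ then its Fourier transform (which is entire by Paley--Wiener-type estimates on $fe^{-x^2/2}$) vanishes identically; alternatively one invokes that the harmonic oscillator has discrete spectrum with these as eigenfunctions.

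Part (iv) is cleanest via the generating function: compute
\[
\frac{1}{\sqrt\pi}\int_{-\infty}^\infty e^{-x^2} x^{2m} e^{2xt - t^2}\,dx = \frac{1}{\sqrt\pi}\int_{-\infty}^\infty e^{-(x-t)^2} x^{2m}\,dx,
\]
and expand $(y+t)^{2m}$ after substituting $y = x-t$, using $\tfrac{1}{\sqrt\pi}\int e^{-y^2} y^{2i}dy = (2i)!/(i!\,4^i)$; extracting the $t^{2k}$ coefficient and multiplying by $(2k)!$ gives the formula. For (v) I will use the product formula for Hermite polynomials, which also drops out of generating functions: writing $2x(s+t) - s^2 - t^2 = [2x(s+t) - (s+t)^2] + 2st$ and expanding $e^{2st}$ separately gives
\[
H_m(x)H_n(x) = \sum_{k} \binom{m}{k}\binom{n}{k} 2^k k!\, H_{m+n-2k}(x);
\]
specializing to $m=n=r$ and reindexing $j = r-k$ yields exactly the stated expansion after dividing by $2^r r!$. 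The main (minor) obstacle is keeping bookkeeping of factorials straight in (iv) and (v); there are no conceptual difficulties, since the generating function in (i) reduces every assertion to manipulations of a single Gaussian exponential.
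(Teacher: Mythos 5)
Your proposal is correct, and it diverges from the paper's proof in a couple of places in a way worth noting. For the orthogonality in (iii) you integrate $H_m(x)\cdot (-1)^n\partial_x^n e^{-x^2}$ by parts $n$ times, which is the classical Rodrigues-formula argument; the paper instead integrates the product of two generating functions $\frac{1}{\sqrt\pi}\int f(x,t)f(x,u)e^{-x^2}\,dx = e^{2ut}$ and reads off coefficients. Both are clean; yours makes the mechanism (high derivative kills low-degree polynomial) more transparent, while the paper's keeps everything inside one computational paradigm. The bigger divergence is in (v): you first derive the Hermite product formula $H_mH_n=\sum_k\binom{m}{k}\binom{n}{k}2^k k!\,H_{m+n-2k}$ from the identity $f(x,s)f(x,t)=e^{2st}f(x,s+t)$ and then specialize $m=n=r$ and reindex, whereas the paper avoids the product formula altogether by computing the triple generating-function integral $\frac{1}{\sqrt\pi}\int f(x,t)f(x,u)f(x,v)e^{-x^2}\,dx=e^{2(ut+tv+uv)}$ and extracting the coefficient of $t^r u^r v^{2k}$, using orthogonality from (iii). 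Your route has the advantage of producing the more general product identity en route (which is a useful fact in its own right and doesn't depend on (iii)); the paper's route is a bit more parallel to its proof of (iii) and (iv). For (ii) you eliminate $H_{n-1}$ between the two recurrences $H_n'=2nH_{n-1}$ and $H_{n+1}=2xH_n-2nH_{n-1}$, while the paper observes that $f$ satisfies the single PDE $f_{xx}-2xf_x+2tf_t=0$; these are essentially the same move. Parts (i) and (iv) agree with the paper. I verified the factorial bookkeeping in your (iv) and (v) and it is correct.
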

 
\begin{proof} (sketch)
(i) Follows immediately 
from the fact that the operator $\sum_{n\ge 0} (-1)^n\frac{t^n}{n!}\frac{d^n}{dx^n}$
maps a function $g(x)$ to $g(x-t)$. 

(ii) Follows from (i) and the fact that the function $f(x,t)$ 
satisfies the PDE 
$$
f_{xx}-2xf_x+2tf_t=0.
$$ 

(iii) The orthogonality follows from (i) by direct integration:
$$
\frac{1}{\sqrt{\pi}}\int_{\Bbb R}f(x,t)f(x,u)e^{-x^2}dx=\frac{1}{\sqrt{\pi}}\int_{\Bbb R}e^{2ut-(x-u-t)^2}dx=e^{2ut}.
$$ 
Thus the functions $H_n(x)e^{-\frac{x^2}{2}}$ form an orthogonal system in $L^2(\Bbb R)$. 

To show that these functions are complete, denote by $E\subset L^2(\Bbb R)$ the closure of their span $\Bbb C[x]e^{-\frac{x^2}{2}}$. By approximating the function $e^{ipx}$ by its Taylor polynomials, it is easy to see that $e^{ipx-\frac{x^2}{2}}\in E$ for any $p\in \Bbb R$. 
Thus for any compactly supported smooth $\phi\in C_0^\infty(\Bbb R)$ we have 
$$
\phi(x)e^{-\frac{x^2}{2}}=\int_{\Bbb R} \widehat \phi(p)e^{ipx-\frac{x^2}{2}}dp\in E.
$$
where $\widehat \phi$ is the (suitably normalized) Fourier transform of $\phi$. In other words, $C_0^\infty(\Bbb R)$ is dense in $E$. 
But $C_0^\infty(\Bbb R)$ is clearly dense in $L^2(\Bbb R)$, so $E=L^2(\Bbb R)$, as claimed. 

(iv) By (i), one should calculate 
$\int_{\Bbb R}x^{2m}e^{2xt-t^2}e^{-x^2}dx$. 
This integral equals 
$$
\int_{\Bbb R}x^{2m}e^{-(x-t)^2}dx=\int_{\Bbb R}(y+t)^{2m}e^{-y^2}dy=
=\sqrt{\pi}
\sum_p \begin{pmatrix} 2m\\ 2p\end{pmatrix} \frac{(2m-2p)!}{2^{m-p}(m-p)!}t^{2p}.
$$
The result is now obtained by extracting individual coefficients. 

(v) By (iii), it suffices to show that 
$$
\frac{1}{\sqrt{\pi}}\int_{\Bbb R}H_r^2(x)H_{2k}(x)
e^{-x^2}dx=\frac{2^{r+k}r!^2(2k)!}{k!^2(r-k)!}
$$
To prove this identity, let us integrate the product of three 
generating functions. By (i), we have 
$$
\frac{1}{\sqrt{\pi}}\int_{\Bbb R}f(x,t)f(x,u)f(x,v)e^{-x^2}dx=
$$
$$
\frac{1}{\sqrt{\pi}}\int_{\Bbb R}e^{2(ut+uv+tv)-(x-u-t-v)^2}dx
=e^{2(ut+tv+uv)}.
$$
Extracting the coefficient of $t^ru^rv^{2k}$, we get the result.
\end{proof}

\subsection{Proof of Theorem \ref{howmany}}
We need to compute the integral
$$
\int_{{\frak h}_N}\Tr(A^{2m})e^{-\frac{\Tr(A^2)}{2}}dA.
$$
To do this, we note that the integrand 
is invariant with respect to conjugation
by unitary matrices. Therefore, 
the integral can be reduced to an 
%%%integral over the eigenvalues $\lambda_1,...,\lambda_N$ of $A$. 
integral over the eigenvalues $\lambda_1, \ldots ,\lambda_N$ of $A$. 

More precisely, 
consider the spectrum map $\sigma: {\frak h}_N\to \Bbb R^N/S_N$. 
It is well known (due to H.Weyl) that the direct image 
$\sigma_*dA$ is given by the formula 
$\sigma_*dA=Ce^{-\sum_i\frac{\lambda_i^2}{2}}\prod_{i<j} (\lambda_i-\lambda_j)^2d\lambda$, 
where $C>0$ is a normalization constant that will not be relevant to us. 
Thus, we have
$$
P_m(N)=\frac{NJ_m}{J_0},\ J_m:=\int_{\Bbb R^N}(\frac{1}{N}\sum_i\lambda_i^{2m})
e^{-\sum_i \frac{\lambda_i^2}{2}}\prod_{i<j} (\lambda_i-\lambda_j)^2d\lambda.
$$
To calculate $J_m$, we will use
Hermite polynomials.  Observe that since $H_n(x)$ are
polynomials of degree $n$ with highest coefficient $2^n$, we have
$$
\prod_{i<j}(\lambda_i-\lambda_j)= 2^{-\frac{N(N-1)}{2}}
\det (H_k(\lambda_\ell)),
$$ 
where $k$ runs through the set
$0,1, \ldots ,N-1$ and $\ell$ through $1,...,N$. Thus, we find
\begin{equation}
\begin{gathered}
J_m=2^{m+{N^2\over 2}}\int_{\Bbb R^N}\lambda_1^{2m}
e^{-\sum_i \lambda_i^2}\prod_{i<j} (\lambda_i-\lambda_j)^2d\lambda=\\ 
2^{m-{N(N-2)\over 2}}\int_{\Bbb R^N}\lambda_1^{2m}
e^{-\sum_i \lambda_i^2}\det(H_k(\lambda_j))^2d\lambda=\\ 
2^{m-{N(N-2)\over 2}}\sum_{\sigma,\tau\in S_N}(-1)^{\sigma}(-1)^{\tau}\int_{\Bbb R^N}\lambda_1^{2m}
e^{-\sum_i \lambda_i^2}
\prod_i H_{\sigma(i)}(\lambda_i)H_{\tau(i)}(\lambda_i)d\lambda.
\end{gathered}
\end{equation}
(Here $(-1)^\sigma$ denotes the sign of $\sigma$).

Since Hermite polynomials are orthogonal, 
the only terms in this sum which are nonzero are 
the terms with $\sigma(i)=\tau(i)$ for $i=2, \ldots ,N$. 
That is, the nonzero terms have $\sigma=\tau$.
Thus, we have
\begin{equation}
\begin{aligned}
J_m=2^{m-{N(N-2)\over 2}}\sum_{\sigma\in S_N}
\int_{\Bbb R^N}\lambda_1^{2m}
e^{-\sum_i \lambda_i^2}
\prod_i H_{\sigma i}(\lambda_i)^2d\lambda=\\
2^{m-{N(N-2)\over 2}}(N-1)!\gamma_0 \ldots \gamma_{N-1}\sum_{j=0}^{N-1}
\frac{1}{\gamma_j}\int_{-\infty}^\infty x^{2m}H_j(x)^2e^{-x^2}dx,
\end{aligned}
\end{equation}
where $\gamma_i:=\int_{-\infty}^\infty H_i(x)^2e^{-x^2}dx$
are the squared norms of the Hermite polynomials. 
Applying this for $m=0$ and dividing $NJ_m$ by $J_0$, we find 
$$
P_m(N)=2^m\sum_{j=0}^{N-1}
\frac{1}{\gamma_j}\int_{-\infty}^\infty x^{2m}H_j(x)^2e^{-x^2}dx.
$$
Using Theorem \ref{Hermite} (iii) and (v), we find that
$\gamma_i=2^ii!\sqrt{\pi}$, and hence
$$
P_m(N)=\frac{1}{\sqrt{\pi}}\int_{\Bbb R}\sum_{j=0}^{N-1}
\sum_{k=0}^j\frac{2^mx^{2m}H_{2k}(x)}{2^kk!^2(j-k)!}e^{-x^2}dx.
$$
Now, using Theorem \ref{Hermite} (iv), we get 
$$
P_m(N)=\frac{(2m)!}{2^m}\sum_{j=0}^{N-1}
\sum_{k=0}^j\frac{2^kj!}{(m-k)!k!^2(j-k)!}=
$$
$$
\frac{(2m)!}{2^mm!}\sum_{j=0}^{N-1}
\sum_{k=0}^j2^k\begin{pmatrix}m\\ k\end{pmatrix}
\begin{pmatrix}j\\ k\end{pmatrix}.
$$
The sum over $k$ can be represented as the constant term of a polynomial: 
$$
\sum_{k=0}^j2^k\begin{pmatrix}m\\ k\end{pmatrix}
\begin{pmatrix}j\\ k\end{pmatrix}=
C.T.((1+z)^m(1+2z^{-1})^j). 
$$
Therefore, summation over $j$ (using the formula for 
the sum of the geometric progression) yields
$$
P_m(N)=\frac{(2m)!}{2^mm!}
C.T.\left((1+z)^m\frac{(1+2z^{-1})^N-1}{2z^{-1}}\right)=
$$
$$
\frac{(2m)!}{2^mm!}\sum_{p=0}^m2^p
\begin{pmatrix}m\\ p\end{pmatrix}
\begin{pmatrix}N\\ p+1\end{pmatrix}.
$$
We are done. 

\begin{exercise}  Find the number of ways to glue an orientable
 surface of genus ${\rm g}\ge 1$ 
from a $4{\rm g}$-gon (the gluing must preserve orientation),
and prove your answer. 

{\bf Answer:} $\frac{(4{\rm g}-1)!!}{2{\rm g}+1}$.
\end{exercise} 

\begin{exercise} Consider a random Hermitian matrix $A\in
{\frak h}_N$, distributed with Gaussian density
$e^{-{\rm Tr}(A^2)}dA$. Show that the most likely
eigenvalues of $A$ are the roots of the $N$-th Hermite 
polynomial $H_N$. 

{\bf Hint.} 1) Write down the system of algebraic equations 
for the maximum of the density on eigenvalues. 

2) Introduce the polynomial $P(z)=\prod_i(z-\lambda_i)$,
where $\lambda_i$ are the most likely eigenvalues. 
Let $f=P'/P$. Compute $f'+f^2$ (look at the poles). 

3) Reduce the obtained Riccati equation for $f$ 
to a second order linear differential equation for $P$. 
Show that this equation is the Hermite's equation, and deduce
that $P=\frac{H_N}{2^N}$. 
\end{exercise} 

\section{The Euler characteristic of the moduli space of curves}

Matrix integrals (in particular, the computation of the polynomial
$P_m(x)$)
can be used to calculate the orbifold Euler characteristic 
of the moduli space of curves. This was done by Harer and Zagier
in 1986. Here we will give a review of this result (with some
omissions).

\subsection{Euler characteristics of groups}

We start with recalling some basic notions from algebraic
topology. 

Let $\Gamma$ be a discrete group, and $Y$ be a contractible 
finite dimensional CW complex, on
which $\Gamma$ acts cellularly. This means that $\Gamma$ acts by
homeomorphisms of $Y$ that map each cell homeomorphically to
another cell. We will assume that the stabilizer of each cell 
is a finite group (i.e. $Y$ is a proper
$\Gamma$-complex). 

Suppose first that the action of $\Gamma$ is free (i.e. the stabilizers of cells
are trivial). This is equivalent to saying that $\Gamma$ is
torsion free (i.e. has no nontrivial finite subgroups), since a finite
group cannot act without fixed points on a contractible finite
dimensional cell complex (as it has infinite cohomological dimension). 

In this case we can define a cell complex $Y/\Gamma$ 
(a classifying space for $\Gamma$), 
and we have $H^i(Y/\Gamma,A)=H^i(\Gamma,A)$ for any coefficient
group $A$. In particular, if $Y/\Gamma$ is finite then 
$\Gamma$ has finite cohomological dimension, and 
the Euler characteristic $\chi(\Gamma):=\sum_i (-1)^i\dim
H^i(\Gamma,\Bbb Q)$ is equal to $\sum_i (-1)^in_i(Y/\Gamma)$, where
$n_i(Y/\Gamma)$ denotes the number of cells in $Y/\Gamma$ of
dimension $i$. 

This setting, however, is very restrictive,
since it allows only groups of finite cohomological dimension, 
and in particular excludes all non-trivial finite groups.
So let us consider a more general setting: assume that 
some finite index subgroup $\Gamma'\subset \Gamma$,
rather than $\Gamma$ itself, satisfies 
the above conditions. In this case, on may define
the Euler characteristic of $\Gamma$ in the sense of Wall, 
which is the rational number
$[\Gamma:\Gamma']^{-1}\chi(\Gamma')$. 

It is easy to check that 
the Euler characteristic in the sense of Wall can be computed
using the following {\it Quillen's formula}\index{Quillen's formula}
$$
\chi(\Gamma)=\sum_{\sigma\in {\rm cells(Y)}/\Gamma}\frac{(-1)^{\dim
    \sigma}}{|{\rm Stab}\sigma|}.
$$
In particular, this number is independent of $\Gamma'$ (which is
also easy to check directly). 

\begin{example} If $G$ is a finite group then $\chi(G)=|G|^{-1}$ 
(one takes the trivial group as the subgroup of finite index).
\end{example}

\begin{example} $G=SL_2(\Bbb Z)$. This group contains 
a subgroup $F$ of index 12, which is free in two generators (check
it!). The group $F$ has Euler characteristic $-1$, since its
classifying space $Y/F$ is figure ``eight'' (i.e., $Y$ is the universal
cover of figure ``eight''). Thus, the Euler characteristic 
of $SL_2(\Bbb Z)$ is $-\frac{1}{12}$. 
\end{example} 

The Euler characteristic in the sense of Wall 
has a geometric interpretation in terms of orbifolds. 
Namely, suppose that $\Gamma$ is as above (i.e. 
$\chi(\Gamma)$ is a well defined rational number), 
and $M$ is a contractible manifold, on which 
$\Gamma$ acts freely and properly discontinuously. 
In this case, stabilizers of points are finite, and 
thus $M/\Gamma$ is an orbifold. This means, 
in particular, that to every point $x\in M/\Gamma$ is attached a
finite group ${\rm Aut}(x)$, of size $\le [\Gamma:\Gamma']$.
Let $X_m$ be the subset of $M/\Gamma$, consisting of points $x$
such that ${\rm Aut}(x)$ has order $m$. It often happens that 
$X_m$ has the homotopy type of a finite cell complex. In this
case, the {\it orbifold Euler characteristic}\index{orbifold Euler characteristic} of $M/\Gamma$ is defined
to be 
$$
\chi_{\rm orb}(M/\Gamma)=\sum_m\frac{\chi(X_m)}{m}.
$$ 

Now, we claim that $\chi_{\rm orb}(M/\Gamma)=\chi(\Gamma)$. 
Indeed, looking at the projection $M/\Gamma'\to M/\Gamma$, it is easy to see that 
$\chi_{\rm  orb}(M/\Gamma)=\frac{1}{[\Gamma:\Gamma']}\chi(M/\Gamma')$. 
But $M/\Gamma'$ is a classifying space for $\Gamma'$, so 
$\chi(M/\Gamma')=\chi(\Gamma')$, which implies the claim. 

\begin{example} Consider the group $\Gamma=SL_2(\Bbb Z)$ acting
on the upper half plane $H$. Then $H/\Gamma$ is the moduli space 
of elliptic curves. So as a topological space it is 
$\Bbb C$, where all points have automorphism group $\Bbb Z/2$,
except the point $i$ having automorphism group $\Bbb Z/4$, and 
$\rho=\frac{-1+i\sqrt{3}}{2}$ which has automorphism group $\Bbb Z/6$. 
Thus, the orbifold Euler characteristic of $H/\Gamma$ 
is $(-1)\frac{1}{2}+\frac{1}{4}+\frac{1}{6}=-\frac{1}{12}$. 
This is not surprising since we proved that $\chi_{\rm
orb}(H/\Gamma)=\chi(\Gamma)$, which was computed to be $-\frac{1}{12}$.
\end{example}

\subsection{The mapping class group}

Now let ${\rm g}\ge 1$ be an integer, and $\Sigma$ be a closed
oriented surface of genus ${\rm g}$. Let $p\in \Sigma$, and let
$\Gamma_{\rm g}^1$ be the group of isotopy classes of 
diffeomorphisms of $\Sigma$ which preserve $p$.
We will recall without proof some standard facts 
about this group, following the paper of Harer and Zagier, \cite{HZ}. 

The group $\Gamma_{\rm g}^1$ is not torsion free, 
but it has a torsion free subgroup of finite index. 
Namely, consider the homomorphism 
$\Gamma_{\rm g}^1\to {\rm Sp}(2{\rm g},\Bbb Z/n\Bbb Z)$ given by the action 
of $\Gamma_{\rm g}^1$ on $H_1(\Sigma,\Bbb Z/n\Bbb Z)$. 
Then for large enough $n$ (in fact, $n\ge 3$), 
the kernel $K_n$ of this map is torsion free.

It turns out that there exists a contractible finite dimensional
cell complex $Y_{\rm g}$, to be constructed below, 
on which $\Gamma_{\rm g}^1$ acts cellularly with finitely many cell
orbits. Thus, the Euler characteristic 
of $\Gamma_{\rm g}^1$ in the sense of Wall is well defined. 

\subsection{The Harer-Zagier theorem}  The Euler characteristic of $\Gamma_{\rm g}^1$ is given by the following theorem. 

\begin{theorem} \label{hz} (Harer-Zagier) One has 
$$
\chi(\Gamma_{\rm g}^1)=-\frac{B_{2{\rm g}}}{2{\rm g}},
$$ 
where $B_n$ are the Bernoulli numbers. 
\end{theorem}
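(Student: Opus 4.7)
The plan is to compute $\chi(\Gamma_{\rm g}^1)$ via Quillen's formula applied to a cellular decomposition of $Y_{\rm g}/\Gamma_{\rm g}^1$, and then to evaluate the resulting alternating sum using the polynomial $P_m(N)$ of Theorem \ref{howmany}.

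First, I would specify $Y_{\rm g}$ via Strebel/Penner theory as the space of metric ribbon graphs with one boundary component (the puncture), genus ${\rm g}$, and all vertex valencies $\geq 3$, together with a contractible cone direction (the total perimeter). The quotient $Y_{\rm g}/\Gamma_{\rm g}^1$ is homotopy equivalent to the moduli space $\mathcal M_{\rm g}^1$ and has cells indexed by isomorphism classes of such combinatorial ribbon graphs $\widetilde\Gamma$: the cell for $\widetilde\Gamma$ is an open simplex of dimension $E(\widetilde\Gamma) - 1$ parametrizing edge lengths (modulo scaling), with stabilizer $\text{Aut}(\widetilde\Gamma)$. Quillen's formula then yields
$$\chi(\Gamma_{\rm g}^1) = \sum_{\widetilde\Gamma} \frac{(-1)^{E(\widetilde\Gamma) - 1}}{|\text{Aut}(\widetilde\Gamma)|},$$
where the sum ranges over connected one-face ribbon graphs of genus ${\rm g}$ with vertex valencies $\geq 3$. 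Euler's formula ($E - V = 2{\rm g} - 1$) combined with $2E \geq 3V$ forces $E \leq 6{\rm g} - 3$, so the sum is finite.

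Next, I would apply ribbon-graph duality (swap vertices with faces) to convert this into a sum over one-vertex ribbon graphs of genus ${\rm g}$ with all face perimeters $\geq 3$; duality preserves both the edge count and $|\text{Aut}|$. Unrestricted one-vertex ribbon graphs of this sort are precisely the objects counted by $\varepsilon_{\rm g}(E)/(2E)$ via Theorem \ref{howmany}, since they correspond to orientation-preserving matchings of the sides of a labeled $2E$-gon yielding a genus ${\rm g}$ surface. The face-perimeter restriction can be imposed by inclusion-exclusion over faces of perimeter $1$ and $2$, or equivalently by deforming the matrix integral in Theorem \ref{howmany} by weighting each trace factor so as to kill short faces; this converts the Quillen sum into a weighted alternating combination $\sum_m c_m \varepsilon_{\rm g}(m)$ for explicit coefficients $c_m$.

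Finally, using the explicit formula
$$P_m(N) = \frac{(2m)!}{2^m m!} \sum_{p=0}^m \binom{m}{p} 2^p \frac{N(N-1)\cdots(N-p)}{(p+1)!}$$
of Theorem \ref{howmany} to extract $\varepsilon_{\rm g}(m)$ as the coefficient of $N^{m+1-2{\rm g}}$, and summing against the weights $c_m$, the expression for $\chi(\Gamma_{\rm g}^1)$ should reduce to a classical identity evaluating to $-B_{2{\rm g}}/(2{\rm g})$ via the Bernoulli generating function $\frac{t}{e^t - 1} = \sum_n B_n t^n / n!$. The hard part will be the middle step: correctly implementing the face-perimeter restriction and keeping track of automorphism factors as one passes from the ribbon-graph sum to the closed form in $\varepsilon_{\rm g}(m)$. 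The signs and multiplicities are delicate, but once the sum is set up correctly, the final identification with $-B_{2{\rm g}}/(2{\rm g})$ is a direct (if intricate) hypergeometric manipulation of the Harer-Zagier polynomial, of the sort underlying the original argument.
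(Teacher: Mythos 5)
Your plan follows essentially the same route as the paper: use Quillen's formula on the ribbon-graph cell decomposition of $M_{\rm g}^1$ (the Penner/Strebel model you invoke is dual to the Harer arc-complex $A\setminus A_\infty$ and the complex $Y_{\rm g}$ used in the text, and gives the same cells, signs $(-1)^{n-1}$, and stabilizers), so that $\chi(\Gamma_{\rm g}^1)=\sum_n(-1)^{n-1}\lambda_{\rm g}(n)/(2n)$, and then pass to $\varepsilon_{\rm g}$ and the Harer--Zagier polynomial. One caveat on the step you correctly flag as delicate: the paper does not implement the valence-$\geq 3$ restriction by signed inclusion-exclusion, but by a contraction bijection (collapse edges glued to neighbors, then parallel pairs), yielding the unsigned binomial convolutions $\varepsilon_{\rm g}(n)=\sum_i\binom{2n}{i}\mu_{\rm g}(n-i)$ and $\mu_{\rm g}(n)=\sum_i\binom{n}{i}\lambda_{\rm g}(n-i)$ of Lemma \ref{rel}; feeding these together with the polynomiality $\varepsilon_{\rm g}(n)=\binom{2n}{n}f_{\rm g}(n)$, $f_{\rm g}(0)=0$, into the generating-function identity of Lemma \ref{tech} gives $\sum_n(-1)^{n-1}\lambda_{\rm g}(n)/(2n)=f_{\rm g}'(0)$, and the Stirling expansion of $\Gamma'/\Gamma$ then produces $-B_{2{\rm g}}/2{\rm g}$.
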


\begin{remark} The group $\Gamma_{\rm g}^1$ acts on the Teichm\"uller space 
${\mathcal T}_{\rm g}^1$, which is, by definition, the space 
of pairs $((R,z),f)$, where $(R,z)$ is a complex Riemann surface
with a marked point $z$, and $f$  is an isotopy class
of diffeomorphisms $R\to \Sigma$ that map $z$ to $p$. 
One may show that ${\mathcal T}_{\rm g}^1$ is a contractible manifold 
of dimension $6{\rm g}-4$, and that the action of $\Gamma_{\rm g}^1$ 
on ${\mathcal T}_{\rm g}^1$ is properly discontinuous. 
In particular, we may define an orbifold
$M_{\rm g}^1={\mathcal T}_{\rm g}^1/\Gamma_{\rm g}^1$. This orbifold 
parametrizes pairs $(R,z)$ as above; therefore, it is called the moduli
space of Riemann surfaces (=smooth complex projective 
algebraic curves) of genus ${\rm g}$ with one marked point. Thus, Theorem \ref{hz} 
gives the orbifold Euler characteristic of the moduli space of
curves of genus ${\rm g}$ with one marked point. 
\end{remark}

\begin{remark} 
If ${\rm g}>1$, one may define the analogs of the above objects without
marked points, namely the mapping class group $\Gamma_{\rm g}$, 
the Teichm\"uller space ${\mathcal T}_{\rm g}$, and the moduli space of curves
$M_{\rm g}={\mathcal T}_g/\Gamma_g$ (one can do it for ${\rm g}=1$ as well, 
but in this case there is no difference with the case of one
marked point, since the translation group allows one to identify any
two points on $\Sigma$). It is easy to see that for ${\rm g}>1$ we have an exact sequence
$1\to \pi_1(\Sigma)\to \Gamma_{\rm g}^1\to \Gamma_{\rm g}\to 1$, 
which implies that $\chi(\Gamma_{\rm g})=\chi(\Gamma_{\rm g}^1)/\chi(\Sigma)$.
Thus, the Harer-Zagier theorem implies that $\chi(\Gamma_{\rm g})=\chi_{\rm orb}(M_{\rm g})=\frac{B_{2{\rm g}}}{4{\rm g}({\rm g}-1)}$.
\end{remark} 

\subsection{Construction of the complex $Y_{\rm g}$}

We begin the proof of Theorem \ref{hz} with the construction of
the complex $Y_{\rm g}$, following \cite{HZ}. 
We will first construct a simplicial complex
with a $\Gamma_{\rm g}^1$ action, and then use it to construct $Y_{\rm g}$. 

Let $(\alpha_1,...,\alpha_n)$ be a collection of closed simple 
unoriented curves on $\Sigma$, which begin and end at $p$, and do not
intersect other than at $p$. Such a collection is called an {\it arc
system}\index{arc system} if two conditions are satisfied:

(A) none of the curves is contractible to a point;

(B) none of the curves is contractible to another. 

Define a simplicial complex $A$, whose $n-1$-simplices 
are isotopy classes of arc systems consisting of $n\ge 1$ arcs, and 
the boundary of a simplex corresponding to
$(\alpha_1,...\alpha_n)$ is the union of simplices
corresponding to the arc system
$(\alpha_1,...,\widehat\alpha_i,...,\alpha_n)$ 
($\alpha_i$ is omitted). 

It is clear that the group $\Gamma_{\rm g}^1$ 
acts simplicially on $A$. 

\begin{example} Let ${\rm g}=1$, i.e. $\Sigma=S^1\times S^1$. Then 
$\Gamma_{\rm g}^1=SL_2(\Bbb Z)$. Up to its action, there 
are only three arc systems (Fig. 15). Namely, viewing $S^1$ as the unit
circle in the complex plane, and representing arcs
parametrically, we may write these three systems as follows:
$$
B_0=\lbrace{(e^{i\theta},1)\rbrace}; B_1=
\lbrace{(e^{i\theta},1),(1,e^{i\theta})\rbrace};
B_2=\lbrace{(e^{i\theta},1),(1,e^{i\theta}),(e^{i\theta},e^{i\theta})\rbrace}
$$
From this it is easy to find the simplicial complex $A$. 
Namely, let $T$ be the tree with root $t_0$ connected to three
vertices $t_1,t_2,t_3$, with each $t_i$ 
connected to two vertices $t_{i1},t_{i2}$, each $t_{ij}$ connected to
$t_{ij1},t_{ij2}$, etc. (Fig.16). Put at every vertex of $T$ a triangle,
with sides transversal to the three edges going out of this
vertex, and glue the triangles along the sides. 
This yields the complex $A$, Fig.17 (check it!). 
The action of $SL_2(\Bbb Z)$ (or rather $PSL_2(\Bbb Z)$) on this 
complex is easy to describe. 
Namely, recall that $PSL_2(\Bbb Z)$ is generated by $S,U$
with defining relations $S^2=U^3=1$. The action of $S,U$ on $T$ is defined as
follows: $S$ is the reflection with flip with respect to a side 
of the triangle $\Delta_0$ centered at $t_0$ (Fig.18), and $U$ is the rotation
by $2\pi/3$ around $t_0$. 
\end{example}

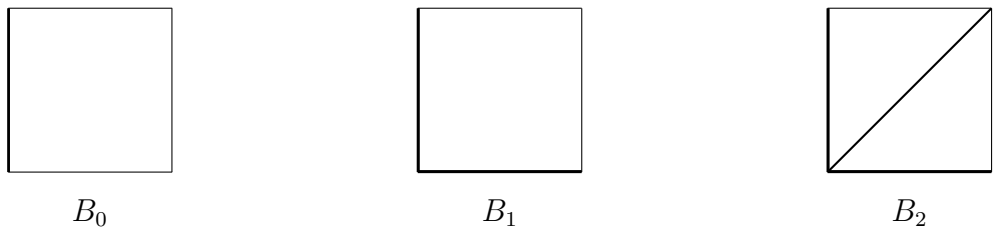
\begin{figure}[htbp]
  \begin{center}
    
    \setlength{\unitlength}{3ex}
    \begin{picture}(24,6)(0,-1)

      %%% left figure
      \thicklines
      \put(0,0){\line(0,1){4}}
      \put(0.02,0){\line(0,1){4}}

      \thinlines
      \put(0,0){\line(1,0){4}}
      \put(4,0){\line(0,1){4}}
      \put(0,4){\line(1,0){4}}

      %%% middle figure
      \thicklines
      \put(10,0){\line(0,1){4}}
      \put(10.02,0){\line(0,1){4}}
      \put(10,0){\line(1,0){4}}
      \put(10,0.02){\line(1,0){4}}

      \thinlines
      \put(14,0){\line(0,1){4}}
      \put(10,4){\line(1,0){4}}

      %%% right figure
      \thicklines
      \put(20,0){\line(0,1){4}}
      \put(20.02,0){\line(0,1){4}}
      \put(20,0){\line(1,0){4}}
      \put(20,0.02){\line(1,0){4}}
      \put(20,0){\line(1,1){4}}

      \thinlines
      \put(24,0){\line(0,1){4}}
      \put(20,4){\line(1,0){4}}

      \put(2,-1){\makebox(0,0)[c]{$B_0$}}
      \put(12,-1){\makebox(0,0)[c]{$B_1$}}
      \put(22,-1){\makebox(0,0)[c]{$B_2$}}

    \end{picture}
    \caption{Three arc systems.}
    \label{fig:15}
  \end{center}
\end{figure}

\begin{figure}
\begin{center}
\setlength{\unitlength}{2ex}
\begin{picture}(18,15)
%%%basic tree
\put(9,8){\line(0,-1){4.6}}
\put(9,8){\line(5,3){4}}
\put(9,8){\line(-5,3){4}}
\put(9,3.4){\line(2,-1){2}}
\put(9,3.4){\line(-2,-1){2}}
\put(7,2.4){\line(0,-1){2}}
\put(7,2.4){\line(-2,1){1.789}}
\put(11,2.4){\line(0,-1){2}}
\put(11,2.4){\line(2,1){1.789}}
\put(13,10.4){\line(0,1){2}}
\put(13,10.4){\line(2,-1){2}}
\put(13,12.4){\line(1,1){1.414}}
\put(13,12.4){\line(-1,1){1.414}}
\put(15,9.4){\line(2,1){1.789}}
\put(15,9.4){\line(0,-1){2}}
\put(5,10.4){\line(0,1){2}}
\put(5,10.4){\line(-2,-1){2}}
\put(5,12.4){\line(1,1){1.414}}
\put(5,12.4){\line(-1,1){1.414}}
\put(3,9.4){\line(-2,1){1.789}}
\put(3,9.4){\line(0,-1){2}}

%%%dots
\multiput(3,14.4)(2,0){3}{\circle*{.2}}
\multiput(11,14.4)(2,0){3}{\circle*{.2}}
\multiput(.2,9.8)(1,-1.5){3}{\circle*{.2}}
\multiput(17.8,9.8)(-1,-1.5){3}{\circle*{.2}}

\multiput(6.2,-.2)(-1,1.5){3}{\circle*{.2}}
\multiput(11.8,-.2)(1,1.5){3}{\circle*{.2}}

\end{picture}
\end{center}
\caption{The tree $T$}
    \label{fig:16}
\end{figure}
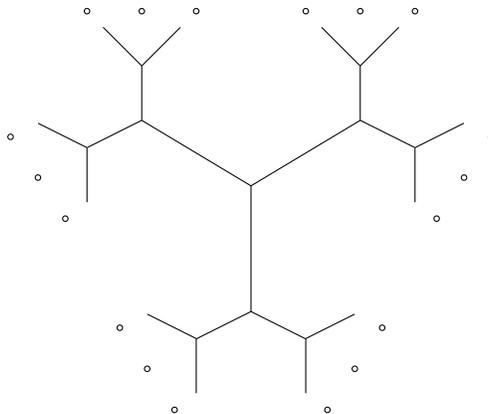

\begin{figure}
\begin{center}
\setlength{\unitlength}{2ex}
\begin{picture}(18,15)
%%%basic tree
\put(9,8){\line(0,-1){4.6}}
\put(9,8){\line(5,3){4}}
\put(9,8){\line(-5,3){4}}
\put(9,3.4){\line(2,-1){2}}
\put(9,3.4){\line(-2,-1){2}}
\put(7,2.4){\line(0,-1){2}}
\put(7,2.4){\line(-2,1){1.789}}
\put(11,2.4){\line(0,-1){2}}
\put(11,2.4){\line(2,1){1.789}}
\put(13,10.4){\line(0,1){2}}
\put(13,10.4){\line(2,-1){2}}
\put(13,12.4){\line(1,1){1.414}}
\put(13,12.4){\line(-1,1){1.414}}
\put(15,9.4){\line(2,1){1.789}}
\put(15,9.4){\line(0,-1){2}}
\put(5,10.4){\line(0,1){2}}
\put(5,10.4){\line(-2,-1){2}}
\put(5,12.4){\line(1,1){1.414}}
\put(5,12.4){\line(-1,1){1.414}}
\put(3,9.4){\line(-2,1){1.789}}
\put(3,9.4){\line(0,-1){2}}

%%%triangulation
\thicklines
\put(3,11.4){\circle*{.6}}
\put(6,6.4){\circle*{.6}}
\put(9,1.4){\circle*{.6}}
\put(9,11.4){\circle*{.6}}
\put(12,6.4){\circle*{.6}}
\put(15,11.4){\circle*{.6}}

\put(9,1.4){\line(-3,5){6}}
\put(9,1.4){\line(3,5){6}}
\put(3,11.4){\line(1,0){12}}
\put(6,6.4){\line(1,0){6}}
\put(9,11.4){\line(-3,-5){3}}
\put(9,11.4){\line(3,-5){3}}

\end{picture}
\end{center}
\caption{The complex A}
    \label{fig:17}
\end{figure}
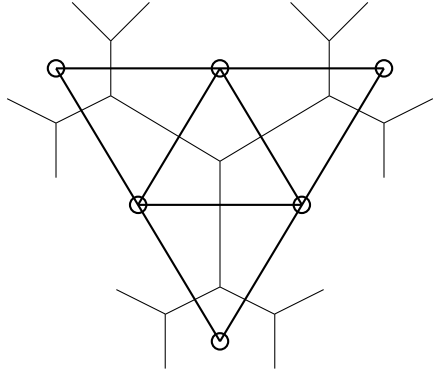

\begin{figure}[htbp]
  \begin{center}
    
    \setlength{\unitlength}{3.5ex}
    \begin{picture}(9,4.6)(0,0)

      \put(5,0){\line(3,2){3.5}}
      \put(5,0){\line(-3,2){3.5}}
      \put(5,4.6){\line(-3,-2){3.5}}
      \put(5,4.6){\line(3,-2){3.5}}

      \put(5,0){\line(0,1){4.6}}

%      %%% make cat
%      \put(3.8,3){\oval(1.5,3.5)[b]}
%      \put(3.05,3){\line(2,-3){.43}}
%      \put(4.55,3){\line(-2,-3){.43}}
%      \put(3.8,2.35){\oval(.65,.34)[t]}

%      %%% eyes
%      \put(3.425,2.15){\circle*{.1}}
%      \put(4.175,2.15){\circle*{.1}}

%      %%% mouth
%      \put(3.8,1.7){\oval(.32,.22)[b]}

      %%% make cat

      \put(3.8,1.6){\oval(1.5,3.5)[t]}
      \put(3.05,1.6){\line(2,3){.43}}
      \put(4.55,1.6){\line(-2,3){.43}}
      \put(3.8,2.25){\oval(.65,.34)[b]}

      %%% eyes
      \put(3.425,2.45){\circle*{.1}}
      \put(4.175,2.45){\circle*{.1}}

      %%% nose
      \put(3.8,2.7){\circle*{.05}}

      %%% mouth
      \put(3.8,2.9){\oval(.32,.22)[t]}

      %%% make upside-down cat

      \put(6.2,3){\oval(1.5,3.5)[b]}
      \put(6.95,3){\line(-2,-3){.43}}
      \put(5.45,3){\line(2,-3){.43}}
      \put(6.2,2.35){\oval(.65,.34)[t]}

      %%% eyes
      \put(6.575,2.15){\circle*{.1}}
      \put(5.825,2.15){\circle*{.1}}

      %%% nose
      \put(6.2,1.9){\circle*{.05}}

      %%% mouth
      \put(6.2,1.7){\oval(.32,.22)[b]}

    \end{picture}
    \caption{Reflection with a flip.}
    \label{fig:18}
  \end{center}
\end{figure}
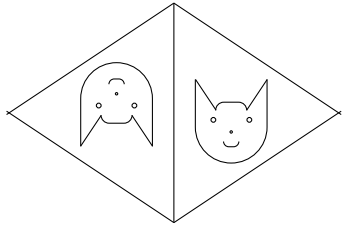

This example shows that the action 
of $\Gamma_{\rm g}^1$ on $A$ is not properly discontinuous, as 
some simplices have infinite stabilizers (in the example, it is
the 0-dimensional simplices). Thus, we would like to
throw away the ``bad'' simplices. To do so, let us say 
that an arc system $(\alpha_1,..,\alpha_n)$ {\it fills up} 
$\Sigma$ if it cuts $\Sigma$ into a union of 
regions diffeomorphic to the open disk. Let $A_\infty$ be the
union of the simplices in $A$ corresponding to arc systems that
do not fill up $\Sigma$. This is a closed subset, since the
property of not filling up $\Sigma$ is 
obviously stable under taking an arc
subsystem. Thus, $A\setminus A_\infty$ is an open subset 
of $A$. In the example above, it is the complex $A$ with
$0$-dimensional simplices removed. 

The following theorem shows that $A\setminus A_\infty$ is in fact
a combinatorial model for the Teichm\"uller space ${\mathcal T}_{\rm g}^1$,
with the action of $\Gamma_{\rm g}^1$.
 
\begin{theorem} \label{mum} (Mumford)
(a) The action of $\Gamma_{\rm g}^1$ on $A\setminus A_\infty$ is properly
discontinuous. 

(b) $A\setminus A_\infty$ is topologically a manifold, which 
is $\Gamma_{\rm g}^1$-equivariantly homeomorphic to the Teichm\"uller
space ${\mathcal T}_{\rm g}^1$; in particular, it is contractible. 
\end{theorem}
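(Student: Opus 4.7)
The plan is to treat (a) and (b) separately, with (b) being the heart of the matter. For part (a), I would begin by observing that if $(\alpha_1,\dots,\alpha_n)$ fills up $\Sigma$, then cutting $\Sigma$ along the $\alpha_i$ produces a finite collection of open disks, and any diffeomorphism of $\Sigma$ fixing $p$ and preserving the isotopy class of this arc system must permute the finite set $\{\alpha_1,\dots,\alpha_n\}$ and the finite set of complementary disks. Sending an element of the stabilizer to the induced permutation gives a homomorphism whose kernel is the isotopy class of a diffeomorphism that fixes each arc and each complementary disk setwise; a standard Alexander-trick argument on each disk (with $p$ pinned on the boundary) shows this kernel is trivial. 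Hence stabilizers of simplices in $A\setminus A_\infty$ are finite. Properly discontinuity then follows because any compact subset of $A\setminus A_\infty$ meets only finitely many open simplices, each with finite stabilizer.

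For part (b), the strategy is to exhibit a $\Gamma_{\rm g}^1$-equivariant homeomorphism $\Phi:\mathcal T_{\rm g}^1\to A\setminus A_\infty$ via the theory of Jenkins--Strebel quadratic differentials. A point of $A\setminus A_\infty$ is a filling arc system together with positive real weights on its arcs (interior point of a simplex), so the target parametrizes pairs $(\alpha,w)$ of a filling arc system and a weight function. Given $((R,z),f)\in\mathcal T_{\rm g}^1$ and a choice of weights, Strebel's theorem provides a unique meromorphic quadratic differential $q$ on $R$, holomorphic outside $z$ and with a double pole at $z$, whose non-critical horizontal trajectories are closed arcs from $z$ to itself, partitioning $R$ into ring domains; the widths of these rings give weights on the resulting arc system. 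Conversely, from $(\alpha,w)$ one glues flat Euclidean rectangles of widths $w_i$ along the pattern prescribed by $\alpha$ to obtain a canonical Riemann surface with marked point. I would define $\Phi$ by reading off from the Strebel differential the combinatorial arc system together with its weights.

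The central step is to prove that $\Phi$ is a well-defined continuous bijection, and that its inverse (given by the rectangle-gluing construction) is also continuous; $\Gamma_{\rm g}^1$-equivariance is then tautological from the construction. Existence and uniqueness of the Strebel differential with prescribed heights is the classical theorem of Jenkins and Strebel, which I would cite; continuous dependence on moduli follows from compactness arguments in the space of quadratic differentials of bounded norm. The surjectivity/injectivity together with invariance of domain then promote $\Phi$ to a homeomorphism, and since $\mathcal T_{\rm g}^1$ is a contractible manifold of dimension $6{\rm g}-4$, so is $A\setminus A_\infty$.

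The main obstacle, in my view, is not part (a) but the analytic input in part (b): establishing the Strebel correspondence in families, i.e.\ continuous (and in fact real-analytic) dependence of the Strebel differential on the point of $\mathcal T_{\rm g}^1$ and on the prescribed weights, together with continuity of the inverse rectangle-gluing map across faces of $A\setminus A_\infty$ where some weights degenerate to $0$ (these correspond to contracting an arc of the arc system, which precisely matches passage to a lower-dimensional simplex still filling up $\Sigma$). Handling this degeneration carefully—showing it corresponds exactly to the simplicial face structure of $A\setminus A_\infty$ and nothing else—is the subtle point; once that compatibility is in place, everything else is either the standard Strebel theory or formal manipulations of mapping class group actions.
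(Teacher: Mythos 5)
The paper does not prove this theorem: it is explicitly declared ``rather deep'' and ``beyond the scope of this text,'' so there is nothing in the source to compare against. Your sketch describes one of the two standard routes (the Strebel-differential parametrization used by Harer, following Mumford and Thurston; the alternative route is hyperbolic-geometric, via horocyclically decorated ideal triangulations, as in Bowditch--Epstein and Penner), and you are right that the real work is in establishing continuous dependence of the Strebel differential on the Teichm\"uller point and on the prescribed heights, and in matching the rectangle-degeneration with the simplicial face structure of $A\setminus A_\infty$. So the overall plan is sound.

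Two substantive errors, though. First, the trajectory picture is wrong as stated. For a Strebel differential on $(R,z)$ with a second-order pole at $z$, the \emph{non-critical horizontal trajectories} are closed loops homotopic to a small circle about $z$; they fill out the single complementary annulus and do not go from $z$ to $z$. The filling arc system is obtained from the singular \emph{vertical} trajectories emanating from the pole, or equivalently as the Poincar\'e dual to the critical graph (the ribbon graph/spine) of the horizontal foliation, and the weights are $q$-lengths of the edges of that spine. As written your map $\Phi$ does not yet have a coherent definition; this needs to be corrected before anything else can be proved.

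Second, in part (a) the claim that the kernel of the permutation representation on arcs and complementary disks is trivial ``by the Alexander trick'' is simply false. A mapping class in that kernel fixes each arc only \emph{setwise} (so may reverse it), and the Alexander trick requires fixing the boundary pointwise. Concretely, for ${\rm g}=1$ and the two-arc filling system $B_1$ cutting $\Sigma$ into a single square, the stabilizer in $\Gamma_1^1=SL_2(\Bbb Z)$ is $\Bbb Z/4$, the permutation image on the two arcs is $\Bbb Z/2$, and the kernel is generated by the hyperelliptic involution $-I$, which is nontrivial. What one actually gets is a homomorphism from the stabilizer to a finite permutation-with-flips group, with the Alexander trick showing the kernel of \emph{that} refined map is trivial; the conclusion (finite stabilizers) still stands, but your argument as written does not establish it.
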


\begin{remark} Theorem \ref{mum} exhibits 
the significance of conditions (A) and (B).
Indeed, if either of these conditions were dropped, 
then one could consider arc systems 
$(\alpha_1,...,\alpha_n)$ with arbitrarily large $n$, while 
with conditions (A),(B), as seen from Theorem \ref{mum}, 
the largest value of $n$ is $6{\rm g}-3$.
\end{remark}

\begin{remark} If ${\rm g}=1$, Theorem \ref{mum} is clear from the
explicit description of $A$ (convince yourself of this!).
\end{remark}

Theorem \ref{mum} is rather deep, and we will not give 
its proof, which is beyond the scope
of this text. Rather, we will use it to define the ``Poincar\'e
dual'' CW complex $Y_{\rm g}$ of $A\setminus A_\infty$. Namely, to each
filling arc system $(\alpha_1,...,\alpha_n)$ we will assign 
a $6{\rm g}-3-n$-dimensional cell, and the boundary relation 
is opposite to the one before. The existence of this CW complex
follows from the fact that $A\setminus A_\infty$ is a manifold. 
For instance, in the case ${\rm g}=1$ the complex $Y_{\rm g}$ is the tree $T$. 

Now, the complex $Y_{\rm g}$ is contractible (since so is $A\setminus
A_\infty$), and admits a cellular action of $\Gamma_{\rm g}^1$ with
finitely many cell orbits and finite stabilizers. 
This means that the Euler characteristic of $\Gamma_{\rm g}^1$ is given
by Quillen's formula. 
$$
\chi(\Gamma_{\rm g}^1)=\sum_{\sigma\in {\rm
    cells}(Y_{\rm g})/\Gamma_{\rm g}^1}
(-1)^{\dim\sigma}\frac{1}{|{\rm Stab\sigma}|}.
$$

\begin{example} In the ${\rm g}=1$ case, $T$ has one orbit of $0$-cells
and one orbit of $1$-cells. The stabilizer of a 0-cell in
$SL_2(\Bbb Z)$ is $\Bbb Z/6$, and of a 1-cell is $\Bbb Z/4$. 
Hence, $\chi(SL_2(\Bbb
Z))=\frac{1}{6}-\frac{1}{4}=-\frac{1}{12}$, 
which was already computed before by other methods. 
\end{example}

\subsection{Enumeration of cells in $Y_{\rm g}/\Gamma_{\rm g}^1$}

Now it remains to count the cells in $Y_{\rm g}/\Gamma_{\rm g}^1$, i.e. to enumerate
arc systems which fill $\Sigma$ (taking into account signs and stabilizers) 
To do this, we note that by definition of ``filling'', any
filling arc system $S$ defines a cellular decomposition of $\Sigma$. 
Thus, let $S^*$ be the Poincare dual of this cellular
decomposition. Since $S$ has a unique zero cell, $S^*$ has a
unique 2-cell. Let $n$ be the number of 1-cells in $S$ (or
$S^*$). Then $(\Sigma, S^*)$ is obtained by gluing a 
$2n$-gon (=the unique 2-cell) according to a matching of its sides
preserving orientation. (Note that $S$ can be reconstructed as
$(S^*)^*$). 

This allows us to link the problem of enumerating
filling arc systems with the problem of counting such
gluings, which was solved using matrix integrals. 
Namely, the problem of enumerating filling arc systems 
is essentially solved 
modulo one complication: because of conditions (A)
and (B) on an arc system, the gluings we will get will be not
arbitrary gluings, but gluings which also must satisfy some
conditions. Namely, we have

\begin{lemma}
Let $(\alpha_1,...,\alpha_n)$ be a system of curves, satisfying
the axioms of a filling arc system, except maybe conditions (A)
and (B). Then

(i) $(\alpha_1,...,\alpha_n)$ satisfies condition (A) 
iff no edge in the corresponding gluing is glued to a neighboring
edge. 

(ii) $(\alpha_1,...,\alpha_n)$ satisfies condition (B)
iff no two consequtive edges are glued to another pair 
of consequtive edges in the opposite order.
\end{lemma}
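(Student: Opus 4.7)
The strategy is to translate (A) and (B) into equivalent statements about the cell decomposition $S$ of $\Sigma$ defined by the arc system, and then to identify those statements with the two forbidden gluing patterns on the dual $2n$-gon. Conditions (A) and (B) will turn out to say that $S$ has no monogon face and no bigon face respectively, and under the $S$--$S^*$ duality these correspond precisely to the patterns in (i) and (ii).

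First I would set up the dictionary. Since the arc system fills $\Sigma$, $S$ has one 0-cell ($p$), $n$ 1-cells (the arcs), and some 2-cells which are all disks. The dual $S^*$ has exactly one 2-cell which, unfolded, is the $2n$-gon we are gluing. Each arc $\alpha_i$ is dual to a 1-cell of $S^*$ whose two sides are the two polygon edges labeled $\alpha_i$ and identified by the gluing. Each face $F$ of $S$ with boundary traversing $k$ arcs with multiplicity is dual to a 0-cell of $S^*$ that appears as $k$ vertices of the polygon; the two polygon edges meeting at such a vertex are labeled by the two arcs meeting at the corresponding corner of $F$. With this dictionary one sees that two glued polygon edges are neighboring iff $S$ contains a monogon face (a face bounded by a single arc, traversed once); and that two consecutive polygon edges are glued to another pair of consecutive polygon edges in opposite order iff $S$ contains a bigon face (a face bounded by two distinct arcs each traversed once).

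Granting the dictionary, part (i) amounts to showing that (A) is equivalent to $S$ having no monogon face. The reverse implication is immediate: a monogon face is bounded by a single arc which is thus null-homotopic in $\Sigma$, violating (A). For the forward implication, if some $\alpha_i$ is contractible it bounds a disk $D\subset\Sigma$; by the standard innermost-arc argument applied to the arcs of the system lying in $D$, we find an arc $\alpha_j$ bounding an innermost disk $D_j\subset D$ with no arcs of the system in its interior, and the filling property forces $D_j$ to be a single face --- necessarily a monogon. Part (ii) is proved in exactly the same way with ``disk'' and ``arc'' replaced by ``bigon'' and ``pair of arcs'', using the analogous innermost-bigon argument on pairs of isotopic arcs of the system.

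The main obstacle I anticipate is in making the ``opposite order'' half of the dictionary precise. At the two corners of a bigon face $F$ bounded by $\alpha_i,\alpha_j$, the dual vertex of $F$ appears twice as a polygon vertex, and the two adjacent polygon edges at one such vertex read $(\alpha_i,\alpha_j)$ in the cyclic polygon order while at the other vertex they read $(\alpha_j,\alpha_i)$; since each arc contributes only two polygon edges and all four edges at these two corners are accounted for by $\alpha_i$ and $\alpha_j$, the gluing must identify them crosswise, yielding precisely the opposite-order pattern. The reversal of cyclic order at the two corners of $F$ is forced by the fact that $F$ is a disk lying on one coherent side of its boundary, combined with the orientation-preserving nature of the polygon gluing; making this rigorous is a short local calculation at the two bigon corners.
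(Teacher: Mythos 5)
Your proof is correct and it is a faithful expansion of the paper's argument: the paper's entire proof is the single remark that the lemma is ``geometrically evident'' together with Figures 19--20, which only illustrate the easy $(\Leftarrow)$ direction of each equivalence (an edge glued to a neighbor visibly cuts off a disk bounded by the arc; a consecutive pair glued to a consecutive pair in the opposite order visibly cuts off a bigon whose two sides are therefore isotopic). Your $S$--$S^*$ dictionary, which recasts the two forbidden gluing patterns as monogon and bigon faces of $S$, together with the innermost-disk and innermost-bigon arguments for the $(\Rightarrow)$ direction, supply precisely what the paper delegates to the reader, and you correctly single out the orientation count at the two corners of a bigon---needed to see that the polygon gluing must match the two consecutive pairs in reversed cyclic order---as the one genuinely delicate local check.
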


\begin{figure}[htbp]
  \begin{center}
    
    \setlength{\unitlength}{2.5ex}
    \begin{picture}(31,6)(1,0)

      %%% left figure

      \thicklines
      \put(3,0){\line(1,0){4}}
      \put(3,0){\line(-1,1){2.5}}
      \put(0.5,2.5){\line(1,1){2.5}}
      \put(3,5){\line(1,0){4}}
      \put(7,0){\line(1,1){2.5}}
      \put(7,5){\line(1,-1){2.5}}

      \thinlines
      \put(1.8,1.25){\line(0,1){2.5}}

      %%% right figure

      \thicklines
      \put(27,0.2){\line(3,2){3.5}}
      \put(27,0.2){\line(-3,2){3.5}}
      \put(27,4.8){\line(-3,-2){3.5}}
      \put(27,4.8){\line(3,-2){3.5}}

      \put(23.5,2.5){\line(1,0){2.5}}      

      \thinlines
      \put(26,2.5){\circle{2}}      

      \put(26.95,2.5){\circle*{.2}}      

      \put(29.1,0.25){\vector(-1,1){2}}
      \put(28,-0.75){\makebox(0,0)[l]
          {\shortstack{loop homo-\\topic to $0$.}}}

      %%% squiggle in the middle
      \put(13,1.5){\line(0,1){2}}      
      \cbezier(13,2.5)(15,3)(17,2)(20,2.5)
      \put(19.75,2.5){\vector(1,0){0.25}}            

    \end{picture}
    \caption{ }
    \label{fig:19}
  \end{center}
\end{figure}

\begin{figure}[htbp]
  \begin{center}
    
    \setlength{\unitlength}{2.5ex}
    \begin{picture}(30,8)(2,-3.5)

      %%% left figure

      \thicklines
      \put(3,0){\line(1,0){4}}
      \put(3,0){\line(-1,1){2.5}}
      \put(0.5,2.5){\line(1,1){2.5}}
      \put(3,5){\line(1,0){4}}
      \put(7,0){\line(1,1){2.5}}
      \put(7,5){\line(1,-1){2.5}}

      \thinlines
      \put(1.7,1.3){\line(1,1){3.7}}
      \put(4.8,0){\line(1,1){3.6}}

      %%% right figure

      \thinlines
      \put(26,2.5){\line(1,0){3.6}}
      \put(25.1,2.5){\circle{1.8}}
      \put(28,2.5){\circle*{.2}}      
      \put(30.5,2.5){\circle{1.8}}      
      \put(30.5,2.5){\circle{3.6}}      
      \put(32.3,2.5){\circle*{.2}}      
      \put(29.8,2.5){\circle{5}}      
%%%      \put(28,2.5){\oval(9,6)}      
      \put(28.2,2.5){\circle{9}}      

      \cbezier(28.45,0.4)(29.25,-2.5)(30.5,-2.5)(30.3,0.58)
      \put(28.64,-0.25){\vector(-1,3){.2}}
      \put(30.3,0.5){\vector(0,1){.2}}
      \put(27,-3.5){\makebox(0,0)[l]
          {\shortstack{loops homotopic\\ to each other.}}}

%      \put(29.1,0.25){\vector(-1,1){2}}
%      \put(29.5,0){\makebox(0,0)[l]{loop homotopic to $0$.}}

      %%% squiggle in the middle
      \put(13,1.5){\line(0,1){2}}      
      \cbezier(13,2.5)(15,3)(17,2)(20,2.5)
      \put(19.75,2.5){\vector(1,0){0.25}}            

    \end{picture}
    \caption{ }
    \label{fig:20}
  \end{center}
\end{figure}

The lemma is geometrically evident, and its proof is obtained by 
drawing a picture (Fig.19 for (i), Fig.20 for (ii)). 
Motivated by the lemma, we will refer to the
conditions on a gluing in (i) and (ii) also as conditions (A) and
(B). 

Denote by $\varepsilon_{\rm g}(n),\mu_{\rm g}(n),\lambda_{\rm g}(n)$ the 
numbers of gluings of a (labeled) $2n$-gon into a surface of
genus ${\rm g}$, with no conditions, condition (A), and conditions
(A),(B), respectively (so $\varepsilon_{\rm g}(n)$ is the quantity we
already studied). 

\begin{proposition} One has\footnote{Note that $\lambda_{\rm g}(n)=0$ for almost all $n$, so this sum is finite.} 
$$
\chi(\Gamma_{\rm g}^1)=\sum_n (-1)^{n-1}\frac{\lambda_{\rm g}(n)}{2n}.
$$
\end{proposition}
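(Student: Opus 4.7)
The plan is to apply Quillen's formula to the contractible complex $Y_{\rm g}$, on which $\Gamma_{\rm g}^1$ acts cellularly with finitely many cell orbits and finite stabilizers. Since a filling arc system with $n$ arcs corresponds to a cell of $Y_{\rm g}$ of dimension $6{\rm g}-3-n$, and $6{\rm g}-3-n\equiv n-1\pmod 2$, Quillen's formula gives
$$
\chi(\Gamma_{\rm g}^1)=\sum_n (-1)^{n-1}\sum_{[S]}\frac{1}{|{\rm Stab}(S)|},
$$
where the inner sum is over $\Gamma_{\rm g}^1$-equivalence classes $[S]$ of filling arc systems with $n$ arcs. It then suffices to show that $\sum_{[S]}|{\rm Stab}(S)|^{-1}=\lambda_{\rm g}(n)/(2n)$.

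For this I would invoke Poincar\'e duality: a filling arc system $S$ with $n$ arcs yields a dual cell decomposition $S^*$ with a unique 2-cell, exhibiting $\Sigma$ as a $2n$-gon glued by an orientation-preserving matching of sides which, by the preceding lemma, satisfies conditions (A) and (B). A labeling of the polygon's sides by $\{1,\dots,2n\}$ amounts to marking a starting side (traversing the boundary in the induced orientation), and two such labelings differ by a rotation in $\Bbb Z/2n$; reflections are excluded because they reverse the orientation of $\Sigma$. Thus $\Gamma_{\rm g}^1$-orbits of filling $n$-arc systems biject with $\Bbb Z/2n$-orbits of labeled matchings satisfying (A), (B), and the orbit-stabilizer formula yields
$$
\lambda_{\rm g}(n)=\sum_{[S]}\frac{2n}{|{\rm Stab}(S)|},
$$
from which the desired identity follows by dividing by $2n$.

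The main obstacle is identifying the stabilizer ${\rm Stab}(S)\subset\Gamma_{\rm g}^1$ with the rotational symmetry group of the corresponding unlabeled matching. The point is that the complement $\Sigma\setminus S$ is an open disk, so by Alexander's trick any orientation-preserving diffeomorphism of $\Sigma$ (fixing $p$) that preserves the isotopy class of $S$ is determined up to isotopy by its induced action on the sides of the polygon; orientation-preservation forces this action to be a rotation, and conversely any rotation of the polygon preserving the side pairing extends to such a diffeomorphism. Granting this identification, combining with the counting of labelings above delivers the formula.
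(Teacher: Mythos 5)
Your proposal is correct and follows the same route as the paper's one-sentence proof (Quillen's formula combined with the count that each $\Gamma_{\rm g}^1$-orbit of filling $n$-arc systems arises from exactly $2n/|{\rm Stab}(S)|$ labeled gluings). You merely spell out in detail what the paper compresses into ``since the labeling of the polygon does not matter,'' in particular the sign computation $(-1)^{6{\rm g}-3-n}=(-1)^{n-1}$ and the Alexander-trick identification of ${\rm Stab}(S)\subset\Gamma_{\rm g}^1$ with the rotational stabilizer of the matching.
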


\begin{proof}
Each filling arc system $\sigma$ arises from 
$2n/|{\rm Stab(\sigma)}|$ gluings (since the labeling 
of the polygon does not matter for the resulting surface
with an arc system). Thus, the result follows 
from Quillen's formula.
\end{proof}

\subsection{Computation of $\sum_n (-1)^{n-1}\frac{\lambda_{\rm g}(n)}{2n}$}
Now it remains to compute the sum on the right hand side. 
To do this, we will need to link $\lambda_{\rm g}(n)$ with
$\varepsilon_{\rm g}(n)$, which has already been computed. 
This is accomplished by the following lemma. 

\begin{lemma}\label{rel}
(i) One has
$$
\varepsilon_{\rm g}(n)=\sum_i \begin{pmatrix}2n\\ i\end{pmatrix}
\mu_{\rm g}(n-i).
$$
(ii) One has
$$
\mu_{\rm g}(n)=\sum_i \begin{pmatrix}n\\ i\end{pmatrix}
\lambda_{\rm g}(n-i).
$$
\end{lemma}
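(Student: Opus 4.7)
The plan is to prove each identity by an explicit combinatorial bijection that decomposes an arbitrary gluing in the larger class (those with fewer constraints) as a pair consisting of a ``reduced'' gluing in the smaller class (more constraints, fewer edges) together with insertion data whose count is the binomial coefficient in the formula.

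For (i), I would define a reduction procedure that takes a gluing of a $2n$-gon and iteratively contracts all folded pairs, i.e., pairs of cyclically adjacent edges glued to each other. Each contraction collapses a segment of the boundary of the form $e_j e_j^{-1}$ to a single point, passing from a $2k$-gon to a $2(k-1)$-gon without changing the resulting surface. After $i$ such contractions the procedure terminates at a canonical reduced gluing satisfying condition (A) on a $2(n-i)$-gon. To invert this, given a reduced gluing and a subset of size $i$ among the $2n$ edge positions of the ambient polygon, I would insert a folded pair at each chosen position. The main claim to verify is that this establishes a bijection between arbitrary gluings of the $2n$-gon and pairs (reduced gluing, $i$-element subset), and summing over $i$ gives the identity.

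For (ii), the analogous reduction contracts ``bigon pairs'' --- configurations in which two consecutive edges are glued to two consecutive edges in the reverse order, which bound a bigon face in the dual cell decomposition of the glued surface. Each such contraction merges the two arcs bounding a bigon into a single arc, removing two edges (one pair from the matching) while preserving condition~(A); the reduction terminates at a canonical gluing satisfying both (A) and (B). The inverse map inflates an $(A,B)$-satisfying gluing on a $2(n-i)$-gon by inserting $i$ bigon pairs, and the coefficient $\binom{n}{i}$ reflects that insertions are specified at the level of the matching: one designates $i$ of the $n$ pairs in the resulting matching as the ``inflated'' ones.

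The main technical obstacle is verifying that the reduction and inflation maps are mutually inverse and that the insertion data is parameterized by unconstrained subsets. Order-independence of the reduction (so that the reduced gluing is truly canonical) requires a local commutativity check for contractions, which reduces to noting that disjoint folded pairs (respectively bigons) commute and handling the nested case by an explicit local argument. The more delicate point is to show that arbitrary insertion data gives a valid inflated gluing whose reduction recovers the original reduced gluing --- i.e., that no unintended folded pairs or bigons are produced by the insertion process and that the counting of insertion configurations really gives the binomial coefficient without overcounting or additional constraints.
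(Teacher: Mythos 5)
Your reduction strategy (iteratively contract folded pairs for (i), bigon pairs for (ii), landing on the canonical reduced gluing) is the same as the paper's, but the enumeration that produces the binomial coefficients is where your account breaks down, and that is really the substance of the lemma.

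For (i), the issue is the parameterization by ``the $2n$ edge positions of the ambient polygon.'' Before the insertion is specified there is no ambient $2n$-gon, and after the insertion the ``positions'' at which folded pairs sit are not independent of each other (they can be adjacent, nested, etc.), so a naive ``choose $i$ of $2n$ slots'' picture is not obviously well-posed. The paper sidesteps this by tracking the vertices of $\Delta$ that become interior points under the reduction: each contraction creates exactly one interior point $w_j$, and one records $\nu_j$, the \emph{smallest} label of a vertex of $\Delta$ mapping to $w_j$. The set $\{\nu_1,\dots,\nu_i\}\subset\{1,\dots,2n\}$ is the datum that, together with $(\Delta_\sigma,\sigma')$, uniquely reconstructs $\sigma$ (peel off the folded pair at a $\nu_j$ with $\nu_j+1\neq\nu_k$ for all $k$, relabel, repeat), and every $i$-subset arises. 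That argument for injectivity/surjectivity is the actual content and is absent from your sketch.

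For (ii), the claim that $\binom{n}{i}$ ``reflects that one designates $i$ of the $n$ pairs as inflated'' is incorrect. In the reduced $2(n-i)$-gon there are only $n-i$ pairs, not $n$, and the inflation data is not a subset of pairs. What actually happens is that the $j$-th reduced pair is obtained by merging $m_j+1$ consecutive edge-pairs of $\Delta$, where $m_j\ge 0$ and $\sum_{j=1}^{n-i}m_j=i$; the pair $(\Delta,\sigma)$ is then determined by $(\Delta_\sigma,\sigma')$ and $(m_1,\dots,m_{n-i})$ only up to the choice of which of the $m_1+1$ edges making up the first reduced edge carries the label $1$. So the count is the weighted sum $\sum_{m_1+\cdots+m_{n-i}=i}(m_1+1)$, which equals $\binom{n}{i}$ by a short computation — but this is a compositions-with-weight count, not a subset count, and the weight $(m_1+1)$ coming from the rotational labeling ambiguity is exactly what promotes the naive composition count $\binom{n-1}{i}$ to $\binom{n}{i}$. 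Your proposal has no analogue of this weighting, so as written it would not produce the right coefficient.
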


\begin{proof}(i)  Let $\sigma$ be a matching of 
the sides of a $2n$-gon $\Delta$ with labeled vertices. If there is a pair of consecutive
edges that are matched, we can glue them to each other to obtain a $2n-2$-gon. 
Proceeding like this as long as we can, we will arrive at a $2n-2i$-gon
$\Delta_\sigma$, with a matching $\sigma'$ of its sides which
satisfies condition (A). Note that $\Delta_\sigma$ and $\sigma'$
do  not depend on the order in which neighboring edges were glued
to each other, and $\Delta_\sigma$ has a canonical labeling 
by $1,...,2n-2i$, in the increasing order of the ``old'' labels. 
Now, we claim that each $(\Delta_\sigma,\sigma')$ 
is obtained in exactly $\begin{pmatrix}2n\\ i\end{pmatrix}$ ways;
this implies the required statement.

Indeed, let us consider the vertices of 
$\Delta$ that ended up in the interior of $\Delta_\sigma$. 
They have mapped to $i$ points in the interior (each gluing of a
pair of edges produces a new point). Let us call these points
$w_1,...,w_i$, and let $\nu_j$ be the smallest label of a vertex
of $\Delta$ that goes to $w_j$ (where we label the vertices so that the $k$-th 
edge connects vertex $k$ with vertex $k+1$). 
Then $\nu_1,...,\nu_i$ is a
subset of $\lbrace{1,...,2n\rbrace}$. 
This subset completely determines 
the matching $\sigma$ if $(\Delta_\sigma,\sigma')$ are given: 
namely, we should choose a $\nu_j$ such that $\nu_{j}+1\ne \nu_k$
for any $k$, and glue the two edges adjacent to $\nu_j$; then
relabel by $1,...,2n-2$ the remaining vertices
(in increasing order of ``old'' labels), and continue the
step again, and so on. From this it is also seen that any set of
$\nu_j$ may arise. This proves (i). 

(ii) Let $\sigma$ be a matching of $\Delta$ (with labeled edges) which 
satisfies condition (A) but not necessarily (B). 
If $a_1,a_2$ are consecutive edges that are glued to 
consecutive edges $b_2,b_1$ in the opposite order, then we may 
unite $a_1,a_2$ into a single edge $a$, and $b_2,b_1$ into $b$, 
and obtain a $2n-2$-gon with a matching. Continuing so as long as we
can, we will arrive at a $2n-2i$-gon $\Delta_\sigma$ with a new
matching $\sigma'$, which satisfies conditions (A) and (B). 
In $\Delta_\sigma$, each ($j$-th) pair of edges 
is obtained for $m_j+1$ pairs of edges in $\Delta$. 
Thus, $\sum_{j=1}^{n-i}m_j=i$. Furthermore, 
for any $(\Delta_\sigma,\sigma')$ the collection of numbers 
$m_1,...,m_{n-i}$ defines $(\Delta,\sigma)$ uniquely, 
up to deciding which of the $m_1+1$ edges constituting
the first edge of $\Delta_\sigma$ should be labeled by 1. 
Thus, each $(\Delta_\sigma,\sigma')$ 
arises in the number of ways given by the formula 
$$
\sum_{m_1,...,m_{n-i}:\sum_{j=1}^{n-i} m_j=i}(m_1+1).
$$
It is easy to show (check!) that this number is equal to 
$\begin{pmatrix}n\\ i\end{pmatrix}$.
This proves (ii). 
\end{proof}

The completion of the proof of Theorem \ref{hz} 
depends now on the following computational lemma. 

\begin{lemma}\label{tech}
Let $\varepsilon(n),\mu(n),\lambda(n)$, $n\ge 0$, be
 sequences satisfying the equations
$$
\varepsilon(n)=\sum_i \begin{pmatrix}2n\\ i\end{pmatrix}
\mu(n-i);
$$
$$
\mu(n)=\sum_i \begin{pmatrix}n\\ i\end{pmatrix}
\lambda(n-i).
$$
Assume also that $\varepsilon(n)=\binom{2n}{
  n}f(n)$, where $f$ is a polynomial such that $f(0)=0$. 
Then $\lambda(0)=0$, $\lambda(n)$ has finitely many nonzero
values, and 
$$
\sum_{n\ge 1} (-1)^{n-1}\frac{\lambda(n)}{2n}=f'(0).
$$ 
\end{lemma}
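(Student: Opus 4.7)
The plan is to exploit the linearity in $f$ of the correspondences $f \mapsto \mu \mapsto \lambda \mapsto S := \sum_{n\ge 1}(-1)^{n-1}\lambda(n)/(2n)$ and reduce to the Newton-basis monomials $f_k(n) := \binom{n}{k}$, $k \ge 1$. Any polynomial with $f(0) = 0$ expands uniquely as $f = \sum_{k\ge 1} c_k f_k$ with only finitely many nonzero $c_k = (\Delta^k f)(0)$, and since $\left.\frac{d}{dn}\binom{n}{k}\right|_{n=0} = (-1)^{k-1}/k$, one has $f'(0) = \sum_k c_k(-1)^{k-1}/k$. It therefore suffices to prove, for each $f = f_k$ separately, that the corresponding $\lambda$ has support contained in $\{k,\ldots,2k-1\}$ (in particular $\lambda(0)=0$ and $\lambda$ has finite support) and that $S = (-1)^{k-1}/k$.

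For $f = f_k$ I will first determine the ordinary generating function $\widetilde M(u) := \sum_j \mu(j) u^j$ in closed form. Applying $\frac{1}{k!}\partial_y^k|_{y=0}$ to $\sum_n \binom{2n}{n}(1+y)^n x^n = (1-4x(1+y))^{-1/2}$ gives
$$\sum_n \binom{2n}{n}\binom{n}{k} x^n = \binom{2k}{k}\frac{x^k}{(1-4x)^{k+1/2}}.$$
Combined with the identity $\sum_n \binom{2n}{n-j}x^n = (c(x)-1)^j/\sqrt{1-4x}$, where $c(x) = (1-\sqrt{1-4x})/(2x)$ is the Catalan OGF (using $xc(x)^2 = c(x)-1$), the defining relation $\varepsilon(n) = \sum_j \binom{2n}{n-j}\mu(j)$ becomes
$$\binom{2k}{k}\frac{x^k}{(1-4x)^{k+1/2}} = \frac{\widetilde M(c(x)-1)}{\sqrt{1-4x}}.$$
The Catalan substitution $u = c(x)-1$, with inverse $x = u/(1+u)^2$ and $\sqrt{1-4x} = (1-u)/(1+u)$, then collapses this to $\widetilde M(u) = \binom{2k}{k} u^k/(1-u)^{2k}$, that is, $\mu(j) = \binom{2k}{k}\binom{j+k-1}{j-k}$.

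Feeding this $\mu$ into the binomial inversion $\lambda(n) = \sum_i (-1)^i\binom{n}{i}\mu(n-i)$ and reindexing, a standard coefficient-extraction argument gives
$$\frac{\lambda(n)}{\binom{2k}{k}} = \sum_p (-1)^p\binom{n}{p}\binom{n+k-p-1}{n-k-p} = [y^{n-k}]\bigl((1-y)^n(1-y)^{-2k}\bigr) = [y^{n-k}](1-y)^{n-2k} = \binom{k-1}{n-k},$$
which indeed vanishes outside $k \le n \le 2k-1$. Summing,
$$S = \frac{(-1)^{k-1}\binom{2k}{k}}{2}\sum_{i=0}^{k-1}\frac{(-1)^i\binom{k-1}{i}}{k+i} = \frac{(-1)^{k-1}\binom{2k}{k}(k-1)!^2}{2(2k-1)!} = \frac{(-1)^{k-1}}{k},$$
using the Beta integral $\sum_i(-1)^i\binom{k-1}{i}/(k+i) = B(k,k) = (k-1)!^2/(2k-1)!$ and the arithmetic simplification $\binom{2k}{k}(k-1)!^2/(2k-1)! = 2/k$. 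The main obstacle is the closed-form evaluation of $\widetilde M(u)$: the Catalan substitution is the crucial algebraic move, after which the remaining steps are routine binomial-coefficient bookkeeping.
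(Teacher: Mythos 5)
Your proof is correct, and it is organized differently from the paper's. Both proofs pass the two binomial relations through ordinary generating functions via the Catalan substitution ($u = c(x)-1$, $x = u/(1+u)^2$, $\sqrt{1-4x} = (1-u)/(1+u)$) and both finish with the Euler Beta integral $\int_0^1 x^{p-1}(1-x)^{p-1}\,dx = (p-1)!^2/(2p-1)!$. The difference is where the linearity in $f$ is exploited. The paper keeps $f$ general, composes the two transforms into a single closed form $E(z) = \frac{1+\sqrt{1-4z}}{2(1-4z)}L\bigl(\frac{1-\sqrt{1-4z}}{2\sqrt{1-4z}}\bigr)$, and then (crucially) reduces to the one-parameter family $P(z)=(1+a)^z-1$ with $a$ a formal variable, so that a single computation produces $L(u)=\frac{1}{1+u}\bigl((1-4au(1+u))^{-1/2}-1\bigr)$ and the final answer falls out as the power series $\log(1+a)$ once the sum is written as $\tfrac12\int_{-1}^0 L(u)\,du/u$. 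You instead decompose $f$ in the Newton basis $\binom{n}{k}$, $k\ge 1$, and for each $k$ push the computation all the way through to the explicit closed form $\lambda(n)=\binom{2k}{k}\binom{k-1}{n-k}$. This buys you two things: the claims $\lambda(0)=0$ and $\lambda$ has finite support become immediate from the support $\{k,\dots,2k-1\}$ of $\binom{k-1}{n-k}$ (the paper implicitly leaves these to the reader), and the final sum is a genuinely finite sum rather than an integral of the generating function. The paper's route is shorter once one accepts the formal-parameter trick and avoids ever computing $\lambda(n)$ explicitly; yours is more elementary, more explicit, and makes the structure of $\lambda$ transparent. Both are valid.
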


\begin{proof}
Let us first consider any sequences $\varepsilon(n)$,
$\mu(n)$, and $\lambda(n)$ linked by the equations of the lemma. 
Let $E(z)$, $M(z)$, and $L(z)$ be their generating functions
(i.e. $E(z)=\sum_{n\ge 0}\varepsilon(n)z^n$ etc.). 
We claim that 
$$
E(z)=\frac{1+\sqrt{1-4z}}{2(1-4z)}L\left(\frac{1-\sqrt{1-4z}}{2\sqrt{1-4z}}\right).
$$
To see this, it suffices to consider the case 
$\lambda_i=\delta_{ki}$ for some $k$. 
In this case, 
$$
E(z)=\sum_{i,n}\begin{pmatrix} 2n\\ i\end{pmatrix}
\begin{pmatrix} n-i\\ k\end{pmatrix}z^n=
\sum_{p,q\ge 0}\begin{pmatrix} 2p+2q\\ p\end{pmatrix}
\begin{pmatrix} q\\ k\end{pmatrix}z^{p+q}.
$$
But the function 
$$ 
F_r(z):=\sum_{p\ge 0}\begin{pmatrix} 2p+r\\ p\end{pmatrix}z^p
$$
equals
$$
F_r(z)=
\frac{1}{\sqrt{1-4z}}\biggl(\frac{1-\sqrt{1-4z}}{2z}\biggr)^r,
$$
as may be easily seen by induction from the recursion 
$$
F_r=z^{-1}(F_{r-1}-F_{r-2}),
$$ 
$r\ge 2$.  
Substituting this in the formula for $E(z)$, one gets 
(after trivial simplifications)
$$
E(z)=\frac{1+\sqrt{1-4z}}{2(1-4z)}\left(\frac{1-\sqrt{1-4z}}{2\sqrt{1-4z}}\right)^k,
$$
as desired. 

Now assume that $\varepsilon(n)$ satisfies the polynomiality condition. 
This means that $E(z)=P(z\partial)|_{z=0}\frac{1}{\sqrt{1-4z}}$, 
where $P$ is a polynomial with vanishing constant term. 
To prove our claim, it suffices to consider the case 
$P(z)=(1+a)^z-1$, where $a$ is a formal parameter (so $P'(0)=\log(1+a)$); indeed, the coefficients of this formal series 
are $\binom{z}{j}$, $j\ge 1$, which form a basis in the space of polynomials of $z$ with vanishing constant term.  
In this case we get 
$$
E(z)=\frac{1}{\sqrt{1-4(1+a)z}}-\frac{1}{\sqrt{1-4z}}.
$$
Hence, 
$$
L(u)=\frac{1}{1+u}\left(\frac{1}{\sqrt{1-4au(1+u)}}-1\right).
$$
Therefore, 
$$
\sum_k (-1)^{k-1}\frac{\lambda_k}{2k}=\frac{1}{2}\int_{-1}^0 L(u)\frac{du}{u}=
\frac{1}{2}\sum_{p\ge 1} \begin{pmatrix} 2p\\ p\end{pmatrix}
(-1)^{p-1}a^p\int_0^1x^{p-1}(1-x)^{p-1}dx.
$$
But $\int_0^1x^{p-1}(1-x)^{p-1}dx$ is an Euler Beta integral, and it equals 
$\frac{(p-1)!^2}{(2p-1)!}$. Thus, 
$$
\sum_k (-1)^{k-1}\frac{\lambda_k}{2k}=\sum_{p\ge 1}(-1)^{p-1}\frac{a^p}{p}=\log(1+a),
$$ as desired. 
\end{proof}

\subsection{End of proof of Theorem \ref{hz}}
Now we finish the proof of the Harer-Zagier theorem. 
Recall that using matrix integrals we have proved the formula 
\begin{equation}\label{Pn}
P_n(x):=\sum_{\rm g} \varepsilon_{\rm g}(n)x^{n+1-2{\rm g}}
=\frac{(2n)!}{2^nn!}\sum_{p\ge 0}\begin{pmatrix} n\\ p\end{pmatrix}
2^p\begin{pmatrix} x\\ p+1\end{pmatrix}.
\end{equation}

Let us set $q:=n-p$. Then expression (\ref{Pn}) takes the form 
\begin{equation}\label{Pn1}
P_n(x)=\begin{pmatrix} 2n\\ n\end{pmatrix}
\sum_{q\ge 0}2^{-q}\begin{pmatrix} n\\ q\end{pmatrix}\frac{n!}{(n-q+1)!}
x(x-1)...(x-n+q).
\end{equation}

We claim now that the coefficient of $x^{-2{\rm g}}$ (${\rm g}\ge 1$) in the polynomial 
$\frac{P_n(x)}{x^{n+1}}$ is of the form 
$\binom{2n}{n}f_{\rm g}(n)$, where $f_{\rm g}$ is a polynomial. 
Indeed, contributions to the coefficient of $x^{-2{\rm g}}$ come from terms with 
$q\le 2{\rm g}$ only, so it suffices to check that each of these contributions 
is as stated. This reduces to checking that  
the coefficients of the Laurent polynomial $Q(x,n)=(1-\frac{1}{x})...(1-\frac{n}{x})$ are polynomials in $n$,
which vanish at $-1$ (except, of course, the leading coefficient).
To see this, let $Q(x,a)=\frac{\Gamma(x)}{\Gamma(x-a)x^a}$ (this equals to 
$Q(x,n)$ if $a=n$). This function has an asymptotic Taylor expansion 
in $\frac{1}{x}$ as $x\to +\infty$ which is obtained from the Stirling asymptotic expansion of $\Gamma(x)$ given by \eqref{stir}, and it is easy to show that the coefficients are polynomials in 
$a$. Moreover, $Q(x,-1)=1$, which implies the required statement. 

Furthermore, we claim that $f_{\rm g}(0)=0$: again, this follows from the fact that 
the non-leading coefficients of the expansion of $Q(x,a)$ vanish at $a=0$. 
But this is clear, since $Q(x,0)=1$. 

Thus, we are in a situation where 
Lemma \ref{tech} can be applied. 
So it remains to compute $\sum_{{\rm g}\ge 1}f_{\rm g}'(0)x^{-2{\rm g}}$. 
To do this, observe that the terms with $q>1$ do not contribute to 
$f_{\rm g}'(0)$, as they are given by polynomials of $n$ that are divisible by $n^2$.  
So we only need to consider $q=0$ and $q=1$. 
For $q=1$, the contribution is the value of 
$$
\tfrac{1}{2x}(1-\tfrac{1}{x})...(1-\tfrac{n}{x})
$$ 
at $n=0$, i.e. it is $\frac{1}{2x}$. 
For $q=0$, the contribution is the derivative at $0$ with respect to $n$ of 
$\frac{1}{n+1}(1-\frac{1}{x})...(1-\frac{n}{x})$, i.e. it is 
$$
\tfrac{d}{da}|_{a=0}\tfrac{Q(x,a)}{a+1}=-1+\tfrac{d}{da}|_{a=0}Q(x,a).
$$ 
Thus, we have (asymptotically)
$$
\sum_{{\rm g}\ge 1}f_{\rm g}'(0)x^{-2{\rm g}}=\tfrac{1}{2x}+\tfrac{d}{da}|_{a=0}Q(x,a)=
\tfrac{1}{2x}+\tfrac{\Gamma'(x)}{\Gamma(x)}-\log x
$$
Now, the asymptotic expansion for $\Gamma'/\Gamma$ given by \eqref{loggapri}
implies that $f_{\rm g}'(0)=-\frac{B_{2{\rm g}}}{2{\rm g}}$. This completes the proof. 

\begin{exercise} Prove Theorem \ref{mum} for ${\rm g}=1$. 
\end{exercise} 
\begin{exercise} 
Let $\Gamma(N)$ be the congruence subgroup of $SL_2(\Bbb Z)$ 
which consists of matrices equal to $1$ modulo $N$. 

(a) Show that $\Gamma(N)$ is free for $N\ge 3$. 
(Hint: consider the action of $\Gamma(N)$ on the upper half-plane).
Show that $\Gamma(2)$ is the direct product of a free group $\Gamma_+(2)$ 
on two generators with $\Bbb Z/2\Bbb Z$. 

(b) Find the number of generators of $\Gamma(N)$, $N\ge 3$ which generate it without 
relations. (Hint: compute $\chi(\Gamma(N))$).
\end{exercise} 

\begin{exercise} Let $\Gamma$ be the group defined by the generators $a,b,c$ with defining relation
$ab=ba$. Find the Euler characteristic of $\Gamma$. 
\end{exercise} 

\begin{exercise} Consider a triangle $\Delta$ in the hyperbolic plane $H=\Bbb C_+$ with angles $\alpha=\frac{\pi}{2}$, $\beta=\frac{\pi}{3}$, $\gamma=\frac{\pi}{7}$,  and let $\Gamma$ be the subgroup of $PSL_2(\Bbb R)$ generated by rotations $a,b,c$ around the vertices of $\Delta$ by angles $2\alpha,2\beta,2\gamma$ respectively. 

(i) Show that $H/\Gamma$ is naturally homeomorphic to a sphere glued out of two copies of $\Delta$, which can be viewed as an orbifold with three points with nontrivial stabilizers (orders 2,3,7).

(ii) Compute the Euler characteristic $\chi(\Gamma)$. 

(iii) Show that the defining relations for $\Gamma$ are 
$$
a^2=1,\ b^3=1,\ c^7=1,\ abc=1
$$ 
(use an orbifold version of 
van Kampen's theorem). 

(iv) Construct a surjective homomorphism $\phi: \Gamma\to PSL_2(\Bbb F_7)$. 

(v) Show that ${\rm Ker}\phi$ is torsion free and $H/{\rm Ker}\phi$ is a compact Riemann surface $X$
of genus $3$ with an action of $PSL_2(\Bbb F_7)$. Identify $X$ with the Klein quartic $x^3y+y^3z+z^3x=0$ in $\Bbb C\Bbb P^2$. 
\end{exercise}

\section{Matrix integrals and counting planar diagrams}

\subsection{The number of planar gluings}

Let us return to the setting of Section 4. 
Thus, we have a potential 
$$
U(x)=\frac{x^2}{2}-\sum_{j\ge 1}g_j\frac{x^j}{j}
$$
(with $g_j$ being formal parameters), and consider the 
matrix integral 
$$
Z_N(\hbar)=\hbar^{-\frac{N^2}{2}}\int_{{\frak h}_N}e^{-{\rm Tr}U(A)}dA.
$$
Let $\widehat Z_N(\hbar)=Z_N(\hbar/N)$. 
We have seen that 
$$
\lim_{N\to \infty}\frac{\log \widehat Z_N}{N^2}=W_\infty, 
$$
where $W_\infty$ is given by summation over planar fat graphs:
$$
W_\infty=\sum_{\bold n}\prod_i (g_i\hbar^{\frac{i}{2}-1})^{n_i}\sum_{\widetilde{\Gamma}\in
\widetilde{G}_c(\bold n)[0]}\frac{1}
{|{\rm Aut(\widetilde{\Gamma})}|}.
$$
In particular, the coefficient 
of $\prod_i (g_i\hbar^{\frac{i}{2}-1})^{n_i}$ is 
the number of 
(orientation preserving) gluings of a fat graph of genus zero 
out of a collection of fat flowers 
containing $n_i$  $i$-valent flowers for each $i$, 
divided by $\prod_i i^{n_i}n_i!$. 

On the other hand, one can compute $W_\infty$ explicitly 
as a function of $g_i$ by reducing the matrix integral
to an integral over eigenvalues, and then 
using a fundamental fact from the theory of random matrices:
the existence of an asymptotic distribution of eigenvalues
in the limit $N\to \infty$. This approach allows one to obtain 
simple closed formulas for the numbers of planar gluings, 
which are quite nontrivial and 
for which direct combinatorial 
proofs were discovered much later. 

To illustrate this method, we will 
restrict ourselves to the case of the 
potential $U(x)=\frac{x^2}{2}+gx^4$ (so $g_4=-4g$ and other $g_i=0$), 
and set $\hbar=1$. 
Then 
$$
W_\infty=\sum_{n\ge 1}c_n\frac{(-1)^ng^n}{n!},
$$
where
$c_n$ is a number of connected planar gluings of 
a set of $n$ 4-valent flowers. In other words, 
$c_n$ is the number of ways (up to isotopy) to connect 
$n$ ``crosses'' in the 2-sphere so that 
all crosses are connected with each other, all the arms are used, 
and the connecting lines do not intersect. 

\begin{exercise} Check by drawing pictures that 
$c_1=2$, $c_2=36$.
\end{exercise} 

\begin{theorem}\label{BIPZ} (Br\'ezin, Itzykson, Parisi, Zuber, \cite{BIPZ}, 1978).
One has 
$$
c_n=(12)^n \frac{(2n-1)!}{(n+2)!}.
$$
\end{theorem}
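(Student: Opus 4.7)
The approach is to compute $W_\infty(g)$ in closed form by large-$N$ analysis of the eigenvalue integral, and then extract the $c_n$ by Lagrange inversion. By the Weyl integration formula (used already in the derivation of $J_m$ in the proof of Theorem \ref{howmany}) we reduce the matrix integral to an integral over eigenvalues; writing $U(x)=x^2/2+gx^4$ and introducing the empirical measure $\mu_N:=\tfrac{1}{N}\sum_i\delta_{\lambda_i}$, the exponent takes the form $-N^2\mathcal F[\mu_N]$ with
$$\mathcal F[\mu]=\int U\,d\mu-\iint\log|x-y|\,d\mu(x)d\mu(y).$$
In the $N\to\infty$ limit the integral localizes (Coulomb gas heuristic) to the unique equilibrium measure $\mu_*$ minimizing $\mathcal F$, and $W_\infty(g)=\mathcal F[\mu_*^{(0)}]-\mathcal F[\mu_*]$, where $\mu_*^{(0)}$ is the semicircle law (the equilibrium measure at $g=0$, cf. Wigner's semicircle law).

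Next I would solve the Euler--Lagrange equation $U'(x)=2\,\mathrm{PV}\!\int\!\frac{d\mu_*(y)}{x-y}$ via the resolvent $G(z):=\int\frac{d\mu_*(y)}{z-y}$, holomorphic off $\supp(\mu_*)$, satisfying $G(z)\sim 1/z$ at infinity and the jump condition $G(x+i0)+G(x-i0)=U'(x)$ on $\supp(\mu_*)$. Under the (expected) one-cut ansatz $\supp(\mu_*)=[-2a,2a]$, the polynomial form of $U'$ forces
$$G(z)=\tfrac12 U'(z)-\bigl(2gz^2+4ga^2+\tfrac12\bigr)\sqrt{z^2-4a^2},$$
and matching $G(z)\sim 1/z$ fixes $a$ by the single algebraic constraint $12g\alpha^2+\alpha=1$ with $\alpha:=a^2$. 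This yields the density $\rho_*(x)=\tfrac{1}{\pi}(2gx^2+4ga^2+\tfrac12)\sqrt{4a^2-x^2}$. Differentiating, $\partial_g W_\infty=-\int x^4\rho_*(x)\,dx$, a rational function of $\alpha$; integrating back using $dg/d\alpha$ from the constraint gives the Br\'ezin--Itzykson--Parisi--Zuber closed form
$$W_\infty=\tfrac12\log\alpha-\tfrac{1}{24}(\alpha-1)(9-\alpha).$$

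Finally I would extract the coefficients. Writing $\alpha=1+y$ with $y=-12g(1+y)^2$ (so $y$ is given as a formal power series in $-12g$), Lagrange inversion gives
$$[g^n]W_\infty=\frac{(-12)^n}{n}[y^{n-1}]\Bigl(f'(1+y)(1+y)^{2n}\Bigr),$$
where $f(\alpha)=\tfrac12\log\alpha+\tfrac{1}{24}(\alpha^2-10\alpha+9)$. Expanding $f'(1+y)(1+y)^{2n}=\tfrac12(1+y)^{2n-1}+\tfrac{1}{12}(1+y)^{2n+1}-\tfrac{5}{12}(1+y)^{2n}$, reading off binomial coefficients, and comparing with $W_\infty=\sum_{n\ge 1}\tfrac{(-1)^n c_n}{n!}g^n$, I obtain
$$c_n=12^n (n-1)!\Bigl[\tfrac12\tbinom{2n-1}{n-1}+\tfrac{1}{12}\tbinom{2n+1}{n-1}-\tfrac{5}{12}\tbinom{2n}{n-1}\Bigr].$$
A short calculation collapses the bracket (the quadratic-in-$n$ numerator simplifies identically to $12$) to give $c_n=12^n(2n-1)!/(n+2)!$.

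The main obstacle is the large-$N$ justification of the Coulomb-gas step, in particular verifying that the one-cut ansatz for $\supp(\mu_*)$ actually holds (otherwise the Riemann--Hilbert solution for $G(z)$ is wrong) and that subleading corrections in $1/N$ do not contaminate the leading $N^2$ term. Rigorous potential-theoretic arguments handle this for sufficiently small $|g|$, but for our purposes it is enough to read the identity as one of formal power series in $g$: by Theorem \ref{tHooft}, $W_\infty$ is already defined as such a series whose coefficients are finite sums of genus-zero Feynman amplitudes, and the saddle-point derivation above becomes an order-by-order algebraic manipulation that bypasses these analytic subtleties.
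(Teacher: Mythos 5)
Your proposal is correct and follows the same overall route as the paper: pass to eigenvalues via the Weyl formula, identify the equilibrium measure in the Coulomb-gas/steepest-descent limit by solving for the resolvent $G(z)$ under the one-cut ansatz (the paper does this with the explicit substitution $y=a(z+z^{-1})$ and the Plemelj formula; you phrase it as a scalar Riemann--Hilbert problem, which is the same thing), derive the constraint $12g\alpha^2+\alpha=1$ and the closed form $W_\infty=\frac12\log\alpha-\frac{1}{24}(\alpha-1)(9-\alpha)$, and then extract coefficients. Your two local refinements --- computing $\partial_g W_\infty=-\int x^4\,d\mu_*$ via the envelope theorem instead of the paper's direct integral manipulation, and making the final coefficient extraction explicit by Lagrange inversion (the paper simply says ``after a calculation'') --- are correct, and the Lagrange-inversion step is in fact cleaner and more transparent than the paper's; the binomial identity you need does collapse to $12$ as claimed, and your formal-power-series caveat about the large-$N$ justification matches the level of rigor the paper itself adopts (it also defers the analytic estimates to \cite{De}).
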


The proof of this theorem (with some omissions) is given in the next subsection. 

\subsection{Proof of Theorem \ref{BIPZ}} 
We follow the paper \cite{BIPZ}.
We will assume that $g$ is a positive real number, and 
compute the function $W_\infty(g)$ explicitly. 
The relevant matrix integral has the form 
$$
\widehat Z_N=\int_{{\frak h}_N}e^{-N\Tr(\frac{1}{2}A^2+gA^4)}dA.
$$
Passing to eigenvalues, we get
$$
\widehat Z_N=\frac{J_N(g)}{J_N(0)},
$$
where 
\begin{equation}\label{integrand} 
J_N(g)=\int_{\Bbb R^N}
e^{-N(\frac{1}{2}\sum_i \lambda_i^2+g\sum_i \lambda_i^4)}
\prod_{i<j} (\lambda_i-\lambda_j)^2d\lambda.
\end{equation} 
Thus, $W_\infty(g)=E(g)-E(0)$, 
where $E(g)=\lim_{N\to \infty}N^{-2}\log J_N(g)$. 

\begin{proposition} (Steepest descent principle)
$E(g)$ equals the leading coefficient of the asymptotics 
as $N\to \infty$ of the maximal value of the logarithm of the integrand in \eqref{integrand}. 

\end{proposition}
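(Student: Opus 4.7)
The plan is to show $\log J_N(g) = M_N + o(N^2)$, where $M_N := \max_\lambda L_N(\lambda)$ and
$$L_N(\lambda) := -N\sum_i \left(\tfrac{1}{2}\lambda_i^2 + g\lambda_i^4\right) + 2\sum_{i<j}\log|\lambda_i - \lambda_j|$$
is the logarithm of the integrand in \eqref{integrand}. Dividing by $N^2$ will then give $E(g) = \lim_{N\to\infty} N^{-2}M_N$, which is the claim. The argument is a Laplace/steepest descent estimate in $\Bbb R^N$, with the twist that the ``small parameter'' and the dimension of integration both tend to their limits together, so I must verify that the polynomial-in-$N$ volume corrections are $o(N^2)$ on the log scale.

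First I would check that $M_N$ is attained at an interior critical point $\lambda^* \in \Bbb R^N \setminus \Delta$, where $\Delta := \{\lambda : \lambda_i = \lambda_j \text{ for some }i \ne j\}$: the quartic growth of $U(x) = \tfrac{1}{2}x^2 + gx^4$ dominates the logarithmic Vandermonde sum as $|\lambda|\to \infty$, and $L_N \to -\infty$ near $\Delta$, so the max is attained on a compact subset of $\Bbb R^N\setminus\Delta$. For the upper bound, choose $R$ large enough that $-N\sum U(\lambda_i) + 2\sum_{i<j}\log|\lambda_i - \lambda_j| \le -\tfrac{N}{4}\sum \lambda_i^2$ outside the box $[-R,R]^N$, so the tail integral is at most a Gaussian contribution of order $e^{o(N^2)}$; inside the box the integrand is pointwise $\le e^{M_N}$, giving
$$J_N(g) \le (2R)^N e^{M_N} + e^{o(N^2)} = e^{M_N + O(N\log N)}.$$

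For the lower bound, I would Taylor-expand $L_N$ around $\lambda^*$: if $H_N$ denotes the Hessian of $L_N$ at $\lambda^*$ and $\|H_N\| \le C\cdot N^c$ for some fixed $c$, then $L_N(\lambda) \ge M_N - 1$ on the ball $B(\lambda^*, r)$ with $r \sim N^{-c/2}$. Hence $J_N(g) \ge \mathrm{vol}(B(\lambda^*,r))\cdot e^{M_N-1} \ge e^{M_N - O(N\log N)}$. Combining with the upper bound and dividing by $N^2$ yields $N^{-2}\log J_N - N^{-2}M_N = O(N^{-1}\log N)\to 0$.

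The main obstacle is controlling $\|H_N\|$, equivalently the minimum eigenvalue spacing at $\lambda^*$. Differentiating twice, the entries of $H_N$ involve $N U''(\lambda_i^*)$ together with $(\lambda_i^* - \lambda_j^*)^{-2}$, so one needs the spacing estimate $|\lambda_i^* - \lambda_j^*| \gtrsim 1/N$. The natural route to this is to identify the variational problem $\max L_N$ (as $N\to\infty$) with the minimization of the log-gas energy functional
$$I[\mu] := \int U(x)\,d\mu(x) - \iint \log|x-y|\,d\mu(x)d\mu(y)$$
over probability measures $\mu$ on $\Bbb R$, whose unique minimizer $\mu_\infty$ has compact support with density bounded away from zero in the interior. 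Standard concentration estimates for the resulting Coulomb ensemble then give the required spacing bound, after which the two bounds above reduce to a mechanical application of Laplace's method and polynomial-in-$N$ volume estimates. Once $\mu_\infty$ is computed explicitly in the next step, this also identifies $E(g)$ with $-I[\mu_\infty]$.
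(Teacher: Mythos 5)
The paper does not actually prove this proposition; it says explicitly that the estimates will be omitted, referring to \cite{De} for the general case and to the Hermite-polynomial evaluation of Section 4 for $g=0$. So you are not deviating from a proof in the text, and your Laplace-method framework --- with the observation that both the dimension of the integration space and the inverse temperature scale together with $N$ --- is the right strategy.

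There is, however, a quantitative gap in your upper bound. The number $E(g)=\lim_{N\to\infty} N^{-2}\log J_N(g)$ is strictly \emph{negative}: for $g=0$ the Mehta integral evaluation gives $\log J_N(0)\sim-\tfrac{3}{4}N^2$, so $E(0)=-\tfrac{3}{4}$ and $e^{M_N}\sim e^{E(g)N^2}$ is extremely small. Your bound on the tail is $\int e^{-\frac{N}{4}\sum\lambda_i^2}d\lambda = (4\pi/N)^{N/2} = e^{-\frac{1}{2}N\log N+O(N)}$, which is indeed $e^{o(N^2)}$, but it is many orders of magnitude \emph{larger} than $e^{M_N}$, since $-\tfrac{1}{2}N\log N\gg -\tfrac{3}{4}N^2$. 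So the step ``$(2R)^N e^{M_N}+e^{o(N^2)}=e^{M_N+O(N\log N)}$'' is false: the putative tail term dominates the main term, and the argument only yields $\limsup N^{-2}\log J_N\le 0$, which is far from identifying $E(g)$. To repair the upper bound one must compare $L_N(\lambda)$ to $M_N$ rather than to $0$ on the complement of the box --- for instance by exploiting concavity of the log-integrand on the Weyl chamber together with a quantitative decay estimate for $L_N(\lambda)-M_N$ --- or pass to the empirical measure $\mu_N=\tfrac{1}{N}\sum\delta_{\lambda_i}$ and invoke the Ben Arous--Guionnet large-deviation machinery. Your lower bound is structurally sound but rests entirely on the spacing estimate $|\lambda_i^*-\lambda_{i+1}^*|\gtrsim 1/N$ for the discrete minimizer, which you flag but do not supply; this is a genuine theorem about weighted Fekete points (ultimately relying on boundedness of the equilibrium density $\mu_\infty$), and it is exactly the kind of nontrivial input the paper elects to cite from \cite{De} rather than reprove.
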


The proposition says, essentially, that 
the integrand has a sufficiently sharp maximum, 
so that the leading behavior of the integral 
can be computed by the steepest descent formula. 
We note that we cannot apply the steepest descent formula
without explanations, since the integral 
is over a space whose dimension grows as 
the perturbation parameter $1/N$ goes to $0$. 
In other words, it is necessary to do some estimates which we will omit. 
We will just mention that for $g=0$, this result can be derived from 
the explicit evaluation of the integral using Hermite polynomials
(see \S4). For the general case, we refer the reader to the book \cite{De}.

The logarithm of the integrand 
$$
K(\lambda_1,...,\lambda_N):=
-N(\tfrac{1}{2}\sum_i \lambda_i^2+g\sum_i \lambda_i^4)+2\sum_{i<j}\log|\lambda_i-\lambda_j|
$$  
has a unique maximum, because it is concave (check it!). 
This maximum is found by equating the partial derivatives to zero. 
This yields 
\begin{equation}\label{criti}
\sum_{j\ne i}\frac{1}{\lambda_i-\lambda_j}=N(\tfrac{1}{2}\lambda_i+2g\lambda_i^3).
\end{equation}
Let $\lambda_1<\lambda_2<...<\lambda_N$ be the unique (up to permutations) 
solution of this system of equations. 

\begin{proposition}
The normalized counting measures $\frac{1}{N}\sum_i \delta(x-\lambda_i)$ converge 
weakly to a measure $\mu(x)=f(x,g)dx$, where 
$f(x,g)$ is a continuous function supported on a finite interval $[-2a,2a]$ 
and differentiable on the interior of this interval.  
\end{proposition}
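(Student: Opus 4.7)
The strategy is the resolvent (Cauchy--Stieltjes transform) method from random matrix theory. Encode $\mu_N:=\tfrac{1}{N}\sum_i\delta_{\lambda_i}$ by
$$G_N(z):=\int\frac{d\mu_N(x)}{z-x}=\frac{1}{N}\sum_{i=1}^N\frac{1}{z-\lambda_i},\qquad z\in\Bbb C\setminus\Bbb R.$$
Weak convergence of $\mu_N$ on $\Bbb R$ is equivalent to locally uniform convergence of $G_N$ on the upper half-plane, and the density is recovered by Stieltjes inversion, $f(x)=-\tfrac{1}{\pi}\lim_{\varepsilon\to 0^+}\Im G(x+i\varepsilon)$. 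The plan is to derive a closed algebraic (``loop'' or Schwinger--Dyson) equation for the limit $G(z)$, solve it, and read off $f$.

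First I would derive the loop equation from \eqref{criti}. Multiplying \eqref{criti} by $\tfrac{1}{N(z-\lambda_i)}$, summing over $i$, and symmetrizing the double sum via the identity
$$\frac{1}{(z-\lambda_i)(\lambda_i-\lambda_j)}+\frac{1}{(z-\lambda_j)(\lambda_j-\lambda_i)}=\frac{1}{(z-\lambda_i)(z-\lambda_j)}$$
yields
$$\tfrac{1}{2}G_N(z)^2+\tfrac{1}{2N}G_N'(z)=P(z)G_N(z)-Q_N(z),$$
where $P(x):=\tfrac{1}{2}x+2gx^3$ and $Q_N(z):=\int\tfrac{P(z)-P(x)}{z-x}\,d\mu_N(x)$ is an even polynomial of degree $2$ in $z$ whose coefficients involve only the zeroth and second moments of $\mu_N$. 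Strict concavity of $K$ forces its unique critical point to be symmetric under $\lambda\mapsto -\lambda$, so $\mu_N$ is symmetric and its odd moments vanish, giving $Q_N(z)=\tfrac{1}{2}+2gz^2+2g m_2^N$ with $m_2^N:=\int x^2 d\mu_N$.

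Next I would establish an a priori bound $\max_i|\lambda_i|\le C$ uniformly in $N$. The quartic confinement of $U$ is essential: applying \eqref{criti} at $i=N$, the left-hand side is at most $\tfrac{N-1}{\lambda_N-\lambda_{N-1}}$ while the right is $\tfrac{N}{2}\lambda_N+2gN\lambda_N^3$, and a complementary estimate coming from the identity $\sum_i \lambda_i^2+4g\sum_i\lambda_i^4 = (N-1) + $ (logarithmic terms) obtained by summing \eqref{criti} $\times\lambda_i$ closes this into the bound $\int x^4\,d\mu_N\le C$. Tightness then gives a subsequential weak limit $\mu$ with Stieltjes transform $G$; the $\tfrac{1}{2N}G_N'$ term vanishes in the limit (as $G_N'$ is uniformly bounded on compacta of the upper half-plane) and $m_2^N\to m_2$, producing the quadratic equation
$$G(z)^2-2P(z)G(z)+R(z)=0,\qquad R(z):=1+4gz^2+4g m_2.$$
The asymptotic $G(z)\sim 1/z$ at infinity selects the branch $G(z)=P(z)-\sqrt{P(z)^2-R(z)}$, and Stieltjes inversion gives $f(x,g)=\tfrac{1}{\pi}\sqrt{R(x)-P(x)^2}\;\mathbf{1}_{\{R>P^2\}}$. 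Since $P^2-R$ is an even polynomial of degree $6$ with leading coefficient $-4g^2<0$, non-negativity of $f$ and $\int f=1$ force the one-cut ansatz $P(x)^2-R(x)=(x^2-4a^2)h(x^2)$ with $h\ge 0$ on $[0,\infty)$; matching the expansion of $G$ against $z^{-1}+m_2 z^{-3}+\cdots$ yields two algebraic equations determining $a=a(g)>0$ and $m_2=m_2(g)$ uniquely for small $g>0$. The resulting $f$ is then continuous on $\Bbb R$, real-analytic on $(-2a,2a)$, and supported on $[-2a,2a]$, and uniqueness of the limiting measure upgrades subsequential to full weak convergence.

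The main obstacle I anticipate is the uniform a priori control on the extremal eigenvalues $\lambda_1,\lambda_N$: everything downstream is essentially formal manipulation of the loop equation, but the passage $N\to\infty$ genuinely requires ruling out escape of mass to infinity. The competition in \eqref{criti} between the logarithmic repulsion on the left and the cubic confinement on the right is delicate because both sides scale with $N$, and the cleanest rigorous route probably bypasses \eqref{criti} directly in favour of concentration estimates on the integral $J_N(g)$ itself (e.g.\ a Laplace/large-deviations argument for $\mu_N$ minimizing the energy functional $\int U\,d\mu-\iint\log|x-y|\,d\mu(x)d\mu(y)$), which simultaneously yields compactness of supports and identifies $\mu$ as the unique equilibrium measure of this functional.
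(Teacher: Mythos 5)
The paper does not actually prove this proposition---it delegates to the reference \cite{De} (equilibrium measure/large-deviations theory)---so there is no in-paper proof to compare against. That said, your loop-equation (Schwinger--Dyson) route is a legitimate and essentially self-contained alternative, and it dovetails with the Plemelj-formula computation the paper carries out \emph{after} the proposition: your limiting quadratic equation $G(z)^2-2P(z)G(z)+R(z)=0$ is exactly the complexification of the paper's integral equation ${\rm Re}\,F_+(y)=\tfrac{1}{2}y+2gy^3$, and your one-cut solution reproduces the paper's $f(y,g)=\tfrac{1}{\pi}\bigl(\tfrac{1}{2}+4ga^2+2gy^2\bigr)\sqrt{4a^2-y^2}$ together with $12ga^4+a^2-1=0$. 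So you have effectively merged the existence proof with the paper's explicit derivation. The approach is correct in outline; two things deserve tightening.

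First, a small but clean point: the moment identity you invoke is exact, with no logarithmic remainder. Multiplying \eqref{criti} by $\lambda_i$, summing over $i$, and using $\sum_{i\ne j}\frac{\lambda_i}{\lambda_i-\lambda_j}=\frac{1}{2}\sum_{i\ne j}1=\frac{N(N-1)}{2}$ gives
$$\sum_i\lambda_i^2+4g\sum_i\lambda_i^4=N-1,$$
hence $m_2^N\le 1$ and $m_4^N\le\frac{1}{4g}$ uniformly. This alone already gives tightness and uniform integrability of the second moment, which is all you need downstream: it guarantees a subsequential weak limit, that $\frac{1}{2N}G_N'\to 0$ locally uniformly, and that $m_2^N\to m_2$ along the subsequence. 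The stronger bound $\max_i|\lambda_i|\le C$---which your $i=N$ argument does not in fact establish (a lower bound on $\lambda_N-\lambda_{N-1}$ does not by itself cap $\lambda_N$)---is not needed for the weak convergence statement; compact support of the limit is then forced a posteriori by the algebraic form of $G$, since $P^2-R$ is a degree-$6$ polynomial with positive leading coefficient. Second, the uniqueness step (hence upgrade from subsequential to full convergence) needs a sentence: the one-cut factorization plus $G(z)\sim z^{-1}$ give two equations in $(a,m_2)$, which for small $g>0$ have a unique solution with $a>0$, $h\ge 0$; alternatively one can invoke strict concavity of $K$ to get uniqueness of the finite-$N$ maximizer, hence of any subsequential limit. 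With these repairs the argument is complete and arguably more elementary than the cited reference, at the cost of being tied to the specific quartic potential.
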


For the proof we again refer the reader to \cite{De} (p.132 and later). 
We note that for $g=0$, 
by Wigner's semicircle law, $a=1$ and 
$f(x,0)=\frac{1}{2\pi}\sqrt{4-x^2}$; so $f(x,g)=
\frac{1}{2\pi}\sqrt{4-x^2}+O(g)$.

Now our job will be to find the function $f(x,g)$. 
Passing to the limit in equation \eqref{criti}
(which requires justification that we will omit), we 
get 
$$
\int_{-2a}^{2a}\frac{1}{y-x}f(x,g)dx=\frac{1}{2}y+2gy^3,\ |y|\le 2a
$$
where the integral is understood in the sense of principal value.

This is a linear integral equation on $f(x,g)$, which can be solved in 
a standard way. Namely, one considers the analytic function 
$$
F(y)=\int_{-2a}^{2a}\frac{1}{y-x}f(x,g)dx
$$ 
for $y$ in the complex plane but 
outside of the interval $[-2a,2a]$. 
For $y\in [-2a,2a]$, let $F_+(y)$, $F_-(y)$ denote the limits of $F(y)$ from above 
and below. Then by the Plemelj formula, the integral equation implies 
$$
\frac{1}{2}(F_+(y)+F_-(y))=\frac{1}{2}y+2gy^3.
$$
On the other hand, $F_-(y)=\overline{F_+(y)}$. Hence, 
$$
{\rm Re}F_+(y)={\rm Re}F_-(y)=\frac{1}{2}y+2gy^3.
$$ 

Now set $y:=a(z+z^{-1})$. Then, as $y$ runs through the exterior of $[-2a,2a]$, 
$z$ runs through the exterior of the unit circle. So the function $G(z):=F(y)$ is 
analytic on the outside of the unit circle, with decay at infinity, and 
$$
{\rm Re}G(z)=\frac{1}{2}a(z+z^{-1})+2ga^3(z+z^{-1})^3
$$ 
when $|z|=1$. This implies that $G(z)$ is 
twice the sum of all negative degree terms of this
Laurent polynomial. In other words, we have
$$
G(z)=4ga^3z^{-3}+(a+12ga^3)z^{-1}.
$$
This yields 
$$
F(y)=\frac{1}{2}y+2gy^3
-\left(\frac{1}{2}+4ga^2+2gy^2\right)\sqrt{y^2-4a^2}.
$$
Now $f(y,g)$ is found as the jump of $F$: 
$$
f(y,g)=\frac{1}{\pi}\left(\frac{1}{2}+4ga^2+2gy^2\right)\sqrt{4a^2-y^2}.
$$

It remains to find $a$ in terms of $g$.
We have $yF(y)\to 1$, $y\to \infty$ (as $\int f(x,g)dx=1$),
hence $zG(z)\to 1/a$, $z\to \infty$. 
This yields 
$$
\frac{1}{a}=a+12ga^3,
$$ 
or 
$$
12ga^4+a^2-1=0.
$$
This allows one to determine $a$ 
uniquely: 
$$
a=\biggl(\frac{(1+48g)^{1/2}-1}{24g}\biggr)^{1/2}.
$$

Now let us calculate 
$E(g)$. It follows from the above that 
$$
E(g)=\int_{-2a}^{2a}\int_{-2a}^{2a}\log|x-y|f(x,g)f(y,g)dxdy-
\int_{-2a}^{2a}(\tfrac{1}{2}x^2+gx^4)f(x,g)dx.
$$
On the other hand, let us integrate the integral equation 
defining $f(x,g)$ with respect to $y$ (from $0$ to $u$). Then we get
$$
2\int_{-2a}^{2a}(\log|x-u|-\log|x|)f(x,g)dx=\tfrac{1}{2}u^2+gu^4.
$$
Substituting this into the expression for $E(g)$, we get 
$$
E(g)=\int_{-2a}^{2a}(\log|u|-\tfrac{1}{4}u^2-\tfrac{1}{2}gu^4)f(u,g)du.
$$
Since $f(u,g)$ is known, this integral can be computed. 
In fact, can be expressed via elementary functions, and after calculations we get
$$
E(g)-E(0)=\log a-\frac{1}{24}(a^2-1)(9-a^2).
$$
Substituting here the expression for $a$, after a calculation one finally gets:
$$
E(g)-E(0)=\sum_{k=1}^\infty (-12g)^k\frac{(2k-1)!}{k!(k-2)!}.
$$
This implies the required formula for $c_n$. 

\section{Quantum mechanics}

So far we have considered quantum field theory with 0-dimensional
spacetime (to make a joke, one may say that the dimension of 
the space is $-1$). In this section, we will move closer to 
actual physics: we will consider 1-dimensional spacetime, 
i.e. the dimension of the space is $0$. This does not mean
that we will study motion in a 0-dimensional space
(which would be really a pity) but just
means that we will consider only point-like
quantum objects (particles) and not extended quantum objects 
(fields). In other words, we will be in the realm of quantum
mechanics.

\subsection{The path integral in quantum mechanics}
Let $U(q)$ be a smooth function on the real line (the potential).
We will assume that $U(0)=0$, $U'(0)=0$, 
and $U''(0)=m^2$, where $m>0$. 

\begin{remark} In quantum field theory the parameter $m$ in the
potential is called the {\it mass}\index{mass} parameter. 
To be more precise, in classical mechanics it has the meaning 
of frequency $\omega$ of oscillations. However, in quantum theory
thanks to Einstein frequency is identified with energy
($E=\hbar \omega/2\pi$), while in relativisitic theory 
energy is identified with mass (again thanks to Einstein,
$E=mc^2$). 
\end{remark} 

We want to construct the theory of a quantum particle 
moving in the potential field $U(q)$. 
According to what we discussed before, 
this means that we want to give sense 
to and to evaluate the normalized correlation 
functions
$$
\langle q(t_1) \ldots q(t_n)\rangle:=
\frac{\int q(t_1) \ldots q(t_n)e^{\frac{iS(q)}{\hbar}}Dq}{\int e^{\frac{iS(q)}{\hbar}}Dq}, 
$$
where $S(q)=\int {\mathcal L}(q)dt$, and 
${\mathcal L}(q)=\frac{\dot{q}^2}{2}-U(q)$. 

As we discussed, such integrals cannot be handled
rigorously by means of measure theory if $\hbar$ is a positive number; so we will 
only define these path integrals ``in perturbation theory'', i.e. 
as formal series in $\hbar$. 

Before giving this (fully rigorous) definition, we
will explain the motivation behind it. 
We warn the reader that this explanation is
heuristic and involves steps 
which are mathematically non-rigorous
(or ``formal'' in the language of physicists).

\subsection{Wick rotation}

In Section 1 we discussed path integrals with imaginary exponential
(quantum mechanics), as well as real exponential (Brownian
motion). If $\hbar$ is a number, then the integrals with imaginary exponential 
cannot be defined measure-theoretically. Therefore, 
people study integrals with real exponential (which can be rigorously defined),
and then perform a special analytic continuation procedure called the {\it Wick rotation}.\index{Wick rotation} 

In our formal setting ($\hbar$ is a formal parameter), 
one can actually define the integrals in both the real and 
the imaginary case. Still, the real case is a bit easier, 
and thus the Wick rotation is still useful. Besides, the 
Wick rotation is very important conceptually. Therefore, 
while it is not technically necessary, we 
start with introducing the Wick rotation here. 

Namely, let us 
denote $\langle q(t_1)...q(t_n)\rangle$ by $\G^M_n(t_1,...,t_n)$, and 
``formally'' make a change of variable 
$\tau=it$ in the formula for $\G^M_n(t_1,...,t_n)$.
Let $q(t):=q_*(\tau)$.  
Then, taking into account that $d\tau=idt$, 
$\frac{dq}{dt}=i\frac{dq_*}{d\tau}$, we get 
$$
\G^M_n(t_1,...,t_n)=
\frac{\int q_*(\tau_1) \ldots q_*(\tau_n)e^{-\frac{1}{\hbar}\int (\frac{1}{2}
(\frac{dq_*}{d\tau})^2+U(q_*))d\tau}Dq_*}{\int e^{-\frac{1}{\hbar}\int (\frac{1}{2}
(\frac{dq_*}{d\tau})^2+U(q_*))d\tau}Dq_*}. 
$$
This shows that 
$$
\G^M_n(t_1,...,t_n)=\G^E_n(it_1,...,it_n),
$$
where 
$$
\G^E_n(t_1,...,t_n):=
\frac{\int q(t_1) \ldots q(t_n)e^{-\frac{S_E(q)}{\hbar}}Dq}{\int e^{-\frac{S_E(q)}{\hbar}}Dq}. 
$$
with $S_E(q)=\int {\mathcal L}_E(q)d\tau$, and 
${\mathcal L}_E(q)=\frac{\dot{q}^2}{2}+U(q)$
(i.e. ${\mathcal L}_E$ is obtained from 
${\mathcal L}$ by replacing $U$ with $-U$). 

This manipulation certainly does not make rigorous sense, 
but it motivates the following definition. 

\begin{definition}
The function $\G^M_n(t_1,...,t_n)$ ($t_i\in \Bbb R$) is the analytic continuation of 
the function $\G^E_n(\tau_1,...,\tau_n)$ from the point 
$(t_1,...,t_n)$ to the point 
$(it_1,...,it_n)$ along the path 
$\theta\mapsto e^{i\theta}(t_1,...,t_n)$, $0\le \theta\le \pi/2$. 
\end{definition}

Of course, this definition will only make sense if 
we define the function $\G^E_n(t_1,...,t_n)$ and show that it admits
the required analytic continuation. This will be done below. 

\begin{remark} (On the terminology.)
The function $\G^M_n(t_1,...,t_n)$ is called the {\it Minkowskian}\index{Minkowskian correlation function} (time ordered) correlation function, while the function 
$\G^E_n(t_1,...,t_n)$ is called the {\it Euclidean}\index{Euclidean correlation function} correlation function
(hence the notation). 
This terminology will be explained later, when we consider
relativistic field theory. 

 From now on, we will mostly deal with Euclidean correlation
functions, and therefore will omit the superscript $E$ when
there is no danger of confusion. 
\end{remark}

\subsection{Definition of Euclidean correlation functions}

Now our job is to define the Euclidean correlation functions 
$\G_n(t_1,...,t_n)$. 
Our strategy (which will also be used in field theory) 
will be as follows. Recall that 
if our integrals were finite dimensional then 
by Feynman's theorem the expansion 
of the correlation functions in $\hbar$ would be given
by a sum of amplitudes of Feynman diagrams. 
So, in the infinite dimensional case, 
we will use the sum over Feynman diagrams 
as a {\it definition} of correlation functions. 

More specifically, because of the conditions on $U$
we have an action functional without 
constant and linear terms in $q$, so that 
the correlation function $\G_n(t_1,...,t_n)$ 
should be given by the sum
\begin{equation}\label{defcor}
\G_n(t_1,...,t_n)=
\sum_{\Gamma\in G_{\ge 3}^*(n)}\frac{\hbar^{b(\Gamma)}}{|{\rm Aut}(\Gamma)|}
F_\Gamma(\ell_1, \ldots ,\ell_n),
\end{equation}
where $G_{\ge 3}^*(n)$ is defined in Remark \ref{G3*}.
Thus, we should make sense of 
(=define) the amplitudes $F_\Gamma$ 
in our situation. For this purpose, we 
need to define the following objects. 

1. The space $V$. 

2. The form $B$ on $V$ which defines $B^{-1}$ on $V^*$. 

3. The tensors corresponding to non-quadratic terms in the
action. 

4. The covectors $\ell_i$. 

It is clear how to define these objects naturally. Namely, 
$V$ should be a space of functions 
on $\Bbb R$ with some decay conditions. 
There are many choices for $V$, which do not affect the final
result; for instance, a good choice (which we will make) 
is the space $C^\infty_0(\Bbb R)$ 
of compactly supported smooth functions on $\Bbb R$.
Thus $V^*$ is the space of generalized functions on $\Bbb R$.
Note that $V$ is equipped with 
the inner product $(f,g)=\int_{\Bbb R}f(x)g(x)dx$.

The form $B$, by analogy with the finite dimensional case, 
should be twice the quadratic part of the action. 
In other words, 
$$
B(q,q)=\int(\dot{q}^2+m^2q^2)dt=(Aq,q),
$$ 
where $A$ is the operator 
$$
A=-\frac{d^2}{dt^2}+m^2.
$$ 
This means that $B^{-1}(f,f)=(A^{-1}f,f)$.

The operator $A^{-1}$ is an integral operator, 
with kernel 
$$
K(x,y)=G(x-y),
$$ 
where 
$G(x)$ is the Green's function of $A$, i.e. 
the fundamental (decaying at infinity) solution of the 
differential equation 
$$
(AG)(x)=\delta(x).
$$ 
It is straightforward to find that 
$$
G(x)=\frac{e^{-m|x|}}{2m}.
$$ 
(thus $B^{-1}$ is actually defined not on the whole $V^*$ 
but on a dense subspace of $V^*$). 

\begin{remark} Here we already see the usefulness of the Wick
rotation. Namely, the spectrum of $A$ (interpreted as usual as a 
self-adjoint unbounded operator on $L^2(\Bbb R)$) is $[m^2,+\infty)$, so it is
invertible and the inverse is bounded. However, 
if we did not make a Wick rotation, we would deal
with the operator $A'=-\frac{d^2}{dt^2}-m^2$, whose spectrum is
$[-m^2,+\infty)$, i.e. contains $0$, so this operator 
does not have a bounded inverse. 
\end{remark}

To make sense of the cubic and higher terms in the action as
tensors, consider the decomposition of $U$ in the (asymptotic)
Taylor series at $x=0$: 
$$
U(x)=\frac{m^2x^2}{2}-\sum_{n\ge 3}\frac{g_nx^n}{n!}.
$$ 
This shows that cubic and higher terms in the action
have the form 
$$
B_r(q,q,,...,q)=\int q^r(t)dt.
$$
Thus $B_r(q_1,...,q_r)$ is an element of $(S^rV)^*$ 
given by the generalized function $\delta_{t_1=...=t_r}$
(the delta function of the diagonal). 

Finally, the functionals $\ell_i$ are given by
$\ell_i(q)=q(t_i)$, so $\ell_i=\delta(t-t_i)$. 

This leads to the following Feynman rules
of defining the amplitude of a diagram $\Gamma$. 

1. To the $i$-th external vertex of $\Gamma$ assign 
the number $t_i$. 

2. To each internal vertex $j$ of $\Gamma$, assign
a variable $s_j$. 

3. On each internal edge connecting vertices $j$ and $j'$, write the Green's function 
$G(s_j-s_{j'})$. 

4. On each external edge connecting 
$i$ and $j$ write $G(t_i-s_j)$. 

5. On each external edge connecting 
$i$ and $i'$ write $G(t_i-t_{i'})$. 

6. Let $G_\Gamma(\bold t,\bold s)$ 
be the product of all these functions. 

7. Let $F_\Gamma(\ell_1,...,\ell_n):=\prod_j g_{v(j)}\int G_\Gamma(\bold t,\bold
s)d\bold s$, where $v(j)$ is the valency of $j$. 

We are finally able to give the following definition. 

\begin{definition}
The function $\G_n(t_1,...,t_n)$ is defined by formula 
\eqref{defcor}.
\end{definition}

\begin{remark} Note that the integrals defining $F_\Gamma$ are
convergent since the integrand always decays exponentially at
infinity. It is, however, crucial that we consider only graphs 
without components having no external vertices; 
for example, if $\Gamma$ 
has a single 4-valent vertex connected to itself by 
two loops (Fig.21) then the amplitude integral involves 
$\int_{\Bbb R}G(0)^2ds$, which is obviously divergent. 
\end{remark}

With this definition, the function $\G_n(t_1,...,t_n)$ is a Laurent
series in $\hbar$ whose coefficients are symmetric functions 
of $t_1,...,t_n$ given by linear combinations of explicit (and convergent)
finite dimensional integrals. 
Furthermore, it is easy 
to see that these integrals are in fact computable in
elementary functions, i.e. are (in the region $t_1\ge...\ge t_n$)
linear combinations
of products of functions of the form $t_i^re^{at_i}$. 
This implies the existence of the analytic continuation 
required in the Wick rotation procedure. 

\begin{figure}[htbp]
  \begin{center}
    
    \setlength{\unitlength}{4ex}
    \begin{picture}(10,4)(-5,-2)

      \cbezier(0,0)(-2.5,3)(-2.5,-3)(0,0)
      \cbezier(0,0)(2.5,3)(2.5,-3)(0,0)
      \put(0,0){\circle*{.2}}      

    \end{picture}
    \caption{ }
    \label{fig:21}
  \end{center}
\end{figure}

\begin{remark} As in the finite dimensional case, an alternative setting for making this definition
is to assume that $g_i$ are formal parameters. In this case,
$\hbar$ can be given a numerical value, e.g. $\hbar=1$, and the 
function $\G_n$ will be a well defined power 
series in $g_3,g_4,...$. 
\end{remark}

As an example consider a free massive theory, i.e., a {\it harmonic oscillator}\index{harmonic oscillator}: $U(q)=\frac{m^2q^2}{2}$. In this case, there are no internal vertices, 
hence we get

\begin{proposition} (Wick's theorem) One has 
$\G_n(t_1,...,t_n)=0$ if $n$ is odd, and 
$$
\G_{2k}(t_1,...,t_{2k})=\hbar^{k}\sum_{\sigma\in \Pi_k}\prod_{i\in
  \lbrace{1,...,2k
\rbrace}/\sigma}G(t_i-t_{\sigma(i)}).
$$
\end{proposition}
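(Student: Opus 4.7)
The plan is to unpack the Feynman-diagrammatic definition \eqref{defcor} of $\G_n(t_1,\ldots,t_n)$ in the special case where the action is purely quadratic, and observe that only very restricted graphs can contribute. Since $U(q)=\tfrac{m^2q^2}{2}$ has no cubic or higher Taylor coefficients at $0$, we have $g_r=0$ for every $r\ge 3$. The Feynman rules assign the factor $g_{v(j)}$ to each internal vertex $j$ of valency $v(j)\ge 3$, so any graph $\Gamma\in G_{\ge 3}^*(n)$ containing an internal vertex has amplitude $F_\Gamma=0$. Therefore the only diagrams contributing to $\G_n$ are those with no internal vertices at all, i.e.\ graphs built purely out of $n$ labeled $1$-valent external vertices.

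Such a graph is precisely a perfect matching on $\{1,\ldots,n\}$: every external vertex has exactly one incident edge, and these edges must pair the vertices up. Hence if $n$ is odd there are no such graphs, and $\G_n(t_1,\ldots,t_n)=0$, proving the first claim. If $n=2k$, the contributing graphs are in bijection with $\sigma\in\Pi_k$. For the matching graph $\Gamma_\sigma$ one reads off: (i) the number of edges minus internal vertices is $b(\Gamma_\sigma)=k$; (ii) the automorphism group $\Aut(\Gamma_\sigma)$ is trivial, since automorphisms of a Feynman graph must fix each external vertex and here every vertex is external; (iii) the amplitude is obtained by contracting the propagator $B^{-1}$ along each edge of $\Gamma_\sigma$, i.e., an edge from external vertex $i$ to external vertex $\sigma(i)$ carries the factor $G(t_i-t_{\sigma(i)})$ according to rule~5 of the Feynman rules established in this section.

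Assembling these ingredients, formula \eqref{defcor} specializes to
\[
\G_{2k}(t_1,\ldots,t_{2k}) \;=\; \sum_{\sigma\in\Pi_k}\frac{\hbar^{k}}{1}\prod_{i\in\{1,\ldots,2k\}/\sigma}G(t_i-t_{\sigma(i)}),
\]
which is exactly the claimed formula. The only step that requires any real care is the verification that $|\Aut(\Gamma_\sigma)|=1$ and that no graph with a self-loop or internal vertex contributes; both are immediate from the definitions once one notes that external vertices are labeled and that all interaction coefficients vanish. There is no genuine analytic obstacle here—unlike in the interacting case, no integrations over internal-vertex parameters arise, so convergence is automatic.
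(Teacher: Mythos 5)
Your proof is correct and follows exactly the route the paper takes: the paper's entire argument is the one-line observation that since $U(q)=\tfrac{m^2q^2}{2}$ has no cubic or higher terms, no graph with an internal vertex contributes, so the sum over Feynman diagrams in \eqref{defcor} collapses to a sum over perfect matchings. You have simply fleshed out the bookkeeping (edge count $b(\Gamma_\sigma)=k$, trivial automorphism group, propagator factor from rule~5), all of which is correct.
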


In particular, $\G_2(t_1,t_2)=\hbar G(t_1-t_2)$. 
In other words, $\G_2(t_1,t_2)$ is (proportional to) 
the Green's function. Motivated by this, physicists 
often refer to all correlation functions of a quantum 
field theory as {\it Green's functions}.\index{Green's functions}

\begin{figure}[htbp]
  \begin{center}
    
    \setlength{\unitlength}{3ex}
    \begin{picture}(8,6)(-4,-1)

      \put(-4,0){\line(1,0){8}}
      \cbezier(0,0)(-3,2.5)(3,2.5)(0,0)
      \put(0,0){\circle*{.2}}      

      \put(-4,-0.5){\makebox(0,0)[l]{$t_1$}}
      \put(4,-0.5){\makebox(0,0)[l]{$t_2$}}

    \end{picture}
    \caption{ }
    \label{fig:22}
  \end{center}
\end{figure}

\begin{example} Consider the potential $U(q)=\frac{m^2q^2}{2}-\frac{gq^4}{24}$,
and set $\hbar=1$. 
In this case, let us calculate 
the 2-point correlation function modulo $g^2$. 
In other words, we have to compute the coefficient of $g$ in this function. 
Thus we have to consider Feynman diagrams
with two external edges and one internal vertex. 
Such a diagram $\Gamma$ is unique: it consists of one edge with a
loop attached in the middle (Fig. 22). This diagram has automorphism group
$\Bbb Z/2$. The amplitude of this diagram is 
$$
F_\Gamma=g\int_{\Bbb R}G(s,t_1)G(s,t_2)G(s,s)ds=
\frac{g}{8m^3}\int_{\Bbb R}e^{-m(|s-t_1|+|s-t_2|)}ds.
$$
Because of symmetry in $t_1$ and $t_2$, we may assume that
$t_1\ge t_2$. Splitting the integral in a sum of 
three integrals, over $(-\infty,t_2],[t_2,t_1]$, and $[t_1,\infty)$,
respectively we get:
$$
F_\Gamma=
\frac{g}{8m^3}\left(2\int_0^\infty e^{-m(2s+|t_1-t_2|)}ds+|t_1-t_2|e^{-m|t_1-t_2|}\right)=
$$
$$
\frac{g}{8m^4}e^{-m|t_1-t_2|}(1+m|t_1-t_2|).
$$
Thus
$$
\G_2(t_1,t_2)=\widetilde{G}(t_1-t_2),
$$
where
$$
\widetilde{G}(t):=\tfrac{1}{2m}e^{-m|t|}+\tfrac{g}{16m^4}e^{-m|t|}(1+
m|t|)+O(g^2).
$$
This expression is called the
{\it 1-loop approximation}\index{1-loop approximation} to the 2-point function, because it comes
from 0-loop and 1-loop Feynman diagrams. 
\end{example} 

\begin{remark} Here we are considering 
quantum mechanics of a single 1-dimensional particle.
However, everything generalizes 
without difficulty to 
the case of an $n$-dimensional particle 
or system of particles (i.e., to path integrals over the
space of vector-valued, rather than scalar, functions of one
variable). 
Indeed, if $q$ takes values in a Euclidean space $V$ 
then the quadratic part of the Lagrangian is 
of the form $\frac{1}{2}(\dot{q}^2-M(q))$, where $M$ is a
positive definite
quadratic form on $V$. Diagonalizing $M$, we may assume
that the quadratic part of the Lagrangian looks like 
$\frac{1}{2}\sum_i(\dot{q_i}^2-m_i^2q_i^2)$, which corresponds to
a system of independent harmonic oscillators. Thus in quantum
theory the propagator will be the diagonal matrix 
with diagonal entries $\frac{e^{-m_i|t-s|}}{2m_i}$, and the correlation
functions can be defined by the usual Feynman diagram procedure. 
\end{remark}

\subsection{Connected correlation functions}

Let $\G^{c}_n(t_1,...,t_n)$ be the {\it connected correlation (or Green) functions}\index{connected correlation functions}, 
defined by the sum of the same amplitudes as $\G_n(t_1,...,t_n)$ 
but taken over connected Feynman diagrams only. It is clear 
that 
$$
\G_n(t_1,...,t_n)=\sum_{\lbrace{1,...,n\rbrace}=S_1\sqcup...\sqcup
  S_k} 
\prod \G^c_{|S_i|}(t_j; j\in S_i).
$$
For example, $\G_2(t_1,t_2)=\G_2^c(t_1,t_2)+\G_1^c(t_1)\G_1^c(t_2)$, 
etc. Thus, to know the correlation functions, it is sufficient 
to know the connected correlation functions. 

\begin{example}\label{harosc} In a free theory ($U=\frac{m^2q^2}{2}$, the harmonic oscillator), all 
connected Green's functions except $\G_2$ vanish. 
\end {example} 

\begin{figure}[htbp]
  \begin{center}
    
    \setlength{\unitlength}{3ex}
    \begin{picture}(9,7)(-4.5,-3.5)

      \put(-4,0){\line(1,0){8}}
      \put(0,-3){\line(0,1){6}}
      \put(0,0){\circle*{.3}}      

      \put(-4.5,0){\makebox(0,0)[l]{$t_1$}}
      \put(4.5,0){\makebox(0,0)[l]{$t_3$}}
      \put(0,3.5){\makebox(0,0)[l]{$t_2$}}
      \put(0,-3.5){\makebox(0,0)[l]{$t_4$}}

    \end{picture}
    \caption{ }
    \label{fig:23}
  \end{center}
\end{figure}

\begin{example} Let us compute the 
connected 4-point function in the theory associated to the quartic
potential $U=\frac{m^2q^2}{2}-\frac{gq^4}{4}$ as above, modulo $g^2$. This means, we should compute the
contribution of connected Feynman diagrams with one internal
vertex and 4 external edges. 
Such a diagram $\Gamma$ is unique -- it is the cross (with one
internal vertex), Fig. 23. This diagram 
has no nontrivial automorphisms. 
Thus, 
$$
\G_4^c(t_1,t_2,t_3,t_4)=g\int_{\Bbb R}
G(t_1-s)G(t_2-s)G(t_3-s)G(t_4-s)ds+O(g^2).
$$
It is elementary to compute this integral; we leave it as an
exercise. 
\end{example} 

\subsection{The clustering property} 
Note that the Green's function $G(t)$ goes to zero 
at infinity. 
This implies the following {\it clustering property}\index{clustering property}
of the correlation functions of the free theory: 
$$
\lim_{z\to \infty} \G_{n}(t_1,...,t_r,t_{r+1}+z,...,t_n+z)=
\G_r(t_1,...,t_r)\G_{n-r}(t_{r+1}...t_n).
$$
Moreover, it is easy to show 
that the same is true in the interacting theory (i.e. with potential) in each degree with
respect to $\hbar$ (check it!). 
The clustering property can be more simply expressed by the equation
$$
\lim_{z\to \infty} \G_{n}^c(t_1,...,t_r,t_{r+1}+z,...,t_n+z)=0.
$$
This property has a physical interpretation: processes distant 
from each other are almost statistically independent. Thus it can be
viewed as a necessary condition of a quantum field theory to be ``physically 
meaningful''. 

\begin{remark} Nevertheless, there exist theories 
(e.g. so called {\it topological quantum field theories}\index{topological quantum field theory}) which 
do not satisfy the
clustering property but are interesting both 
form a physical and mathematical point of view (see Subsection \ref{fqm} below). 
\end{remark} 

\subsection{The partition function}
Let $J(t)dt$ be a compactly supported measure on the real line. 
Consider the ``partition function with external current $J$'', 
which is the formal expression 
$$
Z(J)=\int e^{\frac{-S_E(q) +(J,q)}{\hbar}}Dq.
$$ 
Then we have a formal equality 
$$
\frac{Z(J)}{Z(0)}=\sum_n \frac{\hbar^{-n}}{n!}\int_{\Bbb R^n}
\G_n(t_1,...,t_n)J(t_1)...J(t_n)dt_1...dt_n,
$$
which, as before, we will use as the definition 
of $Z(J)/Z(0)$. So the knowledge of $Z(J)/Z(0)$ 
is equivalent to the knowledge of all the Green's 
functions (in other words, $Z(J)/Z(0)$ is their generating
function). Furthermore, as in the finite dimensional case, 
we have 

\begin{proposition} One has
$$
W(J):=\log \frac{Z(J)}{Z(0)}=
\sum_n \frac{\hbar^{-n}}{n!}\int\G_n^c(t_1,...,t_n)J(t_1)...J(t_n)dt_1...dt_n
$$
(i.e. $W$ is the generating function of connected Green's functions)
\end{proposition}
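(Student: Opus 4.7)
The plan is to mimic the proof of Theorem \ref{Feyn2} in this infinite-dimensional perturbative setting; since everything is defined as a formal series in $\hbar$ (or in the coupling constants $g_j$), the argument reduces to a purely combinatorial identity of formal series, with no analytic issues beyond what is already absorbed into the definition of each individual amplitude $F_\Gamma$.

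First I would start from the definition of $Z(J)/Z(0)$ given just before the proposition,
$$
\frac{Z(J)}{Z(0)}=\sum_{n\ge 0}\frac{\hbar^{-n}}{n!}\int_{\Bbb R^n}\G_n(t_1,\ldots,t_n)J(t_1)\cdots J(t_n)\,dt_1\cdots dt_n,
$$
with the convention $\G_0=1$ (the empty diagram). Next I would substitute the factorization
$$
\G_n(t_1,\ldots,t_n)=\sum_{\{1,\ldots,n\}=S_1\sqcup\cdots\sqcup S_k}\prod_{i=1}^{k}\G_{|S_i|}^{c}(t_j;\,j\in S_i),
$$
which is asserted in the text and follows from the observation that every Feynman diagram in $G_{\ge 3}^*(n)$ decomposes uniquely as a disjoint union of connected diagrams, each containing at least one external vertex, while the Feynman amplitude, the factor $\hbar^{b(\Gamma)}$ and the $1/|\mathrm{Aut}(\Gamma)|$ weight are all multiplicative under disjoint union (for labeled external vertices the automorphism group is exactly the product of the automorphism groups of the connected components, since external vertices cannot be permuted).

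Then I would reorganize the resulting double sum. Grouping set partitions by the ordered tuple of block sizes $(n_1,\ldots,n_k)$ with $n_i\ge 1$ and using that the number of ordered partitions of $\{1,\ldots,n\}$ into labeled blocks of those sizes is the multinomial $n!/(n_1!\cdots n_k!)$, while dividing by $k!$ accounts for the blocks being unordered, I obtain
$$
\frac{Z(J)}{Z(0)}=\sum_{k\ge 0}\frac{1}{k!}\left(\sum_{m\ge 1}\frac{\hbar^{-m}}{m!}\int_{\Bbb R^m}\G^c_m(t_1,\ldots,t_m)J(t_1)\cdots J(t_m)\,dt_1\cdots dt_m\right)^{k}.
$$
The inner bracket is exactly $W(J)$ as defined in the proposition, so the right-hand side equals $\exp(W(J))$. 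Taking $\log$ yields the claim.

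The only genuine subtlety, which I expect to be the main thing to be careful about, is the bookkeeping of the combinatorial factors: one must check that $|\mathrm{Aut}(\Gamma_1\sqcup\cdots\sqcup\Gamma_k)|$ for graphs in $G^*_{\ge 3}(n)$ does not acquire extra symmetric-group factors from isomorphic connected components, precisely because the external vertices are labeled and hence distinguish the components; this is what makes the factorization $\G_n=\sum\prod\G^c_{|S_i|}$ literally correct and lets the $k!$ in the exponential series cancel the overcounting coming from ordering the blocks. Everything else is formal manipulation of power series in $\hbar$ (the sums being well-defined order by order in $\hbar$ thanks to $b(\Gamma)\ge -n/2$ for a connected diagram with $n$ external vertices, so each coefficient of $\hbar$ receives contributions from only finitely many $(k,n_1,\ldots,n_k)$).
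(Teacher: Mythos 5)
Your proof is correct, and it is essentially the approach the paper intends: the paper itself states that the proposition is equivalent to the factorization $\G_n=\sum_{\{1,\ldots,n\}=S_1\sqcup\cdots\sqcup S_k}\prod_i\G^c_{|S_i|}$ (Remark following the proposition) and otherwise just says "the proof is the same as in the finite dimensional case" (i.e.\ as Theorem \ref{Feyn2}, which rests on exactly the multiplicativity under disjoint union that you invoke). What you add is the explicit bookkeeping that makes the exponential appear: grouping set partitions by ordered tuples of block sizes, cancelling the multinomial against the $\hbar^{-n}/n!$, and dividing by $k!$ for unordered blocks. That is precisely the standard combinatorial manipulation, and your observation that the labeled external vertices rigidify the automorphism groups so that $|\mathrm{Aut}(\Gamma_1\sqcup\cdots\sqcup\Gamma_k)|=\prod_i|\mathrm{Aut}(\Gamma_i)|$ with no extra symmetric-group factors is exactly the point that makes the clean factorization of $\G_n$ into products of $\G^c$'s valid (in contrast to Theorem \ref{Feyn2}, where no external vertices are present and one must handle those factors).

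One very minor quibble: the bound you cite, $b(\Gamma)\ge -n/2$, has the wrong sign and is weaker than what actually holds. For a connected graph $\Gamma$ with $n$ external ($1$-valent) vertices one has $b(\Gamma)=\text{(loops)}+n-1\ge n-1$, and more to the point the reorganization is a formal power series identity in $J$: since $\G^c_0=0$ (the empty graph is not connected), $W(J)$ has no constant term in $J$, so $\exp(W(J))$ is well defined degree by degree in $J$ with no $\hbar$-adic convergence question to worry about. This does not affect the correctness of your argument.
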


The proof of this proposition is the same as in the finite
dimensional case. 

\begin{remark} The statement of the proposition is equivalent to the 
relation between usual and connected Green's functions given in
the previous subsection. 
\end{remark} 

\begin{remark} The fact that we can only define
amplitudes of graphs whose all components have at least one
1-valent vertex (see above) means that we actually cannot define
either $Z(0)$ or $Z(J)$ but can only define their ratio 
$Z(J)/Z(0)$. 
\end{remark} 

Like in the finite dimensional case, we 
have an expansion 
$$
W(J)=\hbar^{-1}W_0(J)+W_1(J)+\hbar W_2(J)+...,
$$
where $W_j$ are the $j$-loop contributions
(in particular, $W_0$ is given by a sum over trees).
Furthermore, we have explicit formulas for $W_0$ and $W_1$, 
analogously to the finite dimensional case. 

\begin{proposition} One has 
$$
W_0(J)=-S_E(q_J)+(q_J,J), 
$$
where $q_J$ is the extremal of the functional
$S_E^J(q):=S_E(q)-(q,J)$ which decays at infinity. 
Furthermore, 
$$
W_1(J)=-\frac{1}{2}\log \det L_J,
$$ where $L_J$ is the linear operator on $V$ such that 
$$
d^2S_E^J(q_J)(f_1,f_2)=d^2S_E^0(0)(L_Jf_1,f_2).
$$
\end{proposition}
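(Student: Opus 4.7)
The plan is to deduce both formulas from Theorem \ref{trees1l} applied formally to our infinite-dimensional setting: those statements are purely combinatorial identities about sums over Feynman diagrams, and the proposition immediately preceding ours already expresses $W(J)$ as such a sum over connected diagrams.

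Concretely, I would reinterpret $Z(J)$ as the partition function of the modified Euclidean action $\widetilde S_J(q) := S_E(q) - (J,q)$. In this reading the linear term $-(J,q)$ is simply an extra ``interaction vertex'' of valence one carrying the covector $-J$, while the quadratic part of $\widetilde S_J$ is still $(Aq,q)/2$. The Feynman expansion of $\log(Z(J)/Z_0)$ is then of the form covered by Theorem \ref{Feyn2}, and the tree/1-loop classification of Theorem \ref{trees1l} applies verbatim, yielding formally
$$
\log \tfrac{Z(J)}{Z_0} = -\hbar^{-1}\widetilde S_J(q_J) + \tfrac12 \log\tfrac{\det A}{\det \widetilde S_J''(q_J)} + O(\hbar),
$$
where $q_J$ is the unique decaying critical point of $\widetilde S_J$. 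Subtracting the $J=0$ contribution (for which $q_0 = 0$, $\widetilde S_0(0) = 0$, $\widetilde S_0''(0) = A$) produces $W(J)$, and rewriting $\det \widetilde S_J''(q_J)/\det A = \det L_J$ gives the stated formulas.

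For the tree formula I would also give a self-contained combinatorial derivation mirroring Section \ref{treepar}. The Euler-Lagrange equation $d\widetilde S_J(q_J)=0$ reads
$$
q = A^{-1}J + \sum_{i\ge 3}\tfrac{g_i}{(i-1)!}\, A^{-1} B_i(q,\ldots,q,-),
$$
which is a contraction in the $(g_i,J)$-adic topology. Iterating it expresses $q_J$ as a formal sum over rooted trees with one 1-valent external vertex, and the ``marked-edge vs.\ marked-vertex'' cancellation of Section \ref{treepar} then identifies the sum of amplitudes over all trees (no external legs) with $-\widetilde S_J(q_J) = -S_E(q_J) + (q_J,J)$. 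This is precisely the $\hbar^{-1}$-part of $W(J)$, i.e.\ $W_0(J)$.

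The main obstacle is interpreting $\det L_J$ in infinite dimensions. Writing $L_J = I + A^{-1}\bigl(U''(q_J) - m^2\bigr)$, the perturbation is the composition of multiplication by a smooth function vanishing exponentially at infinity (because $q_J$ does) with the smoothing integral operator $A^{-1}$; hence $L_J - I$ is trace class on $L^2(\Bbb R)$ and $\det L_J$ is a bona fide Fredholm determinant. In the purely formal setting, where the $g_i$ are formal parameters, one instead \emph{defines} $\log\det L_J := \Tr\log L_J = -\sum_{n\ge 1} \tfrac{1}{n}\Tr(I-L_J)^n$ order by order in $g_i$ and verifies directly that $\tfrac{1}{n}\Tr(I-L_J)^n$ equals the sum of amplitudes of connected 1-loop diagrams whose unique cycle has length $n$ (the $1/n$ absorbing the cyclic symmetry of the cycle, which is the only nontrivial contribution to $|\mathrm{Aut}|$). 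Summing over $n$ yields the 1-loop part of $W(J)$ and completes the identification $W_1(J) = -\tfrac12 \log\det L_J$.
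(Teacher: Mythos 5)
Your proof is correct and executes precisely what the paper outlines: apply Theorem \ref{trees1l} together with the combinatorics of Subsection \ref{treepar} formally in the infinite-dimensional setting, reading $-(J,q)$ as a $1$-valent interaction vertex. Since the paper states that it ``gave a proof only in the 0-loop case; but in the 1-loop case, the proof is similar,'' your trace-log expansion
$$
-\tfrac12\log\det L_J \;=\; \tfrac12\sum_{n\ge 1}\tfrac1n\,\Tr(I-L_J)^n,
$$
with each trace unfolding into the sum over connected $1$-loop diagrams whose cycle has length $n$, is exactly the half the paper leaves to the reader, and the observation that the kernel of $I-L_J=-A^{-1}\bigl(U''(q_J)-m^2\bigr)$ decays fast enough for the Fredholm determinant to exist supplies the analytic content the paper defers.

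One imprecision worth fixing: it is not true that the cyclic rotation absorbed by the $1/n$ is ``the only nontrivial contribution to $|\mathrm{Aut}|$.'' The symmetry group of the bare $n$-cycle is dihedral of order $2n$; the $1/n$ from the logarithm accounts for rotations, while the overall prefactor $1/2$ in $-\tfrac12\log\det L_J$ accounts for the reflection (this works uniformly, including the degenerate cases $n=1,2$). Beyond the cycle, automorphisms of the attached subtrees are already absorbed into the $1/|\mathrm{Aut}(T)|$ weights implicit in the tree sum $q_J$, and permutations of isomorphic subtrees hanging off a single cycle vertex are absorbed by the $1/(k-2)!$ Taylor coefficients of $U''(q_J)$. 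With these accounted for, an orbit-stabilizer argument in the style of the proof of Theorem \ref{Feyn1} shows that each $1$-loop diagram $\Gamma$ appears with total weight $F_\Gamma/|\mathrm{Aut}(\Gamma)|$, as required.
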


The proof of this proposition, in particular, 
involves showing that $q_J$ is well defined and that 
$\det L_J$ exists. It is analogous to the proof 
of the same result in the finite dimensional case which is given in 
Subsection \ref{treepar} (to be precise, we gave a proof only in 
 the 0-loop case; but in the 1-loop case, 
the proof is similar). Therefore we will not give this proof; 
rather, we will illustrate the statement by an example. 

\begin{example} Let $U$ be the above quartic potential $\frac{m^2q^2}{2}+\frac{gq^4}{2}$ 
(in which for convenience we change the sign and normalization of the quartic term) and
$J(t)=a\delta(t)$. 
In this case, 
$$
S_E^J(q)=\int(\tfrac{\dot{q}^2}{2}+U(q))dt-aq(0).
$$ 
The Euler-Lagrange equation has the form 
$$
\ddot{q}=m^2q+2gq^3-a\delta(t).
$$
Thus, the function $q_J$ is continuously glued from two 
solutions $q_+,q_-$ of the nonlinear differential equation 
$$
\ddot{q}=m^2q+2gq^3
$$ 
on $(-\infty,0]$ and $[0,\infty)$, 
with jump of derivative at $0$ equal to $-a$. 

The solutions $q_+,q_-$ are required to decay at infinity, so
they must be solutions of zero energy: 
$$
E=\frac{\dot{q_\pm}^2}{2}-U(q_\pm)=0.
$$
Thus, by the standard formula for solutions of Newton's equation,
they are defined by the equality 
$$
t-t_{\pm}=\int\frac{dq}{\sqrt{2U(q)}}=\int \frac{dq}{mq\sqrt{1+\frac{gq^2}
{m^2}}}=
\frac{1}{2m}\log\frac{\sqrt{1+\frac{gq^2}{m^2}}-1}
{\sqrt{1+\frac{gq^2}{m^2}}+1}.
$$
After a calculation one gets 
$$
q_J(t)=\frac{2mg^{-\frac{1}{2}}}{C^{-1}e^{m|t|}-Ce^{-m|t|}},
$$
where $C$ is the solution of the equation 
$$
\frac{C(1+C^2)}{(1-C^2)^2}=\frac{ag^{\frac{1}{2}}}{4m^2}
$$
which is given by a power series in $a$ with zero constant term. 
From this it is elementary (but somewhat lengthy) to compute 
$W_0=-S_E^J(q_J)$. 

Now, the operator $L_J$ is given by the formula 
$$
L_J=1+\frac{gA^{-1}\circ q_J(t)^2}{2},
$$ 
where $A=-\frac{d^2}{dt^2}+m^2$. 
Thus $\det L_J$ makes sense.
Indeed, the operator 
$A^{-1}\circ q_J(t)^2$ is an integral operator 
given by the kernel 
$$
K_J(x,y):=\frac{e^{-m|x-y|}q_J(y)^2}{2m},
$$ 
which decays exponentially at infinity; 
hence the determinant of the operator
$1+\frac{gA^{-1}\circ q_J(t)^2}{2}$ is well defined. 
\end{example}

\begin{remark} In these computations, $g, a$ were formal variables, but the above computations in fact make sense for real numerical values of these variables as long as $ga^2+m^2>0$.  
\end{remark} 

\subsection{1-particle irreducible Green's functions}

Let $\G^{1PI}_n(t_1,...,t_n)$ denote 
{\it 1-particle irreducible Green's functions}\index{1-particle irreducible Green's function}, i.e. 
those defined by the sum of the same amplitudes 
as the usual Green's functions, but 
taken only over 1-particle irreducible 
Feynman graphs. Define also the {\it amputated}\index{amputated 1-particle irreducible Green's function} 1-particle irreducible Green's
function: $\G^{1PIa}_n:=A^{\otimes n}\G^{1PI}_n$
(it is defined by the same sum of amplitudes, except that
instead of $G(t_i-s_j)$ for external edges, we write $\delta(t_i-s_j)$).

Let $S_{\rm eff}(q)$ be the generating function of $\G_n^{1PIa}$ i.e., 
$$
S_{\rm eff}(q)=\sum_n\frac{\hbar^{-n}}{n!}\int
\G^{1PIa}_n(t_1,...,t_n)
q(t_1)...q(t_n)dt_1...dt_n.
$$

\begin{proposition} The function $W(J)=\log(Z(J)/Z(0))$
is the Legendre transform of $S_{\rm eff}(q)$, i.e. it equals 
$-S_{\rm eff}(\widetilde q_J)+(J,\widetilde q_J)$, where $\widetilde q_J$ is the 
extremal of $-S_{\rm eff}(q)+(J,q)$ decaying at infinity. 
\end{proposition}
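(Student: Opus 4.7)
The plan is to mirror the finite-dimensional Legendre transform argument given just before Theorem \ref{effac}. First, I would view $Z(J)$ as the formal path integral for the deformed action $S_E(q)-(J,q)$, so that the external current $J$ supplies $1$-valent source insertions in the Feynman expansion, and $W(J)=\log(Z(J)/Z(0))$ is the sum of amplitudes of connected diagrams carrying these $J$-insertions at their external legs.

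Next I would apply the $1$PI skeleton decomposition (Lemma \ref{obv}): every connected diagram factors uniquely as a tree whose nodes are $1$PI subgraphs and whose edges are the bridges carrying propagators $G$. This reorganizes the sum defining $W(J)$ as a tree sum built from the amputated $1$PI amplitudes $\G^{1PIa}_n$, with the $J$'s sitting at the leaves. The crucial observation, exactly as in the finite-dimensional proof, is that a $J$-vertex can appear inside a $1$PI component only through the trivial single-edge diagram joining $J$ to one internal vertex --- any larger configuration containing a $J$-vertex has that very edge as a bridge and so is not $1$PI. Therefore $J$ enters the effective tree description only through the linear term $-(J,q)$.

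By the tree/classical correspondence of Subsection \ref{treepar} (formula \eqref{treeeq}), the resulting tree sum equals the critical value of the functional $q\mapsto -S_{\rm eff}(q)+(J,q)$, attained at the unique decaying critical point $\widetilde q_J$ satisfying $dS_{\rm eff}(\widetilde q_J)=J$. This is precisely the Legendre-transform identity $W(J)=-S_{\rm eff}(\widetilde q_J)+(J,\widetilde q_J)$. The main technical point to verify is analytic: one needs $\widetilde q_J$ to exist order by order as a decaying solution of the nonlinear integro-differential critical-point equation, and the tree integrals to converge in the $\hbar$-adic sense. This reduces to the fact that the quadratic part of $S_{\rm eff}$ is the elliptic operator $A=-\partial_t^2+m^2$ (modified by higher-loop two-leg $1$PI corrections), whose Green's function decays exponentially, so that the perturbative inversion proceeds inductively in the formal parameters $g_i$ exactly as in the quartic example worked out above.
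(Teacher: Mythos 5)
Your proof is correct and follows essentially the same route as the paper, which simply remarks that the proof ``is the same as in the finite dimensional case'' — namely, the skeleton decomposition into a tree of 1PI subgraphs via Lemma \ref{obv}, the observation that a $J$-insertion can only live in the trivial single-edge 1PI graph (since any larger diagram containing it has that edge as a bridge), and the tree/classical correspondence of formula \eqref{treeeq}. Your final paragraph flagging the analytic points (existence of $\widetilde q_J$ as a decaying solution order by order, exponential decay of the propagator) is a worthwhile extra care that the paper suppresses here but signals elsewhere (e.g.\ in its remark following the formula for $W_0(J)$ and $W_1(J)$).
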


The proof of this proposition is the same as in the finite
dimensional case. The proposition shows that 
in order to know the Green's functions, 
it ``suffices'' to know 
amputated 1-particle irreducible Green's functions 
(the generating function of usual Green's functions can be reconstructed 
from that for 1PI Green's functions by 
taking the Legendre transform and exponentiation). 
Which is a good news, since there are a lot fewer  
1PI diagrams than general connected diagrams. 

\subsection{Momentum space integration}

We saw that the amplitude of a Feynman diagram 
is given by an integral over the space of dimension equal 
to the number of internal vertices. This is sometimes inconvenient, 
since even for tree diagrams such integrals can be rather 
complicated. However, it turns out that if one passes to Fourier transforms
then Feynman integrals simplify and in particular the number 
of integrations for a connected diagram becomes equal to the number 
of loops (so for tree diagrams we have no integrations at all).

Namely, we will proceed as follows. 
Instead of the time variable $t$ we will consider 
the dual energy variable $E$. A function
$q(t)$ with compact support will be replaced 
by its Fourier transform $\widehat q(E)$. 
Then, by Plancherel's theorem, for real functions
$q_1,q_2$, we have 
$$
(q_1,q_2)=\int_{\Bbb R}q_1(t)q_2(t)dt=\int_{\Bbb R} \widehat q_1(E)\overline{\widehat q_2(E)}dE=
\int_{\Bbb R} \widehat q_1(E)\widehat q_2(-E)dE.
$$
This implies that the propagator 
is given by 
$$
B^{-1}(f,f)=\int_{\Bbb R} \frac{1}{E^2+m^2}\widehat f(E)\widehat f(-E)dE.
$$
The vertex tensors standing at $k$-valent vertices
were $\delta_{s_1=...=s_k}$, so they will be replaced 
by $\delta_{Q_1+...+Q_k=0}$, where $Q_i$ are dual variables to 
$s_i$. 

\begin{remark} (On terminology) Physicists refer 
to the time variables $t_i,s_j$ as {\it position variables},\index{position variables}
and to energy variables $E_i,Q_k$ as {\it momentum variables},\index{momentum variables} since 
in relativistic mechanics (which is the setting we will deal with
when we study field theory) there is no distinction between time and
position and between energy and momentum
(due to the action of the Lorentz group). 
\end{remark}

This shows that the Feynman rules 
``in momentum space'' for a given connected Feynman diagram $\Gamma$ 
with $n$ external vertices are as follows. 

1. Orient the diagram $\Gamma$, so that all external edges 
are oriented inwards.  

2. Assign variables $E_i$ to external edges, and variables $Q_j$ to
internal ones. These variables are subject to 
the linear equations of ``the first Kirchhoff law":
at every internal vertex, the sum of the variables
corresponding to the incoming edges equals the sum of those 
corresponding to the outgoing edges. 
Let $Y(\bold E)$ be the space of solutions $\bold Q$ of these
equations (it depends on $\Gamma$, but we will not write 
the dependence explicitly). It is easy to show that this space is nonempty only 
if $\sum_i E_i=0$, and in that case $\dim Y(\bold E)$ equals the number of
loops of $\Gamma$ (show this!).

3. On each external edge, write $\frac{1}{E_i^2+m^2}$, and on
each internal edge, write $\frac{1}{Q_k^2+m^2}$. 
Let $\phi_\Gamma(\bold E,\bold Q)$ be the product of all these
functions. 

4. Define the {\it momentum space amplitude}\index{momentum space amplitude} of 
$\Gamma$ to be the distribution $\widehat F_\Gamma(\bold E):$ 
$$
\widehat F_\Gamma(E_1,...,E_n)=
\prod_jg_{v(j)}
\int_{Y(\bold E)}\phi_\Gamma(\bold E,\bold Q)d\bold Q\cdot \delta(E_1+...+E_n)d\bold E,
$$
supported on the
hyperplane $\sum_i E_i=0$. 
It is clear that this distribution is independent on the
orientation of $\Gamma$. 

\begin{remark} Here we must specify the 
normalization of the (translation-invariant) Lebesgue measure  $d\bold Q$ 
on the space $Y(\bold E)$. It is defined in such a way that the volume of $Y(\bold E)/Y_{\Bbb Z}(0)$ is $1$, where $Y_{\Bbb  Z}(0)$ is the set of integer elements in $Y(0)$. So if $T\subset \Gamma$ is a spanning tree then in the coordinates $\lbrace Q_e,e\notin T\rbrace$ on $Y(\bold E)$, 
we have $d\bold Q=\prod_{e\notin T}dQ_e$. 
 \end{remark} 

Now we have 

\begin{proposition}
The Fourier transform of the function 
$F_\Gamma(\delta_{t_1},...,\delta_{t_n})$ is 
$\widehat F_\Gamma(E_1,...,E_n)$. 
Hence, the Fourier transform of the connected Green's function
is 
\begin{equation}\label{defcor1}
\widehat\G_n^c(E_1,...,E_n)=
\sum_{\Gamma\in G_{\ge 3}^*(n)}\frac{\hbar^{b(\Gamma)}}{|{\rm Aut}(\Gamma)|}
\widehat F_\Gamma(E_1, \ldots, E_n).
\end{equation}
\end{proposition}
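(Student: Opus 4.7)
The plan is to verify the first assertion directly from the definition of $F_\Gamma$ and then conclude the second by applying the Fourier transform termwise to the sum \eqref{defcor}.

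First I would recall that the Green's function $G(t) = \frac{e^{-m|t|}}{2m}$ satisfies $\mathcal F(G)(E) = \frac{1}{\sqrt{2\pi}}\cdot\frac{1}{E^2+m^2}$ (up to the normalization convention for $\mathcal F$; I would fix the convention $\widehat f(E) = (2\pi)^{-1/2}\int f(t)e^{-iEt}dt$ at the outset so that all $2\pi$ factors appear symmetrically). Then, starting from the Feynman rules in position space, the amplitude $F_\Gamma(\delta_{t_1},\ldots,\delta_{t_n})$ is the integral over the internal vertex variables $s_j$ of a product $\prod_e G(x_e^+ - x_e^-)$, where $x_e^\pm$ are the two endpoints of the edge $e$ (each being either a $t_i$ or an $s_j$). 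I would substitute for each edge the inverse Fourier representation $G(x_e^+-x_e^-) = (2\pi)^{-1/2}\int \frac{e^{iP_e(x_e^+-x_e^-)}}{P_e^2+m^2}dP_e$, after choosing an orientation of each edge so that the sign of the exponent is consistent.

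Next, I would interchange the order of integration and perform the $s_j$-integrals first. Orienting $\Gamma$ so that external edges point inward, the exponent becomes $i\sum_e P_e(x_e^+-x_e^-) = i\sum_i E_i t_i + i\sum_j s_j (\text{net outflow of } P_\bullet \text{ at }j)$, where I have written $E_i$ for the momentum on the $i$-th external edge and $P_e$ for the momentum on each edge. Integration over each $s_j$ produces a factor $(2\pi)\,\delta(\text{net outflow at }j)$, i.e., the first Kirchhoff law at each internal vertex. The Fourier transform of $F_\Gamma(\delta_{t_1},\ldots,\delta_{t_n})$ with respect to $(t_1,\ldots,t_n)$ then becomes an integral of $\prod_e (P_e^2+m^2)^{-1}$ over all edge momenta satisfying Kirchhoff at internal vertices, with an overall conservation factor $\delta(E_1+\cdots+E_n)$ coming from the global balance (the sum of all Kirchhoff equations). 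After relabelling $P_e = E_i$ on external edges and $P_e = Q_k$ on internal edges, the remaining constraints cut out exactly the affine space $Y(\bold E)$, and the residual integration is over $Y(\bold E)$ with the Lebesgue measure normalized as specified. Counting $2\pi$'s carefully and comparing with the stated $\widehat F_\Gamma$ should give agreement (with the second assertion following immediately by linearity of the Fourier transform applied to \eqref{defcor}).

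The main obstacle will be bookkeeping: keeping track of the $(2\pi)^{1/2}$ factors arising from each edge's Fourier representation versus the $(2\pi)$ factors coming from the $s_j$-delta functions, and ensuring they combine to give precisely the normalization implicit in the statement. A closely related subtlety is the measure on $Y(\bold E)$: the delta functions on $\Bbb R^{|E(\Gamma)|}$ enforcing Kirchhoff at internal vertices naturally produce a quotient measure, and I must check that this coincides with the lattice-normalized Lebesgue measure on $Y(\bold E)$ described in the remark, which is most cleanly done by choosing a spanning tree $T$, using loop momenta $\{Q_e : e\notin T\}$ as coordinates on $Y(\bold E)$, and verifying that the Jacobian of the change of variables equals $1$.
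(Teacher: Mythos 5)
Your strategy is the standard and correct one: substitute the inverse Fourier representation of $G$ on each edge, interchange the order of integration, integrate out the internal vertex positions $s_j$ to obtain the Kirchhoff delta functions, and reduce to the loop integral over $Y(\bold E)$. The paper states but does not prove this proposition, so you are filling a genuine gap, and your spanning-tree observation for the measure on $Y(\bold E)$ is the right way to close the measure-theoretic part: the graph incidence matrix is totally unimodular, so the Jacobian in the loop-momentum coordinates $\lbrace Q_e : e\notin T\rbrace$ is $\pm 1$, matching the normalization fixed in the remark.

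One thing you should be prepared for: if you carry the $2\pi$ count to its conclusion, the two sides will \emph{not} agree on the nose. Each of the $E(\Gamma)$ edges contributes a factor $(2\pi)^{-1}$ from $G(x)=\frac{1}{2\pi}\int\frac{e^{iPx}}{P^2+m^2}dP$, and each of the $n+V(\Gamma)$ position integrations contributes a factor $2\pi$ from its delta function; since $E = n + V + L - 1$ for a connected graph, the net mismatch is $(2\pi)^{1-L(\Gamma)}$ with the unnormalized transform $\widehat f(E)=\int f(t) e^{-iEt}dt$, and $(2\pi)^{1-\frac{n}{2}-L}$ with your symmetric convention (under which, incidentally, $\widehat G(E)=(2\pi)^{-1/2}(E^2+m^2)^{-1}$, so the propagator itself already disagrees with the stated momentum-space rule). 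As a sanity check, $\mathcal F[G(t_1-t_2)]=\frac{2\pi}{E_1^2+m^2}\,\delta(E_1+E_2)$ (unnormalized), while the stated $\widehat F_\Gamma$ for the single-edge graph is $\frac{1}{E_1^2+m^2}\,\delta(E_1+E_2)$. The root cause is that the paper's momentum-space Feynman rules silently drop $2\pi$'s: the vertex tensor $\delta_{s_1=\dots=s_k}$ actually Fourier-transforms, in the symmetric convention, to $(2\pi)^{1-\frac{k}{2}}\delta(\sum Q_j)$ rather than $\delta(\sum Q_j)$. So the proposition as written holds only up to a diagram-dependent power of $2\pi$; your plan correctly captures the combinatorial and analytic content of the claim, but the expectation that ``counting $2\pi$'s carefully \ldots should give agreement'' is too optimistic, and you should be prepared either to fold these constants into the rules or to state the identity up to normalization.
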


The proof of the proposition is straightforward. 

To illustrate the proposition, consider an example. 

\begin{example}
The connected 4-point function for the quartic potential modulo $g^2$ in
momentum space looks like: 
$$
\widehat \G_4^c(E_1,E_2,E_3,E_4)=
g\prod_{i=1}^4\frac{1}{E_i^2+m^2}\delta(\sum_i E_i)d\bold E+O(g^2).
$$
\end{example}

\begin{figure}[htbp]
  \begin{center}
    
    \setlength{\unitlength}{5ex}
    \begin{picture}(5.5,3)(-0.25,-0.15)

      \cbezier(0,0)(1.25,2.1)(3.75,2.1)(5,0)
      \cbezier(0,2.5)(1.25,0.4)(3.75,0.4)(5,2.5)
      \put(1.25,1.25){\circle*{.1}}      
      \put(3.75,1.25){\circle*{.1}}      
      %\put(0,0){\circle*{.1}}      
      %\put(0,2.5){\circle*{.1}}      

      \put(-0.25,2.75){\makebox(0,0)[c]{$1$}}
      \put(-0.25,-0.25){\makebox(0,0)[c]{$2$}}
      \put(5.25,2.75){\makebox(0,0)[c]{$3$}}
      \put(5.25,-0.25){\makebox(0,0)[c]{$4$}}
      \put(1.25,1.65){\makebox(0,0)[c]{$5$}}
      \put(3.75,1.65){\makebox(0,0)[c]{$6$}}

      \put(0.05,2.05){\makebox(0,0)[c]{$E_1$}}
      \put(0.5,-0.15){\makebox(0,0)[c]{$E_2$}}
      \put(5.1,2.05){\makebox(0,0)[c]{$E_3$}}
      \put(4.65,-0.15){\makebox(0,0)[c]{$E_4$}}
      \put(2.5,2.25){\makebox(0,0)[c]{$Q$}}
      \put(2.5,0.1){\makebox(0,0)[c]{$E_1 + E_2-Q$}}

      \put(0.25,2.25){\vector(1,-1){0.25}}            
      \put(0.25,0.25){\vector(1,1){0.25}}            
      \put(4.75,2.25){\vector(-1,-1){0.25}}            
      \put(4.75,0.25){\vector(-1,1){0.25}}            
      \put(2.25,1.65){\vector(1,0){0.25}}            
      \put(2.25,0.85){\vector(1,0){0.25}}            

      \put(6.5,1){\makebox(0,0)[c]{$\Gamma$.}}

    \end{picture}
    \caption{ }
    \label{fig:24}
  \end{center}
\end{figure}

\begin{example} Let us compute the 1PI 4-point function in the
same problem, but now modulo $g^3$. Thus, in addition to the above, we
need to compute the $g^2$ coefficient, which comes from 1-loop 
diagrams. There are three such diagrams, differing by permutation
of external edges. One of these diagrams is as follows: 
it has external vertices $1,2,3,4$ and internal ones $5,6$
such that $1,2$ are connected to $5$, $3,4$ to $6$, and $5$ and
$6$ are connected by two edges (Fig.24). This diagram has the symmetry
group $\Bbb Z/2$, so its contribution is 
$$
\frac{g^2}{2}\left(\int_{\Bbb R}\frac{dQ}{(Q^2+m^2)((E_1+E_2-Q)^2+m^2)}\right)
\prod_{i=1}^4\frac{1}{E_i^2+m^2}\delta(\sum_i E_i)d\bold E.
$$
The integral inside is easy to compute, 
for example, by residues. This yields
$$
\widehat \G_4^c(E_1,E_2,E_3,E_4)=
$$
$$
g\prod_{i=1}^4\frac{1}{E_i^2+m^2}
\left(1+\frac{\pi
  g}{m}\sum_{i=2}^4\frac{1}{(E_1+E_i)^2+4m^2}\right)\delta(\sum_i E_i)d\bold E+
O(g^3)
$$
(this is symmetric in the $E_1,E_2,E_3,E_4$ since when $\sum_i E_i=0$ then for distinct $i,j,k,\ell$ one has
$(E_i+E_j)^2=(E_k+E_\ell)^2$). 
\end{example}

\subsection{The Wick rotation in momentum space}

To obtain the correlation functions of quantum mechanics, we should, after 
computing them in the Euclidean setting, Wick rotate them back to the 
Minkowski setting. Let us do it at the level of Feynman 
integrals in momentum space. (We could do it in position space as well, but 
it is instructive for the future to do it in momentum space, since 
in higher dimensional field theory which we will discuss later, 
the momentum space representation is more convenient).  

Consider the Euclidean propagator 
$$
\frac{1}{E^2+m^2}=\int_{\Bbb R} G(t)e^{iEt}dt,
$$
where $G$ is the Green's function. 
 When we do analytic continuation back 
to the Minkowski setting, we must replace in the correlation functions 
the time variable $t$ with 
$e^{i\theta}t$, where $\theta$ varies from $0$ to $\frac{\pi}{2}$. 
In particular, the Green's function $G(t)$ must be replaced by 
$G(e^{i\theta}t)$. 
So we must consider 
$$
\int_{\Bbb R} G(e^{i\theta}t)e^{iEt}dt=
e^{-i\theta} \int_{\Bbb R} G(t)e^{ie^{-i\theta}Et}dt=
\frac{e^{-i\theta}}{e^{-2i\theta}E^2+m^2}.
$$
As $\theta\to \frac{\pi}{2}$, this function tends (as a 
distribution) to the function 
$\lim_{\varepsilon\to
0+}\frac{i}{E^2-m^2+i\varepsilon}$.
For brevity the limit sign is usually dropped and this 
distribution is written as $\frac{i}{E^2-m^2+i\varepsilon}$.

We see that in order to compute the correlation functions in
momentum space in the Minkowski setting, we should 
use the same Feynman rules as in the Euclidean setting 
except that the propagator put on the edges should be 
$$
\frac{i}{E^2-m^2+i\varepsilon}.
$$
For instance, the contribution of the diagram in Fig.24 is 
\scriptsize
$$
-\frac{g^2}{2}\left(\int_{\Bbb R}\frac{dQ}{(Q^2-m^2+i\varepsilon)
((E_1+E_2-Q)^2-m^2+i\varepsilon)}\right)
\prod_{j=1}^4\frac{1}{E_j^2-m^2+i\varepsilon}\delta(\sum_i E_j)d\bold E.
$$
\normalsize

\subsection{Quantum mechanics on the circle} 

It is reasonable (at least mathematically) 
to consider Euclidean quantum mechanical 
path integrals in the case when the time axis has been replaced
with a circle of length $L$, i.e. $t\in \Bbb R/L\Bbb Z$ (this corresponds to a Brownian particle in a potential field conditioned to return to the original position in a certain time $L$). 
In this case, the theory is the same, except the 
Green's function $G(t)$ is replaced by the 
periodic solution $G_L(t)$ of the equation $(-\frac{d^2}{dt^2}+m^2)f=\delta(t)$
on the circle. This solution has the form 
\begin{equation}\label{glformula} 
G_L(t)=\sum_{k\in \Bbb
  Z}G(t-kL)=\frac{e^{-m(t-\frac{L}{2})}+e^{-m(\frac{L}{2}-t)}}{2m(e^{\frac{mL}{2}}-e^{-\frac{mL}{2}})},\ 0\le t\le L.
\end{equation} 
We note that in the case of a circle, 
there is no problem with graphs without 
external edges (as integral over the circle of a 
constant function is convergent), and hence one may define not only 
correlation functions (i.e. $Z(J)/Z(0)$), but also $Z(0)$
itself. Namely, let 
$$
U(q)=\frac{m^2q^2}{2}+\sum_{n\ge 3}\frac{g_nq^n}{n!},
$$ 
and let $m^2=m_0^2+g_2$ (where $g_i$ are formal parameters). 
Then we can make sense of the ratio $Z_{m_0,\bold
  g,L}(0)/Z_{m_0,0,L}(0)$
(where $Z_{m,\bold g,L}(0)$ denotes the partition function for the
specified values of parameters; from now on the argument $0$ will
be dropped). 
Indeed, this ratio is defined by the formula 
$$
\frac{Z_{m_0,\bold g,L}}{Z_{m_0,0,L}}=
\sum_{\Gamma\in G_{\ge 2}(0)}\frac{\hbar^{b(\Gamma)}}{|{\rm Aut}(\Gamma)|}
F_\Gamma
$$
(where $G_{\ge 2}(0)$ is the set of Feynman graphs without external vertices 
and all vertices of valency $\ge 2$), which is a well-defined expression.

It is instructive to compute this expression in the case 
$$
g_2=a,\ g_3=g_4=...=0.
$$ 
In this case, we have only 2-valent vertices, so the only
connected Feynman
diagrams are $N$-gons, which are 1-loop. 
Hence, 
$$
\log \frac{Z_{m_0,\bold g,L}}{Z_{m_0,0,L}}=
W_1=-\frac{1}{2}\log \det M,
$$
where 
$$
M=1+a(-\tfrac{d^2}{dt^2}+m_0^2)^{-1}.
$$ 
This determinant may be computed by looking 
at the eigenvalues. Namely, the eigenfunctions
of $-\frac{d^2}{dt^2}+m_0^2$ in the space 
$C^\infty(\Bbb R/L\Bbb Z)$ are $e^{\frac{2\pi int}{L}}$, with eigenvalues
$\frac{4\pi^2n^2}{L^2}+m_0^2$. 
So, 
$$
\det M=\prod_{n\in \Bbb
  Z}\left(1+\frac{a}{\frac{4\pi^2n^2}{L^2}+m_0^2}\right).
$$
Hence, using the Euler product formula 
$$
\sinh(z)=z\prod_{n\ge 1} \left(1+\frac{z^2}{\pi^2n^2}\right),
$$
 we get
$$
\frac{Z_{m_0,\bold g,L}}{Z_{m_0,0,L}}=
\frac{\sinh(\frac{m_0L}{2})}{\sinh(\frac{mL}{2})}.
$$
(Double-check this using summation over Feynman diagrams!)

\begin{remark} More informally speaking, we see that 
the partition function $Z$ for the theory with $U=\frac{m^2q^2}{2}$ has the form 
$\frac{C}{\sinh(\frac{mL}{2})}$, where $C$ is a constant of our choice. 
Our choice from now on will be $C=\frac{1}{2}$; we will see 
later (in Example \ref{benchoi}) why such a choice is preferable. 
\end{remark}

\subsection{The massless case} 
Consider now the massless case, $m=0$. In this case the propagator
should be obtained by inverting the operator $-\frac{d^2}{dt^2}$, 
i.e. it should be the integral operator with kernel 
$G(t-s)$, where $G(t)$ is an even function 
satisfying the differential equation
$$
-G''(t)=\delta(t).
$$ 
There is a 1-parameter family of such
solutions, 
$$
G(t)=-\frac{1}{2}|t|+C.
$$ 

Using this function (for any choice of $C$), 
one may define
the correlation functions of the free theory by the Wick formula. 

Note that the function $G$ does not decay at
infinity. Therefore, this 
theory will not satisfy the clustering property
(i.e. is not ``physically meaningful''). 

We will also have difficulties in defining the corresponding 
interacting theory (i.e. one with a non-quadratic potential), as 
the integrals defining the amplitudes of Feynman diagrams
will diverge. Such divergences are called {\it infrared divergences},\index{infrared divergence}
since they are caused by the failure of the integrand to decay 
at large times (or, in momentum space, its failure to be regular 
at low frequencies).

\subsection{Circle-valued quantum mechanics}
Consider now 
the theory with the same Lagrangian 
in which $q(t)$ takes values in the circle of radius $r$, 
$\Bbb R/2\pi r\Bbb Z$ (the ``sigma-model''). We can do this
at least classically, since the 
Lagrangian $\frac{\dot{q}^2}{2}$ makes sense
in this case. 

Let us define the corresponding quantum theory.
The main difference 
from the line-valued case is that since $q(t)$ is circle-valued, we should  
consider not the usual correlators 
$\langle q(t_1)...q(t_n)\rangle$, but rather correlation functions of 
exponentials 
$\langle e^{\frac{ip_1q(t_1)}{r}}...e^{\frac{ip_nq(t_n)}{r}}\rangle$, where $p_j$ are
integers. They should be defined by 
the path integral
\begin{equation}\label{expon}
\int e^{\frac{ip_1q(t_1)}{r}}...e^{\frac{ip_nq(t_n)}{r}}e^{-\frac{S(q)}{\hbar}}Dq,
\end{equation}
where $S(q):=\frac{1}{2}\int \dot{q}^2dt$
and $\int e^{-\frac{S(q)}{\hbar}}Dq$ is agreed to be $1$. 
Note that it suffices to consider only
the case $\sum_j p_j=0$, otherwise 
the group of translations along the circle 
acts nontrivially on the integrand, 
hence under any reasonable definition 
the integral should be zero. 

Now let us define the integral (\ref{expon}). 
Since the integral is invariant under shifts along the target
circle, we may as well imagine that we are integrating 
over $q:\Bbb R\to \Bbb R$ with $q(0)=0$. 
Now let us use the finite-dimensional analogy.
Following this analogy, by completing the square we would get 
$$
\int e^{\frac{ip_1q(t_1)}{r}}...e^{\frac{ip_nq(t_n)}{r}}e^{-\frac{S(q)}{\hbar}}Dq=
e^{-\frac{\hbar}{2r^2} B^{-1}(\sum_j p_jq(t_j),\sum_j p_jq(t_j))}=
$$
$$
e^{-\frac{\hbar}{2r^2}\sum_{j,\ell} p_\ell p_j G(t_\ell-t_j)}=
e^{\frac{\hbar}{2r^2}\sum_{\ell<j} p_\ell p_j|t_\ell-t_j|},
$$
where $B(q,q):=\int \dot{q}^2dt$. 
Thus, it is natural to define the correlators by the formula
$$
\langle e^{\frac{ip_1q(t_1)}{r}}...e^{\frac{ip_nq(t_k)}{r}}\rangle
=e^{\frac{\hbar}{2r^2}\sum_{\ell<j} p_\ell p_j|t_l-t_j|}.
$$
We note that this theory, unlike the line-valued one, {\it does} 
satisfy the clustering property. Indeed, 
if $\sum p_j=0$ (as we assumed), then 
(assuming $t_1\ge t_2\ge...\ge t_n$), we have
$$
\sum_{\ell<j}p_\ell p_j(t_\ell-t_j)=
\sum_{j=1}^{n-1}(t_j-t_{j+1})(p_{j+1}+...+p_n)(p_1+...+p_j)=
$$
$$
-\sum_j (t_j-t_{j+1})(p_1+...+p_j)^2,
$$ 
so the clustering property follows from the fact that
$(p_1+...+p_j)^2\ge 0$. 

\subsection{Massless quantum mechanics on the circle} 

Consider now the theory with Lagrangian $\frac{\dot{q}^2}{2}$, 
where $q$ is a function on the circle of length $L$. In this case,
according to the Feynman yoga,  
we must invert the operator $-\frac{d^2}{dt^2}$ on the circle $\Bbb
R/L\Bbb Z$, or equivalently solve the differential equation 
$-G''(t)=\delta(t)$. Here we run into trouble: 
the operator $-\frac{d^2}{dt^2}$ is not invertible, since it has an
eigenfunction $1$ with eigenvalue $0$; correspondingly, the
differential equation in question has no solutions, as $\int G''dt$
must be zero, so $-G''(t)$ cannot equal $\delta(t)$
(one may say that the 
quadratic form in the exponential is degenerate,
and therefore the Gaussian integral 
turns out to be meaningless).
This problem can be resolved
by the following technique of ``killing the zero mode''.
Namely, let us invert the operator $-\frac{d^2}{dt^2}$ on the space  
$\lbrace{q\in C^\infty(\Bbb R/L\Bbb Z): 
\int qdt=0\rbrace}$ (this may be interpreted 
as integration over this codimension one subspace, 
on which the quadratic form is non-degenerate). This means that we must
find the solution of the differential equation 
$-G''(t)=\delta(t)-\frac{1}{L}$, such that $\int Gdt=0$. 
Such solution is indeed unique, and it equals 
\begin{equation}\label{greefu} 
G(t)=\frac{(t-\frac{L}{2})^2}{2L}-\frac{L}{24},
\end{equation}
$t\in [0,L]$. Thus, for example 
$\langle q(0)^2\rangle=\frac{L}{12}$. 

Higher correlation functions are defined in the usual way. 
Moreover, one can define the theory with an arbitrary potential 
using the standard procedure with Feynman diagrams.

\subsection{Circle-valued quantum mechanics on the circle} 
Finally, let us consider the circle-valued version of the same
theory. Thus, our integration variable is a map $q: \Bbb R/L\Bbb
Z\to \Bbb R/2\pi r\Bbb Z$. So we have a new feature - 
there are different homotopy classes of maps 
labeled by degree. Let us first 
consider integration over degree zero maps. 
Then we should argue in the same way as in the case $t\in \Bbb
R$, and make the definition
$$
\langle e^{\frac{ip_1q(t_1)}{r}}...e^{\frac{ip_nq(t_n)}{r}}\rangle_0:=
e^{-\frac{\hbar}{2r^2}\sum_{\ell,j}p_\ell p_j G(t_\ell-t_j)},
$$
where $\sum_j p_j=0$.
(Here subscript $0$ stands for degree zero maps).
Assuming that $0\le t_1,...,t_n\le L$, we find  
after a short calculation using \eqref{greefu}:
$$
\langle e^{\frac{ip_1q(t_1)}{r}}...e^{\frac{ip_nq(t_n)}{r}}\rangle_0
=e^{\frac{\hbar}{2r^2}(\sum_{\ell<j} p_\ell p_j |t_\ell-t_j|+\frac{(\sum_j p_jt_j)^2}{L})}
$$
(the second summand disappears as $L\to \infty$, and we recover
the answer on the line). 

It is, however, more natural (as we will see later)
to integrate over all maps $q$, not only degree zero. Namely, let $N$ be an integer. 
Then all maps of degree $N$ have the form 
$q(t)+\frac{2\pi rNt}{L}$, where $q$ is a map of
degree zero. Thus, if we want to integrate over maps of degree
$N$, we should compute the same integral as in degree zero, 
but with shift $q\mapsto q+\frac{2\pi r Nt}{L}$. 
But it is easy to see that this shift results simply in rescaling
of the integrand by the factor $e^{\frac{2\pi iN}{L}\sum_j
p_jt_j-\frac{2\pi^2r^2N^2}{\hbar L}}$. Thus, the integral over all maps
should be defined by the formula 
$$
\langle e^{\frac{ip_1q(t_1)}{r}}...e^{\frac{ip_nq(t_n)}{r}}\rangle=
$$
\begin{equation}\label{thetform}
e^{\frac{\hbar}{2r^2}(\sum_{l<j} p_\ell p_j |t_\ell-t_j|+\frac{(\sum p_jt_j)^2}{L})}
\frac{\sum_{N\in \Bbb Z}e^{\frac{2\pi iN}{L}\sum_j
p_jt_j-\frac{2\pi^2r^2N^2}{\hbar L}}}
{\sum_{N\in \Bbb Z}e^{-\frac{2\pi^2r^2N^2}{\hbar L}}}.
\end{equation}
Introduce the elliptic theta-function 
$$
\theta(u,T):=\sum_{N\in \Bbb Z}e^{2\pi iuN-\pi TN^2}.
$$
Then for $L\ge  t_1\ge...\ge t_n\ge 0$ formula  \eqref{thetform} 
can be rewritten in the form 
\begin{equation}\label{thetaf}
\langle e^{\frac{ip_1q(t_1)}{r}}...e^{\frac{ip_nq(t_n)}{r}}\rangle
=
e^{\frac{\hbar}{2r^2} (\sum_j (t_{j}-t_{j+1})(p_1+...+p_j)^2+\frac{(\sum_j
p_jt_j)^2}{L})}
\frac{\theta(\frac{\sum_j p_jt_j}{L},\frac{2\pi r^2}{\hbar L})}
{\theta(0,\frac{2\pi r^2}{\hbar L})}.
\end{equation}

\begin{exercise} Calculate the 1-particle irreducible 2-point
function for a quantum particle 
with potential $U(q):=\frac{m^2q^2}{2}-\frac{gq^4}{4!}$ modulo $g^3$
in momentum space, for $\hbar=1$. 
(We have done this modulo $g^2$ in position space). 
\end{exercise}

\begin{exercise} Let $U(q):=\frac{m^2q^2}{2}-\frac{gq^3}{3}$. 

(i) Calculate 
the leading term of the 1-point function
${\mathcal G}_1(t)$ (with respect to $g$).

(ii) Calculate the connected 2-point function modulo $g^3$.  
\end{exercise} 

\begin{exercise} 
 Consider the potential $U(x):=\frac{m^2\sinh^2(gx)}{2g^2}$. 
Find a formula for $W_0(J)$ 
(the tree part of $\log(Z(J)/Z(0))$) as explicitly as you can, when $J(t)=a\delta(t)$.
\end{exercise} 

\section{Operator approach to quantum mechanics}

In mechanics and field theory (both classical and quantum), there are two main
languages -- Lagrangian and Hamiltonian. In the classical setting, 
the Lagrangian language is the language of variational calculus
(i.e. one studies extremals of the action functional), while the
Hamiltonian language is that of symplectic geometry and Hamilton
equations. Correspondingly, in the quantum setting, the
Lagrangian language is the language of path integrals, while 
the Hamiltonian language is the language of operators and
Schr\"odinger equation. We have now studied the first one
(at least in perturbation expansion) and are passing to the
second one. 

\subsection{Hamilton's equations in classical mechanics}

We start with recalling the Hamiltonian formalism of classical
mechanics. For more details, we refer the reader to the excellent
book \cite{A}. 

Recall first the Lagrangian description of the motion of a classical particle or system of particles. 
The position of a particle is described by a point $q$ of the 
configuration space $X$, which we will assume to be a manifold. 
The Lagrangian of the system is a (smooth) function ${\mathcal L}: TX\to
\Bbb R$ on the total space of the tangent bundle of $X$. 
Then the action functional is $S(q)=\int \mathcal
L(q,\dot{q})dt$. The trajectories of the particle are the
extremals of $S$. The condition for $q(t)$ to be an extremal of
$S$ is equivalent to the Euler-Lagrange equation
(=the equation of motion), which 
in local coordinates has the form 
$$
\frac{d}{dt}\frac{\partial \mathcal{L}}{\partial \dot{q_i}}=
\frac{\partial \mathcal{L}}{\partial q_i}.
$$
For example, if $X$ is a Riemannian manifold and ${\mathcal L}(q,v)=\frac{v^2}{2}-U(q)$
where $U:X\to \Bbb R$ is a potential function, 
then the Euler-Lagrange equation is the Newton equation
$$
\ddot{q}=-U'(q),
$$
where $\ddot{q}=\nabla_{\dot{q}}\dot q$ 
is the covariant derivative with respect to the Levi-Civita
connection. 

Consider now a system with Lagrangian ${\mathcal L}(q,v)$, whose 
differential with respect to $v$ (for fixed $q$) is a 
diffeomorphism $T_qX\to T^*_qX$. This is definitely true in the
above special case of Riemannian $X$. 

\begin{definition}
The {\it Hamiltonian (or energy function)}\index{Hamiltonian}
of the system with Lagrangian ${\mathcal L}$ 
is the function $H: T^*X\to \Bbb R$, which is the 
Legendre transform of ${\mathcal L}$ along fibers; 
that is, $H(q,p)=pv_0-{\mathcal L(q,v_0)}$, where 
$v_0$ is the (unique) critical point of $pv-{\mathcal L(q,v)}$.
The manifold $T^*X$ is called the {\it phase space
(or space of states)}.\index{phase space} The variable $p$ is called the {\it momentum
variable}.\index{momentum
variables}
\end{definition}

For example, if ${\mathcal L}=\frac{v^2}{2}-U(q)$, 
then $H(q,p)=\frac{p^2}{2}+U(q)$.

\begin{remark} Since Legendre transform is involutive,
we also have that the Lagrangian is the fiberwise 
Legendre transform of the Hamiltonian. 
\end{remark} 

Let $q_i$ be local coordinates on $X$.
This coordinate system defines a coordinate system 
$(q_i,p_i)$ on $T^*X$. We obtain

\begin{proposition}
The equations of motion are equivalent to the 
{\it Hamilton equations}\index{Hamilton equations}
$$
\dot{q_i}=\frac{\partial H}{\partial p_i},\
\dot{p_i}=-\frac{\partial H}{\partial q_i},
$$ 
in the sense that they are obtained from Hamilton's equations by
elimination of $p_i$. 
\end{proposition}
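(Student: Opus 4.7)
The plan is to verify the equivalence by a direct computation of the partial derivatives of $H$ using the defining property of the Legendre transform, and then to read off the Hamilton equations as a first-order reformulation of the Euler--Lagrange equation.

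First I would set up notation carefully. For fixed $q$, the fiberwise Legendre transform is defined by $p = \partial \mathcal{L}/\partial v (q,v)$, which by hypothesis is a diffeomorphism $T_qX \to T_q^*X$. Let $v = v(q,p)$ denote the inverse, so that
\begin{equation*}
H(q,p) = p\cdot v(q,p) - \mathcal{L}(q,v(q,p)).
\end{equation*}
I would then compute, using the chain rule,
\begin{equation*}
\frac{\partial H}{\partial p_i} = v_i + \sum_j \left(p_j - \frac{\partial \mathcal{L}}{\partial v_j}\right)\frac{\partial v_j}{\partial p_i} = v_i,
\end{equation*}
since the parenthesized term vanishes by the definition of $v_0$. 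Similarly,
\begin{equation*}
\frac{\partial H}{\partial q_i} = \sum_j \left(p_j - \frac{\partial \mathcal{L}}{\partial v_j}\right)\frac{\partial v_j}{\partial q_i} - \frac{\partial \mathcal{L}}{\partial q_i} = -\frac{\partial \mathcal{L}}{\partial q_i}.
\end{equation*}

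Next I would interpret these two identities as the bridge between the two formalisms. The first, $\partial H/\partial p_i = v_i$, says that along a trajectory the first Hamilton equation $\dot q_i = \partial H/\partial p_i$ is exactly the tautological statement $\dot q_i = v_i$; that is, it fixes $v$ to be the genuine velocity and thus plays the role of eliminating the momentum variable in favor of $\dot q$. Under this identification $p_i = \partial \mathcal{L}/\partial \dot q_i$. The second identity, combined with the second Hamilton equation, gives
\begin{equation*}
\dot p_i = -\frac{\partial H}{\partial q_i} = \frac{\partial \mathcal{L}}{\partial q_i},
\end{equation*}
which is precisely $\frac{d}{dt}\frac{\partial \mathcal{L}}{\partial \dot q_i} = \frac{\partial \mathcal{L}}{\partial q_i}$, the Euler--Lagrange equation.

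Conversely, starting from a solution $q(t)$ of Euler--Lagrange, defining $p_i := \partial \mathcal{L}/\partial \dot q_i$ recovers both Hamilton equations by reversing the above steps. I do not anticipate a real obstacle; the only subtlety is that the whole argument is local on $T^*X$ (it uses the coordinate form of Hamilton's equations), so to make it intrinsic one should note that the computation is coordinate-independent because it merely uses the Legendre duality $p\,dq - H\,dt$ versus $\mathcal{L}\,dt$, and the hypothesis that $v\mapsto \partial_v\mathcal{L}$ is a fiberwise diffeomorphism ensures that the Legendre transform is involutive, so no information is lost in passing between the two descriptions.
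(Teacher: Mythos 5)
Your proof is correct and is the standard argument; the paper simply states this proposition without proof (treating it as a classical fact), so there is no proof in the text to compare against. Your computation of $\partial H/\partial p_i = v_i$ and $\partial H/\partial q_i = -\partial \mathcal L/\partial q_i$ via the stationarity of $p\cdot v - \mathcal L(q,v)$ in $v$ is exactly the right way to establish the equivalence, and your observation that the first Hamilton equation encodes $p_i = \partial\mathcal L/\partial\dot q_i$ while the second then becomes Euler--Lagrange is the content of the phrase ``elimination of $p_i$.''
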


It is useful to write Hamilton's equations in terms of Poisson brackets. 
Recall that the manifold 
$T^*X$ has a canonical symplectic structure 
$\omega=d\alpha$, where $\alpha$ 
is the canonical 1-form on $T^*M$ (called the {\it Liouville form}\index{Liouville form}) constructed as follows:
for any $z\in T_{(q,p)}(T^*X)$, 
$$
\alpha(z)=(p,d\pi(q,p)z),
$$ 
where $\pi: T^*X\to X$ is the
projection. In local coordinates, we have 
$$
\alpha=\sum_i p_idq_i,\ 
\omega=\sum_i dp_i\wedge dq_i.
$$

Now let $(M,\omega)$ be a symplectic manifold (in our case
$M=T^*X$).  
Since $\omega$ is non-degenerate, 
one can define the Poisson bivector 
$\omega^{-1}$, 
which is a section of the bundle $\wedge^2TM$. 
Now, given any two smooth functions 
$f,g$ on $M$, one can define a third function 
-- their {\it Poisson bracket}\index{Poisson bracket}
$$
\lbrace{ f,g\rbrace}=(df\otimes dg,\omega^{-1}).
$$
This operation is skew-symmetric and satisfies the Jacobi identity,
i.e. it is a Lie bracket on $C^\infty(M)$. 
For $M=T^*X$, in local coordinates we have
$$ 
\lbrace{ f,g\rbrace}=\sum_i\biggl(\frac{\partial f}{\partial
  q_i}\frac{\partial g}{\partial p_i}-
\frac{\partial f}{\partial
  p_i}\frac{\partial g}{\partial q_i}\biggr).
$$

This shows that Hamilton's equations can be written in the
following manner in terms of Poisson brackets: 
\begin{equation}\label{heobs}
\frac{d}{dt}f(q(t),p(t))=\lbrace{f,H\rbrace}(q(t),p(t)).
\end{equation}
for any smooth function (``classical observable'') $f\in C^\infty(T^*X)$, or, for shorthand
$$
\frac{df}{dt}=\lbrace f,H\rbrace.
$$ 
In other words, Hamilton's equations say that the 
rate of change of the observed value of 
$f$ equals the observed value of
$\lbrace{f,H\rbrace}$.

%Note that for a given Lagrangian, 
%the unique function $H$ (up to adding a constant) for which 
%equations (\ref{heobs}) are equivalent to the equations of motion is the 
%Hamiltonian. This provides another definition of the Hamiltonian, which 
%does not use the notion of the Legendre transform. 

\subsection{Unbounded self-adjoint operators} 

The rigorous mathematical treatment of quantum mechanics in the Hamiltonian setting
is based on von Neumann's theory of unbounded self-adjoint operators in a Hilbert space.
Let us recall the basics of this theory. 

\subsubsection{Spectral theorem for bounded self-adjoint operators} 
Let $\mathcal H$ be a separable complex Hilbert space with 
inner product $\la,\ra$ (antilinear in the first argument, as is
traditional in quantum physics). We first recall the {\it spectral theorem}\index{spectral theorem} for bounded self-adjoint operators $A: \mathcal H\to \mathcal H$, which generalizes the diagonalization theorem for a Hermitian matrix. 

\begin{theorem}\label{specthe} (von Neumann) Let $A$ be a bounded self-adjoint operator. 
There exists a measure space $(X,\mu)$, 
an essentially bounded measurable function $h: X\to \Bbb R$, and an isometry $\mathcal H\to L^2(X,\mu)$ under which $A$ maps to the operator of multiplication by $h$.  
Moreover, the spectrum $\sigma(A)$ is the set of $\lambda\in \Bbb R$ 
for which $h^{-1}(\lambda-\varepsilon,\lambda+\varepsilon)$ 
is positive for each $\varepsilon>0$, 
and the eigenvalues of $A$ (if they exist) are $\lambda\in \Bbb R$ 
such that $\mu(h^{-1}(\lambda))>0$, with eigenfunctions being indicator functions of 
subsets of $h^{-1}(\lambda)$ of positive measure. 
\end{theorem}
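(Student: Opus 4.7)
The plan is to build the unitary equivalence in three stages: first construct a continuous functional calculus for $A$, then use it together with Riesz representation to produce spectral measures, and finally patch things together with a cyclic-vector decomposition.

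First I would establish the continuous functional calculus. For a polynomial $p \in \Bbb C[x]$, since $A$ is self-adjoint one has $\|p(A)\|_{\rm op} = \sup_{\lambda \in \sigma(A)} |p(\lambda)|$: the inequality $\|A\| = \sup|\sigma(A)|$ for bounded self-adjoint operators follows from iterating $\|A^2\| = \|A\|^2$ and the spectral radius formula, and then applying this to the self-adjoint operator $p(A)\overline p(A)$ handles general polynomials. By Stone--Weierstrass, polynomials are dense in $C(\sigma(A),\Bbb R)$, so the map $p \mapsto p(A)$ extends to an isometric $*$-homomorphism $\phi: C(\sigma(A)) \to B(\mathcal H)$.

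Next, for any $v \in \mathcal H$, the functional $f \mapsto \langle v, \phi(f) v\rangle$ is positive and bounded on $C(\sigma(A))$, hence by the Riesz--Markov representation theorem is given by a finite positive Borel measure $\mu_v$ on $\sigma(A)$, with $\|\mu_v\| = \|v\|^2$. The cyclic subspace $\mathcal H_v := \overline{\mathrm{span}}\{A^n v : n \ge 0\}$ then carries a well-defined isometry $U_v: \mathcal H_v \to L^2(\sigma(A), \mu_v)$ sending $p(A)v \mapsto p$, and under $U_v$ the operator $A$ becomes multiplication by the coordinate function $\lambda$. This is the theorem on the single cyclic subspace.

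To handle the whole space I would use a standard Zorn's lemma argument to decompose $\mathcal H = \bigoplus_{\alpha \in I} \mathcal H_{v_\alpha}$ into an orthogonal direct sum of cyclic subspaces (separability makes $I$ countable). Then set
$$
X := \bigsqcup_{\alpha \in I} \sigma(A) \times \{\alpha\}, \quad \mu|_{\sigma(A)\times\{\alpha\}} := \mu_{v_\alpha}, \quad h(\lambda,\alpha) := \lambda,
$$
and assemble the $U_{v_\alpha}$ into an isometry $\mathcal H \cong L^2(X,\mu)$ intertwining $A$ with multiplication by $h$. Essential boundedness of $h$ follows from $\|h\|_\infty = \|A\| < \infty$.

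Finally, the spectral and eigenvalue characterizations: if $h^{-1}(\lambda-\varepsilon, \lambda+\varepsilon)$ has measure zero for some $\varepsilon$, one can invert multiplication by $h-\lambda$ with an $L^\infty$ function of norm $\le 1/\varepsilon$, so $\lambda \notin \sigma(A)$; conversely, positivity of this measure for every $\varepsilon$ gives approximate eigenvectors (indicator functions of these sublevel sets, normalized), forcing $\lambda \in \sigma(A)$. Genuine eigenvectors for $\lambda$ correspond to nonzero $L^2$ functions supported on $h^{-1}(\lambda)$, which exist iff $\mu(h^{-1}(\lambda)) > 0$. The main technical obstacle I expect is the cyclic decomposition step, namely verifying that Zorn's lemma produces an orthogonal decomposition whose span is dense --- one must carefully check that $\mathcal H_v^\perp$ is itself $A$-invariant (using self-adjointness of $A$) to iterate the construction, and that the supremum of nested families of mutually orthogonal cyclic subspaces is again such a family.
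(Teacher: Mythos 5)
The paper states Theorem \ref{specthe} without proof (it is cited as a background result of von Neumann and then used in the treatment of unbounded operators), so there is no in-text argument to compare against. Your proposed proof is the standard one --- continuous functional calculus from Stone--Weierstrass and the $C^*$-norm identity, Riesz--Markov applied to $f\mapsto\langle v,\phi(f)v\rangle$ to get the spectral measure on a cyclic subspace, a Zorn's lemma decomposition into mutually orthogonal cyclic pieces (using self-adjointness to see $\mathcal H_v^\perp$ is $A$-invariant), and then a disjoint union to assemble the measure space --- and it is correct. Two small points you should make explicit when writing it out: the norm computation $\|p(A)\|=\sup_{\lambda\in\sigma(A)}|p(\lambda)|$ uses the spectral mapping theorem for polynomials applied to $\bar p\, p$, which should be stated (it is elementary from factoring over $\Bbb C$, but it is a separate ingredient); and the assembled map $\mathcal H\to L^2(X,\mu)$ is a \emph{surjective} isometry (a unitary), since each cyclic map has dense range in $L^2(\mu_{v_\alpha})$ --- this surjectivity is what lets you identify $\sigma(A)$ with $\sigma(M_h)$ and eigenvectors of $A$ with those of $M_h$ in the last step. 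Also, your formulation that the eigenspace for $\lambda$ consists of all $L^2$ functions supported on $h^{-1}(\lambda)$ is actually more precise than the paper's phrasing (which mentions only indicator functions).
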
 

\subsubsection{Closable and closed operators} 
Now we pass to not necessarily bounded operators. 
Let $\mathcal H'$ be another separable Hilbert space. 
A {\it densely defined linear operator}\index{densely defined linear operator} on $\mathcal H$ is a pair $(A,V)$ where
$V\subset \mathcal H$ is a dense subspace and $A$ 
is a (possibly unbounded) linear operator $V\to \mathcal H'$. The space $V$ is called the {\it domain}\index{domain of a densely defined operator} of $A$; in the notation, we will often suppress it and denote the operator just by $A$. Such an operator $A$ has a graph $\Gamma_A\subset V\times \mathcal H'\subset \mathcal H\times \mathcal H'$. Let $\overline \Gamma_A$ be the closure of $\Gamma_A$ in 
$\mathcal H\times \mathcal H'$. The operator $A$ is said to be {\it closable}\index{closable operator} if $(0,u)\in \overline \Gamma_A$ for $u\in \mathcal H'$ implies $u=0$, i.e., if the first projection 
$p_1:\overline\Gamma_A\to \mathcal H$ is injective. In this case, setting $\overline V:=p_1(\overline \Gamma_A)\subset \mathcal H$, we have $V\subset \overline V$ and obtain an extension of the operator $A: V\to \mathcal H'$ to a densely defined operator $\overline A: \overline V\to \mathcal H'$ which is called the {\it closure}\index{closure of an operator} of $A$. If $A$ is closable and $\overline A=A$, we will say that $A$ is {\it closed}\index{closed operator}; in other words, $A$ is closed iff it has closed graph in $\mathcal H\times \mathcal H'$. Obviously, the closure $\overline A$ 
is closed for any closable $A$. Also, if $A$ is bounded then it is closable, $\overline V=\mathcal H$, and $\overline A: \mathcal H\to \mathcal H'$ is just the continuous (=bounded) extension of $A$.  

In general, however, a densely defined operator need not be closable. For example, 
if $\mathcal H'$ is finite dimensional and $A: V\to \mathcal H'$ is unbounded then there exists a sequence $v_n\in V$ such that $v_n\to 0$ but $||Av_n||\ge 1$. Then the sequence $w_n:=\frac{v_n}{||Av_n||}$ goes to $0$, while $||Aw_n||=1$, so, as the unit sphere in $\mathcal H'$ is compact, passing to a subsequence if needed, we may assume that $Aw_n\to u$ for some $u\in \mathcal H'$ with $||u||=1$. Then 
$(0,u)\in \overline{\Gamma}_A$ and $A$ is not closable. So we see that $A$ is closable iff it is bounded.  

On the other hand, if $\mathcal H'$ is infinite dimensional, then there are important classes of
unbounded closable operators. For example, consider the case $\mathcal H=\mathcal H'$. Let us say that an operator $A: V\to \mathcal H$ is {\it symmetric}\index{symmetric operator} if $\la v,Aw\ra=\la Av,w\ra$ for all $v,w\in V$. We claim that every symmetric operator is closable and its closure is symmetric. Indeed, suppose $(v_n,Av_n)\to (0,u)$ for $u\in\mathcal H$. Fix a sequence $u_k\in V$ such that $u_k\to u$. Then  
$$
\la Au_k,v_n\ra=\la u_k,Av_n\ra\to \la u_k,u\ra,\ n\to \infty.
$$
But the leftmost expression goes to zero, so 
$\la u_k,u\ra=0$ for all $k$, hence $||u||^2=0$ which gives $u=0$, i.e., $A$ is closable. 
Moreover, given $v,w\in \overline V$, there exist 
sequences $v_n\to v,w_n\to w$ in $V$ such that $Av_n\to \overline Av$, $Aw_n\to \overline Aw$, thus 
$$
\la v,\overline Aw\ra=\lim_{n\to \infty}\la v_n,Aw_n\ra=\lim_{n\to \infty}\la Av_n,w_n\ra=\la \overline Av,w\ra,
$$
so $\overline A$ is symmetric. 

\subsubsection{Adjoint operator} 
Closed symmetric operators by themselves are not sufficient for quantum mechanics, however, since such operators cannot, in general, be diagonalized. Instead we need {\it self-adjoint}\index{self-adjoint operator} operators, which are closed symmetric operators satisfying an important additional property. To formulate this property, we first need to define the notion of an adjoint operator.

Let $(A,V)$ be a closed symmetric operator. Denote by $V^\vee$ the space of $u\in \mathcal H$ such that the linear functional $v\mapsto \la u,Av\ra$ 
is bounded on $V$. In this case by the Riesz representation theorem there exists a unique 
vector $w\in \mathcal H$ such that $\la u,Av\ra=\la w,v\ra$, which depends linearly on $u$.
Thus we obtain an operator $A^\dagger: V^\vee\to \mathcal H$. Note that 
$V^\vee\supset V$ and $A^\dagger$ is an extension of $A$ to $V^\vee$, so 
$(A^\dagger,V^\vee)$ is a densely defined operator  called the {\it adjoint operator}\index{adjoint operator} of $(A,V)$. Furthermore, this operator is closed: if $(u_n,A^\dagger u_n)\to (u,w)$ 
then for $v\in V$, 
$$
\la A^\dagger u_n,v\ra=\la u_n,Av\ra\to \la u,Av\ra,\ n\to \infty,
$$
and at the same time $\la A^\dagger u_n,v\ra\to \la w,v\ra$, so 
$\la u,Av\ra=\la w,v\ra$, hence $u\in V^\vee$ and $w=A^\dagger u$. 

However, we will see that the operator $A^\dagger$ fails to be symmetric, in general. 
So we may consider the skew-Hermitian form 
$$
B(v,w):=(A^\dagger v,w)-(v,A^\dagger w)
$$
on $V^\vee$ that measures its failure to be symmetric, called the {\it boundary form}\index{boundary form}
(it is called this way because in examples it corresponds to boundary terms arising from integration by parts). 
By definition, $V\subset {\rm Ker}B$ (in fact, one can show that $V={\rm Ker}B$, but we don't need this fact). It is easy to see that closed symmetric extensions of $A$ correspond 
to isotropic closed subspaces $V\subset L\subset V^\vee$ 
with respect to the form $B$; namely, the extension of $A$ to $L$ is defined to be  
the restriction of $A^\dagger|_L$. Moreover, the adjoint operator to such an extension 
$(A^\dagger,L)$ is $(A^\dagger, L^\perp)$, where $L^\perp$ is the 
orthogonal complement of $L$ in $V^\vee$ with respect to $B$. 

\subsubsection{Self-adjoint operators} 
Let us say that a closed symmetric operator $(A,V)$ is {\it self-adjoint}\index{self-adjoint operator} if 
$V^\vee=V$, i.e., $A^\dagger=A$. We see that {\it self-adjoint extensions}\index{self-adjoint extension}
of $A$ correspond to {\it Lagrangian} subspaces $L$, i.e., those for which 
$L=L^\perp$. Note that such extensions/subspaces may or may not exist: 
the necessary and sufficient condition for existence of self-adjoint extensions (or Lagrangian subspaces) is that the signature $(n_+,n_-)$ of the Hermitian form $iB$ satisfies the equation $n_+=n_-$ (i.e., the so-called {\it deficiency indices}\index{deficiency index} $n_\pm \in \Bbb Z_{\ge 0}\cup\infty$ of $A$ are equal). However, in quantum mechanical models they usually exist and correspond to various spatial boundary conditions. 

We say that a symmetric operator $(A,V)$ is {\it essentially self-adjoint}\index{essentially self-adjoint operator}
if the closure $(\overline A,\overline V)$ is self-adjoint. Thus an essentially self-adjoint operator 
has a unique self-adjoint extension, so having such an operator is basically as good as having a self-adjoint one. This notion is convenient, for instance, when we do not want to describe explicitly the space $\overline V$.  

The importance of unbounded self-adjoint operators consists in the fact that von Neumann's spectral theorem extends naturally to them. Namely, define the {\it spectrum}\index{spectrum of a self-adjoint operator} $\sigma(A,V)$ of a self-adjoint operator $(A,V)$ to be the subset of $\lambda\in \Bbb C$ for which the operator 
$A-\lambda: V\to \mathcal H$ fails to be surjective. Then we have 

\begin{theorem} Theorem \ref{specthe} except for the statement that $h$ is essentially bounded holds for not necessarily bounded self-adjoint operators. Moreover, the domain $V$ of $A$ in its spectral theorem realization is the space of $g\in L^2(X,\mu)$ such that $hg\in L^2(X,\mu)$. 
\end{theorem}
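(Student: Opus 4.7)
The plan is to reduce to the bounded case via the \emph{Cayley transform}. Let $(A,V)$ be self-adjoint. First I would establish that $A\pm i: V\to \mathcal H$ are bijections whose inverses have operator norm $\le 1$. The isometry-type identity
$$
\|(A\pm i)v\|^2=\|Av\|^2+\|v\|^2,\quad v\in V,
$$
(which uses only symmetry, since $\la Av,iv\ra+\la iv,Av\ra=0$) immediately yields injectivity and shows that the range of $A+i$ is closed: if $(A+i)v_n\to w$, then $(v_n,Av_n)$ is Cauchy, so by closedness of $A$ there is $v\in V$ with $v_n\to v$, $Av_n\to Av$, hence $w=(A+i)v$. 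Density of the range is where full self-adjointness enters: if $u\perp \mathrm{Range}(A+i)$, then $\la u,Av\ra=\la -iu,v\ra$ for all $v\in V$, so $u\in V^\vee=V$ and $A^\dagger u=-iu$, i.e.\ $Au=-iu$, whence $-i\|u\|^2=\la Au,u\ra=\overline{\la u,Au\ra}=i\|u\|^2$ gives $u=0$. The same argument handles $A-i$.

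Next I would define $U:=(A-i)(A+i)^{-1}:\mathcal H\to \mathcal H$. Since $\|(A+i)v\|=\|(A-i)v\|$, the operator $U$ is a surjective isometry, hence unitary. Moreover $1-U=2i(A+i)^{-1}$ is injective, so $1$ is not an eigenvalue of $U$, and
$$
A=i(1+U)(1-U)^{-1}
$$
on $V=(A+i)^{-1}\mathcal H=\mathrm{Range}(1-U)$.

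Now I would invoke the spectral theorem (Theorem \ref{specthe}) in its bounded form for the normal operator $U$. This is obtained by applying Theorem \ref{specthe} to the commuting bounded self-adjoint operators $\tfrac12(U+U^{\dagger})$ and $\tfrac1{2i}(U-U^{\dagger})$ and performing simultaneous diagonalization (standard, e.g.\ via the bicommutant of $U$, or by first diagonalizing one of them and then the other on each fibre). The outcome is a measure space $(X,\mu)$ and an isometry $\mathcal H\xrightarrow{\sim} L^2(X,\mu)$ under which $U$ becomes multiplication by a measurable function $u:X\to S^1$ with $\mu(u^{-1}\{1\})=0$. Define
$$
h:=i\,\frac{1+u}{1-u},
$$
which is a measurable real-valued function on $X$ (finite $\mu$-a.e.). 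Under the isometry, $(1-U)^{-1}$ corresponds to multiplication by $(1-u)^{-1}$ (densely defined), and $i(1+U)(1-U)^{-1}$ to multiplication by $h$. Consequently $A$ is unitarily equivalent to $M_h$ with its natural domain: since $(A+i)^{-1}$ is multiplication by the bounded function $(h+i)^{-1}$ (bounded because $|h+i|\ge 1$), we get
$$
V=\mathrm{Range}\,(h+i)^{-1}=\{g\in L^2(X,\mu): (h+i)g\in L^2(X,\mu)\}=\{g\in L^2(X,\mu): hg\in L^2(X,\mu)\},
$$
as claimed.

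Finally, the spectrum and eigenvalue statements transfer from $U$ to $A=M_h$ using the spectral mapping for the inverse Cayley transform: $\lambda\in\sigma(A)$ iff $A-\lambda$ is not surjective iff (by the explicit form of $h$) every $\varepsilon$-neighborhood of $\lambda$ has positive $\mu$-measure preimage under $h$; and $\lambda$ is an eigenvalue of $A$ iff $\mu(h^{-1}(\lambda))>0$, with eigenfunctions the indicators of measurable subsets of positive measure in $h^{-1}(\lambda)$. The main obstacle is the analytic verification in the first step, namely that self-adjointness (as opposed to mere symmetry) upgrades closed range to full surjectivity of $A\pm i$; after that, the Cayley transform mechanically converts the problem to the bounded normal case already governed by Theorem \ref{specthe}.
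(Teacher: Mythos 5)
The paper states this result (von Neumann's spectral theorem for unbounded self-adjoint operators) without proof, so there is no argument of record to compare against. Your Cayley-transform proof is the classical one, and it is correct in structure: the identity $\|(A\pm i)v\|^2=\|Av\|^2+\|v\|^2$ (from symmetry), closedness of the ranges (using only that $A$ is closed), density of the ranges (using $V^\vee=V$ to rule out imaginary eigenvalues), unitarity of $U=(A-i)(A+i)^{-1}$ with $1$ not an eigenvalue, the inverse transform $h=i\tfrac{1+u}{1-u}$, and the domain identification $V=\{g:hg\in L^2\}$ via the correspondence of $(A+i)^{-1}$ with multiplication by the bounded function $(h+i)^{-1}$.

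Two small points. First, with the paper's convention that $\la\cdot,\cdot\ra$ is antilinear in the first argument, $u\perp\mathrm{Range}(A+i)$ gives $\la u,Av\ra=-i\la u,v\ra=\la iu,v\ra$, hence $A^\dagger u=iu$ rather than $-iu$; the conclusion $u=0$ is of course unaffected, but you should fix the sign for consistency with the stated convention. Second, and more substantively, your invocation of Theorem \ref{specthe} ``for the normal operator $U$'' is not a direct application of that theorem as stated, which concerns a single bounded self-adjoint operator. Applying it to $\mathrm{Re}\,U$ produces an isometry $\mathcal H\cong L^2(X,\mu)$ under which $\mathrm{Re}\,U$ is a multiplication operator, but there is no reason $\mathrm{Im}\,U$ should also be multiplication in that particular realization --- commuting with $M_{\mathrm{Re}\,u}$ only forces $\mathrm{Im}\,U$ into the commutant of $\{M_\varphi:\varphi\text{ bounded Borel function of }\mathrm{Re}\,u\}$, which need not be the multiplication algebra. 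You do flag this (``simultaneous diagonalization, standard''), and it is standard, but it genuinely requires an additional argument: one must either refine the realization so that the multiplication algebra is maximal abelian (e.g.\ by decomposing $\mathcal H$ into cyclic subspaces for the pair and gluing), or cite the spectral theorem for bounded normal operators as a separate input. With that filled in, the remainder of the argument --- including the transfer of the spectrum and eigenvalue characterizations to $A=M_h$ --- is correct.
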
 

If the measure $\mu$ is concentrated on a countable set 
(i.e., we may take $X=\Bbb N$ with $\mu(j)=1$ for $j\in \Bbb N$) 
then $\mathcal H$ has a basis consisting of eigenfunctions, and vice versa; 
in this case one says that the spectrum of $A$ is {\it purely point spectrum}.\index{purely point spectrum} This happens, for example, when $A$ is compact (the Hilbert-Schmidt theorem). 
The other extreme is {\it purely continuous spectrum},\index{purely continuous spectrum} when there are no eigenvalues
(i.e., in the spectral theorem realization, all points of $X$ have zero measure). 
The spectral theorem implies that any self-adjoint operator can be uniquely written as 
an orthogonal direct sum of two self-adjoint operators with purely point and purely continuous spectrum, respectively. 

The spectral theorem also implies the following corollary. 

\begin{corollary} Let $(A,V)$ be a self-adjoint operator. Then there exists a unique 1-parameter group of unitary operators $U(t)=e^{iAt}: \mathcal H\to \mathcal H$ strongly continuous in $t$ which preserve $V$ and commute with $A$, such that for all $v\in V$ the function $t\mapsto U(t)v$ is differentiable and 
$$
\frac{d}{dt}(U(t)v)=iAU(t)v.
$$ 
\end{corollary}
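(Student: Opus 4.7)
The plan is to deduce this corollary from the spectral theorem. By that theorem, we may identify $\mathcal{H}$ isometrically with $L^2(X,\mu)$ for some measure space, so that $A$ becomes multiplication by a real measurable function $h:X\to\Bbb R$, and the domain $V$ corresponds to $\{g\in L^2(X,\mu):hg\in L^2(X,\mu)\}$. In this model the natural candidate is to define $U(t)$ as multiplication by the bounded function $e^{ith}$, since this function has absolute value one pointwise and $h$ is real.

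First I would check the ``algebraic'' properties: each $U(t)$ is unitary because multiplication by a unimodular function preserves $L^2$-norm and is its own inverse's adjoint; the group law $U(t)U(s)=U(t+s)$ and $U(0)=1$ follow from $e^{ith}e^{ish}=e^{i(t+s)h}$ pointwise. Strong continuity reduces to showing that $\|(e^{ith}-1)g\|_{L^2}\to 0$ as $t\to 0$ for every $g\in L^2(X,\mu)$: pointwise $e^{ith(x)}-1\to 0$, and the integrand is dominated by the integrable function $4|g(x)|^2$, so dominated convergence applies. Preservation of $V$ is immediate because $h\cdot(e^{ith}g)=e^{ith}\cdot hg$, which lies in $L^2$ whenever $hg$ does; the same identity shows $U(t)$ commutes with $A$ on $V$.

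The heart of the argument is the differential equation on $V$. For $g\in V$, I would write
\[
\Bigl\|\frac{U(t)g-g}{t}-iAg\Bigr\|_{L^2}^2=\int_X\Bigl|\frac{e^{ith(x)}-1}{t}-ih(x)\Bigr|^2|g(x)|^2\,d\mu(x).
\]
Pointwise the integrand tends to $0$. The elementary inequality $|e^{is}-1-is|\le 2|s|$ gives $|(e^{ith}-1)/t-ih|\le 2|h|$, so the integrand is dominated by $4|h(x)g(x)|^2$, which is integrable precisely because $g\in V$. Dominated convergence then yields differentiability and the stated identity. This is the step I expect to be the main obstacle, since it is the only place where the assumption $g\in V$ enters essentially and where the unbounded nature of $h$ requires care with the dominating function.

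For uniqueness, suppose $\widetilde U(t)$ is another strongly continuous one-parameter group of unitaries satisfying the same differential equation on $V$. For $v\in V$, consider $\phi(t):=U(-t)\widetilde U(t)v$. Using the product rule (justified by strong continuity together with differentiability on $V$, after checking that $\widetilde U(t)$ preserves $V$ as assumed) one computes $\phi'(t)=U(-t)(-iA\widetilde U(t)v+iA\widetilde U(t)v)=0$, so $\phi(t)=\phi(0)=v$, i.e.\ $\widetilde U(t)v=U(t)v$ on the dense subspace $V$; since both operators are unitary hence bounded, they agree on all of $\mathcal{H}$.
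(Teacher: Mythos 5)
Your proof is correct and takes the same route as the paper: the paper's proof is the single sentence that, in the spectral-theorem realization of $A$ as multiplication by $h$ on $L^2(X,\mu)$, one defines $U(t)$ to be multiplication by $e^{ith}$. You fill in the verifications the paper leaves implicit (unitarity, group law, strong continuity and differentiability via dominated convergence with dominating functions $4|g|^2$ and $4|hg|^2$ respectively, preservation of $V$), all of which are correct. You also supply a uniqueness argument---differentiating $\phi(t)=U(-t)\widetilde U(t)v$ for $v\in V$---which the paper omits entirely, so your write-up is in fact more complete than the paper's on this point.
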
 

\begin{proof} Using the spectral theorem realization where $A$ is the operator of multiplication by $h: X\to \Bbb R$, we may define $U(t)$ 
as the operator of multiplication by $e^{ith}$. 
\end{proof} 

In fact, the converse also holds: every strongly continuous 1-parameter group 
$U(t)$ (i.e., a unitary representation of the Lie group $\Bbb R$ on $\mathcal H$) arises uniquely (up to isometry) from a self-adjoint operator. 

\begin{remark} The spectral theorem implies that if $(A,V)$ is a self-adjoint operator and 
$Av=\lambda v$ for some nonzero $v\in V$ then $\lambda\in \Bbb R$. On the contrary,
if $(A,V)$ is only symmetric and not self-adjoint, von Neumann showed that the set of eigenvalues 
of $A$ on $V$ is either the (open) upper-half plane $\Bbb C_+$ (if $n_+>0,n_-=0$), or 
the lower half-plane $\Bbb C_-$, (if $n_->0,n_+=0$), or contains both (if $n_+,n_->0$). 
\end{remark} 

\subsubsection{Examples} 

\begin{example} Consider the symmetric operator $P:=-i\frac{d}{dx}$ on 
$\mathcal H=L^2(S)$, where $S:=\Bbb R/2\pi \Bbb Z$ (the momentum operator on the circle). This operator is symmetric on the space $V:=C^\infty(S)$, and one can show that it is moreover essentially self-adjoint on this space (check it!). The corresponding space 
$\overline V$ is the {\it Sobolev space}\index{Sobolev space} $H^1(S)$ of functions 
$f\in L^2(S)$ with $f'\in L^2(S)$ in the sense of distributions (note that such functions 
are continuous). The spectrum of the corresponding self-adjoint operator 
is purely point and equals $\Bbb Z$, with 
eigenfunctions $e^{inx}$, i.e., $Pe^{inx}=ne^{inx}$. 
Thus the spectral realization of $A$ is 
on $\ell_2(\Bbb Z)$ with counting measure 
on which $P$ acts by multiplication by the function $n$ 
(i.e., this realization reduces simply to the Fourier 
expansion of functions on $S$). Similarly, the energy operator 
$P^2=-\frac{d^2}{dx^2}$ is essentially self-adjoint 
on the same space but with smaller domain of the closure - the Sobolev space 
$H^2(S)$ of functions $f\in L^2(S)$ such that $f''\in L^2(S)$. 
Its spectrum in $\Bbb Z_{\ge 0}$ with the same eigenfunctions: 
$P^2e^{inx}=n^2e^{inx}$. 
\end{example} 

\begin{example} The next example is more interesting, and prototypical for the theory of self-adjoint extensions. Namely consider the same momentum operator $P:=-i\frac{d}{dx}$, 
but now acting on the dense subspace $V\subset L^2[0,2\pi]$ of smooth functions 
with vanishing derivatives of all orders on both ends of the interval. In this case, 
$P$ is {\bf not} essentially self-adjoint: the space $\overline V$ is the space of functions $f\in H^1[0,2\pi]$ 
with $f(0)=f(2\pi)=0$, while $V^\vee=H^1[0,2\pi]$ with 
$$
B(f,g)=i(\overline{f(2\pi)}g(2\pi)-\overline{f(0)}g(0)). 
$$
So on the quotient $V^\vee/\overline V=\Bbb C^2$ we have 
$$
B((a,b),(a,b))=i(|b|^2-|a|^2), 
$$
where $a=f(0)$, $b=f(2\pi)$. 
Thus a Lagrangian subspace of $V^\vee$ is given by points $b\in \Bbb C$ with $|b|=1$; namely, it is the space $L_b$ of functions $f\in H^1[0,2\pi]$ with 
$f(2\pi)=bf(0)$. The spectrum of the corresponding self-adjoint operator is again purely point, 
so we should look for eigenfunctions in the space $L_b$. Thus we get eigenfunctions 
$e^{i(n+s)x}$ where $b=e^{2\pi is}$. So the set of eigenvalues is $\Bbb Z+s$, and we see that the spectrum depends on the choice of the self-adjoint extension. 

Observe also that any complex number $\lambda$ is the eigenvalue of the symmetric (non-self-adjoint!) operator $P^\dagger$ on $V^\vee$, with eigenvector $e^{i\lambda x}$. 
\end{example} 

\begin{example} Now consider the same momentum operator $P:=-i\frac{d}{dx}$ but acting on the space $V=C_0^\infty(\Bbb R)$ of compactly supported smooth functions, a subspace of $\mathcal H=L^2(\Bbb R)$. In this case $P$ is essentially self-adjoint, with $\overline V=V^\vee$ 
being the subspace of $H^1(\Bbb R)$ of $f\in L^2(\Bbb R)$ such that $f'\in L^2(\Bbb R)$. 
The spectral theorem realization of $P$ is on $L^2(\Bbb R)$ as the operator of multiplication by $x$, which is given by the Fourier transform. Thus the spectrum of this operator is purely continuous and constitutes the whole real line $\Bbb R$. Similarly, the operator $P^2=-\frac{d^2}{dx^2}$ is also essentially self-adjoint on $V$, and its self-adjoint extension 
has purely continuous spectrum $\Bbb R_{\ge 0}$. 
\end{example} 

\begin{example} And yet again, take $P:=-i\frac{d}{dx}$, but now on the subspace $V$ of $\mathcal H=L^2(\Bbb R_{\ge 0})$ of compactly supported smooth functions with vanishing derivatives at $0$. This operator is not essentially self-adjoint: the space $\overline V$ 
is the space of $f\in H^1(\Bbb R_{\ge 0})$ with $f(0)=0$, while $V^\vee$ is the whole 
$H^1(\Bbb R_{\ge 0})$. Thus the space $V^\vee/\overline V$ is 1-dimensional with form $B$ 
given by $B(f,g)=-i\overline{f(0)}g(0)$, and so there are no self-adjoint extensions 
(the deficiency indices are not equal: $n_+=1,n_-=0$). 

Let us find the eigenvalues of $P$ on $V$. The eigenvector 
with eigenvalue $\lambda$ is $e^{i\lambda x}$, and it belongs 
to $V$ iff $\lambda\in \Bbb C_+$. Thus the set of eigenvalues of 
$P$ is $\Bbb C_+$. 
\end{example} 

\begin{example}\label{harosc1} Let $A=-\frac{1}{2}\frac{d^2}{dx^2}+\frac{1}{2}x^2$ with $V=C^\infty_0(\Bbb R)$ (quantum harmonic oscillator). Then $A$ is essentially self-adjoint and $\overline A$ has pure point spectrum $n+\frac{1}{2}$, $n\in \Bbb N$, with eigenvectors $H_n(x)e^{\frac{x^2}{2}}$, where $H_n$ are Hermite polynomials (Theorem \ref{Hermite}). 
\end{example} 

\begin{remark} More generally, it is known that if $U(x)$ is a piecewise continuous potential on $\Bbb R$ 
which tends to $+\infty$ at $\pm \infty$ then the operator $A:=-\frac{1}{2}\frac{d^2}{dx^2}+U(x)$
is essentially self-adjoint on $V=C^\infty_0(\Bbb R)$ and has pure point spectrum, with eigenvalues $E_0< E_1\le E_2\le ...$ (it is shown in Lemma \ref{uniqueeig} below that $E_0$ is always a simple eigenvalue 
and the corresponding eigenvector is a positive function). 
\end{remark} 

\begin{example} Let $M$ be a compact Riemannian manifold with boundary $\partial M$, $\mathcal H=L^2(M)$, and $A=\Delta$ be the Laplace operator on $M$ acting on the space $V$ of smooth functions on $M$ vanishing with all derivatives on the boundary. In this case
$\overline V$ is the space of functions in the Sobolev space $H^2(M)$ (functions $f\in L^2(M)$ such that $\Delta f\in L^2(M)$) which vanish with first normal derivative on $\partial M$, and $V^\vee=H^2(M)$. By Stokes' formula 
$$
\int_M (u\Delta v-v\Delta u)dx=\int_{\partial M}(u\partial_{\bold n} v-v\partial_{\bold n}u)d\sigma,
$$
where $\bold n$ denotes the normal derivative to $\partial M$, so we have 
$$
B(f,g)=i\int_{\partial M}(\overline f\partial_{\bold n} g-g\partial_{\bold n} \overline f)d\sigma.
$$
So if $\partial M=0$, the operator $A$ is essentially self-adjoint and has a unique self-adjoint extension, while if $\partial M\ne 0$, it is not and 
there are many self-adjoint extensions corresponding to various boundary conditions on $\partial M$. The most common ones are the Dirichlet boundary condition $f=0$ 
and Neumann boundary condition $\partial_{\bold n} f=0$. Of course, the spectra associated to these conditions (which are always purely point) are completely different.  

The simplest example with non-trivial boundary is $M=[0,\pi]$, in which case we have 
$\dim V^\vee/\overline V=4$ and 
$$
B(f,g)=i(\overline {f}g'-\overline{f'}g)|_0^{\pi}.
$$
For the Dirichlet boundary conditions $f(0)=f(\pi)=0$ we get eigenbasis 
$\sin nx$ with eigenvalues $-n^2$, $n\in \Bbb Z_{\ge 1}$, while for the Neumann boundary conditions $f'(0)=f'(\pi)=0$ we get eigenbasis $\cos nx$ also with eigenvalues $-n^2$ but now for $n\in \Bbb Z_{\ge 0}$. 

Let us consider the mixed boundary condition: 
$$
f(0)=0,\ f'(\pi)-af(\pi)=0
$$ 
for some real number $a$. Then the eigenfunctions are 
$\sin \lambda x$ where 
$$
\lambda \cos \pi \lambda = a \sin \pi\lambda.
$$
Thus the eigenvalues are $-\lambda^2$ where $\lambda$ runs over solutions 
of the equation 
$$
\lambda\ {\rm cotan} \pi\lambda=a.
$$
For example, in the limit $a\to \infty$ we recover the answer for the Dirichlet boundary condition. 
\end{example} 

\begin{exercise} Let $H=-\frac{1}{2}\frac{d^2}{dx^2}+a\chi_{[-1,1]}(x)$ where $\chi$ is the indicator function and $a\in \Bbb R$, and let it be defined on $V=C^\infty_0(\Bbb R)\subset \mathcal H=L^2(\Bbb R)$. Show that $H$ is essentially self-adjoint and find the spectrum and eigenvalues of its self-adjoint extension (consider separately the cases $a\ge 0$ and $a<0$). 

{\bf Hint.} As explained above, the spectrum consists of $E\in \Bbb R$ for which 
$H-E: \overline V\to \mathcal H$ is not surjective. So try to solve the equation 
$$
(H-E)u=f
$$
for $f\in \mathcal H$ as 
$$
f(x)=\int_{\Bbb R} G(x,y)dy, 
$$
where $G(x,y)$ is the fundamental solution of the equation 
$$
(H-E)f=\delta(x-y).
$$
You should get that there are no eigenfunctions for $a\ge 0$ (purely continuous spectrum), while for $a<0$ the spectrum is mixed: there is continuous spectrum and also some eigenfunctions with negative eigenvalues; they are called {\bf bound states}.  
\end{exercise} 

\subsection{Hamiltonians in quantum mechanics} 

The yoga of quantization says that 
to quantize classical mechanics 
on a manifold $X$, we need to replace 
the classical space of states $T^*X$ by 
the quantum space of states -- the
Hilbert space 
${\mathcal H}=L^2(X)$ on square integrable complex
half-densities on $X$ (or, more precisely, the corresponding
projective space). Further, we need to replace 
classical observables, i.e. (sufficiently nice) 
real functions $f\in C^\infty(T^*X)$, by 
quantum observables $\widehat f$, which are 
(unbounded, densely defined) operators 
on ${\mathcal H}$, not commuting with each
other in general. Then the (expected) value of an observable $A$ in
a state $\psi\in \mathcal H$ of unit norm is, by definition, 
$\la\psi,A\psi\ra$ (provided that it is well defined). 

The operators $\widehat f$ should linearly depend on $f$. Moreover, they should 
depend on a positive real parameter $\hbar$ called the 
{\it Planck constant}\index{Planck constant}, and satisfy the following 
relation:
$$
[\widehat{f},\widehat{g}]=i\hbar\widehat{\lbrace{f,g\rbrace}}+O(\hbar^2),\
\hbar\to 0.
$$
Since the role of Poisson brackets of functions is played in quantum mechanics
by commutators of operators, 
this relation expresses the condition that classical mechanics 
should be the limit of quantum mechanics as $\hbar\to 0$.\footnote{Note that the assignment 
$f\mapsto \widehat f$ cannot possibly satisfy the identity $\widehat{fg}=\widehat f\widehat g$ 
since the product of functions is commutative but the product of operators is not.}  

We must immediately disappoint the reader by confessing that 
there is no canonical choice of the quantization map
$f\mapsto \widehat f$. Nevertheless, there are some standard choices of
$\widehat f$ for particular $f$, which we will now discuss. 

Let us restrict ourselves to the situation 
$X=\Bbb R$, so on the phase space we have coordinates $q$
(position) and $p$ (momentum). In this case we can naturally think of half-densities as functions and 
there are the following standard 
conventions. 

1. $\widehat f=f(q)$ (multiplication operator by $f(q)$) when 
$f$ is independent of $p$. 

2. $\widehat{p^m}\to (-i\hbar \frac{d}{dq})^m$. 

(Note that these conventions satisfy our condition,
since $[\widehat q,\widehat p]=i\hbar$, while $\lbrace{q,p\rbrace}=1$.)

\begin{example} For the classical Hamiltonian $H=\frac{p^2}{2}+U(q)$ 
considered above, the quantization will be the {\it Schr\"odinger operator}\index{Schr\"odinger operator}
$$
\widehat H=-\frac{\hbar^2}{2}\frac{d^2}{dq^2}+U(q).
$$ 
\end{example} 

\begin{remark} The extension of these conventions to other
functions is not unique. However, such an extension will not be
used, so we will not specify it. 
\end{remark} 

Now let us see what the quantum analog of Hamilton's equations
should be. In accordance with the outlined quantization yoga, 
Poisson brackets should be replaced in quantum theory by
commutators (with coefficient $(i\hbar)^{-1}=-i/\hbar$). Thus, 
Hamilton's equations should be replaced by 
the equation
$$
\frac{d}{dt}\langle\psi(t),A\psi(t)\rangle=\langle\psi(t),\tfrac{[A,\widehat
  H]}{i\hbar}\psi(t)\rangle=
-\tfrac{i}{\hbar}
\langle\psi(t),[A,\widehat H]\psi(t)\rangle, 
$$
where $\langle,\rangle$ is the Hermitian form on ${\mathcal H}$ and $\widehat H$ is some quantization of the classical 
Hamiltonian $H$. Since this equation must hold for any $A$, 
it is equivalent to the {\it Schr\"odinger equation}\index{Schr\"odinger equation}
$$
\dot{\psi}=-\frac{i}{\hbar}\widehat H\psi
$$
up to changing $\psi$ by a time-dependent phase factor (check it!).  
Thus, the quantum analog of the Hamilton equations is the
Schr\"odinger equation. 

\begin{remark} This ``derivation'' of the Schr\"odinger equation 
is definitely not a mathematical argument. It is merely a reasoning 
aimed to motivate a definition.
\end{remark} 

To solve the initial value problem for the 
Schr\"odinger equation, we need to make sense of 
the Hamiltonian $\widehat H$ as an unbounded self-adjoint operator 
on $\mathcal H$ in the sense of von Neumann, which in practice boils down to 
giving spatial boundary conditions for $\psi$, in addition to the initial value. 
The general solution of the Schr\"odinger equation then has 
the form 
$$
\psi(t)=e^{-\frac{it\widehat H}{\hbar}}\psi(0),
$$
where $e^{-\frac{it\widehat H}{\hbar}}$ is the 1-parameter group 
of unitary operators attached to the self-adjoint operator 
$\widehat H$, which exists thanks to von Neumann's spectral theorem. 
Therefore, for any quantum observable $A$ it is reasonable to 
define a new observable 
$$
A(t):=e^{\frac{it\widehat H}{\hbar}}A(0)e^{-\frac{it\widehat
  H}{\hbar}}
 $$ 
 (such that to observe $A(t)$ is the same as to 
evolve for time $t$ and then observe $A=A(0)$). The observable $A(t)$
satisfies the equation 
$$
A'(t)=-\frac{i}{\hbar}[A(t),\widehat H]
$$
called the {\it operator Schr\"odinger equation}\index{operator Schr\"odinger equation}, and we have 
$$
\langle\psi(t),A\psi(t)\rangle=\langle\psi(0),A(t)\psi(0)\rangle.
$$ 
The two sides of this equation represent two pictures of quantum
mechanics: Schr\"odinger's
(states change in time, observables don't) and Heisenberg's 
(observables change in time, states don't). The equation expresses the equivalence of the two pictures.

\subsection{Feynman-Kac formula}

Let us consider a 1-dimensional particle with 
potential $U(q)$. 
Let us assume that $U\ge 0$ and $U(q)\to \infty$ as $|q|\to
\infty$. In this case, the operator 
$\widehat H=-\frac{\hbar^2}{2}\frac{d^2}{dq^2}+U(q)$
is essentially self-adjoint on Schwartz functions, 
positive definite, and its spectrum is purely point. 

\begin{lemma}\label{uniqueeig} There is a unique eigenvector 
$\Omega$ of $\widehat H$ with smallest eigenvalue given by a positive
function with norm $1$. 
\end{lemma}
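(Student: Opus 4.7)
The plan is to establish existence, positivity, and uniqueness of the ground state in turn, via the variational principle combined with an ODE-uniqueness (maximum principle) argument. Since $\widehat H\ge 0$ has purely point spectrum, the smallest eigenvalue
$$
E_0=\inf_{\|\psi\|=1}\int_{\Bbb R}\Big(\tfrac{\hbar^2}{2}|\psi'|^2+U(q)|\psi|^2\Big)dq
$$
is finite and attained by some normalized eigenvector. Taking real or imaginary parts, each of which is again an eigenvector because $\widehat H$ has real coefficients, I may assume the minimizer $\Omega$ is real-valued.

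Next I will show that $\Omega$ may be chosen strictly positive on $\Bbb R$. The key observation is that for any real $\psi\in H^1(\Bbb R)$ one has $|\psi|\in H^1(\Bbb R)$ with $(|\psi|)'=\mathrm{sgn}(\psi)\,\psi'$ almost everywhere, so $\||\psi|'\|_{L^2}=\|\psi'\|_{L^2}$; therefore $|\Omega|$ has the same norm and the same Rayleigh quotient as $\Omega$ and is also a minimizer. As a minimizer of a quadratic form subject to a quadratic constraint, $|\Omega|$ is a weak solution of the Euler--Lagrange equation $-\tfrac{\hbar^2}{2}u''+(U-E_0)u=0$, and hence is classically $C^2$ by elliptic regularity for this linear ODE with continuous coefficients. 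If $|\Omega|(q_0)=0$ at some $q_0\in\Bbb R$, then $q_0$ is a global minimum of the nonnegative function $|\Omega|$, so also $|\Omega|'(q_0)=0$; uniqueness of the Cauchy problem for the linear second-order ODE then forces $|\Omega|\equiv 0$, a contradiction. Hence $|\Omega|>0$ on $\Bbb R$, and since $\Omega$ is continuous and never vanishes it has constant sign, so I may replace $\Omega$ by $|\Omega|$.

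Uniqueness then follows quickly: if $\Omega'$ were a normalized ground state orthogonal to $\Omega$, the same argument applied to the real part of $\Omega'$ (which cannot vanish identically by orthogonality) would, after a possible sign change, produce a strictly positive ground state; but then $\langle\Omega,\Omega'\rangle=\int_{\Bbb R}\Omega\,\Omega'\,dq>0$, contradicting $\Omega\perp\Omega'$. Thus the $E_0$-eigenspace is one-dimensional and the positive unit generator is the desired $\Omega$. The main technical subtlety is the middle step: justifying that $|\Omega|$ actually satisfies the eigenvalue equation despite the non-smoothness of $t\mapsto|t|$ at the origin. The cleanest route, which I would take, is to extract this from the variational characterization---any $H^1$ minimizer of the Rayleigh quotient is automatically a weak eigenfunction---after which elliptic regularity and uniqueness of solutions to linear Cauchy problems close the argument.
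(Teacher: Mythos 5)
Your proof is correct and arrives at the right conclusion, and it shares the paper's overall skeleton---existence of a minimizer of the Rayleigh quotient, replacement of $\Omega$ by $|\Omega|$ (which costs nothing because $\||\psi|'\|_{L^2}=\|\psi'\|_{L^2}$ for real $H^1$ functions), an argument that $|\Omega|$ never vanishes, and uniqueness from the fact that two nonnegative nonzero functions cannot be orthogonal. The crucial ``no zeros'' step, however, is handled differently. The paper avoids ever claiming that $|\Omega|$ solves the eigenvalue equation: it observes that a zero $\Omega(a)=0$ of the actual eigenfunction must be simple ($\Omega'(a)\ne 0$ by Cauchy uniqueness applied to $\Omega$), so $|\Omega|$ has a genuine corner at $a$; explicitly rounding off this corner on a small interval and renormalizing lowers the energy $E$, contradicting minimality. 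You instead pass the variational principle through $|\Omega|$ itself: since $|\Omega|$ is a constrained minimizer it is a weak eigenfunction, hence $C^2$ by one-dimensional elliptic regularity, and then a zero $q_0$ of the nonnegative $C^1$ function $|\Omega|$ forces $|\Omega|'(q_0)=0$, so Cauchy uniqueness applied to $|\Omega|$ (rather than to $\Omega$) gives $|\Omega|\equiv 0$. Your route is the standard Perron--Frobenius-style argument for Schr\"odinger operators and is arguably cleaner, at the price of invoking the Euler--Lagrange characterization and elliptic regularity for $|\Omega|$; the paper's kink-smoothing argument is more elementary and hands-on, using only that $|\Omega|$ has the same energy as $\Omega$, though it leaves the explicit smoothing estimate to the reader. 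Both are sound.
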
 

\begin{proof} An eigenvector $\Omega$ of $\widehat H$ with 
smallest eigenvalue $\lambda$ minimizes the ``energy" functional 
$$
E(\phi):=\langle\phi,\widehat H\phi\rangle=\int_{\Bbb R} (\tfrac{\hbar^2}{2}\phi'(q)^2+U(q)\phi(q)^2)dq
$$
on the space of real $C^1$-functions $\phi: \Bbb R\to \Bbb R$ 
with $\int_{\Bbb R}\phi(t)^2dt=1$. Suppose that $\Omega(a)=0$, then the equation $\widehat H\Omega=\lambda\Omega$ implies 
$\Omega'(a)\ne 0$. But $E(\Omega)=E(|\Omega|)$, so, since $\Omega'(a)\ne 0$, 
 this value can be reduced by smoothing out $\Omega$ in a small neighborhood of $a$ and then normalizing it to have unit norm, a contradiction. 
This also implies that $\lambda$ is a simple eigenvalue, hence $\Omega$ is unique. 
\end{proof} 

\begin{remark} The vector $\Omega$ is called the {\it ground state}\index{ground state}, or {\it vacuum
state}\index{vacuum state}, since it has lowest energy, and physicists often shift 
the Hamiltonian by a constant so that the energy of this state
is zero (i.e. ``there is no matter"). 
\end{remark}

The correlation functions in 
the Hamiltonian setting are defined by the formula 
$$
\G_n^{\rm Ham}(t_1,...,t_n):=\langle\Omega,q(t_1)...q(t_n)\Omega\rangle
$$
where $q(t)$ is the operator quantizing the observable ``coordinate of the particle at the time $t$".

\begin{remark} Physicists usually write the inner product
$\langle v,Aw\rangle$ as $\langle v|A|w\rangle$. In particular, 
$\Omega$ is written as $\langle 0|$ or $|0 \rangle$ (the so-called Dirac {\it bra-ket notation}\index{bra-ket notation}).  
\end{remark}

\begin{theorem}\label{FK} (Feynman-Kac formula) 
If $t_1\ge ...\ge t_n$ then the function 
$\G_n^{\rm Ham}$ admits an asymptotic expansion in $\hbar$
(near $\hbar=0$), which
coincides with the path integral correlation function 
$\G_n^M$ constructed above. Equivalently, 
the Wick rotated function $\G_n^{\rm Ham}(-it_1,...,-it_n)$
equals $\G_n^E(t_1,...,t_n)$.  
\end{theorem}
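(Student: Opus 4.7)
The plan is to verify the Euclidean identity $\G_n^{\rm Ham}(-it_1,\ldots,-it_n)=\G_n^E(t_1,\ldots,t_n)$ for $t_1\ge\cdots\ge t_n$ as formal power series in $\hbar$ and in the couplings $g_k$; the Minkowski statement then follows from the analytic continuation built into the definition of $\G_n^M$. Wick-rotating the Heisenberg operator to $\widehat q(\tau):=e^{\tau\widehat H/\hbar}\widehat q\,e^{-\tau\widehat H/\hbar}$ and using $\widehat H\Omega=\lambda_0\Omega$, one rewrites
$$
\G_n^{\rm Ham,E}(t_1,\ldots,t_n)=\langle\Omega,\widehat q\,e^{-(t_1-t_2)(\widehat H-\lambda_0)/\hbar}\widehat q\,\cdots\,e^{-(t_{n-1}-t_n)(\widehat H-\lambda_0)/\hbar}\widehat q\,\Omega\rangle,
$$
which is well defined for $\hbar>0$ because the hypothesis $U\to\infty$ makes $\widehat H$ have compact resolvent and $e^{-s\widehat H/\hbar}$ trace class for every $s>0$.

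I would first handle the free (harmonic oscillator) case $\widehat H_0=-\tfrac{\hbar^2}{2}\partial_q^2+\tfrac{m^2q^2}{2}$ by the standard creation/annihilation calculus: setting $\widehat q=\sqrt{\hbar/(2m)}(a+a^\dagger)$ with $[a,a^\dagger]=1$ and $\widehat H_0=\hbar m(a^\dagger a+\tfrac12)$ gives $\widehat q(\tau)=\sqrt{\hbar/(2m)}(e^{-m\tau}a+e^{m\tau}a^\dagger)$, and $a\Omega_0=0$ yields
$$
\langle\Omega_0,\widehat q(\tau_1)\widehat q(\tau_2)\Omega_0\rangle=\tfrac{\hbar}{2m}e^{-m(\tau_1-\tau_2)}=\hbar G(\tau_1-\tau_2),\qquad \tau_1\ge\tau_2,
$$
recovering the Euclidean propagator; pushing all $a$'s to the right in the $n$-fold expectation via $[a,a^\dagger]=1$ expresses it as a sum over matchings of two-point functions, which is exactly the path-integral Wick theorem (Example \ref{harosc}). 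For a general potential, decompose $\widehat H=\widehat H_0+\widehat V$ with $\widehat V=U(\widehat q)-\tfrac{m^2\widehat q^2}{2}=-\sum_{k\ge 3}\tfrac{g_k}{k!}\widehat q^k$ and expand each factor $e^{-s\widehat H/\hbar}$ in the Dyson (interaction-picture) series
$$
e^{-s\widehat H/\hbar}=e^{-s\widehat H_0/\hbar}\sum_{N\ge 0}\frac{(-1)^N}{\hbar^N N!}\int_{[0,s]^N}\mathcal{T}\bigl[\widehat V_I(\sigma_1)\cdots\widehat V_I(\sigma_N)\bigr]\,d\sigma,
$$
with $\widehat V_I(\sigma):=e^{\sigma\widehat H_0/\hbar}\widehat V e^{-\sigma\widehat H_0/\hbar}$, understood as a formal power series in the $g_k$. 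Applying the free Wick theorem to each resulting harmonic-oscillator expectation converts every term into a finite sum of integrals over ``internal'' times of products $\hbar G(\tau_i-\tau_j)$, which are precisely the position-space Feynman amplitudes $F_\Gamma(\delta_{t_1},\ldots,\delta_{t_n})$ of graphs with external legs at $t_1,\ldots,t_n$ and internal vertices of valency $\ge 3$.

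The remaining step is combinatorial bookkeeping: checking that the symmetry factors $|{\rm Aut}(\Gamma)|^{-1}$, coupling products $\prod_j g_{v(j)}$, and powers $\hbar^{b(\Gamma)}$ produced by the Dyson expansion reproduce \eqref{defcor} graph by graph. The symmetry-factor count is identical to the one in the proof of Feynman's theorem (Theorem \ref{Feyn}), via the action of $\Bbb G_{\bold n}^{\rm cyc}$ on labeled half-edges, and the $\hbar$-powers come out correctly because each interaction vertex contributes an $\hbar^{-1}$ and each propagator an $\hbar$. The main obstacle is the treatment of \emph{vacuum bubbles}: the Dyson expansion naively generates Feynman graphs with connected components disjoint from all external vertices, and these must cancel against the normalization $\langle\Omega,\Omega\rangle=1$ together with the perturbative expansion of the true ground state $\Omega$ in terms of $\Omega_0$ and the ground-state-energy shift $\lambda_0-\lambda_0^{(0)}$ (the latter of which is precisely absorbed by the $e^{-(t_i-t_{i+1})(\widehat H-\lambda_0)/\hbar}$ factors). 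I would prove by induction on the order in $g_k$ that these two effects together supply exactly the reciprocal of the generating function of vacuum diagrams, leaving the sum over $G_{\ge 3}^*(n)$ demanded by Remark \ref{G3*} --- the operator-theoretic incarnation of the linked-cluster (connected-diagram) theorem.
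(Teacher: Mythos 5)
Your proposal is conceptually sound and follows the same Dyson-expansion strategy as the paper, but it organizes the argument differently in a way that creates a genuine gap. The paper proves the Feynman--Kac formula first on a circle of circumference $L$ (Theorem \ref{FK1}), using ${\rm Tr}(e^{-L\widehat H/\hbar})$ rather than the vacuum expectation, and then derives Theorem \ref{FK} by sending $L\to\infty$. The reason this order matters is precisely the vacuum-bubble issue you flag at the end: on the line, the amplitude of any Feynman graph with a connected component free of external vertices is a \emph{divergent} integral (for instance the figure-eight bubble has amplitude proportional to $\int_{\Bbb R}G(0)^2\,ds$, which diverges because the integrand is a nonzero constant). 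Consequently, on the line there is no ``generating function of vacuum diagrams'' whose reciprocal could cancel anything; the cancellation you describe cannot be performed amplitude-by-amplitude, and your proposed induction on the order in $g_k$ would at intermediate steps manipulate ill-defined divergent quantities. On the circle this problem does not arise: all internal-time integrals run over a compact $\Bbb R/L\Bbb Z$ and vacuum diagrams are finite, so $Z_L^{\rm Ham}$ and $Z_L$ are separately well-defined and one may quote the already-established connected-diagram (linked-cluster) theorem of Chapter~3 without apology.

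There is a second, related subtlety your sketch passes over. When you Dyson-expand $e^{-(t_i-t_{i+1})(\widehat H-\lambda_0)/\hbar}$ inside the vacuum expectation, the interaction-vertex times land in $[t_n,t_1]$ only, whereas the position-space Feynman rules \eqref{defcor} integrate internal vertices over all of $\Bbb R$. The missing contributions from internal times outside $[t_n,t_1]$ come from the perturbative expansion of the true ground state $\Omega$ in terms of $\Omega_0$ (the Gell-Mann--Low limit $\Omega\propto\lim_{T\to\infty}e^{-T\widehat H/\hbar}\Omega_0$), which is exactly where the vacuum bubbles and the vacuum-energy shift $\lambda_0-\lambda_0^{(0)}$ enter. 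Making this limit rigorous order by order in $g_k$ effectively reintroduces a cutoff $T$ and a passage $T\to\infty$ --- which is what the paper's $L\to\infty$ does, just cleanly packaged as a trace on a circle. Your free-field calculation via creation/annihilation operators is correct and parallels Subsection~\ref{harosc3}, and the symmetry-factor and $\hbar$-power bookkeeping via $\Bbb G_{\bold n}^{\rm cyc}$ is fine; what you need to repair is the infrared (long-time) treatment: either prove FK on the circle and take $L\to\infty$ as the paper does, or introduce an explicit time cutoff, exhibit the vacuum-bubble cancellation at finite cutoff, and only then remove the cutoff.
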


This theorem plays a central role in quantum mechanics, 
and we will prove it below. Before we do so, let us formulate an
analog of this theorem for ``quantum mechanics on the circle''. 

Let $\G_{n,L}(t_1,...,t_n)$ denote the correlation 
function on the circle of length $L$ 
(for $0\le t_n\le...\le t_1\le L$), and let $Z_{L}$ 
be the partition function on the circle of length $L$, 
defined from (Euclidean) path integrals. 
Also, let 
$$
Z_{L}^{\rm Ham}=\Tr(e^{-\frac{L\widehat H}{\hbar}}),
$$  
and 
$$
\G_{n,L}^{\rm Ham}(-it_1,...,-it_n)=\frac{\Tr(q(-it_n)...q(-it_1)e^{-\frac{L\widehat
    H}{\hbar}})}{\Tr(e^{-\frac{L\widehat H}{\hbar}})}.
$$

\begin{theorem}\label{FK1} (Feynman-Kac formula on the circle) 
The functions $Z_{L}^{\rm Ham}$,  
$\G_{n,L}^{\rm Ham}$ admit asymptotic expansions in $\hbar$, which
coincide with the functions $Z_{L}$ and $\G_{n,L}$ computed   
from path integrals.
\end{theorem}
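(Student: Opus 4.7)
The plan is to derive Theorem \ref{FK1} by reinterpreting the Hamiltonian trace on the circle as an integral over loops in Euclidean time, reducing the statement to the line version (Theorem \ref{FK}) applied on time slices and then glued by cyclicity of the trace.

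First, using the Heisenberg relation $q(-it)=e^{t\widehat H/\hbar}qe^{-t\widehat H/\hbar}$, cyclicity of the trace, and the trace-class property of $e^{-L\widehat H/\hbar}$ (which holds because $U(q)\to\infty$ at infinity, so $\widehat H$ has compact resolvent), one rewrites
\begin{equation*}
Z_L^{\rm Ham}=\int_{\Bbb R}K_L(q,q)\,dq,
\end{equation*}
\begin{equation*}
\G_{n,L}^{\rm Ham}(-it_1,\ldots,-it_n)\,Z_L^{\rm Ham}=\int_{\Bbb R^n}q_1\cdots q_n\prod_{j=1}^{n}K_{s_j}(q_j,q_{j+1})\,dq_1\cdots dq_n,
\end{equation*}
where $K_\tau(q,q'):=\langle q'|e^{-\tau\widehat H/\hbar}|q\rangle$ is the Euclidean heat kernel, $s_j:=t_j-t_{j+1}$ with the convention $t_{n+1}:=t_1-L$, and $q_{n+1}:=q_1$. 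Feynman--Kac on the interval (Theorem \ref{FK}, more precisely its refinement to heat kernels with fixed endpoints, which is the natural step in the Trotter-formula derivation of Theorem \ref{FK}) expresses each $K_{s_j}(q_j,q_{j+1})$ as an asymptotic path integral over paths of length $s_j$ joining $q_j$ to $q_{j+1}$, weighted by $e^{-S_E/\hbar}$. The endpoint identifications $q_{n+1}=q_1$ glue these intervals into the circle $\Bbb R/L\Bbb Z$, while the integrations over $q_1,\ldots,q_n$ remove the intermediate constraints; the outcome is an asymptotic expansion of $\G_{n,L}^{\rm Ham}\cdot Z_L^{\rm Ham}$ as a path integral over loops $q:\Bbb R/L\Bbb Z\to\Bbb R$ with insertions $q(t_1)\cdots q(t_n)$.

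Second, I would identify the resulting asymptotic expansion with the Feynman-diagram sum defining $\G_{n,L}$ and $Z_L$ in the preceding subsections on quantum mechanics on the circle. The quadratic part of $S_E$ on loops is $\tfrac{1}{2}(Aq,q)$ with $A=-d^2/dt^2+m^2$ acting on $C^\infty(\Bbb R/L\Bbb Z)$, whose Green's function is exactly $G_L$ of \eqref{glformula}, obtained by summing translates of the line Green's function under $t\mapsto t+kL$. Thus the Gaussian expansion around the critical point $q=0$ produces Feynman rules in which every edge carries $G_L$ and every integration is over the circle. It is cleanest to verify the identification first for the purely harmonic theory $U=m^2q^2/2$, where the spectrum $\hbar m(n+\tfrac{1}{2})$ of $\widehat H$ (rescale $q\mapsto\sqrt{\hbar/m}\,y$ and apply Example \ref{harosc1}) gives $Z_L^{\rm Ham}=1/(2\sinh(mL/2))$, agreeing with the path-integral computation carried out in the excerpt, and where Mehler's formula for the oscillator heat kernel reproduces Wick's theorem with propagator $G_L$. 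For the interacting case, one expands $e^{-L(\widehat H_0+\widehat V)/\hbar}$ as a Dyson series in $\widehat V=U-\tfrac{1}{2}m^2q^2$, takes the trace, and applies Wick's theorem on the free harmonic side order by order; each term is then the amplitude of a Feynman graph in $G_{\ge 2}(n)$ with propagator $G_L$ and vertices from the Taylor coefficients of $U$, matching the definitions of $Z_L$ and $\G_{n,L}$ including the vacuum components that contribute to $\log Z_L$.

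The main obstacle will be promoting this term-by-term formal matching to an honest asymptotic statement in $\hbar$. On the Hamiltonian side this requires uniform-in-$\hbar$ estimates on the Dyson series and on the trace norm of $e^{-L\widehat H/\hbar}$; on the path-integral side one must control the Trotter time-slicing approximation order by order. The purely combinatorial identification of coefficients, once Mehler's formula and the image-sum identity $G_L(t)=\sum_k G(t-kL)$ are in hand, is routine; the substance lies in justifying the interchange of the $\hbar\to 0$ asymptotics with the trace and with the Dyson sum, i.e.\ in showing that the remainders after $N$ terms of each of these expansions are $O(\hbar^{N+1})$ uniformly. This requires combining spectral theory for $\widehat H$ with the stationary-phase/steepest descent machinery of Section 2, and is the one genuinely analytic (as opposed to combinatorial) ingredient of the argument.
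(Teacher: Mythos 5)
Your second paragraph captures the substance of the paper's argument (Dyson series expansion of $e^{-L\widehat H/\hbar}$, reduction to the harmonic oscillator via Wick's theorem, combinatorial identification with Feynman graphs), but there are two genuine problems with the overall plan.

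First, there is a circularity issue with the logical dependence. You propose to \emph{use} Theorem \ref{FK} — or rather a fixed-endpoint heat-kernel refinement of it — as an input, gluing these interval kernels around the circle. But in the paper's logical structure, Theorem \ref{FK} (on the line) is a \emph{corollary} of Theorem \ref{FK1} (on the circle), obtained by sending $L\to\infty$; the paper states this explicitly and then proves Theorem \ref{FK1} directly. Moreover, the fixed-endpoint heat-kernel Feynman--Kac you invoke is never stated or proved in the text, so invoking it amounts to smuggling in an unproved lemma of comparable depth. The paper avoids this by working directly with the operator trace and proving the free case $\langle\psi(t_1)\cdots\psi(t_n)\rangle$ on the circle first (Subsection on the harmonic oscillator), then running the Dyson expansion.

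Second, and more importantly, you misidentify the hard part. You write that "the substance lies in justifying the interchange of the $\hbar\to 0$ asymptotics with the trace and with the Dyson sum," calling this "the one genuinely analytic ingredient of the argument." In fact no such analytic step is needed. The paper sets $\hbar=1$ and treats the Taylor coefficients $g_j$ of the potential as \emph{formal parameters}, remarking that this causes no loss of generality by rescaling. With $g_j$ formal, the Dyson expansion \eqref{serexp} is an identity of formal power series in the $g_j$'s, the trace of each term is computed exactly via the already-proved free-theory Wick formula with propagator $G_L$ from \eqref{glformula}, and the resulting sum is identified term-by-term with the Feynman diagram expansion defining $Z_L$ and $\G_{n,L}$. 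The whole proof is algebraic/combinatorial; the only analytic inputs are the explicit formula for $G_L$ and the finiteness of the circle integrals, both of which are elementary. There are no uniform-in-$\hbar$ estimates, no Trotter approximation, and no interchange of limits to justify. Your proposal would turn a clean formal argument into a much harder (and unnecessary) analytic one.
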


Note that Theorem \ref{FK} is obtained from Theorem \ref{FK1}
by sending $L$ to infinity. Thus, it is sufficient to prove
Theorem \ref{FK1}. 

\begin{remark} As we mentioned before, the function 
$\G_n^E$ can be defined by means of the Wiener integral, 
and the equality 
$$
\G_n^{\rm Ham}(-it_1,...,-it_n)=
\G_n^E(t_1,...,t_n)
$$ 
actually holds for numerical 
values of $\hbar$, and not just in the sense 
of power series expansions. 
The same applies to the equalities $Z_L^{\rm Ham}=Z_L$, 
$\G_{n,L}^{\rm Ham}=\G_{n,L}$. However, these
results are technically more complicated (as they require 
non-trivial analytic input) and thus are beyond the scope 
of these notes. 
\end{remark}

\begin{example}\label{benchoi} Consider the case of the quadratic potential.
By renormalizing variables, we can assume that $\hbar=m=1$, 
so $U=\frac{q^2}{2}$. In this case we know that 
$Z_L=\frac{1}{2\sinh({L\over 2})}$. On the other hand, 
$\widehat H$ is the Hamiltonian of the 
quantum harmonic oscillator:
$$
\widehat H=-\frac{1}{2}\frac{d^2}{dq^2}+\frac{q^2}{2}.
$$
The eigenvectors of this operator are 
$H_n(x)e^{-\frac{x^2}{2}}$, where $H_n$ are 
the Hermite polynomials ($k\ge 0$), and the eigenvalues are
$n+\frac{1}{2}$ (see Theorem \ref{Hermite}). Hence, 
$$
Z_L^{\rm Ham}=e^{-{L\over 2}}+e^{-{3L\over 2}}+...=\frac{1}{e^{L\over 2}-e^{-{L\over 2}}}=Z_L,
$$
as expected from the Feynman-Kac formula. 
(This shows the benefit of the choice $C={1\over 2}$ in the
normalization of $Z_L$). 
\end{example} 

\subsection{Proof of the Feynman-Kac formula in the free case (harmonic oscillator)}\label{harosc3}

Consider again the quadratic Hamiltonian
$\widehat H=-\frac{1}{2}\frac{d^2}{dq^2}+\frac{q^2}{2}$ of the quantum Harmonic oscillator.
Note that it can be written in the form
$$
\widehat H=a^\dagger a+\tfrac{1}{2},
$$
where $a=\frac{1}{\sqrt{2}}(\frac{d}{dq}+q)$,
$a^\dagger=\frac{1}{\sqrt{2}}
(-\frac{d}{dq}+q)$.  
The operators $a,a^\dagger$ define a representation 
of the Heisenberg Lie algebra on (a dense subspace of) the Hilbert space $\mathcal H$: 
$$
[a,a^\dagger]=1.
$$
Thus the eigenvectors of $\widehat H$ are $(a^\dagger)^n\Omega$
where $\Omega=e^{-\frac{q^2}{2}}$ is the lowest eigenvector
and the corresponding eigenvalues are $n+\frac{1}{2}$, $n\in \Bbb Z_{\ge 0}$
(as we already saw before in Theorem \ref{Hermite}). 

\begin{remark} The operators $a$ and $a^\dagger$ 
are called the {\it annihilation and creation operators}\index{annihilation operator}\index{creation operator}, since 
$a\Omega=0$, while all eigenvectors of $\widehat H$ can be
``created'' from $\Omega$ by action of powers of $a^\dagger$. 
\end{remark} 

Now, we have 
$$
q(0)=q=\frac{1}{\sqrt{2}}(a+a^\dagger).
$$
Since $[a^\dagger a,a]=-a$, $[a^\dagger a,a^\dagger]=a^\dagger$, 
we have 
$$
q(t) =\frac{1}{\sqrt{2}}e^{ita^\dagger a}(a+a^\dagger)
e^{-ita^\dagger a}=
\frac{1}{\sqrt{2}}(e^{-it}a+e^{it}a^\dagger)
$$
This shows that 
$$
\G_{n,L}^{\rm Ham}(-it_1,...,-it_n)=2^{-\frac{n}{2}}\frac{\Tr(\prod_{j=1}^n
(e^{t_j}a^\dagger+e^{-t_j}a)e^{-L(a^\dagger a+\frac{1}{2})})}
{\Tr(e^{-L(a^\dagger a+\frac{1}{2})})}.
$$
Now we can easily prove Theorem \ref{FK1}. 
Indeed, let us move the terms $e^{t_1}a^\dagger$ 
and $e^{-t_1}a$ around the trace (using the cyclic property of
the trace). This will yield, after a short calculation, using \eqref{glformula} :
$$
\G_{n,L}^{\rm Ham}(-it_1,...,-it_n)=
$$
$$
\sum_{j=2}^n \tfrac{1}{2}\G_{n-2,L}^{\rm Ham}(-it_2,...,-it_{j-1},-it_{j+1},...,-it_n)
\left(\frac{e^{t_1-t_j}}{e^L-1}-\frac{e^{t_j-t_1}}{e^{-L}-1}\right)=
$$
$$
\sum_{j=2}^n \G_{n-2,L}^{\rm Ham}(-it_2,...,-it_{j-1},-it_{j+1},...,-it_n)G_L(t_1-t_j).
$$
This implies the theorem by induction in $n$. 

\begin{remark} 1. In the quadratic case there is no formal expansions and
the Feynman-Kac formula holds as an equality between usual
functions. 

2. Note that the equality $\frac{e^{t-s}}{e^L-1}-\frac{e^{s-t}}{e^{-L}-1}=G_L(t-s)$ used above 
holds only if $t\ge s$. In fact, the matrix coefficient $\langle\Omega, q(t_1)...q(t_n)\Omega\rangle$ 
is not symmetric in $t_j$, as the operators $q(t_j)$ do not commute. Thus the Feynman-Kac formula 
only holds if $t_1\ge...\ge t_n$. For this reason the correlation function $\mathcal G_{n}^M$ 
is called time-ordered - it corresponds to the matrix coefficient where the operators 
$q(t_j)$ are ordered chronologically. 
\end{remark} 

\subsection{Proof of the Feynman-Kac formula (general case)}

Now we consider an arbitrary potential 
$U(q):=\frac{m^2q^2}{2}-V(q)$, where 
$$
V(q)=\sum_{k\ge 3}\frac{g_kq^k}{k!}. 
$$
For simplicity we will assume that  the coefficients $g_j$ are formal parameters
and $\hbar=1$ (the latter condition does not cause a loss of generality, as 
this situation can be achieved by rescaling).
Let us first consider the case of partition function. 
We have 
$$
Z_L^{\rm Ham}=\Tr(e^{-L\widehat H})=\Tr(e^{-L(\widehat H_0-V)}),
$$ 
where $\widehat H_0=-\frac{1}{2}\frac{d^2}{dq^2}+\frac{1}{2}m^2q^2$
is the free (=quadratic) part of the Hamiltonian. Since $g_j$ are
formal parameters, we have a series expansion
\scriptsize
\begin{equation}\label{serexp}
e^{-L(\widehat H_0-V)}=e^{-L\widehat H_0}+\\
\sum_{N\ge 1}\int_{L\ge s_1\ge...\ge s_N\ge 0}
e^{-(L-s_1)\widehat H_0}Ve^{-(s_1-s_2)\widehat H_0}V...e^{-(s_{n-1}-s_n)\widehat
  H_0}Ve^{-s_n\widehat H_0}d\bold s
\end{equation}
\normalsize This follows from the general fact that in the (completed) free
algebra with generators $A,B$, one has 
\begin{equation}\label{eAB}
e^{A+B}=e^A+\sum_{N\ge 1}\int_{1\ge s_1\ge...\ge s_N\ge 0}
e^{(1-s_1)A}Be^{(s_1-s_2)A}B...e^{(s_{N-1}-s_N)A}Be^{s_NA}d\bold s
\end{equation}
(check this identity!). 

Equation \eqref{serexp} shows that 
$$ 
Z_L^{\rm Ham}=
$$
$$
\sum_{N\ge 0}\sum_{j_1,...,j_N=3}^\infty
\frac{g_{j_1}...g_{j_N}}{j_1!...j_N!}\int_{1\ge s_1\ge...\ge s_N\ge 0}
\Tr(q_0(-is_1)^{j_1}...q_0(-is_N)^{j_N}e^{-L\widehat H_0})d\bold s,
$$
where $q_0(t)$ is the operator $q(t)$ in the free theory
associated to the potential $\frac{m^2q^2}{2}$. 

Since the Feynman-Kac formula for the free theory has already been
proved, we know that the trace on the right hand side can be evaluated 
as a sum over matchings. To see what exactly is obtained, 
let us collect the terms corresponding to all 
permutations of $j_1,...,j_N$ together.
This means that the summation variables will be the numbers
$i_3,i_4,..$ of occurences of $3,4,..$ among $j_1,...,j_N$.  
Further, to every factor $q_0(-is)^j$
will be assigned a $j$-valent vertex, with a variable 
$s$ attached to it, and it is easy to see that 
$Z_L^{\rm Ham}$ equals the sum 
over all ways of connecting the vertices (i.e. Feynman diagrams $\Gamma$)
of integrals 
$$
\int_{0\le s_1,...,s_N\le L}\prod_{v-w}
G_L(s_v-s_w)d\bold s, 
$$
multiplied by the coefficients $\frac{\prod_k g_k^{i_k}}{|{\rm Aut}\Gamma|}$. 
Thus, $Z_L^{\rm Ham}=Z_L$, as desired. 

Now let us consider correlation functions.
Thus we have to compute 
$$
\Tr(e^{-(L-t_1)\widehat H}qe^{-(t_1-t_2)\widehat H}q...qe^{-t_n\widehat
    H}).
$$
Explanding each exponential inside the trace as above, 
we will clearly get the same Feynman diagram sum, 
except that the Feynman diagrams will contain $n$ external vertices 
marked by variables $t_1,...,t_n$. 
This implies that $\G_{n,L}^{\rm Ham}=\G_{n,L}$, and we are
done. 

\subsection{The massless case}
Consider now the massless case, $m=0$, in the Hamiltonian setting. 
For maps $q:\Bbb R\to \Bbb R$, we have 
${\mathcal H}=L^2(\Bbb R)$, and 
$\widehat H=-\frac{\hbar^2}{2}\frac{d^2}{dq^2}$. 
This operator has continuous spectrum, and 
there is no lowest eigenvector $\Omega$
(more precisely, there is a lowest eigenvector $\Omega=1$, 
but it is not in $L^2$), which means that we cannot define 
the correlation functions in the usual way, 
i.e. as $\langle\Omega,q(t_1)...q(t_n)\Omega\rangle$. 
(This is the reflection, in the Hamiltonian setting, 
of the difficulties related to the growth of the Green's
function at infinity, i.e., infrared divergences, 
which we encountered in the Lagrangian setting).

Consider now the case $q: \Bbb R\to S^1=\Bbb R/2\pi r\Bbb Z$. 
In this case, we have the same Hamiltonian but acting in the
space ${\mathcal H}:=L^2(S^1)$. The eigenvectors of this operator
are $e^{\frac{iNq}{r}}$, with eigenvalues $\hbar^2 \frac{N^2}{2r^2}$. 
In particular, the lowest eigenvector is 
$\Omega=1$. Thus the Hamiltonian correlation
functions (in the Euclidean setting, for $t_1\ge...\ge t_n$) are
$$
\la\Omega,e^{\frac{t_1\widehat H}{\hbar}}e^{\frac{ip_1q}{r}}e^{\frac{(t_2-t_1)\widehat H}{\hbar}}...
e^{\frac{ip_nq}{r}}e^{-\frac{t_n\widehat H}{\hbar}}\Omega\ra=
$$
$$
e^{\frac{\hbar}{2r^2}\sum_j (t_{j}-t_{j+1})(p_1+...+p_{j})^2},
$$
which is equal to 
the correlation function in the Lagrangian setting. 
Thus the Feynman-Kac formula holds. 

Now we pass to the case of circle-valued quantum mechanics on the circle.
In this case, we have 
$$
\Tr(e^{-\frac{L\widehat H}{\hbar}})=\sum_{N\in \Bbb Z} e^{-\frac{N^2L\hbar}{2r^2}}
$$ 
and
$$
\Tr(e^{\frac{t_1\widehat H}{\hbar}}e^{\frac{ip_1q}{r}}e^{\frac{(t_2-t_1)\widehat H}{\hbar}}...
e^{\frac{ip_nq}{r}}e^{\frac{(L-t_n)\widehat H}{\hbar}})=
\sum_{N\in \Bbb Z}e^{\frac{\hbar}{2r^2}\sum_{j=0}^{n} (t_j-t_{j+1})(N-p_1-...-p_{j})^2},
$$
where $t_{n+1}:=L$, $t_0:=0$. Simplifying this expression, we obtain
$$
e^{\frac{\hbar}{2r^2}\sum_j (t_j-t_{j+1})(p_1+...+p_j)^2}
\sum_{N\in \Bbb Z} e^{-\frac{\hbar}{2r^2} (LN^2+2N\sum_j p_jt_j)}=
$$
$$
e^{\frac{\hbar}{2r^2}\sum_j (t_j-t_{j+1})(p_1+...+p_j)^2}
\theta(\tfrac{\hbar}{2\pi ir^2} \sum_j p_jt_j,\tfrac{L\hbar}{2\pi r^2}).
$$
Comparing with (\ref{thetaf}), 
we see that the Feynman-Kac formula reduces to  
the modular invariance of the theta-function:
$$
\theta(\tfrac{u}{iT},\tfrac{1}{T})=\sqrt{T} e^{\frac{\pi u^2}{T}}\theta(u,T)
$$
with $T=\frac{2\pi r^2}{\hbar L}$ (which follows from the Poisson summation formula applied to the Gaussian).

Note that the Feynman-Kac formula in this example 
would have been false if in the Lagrangian setting we had ignored 
the topologically nontrivial maps.
Thus we may say 
that the Feynman-Kac formula ``sees topology''.
This ability of the Feynman-Kac formula to ``see 
topology'' (in much more complex situations)
 lies at the foundation of many interrelations between
geometry and quantum field theory. 

\begin{remark} It should be noted that the contributions of topologically
nontrivial maps from the source circle to the target circle
are, strictly speaking, beyond
our usual setting of perturbation theory, since they are exponentially small in $\hbar$.
To be specific, the contribution from maps of degree $N$
mostly comes from those maps which are close to the minimal action
map $q_N(t)=\frac{2\pi tNr}{L}$, so it is of the
order $e^{-\frac{2\pi^2N^2r^2}{L\hbar}}$.
The maps $q_N(t)$ are the simplest examples of ``instantons''
-- nonconstant solutions of the classical equations of motion, 
which have finite action (and are nontrivial in the
topological sense). Exponentially small contributions to the
path integral coming from integration over neighborhoods of
instantons are called
``instanton corrections to the perturbation series''. 
\end{remark} 

\begin{remark}\label{givesen} 
This calculation allows us to give sense to the partition function $Z(L)$ 
of the line-valued massless quantum mechanics on the circle. To this end, 
we just need to look at the asymptotics $r\to \infty$ of the partition function 
$$
Z(r,L)=\theta(0,\tfrac{\hbar L}{2\pi r^2})=r\sqrt{\tfrac{2\pi}{\hbar L}}\theta(0,\tfrac{2\pi r^2}{\hbar L}). 
$$
Since $\theta(0,T)\to 1$ as $T\to \infty$, for the leading coefficient of the asymptotics we have (up to numerical scaling, 
which we are free to choose):
$$
Z(L)\sim \frac{1}{\sqrt{\hbar L}}. 
$$

Note however that in this case we cannot write 
$Z(L)={\rm Tr}(e^{-\frac{L\widehat H}{\hbar}})$ since this operator is not trace class. 
Also the vector $\Omega=1$ is not normalizable. Thus this theory is somewhat ill-defined, as already
mentioned above. 
\end{remark} 

\subsection{Spectrum of the Schr\"odinger operator for a piecewise constant periodic potential}

In this subsection we demonstrate the behavior of the spectrum of a 1-dimensional Schr\"odinger operator on the example  of a  piecewise constant periodic potential, when the eigenvalues and eigenfunctions can be computed fairly explicitly. 

We consider the Schr\"odinger operator on the circle $\Bbb R/2\pi \Bbb Z$ 
given by $H:=-\frac{\hbar^2}{2}\partial^2+U(x)$, where $U$ is a piecewise continuous 
$2\pi$-periodic potential. Clearly, without loss of generality we may assume that $\int_0^{2\pi}U(x)dx=0$, otherwise we can shift $U(x)$ by a constant. 
By a standard result in analysis (the theory of Sturm-Liouville operators), 
the operator $H$ has discrete spectrum, i.e., eigenvalues $E_0<E_1\le E_2\le...$ going to $+\infty$ with the corresponding eigenfunctions $\Psi_0,\Psi_1,\Psi_2,...$. For example, if 
$U=0$ then $E_0=0$ and $E_{2m-1}=E_{2m}=\frac{\hbar^2 m^2}{2}$ for $m>0$, with eigenfunctions $\Psi_0=1, \Psi_{2m-1}=\sin mx, \Psi_{2m}=\cos mx$. 

Consider now the simplest non-trivial example -- the piecewise constant potential 
\begin{equation}\label{pcon}
U(x)=\begin{cases} Mb,\ 0\le x<a\\ -Ma,a\le x<2\pi\end{cases}
\end{equation}
where $a,b,M>0$, $a+b=2\pi$. 

For every $p\in \Bbb R$, we have a basis $f_p,g_p$ of solutions of the equation $H\Psi=E\Psi$
on $[p,\infty]$ such that $f_p(p)=g_p'(p)=1,g_p(p)=f_p'(p)=0$. For example, 
$$
f_0(x)=\cos \sqrt{\tfrac{2}{\hbar^2}(E-Mb)}x,\ g_0(x)=\frac{\sin \sqrt{\frac{2}{\hbar^2}(E-Mb)}x}{\sqrt{\frac{2}{\hbar^2}(E-Mb)}}
$$
 for $0\le x<a$ and  
$$
f_a(x)=\cos \sqrt{\tfrac{2}{\hbar^2}(E+Ma)}(x-a),\ g_a(x)=\frac{\sin \sqrt{\frac{2}{\hbar^2}(E+Ma)}(x-a)}{\sqrt{\frac{2}{\hbar^2}(E+Ma)}(x-a)}
$$ 
for $a\le x<2\pi$. 
Thus the monodromy matrices along the intervals $[0,a]$, $[a,2\pi]$ in these bases are
$$
A:=\begin{pmatrix} \cos \sqrt{\frac{2}{\hbar^2}(E-Mb)} a & \frac{\sin \sqrt{\frac{2}{\hbar^2}(E-Mb)}a}{\sqrt{\frac{2}{\hbar^2}(E-Mb)}} \\
-\sqrt{\frac{2}{\hbar^2}(E-Mb)}\sin \sqrt{\frac{2}{\hbar^2}(E-Mb)} a& \cos \sqrt{\frac{2}{\hbar^2}(E-Mb)} a\end{pmatrix},
$$
$$
B:=\begin{pmatrix} \cos \sqrt{\frac{2}{\hbar^2}(E+Ma)} b & \frac{\sin \sqrt{\frac{2}{\hbar^2}(E+Ma)}b}{\sqrt{\frac{2}{\hbar^2}(E+Ma)}} \\
-\sqrt{\frac{2}{\hbar^2}(E+Ma)}\sin \sqrt{\frac{2}{\hbar^2}(E+Ma)} b& \cos \sqrt{\frac{2}{\hbar^2}(E+Ma)} b\end{pmatrix}.
$$
The condition for a periodic solution is that the matrix $AB$ (monodromy around the circle) has an eigenvalue $1$. Since $\det A=\det B=1$, in this case the second eigenvalue of $AB$ is also $1$ 
(generically this matrix is a unipotent Jordan block), so the condition is ${\rm Tr}(AB)=2$, which gives 
\scriptsize
\begin{equation}\label{eigeneq}
\cos \sqrt{\tfrac{2}{\hbar^2}(E-Mb)} a \cos \sqrt{\tfrac{2}{\hbar^2}(E+Ma)} b-\tfrac{E+M\frac{a-b}{2}}{\sqrt{(E-Mb)(E+Ma)}}\sin \sqrt{\tfrac{2}{\hbar^2}(E-Mb)} a \sin \sqrt{\tfrac{2}{\hbar^2}(E+Ma)} b=1.
\end{equation}
\normalsize
Thus the eigenvalues of $H$ are the solutions $E$ of \eqref{eigeneq}. 

If $a<E<b$ then $\sqrt{\frac{2}{\hbar^2}(E-Mb)}$ is imaginary, so \eqref{eigeneq} can be written in terms of real parameters as 
\scriptsize
\begin{equation}\label{eigeneq1}
\cosh \sqrt{\tfrac{2}{\hbar^2}(Mb-E)} a \cos \sqrt{\tfrac{2}{\hbar^2}(E+Ma)} b-\tfrac{E+M\frac{a-b}{2}}{\sqrt{(Mb-E)(E+Ma)}}\sinh \sqrt{\tfrac{2}{\hbar^2}(Mb-E)} a \sin \sqrt{\tfrac{2}{\hbar^2}(E+Ma)} b=1.
\end{equation}
\normalsize

As mentioned above, if $M=0$, then for each $n\ge 1$ the operator $H$ double eigenvalue 
$\frac{1}{2}\hbar^2 n^2$. We would like to see what happens to this eigenvalue for large $n$
as we turn on $M$ and keep the product $\hbar n$ in a bounded interval $[C^{-1},C]$ (so $\hbar\to 0$).

Let us rewrite \eqref{eigeneq} in the form 
$$
1-\cos \left(\sqrt{\tfrac{2}{\hbar^2}(E-Mb)}a+\sqrt{\tfrac{2}{\hbar^2}(E+Ma) }b\right)=
$$
$$
\left(1-\tfrac{E+M\frac{a-b}{2}}{\sqrt{(E-Mb)(E+Ma)}}\right)\sin \sqrt{\tfrac{2}{\hbar^2}(E-Mb)} a \sin \sqrt{\tfrac{2}{\hbar^2}(E+Ma)} b
$$
and look for solutions 
$$
E=\frac{1}{2}(\hbar^2 n^2+\varepsilon),
$$
where $|\varepsilon|\ll \frac{1}{n}$. We have 
$$
\sqrt{\tfrac{2}{\hbar^2}(E-Mb)}=\sqrt{n^2+\tfrac{\varepsilon-2Mb}{\hbar^2}}=n\left(1+\tfrac{\varepsilon-2Mb}{2\hbar^2 n^2}-\tfrac{(\varepsilon-2Mb)^2}{8\hbar^4 n^4}...\right),
$$
$$
\sqrt{\tfrac{2}{\hbar^2}(E+Ma)}=\sqrt{n^2+\tfrac{\varepsilon+2Ma}{\hbar^2}}=n\left(1+\tfrac{\varepsilon+2Ma}{2\hbar^2 n^2}-\tfrac{(\varepsilon+2Ma)^2}{8\hbar^4 n^4}...\right),
$$
so 
$$
\sqrt{\tfrac{2}{\hbar^2}(E-Mb)}a+\sqrt{\tfrac{2}{\hbar^2}(E+Ma) }b=2\pi n\left(1+\tfrac{\varepsilon}{2\hbar^2 n^2}-\tfrac{M^2ab}{2\hbar^4n^4}+...\right).
$$
Thus the left hand side of the above equation has the form 
$$
LHS=\frac{\pi^2}{2}\left(\frac{\varepsilon}{\hbar^2 n}-\frac{M^2ab}{2\hbar^4n^3}\right)^2+...
$$
We also have
$$
1-\frac{E+M\frac{a-b}{2}}{\sqrt{(E-Mb)(E+Ma)}}=-\frac{\pi^2M^2}{2\hbar^4n^4}+...
$$
So we get 
$$
RHS=\frac{\pi^2M^2}{2\hbar^4n^4}\sin^2 na+...
$$
Thus we obtain 
$$
\frac{\varepsilon}{\hbar^2 n}-\frac{M^2ab}{2\hbar^4n^3}=\pm \frac{M|\sin na|}{\hbar^2n^2}, 
$$
which yields 
$$
\varepsilon=\frac{M^2ab}{2\hbar^2n^2}\pm \frac{M|\sin na|}{n}, 
$$
We see that the double eigenvalue $\Lambda_n=\frac{\hbar^2 n^2}{2}$, $n>0$ for $M=0$ bifurcates into two eigenvalues 
\begin{equation}\label{bifur}
\Lambda_n^\pm(M)=\Lambda_n+\frac{M^2a(2\pi -a)}{8\Lambda_n}\pm \frac{M|\sin na|}{2n}+o((M+\tfrac{1}{n})^2),\ M\to 0,n\to \infty.
\end{equation}

\subsection{WKB approximation and the Weyl law} 

The goal of this subsection is to explain how to compute semiclassical asymptotics of eigenvalues and eigenfunctions of quantum hamiltonians. This method is called the {\bf Wentzel-Kramers-Brillouin (WKB) approximation},\index{WKB approximation} named after the authors of three separate papers which introduced it independently in 1926. 

We start with a general discussion of WKB approximation for linear ODE. Suppose we have an equation
\begin{equation}\label{odeeq}
\hbar\frac{dF}{dx}=AF
\end{equation}
for a vector-function of one variable $F(x)\in \Bbb C^n$, where 
$A(x)\in {\rm Mat}_n(\Bbb C)$ is a matrix-valued function (smooth on a certain interval $I\subset \Bbb R$).
We would like to understand the asymptotic behavior of solutions 
of this equation as $\hbar\to 0$. To this end, assume for simplicity that $A(x)$ has simple spectrum for generic $x$, and let $v_1(x),...,v_n(x)$ 
be its column eigenvectors with eigenvalues $\lambda_1(x),...,\lambda_n(x)$, and $v_1^*(x),...,v_n^*(x)$ the dual basis of row eigenvectors. Let us now look for solutions of \eqref{odeeq} 
in the form 
$$
F(x)=e^{\frac{\phi(x)}{\hbar}}(\psi_0(x)+\hbar \psi_1(x)+\hbar^2 \psi_2(x)...),
$$
where $\psi_0(x)\ne 0$ and the series in parentheses is formal. Substituting, we get 
$$
(\hbar\partial_x+\phi'-A)(\psi_0+\hbar \psi_1+\hbar^2 \psi_2+...)=0,
$$
which in degree $0$ with respect to $\hbar$ yields the equation
$$
A\psi_0=\phi'\psi_0.
$$
Thus $\phi'=\lambda_j$ is an eigenvalue of $A$, so 
$$
\phi(x)=\int \lambda_j(x)dx,\ \psi_0(x)=f(x)v_j(x),
$$ 
where $f$ is a scalar function. 

Further, in degree $1$ in $\hbar$ we obtain the equation 
$$\psi_0'=(A-\lambda)\psi_1,$$
i.e., 
$$
f'v_j+fv_j'=(A-\lambda)\psi_1. 
$$
For this to have a solution $\psi_1$, we need 
$(v_j^*,f'v_j+fv_j')=0$, i.e., 
$$
f'=-(v_j^*,v_j')f.
$$
Thus 
$$
f(x)=\exp\left(-\int (v_j^*,v_j')dx\right). 
$$
Now we can recursively solve for $\psi_1,\psi_2,...$. This leads to the following result. 

\begin{theorem}
There is a unique, up to scaling, basis of formal solutions of equation \eqref{odeeq} of the form
$$
F_j(x)=\exp\left(\frac{\int \lambda_j(x)dx}{\hbar}\right)\left(\exp\left(-\int (v_j^*(x),v_j'(x))dx\right)v_j(x)+O(\hbar)\right).
$$
\end{theorem}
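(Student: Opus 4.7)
The plan is to make the ansatz $F(x) = e^{\phi(x)/\hbar}\sum_{k\ge 0}\hbar^k\psi_k(x)$ rigorous as a formal power series in $\hbar$, substitute into $\hbar F' = AF$, and collect coefficients of each power of $\hbar$. The coefficient of $\hbar^0$ gives $(A - \phi')\psi_0 = 0$, which under the generic simple-spectrum assumption forces $\phi' = \lambda_j$ and $\psi_0 = f\,v_j$ for some $j$ and some nowhere-vanishing scalar function $f$. This explains why the formal basis is indexed by eigenvalues and pins down $\phi$ up to an additive constant (i.e., an overall multiplicative constant in $F_j$).

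Next I would extract $f$ by going to the coefficient of $\hbar^1$, which reads $\psi_0' = (A - \lambda_j)\psi_1$. Pairing with the left eigenvector $v_j^*$ annihilates the right-hand side and gives the transport equation $(v_j^*, \psi_0') = 0$, i.e., $f' + (v_j^*, v_j')f = 0$, whose solution is exactly $f(x) = \exp(-\int (v_j^*, v_j')\,dx)$, unique up to a multiplicative constant. This establishes the claimed leading-order form of $F_j$.

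The main step — and the one I expect to require the most care — is the recursive construction of $\psi_k$ for $k \ge 1$. At order $\hbar^{k+1}$ one obtains $\psi_k' = (A - \lambda_j)\psi_{k+1}$, an equation for $\psi_{k+1}$ whose unknown appears only through $(A-\lambda_j)$. Since $A - \lambda_j$ is not invertible (it has $v_j$ in its kernel and $v_j^*$ in its cokernel), I would decompose $\psi_{k+1} = g_{k+1} v_j + \psi_{k+1}^\perp$, where $\psi_{k+1}^\perp$ lies in the complement to $v_j$ spanned by the other eigenvectors. The component $\psi_{k+1}^\perp$ is determined uniquely by inverting $A - \lambda_j$ on that complement (using the spectral projector $1 - v_j v_j^*$), provided the solvability condition $(v_j^*, \psi_k') = 0$ holds. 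This condition is a first-order linear ODE for the scalar $g_k$ (after expanding $\psi_k = g_k v_j + \psi_k^\perp$ and using that $\psi_k^\perp$ is already known from previous steps), of the same transport type as for $f$, which admits a unique solution once an initial value is prescribed. Choosing the natural normalization $g_k(x_0) = 0$ at some basepoint for $k \ge 1$ then determines $\psi_k$ uniquely and produces the unique (up to the overall constant absorbed into $\psi_0$) formal solution $F_j$.

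Finally, uniqueness follows by running the argument in reverse: any formal solution of the stated shape must, at leading order, be aligned with some $v_j$ with $\phi' = \lambda_j$, and at each subsequent order the solvability condition plus the chosen normalization fix $\psi_k$, so the only remaining freedom is an overall scalar multiple. I would not worry about convergence, since the statement is about formal power series in $\hbar$; the only analytic hypothesis needed is that $A(x)$ has simple spectrum on the interval, so that the eigenvectors $v_j, v_j^*$ and the partial inverse of $A - \lambda_j$ depend smoothly on $x$.
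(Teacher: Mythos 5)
Your proof is correct and follows essentially the same route as the paper: the ansatz $F = e^{\phi/\hbar}\sum_k \hbar^k\psi_k$, the $\hbar^0$-equation pinning down $\phi' = \lambda_j$ and $\psi_0 = f v_j$, and the $\hbar^1$-solvability condition $(v_j^*,\psi_0')=0$ giving the transport equation for $f$. You actually spell out the recursive determination of $\psi_k$ for $k\ge 1$ (splitting along $v_j$ and its complement, using the partial inverse of $A-\lambda_j$) more carefully than the paper, which simply asserts that one can "recursively solve for $\psi_1,\psi_2,\dots$".
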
 

Let us now apply this theorem to the stationary Schr\"odinger equation 
\begin{equation}\label{staschr}
(-\tfrac{\hbar^2}{2}\partial_x^2+U(x))\Psi=E\Psi.
\end{equation} 
Set $p(x):=\sqrt{2(E-U(x))}$, then \eqref{staschr} takes the form 
$$
\hbar^2\partial_x^2\Psi=-p^2\Psi.
$$
This can be written as the system of equations 
$$
\hbar \partial_x \binom{\Psi}{\hbar \Psi'}=\begin{pmatrix} 0 & 1 \\ -p^2 & 0\end{pmatrix}\binom{\Psi}{\hbar \Psi'}.
$$
Thus we have equation \eqref{odeeq} with $A=\begin{pmatrix} 0 & 1 \\ -p^2 & 0\end{pmatrix}$. So we have 
$$
\lambda_1=ip,\ \lambda_2=-ip
$$
and we may take 
$$ 
v_1=\binom{1}{ip},\ v_2=\binom{1}{-ip},
$$
so that 
$$
v_1^*=\frac{1}{2}(1,-ip^{-1}),\ v_2^*=\frac{1}{2}(1,ip^{-1}).
$$ 
Thus we obtain the following formal solutions of \eqref{staschr}:
$$
\Psi_\pm=\exp\left(\pm \frac{i\int pdx}{\hbar}\right)\left(\exp\left(-\frac{1}{2}\int p^{-1}p'dx\right)+O(\hbar)\right)=
$$
$$
p^{-\frac{1}{2}}\exp\left(\pm\frac{i\int pdx}{\hbar}\right)(1+O(\hbar)).
$$
We get

\begin{theorem}\label{localWKB} (local WKB approximation) Equation \eqref{staschr} has a basis of formal solutions 
$$
\Psi_\pm(x)=(2(E-U(x))^{-\frac{1}{4}}\exp\left(\pm \frac{i\int \sqrt{2(E-U(x))}dx}{\hbar}\right)(1+O(\hbar)).
$$
\end{theorem}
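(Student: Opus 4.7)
The plan is to prove Theorem \ref{localWKB} by direct substitution of the WKB ansatz into the stationary Schr\"odinger equation and showing that all coefficients in the formal expansion can be determined recursively. This is essentially a specialization of the general first-order system approach already laid out in the excerpt, but since we have a single scalar equation, it is cleaner to work directly with the second-order equation rather than passing through the matrix $A$.

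First, I would substitute $\Psi(x) = e^{\phi(x)/\hbar}\sum_{k\ge 0} \hbar^k \psi_k(x)$ into \eqref{staschr}. Computing
$$
\hbar^2 \Psi'' = e^{\phi/\hbar}\Bigl((\phi')^2 \sum_{k\ge 0}\hbar^k \psi_k + \hbar\bigl(2\phi'\sum_{k\ge 0}\hbar^k \psi_k' + \phi''\sum_{k\ge 0}\hbar^k\psi_k\bigr) + \hbar^2 \sum_{k\ge 0}\hbar^k \psi_k''\Bigr),
$$
the Schr\"odinger equation becomes a formal identity in $\hbar$. Comparing coefficients of $\hbar^0$ gives the eikonal equation $(\phi')^2 = 2(U-E) = -p^2$, so that $\phi' = \pm i p$, recovering the exponential factor $\exp(\pm i\int p\,dx/\hbar)$. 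Comparing coefficients of $\hbar^1$ yields the transport equation $2\phi'\psi_0' + \phi''\psi_0 = 0$, which can be rewritten as $(\phi'\psi_0^2)' = 0$. Solving, $\psi_0 = C/\sqrt{\phi'}$; absorbing the irrelevant constant phase into $C$ one obtains $\psi_0 \propto p^{-1/2} = (2(E-U))^{-1/4}$, matching the claimed prefactor.

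Next, for $k\ge 1$, comparing coefficients of $\hbar^{k+1}$ gives the inhomogeneous transport equation
$$
2\phi'\psi_k' + \phi''\psi_k = -\psi_{k-1}''.
$$
This is a first-order linear ODE for $\psi_k$ with integrating factor $\psi_0^{-2}$: indeed, $(\phi'\psi_0 \psi_k)' = -\tfrac{1}{2}\psi_0\psi_{k-1}''$, so
$$
\psi_k(x) = -\frac{1}{2\phi'(x)\psi_0(x)}\int \psi_0(y)\psi_{k-1}''(y)\,dy + \alpha_k \psi_0(x),
$$
for an arbitrary integration constant $\alpha_k$. By induction $\psi_k$ is determined for every $k\ge 0$, uniquely up to absorbing the $\alpha_k$'s into the overall multiplicative constant. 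This establishes existence and uniqueness of the formal solutions $\Psi_\pm$, and the explicit leading-order expression shows they form a basis (their Wronskian is $-2i + O(\hbar)$, nonzero as a formal series).

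The main obstacle, and the reason the theorem is only local, is the assumption implicit in the recursion that $p(x) = \sqrt{2(E-U(x))}$ does not vanish on the interval where we work. At a turning point where $E = U(x)$, the prefactor $p^{-1/2}$ blows up and the recursion for $\psi_k$ picks up singularities: the formal expansion ceases to represent two independent solutions and must be matched across the turning point to an Airy-type asymptotic expansion. Away from turning points, however, the recursion runs unobstructed and the proof is just the bookkeeping sketched above.
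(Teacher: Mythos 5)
Your proof is correct, but it takes a genuinely different route from the paper. The paper first proves a general WKB theorem for first-order systems $\hbar F' = AF$ (with $A$ a matrix having simple spectrum), expressing the formal solutions as $F_j = \exp(\hbar^{-1}\!\int\lambda_j\,dx)\bigl(\exp(-\!\int (v_j^*,v_j')\,dx)\,v_j + O(\hbar)\bigr)$, and then specializes by rewriting the scalar Schr\"odinger equation as a $2\times 2$ system with $A = \begin{pmatrix} 0 & 1\\ -p^2 & 0\end{pmatrix}$, diagonalizing $A$, and computing the contraction $(v_j^*,v_j') = \tfrac12 p'/p$ to obtain the prefactor $p^{-1/2}$. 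You instead work directly with the scalar second-order equation, deriving the eikonal equation $(\phi')^2=-p^2$, the homogeneous transport equation $(\phi'\psi_0^2)'=0$, and the inhomogeneous transport recursion $2\phi'\psi_k'+\phi''\psi_k=-\psi_{k-1}''$ with integrating factor $\phi'\psi_0$. Both are standard and valid; your version is more elementary and self-contained for this specific equation, makes the full recursion explicit (the paper only states that one can ``recursively solve for $\psi_1,\psi_2,\dots$''), and your remarks about turning points correctly explain why the result is only local. The paper's route buys generality: the matrix form handles higher-order scalar equations and genuine systems uniformly, which is the setting in which the authors choose to phrase the lemma. One small correction: the Wronskian of $\Psi_+$ and $\Psi_-$ at leading order is $-2i/\hbar + O(1)$, not $-2i + O(\hbar)$ --- the exponentials cancel but the dominant term comes from differentiating them, contributing $\mp\phi'/\hbar$; the conclusion (nonvanishing as a formal Laurent series in $\hbar$, hence linear independence) is unaffected.
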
 

The WKB approximation can also be used to find asymptotic distribution of eigenvalues of a Schr\"odinger operator when it has discrete spectrum. Let us explain, somewhat informally, how this works. 

As an example, consider the stationary Schr\"odinger equation 
\eqref{staschr} on the circle $\Bbb R/2\pi \Bbb Z$ 
with piecewise continuous $2\pi$-periodic potential $U(x)$. 
We would like to write an asymptotic formula for the $n$-th eigenvalue $E_n(\hbar)$
of the operator $H=-\frac{1}{2}\hbar^2\partial^2+U(x)$ when $n\sim \frac{A}{\hbar}$ 
for a given constant $A$. This is equivalent to determining the number $\nu(E)$ of eigenvalues of $H$ satisfying the inequality $\Lambda\le E$ for a given constant $E$. 

To this end, we will use Theorem \ref{localWKB}. Assume first that 
$$
E>\sup U(x).
$$  
The periodicity condition for the solutions $\Psi_\pm$ in Theorem \ref{localWKB} (called the {\it quantization condition}\index{quantization condition} in quantum mechanics) in the zeroth approximation is that 
\begin{equation}\label{qcond}
\int_0^{2\pi} \sqrt{2(E-U(x))}dx=2\pi n\hbar,\ n\in \Bbb Z_{\ge 0}.
\end{equation} 
It follows that if \ref{qcond} holds then the number of eigenvalues 
of $H$ which are $\le E$ is about $2n$. So we get 

\begin{proposition}\label{weyllaw}
$$
\nu(E)\sim \frac{A(E)}{\hbar},\ \hbar\to 0,\text{ where } A(E):=\frac{1}{\pi}\int_0^{2\pi} \sqrt{2(E-U(x))}dx.
$$
Thus for sufficiently large $A$, we have 
$$
E_{[\frac{A}{\hbar}]}(\hbar)\sim E(A),
$$ 
where 
$E(A)$ is the solution of the equation 
$$
A=\frac{1}{\pi}\int_0^{2\pi}\sqrt{2(E-U(x))}dx.
$$
\end{proposition}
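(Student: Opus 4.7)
The plan is to apply the local WKB approximation of Theorem \ref{localWKB} to the monodromy map of the stationary Schr\"odinger equation \eqref{staschr} around the circle, extract a Bohr--Sommerfeld quantization condition, and count its solutions. The standing hypothesis $E > \sup U$ will be crucial because it guarantees that $\sqrt{2(E-U(x))}$ is real and bounded away from zero, so the WKB basis is non-singular over a full period and no turning points intervene.

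First I would lift the problem from $\Bbb R/2\pi\Bbb Z$ to $\Bbb R$ and let $M(E,\hbar) \in SL_2(\Bbb R)$ be the monodromy matrix of the ODE $(-\tfrac{\hbar^2}{2}\partial^2 + U)\Psi = E\Psi$ over one period, expressed in the basis of solutions with initial data $(1,0)$ and $(0,1)$ at $x=0$. Then $E$ is an eigenvalue of $H$ iff $1 \in \mathrm{spec}\,M(E,\hbar)$, which, since $\det M = 1$, amounts to $\mathrm{tr}\,M(E,\hbar) = 2$. Next I would compute $\mathrm{tr}\,M$ via Theorem \ref{localWKB}: the formal solutions $\Psi_\pm$ pick up multiplicative monodromy $\exp(\pm i\Phi(E)/\hbar)$ where
\[
\Phi(E) := \int_0^{2\pi}\sqrt{2(E-U(x))}\,dx,
\]
because their prefactor $(2(E-U))^{-1/4}$ is itself $2\pi$-periodic. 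Changing basis from $\{\Psi_+,\Psi_-\}$ to the initial-value basis multiplies $M$ by $\hbar$-independent conjugations, and hence
\[
\mathrm{tr}\,M(E,\hbar) = 2\cos\!\bigl(\Phi(E)/\hbar\bigr) + O(\hbar).
\]

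The condition $\mathrm{tr}\,M = 2$ then reads $\cos(\Phi(E)/\hbar) = 1 + O(\hbar)$, i.e., the Bohr--Sommerfeld quantization
\[
\Phi(E) = 2\pi n\hbar + O(\hbar^2), \qquad n \in \Bbb Z_{\ge 0}.
\]
Because $\Phi$ is smooth and strictly increasing in $E$ (with $\Phi'(E) = \int_0^{2\pi}dx/\sqrt{2(E-U)} > 0$), the implicit function theorem produces, for each integer $n$ in the admissible range, a unique $E_n(\hbar)$ solving this equation; generically each such $n$ contributes a pair of eigenvalues (a double eigenvalue to leading order that splits at higher order), exactly as already observed in the explicit bifurcation formula \eqref{bifur}. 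Counting the $n \ge 0$ with $E_n(\hbar) \le E$ gives $\lfloor \Phi(E)/(2\pi\hbar)\rfloor$ solutions, each doubled, whence
\[
\nu(E) \;\sim\; \frac{2\Phi(E)}{2\pi\hbar} \;=\; \frac{A(E)}{\hbar}, \qquad \hbar \to 0,
\]
and the last assertion of the proposition follows by inverting $A$.

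The main obstacle is justifying that the formal WKB series represents actual solutions with error $O(\hbar)$ uniformly in $x \in [0,2\pi]$; only then is one entitled to the expansion of $\mathrm{tr}\,M$ above. I would handle this by substituting $\Psi_\pm^{(0)} = (2(E-U))^{-1/4}\exp(\pm i\hbar^{-1}\int\sqrt{2(E-U)}\,dx)$ into the ODE, observing that it is a solution up to a source term of size $O(\hbar^2)$, and then using a Gronwall/variation-of-constants estimate on the compact interval $[0,2\pi]$ (where the coefficients are uniformly bounded since $E - U \ge \delta > 0$) to upgrade this to a genuine solution differing from $\Psi_\pm^{(0)}$ by $O(\hbar)$. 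A secondary issue is the possibility of repeated eigenvalues and the appearance of counting corrections when $E$ sits within $O(\hbar)$ of one of the $E_n$; this is handled by the fact that the remainder in the counting function is $O(1)$, which is absorbed in the asymptotic relation $\nu(E) \sim A(E)/\hbar$.
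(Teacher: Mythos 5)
Your proposal follows essentially the same route as the paper: apply the WKB approximation of Theorem \ref{localWKB} to derive the Bohr--Sommerfeld quantization condition \eqref{qcond} and then count integer solutions to obtain $\nu(E)\sim A(E)/\hbar$. The paper presents this explicitly as an informal argument (``Let us explain, somewhat informally, how this works''), whereas you supply the standard rigorous scaffolding --- the monodromy matrix, the trace condition ${\rm tr}\,M(E,\hbar)=2$, and a Gronwall estimate on $[0,2\pi]$ to upgrade the formal WKB series to genuine solutions with uniform $O(\hbar)$ error --- which is exactly what is needed to turn the paper's sketch into a proof.
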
  

Note that $A(E)$ is the area of the region in the classical phase space $T^*S^1=S^1\times \Bbb R$ defined by the inequality 
$$
H_{\rm cl}\le E,
$$
where $H_{\rm cl}:=\frac{1}{2}p^2+U(x)$ is the corresponding classical hamiltonian. Moreover, one can show that with this definition of $A(E)$, the formula 
$$
\nu(E)\sim \frac{A(E)}{\hbar}
$$
in fact holds in a much larger generality, whenever $H$ has discrete spectrum (namely, for the operator $-\frac{1}{2}\hbar^2\Delta+U(x)$ on any compact Riemannian manifold, or even on a non-compact one when one has $U(x)\to +\infty$ as $x\to \infty$). This formula is known as the {\it Weyl law}.\index{Weyl law}

\begin{exercise} Prove the Weyl law on the circle for $E\le \sup U(x)$. 
\end{exercise} 

Finally, let $U(x):=MU_0(x)$ where $U_0$ is a fixed potential and consider the asymptotics of eigenvalues for small $M$, assuming that $\hbar \ll M$ (i.e., $\frac{1}{n}\ll M$). In this case we can write equation \eqref{qcond} as 
\begin{equation}\label{qcond1}
\sqrt{2E}\int_0^{2\pi} \left(1-\frac{MU_0(x)}{2E}-\frac{M^2U_0(x)^2}{8E^2}+o(M^2)\right)dx=2\pi n\hbar.
\end{equation} 
As before, we assume without loss of generality that $\int_0^{2\pi} U_0(x)dx=0$. Let $I:=\frac{1}{2\pi}\int_0^{2\pi}U_0(x)^2dx$. Then we obtain
\begin{equation}\label{qcond2}
\sqrt{2E}=n\hbar+\tfrac{M^2I}{2(2 E)^\frac{3}{2}}+o(M^2)=n\hbar(1+\tfrac{M^2I}{2n^4\hbar^4}+...)
\end{equation} 
It follows that 
$$
E=\tfrac{1}{2}n^2\hbar^2(1+\tfrac{M^2I}{n^4\hbar^4}+...)=\Lambda_n+\frac{M^2I}{8\Lambda_n}+...
$$
This gives the first correction of the eigenvalue $\Lambda_n:=\tfrac{1}{2}n^2\hbar^2$ as we turn on $M$. 

For example, if $U(x)$ is given by \eqref{pcon} then $I=a(2\pi-a)$ and we recover
the asymptotics \eqref{bifur} without the last (bifurcation) term (which is negligible compared to 
 $\frac{M^2a(2\pi-a)}{8\Lambda_n}$ in the range $\frac{1}{n}\ll M$).

\section{Fermionic integrals}

\subsection{Bosons and fermions}

In physics there exist two kinds of particles -- bosons 
and fermions. So far we have dealt with bosons only, but
many important particles are fermions: e.g., 
electron, proton, etc. Thus it is important to adapt our
techniques to the fermionic case. 

In quantum theory, the difference between 
bosons and fermions is as follows: if the 
space of states of a single particle 
is ${\mathcal H}$ then the space of states 
of the system of $k$ such particles
is $S^k{\mathcal H}$ for bosons and $\Lambda^k{\mathcal H}$ 
for fermions. In particular, in the fermionic case, if $\dim \mathcal H=n$ 
then the space of states of $\ge n+1$ identical particles is zero, 
which is the {\it Pauli exclusion principle}\index{Pauli exclusion principle} (leading, for instance, to the fact 
that the number of electrons in an atom at the $m$-th energy level is bounded by $2m^2$).  
In classical theory, 
this means that the space of states 
of a bosonic particle is a usual real vector space (or, 
more generally, a manifold), while for a fermionic particle
it is an {\it odd vector space}.\index{odd vector space} Mathematically ``odd'' means that 
the algebra of smooth functions on this space (i.e. the algebra of classical 
observables) is an {\it exterior} algebra  
(unlike the case of a usual, {\it even} space, for which 
the algebra of polynomial functions is a {\it symmetric} algebra).

More generally, one may consider systems of classical particles
or fields some of which are bosonic and some fermionic. 
In this case, the space of states will be a supervector space,
i.e. the direct sum of an even and an odd space (or, more
generally, a supermanifold -- a notion we will define below).

When such a theory is quantized using the path integral approach,
one has to integrate functions over supermanifolds. 
Thus, we should learn to integrate over supermanifolds
and then generalize to this case our Feynman diagram techniques.
This is what we do in this section. 

\subsection{Supervector spaces}

Let $k$ be a field of characteristic zero. 
A {\it supervector space}\index{supervector space} (or shortly, superspace) over $k$ is just 
a $\Bbb Z/2$-graded vector space: $V=V_0\oplus V_1$.
If $V_0=k^n$ and $V_1=k^m$ then $V$ is denoted by $k^{n|m}$. 
The notions of a linear operator, direct sum, tensor product,
dual space for supervector spaces 
are defined in the same way as for $\Bbb Z/2$-graded vector
spaces. In other words, the tensor category of
supervector spaces is the same as that of $\Bbb Z/2$-graded
vector spaces. 

However, the notions of a supervector space 
and a $\Bbb Z/2$-graded vector space are {\it not} the same. 
The difference is as follows. The category of vector
(and hence $\Bbb Z/2$-graded vector) spaces has a
{\it symmetric structure}\index{symmetric structure}, which is the standard isomorphism 
$V\otimes W\to W\otimes V$ (given by $v\otimes w\to w\otimes
v$). This isomorphism allows one to define symmetric powers 
$S^iV$, exterior powers $\Lambda^iV$, etc. 
For supervector spaces, there is also a symmetry
$V\otimes W\to W\otimes V$, but it is defined 
differently. Namely, $v\otimes w$ goes to 
$(-1)^{ij}w\otimes v$, $v\in V_i, w\in V_j$ ($i,j\in \lbrace{0,1\rbrace}$). 
In other words, it is the same as usual except that if $v,w$ are
both odd then $v\otimes w\mapsto -w\otimes v$. As a result, 
we can define the superspaces 
$S^iV$ and $\Lambda^iV$ for a superspace $V$, but 
they are not the same as the symmetric and exterior powers 
in the usual sense. For example, if $V$ is purely odd
($V=V_1$), then $S^iV$ is the $i$-th exterior  power of $V$, and
$\Lambda^iV$ is the $i$-th symmetric power of $V$
(purely even for even $i$ and purely odd for odd $i$). 
Thus in general for $V=V_0\oplus V_1$, we 
have the following expressions for the symmetric algebra 
$SV:=\oplus_{i\ge 0}S^iV$ and exterior algebra $\Lambda V:=\oplus_{i\ge 0}\Lambda^i V$: 
$$
SV=SV_0\otimes \Lambda V_1,\ \Lambda V=\Lambda V_0\otimes SV_1.
$$

For a superspace $V$, let $\Pi V$ be the same space with opposite
parity, i.e. $(\Pi V)_j=V_{1-j}$, $j=0,1$. Then
we have
$$
S^iV=\Pi^i(\Lambda^i\Pi V),\ \Lambda^iV=\Pi^i(S^i\Pi V).
$$

Let $V=V_0\oplus V_1$ be a finite dimensional superspace. 
Define the algebra of polynomial functions on $V$, ${\mathcal O}(V)$, 
to be the algebra $SV^*$ (where symmetric powers 
are taken in the supersense). Thus, ${\mathcal O}(V)=SV_0^*\otimes
\Lambda V_1^*$, where $V_0$ and $V_1$ are regarded as usual
spaces. More explicitly, if 
$x_1,...,x_n$ are linear coordinates on $V_0$, and
$\xi_1,...,\xi_m$ are linear coordinates on $V_1$, 
then ${\mathcal O}(V)=k[x_1,...,x_n,\xi_1,...,\xi_m]$, 
with defining relations 
$$
x_ix_j=x_jx_i,\ x_i\xi_r=\xi_rx_i,\ \xi_r\xi_s=-\xi_s\xi_r
$$
(in particular, $\xi_r^2=0$). Note that this algebra is itself a (generally, infinite
dimensional) supervector space, and is commutative in the
supersense. Also, if $V,W$ are two superspaces, then ${\mathcal
  O}(V\oplus W)={\mathcal O}(V)\otimes {\mathcal O}(W)$, 
where the tensor product of algebras is understood 
in the supersense, i.e. 
$$
(a\otimes b)(c\otimes d)=
(-1)^{p(b)p(c)}(ac\otimes bd),
$$ 
where $p(x)$ is the parity of $x$. 

\subsection{Supermanifolds} 

Now assume that $k=\Bbb R$. Then 
by analogy with the above 
for any supervector space $V$ 
we can define the algebra 
of smooth functions, $C^\infty(V):=C^\infty(V_0)\otimes \Lambda
V_1^*$. In fact, this is a special case of the following more
general setting. 

\begin{definition} A {\it supermanifold}\index{supermanifold}
$M$ is a usual manifold $M_0$ 
with a sheaf $C^\infty_{M}$ of 
$\Bbb Z/2\Bbb Z$ graded algebras
(called the {\it structure sheaf}\index{structure sheaf}), which is locally isomorphic to 
$C^\infty_{M_0}\otimes \Lambda (\xi_1,...,\xi_m)$.
\end{definition}

The manifold $M_0$ is called the {\it reduced manifold}\index{reduced manifold} of $M$. 
The dimension of $M$ is the pair of integers $\dim M_0|m$.  

For example, a supervector space $V$ is a supermanifold 
of dimension $\dim V_0|\dim V_1$. 
Another (more general) example of a supermanifold is 
a superdomain $U:=U_0\times V_1$, i.e. a domain $U_0\subset V_0$ together 
with the sheaf $C^\infty_{U_0}\otimes \Lambda V_1^*$. 
Moreover, the definition of a supermanifold implies that 
any supermanifold is ``locally isomorphic'' to a superdomain.

Let $M$ be a supermanifold. An {\it open set} $U$ in $M$ is 
the supermanifold ($U_0$, $C^\infty_M|_{U_0}$),
where $U_0$ is an open subset in $M_0$.

By the definition, supermanifolds form a category $\mathcal
{S}$. Let us describe explicitly 
morphisms in this category, i.e. maps $F:M\to N$ 
between supermanifolds $M$ and $N$. By the definition, it
suffices to assume that $M,N$ are superdomains,
with global coordinates $x_1,...,x_n,\xi_1,...,\xi_m$, and 
$y_1,...,y_p,\eta_1,...,\eta_q$, respectively
(here $x_i$, $y_i$ are even variables, 
and $\xi_i,\eta_i$ are odd variables). Then 
the map $F$ is defined by the formulas:
$$
y_i=f_{0,i}(x_1,...,x_n)+f_{2,i}^{j_1j_2}(x_1,...,x_n)\xi_{j_1}\xi_{j_2}+...,
$$
$$
\eta_i=a_{1,i}^j(x_1,...,x_n)\xi_j+a_{3,i}^{j_1j_2j_3}(x_1,...,x_n)
\xi_{j_1}\xi_{j_2}\xi_{j_3}+...,
$$
where $f_{0,i},f_{2,i}^{j_1j_2},...,a_{1,i}^j,a_{3,i}^{j_1j_2j_3},...$ are 
usual smooth functions, and we assume summation over repeated
indices. These formulas, determine $F$ completely, 
since for any $g\in C^\infty(N)$ one can find $g\circ F\in C^\infty(M)$ by Taylor's
formula. 
For example, if $M=N=\Bbb R^{1|2}$,
$F(x,\xi_1,\xi_2)=(x+\xi_1\xi_2,\xi_1,\xi_2)$, 
and $g=g(x)$, then 
$$
g\circ F(x,\xi_1,\xi_2)=
g(x+\xi_1\xi_2)=g(x)+g'(x)\xi_1\xi_2.
$$ 

%\begin{remark} For this reason, we consider only 
%$C^\infty$ (and not $C^r$) functions on supermanifolds. 
%Indeed, if for example $g(x)$ is a $C^r$ function of one variable which is 
%not differentiable $r+1$ times, then 
%the expression $g(x+\sum_{i=1}^{r+1}\xi_{2i-1}\xi_{2i})$
%will not be defined, because the coefficient of 
%$\xi_1,...,\xi_{2r+2}$ in this expression 
%should be $g^{(r+1)}(x)$, but this derivative does not exist.
%\end{remark}

\subsection{Supermanifolds and vector bundles} 

Let $M_0$ be a manifold, and $E$ be a real vector bundle on $M_0$. 
Then we can define the supermanifold $M:={\rm Tot}(\Pi E)$, the total
space of $E$ with changed parity. Namely, the reduced manifold 
of $M$ is $M_0$, and the structure sheaf $C^\infty_M$ is the sheaf of
sections of $\Lambda
E^*$. This defines a functor $S: {\mathcal {B}}\to {\mathcal
  {S}}$, from the category of manifolds with vector bundles 
to the category of supermanifolds. We also have a functor 
$S_*$ in the opposite direction: namely, 
$S_*(M)$ is the manifold $M_0$ with the vector bundle 
$(R/R^2)^*$, where $R$ is the nilpotent radical of $C^\infty_M$.

The following proposition (whose proof we leave as an exercise) 
gives a classification of
supermanifolds. 

\begin{proposition}\label{batch} (i) $S_*\circ S={\rm Id}$; 

(ii) $S\circ S_*={\rm Id}$ on isomorphism classes of objects. 
\end{proposition}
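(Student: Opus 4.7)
The plan is to dispatch (i) as a direct computation and to concentrate the real work on (ii), whose content is the existence of a global splitting of the nilpotent filtration of $C^\infty_M$ --- the step that forces the conclusion to be only up to isomorphism rather than on the nose.

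For (i), starting from $(M_0,E)$, the structure sheaf of $S(M_0,E)$ is $\Gamma(\Lambda^\bullet E^*)$ with nilpotent radical $R=\Gamma(\bigoplus_{k\ge 1}\Lambda^k E^*)$. A one-line calculation gives $R/R^2=\Gamma(E^*)$, hence $(R/R^2)^*=\Gamma(E)$, recovering $E$ on $M_0$; naturality in $(M_0,E)$ is immediate, so $S_*\circ S=\mathrm{Id}$ as functors.

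For (ii), let $M$ be a supermanifold with reduced manifold $M_0$, nilpotent ideal $R\subset C^\infty_M$, and odd dimension $m$, and set $E:=(R/R^2)^*$. The filtration $C^\infty_M\supset R\supset R^2\supset\cdots\supset R^{m+1}=0$ has associated graded sheaf locally isomorphic to $\Lambda^\bullet E^*$ (check in local odd coordinates $\xi_1,\dots,\xi_m$, which freely generate the associated graded on each chart). To upgrade this to a sheaf isomorphism $\Lambda^\bullet E^*\xrightarrow{\sim} C^\infty_M$, I would first construct a $C^\infty_{M_0}$-linear splitting $s\colon E^*\to R$ of the projection $R\twoheadrightarrow R/R^2$, and then extend $s$ multiplicatively to a morphism of sheaves of superalgebras $\tilde s\colon\Lambda^\bullet E^*\to C^\infty_M$, using the antisymmetry in $R$ to verify the exterior-algebra relations. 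Since $\tilde s$ induces the identity on the associated graded and the filtration is finite, a short induction on filtration level shows $\tilde s$ is an isomorphism, yielding $S(S_*(M))\cong M$.

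The main obstacle is the construction of $s$ globally. Locally it is trivial: by definition of a supermanifold there is a cover $\{U_\alpha\}$ on which $M$ is isomorphic to a superdomain, and each such trivialization produces a local splitting $s_\alpha$. Two local splittings differ by a $C^\infty_{M_0}$-linear map $E^*\to R^2$, so the obstruction to gluing is a \v{C}ech $1$-cocycle with values in the sheaf $\mathcal{H}om(E^*,R^2)$, which is a sheaf of sections of a vector bundle on $M_0$ and hence fine. Concretely: pass to a locally finite refinement of the cover and pick a subordinate partition of unity $\{\rho_\alpha\}$; then $s:=\sum_\alpha \rho_\alpha s_\alpha$ is a well-defined global $C^\infty_{M_0}$-linear splitting. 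This partition-of-unity argument is precisely where the $C^\infty$ hypothesis is used --- in the holomorphic or algebraic category the splitting need not exist --- and it is also the source of the non-canonicity, which is why (ii) holds only on isomorphism classes of objects rather than as an equality of functors.
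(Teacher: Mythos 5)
Part (i) is fine, and the computation you give is the whole proof. (The paper in fact leaves both parts as an exercise, so there is no in-text proof to compare against.)

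For (ii) you have identified the right overall strategy --- everything is split locally, and in the $C^\infty$ category one should glue using a partition of unity, with the holomorphic/algebraic case failing precisely at this point --- but the one-step argument as written has a genuine gap. The trouble is that $R$, and in particular $R^2$, is a sheaf of $C^\infty_M$-modules, \emph{not} of $C^\infty_{M_0}$-modules: $C^\infty_{M_0}$ is a quotient, not a subalgebra, of $C^\infty_M$. Consequently ``a $C^\infty_{M_0}$-linear splitting $s:E^*\to R$'' is not yet a meaningful notion, $\mathcal{H}om(E^*,R^2)$ is not a sheaf of $C^\infty_{M_0}$-modules (so ``fine'' is not available), and the ``multiplicative extension'' $\tilde s$ is underdetermined: to get an algebra map $\Lambda E^*\to C^\infty_M$ you must also say where $\Lambda^0 E^*=C^\infty_{M_0}$ goes, i.e.\ choose an \emph{algebra} section $\iota:C^\infty_{M_0}\to C^\infty_M$ of the projection (equivalently, a retraction of supermanifolds $M\to M_0$). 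Producing this $\iota$ is the real content of Batchelor's theorem, and one partition of unity does not do it: if $\iota_\alpha$ are local algebra sections then $\iota_\alpha-\iota_\beta:C^\infty_{M_0}\to R^2$ is a derivation modulo higher $R$-filtration rather than $C^\infty_{M_0}$-linear, and $\sum_\alpha\rho_\alpha\iota_\alpha$ is not multiplicative. Similarly $s_\alpha-s_\beta$ is $C^\infty_{M_0}$-linear only modulo $R^3$, since $s_\alpha$ is linear with respect to $\iota_\alpha$, $s_\beta$ with respect to $\iota_\beta$, and $\iota_\alpha-\iota_\beta$ contributes terms in $R^3$.

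The correct argument is an induction up the finite filtration $R\supset R^2\supset\cdots\supset R^{m+1}=0$: build compatible algebra isomorphisms $\Lambda E^*/\Lambda^{\ge k+1}E^*\to C^\infty_M/R^{k+1}$ inducing the identity on the associated graded, lifting one filtration step at a time. At each step the obstruction to lifting lands in $H^1$ of a fine sheaf of $C^\infty_{M_0}$-modules built out of $T_{M_0}$, $E$ and $\Lambda^\bullet E^*$, so it vanishes by your partition-of-unity mechanism. Equivalently and more compactly: the set of isomorphisms $\Lambda E^*\xrightarrow{\sim} C^\infty_M$ inducing the identity on $M_0$ and on $R/R^2$ is a torsor under the sheaf $\mathcal{G}$ of such automorphisms of $\Lambda E^*$; $\mathcal{G}$ is unipotent with a finite filtration by normal subsheaves (automorphisms trivial modulo $R^{k+1}$) whose successive quotients are fine sheaves of abelian $C^\infty_{M_0}$-modules, so $H^1(M_0,\mathcal{G})$ is a point. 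Your instinct about partitions of unity and about why this is special to $C^\infty$ is exactly right; what breaks is the single-shot reduction to $\mathcal{H}om(E^*,R^2)$.
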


The usefulness of this proposition is limited by the fact that, 
as one can see from the above description of maps between 
supermanifolds, $S\circ S_*$ is {\it not} the identity on morphisms
(e.g. it maps the automorphism 
$x\to x+\xi_1\xi_2$ of $\Bbb R^{1|2}$ to ${\rm Id}$),
and hence, $S$ is not an equivalence of categories. 
In fact, the category of supermanifolds is not equivalent to the
category of manifolds with vector bundles 
(namely, the category of supermanifolds ``has more morphisms'').

\begin{remark} 1. The relationship between these two categories is
quite similar to the relationship between the categories of 
(finite dimensional) filtered and graded vector spaces,
respectively (namely, for them we also have functors $S$, $S_*$
with the same properties -- check it!). Therefore in 
supergeometry, it is better to avoid 
using realizations of supermanifolds as $S(M_0,E)$, similarly to how
in linear algebra it is better to avoid choosing a splitting of a
filtered space. 

2. In the definition of a supermanifold one can replace the real exterior algebra  $\Lambda(\xi_1,...,\xi_m)$ 
with the complexified exterior algebra $\Lambda_{\Bbb C}(\xi_1,...,\xi_m)$. This gives a notion of a $\Bbb C$-supermanifold, which generalizes the notion of an ordinary smooth manifold with the sheaf of complex-valued (as opposed to real-valued) smooth functions. Similarly to Proposition \ref{batch}, isomorphism classes of $\Bbb C$-supermanifolds with reduced submanifolds $M_0$ are in bijection with isomorphism classes of {\it complex} vector bundles on $M_0$, so they are more general (as not every complex vector bundle is the complexification of a real one). Otherwise, the theory of $\Bbb C$-supermanifolds (which does actually arise in quantum field theory, see Remark \ref{realcom} below) is completely parallel to the theory of usual supermanifolds. 

One may also similarly define complex analytic and algebraic supermanifolds, but this is a different story which we will not discuss here. 
\end{remark}

\subsection{Supertrace and superdeterminant (Berezinian)}

Before proceeding further, we need to generalize 
to the supercase the basic notions of linear algebra, such as trace and determinant of a matrix.

Let $R:=R_0\oplus R_1$ be a supercommutative $\Bbb C$-algebra. Fix two nonnegative integers $m,n$.  
Let ${\rm Mat}_{n|m}(R)$ be the algebra of $n+m$ by $n+m$ matrices over $R$ which have 
the block decomposition 
$$
A=\begin{pmatrix} A_{00} & A_{01}\\ A_{10} & A_{11}\end{pmatrix}
$$ 
so that $A_{00}$ is $n$ by $n$, $A_{11}$ is $m$ by $m$,  
and $A_{00},A_{11}$ 
have even entries (i.e., in $R_0$), 
while $A_{01},A_{10}$ have odd entries (i.e., in $R_1$). 
We would like to define the {\it supertrace}\index{supertrace} of $A$ as a
linear function 
$$
{\rm sTr}(A)=\sum_{i,j=1}^{n+m} \lambda_{ij}a_{ij},\ \lambda_{ij}\in \Bbb Z, 
$$
so that ${\rm sTr}\begin{pmatrix} 1& 0\\ 0& 0\end{pmatrix}=n$ and 
${\rm sTr}(AB)={\rm sTr}(BA)$ 
for any $R$ and $A,B\in {\rm Mat}_{n|m}(R)$. 
Thus we must have 
${\rm sTr}(A)={\rm Tr}(A_{00})+\varepsilon {\rm Tr}(A_{11})$
for some $\varepsilon\in \Bbb Z$, and taking all blocks 
of $A,B$ except $A_{01},B_{10}$ to be zero, we get $\varepsilon =-1$. 
So the supertrace of $A$ has to be defined by the formula 
$$
{\rm sTr}(A)={\rm Tr}(A_{00})-{\rm Tr}(A_{11}). 
$$

Now let us generalize to the supercase the definition of 
determinant. For a finite dimensional algebra $R$ and $C\in {\rm Mat}_{n|m}(\Bbb R)$ 
we would like to have 
\begin{equation}\label{expone}
{\rm sdet}(e^C)=e^{{\rm sTr}C}=e^{{\rm Tr}(C_{00})-{\rm Tr}(C_{11})},
\end{equation}
which generalizes the usual property of trace and determinant. 
So in the case of a block-diagonal matrix $C=C_{00}\oplus C_{11}$ we get 
$$
{\rm sdet}(e^C)=\frac{\det(e^{C_{00}})}{\det(e^{C_{11}})}.
$$
Thus if $A=A_{00}\oplus A_{11}$ is block-diagonal, we must have 
$$
{\rm sdet}A=\frac{\det A_{00}}{\det A_{11}}.
$$
This shows that we cannot hope that the superdeterminant will be a polynomial in the entries of $A$ -- 
it has to be a rational function defined only on some open subset. In fact, if we want to have the usual property 
${\rm sdet}(AB)={\rm sdet}(A){\rm sdet}(B)$ then there is just one possibility. Indeed, 
suppose that 
$$
A=\begin{pmatrix} 1 & b\\ 0& 1\end{pmatrix} 
\begin{pmatrix} a_+ & 0\\ 0& a_-\end{pmatrix}
\begin{pmatrix} 1 & 0\\ c& 1\end{pmatrix}=\begin{pmatrix} a_++ba_-c & ba_-\\ a_-c& a_-\end{pmatrix}.
$$
By \eqref{expone}, we must have 
$$
{\rm sdet} \begin{pmatrix} 1 & b\\ 0& 1\end{pmatrix}={\rm sdet} \begin{pmatrix} 1 & 0\\ c& 1\end{pmatrix}=1,
$$
hence 
$$
{\rm sdet}(A)=\frac{\det a_+}{\det a_-}.
$$
In other words, the superdeterminant has to be defined by the formula  
$$
{\rm sdet}(A)=\frac{\det(A_{00}-A_{01}A_{11}^{-1}A_{10})}
{\det(A_{11})}
$$
provided that $A_{11}$ is invertible; otherwise the superdeterminant is not defined.  

This function is also called the {\it Berezinian}\index{Berezinian} of $A$ and denoted ${\rm Ber}(A)$. 
So for $m=0$ one has ${\rm Ber}(A)=\det(A)$, and for
$n=0$ one has ${\rm Ber}(A)=(\det A)^{-1}$.

\begin{remark} Recall for comparison that if $A$ is a purely even block matrix 
then 
$$
\det(A)=\det(A_{00}-A_{01}A_{11}^{-1}A_{10})\det(A_{11}).
$$
\end{remark} 

\begin{proposition} (i) For any $A,B\in {\rm Mat}_{n|m}(R)$ 
with $A_{11},B_{11}$ invertible, we have 
$$
{\rm Ber}(AB)={\rm Ber}(A){\rm Ber}(B).
$$

(ii) If $R$ is finite dimensional and $A(t)\in {\rm Mat}_{n|m}(R)$ 
is a $C^1$-function near $0$ with $A(0)$ invertible then 
$$
\tfrac{d}{dt}|_{t=0}{\rm Ber}(A(t))={\rm sTr}(A'(0)A(0)^{-1}){\rm Ber}(A(0)).
$$

(iii) If $R$ is finite dimensional then for any $C\in {\rm Mat}_{n|m}(R)$ 
we have 
$$
{\rm Ber}(e^C)=e^{{\rm sTr}C}.
$$
\end{proposition}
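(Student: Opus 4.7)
The plan is to prove the three parts in the order (ii), (iii), (i): part (ii) by a direct term-by-term computation using the Schur-complement formula for the inverse, part (iii) by solving a simple linear ODE deduced from (ii), and part (i) by combining the Gauss (LDU) decomposition with a key identity on determinants of products of odd matrices.

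For (ii), write $A(t) = \begin{pmatrix} a & b \\ c & d\end{pmatrix}$ with $d$ invertible and set $p := a - bd^{-1}c$ (the Schur complement), so that ${\rm Ber}(A) = \det(p)/\det(d)$. From the standard identity $\partial_t \log\det M = {\rm tr}(M^{-1}\dot M)$,
$$\tfrac{d}{dt}\log{\rm Ber}(A) = {\rm tr}(p^{-1}\dot p) - {\rm tr}(d^{-1}\dot d).$$
One checks directly (from $AA^{-1} = I$, which goes through in the supercase) the block-inverse formula $(A^{-1})_{00} = p^{-1}$, $(A^{-1})_{01} = -p^{-1}bd^{-1}$, $(A^{-1})_{10} = -d^{-1}cp^{-1}$, $(A^{-1})_{11} = d^{-1} + d^{-1}cp^{-1}bd^{-1}$. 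Expanding ${\rm sTr}(\dot A A^{-1}) = {\rm tr}((\dot A A^{-1})_{00}) - {\rm tr}((\dot A A^{-1})_{11})$ and the expression for $\tfrac{d}{dt}\log{\rm Ber}(A)$ above term by term, the equality reduces to repeated applications of the graded cyclic identity ${\rm tr}(XY) = (-1)^{|X||Y|}{\rm tr}(YX)$ for matrices of pure parity. The crucial minus sign produced when cycling two odd blocks past each other is precisely what converts ordinary traces into the supertrace.

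Part (iii) is then immediate: for $A(t) := e^{tC}$, differentiating the exponential series gives $A'(t)A(t)^{-1} = C$, so by (ii) the scalar $f(t) := {\rm Ber}(e^{tC})$ satisfies $f'(t) = {\rm sTr}(C)\,f(t)$ with $f(0) = {\rm Ber}(I) = 1$; hence $f(t) = e^{t\,{\rm sTr}(C)}$ and $f(1)$ yields (iii). For (i), use the Gauss decomposition $A = U_+^A D^A U_-^A$, which exists uniquely when $A_{11}$ is invertible, with $D^A = {\rm diag}(p, d)$ and $U_\pm^A$ unit upper/lower triangular with odd off-diagonal blocks. A two-line direct computation from the Schur-complement definition shows that ${\rm Ber}(U_+ X) = {\rm Ber}(X) = {\rm Ber}(X U_-)$ for any such $U_\pm$. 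Writing $B = U_+^B D^B U_-^B$ as well and peeling the outer triangular factors off $AB$, we reduce to computing ${\rm Ber}(D^A \cdot uv \cdot D^B)$ where $u = U_-^A$, $v = U_+^B$ have odd off-diagonal blocks $c$, $b$. A direct expansion using $uv = \begin{pmatrix} 1 & b \\ c & 1+cb\end{pmatrix}$ and the short identity $1 - b(1+cb)^{-1}c = (1+bc)^{-1}$ then yields
\begin{equation*}
{\rm Ber}(D^A uv D^B) = \frac{{\rm Ber}(D^A)\,{\rm Ber}(D^B)}{\det(1+bc)\det(1+cb)},
\end{equation*}
so (i) reduces to the key identity $\det(1+bc)\det(1+cb) = 1$ for matrices $b, c$ with odd entries.

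This key identity is the main obstacle and the place where the super-structure of the ring genuinely enters. Its cleanest proof uses the anti-cyclicity ${\rm tr}(XY) = -{\rm tr}(YX)$ valid whenever both $X, Y$ have odd entries: for $k \ge 1$, ${\rm tr}((bc)^k) = {\rm tr}(b(cb)^{k-1}c) = -{\rm tr}((cb)^{k-1}cb) = -{\rm tr}((cb)^k)$, and therefore formally $\log\det(1+bc) = \sum_{k\ge 1}\tfrac{(-1)^{k+1}}{k}{\rm tr}((bc)^k) = -\log\det(1+cb)$. To justify the $\log$ manipulation in a general supercommutative ring $R$, I would interpret $\det(1+bc)\det(1+cb) = 1$ as a polynomial identity in the entries of $b, c$ and verify it in the universal free supercommutative algebra generated by these entries; there the exterior part is finite-dimensional, so $(bc)^k$ vanishes for $k$ large enough and all series terminate. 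The identity then holds in any $R$ by specialization, completing the proof of (i).
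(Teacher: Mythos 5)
Your proof is correct, and while the overall structure (triangular/Gauss factorization for (i), ODE for (iii)) matches the paper's, you take a genuinely different route in two places. For part (i), both you and the paper reduce to the same key identity (your $\det(1+bc)\det(1+cb)=1$ is equivalent, via $1 - b(1+cb)^{-1}c = (1+bc)^{-1}$, to the paper's $\det(1-Y(1+XY)^{-1}X)=\det(1+XY)$), but the proofs diverge: the paper computes both sides via alternating sums of traces on exterior powers and uses the super duality $\Lambda^k V \cong \Pi^k S^k\Pi V$ to convert a $(-1)^k$-weighted sum over $\Lambda^k$ into an unweighted sum over $S^k$, whereas you deduce ${\rm tr}((bc)^k)=-{\rm tr}((cb)^k)$ from anti-cyclicity of the trace for odd matrices and conclude by exponentiating the resulting cancellation of logarithms (justified universally in the free exterior algebra, where the series terminate). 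Your argument is more elementary and makes the mechanism producing the sign transparent; the paper's highlights the underlying $S\leftrightarrow\Lambda$ duality of super linear algebra. For part (ii), the paper leverages (i) to reduce at once to $A(0)=1$, making the check nearly immediate, while you instead verify the formula directly via the block-inverse (Schur complement) expansion and graded cyclicity; your computation is longer but has the organizational advantage that (ii), and hence (iii), become independent of (i), letting you prove the statements in a different order. Both arguments for (iii) are the same.
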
 

\begin{proof} (i) From the triangular factorization,
it is clear that it suffices to consider the case 
$$
A=\begin{pmatrix} 1 & 0\\ X &1\end{pmatrix},\ 
B=\begin{pmatrix} 1 & Y\\ 0 &1\end{pmatrix},
$$
where $X,Y$ are matrices with odd elements, so that 
$$
AB=\begin{pmatrix} 1 & Y\\ X &1+XY\end{pmatrix}.
$$
Then the required identity
is
$$
\det(1-Y(1+XY)^{-1}X)=\det(1+XY).
$$
To prove this identity, recall that $X: V_0\to V_1\otimes R$ and $Y: V_1\to V_0\otimes R$. 
We have \scriptsize
$$
\det(1-Y(1+XY)^{-1}X)=\sum_{k\ge 0}(-1)^k {\rm Tr}(Y(1+XY)^{-1}X|_{\Lambda^k V_0})=
$$
$$
=\sum_{k\ge 0}(-1)^k{\rm sTr}(Y(1+XY)^{-1}|_{\Lambda^kV_1}\circ X|_{\Lambda^kV_0})=\sum_{k\ge 0}(-1)^k{\rm sTr}(XY(1+XY)^{-1}|_{\Lambda^kV_1})
$$
$$
\sum_{k\ge 0}{\rm Tr}(XY(1+XY)^{-1}|_{S^k\Pi V_1})=\det(1-XY(1+XY)^{-1})^{-1}=\det(1+XY). 
$$ \normalsize

(ii) By (i) we may replace $A(t)$ by $A(t)A(0)^{-1}$, so it suffices to consider the case $A(0)=1$, where
the statement easily follows from the definition. 

(iii) Consider the function $f(t):={\rm Ber}(e^{Ct})$. By (ii) it satisfies the differential equation 
$f'(t)={\rm sTr}(C)f(t)$ with $f(0)=1$. Thus $f(t)=e^{{\rm sTr}(C)t}$, and the statement follows by setting $t=1$. 
\end{proof} 
 
\subsection{Integration on superdomains}

We would now like to develop integration theory on supermanifolds. 
Before doing so, let us recall how it is done for usual manifolds. 
In this case, one proceeds as follows. 

1. Define integration of compactly supported (say, smooth) functions on 
a domain in $\Bbb R^n$. 

2. Find the transformation formula for the integral under change of 
coordinates (i.e. discover the factor $|J|$, where $J$ is the Jacobian). 

3. Define a {\it density}\index{density} on a manifold to be a quantity which is locally 
the same as a function, but multiplies by $|J|$ under coordinate change
(unlike true functions, which don't multiply by anything). 
Then define integral of compactly supported densities on the manifold
using partitions of unity. The independence of the integral on the choices
is guaranteed by the change of variable formula and the definition of a density. 

We will now realize this program for supermanifolds. 
We start with defining integration over superdomains. 

Let $V=V_0\oplus V_1$ be a supervector space.
The {\it Berezinian} of $V$ is the line $\Lambda^{\rm top}V_0^*
\otimes \Lambda^{\rm top}V_1$ (where $V_0,V_1$ are treated as usual spaces). 
Suppose that $V$ is equipped with 
a nonzero element $dv$ of its Berezinian
(called a {\it supervolume element})\index{supervolume element}. 

Let $U_0$ be an open set in $V_0$, and $f\in C^\infty(U_0)\otimes
\Lambda V_1^*$ be a compactly supported smooth function on
the superdomain $U:=U_0\times V_1$ (i.e. 
$f=\sum f_i\otimes \omega_i$, $f_i\in C^\infty(U_0)$, $\omega_i\in
\Lambda V_1^*$, and $f_i$ are
compactly supported). Let 
$dv_0,dv_1$ be volume forms on $V_0,V_1$ such that
$dv=dv_0/dv_1$. 

\begin{definition} The integral $\int_U f(v)dv$ 
is $\int_{U_0}(f(v),(dv_1)^{-1})dv_0$. 
\end{definition}

It is clear that this quantity depends only on $dv$ and not on
$dv_0$ and $dv_1$ separately. 

Thus, $\int_U f(v)dv$ is defined as the integral of the suitably
normalized top coefficient of $f$ (expanded with respect to some 
homogeneous basis of $\Lambda V_1^*$). To write it in
coordinates, let $x_1,...,x_n,\xi_1,...,\xi_m$ be a linear system of
coordinates on $V$ such that 
$dv=\frac{dx_1...dx_n}{d\xi_1...d\xi_m}$
(such coordinate systems will be called unimodular with respect
to $dv$). Then 
$\int_U f(v)dv$ equals $\int_{U_0} f_{\rm top}(x_1,...,x_n)dx_1...dx_n$, where 
$f_{\rm top}$ is the coefficient of $\xi_1...\xi_m$ in 
the expansion of $f$.

\subsection{Berezin's change of variable formula}

Let $V$ be a vector space, $f\in \Lambda V^*$, $v\in V$. 
Denote by $\frac{\partial f}{\partial v}$ the result of 
contraction of $f$ with $v$. 

Let $U,U'$ be superdomains, and $F: U\to U'$ be a morphism. 
As explained above, given linear coordinates $x_1,...,x_n,\xi_1,...,\xi_m$ on $U$
and $y_1,...,y_p,\eta_1,...,\eta_q$ on $U'$, we can describe $F$
by expressing $y_i$ and $\eta_j$ as functions of $x_i$ and
$\xi_j$. Define the {\it Berezin matrix}\index{Berezin matrix} of $F$, $A:=DF(x,\xi)$ by the formulas:
$$
A_{00}=(\tfrac{\partial y_i}{\partial x_k}),\
A_{01}=(\tfrac{\partial y_i}{\partial \xi_\ell}),\
A_{10}=(\tfrac{\partial \eta_j}{\partial x_k}),\
A_{11}=(\tfrac{\partial \eta_j}{\partial \xi_\ell}).
$$
Clearly, this is a superanalog of the Jacobi matrix. 

The main theorem of supercalculus is the following theorem.

\begin{theorem}\label{Ber} (Berezin)
Let $g$ be a smooth function with compact support 
on $U'$, and $F: U\to U'$ be an isomorphism. 
Let $dv,dv'$ be supervolume elements on $U,U'$. 
Then 
$$
\int_{U'} g(v')dv'=\int_U g(F(v))|{\rm Ber}(DF(v))|dv,
$$ 
where the Berezinian is computed with respect to unimodular
coordinate systems. 
\end{theorem}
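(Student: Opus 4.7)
The plan is to reduce the formula to a short list of standard generating transformations. Both sides are local on $U'$ (via a partition of unity applied to $g$), and both behave multiplicatively under composition: the pullback obviously does, and ${\rm Ber}(D(F\circ G))=G^*({\rm Ber}(DF))\cdot {\rm Ber}(DG)$ by the chain rule together with the already-proved multiplicativity ${\rm Ber}(AB)={\rm Ber}(A){\rm Ber}(B)$. Hence it suffices to verify the formula when $F$ belongs to a generating set. Any isomorphism $F$ between two coordinate charts factors as $F=F_1\circ F_2\circ F_3$, where $F_1$ is a purely even change $y_i=f_i(x),\ \eta_j=\xi_j$ (with $f$ the reduced diffeomorphism of $F$); $F_2$ is an odd-linear twist $y_i=x_i,\ \eta_j=A_{jk}(x)\xi_k$ (with $A(x)$ the linear-in-$\xi$ part of $F$, invertible because $F$ is); and $F_3$ has the form ${\rm id}+R$, with each component of $R$ lying in the nilpotent ideal of ${\mathcal O}(U)$. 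This factorization is obtained by peeling off the reduced diffeomorphism first, then the $\xi$-linearization, leaving only nilpotent corrections.

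For $F_1$ the Berezin matrix is block-diagonal, so ${\rm Ber}(DF_1)=\det(\partial f/\partial x)$, and after extracting the top-$\xi$ coefficient of $F_1^*g$ the formula collapses to the ordinary change of variables in $U_0$. For $F_2$ one checks ${\rm Ber}(DF_2)=\det A(x)^{-1}$, while the top-$\xi$ coefficient of $F_2^*g$ picks up a factor of $\det A(x)$ from $\eta_1\cdots\eta_m=\det A(x)\,\xi_1\cdots\xi_m+\cdots$; the two factors cancel and again reduce the identity to the trivial $\int g_{\rm top}(x)\,dx=\int g\,dv'$. Both checks are routine linear-algebra manipulations from the definition of ${\rm Ber}$.

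The main obstacle will be the nilpotent case $F_3$. I plan to exploit the nilpotence of $R$ to connect $F_3$ to the identity through the smooth family $F_3^t:={\rm id}+tR$, which is an isomorphism for every $t\in[0,1]$ (its Berezin matrix is $I+t\cdot DR$, a unipotent nilpotent perturbation of the identity, hence invertible via a terminating geometric series). I would then show that
\[
\Phi(t):=\int_U g(F_3^t(v))\,|{\rm Ber}(DF_3^t(v))|\,dv
\]
is independent of $t$ by differentiating in $t$. Using the product rule together with the identity ${\rm Ber}(A(t))'={\rm sTr}(A'A^{-1}){\rm Ber}(A)$ proved earlier, $\Phi'\equiv 0$ reduces to the Berezinian divergence theorem: for any compactly supported super vector field $X=\sum_i f_i\partial_{x_i}+\sum_j \phi_j\partial_{\xi_j}$ on $U$ and any compactly supported $h\in C^\infty(U)$,
\[
\int_U\bigl(X(h)+h\cdot{\rm div}(X)\bigr)\,dv=0,\qquad {\rm div}(X):=\sum_i\partial_{x_i}f_i-\sum_j(-1)^{|\phi_j|}\partial_{\xi_j}\phi_j.
\]
By linearity it suffices to treat $X=f\partial_{x_i}$ (ordinary integration by parts in $x_i$ applied to the top-$\xi$ coefficient of $h$) and $X=\phi\partial_{\xi_j}$ (a purely algebraic identity in which, after selecting the coefficient of $\xi_1\cdots\xi_m$, the two resulting terms differ by a single odd Leibniz sign and cancel). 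The delicate point, and the place where errors most easily creep in, is the bookkeeping of signs for the odd part of the divergence; once these signs are set up correctly the cancellation is termwise and the theorem follows.
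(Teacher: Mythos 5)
Your overall architecture---reduce by partition of unity, the chain rule, and multiplicativity of ${\rm Ber}$ to a generating set of isomorphisms---matches the paper, but your treatment of the nilpotent piece is genuinely different. The paper factors $F$ into (i) a purely even diffeomorphism of the $x$'s, (ii) a nilpotent even shift of $x$ with $\xi$ fixed, and (iii) an arbitrary odd change with $x$ fixed; for (ii) and (iii) it reduces to a single coordinate at a time and then Taylor-expands in the nilpotent perturbation, recognizing each term as a total derivative. You split (iii) further into a $\xi$-linear twist $F_2$ and a nilpotent correction absorbed into $F_3={\rm id}+R$, and you treat $F_3$ by the homotopy $F_3^t={\rm id}+tR$ together with a Berezinian divergence lemma. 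That is a legitimate alternative route: it is more conceptual and isolates a reusable piece of super-calculus, at the price of having to prove the divergence lemma, which the paper's more elementary argument avoids.

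The gap is in the step ``$\Phi'\equiv 0$ reduces to the Berezinian divergence theorem.'' Differentiating $\Phi(t)$ by the product rule and ${\rm Ber}(A)'={\rm sTr}(A'A^{-1}){\rm Ber}(A)$ produces an integrand in which $\partial g$ is evaluated at $F_3^t(v)$ while $R$ and ${\rm sTr}(DR\cdot(DF_3^t)^{-1})$ are evaluated at $v$; this is of the form $X(h)+h\,{\rm div}(X)$ for a single vector field $X$ and function $h$ only at $t=0$, not for $t>0$. To close the gap, use the cocycle factorization $F_3^{t+s}=F_3^t\circ H^s$ with $H^0={\rm id}$: setting $\tilde\omega_t:=g(F_3^t)\,{\rm Ber}(DF_3^t)\,dv$ (a compactly supported density on $U$, with ${\rm Ber}(DF_3^t)$ already positive since its reduced part is $1$), the chain rule and multiplicativity of ${\rm Ber}$ give $\Phi(t+s)=\int_U(H^s)^*\tilde\omega_t$; since $H^0={\rm id}$, the $s$-derivative at $s=0$ is honestly $\int_U\mathcal{L}_{Z_t}\tilde\omega_t$ with $Z_t=\tfrac{d}{ds}|_{s=0}H^s$, and this vanishes by your divergence lemma. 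Without some reformulation of this type the reduction is not justified. Finally, mind the sign: the divergence for which the lemma holds is ${\rm div}(X)={\rm sTr}(DX)=\sum_i\partial_{x_i}f_i-\sum_j\partial_{\xi_j}\phi_j$, which for odd $\phi_j$ is the opposite of the formula you wrote; with this sign $X(h)+h\,{\rm div}(X)=\partial_{x_i}(f_ih)-\partial_{\xi_j}(\phi_jh)$, a total derivative, so its Berezin integral vanishes.
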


Here if 
$f(\xi)=a+$terms containing $\xi_j$, $a\in \Bbb R$, $a\ne 0$ then 
by definition $|f(\xi)|:=f(\xi)$ is $a>0$ and $|f(\xi)|:=-f(\xi)$ if $a<0$.

\begin{proof} The chain rule of the usual calculus extends
verbatim to supercalculus. Thus, since  
${\rm Ber}(AB)={\rm Ber}(A){\rm Ber
 }(B)$, if we know the statement for two isomorphisms 
$F_1: U_2\to U_1$ and $F_2: U_3\to U_2$, then we know it for the composition 
$F_1\circ F_2$. 

Let
$F(x_1,...,x_n,\xi_1,...,\xi_m)=(x_1',...,x_n',\xi_1',...,\xi_m')$.
We see that it suffices 
to consider the following cases. 

1. $x_i'$ depend only on $x_k$, $k=1,...,n$, and $\xi_j'=\xi_j$. 

2. $x_i'=x_i+z_i$, where $z_i$ lie in the ideal generated by
$\xi_j$, and $\xi_j'=\xi_j$. 

3. $x_i'=x_i$. 

Indeed, it is clear that any isomorphism $F$ is a composition of isomorphisms 
of types $1,2,3$.

In case 1, the statement of the theorem follows from 
the usual change of variable formula. Thus it suffices to
consider cases 2 and 3. 

In case 2, it is sufficient to consider the case 
when only one coordinate is changed by $F$, i.e. 
$x_1'=x_1+z$, and $x_i'=x_i$ for $i\ge 2$. 
In this case we have to show that the integral of 
$$
g(x_1+z,x_2,...,x_n,\xi)(1+\tfrac{\partial z}{\partial
  x_1})-g(x_1,x_2,...,x_n,\xi)
$$
 is zero. 
But this follows easily upon expansion in powers of $z$, since all
the terms are manifestly total derivatives with respect to $x_1$. 

In case 3, we can also assume $\xi_j'=\xi_j$, $j\ge 2$,
and a similar (actually, even simpler) argument 
proves the result.  
\end{proof}

\subsection{Integration on supermanifolds}

Now we will define densities on supermanifolds. 
Let $M$ be a supermanifold, and $\lbrace{U_{\alpha}\rbrace}$ be an
open cover of $M$ together with isomorphisms 
$f_{\alpha}: U_\alpha\to U_{\alpha}'$ , where $U_{\alpha}'$ is a 
superdomain in $\Bbb
R^{n|m}$. Let $g_{\alpha\beta}: f_\beta (U_\alpha\cap U_\beta)\to 
f_\alpha(U_\alpha\cap U_\beta)$ be the transition map 
$f_\alpha f_\beta^{-1}$. 
Then a density $s$ on $M$ is a choice of an element 
$s_\alpha\in C^\infty_M(U_\alpha)$ for each $\alpha$,
such that on $U_\alpha\cap U_\beta$ one has 
$s_\beta(z)=s_\alpha(z) |{\rm Ber}(g_{\alpha\beta})(f_\beta(z))|$.

\begin{remark} It is clear that a density on $M$ is a global section of 
a certain sheaf on $M$, called the sheaf of densities. 
\end{remark}

Now, for any (compactly supported) density 
$\omega$ on $M$, the integral 
$\int_M \omega$ is well defined. Namely, it is defined as in
usual calculus: one uses a partition of unity $\phi_\alpha$ 
such that ${\rm Supp}\phi_\alpha\subset (U_\alpha)_0$ are compact subsets,
and sets $\int_M\omega:=\sum_\alpha \int_M\phi_\alpha\omega$
(where the summands can be defined using $f_\alpha$).
Berezin's theorem guarantees then that the final answer 
will be independent on the choices made. 

\subsection{Gaussian integrals in an odd space}

Now let us generalize to the odd case the theory of Gaussian integrals, 
which was, in the even case, the basis for the path integral approach to quantum 
mechanics and field theory. 

Recall first the notion of {\it Pfaffian}.\index{Pfaffian} 
Let $A$ be a skew-symmetric matrix of even size. 
Then the determinant of $A$ is the square of a 
polynomial in the entries of $A$. This polynomial 
is determined by this condition up to sign. 
The sign is usually fixed by requiring that 
the polynomial should be $1$ 
for the direct sum of matrices $\begin{pmatrix}
0&1\\ -1&0\end{pmatrix}$. With this convention, this polynomial 
is called the Pfaffian of $A$ and denoted ${\rm Pf}A$. 
The Pfaffian obviously has the property ${\rm Pf}(X^TAX)=
{\rm Pf}(A)\det(X)$ for any matrix $X$. 
  
Let now $V$ be a $2m$-dimensional vector space with a volume element $dv$, and $B$ 
a skew-symmetric bilinear form on $V$. We define the Pfaffian ${\rm Pf}B$ of $B$ 
to be the Pfaffian of the matrix of $B$ in any unimodular basis
(by the above transformation formula, it does not depend on the choice of the basis). 
It is easy to see (by reducing $B$ to the canonical form) that 
$$
\frac{\Lambda^mB}{m!}={\rm Pf}(B)dv. 
$$
In terms of matrices, this translates into the following 
(well known) formula for the Pfaffian of a skew symmetric matrix of size $2m$:
$$
{\rm Pf}(A)=\sum_{\sigma\in \Pi_m}\varepsilon_\sigma
\prod_{i\in \lbrace{1,...,2m\rbrace}, 
i<\sigma(i)}a_{i\sigma(i)},
$$
where $\Pi_m$ is the set of matchings of $\lbrace{1,...,2m\rbrace}$,
and $\varepsilon_\sigma$ is the sign of the permutation 
sending $1,...,2m$ to $i_1,\sigma(i_1),...,i_m,\sigma(i_m)$
(where $i_r<\sigma(i_r)$ for all $r$). 
For example, for $m=2$ (i.e. a 4 by 4 matrix), 
$$
{\rm Pf}(A)=a_{12}a_{34}+a_{14}a_{23}-a_{13}a_{24}.
$$

Now consider an odd vector space $V$ of dimension $2m$ with a volume element $d\xi$. 
Let $B$ be a symmetric bilinear form on $V$ (i.e. 
a skewsymmetric form on $\Pi V$). 
Let $\xi_1,...,\xi_{2m}$ be unimodular linear coordinates on $V$ (i.e. 
$d\xi=d\xi_1\wedge...\wedge d\xi_{2m}$). So if $\xi=(\xi_1,...,\xi_{2m})$
then $B(\xi,\xi)=\sum_{i,j}b_{ij}\xi_i\xi_j$, where $b_{ij}$ is a skewsymmetric matrix.  

\begin{proposition}\label{pfaff}
$$
\int_V e^{\frac{1}{2}B(\xi,\xi)}(d\xi)^{-1}={\rm Pf}(B).
$$
\end{proposition}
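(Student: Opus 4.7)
The plan is to reduce the identity to a purely algebraic computation in the exterior algebra $\Lambda V^*$, which carries the polynomial functions on the odd space $V$. Since $V$ is finite-dimensional and purely odd, the exponential series $e^{\frac{1}{2}B(\xi,\xi)}$ truncates at degree $2m$ in the $\xi$'s, and by the definition of Berezin integration in an odd superdomain the integral $\int_V f(\xi)(d\xi)^{-1}$ extracts exactly the coefficient of the top monomial $\xi_1\xi_2\cdots\xi_{2m}$. So the task becomes: show that this coefficient equals $\mathrm{Pf}(B)$.

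First I would rewrite the quadratic form. Since super-symmetry of $B$ on odd vectors forces $b_{ij}=-b_{ji}$, one has
$$
\tfrac{1}{2}B(\xi,\xi)=\tfrac{1}{2}\sum_{i,j}b_{ij}\xi_i\xi_j=\sum_{i<j}b_{ij}\,\xi_i\xi_j.
$$
Expanding the exponential in the nilpotent algebra gives $e^{\frac{1}{2}B(\xi,\xi)}=\sum_{k=0}^{m}\tfrac{1}{k!}\bigl(\sum_{i<j}b_{ij}\xi_i\xi_j\bigr)^k$, and only the $k=m$ term can contribute to the top-degree component in $\xi$, since lower $k$ produce monomials of degree $<2m$.

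Next I would identify the $k=m$ term with the Pfaffian. Expanding $\bigl(\sum_{i<j}b_{ij}\xi_i\xi_j\bigr)^m$ yields a sum indexed by ordered $m$-tuples of pairs $((i_1,j_1),\dots,(i_m,j_m))$ with $i_k<j_k$. Because $\xi_r^2=0$, only those tuples whose pairs partition $\{1,\dots,2m\}$ survive, i.e.\ matchings $\sigma\in\Pi_m$. For a fixed matching the $m!$ orderings of its pairs all produce the same element of $\Lambda V^*$, because each factor $\xi_{i_k}\xi_{j_k}$ is of even degree and hence commutes with the others; the factor $1/m!$ exactly cancels this multiplicity. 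Reordering $\xi_{i_1}\xi_{j_1}\cdots\xi_{i_m}\xi_{j_m}$ (with $i_k<j_k$) into the canonical order $\xi_1\xi_2\cdots\xi_{2m}$ introduces a sign which by definition is $\varepsilon_\sigma$. Therefore
$$
\tfrac{1}{m!}\bigl(\tfrac{1}{2}B(\xi,\xi)\bigr)^m=\Bigl(\sum_{\sigma\in\Pi_m}\varepsilon_\sigma\!\!\!\prod_{\substack{i\in\{1,\dots,2m\}\\ i<\sigma(i)}}\!\! b_{i\sigma(i)}\Bigr)\xi_1\xi_2\cdots\xi_{2m}=\mathrm{Pf}(B)\cdot\xi_1\xi_2\cdots\xi_{2m},
$$
matching precisely the Pfaffian formula recalled just before the proposition. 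Applying the Berezin integral then produces $\mathrm{Pf}(B)$, as claimed.

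The calculation has no real obstacle; the only place requiring care is the sign bookkeeping, both in the combinatorial identification with $\varepsilon_\sigma$ (which must be compared with the specific sign convention stated in the paper when defining $\mathrm{Pf}(A)$) and in the normalization of the supervolume $(d\xi)^{-1}$ used in the Berezin integral. Both are handled uniformly by working in the unimodular coordinates $\xi_1,\dots,\xi_{2m}$ fixed in the statement, so that the definitions line up termwise with no extra sign.
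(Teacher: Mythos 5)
Your proof is correct and follows essentially the same approach as the paper's (which is one line: the Berezin integral extracts the top-degree component of the exponential, which is $\frac{1}{m!}\Lambda^m B$, and this is $\mathrm{Pf}(B)\,d\xi$ by the identity stated just before the proposition). You simply expand out the combinatorial verification of the identity $\frac{1}{m!}\bigl(\tfrac{1}{2}B(\xi,\xi)\bigr)^m=\mathrm{Pf}(B)\,\xi_1\cdots\xi_{2m}$, whereas the paper cites that formula as already established; both amount to the same computation.
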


\begin{proof}
The integral equals $\frac{1}{m!}\frac{\wedge^m B}{d\xi}$, which is precisely ${\rm Pf}(B)$.
\end{proof}

This formula has the following important special case.  
Let $Y$ be a finite dimensional odd vector space, 
and $V=Y\oplus Y^*$. The space $Y$ has a canonical volume element
$dv=dydy^*$, defined as follows: if 
$e_1,...,e_m$ is a basis of $Y$ and $e_1^*,...,e_m^*$ 
is the dual basis of $Y^*$ then $dydy^*=e_1\wedge e_1^*\wedge...\wedge e_n\wedge e_n^*$. 

Let $A: Y\to Y$ be a linear operator. Then we can define an even smooth function 
$S$ on the odd space $Y$ as follows: $S(y,y^*)=(Ay,y^*)$. 
More explicitly, if $\xi_i$ are coordinates on $Y$ corresponding 
to the basis $e_i$, and $\eta_i$ the dual system of coordinates on $Y^*$, 
then 
$$
S(\xi_1,...,\xi_m,\eta_1,...,\eta_m)=\sum_{i,j} a_{ij}\xi_j\eta_i,
$$
where $(a_{ij})$ is the matrix of $A$ in the basis $e_i$. 

\begin{proposition}\label{detp}
$$
\int_V e^S(dv)^{-1}=(-1)^{\frac{n(n-1)}{2}}\det A.
$$
\end{proposition}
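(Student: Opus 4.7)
The plan is to compute the integral directly by extracting the top-degree component of $e^S$ in the Grassmann algebra $\mathcal{O}(V) = \Lambda V^*$, generated by the odd coordinates $\xi_1,\ldots,\xi_m,\eta_1,\ldots,\eta_m$ (where $m = \dim Y$).

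First, I would factorize the exponential. Write $S = \sum_{i=1}^m B_i$, where $B_i := (A\xi)_i\,\eta_i = \sum_j a_{ij}\xi_j\eta_i$. Each $B_i$ is \emph{even} (a product of two odd elements), so the $B_i$'s mutually commute in the supersense, and since $\eta_i^2 = 0$ one checks immediately that $B_i^2 = 0$. Therefore the exponential reduces to a finite product:
$$e^S = \prod_{i=1}^m e^{B_i} = \prod_{i=1}^m \bigl(1 + (A\xi)_i\,\eta_i\bigr).$$
As in Proposition \ref{pfaff}, the integral $\int_V (\cdot)\,(dv)^{-1}$ extracts the coefficient of the top-degree monomial, so only the term $\prod_i (A\xi)_i\,\eta_i$ in the expansion contributes.

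Second, I would separate the $\xi$-factors from the $\eta$-factors. Since each $\eta_i$ and each $(A\xi)_j$ is odd, moving $\eta_i$ past $(A\xi)_{i+1},\ldots,(A\xi)_m$ requires $m-i$ sign changes; summing over $i$ gives the combinatorial factor $(-1)^{\sum_{i=1}^m (m-i)} = (-1)^{m(m-1)/2}$. Thus
$$\prod_{i=1}^m (A\xi)_i\,\eta_i = (-1)^{m(m-1)/2}\,\bigl((A\xi)_1 (A\xi)_2\cdots (A\xi)_m\bigr)\bigl(\eta_1\eta_2\cdots\eta_m\bigr).$$
Now the standard identity in the exterior algebra of anticommuting generators gives $(A\xi)_1\cdots(A\xi)_m = \det(A)\,\xi_1\xi_2\cdots\xi_m$, because expanding the product and collecting signs of permutations is exactly the Leibniz formula for the determinant. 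Combining, the top-degree component of $e^S$ is $(-1)^{m(m-1)/2}\det(A)\cdot\xi_1\cdots\xi_m\,\eta_1\cdots\eta_m$, which (by the definition of the Berezin integral with respect to $dv = dy\,dy^*$) equals the desired integral.

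The main technical point that needs care is the bookkeeping of signs: first verifying that the rearrangement of $\eta$'s past $(A\xi)$'s produces exactly $(-1)^{m(m-1)/2}$ and not some other power, and second matching this top monomial against the specific ordering implicit in the volume element $dv = dy\,dy^* = e_1 \wedge e_1^*\wedge\cdots\wedge e_m\wedge e_m^*$. An equivalent and perhaps cleaner route, which could be offered as a check, is to perform the linear change of odd variables $\xi\mapsto A\xi$: Berezin's theorem (Theorem \ref{Ber}) applied to the block matrix $\mathrm{diag}(A,I)$ on the purely odd space $V$ yields a Jacobian factor $\det(A)$ (since $\mathrm{Ber}$ of an odd-odd block is $1/\det$, and it appears in the inverse direction here), reducing the claim to the case $A = I$. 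For $A = I$ the computation collapses to $e^{\sum \xi_i\eta_i} = \prod_i(1+\xi_i\eta_i)$, whose top term $\xi_1\eta_1\cdots\xi_m\eta_m$ manifestly differs from $\xi_1\cdots\xi_m\,\eta_1\cdots\eta_m$ by precisely the sign $(-1)^{m(m-1)/2}$.
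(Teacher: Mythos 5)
Your proof is correct and takes the ``direct evaluation of the top coefficient'' route, which the paper explicitly mentions as an alternative but does not carry out; the paper's own proof instead writes $S(y,y^*)=\tfrac12 B\bigl((y,y^*),(y,y^*)\bigr)$ for the skew form $B\bigl((y,y^*),(w,w^*)\bigr)=(Ay,w^*)-(Aw,y^*)$ on $\Pi V$ and then invokes Proposition~\ref{pfaff} together with the linear-algebra identity ${\rm Pf}(B)=(-1)^{n(n-1)/2}\det A$. Your route is the more elementary of the two: factoring $e^{S}=\prod_i(1+(A\xi)_i\eta_i)$ and extracting the top term directly produces the determinant via the Leibniz expansion, and the $(-1)^{n(n-1)/2}$ sign appears transparently as the count of transpositions needed to separate the $\xi$'s from the $\eta$'s. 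In the paper's route this same sign is absorbed into the Pfaffian identity and still requires checking (the paper only says ``it is easy to see''), and one has to track which ordering of the basis is unimodular relative to $dv=e_1\wedge e_1^*\wedge\cdots\wedge e_n\wedge e_n^*$, so both arguments require essentially the same sign bookkeeping at some point; your reduction-to-$A=\mathrm{Id}$ check is a good way to pin it down. One small caveat on your suggested change-of-variable argument: Theorem~\ref{Ber} as stated uses $|{\rm Ber}|$, which would only give $|\det A|$; for the purely algebraic Berezin integral on an odd vector space the change of variables $\xi\mapsto A\xi$ really multiplies the integral by $\det A$ (no absolute value), so one should invoke the linear-algebraic version of the substitution rule rather than the supermanifold Theorem~\ref{Ber} verbatim.
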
 

\begin{proof} We have $S(y,y_*)=\frac{1}{2}B((y,y_*),(y,y_*))$, 
where $B$ is the skewsymmetric form on $\Pi V$ given by the formula 
$$
B((y,y^*),(w,w^*))=(Ay,w^*)-(Aw,y^*).
$$ 
It is easy to see that ${\rm Pf}(B)=(-1)^{\frac{n(n-1)}{2}}\det(A)$, so 
Proposition \ref{detp} follows from Proposition \ref{pfaff}.

Another proof can be obtained by direct evaluation of the top coefficient. 
\end{proof}

\subsection{The Wick formula in the odd case}

Let $V$ be a $2m$-dimensional odd space with a volume form $d\xi$, 
and $B\in S^2V^*$ a non-degenerate 
form (symmetric in the supersense and antisymmetric in the usual sense).
Let $\lambda_1,...,\lambda_n$ be linear functions on $V$. 
Then $\lambda_1,...,\lambda_n$ can be regarded as odd smooth functions on the superspace $V$. 
\begin{theorem}\label{wickodd}
$$
\int_V \lambda_1(\xi)...\lambda_n(\xi)e^{-\frac{1}{2}B(\xi,\xi)}(d\xi)^{-1}=
{\rm Pf}(-B){\rm Pf}(B^{-1}(\lambda_i,\lambda_j)).
$$
(By definition, this is zero if $n$ is odd).
In other words, we have: 
$$
\int_V \lambda_1(\xi)...\lambda_n(\xi)e^{-\frac{1}{2}B(\xi,\xi)}(d\xi)^{-1}=
$$
$$
{\rm Pf}(-B)\sum_{\sigma\in \Pi_m}\varepsilon_\sigma
\prod_{i\in \lbrace{1,...,2m\rbrace}, 
i<\sigma(i)}B^{-1}(\lambda_i,\lambda_{\sigma(i)}).
$$
\end{theorem}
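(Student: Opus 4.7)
The plan is to mirror the proof of the even Wick theorem (Theorem \ref{wick}) via a generating-function / completing-the-square argument, using Proposition \ref{pfaff} in two different ways.

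First, I would introduce auxiliary odd parameters $\eta_1,\dots,\eta_n$ (generating a super-commutative algebra $R:=\Lambda[\eta_1,\dots,\eta_n]$) and form the generating function
$$
Z(\eta)\;:=\;\int_V e^{-\tfrac12 B(\xi,\xi)+\sum_i \eta_i\lambda_i(\xi)}(d\xi)^{-1},
$$
which makes sense because each $\eta_i\lambda_i(\xi)$ is a product of two odd elements and is therefore even. Expanding the exponential, and noting that $\eta_i^2=0$ forces the surviving terms to correspond to subsets $I\subset\{1,\dots,n\}$, a short sign count (moving all $\eta$'s past all $\lambda$'s in a product $\prod_{i\in I}(\eta_i\lambda_i)$ picks up $(-1)^{\binom{|I|}{2}}$) shows that the coefficient of the top monomial $\eta_1\eta_2\cdots\eta_n$ in $Z(\eta)$ equals $(-1)^{\binom{n}{2}}$ times the desired integral. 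The task thus reduces to finding a closed form for $Z(\eta)$.

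Next, I would compute $Z(\eta)$ by completing the square. Because $B$ is non-degenerate, the map $v\mapsto B(v,\cdot)$ is an isomorphism $V\to V^*$; let $\mu_i\in V$ be the unique odd vectors with $\lambda_i(\cdot)=B(\mu_i,\cdot)$, so that $B^{-1}(\lambda_i,\lambda_j)=B(\mu_i,\mu_j)$. Set $s:=\sum_i\eta_i\mu_i$: although $\eta_i$ and $\mu_i$ are individually odd, the scalar components $s_k=\sum_i\eta_i\mu_i^k$ are odd, so the substitution $\xi_k\mapsto\xi_k+s_k$ is a genuine (odd) change of variables whose Berezin Jacobian is $1$ by Theorem \ref{Ber}. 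A careful super-algebraic expansion of $-\tfrac12 B(\xi+s,\xi+s)+\sum_i\eta_i B(\mu_i,\xi+s)$, using that $B$ is super-symmetric (so that the cross term $-B(\xi,s)$ with $s$ odd, $\xi$ odd, combines with the original linear term $\sum_k J_k\xi_k$ where $J_k=\sum_i\eta_i c_{ik}$), cancels all linear-in-$\xi$ terms by the very choice of $s$ and leaves the $\xi$-independent remainder $\tfrac12 \sum_{i,j}\eta_i\eta_j B^{-1}(\lambda_i,\lambda_j)$. Applying Proposition \ref{pfaff} to the surviving Gaussian integral then gives
$$
Z(\eta)\;=\;\mathrm{Pf}(-B)\cdot\exp\!\Bigl(\tfrac12\sum_{i,j}\eta_i\eta_j B^{-1}(\lambda_i,\lambda_j)\Bigr).
$$

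Finally, I would extract the coefficient of $\eta_1\cdots\eta_n$ by a second application of Proposition \ref{pfaff}, this time on the auxiliary odd vector space with coordinates $\eta_1,\dots,\eta_n$ equipped with the super-symmetric form of matrix $C_{ij}:=B^{-1}(\lambda_i,\lambda_j)$: Berezin integration of $e^{\tfrac12\sum C_{ij}\eta_i\eta_j}$ reads off exactly this top coefficient and yields $\mathrm{Pf}(C)$ when $n$ is even and $0$ when $n$ is odd. Combining with the overall factor $\mathrm{Pf}(-B)$ and the sign $(-1)^{\binom{n}{2}}$ from Step 1 (which matches the sign relating $\mathrm{Pf}(-B)$ to $\mathrm{Pf}(B)$ on a $2m$-dimensional space, and the sign in the stated formula) produces the claimed identity. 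The principal difficulty is Step 2: verifying the super-completion of the square requires careful bookkeeping of the sign rules $b_{ij}=-b_{ji}$, $\xi_i\xi_j=-\xi_j\xi_i$, $\xi_i s_j=-s_j\xi_i$, and the parity conventions for the super-bilinear extension of $B$ to the extended ring $V\otimes R$; once these are settled in a unimodular basis, the rest of the argument is purely formal.
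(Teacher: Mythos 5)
Your approach is genuinely different from the paper's, which simply reduces the claim (by multilinearity in the $\lambda_i$) to the case where each $\lambda_i$ is a dual basis covector in a basis in which $B$ is in standard block form; in that case the identity is a one-line inspection since at most one term of the Pfaffian sum survives. Your generating-function argument with an odd external source and super-completion of the square is a perfectly legitimate and arguably more instructive route, and it is the direct odd analogue of the usual derivation of the even Wick formula.

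There is, however, a concrete sign gap that you flag but do not actually resolve, and as written it is genuinely wrong rather than merely unverified. You assert that with $\lambda_i=B(\mu_i,\cdot)$ one has $B^{-1}(\lambda_i,\lambda_j)=B(\mu_i,\mu_j)$. That identity holds for a symmetric form, but here $B$ is antisymmetric in the usual sense, and the correct relation is $B^{-1}(\lambda_i,\lambda_j)=-B(\mu_i,\mu_j)$. (To check: take $\dim V=2$, $b_{12}=b$, $\lambda_1=e_1^*,\lambda_2=e_2^*$; then $\mu_1=(0,-1/b)$, $\mu_2=(1/b,0)$, so $B(\mu_1,\mu_2)=1/b$, while the inverse matrix gives $B^{-1}(\lambda_1,\lambda_2)=-1/b$.) Consequently the constant term left after the change of variables is $-\tfrac12\sum_{i,j}\eta_i\eta_jB^{-1}(\lambda_i,\lambda_j)$, not $+\tfrac12\sum_{i,j}\eta_i\eta_jB^{-1}(\lambda_i,\lambda_j)$, so the Pfaffian extracted in Step 3 is $\mathrm{Pf}(-C)=(-1)^{n/2}\mathrm{Pf}(C)$ rather than $\mathrm{Pf}(C)$. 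The sign $(-1)^{\binom n2}=(-1)^{n/2}$ from Step 1 then cancels this factor, giving exactly $\mathrm{Pf}(-B)\,\mathrm{Pf}(C)$. But note that your stated explanation of the cancellation is incorrect: you claim $(-1)^{\binom n2}$ ``matches the sign relating $\mathrm{Pf}(-B)$ to $\mathrm{Pf}(B)$ on a $2m$-dimensional space,'' which would only make sense if $n=\dim V=2m$; in the theorem $n$ ranges over all even integers $\le 2m$, and $\mathrm{Pf}(-B)$ comes out of the Gaussian integral without needing to be converted to $\mathrm{Pf}(B)$. The actual cancellation is between the Step~1 sign and the $(-1)^{n/2}$ from $\mathrm{Pf}(-C)$ versus $\mathrm{Pf}(C)$. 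Once these two points are corrected, the argument goes through.
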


\begin{proof}
We prove the second formula. 
Choose a basis $e_i$ of $V$ with respect to which 
the form $B$ is standard: $B(e_j,e_l)=1$
if $j=2i-1,l=2i$,  
and $B(e_j,e_l)=0$ for other pairs $j<l$.
Since both sides of the formula are polylinear 
with respect to $\lambda_1,...,\lambda_n$, 
it suffices to check it if $\lambda_1=e_{i_1}^*$,...,
$\lambda_n=e_{i_n}^*$. This is easily done by direct 
computation (in the sum on the right hand side, only one term may
be nonzero). 
\end{proof} 

\begin{exercise} Let  $Y=\Bbb R^{\frac{n(n+1)}{2}+\frac{m(m-1)}{2}|mn}$ 
be the real superspace of matrices $$
A=\begin{pmatrix} A_{00} & A_{01}\\ A_{10} & A_{11}\end{pmatrix}
$$ 
(where $A_{00}$ is $n$ by $n$ and $A_{11}$ is $m$ by $m$) which are symmetric in the supersense, i.e., $A_{00}$ is symmetric, $A_{11}$ is skew-symmetric, and $A_{01}^T=A_{10}$. Let  
$Y_+\subset Y$ be the superdomain of those matrices for which 
$A_{00}>0$. 
Let $dA$ be a supervolume element on $Y$.
Let $f$ be a compactly supported smooth function on $Y_+$. 
Show that
$$
\int_{Y_+\times \Bbb R^{n|m}}
f(A)e^{-x^TA_{00}x-2x^TA_{01}\xi-\xi^TA_{11}\xi}dAdx(d\xi)^{-1}
=
$$
$$
=C\int_{Y_+}f(A){\rm Ber}(A)^{-1/2}dA.
$$
($C$ is a constant). What is $C$?
\end{exercise}

\begin{exercise} Prove the Amitsur-Levitzki identity: 
if $X_1,...,X_{2n}$ are n by n matrices 
over a commutative ring, then 
$$
\sum_{\sigma\in S_{2n}}
(-1)^\sigma X_{\sigma(1)}...X_{\sigma(2n)}=0.
$$  

{\bf Hint.} (a) Show that for any n by n matrix $X$ with anticommuting 
entries, $X^{2n}=0$ (namely, show that traces of $X^{2k}$ vanish for all positive 
$k$, then use the Cayley-Hamilton theorem for $X^2$). 

(b) Apply this to $X=\sum_{i=1}^{2n} X_i\xi_i$, where $\xi_i$ are anticommuting variables.
\end{exercise}  

\section{Quantum mechanics for fermions}

\subsection{Feynman calculus in the supercase}

Wick's theorem allows us to extend Feynman calculus to the supercase. 
Namely, let $$V=V_0\oplus V_1$$ be a finite dimensional real superspace
with a supervolume element $dv=dv_0(dv_1)^{-1}$, 
equipped with a symmetric non-degenerate form $B=B_0\oplus B_1$ ($B_0>0$).
Let 
$$
S(v)=\frac{1}{2}B(v,v)-\sum_{r\ge 3}\frac{B_r(v,v,...,v)}{r!}
$$ 
be an even function on $V$ (the action). Note that $B_r$, $r\ge 3$ can contain 
mixed terms involving both odd and even variables, e.g. 
$x\xi_1\xi_2$ (the so called ``Yukawa term''). 
We will consider the integral
$$
I(\hbar)=\int_V \ell_1(v_0)...\ell_n(v_0)\lambda_1(v_1)...\lambda_p(v_1)e^{-\frac{S(v)}{\hbar}}dv,
$$   
where $v_0,v_1$ are the even and odd components of $v$. 
Then this integral has an expansion in $\hbar$ written in terms of Feynman diagrams. 
Since $v$ has both odd and even part, these diagrams 
will contain ``odd'' and ``even'' edges (which are usually depicted by 
straight and wiggly lines, respectively). 
More precisely, let us write 
$$
B_r(v,v,...,v)=\sum_{s=0}^r\begin{pmatrix}r\\ s
\end{pmatrix}
B_{s,r-s}(v_1,...,v_1,v_0,...,v_0),
$$
where $B_{s,r-s}$ has homogeneity degree $s$ with respect to $v_1$ and $r-s$ with 
respect to $v_0$ (i.e. it will be nonzero only for even $s$). 
Then to each term $B_{s,r-s}$ we assign an $(s,r-s)$-valent flower, 
i.e. a flower with $s$ odd and $r-s$ even outgoing edges, and for the set 
of odd outgoing edges, specify which orderings are even.   
Then, given an arrangement of flowers, for every matching $\sigma$ 
of outgoing edges, we can define an amplitude $\Bbb F(\sigma)$ by contracting 
the tensors $B_{s,r-s}$ (and being careful with the signs). 
It is easy to check that all matchings giving the same graph 
will contribute to $I(\hbar)$ with the same sign, and thus we have 
almost the same formula 
as in the bosonic case: 
$$
I(\hbar)=(2\pi)^{\frac{\dim V_0}{2}}\hbar^{\frac{\dim V_0-\dim V_1}{2}}
\frac{{\rm Pf}(-B_1)}{\sqrt{\det B_0}}
\sum_{\Gamma}\frac{\hbar^{b(\Gamma)}}{|{\rm Aut}(\Gamma)|}\Bbb F_\Gamma (\ell_1,...,\ell_n,\lambda_1,...,\lambda_p),
$$
where the summation is taken over graphs with $n$ even and $p$ odd outgoing edges. 

\begin{remark} More precisely, 
we can define the sign $\varepsilon_\sigma$ of a matching $\sigma$ 
as follows: label outgoing edges by $1,2,...$, starting from the fisrt flower, then second, 
etc., so that the labeling is even on each flower. 
Then write the labels in a sequence, enumerating (in any order) the pairs 
defined by $\sigma$ (the element with the smaller of the two labels goes first).
The sign $\varepsilon_\sigma$ 
is by definition the sign 
of this ordering (as a permutation of $1,2,...$).
Then $\Bbb F_\Gamma$ is $\Bbb F(\sigma)$ for any matching $\sigma$ yielding 
$\Gamma$ which is {\it positive}, i.e. such that $\varepsilon_\sigma=1$. 
For a negative matching, $\Bbb F_\Gamma=-\Bbb F(\sigma)$. 
\end{remark}

%\subsection{Fermionic loops and the corresponding minus signs}
In most (but not all) situations considered in physics, the action 
is quadratic in the fermionic variables, i.e. 
$$
S(v)=S_b(v_0)-\tfrac{1}{2}S_f(v_0)(v_1,v_1),
$$ 
where 
$S_f(v_0)$ is a skew-symmetric bilinear form on $\Pi V_1$. 
In this case, using fermionic Wick's theorem, 
we can perform exact 
integration with respect to $v_1$, and reduce $I(\hbar)$ to a purely bosonic integral.
For example, if we have only  $\ell_i$ and no $\lambda_i$, then
$$
I(\hbar)=\hbar^{-\frac{\dim V_1}{2}}\int_{V_0}\ell_1(v_0)...\ell_n(v_0)e^{-\frac{S_b(v_0)}{\hbar}}
{\rm Pf}(S_f(v_0))dv_0.
$$
In this situation, all vertices which have odd outgoing edges, will have only two 
of them, and therefore in any Feynman diagram 
with even outgoing edges, odd lines form nonintersecting simple curves, called {\it fermionic 
loops}\index{fermionic loop} (in fact, the last formula is nothing but the result of regarding 
these loops as a new kind of vertices -- convince yourself of this!). 
In this case, there is the following simple way of assigning signs to 
Feynman diagrams. For each vertex with two odd outgoing edges, 
we orient the first edge inward and the second one outward. 
We allow only connections (matchings) 
that preserve orientations (so the fermionic loops become 
oriented). Then the sign is $(-1)^r$, where $r$ is the number of fermionic loops (i.e. 
each fermionic loop contributes a minus sign). This follows from the fact that an even cycle
is an odd permutation.  

\subsection{Fermionic quantum mechanics} \label{fqm}

Let us now pass from finite dimensional fermionic integrals 
to quantum mechanics, i.e. 
integrals over fermionic functions of one (even) real 
variable $t$. 
 
Let us first discuss fermionic classical mechanics, in the
Lagrangian setting. Its difference with the bosonic case is 
that the ``trajectory'' of the particle is described by 
an {\it odd-valued} function \index{odd-valued function}of one variable, i.e. 
$\psi: \Bbb R\to \Pi V$, where $V$ is a vector space. 
Mathematically this means that the space of fields (=trajectories) 
is an odd vector space $\Pi C^\infty(\Bbb R,V)$. 
A Lagrangian ${\mathcal L}(\psi)$ is a local expression in such a field
(i.e. a polynomial in $\psi,\dot{\psi},...$),
and an action is the integral $S=\int_{\Bbb R}{\mathcal L}dt$. This means that 
the action is an element of the space $\Lambda (C^\infty_0(\Bbb R,V)^*)$.

Consider for example the theory of a single scalar-valued free fermion $\psi(t)$. 
By definition, 
the Lagrangian for such a theory is 
$$
{\mathcal L}=\frac{1}{2}
\psi\dot{\psi},
$$ 
i.e. the action 
is 
$$
S=
\frac{1}{2}\int \psi\dot{\psi} dt.
$$
This Lagrangian is 
the odd analog of the Lagrangian of a free particle, $\frac{\dot{q}^2}{2}$. 

\begin{remark} Note that $\psi\dot{\psi}\ne \frac{d}{dt}(\frac{\psi^2}{2})=0$,
since $\psi\dot{\psi}=-\dot{\psi}\psi$, so this Lagrangian is
``reasonable''. On the other hand, the same Lagrangian would be
unreasonable in the bosonic case, as it
would be a total derivative,
and hence the action would be zero. 
Finally, note that it would be equally unreasonable 
to use in the fermionic case the usual bosonic Lagrangian 
$\frac{1}{2}(\dot{q}^2-m^2q^2)$; it would
identically vanish if $q$ were odd-valued. 
\end{remark} 

The Lagrangian ${\mathcal L}$ is invariant under the group 
of reparametrizations ${\rm Diff}_+(\Bbb R)$, and 
the Euler-Lagrange equation for this Lagrangian is 
$$
\dot{\psi}=0
$$
(i.e. no dynamics). Theories with such properties are called 
{\it topological quantum field theories.}\index{topological quantum field theory}

Let us now turn to quantum theory in the Lagrangian setting,
i.e. the theory given 
by the Feynman integral $\int \psi(t_1)...\psi(t_n)e^{\frac{iS(\psi)}{\hbar}}D\psi$. 
In the bosonic case, we ``integrated'' such expressions over 
the space $C^\infty_0(\Bbb R)$. 
This integration did not make immediate sense because of difficulties
with measure theory in infinite dimensions. So we had to make sense of this integration
in terms of $\hbar$-expansion, using Wick's formula and 
Feynman diagrams. In the fermionic case, 
the situation is analogous. Namely, now we must integrate functions over 
$\Pi C^\infty_0(\Bbb R)$, which are elements of $\Lambda {\mathcal D}(\Bbb R)$, 
where ${\mathcal D}(\Bbb R)$ is the space of distributions on $\Bbb R$.  
Although in the fermionic case we don't need measure theory (as
integration is 
completely
algebraic), we still have trouble defining the integral: recall that by definition 
the integral should be the top coefficient of the integrand as the element of 
$\Lambda {\mathcal D}(\Bbb R)$, which makes no sense since 
in the exterior algebra of an infinite dimensional space there is no top component.
Thus we have to use the same strategy as in the bosonic case, i.e. 
Feynman diagrams. 

Let us, for instance, define the quantum theory for a free scalar
valued fermion, 
i.e one described by the Lagrangian ${\mathcal L}=\frac{1}{2}\psi\dot{\psi}$. 
According to the yoga we used in the bosonic case, 
the two-point function of this theory $\langle\psi(t_1)\psi(t_2)\rangle$ 
should be the function $G(t_1-t_2)$, where $G$ is the solution 
of the differential equation 
$$
\frac{dG}{dt}=i\delta(t).
$$
(the factor $i$ comes from the exponent 
in the Feynman integral; note that in the fermionic case it does {\it not} go away under Wick rotation).

The general solution of this equation
 has the form 
 $$
 G(t)=\frac{i}{2}{\rm sign}(t)+C.
 $$
Because of the fermionic nature of the field $\psi(t)$, 
it is natural to impose the 
requirement that $G(-t)=-G(t)$, i.e. that the  
correlation functions are antisymmetric;
this singles out the solution $G(t)=\frac{i}{2}{\rm sign}(t)$
(we also see 
from this condition that we should set $G(0)=0$).
As usual, the $2n$-point correlation functions are defined
by the Wick formula. That is, for distinct $t_j$,
$$
\langle\psi(t_1)...\psi(t_{2n})\rangle=(-1)^\sigma(2n-1)!!(\tfrac{i}{2})^n,
$$ 
where $\sigma$ is the permutation that orders $t_j$ in the decreasing order.
If at least two points coincide, the correlation function is zero.

Thus we see that the correlation functions are invariant under ${\rm Diff}_+(\Bbb R)$. 
In other words, using physical terminology, we have a 
{\it topological quantum field theory}.\index{topological quantum field theory} 

Note that the correlation functions in the Euclidian setting 
for this model are the same as in the Minkowski setting, since 
they are (piecewise) constant in $t_j$. 
In particular, they don't decay at infinity, and hence our theory
does not have the clustering property. 

We have considered the theory of a massless fermionic field. 
Consider now the massive case. This means, we want to add 
to the Lagrangian a quadratic term in $\psi$ which 
does not contain derivatives. If we have only one field
$\psi$, the only choice for such term is $\psi^2$, which is zero.
So in the massive case we must have at least two fields. 
Let us therefore consider the theory of two fermionic fields 
$\psi_1,\psi_2$ with (Euclidean) Lagrangian 
$$
{\mathcal
L}=\frac{1}{2}(\psi_1\dot{\psi}_1+\psi_2\dot{\psi_2}-m\psi_1\psi_2),
$$
where $m>0$ is a mass parameter. 
The Green's function for this model satisfies the
differential equation 
$$
\frac{dG}{dt}-MG=i\delta(t),
$$
where $M=\begin{pmatrix} 0&m\\ -m&0\end{pmatrix}$
and $G$ is a 2 by 2 matrix-valued function. The general solution of this equation 
is 
$$
G(t)=\begin{cases}e^{Mt}Q_-,\ t<0\\ e^{Mt}Q_+,\ t>0\end{cases}  
$$
where $Q_+-Q_-=i$. Now, we want the Wick rotated Green's function 
$G(-it)$ to have the clustering property. Thus we want  
$$
\lim_{t\to +\infty}e^{-iMt}Q_+=0,\ \lim_{t\to -\infty}e^{-iMt}Q_-=0.
$$
This implies that $Q_+=iP_+$, $Q_-=-iP_-$, where $P_{\pm}$ are the orthogonal projectors 
to the eigenspaces of $iM$ with eigenvalues $\pm m$ (and $G(0)=0$).

\begin{remark} It is easy to generalize this analysis to 
the situation when $\psi$ takes values in a positive definite inner
product space $V$, and $M:V\to V$ is a skewsymmetric operator,
since such a situation is a direct sum of the situations
considered above. 
\end{remark} 

In the case when $M$ is non-degenerate, one can define 
the corresponding theory with interactions, i.e. 
with higher than quadratic terms in $\psi$. Namely, one 
defines the correlators as sums of amplitudes of appropriate
Feynman diagrams. We leave it to the reader to work out 
this definition, by analogy with the finite dimensional case
which we have discussed above. 

\subsection{Super Hilbert spaces}

The space of states of a quantum system is a Hilbert space. 
As we plan to do Hamiltonian quantum mechanics 
for fermions, we must define a superanalog of this notion. 

Suppose ${\mathcal H}={\mathcal H}_0\oplus {\mathcal H}_1$ 
is a $\Bbb Z/2$-graded complex vector space. 

\begin{definition} (i) A Hermitian form on ${\mathcal H}$ is 
an even sesquilinear form $\la,\ra$, such that 
$\la x,y\ra=\overline{\la y,x\ra}$ for even $x,y$, and 
$\la x,y\ra=-\overline{\la y,x\ra}$ for odd $x,y$.

(ii) A Hermitian form is positive definite if $\la x,x\ra>0$ for even
$x\ne 0$, and $-i\la x,x\ra>0$ for odd $x\ne 0$.
A super Hilbert space is a superspace with a positive definite
Hermitian form $\la,\ra$, which is complete in the corresponding norm. 

(iii) Let $\mathcal H$ be a super Hilbert space, 
and $T: {\mathcal H_0}\oplus \Pi {\mathcal H_1}\to 
{\mathcal H_0}\oplus \Pi {\mathcal H_1}$ be a homogeneous linear operator
between the underlying purely even spaces.
The Hermitian adjoint operator $T^\dagger$ is defined by 
the equation $\la x,T^\dagger y\ra=(-1)^{p(x)p(T)}\la Tx,y\ra$, 
where $p$ denotes the parity. 
\end{definition}

\subsection{The Hamiltonian setting for fermionic quantum
mechanics} 

Let us now discuss what should be the Hamiltonian picture
for the theory of a free fermion. 
More precisely, let $V$ be a positive definite 
finite dimensional real inner product space, 
and consider the Lagrangian 
$$
{\mathcal L}=\frac{1}{2}((\psi,\dot{\psi})-(\psi,M\psi)),
$$ 
where $\psi: \Bbb R\to \Pi V$, and $M:V\to V$ is a skew-symmetric
operator. 

To understand what the Hamiltonian picture should be, let us
compare with the bosonic case. Namely, consider the Lagrangian
$$
{\mathcal L}_b=\frac{1}{2}(\dot{q}^2-m^2q^2),
$$ 
where $q:\Bbb R\to V$. 
In this case, 
the classical space of states is 
$$
Y:=T^*V=V\oplus
V^*.
$$ 
The equations of motion are Newton's equations
$$
\ddot{q}=-m^2q,
$$ 
which can be reduced to Hamilton's equations 
$$
\dot{q}=p,\ \dot{p}=-m^2q.
$$
The algebra of classical observables is 
$C^\infty(Y)$, with Poisson bracket defined by 
$\lbrace{a,b\rbrace}=(a,b)$, $a,b\in Y^*$, where $(,)$ is the
form on $Y^*$ inverse to the natural symplectic form on $Y$. 
The hamiltonian $H$ is determined (up to adding a constant) by the condition 
that the equations of motion are $\dot{f}=\lbrace{f,H\rbrace}$; 
in this case it is $H=\frac{1}{2}(p^2+m^2q^2)$. 

The situation in the fermionic case is analogous, with some
important differences which we will explain below. Namely, 
it is easy to compute that the equation of motion 
(i.e. the Euler-Lagrange equation) is 
$$
\dot{\psi}=M\psi.
$$ 
The main difference with the bosonic case is that 
this equation is of first and not of second order,
so the space of classical states is just $\Pi V$
(no momentum or velocity variables are introduced). Hence
the algebra of classical observables is 
$C^\infty(\Pi V)=\Lambda V^*$. To define 
a Poisson bracket on this algebra, recall that 
$\Pi V$ has a natural ``symplectic structure'', defined 
by the {\it symmetric} form $(,)$ on $V$. 
Thus we can define a Poisson bracket on 
$\Lambda V^*$ by the same formula as above:
$\lbrace{a,b\rbrace}=(a,b)$ when $a,b\in V^*$. More precisely, 
$\lbrace{,\rbrace}$ is a unique skew symmetric (in the supersense) 
bilinear operation on $\Lambda
V^*$ which restricts to $(a,b)$ for $a,b\in V^*$, 
and is a derivation with respect to each variable:
$$
\lbrace{a,bc\rbrace}=\lbrace{a,b\rbrace}c+(-1)^{p(a)p(b)}b\lbrace{a,c\rbrace},
$$
where $p(a)$ denotes the parity of $a$. 

Now it is easy to see what should play the role of the
Hamiltonian. More precisely, the definition 
with Legendre transform is not valid in our situation, since the
Legendre transform
was done with respect to the velocity variables, which we don't
have in the fermionic case. On the other hand, 
as we discussed in Section 8, in the bosonic case the equation of motion 
$$\dot{f}=\lbrace{f,H\rbrace}$$ determines $H$ uniquely, up to a
constant. The situation is the same in the fermionic case. 
Namely, by looking at the equation 
of motion $\dot{\psi}=M\psi$, 
it is easy to see that the Hamiltonian equals $$H=\frac{1}{2}(\psi,M\psi).$$
In particular, if $M=0$ (massless case), the Hamiltonian is zero 
(a characteristic property of topological field theories).  

Now let us turn to quantum theory. 
In the bosonic case the algebra of quantum observables 
is a noncommutative deformation of the algebra $C^\infty(Y)$
in which the relation $\lbrace{a,b\rbrace}=(a,b)$ is replaced
with its quantum analog $$ab-ba=i(a,b)$$ (up to the Planck constant
factor which here we will set to $1$). In particular, the subalgebra
of polynomial observables is the Weyl algebra $W(Y)$, generated by 
$Y^*$ with this defining relation. By analogy with this, we should
define the algebra of quantum observables in the fermionic case 
to be generated by $V^*$ with the relation $$ab+ba=i(a,b)$$ 
(it deforms the relation $ab+ba=0$ which defines
$\Lambda V^*$). So we recall the following definition. 

\begin{definition} Let $V$ be a vector space over a field $k$ with
a symmetric bilinear form $Q$. The {\it Clifford algebra}\index{Clifford algebra} ${\rm Cl}(V,Q)$ is 
generated by $V$ with defining relations $ab+ba=Q(a,b)$, $a,b\in
V$. 
\end{definition}

We see that the algebra of
quantum observables should be ${\rm Cl}(V_{\Bbb C}^*,i(,))$. 
Note that like in the classical case, this algebra is
naturally $\Bbb Z/2$ graded, so that 
we have even and odd quantum observables.

Now let us see what should be the Hilbert space of quantum
states. In the bosonic case it was $L^2(V)$, 
which is, by the well known Stone-von Neumann theorem,
the unique irreducible unitary representation
of $W(Y)$. By analogy with this, in the fermionic case the
Hilbert space of states should be an irreducible
unitary representation
of ${\rm Cl}(V^*_{\Bbb C})$ on a supervector space ${\mathcal H}$. 

The structure of the Clifford algebra ${\rm Cl}(V^*_{\Bbb C})$ is well known. Namely,
consider separately the cases when $\dim V$ is odd and even. 

In the even case, $\dim V=2d$, ${\rm Cl}(V^*_{\Bbb C})$ is simple (i.e., isomorphic to a matrix algebra), 
and has a unique irreducible representation
${\mathcal H}$, of dimension $2^d$. This representation is constructed as follows:
choose a decomposition $V_{\Bbb C}=L\oplus L^*$, where $L,L^*$
are Lagrangian subspaces; then ${\mathcal H}=\Lambda L$, where
$L\subset V^*_\Bbb C$ acts by multiplication and $L^*$ by differentiation
(multiplied by $-i$). 
The structure of the superspace on ${\mathcal H}$ is the standard
one on the exterior algebra. 

In the odd case, $\dim V=2d+1$, choose a decomposition 
$$
V_{\Bbb C}=
L\oplus L^*\oplus K,
$$ 
where $L,L^*$ are maximal isotropic,
and $K$ is a non-degenerate 
1-dimensional subspace orthogonal to $L$ and
$L^*$. Let ${\mathcal H}=\Lambda(L\oplus K)$, where 
$L,K$ act by multiplication and $L^*$ by ($-i$ times) differentiation.  
This is a representation of ${\rm Cl}(V^*_{\Bbb C})$ with a ${\Bbb Z/2}$ grading. 
This representation is not irreducible, 
and decomposes in a direct sum 
of two non-isomorphic irreducible representations
${\mathcal H}_+\oplus {\mathcal
H}_-$ (this is related to the fact 
that the  Clifford algebra
for odd $\dim V$ is not simple but is a direct product of two simple, i.e. matrix, 
algebras). However, this decomposition is not consistent with the
${\Bbb Z/2}$-grading, and therefore as superrepresentation, 
${\mathcal H}$ is irreducible. 

Now, it is easy 
to show that both in the odd and in the even case the space
${\mathcal H}$ carries a unique 
up to scaling Hermitian form, such that
$V^*\subset V^*_{\Bbb C}$ acts by self-adjoint operators.
This form is positive definite.  
So the situation is similar to the
bosonic case for any $\dim V$.

Let us now see
which operator on ${\mathcal H}$ should play the role of the Hamiltonian of the system. 
The most natural choice is to define 
the quantum Hamiltonian to be 
the obvious quantization of the classical Hamiltonian $H=\frac{1}{2}(\psi,M\psi)$. 
Namely, if $\varepsilon_i$ is an orthonormal basis of $V^*$ and $a_{ij}$ is the matrix
of $M$ in this basis, then one sets 
$$
\widehat{H}=\frac{1}{2}\sum_{i,j} a_{ij}\varepsilon_i\varepsilon_j.
$$ 
To compute this operator more explicitly,  
we will assume 
(without loss of generality) 
that the decomposition of $V_{\Bbb C}$ that we
chose is stable under $M$. Let 
$\xi_j$ be an eigenbasis of $M$ in $L$ with eigenvalues $im_j$ where $m_j\ge 0$, 
and $\partial_j$ be differentiations along the vectors of this basis.
Then 
$$
\widehat H=\sum_j m_j(\xi_j\partial_j-\partial_j\xi_j)=
\sum_j m_j(2\xi_j\partial_j-1).
$$

This shows that if $\dim V$ is even then the partition function on the circle 
of length $L$ for our theory is 
$$
Z={\rm sTr}(e^{-L\widehat H})=\prod_j (e^{m_jL}-e^{-m_jL}).
$$
If the dimension of $V$ is odd then the partition function is
zero. 

Now we would like to consider the fermionic analog of the
Feynman-Kac formula. For simplicity consider the fully massive
case, when $\dim V$ is even and $m_j\ne 0$ (i.e. $M$ is
non-degenerate). In this case, we have a unique up to scaling 
lowest eigenvector of $\widehat H$, namely $\Omega=1$. 

Let $\psi(0)\in V\otimes {\rm End}({\mathcal H})$ be the element 
corresponding to the action map $V^*\to {\rm End}({\mathcal H})$ (the {\it Clifford multiplication}\index{Clifford multiplication}), and
$\psi(t)=e^{it\widehat H}\psi(0)e^{-it\widehat H}$. Also,  
denote by $\langle\psi(t_1)...\psi(t_n)\rangle$, $t_1\ge...\ge t_n$, the
correlation function for the free theory in the Lagrangian setting,
taking values in $V^{\otimes n}$ (so in this expression 
$\psi(t_j)$ is a formal symbol and not an operator). 

\begin{theorem} (Feynman-Kac formula)
(i) For the free theory on the line we have 
$$
\langle\psi(t_1)...\psi(t_n)\rangle=
\la\Omega,\psi(t_1)...\psi(t_n)\Omega\ra.
$$ 

(ii) For the free theory on the circle of length $L$ we have 
$$
\langle\psi(t_1)...\psi(t_n)\rangle=
\frac{{\rm sTr}(\psi(t_1)...\psi(t_n)e^{-L\widehat H})}{{\rm sTr}(e^{-L\widehat H})}.
$$
\end{theorem}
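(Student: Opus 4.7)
The plan is to imitate the bosonic proof for the harmonic oscillator in Subsection \ref{harosc3}, with Theorem \ref{wickodd} playing the role of the bosonic Wick formula. First I would introduce fermionic creation/annihilation operators adapted to the spectral decomposition of $M$. Choose bases $\xi_j\in L$ and $\bar\xi_j\in L^*$ of $V^*_{\Bbb C}$ dual in the form $i(\,,\,)$ of the Clifford algebra; the operators $a_j^\dagger:=\xi_j\cdot$ and $a_j:=-i\bar\xi_j\cdot$ then satisfy $\{a_j,a_k^\dagger\}=\delta_{jk}$ with all other anticommutators vanishing. A direct computation shows that the quantum Hamiltonian $\widehat H=\frac12\sum a_{ij}\varepsilon_i\varepsilon_j$ equals, up to an additive constant, the second-quantized expression $\sum_j m_j\,a_j^\dagger a_j$, so that $[\widehat H,a_j^\dagger]=m_j a_j^\dagger$ and $[\widehat H,a_j]=-m_j a_j$. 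Hence the Euclidean evolved field $\psi(-it)=e^{t\widehat H}\psi(0)e^{-t\widehat H}$ decomposes as
$$
\psi(-it)=\sum_j \bigl(e^{m_j t}\,\bar e_j\otimes a_j^\dagger+i\,e^{-m_j t}\,e_j\otimes a_j\bigr),
$$
where $e_j,\bar e_j$ is the basis of $V_{\Bbb C}$ dual to $\bar\xi_j,\xi_j$.

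For part (i), I would evaluate $\la\Omega,\psi(-it_1)\cdots\psi(-it_n)\Omega\ra$ by using $a_j\Omega=0$ and anticommuting every annihilation operator to the right through every creation operator standing to its right. Each such move produces either a $\delta_{jk}$ (a nonzero ``Wick contraction'') or reorders the product with a minus sign. Collecting the contributions gives a Pfaffian-type sum over matchings of $\{1,\dots,n\}$ whose individual contraction $\{i,j\}$ (with $t_i\ge t_j$) reproduces, by the explicit formula for $\psi(-it)$, exactly the Euclidean Green's function $G(t_i-t_j)$ derived in the Lagrangian setting earlier in the section. The signs produced by the anticommutations are precisely those of Theorem \ref{wickodd}, so the two Pfaffian sums agree.

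For part (ii), I would adapt the cyclic-trace argument of Subsection \ref{harosc3} to the supertrace, using ${\rm sTr}(AB)=-{\rm sTr}(BA)$ for two odd operators. Writing $\psi(-it_1)$ as a linear combination of $a_j,a_j^\dagger$ and moving each such generator around the supertrace past $e^{-L\widehat H}$ yields a factor $e^{\mp m_j L}$, and moving it past each $\psi(-it_k)$ for $k\ge 2$ produces a $\delta$-type contraction times a supertrace with two insertions removed. Summing the geometric series in $e^{-m_jL}$ reproduces the fermionic circle Green's function $G_L(t_1-t_k)$, giving the recursion
$$
\frac{{\rm sTr}(\psi(-it_1)\cdots\psi(-it_n)e^{-L\widehat H})}{{\rm sTr}(e^{-L\widehat H})}=\sum_{k=2}^n(-1)^{k}G_L(t_1-t_k)\,\la \psi(t_2)\cdots\widehat{\psi(t_k)}\cdots\psi(t_n)\ra^{\rm Ham},
$$
which is the same recursion satisfied by the Lagrangian correlator $\la\psi(t_1)\cdots\psi(t_n)\ra$ via Theorem \ref{wickodd}. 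Induction on $n$ then finishes the proof.

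The hard part will be the careful sign bookkeeping: the supertrace cyclic identity, the signs picked up by anticommuting fermionic operators past one another, and the sign conventions in the Pfaffian of Theorem \ref{wickodd} must all be reconciled. A subsidiary technical point is explicitly matching the two-point function computed from the creation/annihilation formalism with the Green's function $G$ (respectively $G_L$) defined in the Lagrangian picture; this reduces to summing a short geometric series in the eigenvalues $\pm im_j$ of $M$, but must be done with care so that clustering on the line and the $L$-dependence on the circle come out correctly.
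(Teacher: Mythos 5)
Your proposal follows exactly the route the paper intends: the exercise immediately after this theorem points the reader to the bosonic proof in Subsection \ref{harosc3}, and your plan is precisely that argument transported to the fermionic setting (creation/annihilation operators from the $L\oplus L^*$ decomposition, explicit time evolution via the eigenvalues $\pm im_j$, cyclicity of the supertrace replacing cyclicity of the trace, and Theorem \ref{wickodd} supplying the Pfaffian replacement for the bosonic Wick sum). The caveats you flag — supertrace sign conventions, anticommutation signs, and matching the operator two-point function with the Lagrangian Green's function $G$ and its circle analogue — are indeed the only genuinely delicate points, and your outline handles the recursion and induction on $n$ correctly.
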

 
\begin{exercise} Prove this theorem. (The proof is analogous to
Theorem \ref{FK} in the free case).
\end{exercise} 

It should now be straightforward 
for the reader to formulate and prove the Feynman-Kac formula 
for an interacting (i.e., not necessarily free) quantum-mechanical model which includes both bosonic and
fermionic massive fields. We leave this as an instructive
exercise. 

\begin{exercise} (i) Consider quantum mechanics with Yukawa coupling. 
That is, we have a scalar boson $\phi(t)$ and 
two fermions $\psi_1(t),\psi_2(t)$, and the Euclidean Lagragian is 
$$
{\mathcal L}=\frac{1}{2}(\dot{\phi}^2+m^2\phi^2+
\psi_1\dot{\psi_1}+\psi_2\dot{\psi_2}-\mu \psi_1\psi_2)+
g\phi\psi_1\psi_2. 
$$
Compute the 2-point function $\langle\phi(t)\phi(0)\rangle$ modulo $g^3$
(in the Euclidean setting). 

{\bf Hint.} The correction to the free theory answer is given by one Feynman 
diagram. Remember about automorphism groups and the minus sign corresponding 
to fermionic loops. 

(ii) In the same theory, compute the two-point function 
$\langle\psi_1(t)\psi_1(0)\rangle$ modulo $g^3$
(in the Euclidean setting). Does the corresponding diagram 
have non-trivial automorphisms?
\end{exercise} 

\section{Free field theories in higher dimensions}

\subsection{Minkowski and Euclidean space}
Now we pass from quantum mechanics 
to quantum field theory in dimensions $d\ge 1$. 
As we explained above, we have two main settings. 

{\bf 1. Minkowski space.} Fields are functions 
on a spacetime $V=V_M$, which is a real inner product space of 
signature $(1,d-1)$. This is where physical processes 
actually ``take place''. The symmetry group of $V$, 
$G=SO(1,d-1)$, is called the {\it Lorentz group}\index{Lorentz group}; 
it is the group of transformations of spacetime in special
relativity. Therefore, field theories in 
Minkowski space which are in an appropriate sense 
``compatible'' with the action of $G$ are called {\it relativistic}.\index{relativistic field theory}

Recall some standard facts and definitions. 
The {\it light cone}\index{light cone} in $V$ is the cone 
described by the equation $|\bold v|^2=0$, where 
$|\bold v|^2:=(\bold v,\bold v)$. 
Vectors belonging to the light cone are called {\it lightlike}.\index{lightlike vector} 
The light cone divides the space $V$ into  
{\it spacelike vectors} $|\bold v|^2<0$ (outside the cone),
 and {\it timelike vectors} $|\bold v|^2>0$ (inside the cone).\index{spacelike vector}\index{timelike vector}
We will choose one of the two components of 
the interior of the cone and call it positive; it will be denoted
by $V_+$. The opposite (negative) component is denoted by $V_-$. 
The group of $g\in SO(V)=SO(1,d-1)$ which preserve $V_+$ is
denoted by $SO_+(1,d-1)$; it is the connected component of the
identity of the group $SO(1,d-1)$ (which has two connected components).

Often (e.g. when doing Hamiltonian field theory) it is necessary to split 
$V$ in an orthogonal direct sum $V=V_s\oplus \Bbb R$ of space and time. 
In this decomposition, 
the space $V_s$ is required to be spacelike (i.e. negative
definite), which implies that the time axis ${\Bbb R}$ has to be timelike
(positive definite). Note that such a splitting is not unique, 
and that fixing it breaks the Lorentz symmetry 
$SO_+(1,d-1)$ down to the usual rotation group $SO(d-1)$.  

To do explicit calculations, one further chooses 
Cartesian coordinates $x_1,...,x_{d-1}$ on $V_s$ and 
$t$ on the time axis $\Bbb R$, so that $\bold v=(t,x_1,...,x_{d-1})$. In these coordinates
the inner product takes the form 
$$
|\bold v|^2=c^2t^2-\sum_{j=1}^{d-1}x_j^2
$$
where $c$ is the speed of light. This explains the origin of the term ``light cone" -- it consists 
of worldlines of free photons (particles of light) traveling in space in some direction at speed $c$. 
To simplify notation, we will chose units of measurement so that $c=1$. 

{\bf 2. Euclidean space.} Fields are functions on a spacetime $V_E$,
which is a positive definite inner product space.
It plays an auxiliary role and has no direct physical meaning, although path integrals 
computed in this space are similar to expectation values in
statistical mechanics. 

The two settings are related by the ``Wick rotation''.
Namely the Euclidean space $V_E$ corresponding to the Minkowski
space $V_M$ is 
the real subspace in $(V_M)_{\Bbb C}$ 
consisting of vectors $(it,x_1,...,x_{d-1})$, where $t$ and $x_j$
are real. In other words, to pass to the Euclidean
space, one needs to make a change of variable $t\mapsto it$. 
Note that under this change, the standard metric on the Minkowski space,
$dt^2-\sum_j dx_j^2$ goes into a negative definite metric 
$-dt^2-\sum_j dx_j^2$. However, the minus sign is traditionally 
dropped and one considers instead the positive metric 
$dt^2+\sum_j dx_j^2$ on $V_E$.  

\subsection{Free scalar boson}

Consider the theory of a free scalar bosonic field 
$\phi$ of mass $m$. The procedure of quantization of this theory 
in the Lagrangian setting is a straightforward generalization 
from the case of quantum mechanics. 
Namely, the Lagrangian for this theory in Minkowski space is
$$
{\mathcal L}=\frac{1}{2}((d\phi)^2-m^2\phi^2),
$$ 
and the Euler-Lagrange 
equation is the {\it Klein-Gordon equation}\index{Klein-Gordon equation}
$$
(\square+m^2)\phi=0,
$$ 
where $\square$ is the D'Alembertian (wave operator), 
$$
\square:=\frac{\partial^2}{\partial t^2}-
\sum_j \frac{\partial^2}{\partial x_j^2}.
$$ 
Thus to define the corresponding 
quantum theory, we should invert the operator $\square+m^2$. 
This operator is essentially self-adjoint on compactly supported smooth functions and thus defines 
a self-adjoint operator, but as in the quantum mechanics case, it is not invertible
-- its spectrum is the whole $\Bbb R$, as can be easily seen by taking the Fourier transform. 
So as before, it is best to proceed using the
Wick rotation. 

After the Wick rotation (i.e. the transformation $t\mapsto it$), we 
arrive at the Euclidean Lagrangian
$$
{\mathcal L}_E=\frac{1}{2}((d\phi)^2+m^2\phi^2),
$$ 
and the Euler-Lagrange equation is the Euclidean Klein-Gordon equation
$$
(-\Delta+m^2)\phi=0.
$$ 
So to define the quantum theory,
i.e. the path integral 
$$
\int \phi(x_1)...\phi(x_n)e^{-S(\phi)}D\phi
$$
where $S=\int {\mathcal L}$, we now need to invert the self-adjoint operator $A=-\Delta+m^2$ (initially defined as an essentially self-adjoint operator on smooth compactly supported functions), whose spectrum is $[m^2,\infty)$, so it is invertible when $m>0$. 
The operator $A^{-1}$ is an integral operator whose Schwartz kernel
is $G(x-y)$, where $G(x)$ is the Green's function, i.e. the fundamental
solution of the Klein-Gordon equation: 
$$
-\Delta G+m^2G=\delta.
$$

To solve this equation, note that the solution is rotationally
invariant. Therefore, 
outside of the origin, 
$G(x)=g(|x|)$, where $g$ is a function on $(0,\infty)$
such that 
$$
-g''-\frac{d-1}{r}g'+m^2g=0
$$
(where the left hand side is the radial part of the
operator $A$). This is a version of the Bessel equation. 
If $m>0$, the two basic solutions are $r^{\frac{2-d}{2}}
J_{\pm\frac{2-d}{2}}(imr)$, where $J$ is the Bessel function. (Actually, these functions are elementary
for odd $d$). Since we want $G$ to decay at infinity (clustering property), we should pick the unique up to scaling linear combination which decays at infinity, namely, 
\begin{equation}\label{bess}
g=Cr^{\frac{2-d}{2}}(J_{\frac{2-d}{2}}(imr)+i^d
J_{-\frac{2-d}{2}}(imr)),\ d\ne 2. 
\end{equation}
For $d=2$, 
this expression is zero, and one should 
instead take the limit of the right hand side 
divided by $d-2$ as $d\to 2$.
The normalizing constant can be found from the condition that
$AG=\delta$. 

\begin{remark} It is easy to check that for $d=1$ this function
equals the familiar Green's function for quantum mechanics, 
$\frac{e^{-mr}}{2m}$.
\end{remark} 

If $m=0$ (massless case), the basis of solutions is: $1,r$ for $d=1$, $1, \log r$
for $d=2$, and $1, r^{2-d}$ for $d>2$. Thus, if $d\le 2$, we
don't have a decaying solution and thus the corresponding quantum theory
will be deficient: it will not satisfy the clustering property. 
On the other hand, for $d>2$ we have a unique up to scaling 
decaying solution $g=Cr^{1-d}$. The normalizing constant is found as in
the massive case. 

The higher correlation functions are found from the 2-point
function via the Wick formula, as usual. 

We should now note a fundamental difference between quantum
mechanics and quantum field theory in $d>1$ dimensions. 
This difference comes from the fact that while for $d=1$, the Green's
function $G(x)$ is continuous at $x=0$, for $d>1$ it is
singular at $x=0$. Namely, $G(x)$ behaves like $C|x|^{2-d}$ 
as $x\to 0$ for $d>2$, and as $C\log|x|$ as $d=2$. Thus for $d>1$,
unlike the case $d=1$, the path integral 
$$
\int \phi(x_1)...\phi(x_n)e^{-S(\phi)}D\phi
$$
(as defined above) makes sense only if $x_i\ne x_j$. 
In other words, this path integral should be regarded not as a
function but rather as a distribution. Luckily, there is a
canonical way to do it, since the Green's function $G(x)$ is
locally $L^1$. 

Now we can Wick rotate this theory back
into the Minkowski space. It is clear that the Green's function 
will then turn into 
$$
G_M(x)=g(\sqrt{-|x|^2-i\varepsilon}),
$$
which involves Bessel functions of both real and imaginary argument (depending on whether 
$x$ is timelike or spacelike) and has a singularity 
on the light cone $|x|^2=0$. In particular, it is easy to check that 
$G_M(x)$ is real-valued for spacelike $x$, while for 
timelike $x$ it is not. The function $G_M(x)$ satisfies the equation 
$$
(\square+m^2)G_M=i\delta.
$$
The higher correlation functions, as before, are determined 
from this by the Wick formula.  

Actually, it is more convenient to describe this theory 
``in momentum space'', where the Green's function can be written 
more explicitly. 
Namely, the Fourier transform $\widehat G(p)$ of the
distribution $G(x)$ is a solution of the equation 
$$
p^2\widehat G+m^2\widehat G=1,
$$
obtained by Fourier transforming the differential equation for
$G$. Thus, 
$$
\widehat G(p)=\frac{1}{p^2+m^2},
$$
as in the quantum mechanics case. Therefore, like in quantum
mechanics, the Wick rotation 
produces the distribution 
$$
\widehat G_M(p)=\frac{i}{p^2-m^2+i\varepsilon},
$$
which is the Fourier transform of $G_M(x)$. 

\subsection{Spinors}\label{spinorssec1}

To consider field theory for fermions, we must generalize to the
case of $d>1$ the basic fermionic Lagrangian
$\frac{1}{2}\psi\frac{d\psi}{dt}$. To do this, we must replace 
$\frac{d}{dt}$ by some differential operator on $V$.
This operator should be of first order, 
since in fermionic quantum mechanics it was important that the
equations of motion are first order equations. 
Clearly, it is impossible to define such an operator 
so that the Lagrangian is $SO_+(V)$-invariant, if $\psi$ is
a scalar-valued (odd) function on $V$. 
Thus, a fermionic field in field theory of dimension
$d>1$ cannot be scalar-valued, but rather must take values 
in a real representation $S$ of $SO_+(V)$, 
such that there exists a nonzero intertwining operator
$V\to {\rm Sym}^2S^*$. This property is satisfied by {\it spinor
representations}.\index{spinor representation} They are indeed basic in fermionic field theory,
and we will now briefly discuss them (for 
more detail see ``Spinors'' by P.Deligne, 
in ``QFT and string theory: a course for mathematicians''). 

First consider the complex case. Let $V$ be a complex inner
product space of dimension $d>1$. Let 
${\rm Cl}(V)$ be the Clifford algebra of $V$, defined by the relation 
$\xi\eta+\eta\xi=2(\xi,\eta)$, $\xi,\eta\in V$.
As we discussed, for even $d$ it is simple and has a unique 
irreducible representation $S$ of dimension $2^{\frac{d}{2}}$, 
while for odd $d$ it has two
such representations $S',S''$ of dimension $2^{\frac{d-1}{2}}$. 
It is easy to show that 
the space ${\rm Cl}_2(V)$ of quadratic elements of ${\rm Cl}(V)$ 
(i.e. the subspace spanned 
elements of the form $\xi\eta-\eta\xi,\ \xi,\eta\in V$) is closed 
under bracket, and constitutes the Lie algebra ${\frak o}(V)$. 
Thus ${\frak o}(V)$ acts on $S$ (respectively, $S',S''$). 
This action does not integrate to an action of $SO(V)$, 
but integrates to an action of its double cover ${\rm Spin}(V)$.

If $d$ is even, the representation $S$ of ${\rm Spin}(V)$ is
not irreducible. Namely, recall that $S$ is the exterior algebra
of a Lagrangian subspace of $V$. Thus it splits in a direct sum
$S=S_+\oplus S_-$ (odd and even elements). The subspaces
$S_+,S_-$ are subrepresentations of $S$, which are irreducible. 
They are called the {\it half-spinor representations}.\index{half-spinor representation} The half-spinor representations are interchanged by 
the adjoint action of $O(V)$ on ${\rm Spin}(V)$ ($SO(V)$ clearly acts trivially, so this is, in fact, and action 
of $O(V)/SO(V)=\Bbb Z/2$ on the set of irreducible representations of $SO(V)$). Note that in contrast, 
for odd $d$ we have $O(V)=SO(V)\times \Bbb Z/2$, so the $\Bbb Z/2$ acts
on representations of ${\rm Spin}(V)$ trivially.

If $d$ is odd, the representations $S'$ and $S''$ of ${\rm
Spin}(V)$ are irreducible and isomorphic. Any of them will be
denoted by $S$ and called the {\it spinor representation}.\index{spinor representation}
Thus, we have the spinor representation $S$ for both odd and even
$d$, but for even $d$ it is reducible.

An important structure attached to the spinor representation 
$S$ is the intertwining operator $\Gamma: V\to {\rm End}S$ called {\it Clifford multiplication},\index{Clifford multiplication}
given by the action of $V\subset {\rm Cl}(V)$ in $S$, which we already encountered above. This intertwiner
allows us to define the {\it Dirac operator}\index{Dirac operator}
\begin{equation}\label{dirop}
\bold{D}=\sum_i \Gamma_i \frac{\partial}{\partial x_i}
\end{equation}
where $x_i$ are coordinates on $V$ associated to an orthornormal basis $e_i$,
and $\Gamma_i=\Gamma(e_i)$. This operator acts on 
functions from $V$ to $S$, and $\bold D^2=\Delta$, so 
$\bold D$ is a square root of the Laplacian. 
The matrices $\Gamma_i$ are called {\it $\Gamma$-matrices}.\index{$\Gamma$-matrices}

Note that for even $d$, one has 
$\Gamma(v): S_\pm\to S_\mp$, so $\bold D$ acts from
functions with values in $S_\pm$ to functions with values in
$S_\mp$. 

By a {\it polyspinor representation}\index{polyspinor representation} of ${\rm Spin}(V)$ we will mean 
any linear combination of $S_+,S_-$ for even $d$, and any
multiple of $S$ for odd $d$. For even $d$ and a polyspinor representation 
$Y=Y_+\otimes S_+\oplus Y_-\otimes S_-$ (i.e., $Y_{\pm}={\rm Hom}(S_\pm,Y)$)
where $Y_+,Y_-$ are vector spaces, set 
$Y':=Y_+\otimes S_-\oplus Y_-\otimes S_+$, while for odd $d$ and $Y=Y_0\otimes S$ 
we set $Y':=Y$; thus $Y\mapsto Y'$ is an endofunctor on the category of polyspinor representations.  
Then for every polyspinor
representation $Y$ and $v\in V$ we have the Clifford multiplication operator 
$\Gamma(v): Y\to Y'$. 

Now assume that $V$ is a real inner product space 
with Minkowski metric. In this case we can define the group ${\rm
Spin}_+(V)$ to be the preimage 
of $SO_+(V)$ under the map ${\rm Spin}(V_{\Bbb C})\to SO(V_{\Bbb
C})$. It is a double cover of $SO_+(V)$ (if $d=2$, this double cover
is disconnenced and actually a direct product by $\Bbb Z/2$). 

By a {\it real polyspinor representation}\index{real polyspinor representation} of ${\rm Spin}_+(V)$ we will mean 
a real representation $Y$ of this group such that $Y_{\Bbb C}$ is a polyspinor representation of ${\rm Spin}(V_{\Bbb C})$. 

\begin{remark}\label{anyons} 
Note that in all dimensions except $d=2$, the group ${\rm Spin}(d)$ 
is the universal cover of $SO(d)$, which means that spins of all particles 
are either integers or half-integers. On the other hand, the 
universal cover of $SO(2)$ is not ${\rm Spin}(2)$, but rather 
$\Bbb R$. This creates in two dimensions a possibility of particles 
whose spin is any positive real number. Such particles are called 
{\it anyons}\index{anyon} (particles of any spin), 
and we will see how they appear in 2-dimensional conformal field theory. 
\end{remark} 

\subsection{Fermionic Lagrangians}\label{ferlag} 

Now let us consider Lagrangians for a spinor field $\psi$ with values in
a polyspinor representation $Y$. 
Note that in even dimensions such fields 
are split into fields valued in $S_+$ and $S_-$, respectively. 
Such spinors are called {\it chiral}.\index{chiral spinor} 

As the Lagrangian is supposed to be
real in the Minkowski setting, we will require in that case that
$Y$ be real. First of all, let us see what we need in order to write 
the ``kinetic term'' $(\psi,\bold D\psi)$. 
Clearly, to define such a term 
(so that the corresponding term in the action does not reduce to zero via integration by parts), we need an invariant non-degenerate pairing $(,)$
between $Y$ and $Y'$ (i.e., an isomorphism of representations $Y'\cong Y^*$) such that for any $v\in V$, the bilinear form
$(x,\Gamma(v)y)$ on $Y$ is symmetric. 

Let us find for which $Y$ this is possible (for complex $V$). The behavior of Spin
groups depends on $d$ modulo $8$ ({\it real Bott periodicity}\index{real Bott periodicity}). Thus we will list the answers 
labeling them by $d$ mod $8$ (they are easily extracted from the
tables given in Deligne's text). First we summarize properties of 
spin representations. 

\vskip .05in

0. $S_\pm$ orthogonal.

1. $S$ orthogonal, $S\otimes S\to V$ symmetric.

2. $S_+^*=S_-$, $S_\pm\otimes S_\pm\to V$ symmetric.

3. $S$ symplectic, $S\otimes S\to V$ symmetric. 

4. $S_\pm$ symplectic. 

5. $S$ symplectic, $S\otimes S\to V$ antisymmetric. 

6. $S_+=S_-^*$, $S_\pm\otimes S_\pm \to V$ antisymmetric.

7. $S$ orthogonal, $S\otimes S\to V$ antisymmetric. 

\vskip .05in 
Thus the possibilities for the kinetic term are: 
\vskip .05in 

0. $n(S_+\oplus S_-)$; (,) gives a perfect pairing between $Y_+$ and $Y_-$.

1. $nS$; (,) gives a symmetric inner product on $Y_0$.

2. $nS_+\oplus kS_-$; (,) gives symmetric inner products on $Y_\pm$. 

3. $nS$; (,) gives a symmetric inner product on $Y_0$.

4. $n(S_+\oplus S_-)$; (,) gives a perfect pairing between $Y_+$ and $Y_-$.

5. $2nS$; (,) gives a skew-symmetric inner product on $Y_0$.

6. $2nS_+\oplus 2kS_-$;  (,) gives skew-symmetric inner products on $Y_\pm$.

7. $2nS$; (,) gives a skew-symmetric inner product on $Y_0$.

\vskip .05in

Let us now find when we can also add a mass term. 
Recall that the mass term has the form $(\psi,M\psi)$, 
so it corresponds to an invariant skew-symmetric operator 
$M:Y\to Y^*\cong Y'$ (note that by definition, $\Gamma_i$ commute with $M$). Let us list those $Y$ from the above list 
for which such a non-degenerate operator exists. 

\vskip .05in

0. $2n(S_+\oplus S_-)$; $M_\pm: Y_\pm\to Y_\mp$ are skew-symmetric under $(,)$. 

1. $2nS$; $M: Y_0\to Y_0$ is skew-symmetric under $(,)$.

2. $n(S_+\oplus S_-)$; $M_\pm: Y_\pm\cong Y_\mp$ satisfy $M_+^*=-M_-$ under $(,)$.

3. $nS$; $M: Y_0\to Y_0$ is symmetric under $(,)$.

4. $n(S_+\oplus S_-)$; $M_\pm: Y_\pm\to Y_\mp$ are symmetric under $(,)$.

5. $2nS$; $M: Y_0\to Y_0$ is symmetric under $(,)$.

6. $2n(S_+\oplus S_-)$; $M_\pm: Y_\pm\cong Y_\mp$ satisfy $M_+^*=-M_-$ under $(,)$.

7. $2nS$; $M: Y_0\to Y_0$ is skew-symmetric under $(,)$.

\vskip .05in

To pass to the real Minkowski space (in both massless and massive
case), one should put the additional requirement that $Y$ should be a real
representation.

We note that upon Wick rotation to Minkowski space, 
it may turn out that a real spinor representation $Y$ will turn into a
complex representation which has no real structure. Namely, this happens 
for massless spinors that take values in $S_\pm$ if $d=2$ mod 8. 
These representations have a real structure for Minkowskian $V$ 
(i.e. for ${\rm Spin}_+(1,d-1)$), but no real structure 
for Euclidean $V$ (i.e. for ${\rm Spin}(d)$). 
This is quite obvious, for example, when $d=2$ (check!). 
 
\begin{remark}\label{realcom} 
One may think that this causes a problem in quantum field theory,
where we would be puzzled what to integrate over -- real or
complex space. However, the problem in fact does not arise, since
we have to integrate over fermions, and integration over fermions
 (say, in the finite dimensional case)
is purely algebraic and does not make a distinction between real
and complex. 
\end{remark} 

\subsection{Free fermions}

Let us now consider a free theory for a spinor field $\psi: V\to \Pi
Y$, where $Y$ is a polyspinor representation, defined by a Lagrangian 
$$
{\mathcal L}=\frac{1}{2}(\psi,(\bold D-M)\psi),
$$ 
where $M$ is allowed to be degenerate (we assume that $Y$ is such
that this expression makes sense). The equation of motion in
Minkowski space is 
$$
\bold D\psi=M\psi.
$$ 
Thus, to define the corresponding
quantum theory, we need to invert the operator $\bold D-M$. 
As usual, this cannot be done because of a singularity, 
and it is best to use the Wick rotation. 

The Wick rotation produces the Euclidean Lagrangian
$$
{\mathcal L}=\frac{1}{2}(\psi,(\bold D_E+M)\psi)
$$
(note that the $i$ in the kinetic term is hidden in the definition of the
Euclidean Dirac operator). 
We invert $\bold D_E+M$ to obtain the Euclidean Green's
function. 
To do this, it is convenient to go to momentum space,
i.e. perform a Fourier transform. Namely, 
after Fourier transform $\bold D_E$ turns into the operator $i\bold
p$, where $\bold p=\sum_j p_j\Gamma_j$, and $p_j$ are the operators 
of multiplication by the momentum coordinates $p_j$. Thus, the Green's
function (i.e. the 2-point function) 
$G(x)\in {\rm Hom}(Y^*,Y)$ is the Fourier transform of
the matrix-valued function 
$\frac{1}{i\bold p+M}$. 

In the Euclidean case the group ${\rm Spin}(V)$ is compact 
and the spinor representations carry natural positive invariant Hermitian forms. 
So in this case without loss of generality we may consider polyspinor representations equipped with such positive forms, and on every polyspinor representation such a form is unique up to isomorphism. Let $$
M^\dagger: Y^*\to Y
$$ 
be the Hermitian adjoint operator 
to $M$. Then the reality condition is that $M$ is Hermitian: $M^\dagger=M$. 
Thus
$$
(-i\bold p+M)(i\bold p+M)=p^2+M^2
$$ 
so that 
$$
\widehat G(p)=(p^2+M^2)^{-1}(-i\bold p+M).
$$
This shows that $G(x)$ is expressed through the Green's function
in the bosonic case by differentiations (how?).   
After Wick rotation back to the Minkowski space, we get 
$$
\widehat G_M(p)=(p^2-M^2+i\varepsilon)^{-1}(\bold
p+iM).
$$
Finally, the higher
correlation functions, as usual, are found from the Wick formula. 

\subsection{Hamiltonian formalism of classical field theory} \label{hfcft}

Let us now develop the hamiltonian approach to QFT, extending the hamiltonian formalism of quantum mechanics. We start with classical field theory, extending the hamiltonian formalism of classical mechanics. 
As in the Lagrangian setting, this can be done by formalizing the idea that field theory is mechanics of a continuum of particles occupying each point of the space $\Bbb R^{d-1}$. 

Namely, consider a free scalar bosonic field $\phi(x)$ on a Minkowski space $\Bbb R^d$. 
As we have discussed, its Lagrangian is 
$\mathcal L=\frac{1}{2}((d\phi)^2-m^2\phi^2)$ 
and the equation of motion is the Klein-Gordon equation 
$$
\phi_{tt}-\Delta_s\phi+m^2\phi=0,
$$ 
where $\Delta_s$ is the spacial Laplacian. This is a second order equation with respect to $t$, 
so the initial value problem for this equation has the form 
$$
\phi(0,x)=q(x),\ \phi_t(0,x)=p(x)
$$
(there is a standard explicit formula for solution of this problem, expressing it via the fundamental solution of 
the Klein-Gordon equation). Thus it is natural to introduce the phase space 
$$
Y:=T^*C^\infty_0(\Bbb R^{d-1}):=C^\infty_0(\Bbb R^{d-1})\oplus C^\infty_0(\Bbb R^{d-1})
$$
of pairs $(q,p)$ of smooth functions with compact support, on which the dynamics 
of the Klein-Gordon equation takes place (note that the space $C^\infty_0(\Bbb R^{d-1})$ is invariant under this dynamics since the speed of wave propagation is finite, namely equals $1$). Note that the phase space is an infinite dimensional symplectic space with constant symplectic form 
$$
\omega((q_1,p_1),(q_2,p_2))=\int_{\Bbb R^{d-1}}(p_1(x)q_2(x)-p_2(x)q_1(x))dx.
$$ 
Also for any point $x\in \Bbb R^{d-1}$ we have the local linear functionals 
$$
(q,p)\mapsto q(x),\ (q,p)\mapsto p(x)
$$ 
which we will denote by $\phi(x)$ and $\phi_t(x)$, respectively. 
From these functionals we can make other linear functionals: for example, given 
$\rho\in C^\infty_0(\Bbb R^{d-1})$, we can define the functionals 
$$
\phi(\rho)(q,p):=\int_{\Bbb R^{d-1}}q(x)\rho(x)dx,\ \phi_t(\rho)(q,p):=\int_{\Bbb R^{d-1}}p(x)\rho(x)dx.
$$
The Poisson bracket between such functionals can be computed 
by the formulas 
$$
\lbrace \phi(\rho_1),\phi(\rho_2)\rbrace=0,\ 
\lbrace \phi_t(\rho_1),\phi_t(\rho_2)\rbrace=0,
$$
$$
\lbrace \phi(\rho_1),\phi_t(\rho_2)\rbrace=\int_{\Bbb R^{d-1}}\rho_1(x)\rho_2(x)dx.
$$
This can be written as a {\it field-theoretic Poisson bracket}:\index{field-theoretic Poisson bracket}
$$
\lbrace \phi(x),\phi(y)\rbrace=0,\ 
\lbrace \phi_t(x),\phi_t(y)\rbrace=0,\
\lbrace \phi(x),\phi_t(y)\rbrace=\delta(x-y);
$$
then the previous formulas can be recovered by integrating both sides against $\rho_1(x)\rho_2(y)$. 
In other words, the linear local functionals $\phi(x)$ and $\phi_t(x)$ should be thought of not as smooth functions on $Y$ depending on a point $x\in \Bbb R^{d-1}$ but rather as distributions 
on $\Bbb R^{d-1}$ with values in smooth functions on $Y$. 

Similarly, one may consider non-linear polynomial local functionals, given by differential polynomials
$P(\phi,\phi_t)$ evaluated at a point $x$, such as $\phi^n,\phi_t^2,(d_s\phi)^2, \phi^2 (d_s\phi)^2$ (where $d_s$ is the spatial differential), etc., and even non-polynomial ones depending on finitely many derivatives of $\phi$, 
such as $e^\phi (d_s\phi)^2, \cos \phi$, and so on. 
They are called local because they depend only on the derivatives of 
$\phi$ at a single point $x$. Each of them is a distribution on $\Bbb R^{d-1}$ with values in smooth functions on $Y$, and can be applied to any density $\rho(x)$ to produce a smooth function on $Y$. 
Poisson brackets of such functionals are computed using the chain rule, the Leibniz rule, and 
the fact that taking Poisson brackets commutes with differentiation by $x$.
So given two local functionals $P$ and $Q$, we obtain 
$$
\lbrace P(\phi)(x),Q(\phi)(y)\rbrace=\sum_\alpha \lbrace P,Q\rbrace_\alpha(\phi)(x)\partial^\alpha_x\delta(x-y)
$$
for some local functionals $\lbrace P,Q\rbrace_\alpha$, where $\partial^\alpha$ are monomials in the derivatives. For example, for $d=2$
$$
\lbrace \phi_{tx}(u)\phi_t(u), \tfrac{1}{3}\phi^3(v)\rbrace=
$$
$$
-(\phi_{tx}(u)\phi^2(u)+2\phi_{tx}(u)\phi(u)\phi_x(u))\delta(u-v)-\phi_{t}(u)\phi^2(u)\delta'(u-v).
$$
This Poisson bracket can of course be extended to products of local functionals at different points using the Leibniz rule. 

The Hamiltonian of the theory is then given by integrating a local functional against the constant density: 
$$
H(\phi)=\frac{1}{2}\int_{\Bbb R^{d-1}}(\phi_t^2+(d_s\phi)^2+m^2\phi^2)dx. 
$$ 
Namely, it is determined (up to a constant) by the condition that the Hamilton equation 
$$
F_t=\lbrace F,H\rbrace
$$ 
for local functionals $\phi,\phi_t$ is equivalent to the Klein-Gordon equation. 

The Hamiltonian dynamics allows us to define the local functionals 
not just at a point $x\in \Bbb R^{d-1}$ but actually at any point $(t,x)\in \Bbb R^d$. 
When we do, by definition we get $\phi_t(t,x)=\frac{d}{dt}\phi(t,x)$ and 
the local functional $\phi(t,x)$ becomes a solution 
of the Klein-Gordon equation: 
$$
\phi_{tt}-\Delta_s\phi+m^2\phi=0.
$$
This can be used to compute the Poisson brackets: for example, we see that 
$$
\lbrace\phi(t_1,x_1),\phi(t_2,x_2)\rbrace=\bold G(t_2-t_1,x_2-x_1)
$$
where $\bold G(t,x)$ solves the Klein-Gordon equation with initial conditions 
$$
\bold G(0,x)=0,\ \bold G_t(0,x)=\delta(x). 
$$
To find it, take the Fourier transform. 
Then we get a distribution $\widehat {\bold G}$ supported on the two-sheeted hyperboloid 
$X_m$ given by the equation $E^2=p^2+m^2$, of the form 
$$
\widehat {\bold G}(E,p)=f_+(p)\delta_{X_m^+}+f_-(p)\delta_{X_m^-},
$$
where $X_m^\pm$ are the sheets of $X_m$. 
Moreover, the initial conditions give (up to appropriate normalization)
$$
\int_{\Bbb R} \widehat {\bold G}(E,p)dE=0,\ \int_{\Bbb R} \widehat {\bold G}(E,p)EdE=1,
$$
which yields 
$$
f_+(p)+f_-(p)=0,\ \sqrt{p^2+m^2}(f_+(p)-f_-(p))=1.
$$
Thus $f_+=-f_-=\frac{1}{2\sqrt{p^2+m^2}}$ and we have 
$$
\widehat {\bold G}(E,p)=\frac{1}{2\sqrt{p^2+m^2}}(\delta_{X_m^+}-\delta_{X_m^-}).
$$
Now $\bold G$ can be found by taking the inverse Fourier transform
(it expresses via the Bessel functions). 

Note that since the speed of wave propagation is $1$, this distribution $\bold G$ is supported on the solid light cone, 
so $\lbrace\phi(t_1,x_1),\phi(t_2,x_2)\rbrace=0$ if the points $(t_1,x_1)$ and $(t_2,x_2)$ 
are {\it spacelike separated},\index{spacelike separated points} meaning that the vector $(t_1-t_2,x_1-x_2)$ is spacelike. 
This property is called {\it space locality},\index{space locality} a mathematical expression of {\it causality}\index{causality} in special relativity. 

\begin{remark} A part of this analysis extends straightforwardly to the case of non-free theories, 
for example the $\phi^4$-{\it theory},\index{$\phi^4$-theory} having the Lagrangian 
$$
\mathcal L=\frac{1}{2}((d\phi)^2-m^2\phi^2)-\frac{g}{4}\phi^4.
$$
In this case the Klein-Gordon equation is replaced by its non-linear deformation
$$
\phi_{tt}-\Delta_s\phi+m^2\phi+g\phi^3=0,
$$
so there is a nontrivial issue of existence of solutions of the initial value problem 
for this non-linear PDE. However, this issue is irrelevant 
if we just want to consider Poisson brackets of local functionals 
on $\Bbb R^{d-1}$ or its formal neighborhood, since 
then the computations are purely formal (algebraic). 
\end{remark} 

An important fact is that this structure is invariant under the {\it Poincar\'e group}\index{Poincar\'e group}
$\bold P:=SO_+(V)\ltimes V$ generated by Minkowski rotations and translations,
where $V=\Bbb R^d$ is the spacetime (the semidirect product of the {\it Lorentz group}\index{Lorentz group} $SO_+(V)$ and the group of translations $V$). This follows from the fact that 
the Lagrangian of the theory is relativistically invariant. 
Namely, for $g\in \bold P$ given by 
$$
g(t,x)=(at+bx+c,\alpha t+\beta x+\gamma)
$$
we have 
$$
(\phi g)(x)(q,p)=\phi(bx+c,\beta x+\gamma)
$$
and 
$$
(\phi_tg)(x)(q,p)=(\partial_\alpha \phi)(bx+c,\beta x+\gamma)+a\phi_t (bx+c,\beta x+\gamma).
$$
where $\phi(t,x)$ is the solution of the Klein-Gordon equation with 
initial conditions $(q(x),p(x))$. 

In particular, note that the {\it Galileo subgroup}\index{Galileo group} $SO(\Bbb R^{d-1})\ltimes \Bbb R^{d-1}$ acts 
by manifest geometric symmetries, while time translations act by the Hamiltonian flow. 

Finally, note that this discussion extends in a straighforward way to theories including fermions. 
In this case, as in fermionic classical mechanics, we get a field theoretic {\it super}-Poisson bracket on classical fields,\index{super-Poisson bracket} which is symmetric rather than skew-symmetric if both fields are odd. Also,  since odd fields take values in polyspinor representations, the Poincar\'e group should be replaced by its double cover 
$\widetilde{\bold P}:={\rm Spin}_+(V)\ltimes V$. We leave the details to the reader. 

\subsection{Hamiltonian formalism of QFT: the Wightman axioms}

To quantize this picture, we need to define a Hilbert space $\mathcal H$ 
and lift classical observables (local functionals and their integrals) to (densely defined) operators 
on $\mathcal H$, notably lift the classical hamiltonian $H$ to a quantum hamiltonian 
$\widehat H$ depending on the Planck constant $\hbar$ which should be a self-adjoint (in general, unbounded) operator on $\mathcal H$. Moreover, this should be done in such a way that commutators 
vanish at $\hbar=0$ and in first order in $\hbar$ recover the Poisson bracket. We should also 
have a unitary representation of the double cover $\widetilde{\bold P}$ 
of the Poincar\'e group on the space $\mathcal H$ 
such that the 1-parameter subgroup of time translations acts by the quantum dynamics 
1-parameter group $e^{-it\widehat H}$. This generalization of Hamiltonian quantum mechanics 
can be accomplished by means of so called {\it Wightman axioms},\index{Wightman axioms} which we now describe. 

First of all, for the quantum theory to have good properties, we want the energy to be bounded below. Thus we introduce the following definition. Let us fix an orthogonal decomposition $V= \Bbb R\oplus V_s$ 
into space and time and consider the self-adjoint operator 
$$
\widehat H_\pi:=i\tfrac{d}{dt}|_{t=0}\pi(t,0).
$$

\begin{definition} A unitary representation $\pi: \widetilde {\bold P}\to {\rm Aut}\mathcal H$ 
is said to be {\it positive energy}\index{positive energy representation} if the spectrum of $\widehat H_\pi$ is bounded below. 
\end{definition} 

Note that every unitary representation $\pi$ of $V$ has a spectrum $\sigma(\pi)$, which is a closed subset of $V^*\cong V$; namely, $\sigma(\pi)$ is the set of characters of $V$ that occur (discretely or continuously) in $\pi$ (i.e., the smallest set containing the support of the Fourier transform of the distribution $\la w_1,\pi(v)w_2\ra$, $v\in V$, for any 
$w_1,w_2\in \mathcal H$). 

\begin{lemma} Suppose $\dim V\ge 2$. Then $\pi$ is positive energy if and only if $\sigma(\pi)$ is contained in the positive part of the solid light cone, 
$\overline V_+$. 
\end{lemma}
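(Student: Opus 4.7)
The plan is to invoke the SNAG theorem for the abelian group $V$, presenting $\pi|_V$ as $\int_{V^*}e^{i\la p,v\ra}\,dP(p)$ for a projection-valued measure $P$ on $V^*\cong V$ whose support is exactly $\sigma(\pi)$. Choosing coordinates $(E,\bold p)\in V^*$ compatible with the fixed time/space decomposition, the operator $\widehat H_\pi$ becomes multiplication by $E$, so $\sigma(\widehat H_\pi)$ equals the closure of the projection of $\sigma(\pi)$ onto the $E$-axis. Hence positive energy is equivalent to $\sigma(\pi)$ being bounded below in the $E$-coordinate. The direction ``$\sigma(\pi)\subset \overline V_+ \Rightarrow$ positive energy'' is then immediate, since every point of $\overline V_+$ has $E\ge |\bold p|\ge 0$.

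For the converse, the key input is that $\sigma(\pi)$ is $SO_+(V)$-invariant. From the semidirect product structure $\widetilde{\bold P}={\rm Spin}_+(V)\ltimes V$, for $g\in{\rm Spin}_+(V)$ covering $\bar g\in SO_+(V)$ one has $\pi(g)\pi(v)\pi(g)^{-1}=\pi(\bar g v)$, so conjugating the spectral decomposition gives $\pi(g)P(A)\pi(g)^{-1}=P(\bar g A)$ for Borel $A\subset V^*$, forcing $\bar g\cdot\sigma(\pi)=\sigma(\pi)$. Assuming positive energy but $\sigma(\pi)\not\subset\overline V_+$, I would pick $v_0\in\sigma(\pi)\setminus \overline V_+$ and use invariance to conclude that the full $SO_+(V)$-orbit of $v_0$ lies in $\sigma(\pi)$. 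Any such $v_0$ falls into one of three orbit types: past-directed timelike ($|\bold v_0|^2>0$, $E(v_0)<0$), past-directed nonzero lightlike, or spacelike ($|\bold v_0|^2<0$). The contradiction will come from checking that in each case the orbit contains vectors of arbitrarily negative $E$, forcing $\sigma(\pi)$ to be unbounded below in $E$.

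The only real obstacle is the orbit classification, which is also where the hypothesis $\dim V\ge 2$ enters. For past-directed timelike or lightlike $v_0$, the orbit is respectively the past sheet of a two-sheeted mass hyperboloid or the past light cone minus the origin, each of which manifestly extends to $E\to -\infty$. For spacelike $v_0$ with $|\bold v_0|^2=-s^2$, the orbit is the full hyperboloid $|\bold v|^2=-s^2$ when $\dim V\ge 3$ (connected, with $E$ unbounded on it), while for $\dim V=2$ it splits into two hyperbola branches under the one-parameter boost subgroup, but each branch still attains all real values of $E$. In dimension one the Lorentz group is trivial and the argument collapses, which is precisely why the hypothesis $\dim V\ge 2$ is stated; once this combinatorial check is done, the lemma is proved.
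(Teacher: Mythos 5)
Your proof is correct and follows essentially the same route as the paper: reduce positive energy to boundedness below of the time-projection of $\sigma(\pi)$, then use $SO_+(V)$-invariance of $\sigma(\pi)$ together with a case check on Lorentz orbits to show that any orbit not contained in $\overline V_+$ has unbounded-below energy projection. The paper states the orbit fact in one parenthetical line; your proposal supplies the SNAG/spectral-measure justification for identifying $\sigma(\widehat H_\pi)$ with the closure of the $E$-projection of $\sigma(\pi)$, and carries out the orbit classification (including the $\dim V=2$ splitting into hyperbola branches) that the paper leaves implicit.
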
 

\begin{proof} By definition, $\pi$ is of positive energy iff the orthogonal projection of $\sigma(\pi)$ onto the dual of the time axis is bounded below. Since $\sigma(\pi)$ is invariant under $SO_+(V)$, this implies the statement (an 
$SO_+(V)$-orbit on $V$ has bounded below projection iff it is contained in $\overline V_+$).
\end{proof} 

Note that this is false for $d=1$ (quantum mechanics), where the hamiltonian can be shifted by a constant without any effect on the theory. But the latter is longer so in quantum field theory on a Minkowski space of dimension $>1$. 

We are now ready to give Wightman's definition of a QFT. Let $\mathcal S=\mathcal S(V)$ 
be the Schwartz space of $V$. 

\begin{definition} A {\it Wightman QFT}\index{Wightman QFT} on a Minkowski space $V$ entails the following data: 

1. A finite dimensional real super-representation $R=R_0\oplus R_1$ of ${\rm Spin}_+(V)$ (the {\it field space}).\index{field space}

2. A super Hilbert space $\mathcal H=\mathcal H_0\oplus \mathcal H_1$ carrying a positive energy 
unitary representation $\pi: \widetilde{\bold P}\to {\rm Aut}\mathcal H$ of the double cover of the Poincar\'e group, $\widetilde{\bold P}={\rm Spin}_+(V)\ltimes V$. 

3. A dense $\widetilde{\bold P}$-stable subspace $\mathcal D\subset \mathcal H$.  

4. A $\widetilde{\bold P}$-invariant unit vector $\Omega\in \mathcal D$ called the {\it vacuum vector}.\index{vacuum vector}

5. A $\widetilde{\bold P}$-invariant even linear map: $\mathcal S\otimes R^*\to {\rm End}\mathcal D$ 
called the {\it field map}.\index{field map}

This data is subject to the following axioms.

{\bf A1.} If $f$ is real then $\phi(f)$ is Hermitian symmetric (in the supersense). 

{\bf A2.} $\phi$ is weakly continuous, i.e. for every $w_1,w_2\in \mathcal D$, 
the functional $\mathcal S\otimes R^*\to \Bbb C$ defined by $f\mapsto \la w_1,\phi(f)w_2\ra$ is continuous. 

{\bf A3.} $\mathcal D$ is spanned (algebraically) by vectors $\phi(f_1)...\phi(f_n)\Omega$. 

{\bf A4.} {\it Space locality}:\index{space locality} If $f_1,f_2$ have spacelike separated supports, i.e., for any $v_1\in {\rm supp}f_1$, $v_2\in {\rm supp}f_2$ we have $|v_1-v_2|^2<0$, then 
$$
[\phi(f_1),\phi(f_2)]=0
$$ 
(with commutator understood in the supersense). 
\end{definition} 

In addition, if $\mathcal H^{\widetilde{\bold P}}=\Bbb C\Omega$, one says that 
we have a Wightman QFT with a {\it unique vacuum}. 

We will also always assume that our QFT is {\it nondegenerate}, i.e., 
for every irreducible subrepresentation $E\subset R^*_j$, $j=0,1$, one has
$\phi|_{\mathcal S\otimes E}\ne 0$; otherwise we can simply remove this subrepresentation 
without any effect on the theory.

A fundamental fact about Wightman QFT is the following theorem, which we will not prove here. Let $\zeta$ be the 
generator of the kernel of the map ${\rm Spin}_+(V)\to {\rm SO}_+(V)$, so $\zeta^2=1$. 

\begin{theorem} (The spin-statistics theorem) If $E\subset R_j^*$ is a subrepresentation then $\zeta|_E=(-1)^j$. 
\end{theorem}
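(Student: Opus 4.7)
The plan is to argue by contradiction in the classical Streater--Wightman style: assuming wrong statistics on an irreducible component, I will show that the associated two-point Wightman distribution must vanish, hence (via positivity of the super Hilbert-space form and cyclicity of $\Omega$) that the field $\phi$ itself vanishes on that component, contradicting nondegeneracy.

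More precisely, let $E\subset R_j^*$ be irreducible. Since $\zeta$ is central in $\widetilde{\bold P}$ with $\zeta^2=1$, Schur's lemma forces $\zeta|_E=\epsilon\in\lbrace +1,-1\rbrace$. The claim is $\epsilon=(-1)^j$; I assume for contradiction that $\epsilon=-(-1)^j$. The target is to prove that the two-point distribution
$$W_{ab}(x):=\la\Omega,\phi_a(x)\phi_b(0)\Omega\ra,\qquad a,b\text{ a basis of }E,$$
vanishes identically as a tempered distribution on $V$ (translation invariance having already reduced it to one variable).

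The analytic apparatus is standard. First, the positive-energy hypothesis puts the joint spectrum of $\pi(V)$ inside $\overline{V}_+$, so the Fourier transform of $W_{ab}$ is supported in $\overline{V}_+$; by Paley--Wiener, $W_{ab}$ is the boundary value of a function $\mathcal{W}_{ab}(z)$ holomorphic on the backward tube $V-iV_+\subset V_{\Bbb C}$. Second, Lorentz covariance of the Wightman functions, together with the Bargmann--Hall--Wightman theorem, extends $\mathcal{W}_{ab}$ to a function holomorphic on the extended tube $\mathcal{T}'$ (obtained by acting on the backward tube by all of ${\rm Spin}(V)_{\Bbb C}$), on which one has the covariance law $\mathcal{W}(\Lambda z)=(\rho_E^*(\widetilde\Lambda)\otimes\rho_E^*(\widetilde\Lambda))\mathcal{W}(z)$ for $\widetilde\Lambda\in{\rm Spin}(V)_{\Bbb C}$. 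By Jost's theorem, the real points of $\mathcal{T}'$ are exactly the spacelike vectors of $V$ (for $d\ge 2$), so all spacelike $x$ are boundary values of $\mathcal{T}'$.

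Now the two competing relations at real spacelike $x$: on the one hand, space locality (axiom A4) applied to the parity-$j$ operators $\phi_a(x),\phi_b(0)$ gives $\phi_a(x)\phi_b(0)=(-1)^j\phi_b(0)\phi_a(x)$ (since $(-1)^{j\cdot j}=(-1)^j$), hence $\mathcal{W}_{ab}(x)=(-1)^j\mathcal{W}'_{ab}(x)$ where $\mathcal{W}'$ denotes the analytic continuation of the reversed ordering $\la\Omega,\phi_b(0)\phi_a(x)\Omega\ra$. On the other hand, continuing $\mathcal{W}'$ back to $\mathcal{W}$ requires transporting across the "other" tube via a path in ${\rm Spin}(V)_{\Bbb C}$ that projects to a closed loop in $SO(V)_{\Bbb C}$ lifting the generator of $\pi_1(SO_+(V))$---equivalently, a Jost--PCT-type argument built into the Wightman axioms yields an identity of the form $\mathcal{W}'_{ab}(x)=\epsilon\,\mathcal{W}_{ab}(x)$, the factor of $\epsilon$ (rather than $\epsilon^2$) arising because the relevant analytic continuation connects the two orderings through a single $2\pi$-rotation in ${\rm Spin}$ rather than rotating each argument independently. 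Combining: $\mathcal{W}=(-1)^j\epsilon\mathcal{W}=-\mathcal{W}$, forcing $\mathcal{W}\equiv 0$ and hence $W\equiv 0$.

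Finally, for real $f\in\mathcal{S}\otimes E$ the operator $\phi(f)$ is (super-)Hermitian by A1, so up to a purely imaginary factor dictated by the parity of $\phi(f)\Omega$ one has
$$\pm i^{\,j}\,\la\Omega,\phi(f)\phi(f)\Omega\ra \;=\;\la\phi(f)\Omega,\phi(f)\Omega\ra.$$
The left side is an integral of $W$ against $\overline f\otimes f$ and hence vanishes; positive-definiteness of the super Hilbert form then gives $\phi(f)\Omega=0$ for all real $f$, and then for all $f\in\mathcal{S}\otimes E$ by complex-linearity. Combined with cyclicity of $\Omega$ (A3) and weak continuity (A2) this forces $\phi|_{\mathcal{S}\otimes E}=0$, contradicting nondegeneracy. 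The main obstacle is the third paragraph: extracting the single factor $\epsilon$ (and not $\epsilon^2$) from the analytic continuation relating $\mathcal{W}$ to $\mathcal{W}'$ requires careful use of the double cover ${\rm Spin}(V)_{\Bbb C}\to SO(V)_{\Bbb C}$ and the Jost--Wightman analytic-continuation identity---essentially the same ingredient that powers the PCT theorem---and is where the nontrivial interplay between spin (action of $\zeta$) and statistics (parity in $R$) is actually established.
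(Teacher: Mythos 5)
The paper explicitly declines to prove this theorem (``A fundamental fact about Wightman QFT is the following theorem, which we will not prove here''), so there is no paper argument to compare against. Your sketch is the classical Streater--Wightman/Jost route: assume the wrong statistics $\epsilon=-(-1)^j$ on an irreducible $E\subset R_j^*$, continue the two-point function to the extended tube using positive energy and Bargmann--Hall--Wightman, compare orderings at Jost points via space locality, derive a sign identity forcing $W_2\equiv 0$, and then use positivity and cyclicity to kill $\phi|_{\mathcal S\otimes E}$, contradicting the nondegeneracy assumption. The outer frame is sound and is the standard proof.

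The one genuine gap---which you correctly flag as ``the main obstacle''---is the assertion that the analytic continuation from $\mathcal W'$ back to $\mathcal W$ produces the single factor $\epsilon=\zeta|_E$ rather than $\epsilon^2=1$. This sign \emph{is} the mathematical content of the spin--statistics theorem; it cannot be asserted. To obtain it one must exhibit a path $\Lambda(t)\in SO(V_{\Bbb C})$ from the identity carrying a Jost point $x$ to $-x$ while staying in the extended tube, lift it to ${\rm Spin}(V_{\Bbb C})$ (obtaining an endpoint $g$ with $g^2=\zeta$, because the underlying loop in $SO(V)$ generates $\pi_1$), and then evaluate the covariance factor $\rho^*(g)\otimes\rho^*(g)$ against the tensor structure of the $E^*\otimes E^*$-valued function $W$ to extract $\epsilon$. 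Your one-sentence gloss names the phenomenon but does not derive it; as written, the reader cannot see why the answer is $\epsilon$ rather than $\epsilon^2$, which would give no contradiction. A smaller secondary issue: the step from $\phi(f)\Omega=0$ to $\phi|_{\mathcal S\otimes E}=0$ is a Reeh--Schlieder-type statement, not a formal consequence of cyclicity; one must spacelike-translate $f$ past arbitrary insertions, use space locality, and continue back. Both gaps are precisely the technical content the paper omits; they can be filled by following Streater--Wightman or Jost in detail.
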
 

In other words, there is a relationship between the {\it spin}\index{spin of a quantum field} (mod integers) of a quantum field (essentially, the eigenvalue of $\zeta$) and its {\it statistics}\index{statistics of a quantum field}, i.e., whether it is bosonic (even) or fermionic (odd). Namely, the theorem says that all bosonic fields must have $\zeta=1$ (integer spin) and all fermionic fields must have $\zeta=-1$ (half-integer spin). 

\begin{remark} We will see that the theory of free bosons and fermions can be naturally formulated as a Wightman QFT. Moreover, this is also the case for a number of non-free theories, which is the subject of 
a difficult area of mathematical physics called {\it constructive field theory}. Still, most theories 
that physicists really care about are either not known to be Wightman QFT, or simply fail to be ones for various 
reasons (perturbative theories, low energy effective theories, non-unitary theories, Euclidean theories, theories living on compact manifolds, etc.) Thus we will view Wightman axioms just as one (somewhat limited) rigorous model for our mathematical understanding of QFT. 
\end{remark} 

\subsection{Wightman functions} 

\begin{proposition} 
In a Wightman QFT on a Minkowski space $V$, for every $n\ge 1$ there exists a unique tempered distribution $W_n$ 
on $V^n$ valued in $R^{*\otimes n}$ such that 
$$
W_n(f_1\boxtimes...\boxtimes f_n)=\la \Omega,\phi(f_1)...\phi(f_n)\Omega\ra.
$$
\end{proposition}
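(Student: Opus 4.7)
The plan is to view the proposition as a straightforward application of the Schwartz nuclear theorem (kernel theorem) to the multilinear form
\[
M_n(f_1,\dots,f_n) := \langle \Omega, \phi(f_1)\cdots\phi(f_n)\Omega\rangle
\]
on $(\mathcal S\otimes R^*)^n$. The task therefore reduces to two points: (i) $M_n$ is separately continuous in each argument, so the nuclear theorem produces a unique extension to a tempered distribution on $V^n$ with values in $R^{*\otimes n}$; (ii) this extension really is what the formula asserts, and uniqueness follows from the density of split test functions.

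First I would fix all but one argument $f_i$ and rewrite
\[
\langle\Omega,\phi(f_1)\cdots\phi(f_n)\Omega\rangle
= \langle w_1,\phi(f_i)\,w_2\rangle,
\]
where $w_1 := \phi(\bar f_{i-1})\cdots\phi(\bar f_1)\Omega\in\mathcal D$ and $w_2 := \phi(f_{i+1})\cdots\phi(f_n)\Omega\in\mathcal D$. To justify the transfer of the first $i-1$ operators to the left, one uses axiom A1 (for real test functions $\phi(f)$ is Hermitian on $\mathcal D$) together with the decomposition $f = \mathrm{Re}\,f + i\,\mathrm{Im}\,f$; the super-grading only introduces a standard Koszul sign which does not affect continuity. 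By axiom A2, the scalar map $f_i\mapsto \langle w_1,\phi(f_i)w_2\rangle$ is continuous, so $M_n$ is separately continuous in each variable.

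Next I invoke the Schwartz nuclear theorem: because $\mathcal S\otimes R^*$ is a nuclear Fr\'echet space (tensor product of a nuclear Fr\'echet with a finite-dimensional space), any separately continuous $n$-linear form on $(\mathcal S\otimes R^*)^n$ is actually jointly continuous and extends uniquely to a continuous linear functional on the completed topological tensor product, which canonically identifies with $\mathcal S(V^n)\otimes R^{*\otimes n}$. This extension is, by definition, a tempered $R^{*\otimes n}$-valued distribution $W_n$ on $V^n$, and by construction it satisfies $W_n(f_1\boxtimes\cdots\boxtimes f_n)=M_n(f_1,\dots,f_n)$. Uniqueness of $W_n$ comes from the fact that the linear span of elementary tensors $f_1\boxtimes\cdots\boxtimes f_n$ is dense in $\mathcal S(V^n)$, so any two distributions agreeing on them must coincide.

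The only substantive step is the separate continuity, and there the subtle point is handling the adjoints: one must check that sliding $\phi(f_j)$'s past $\Omega$ into the left entry is legal on the dense subspace $\mathcal D$. This is where axioms A1 and A3 are used (and where the super sign convention is to be tracked). Once this is set up cleanly, the nuclear theorem does the rest, and no additional hypothesis (space locality, Poincar\'e invariance, or positivity of energy) is needed for the existence/uniqueness statement itself; these axioms only constrain which distributions $W_n$ can arise.
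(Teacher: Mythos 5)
The paper leaves this proposition as an exercise, so there is no proof to compare against; your argument is correct and is exactly the standard one: use axioms A1--A3 to exhibit $M_n$ as a separately continuous $n$-linear form on $(\mathcal S\otimes R^*)^n$, then apply the Schwartz nuclear (kernel) theorem to extend it to the completed tensor product $\mathcal S(V^n)\otimes R^{*\otimes n}$, with uniqueness following from density of elementary tensors. Your observation that Poincar\'e invariance, positivity of energy, and space locality play no role in this step, and only constrain which $W_n$ can occur, is also correct and worth emphasizing. The one place you gloss that deserves a line of care is the super-adjoint sign: with the convention $\la x, T^\dagger y\ra = (-1)^{p(x)p(T)}\la Tx, y\ra$ and $\Omega$ even, repeatedly pulling the $\phi(\overline f_j)$'s to the left produces a scalar sign depending only on the parities $p(f_j)$, so it is constant on each homogeneous piece of $(\mathcal S\otimes R^*)^n$; since $R^*=R^*_0\oplus R^*_1$ is finite dimensional, this does not affect continuity, as you assert.
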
 

We leave the proof of this proposition as an exercise. 

We will therefore think of $W_n$ as a (generalized) function on $V^{\otimes n}$ valued in $R^{*\otimes n}$, denoted 
$W_n(x_1,...,x_n)$, so that 
$$
W_n(f_1\boxtimes...\boxtimes f_n)=\int_{V^n}W_n(x_1,...,x_n)f_1(x_1)...f_n(x_n)dx_1...dx_n
$$
where the product on the right hand side involves contraction of corresponding copies of $R$ and $R^*$.
Thus, given $u_1,...,u_n\in R$, we have the scalar-valued distribution
$$
W_n^{u_1,...,u_n}(x_1,...,x_n):=(W_n(x_1,...,x_n),u_1\otimes...\otimes u_n).
$$
 
In other words, we may define an operator-valued distribution $\phi(x)$ 
such that
$$
\phi(f)=\int_V \phi(x)f(x)dx;
$$
then 
$$
W_n(x_1,...,x_n)=\la \Omega,\phi(x_1)...\phi(x_n)\Omega\ra.
$$

\begin{definition}  The generalized functions $W_n(x_1,...,x_n)$ 
are called the {\it Wightman (correlation) functions}\index{Wightman functions} of the Wightman QFT.  
\end{definition}

Note that Wightman functions completely determine the Wightman QFT as follows. Let $\widetilde{\mathcal D}:=T(\mathcal S\otimes R)$ (the tensor algebra), so it is spanned by elements 
$f_1\otimes f_2\otimes...\otimes f_n$, $f_i\in \mathcal S\otimes R$. 
Define the inner product on 
$\widetilde{\mathcal D}$ by 
$$
\la f_1\otimes...\otimes f_n, g_1\otimes...\otimes g_m\ra:=(-1)^{\sum_{i<j}p(f_i)p(f_j)}W_{n+m}(\overline f_n\boxtimes...\boxtimes\overline f_1\boxtimes g_1\boxtimes...\boxtimes g_m).
$$
It is easy to see that this inner product is well defined, and 
$$
\la f_1\otimes...\otimes f_n, g_1\otimes...\otimes g_m\ra=\la \phi(f_1)...\phi(f_n)\Omega, \phi(g_1)...\phi(g_m)\Omega\ra
$$
(where $f_i$ are purely odd or purely even). 
Thus the inner product $\la,\ra$ on $\widetilde{\mathcal D}$ is nonnegative definite, 
the Hilbert space $\mathcal H$ can be recovered as the completion of $\widetilde{\mathcal D}$ with respect to $\la,\ra$, 
and $\mathcal D$ is the image of $\widetilde{\mathcal D}$ in $\mathcal H$ (note that 
the map $\widetilde{\mathcal D}\to \mathcal H$ need not be injective). 
Moreover, the vector $\Omega$ is the image of $1\in \widetilde{\mathcal D}$ in $\mathcal D$, and the representation $\pi$ is obtained by extending the action of $\widetilde{\bold P}$ on $\mathcal D$ (which descends from $\widetilde {\mathcal D}$) by continuity. 

So we can ask: what conditions should Wightman functions satisfy to define a Wightman QFT? 
Let us list some necessary conditions, which follow from the above discussion. 
To this end, denote by 
$$
W: T(\mathcal S\otimes R)\to \Bbb C
$$ 
the natural liner map and by $*: T(\mathcal S\otimes R)\to T(\mathcal S\otimes R)$ the antilinear map given by 
$(f_1\otimes...\otimes f_n)^*=(-1)^{\sum_{i<j}p(f_i)p(f_j)}\overline f_n\otimes...\otimes \overline f_1$.  

\begin{proposition}\label{cond15} The Wightman functions $W_n$ of a Wightman QFT satisfy the following properties. 

1. $W_n$ are $\widetilde{\bold P}$-invariant. 

2. Positive energy: the Fourier transform of $W_n$ is supported on the set of $(p_1,...,p_n)\in V^n$ 
such that $\sum_i p_i=0$ and $p_{i+1}-p_i\in \overline V_+$. 

3. $W_n(f^*)=\overline{W_n(f)}$. 

4. Space locality: 
$$
W_n^{u_1,...,u_n}(x_1,...,x_i,x_{i+1},...,x_n)=(-1)^{p(u_i)p(u_{i+1})}W_n(x_1,...,x_{i+1},x_{i},...,x_n)
$$
 if $|x_i-x_{i+1}|^2<0$. 

5. Positivity: $W(f^*\otimes f)\ge 0$ for any $f\in T(\mathcal S\otimes R)$. 
\end{proposition}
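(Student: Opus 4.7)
The plan is to derive all five properties from the Wightman axioms A1--A4, the $\widetilde{\bold P}$-invariance of $\Omega$, the spectral positivity of $\pi$, and the super-Hermitian structure on $\mathcal H$. Properties 1, 3, 4 and 5 are essentially formal consequences that can be dispatched by short algebraic manipulations; property 2 (the spectrum condition) is the substantive analytic step and will be the main obstacle.

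For property 1, the inputs $\pi(g)\Omega=\Omega$ and the $\widetilde{\bold P}$-equivariance $\phi(g\cdot f)=\pi(g)\phi(f)\pi(g)^{-1}$ on $\mathcal D$ give, after substitution into $W_n$ and cancellation of adjacent $\pi(g)^{-1}\pi(g)$ factors, the desired invariance. For property 4, A4 applied to adjacent arguments yields $\phi(f_i)\phi(f_{i+1})=(-1)^{p(f_i)p(f_{i+1})}\phi(f_{i+1})\phi(f_i)$ whenever ${\rm supp}\,f_i$ and ${\rm supp}\,f_{i+1}$ are spacelike separated; localizing each $f_j$ near $x_j$ and using A2 together with the Schwartz kernel theorem then translates this into the stated distributional identity for $W_n^{u_1,\ldots,u_n}$. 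For property 3, I will use that the inner product is super-Hermitian, so $\overline{\la\Omega,A\Omega\ra}=\la\Omega,A^\dagger\Omega\ra$ for any even operator $A$, together with the super-transposition identity $(\phi(f_1)\cdots\phi(f_n))^\dagger=(-1)^{\sum_{i<j}p(f_i)p(f_j)}\phi(\overline f_n)\cdots\phi(\overline f_1)$, which follows from A1 (extended from real to complex arguments by complex linearity) and the general rule $(AB)^\dagger=(-1)^{p(A)p(B)}B^\dagger A^\dagger$; the emerging sign is precisely the one built into the definition of $f^*$, so $\overline{W_n(f)}=W_n(f^*)$. For property 5, decompose $f=\sum_n f^{(n)}\in T(\mathcal S\otimes R)$ and define $\Phi:=\sum_n\phi(f^{(n)})$ on $\mathcal D$ (where $\phi(f_1\otimes\cdots\otimes f_n):=\phi(f_1)\cdots\phi(f_n)$); the same super-dagger computation used for property 3 then yields $W(f^*\otimes f)=\la\Phi\Omega,\Phi\Omega\ra\ge 0$.

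The hard part is property 2. The strategy is first to use translation invariance to write $W_n(x_1,\ldots,x_n)=\mathcal{W}_n(y_1,\ldots,y_{n-1})$ with $y_i=x_{i+1}-x_i$: starting from $\pi(v)\Omega=\Omega$ and $\phi(x_i)=\pi(x_i)\phi(0)\pi(-x_i)$, the outer translations collapse against $\Omega$ and only the differences survive, yielding
$$\mathcal{W}_n(y_1,\ldots,y_{n-1})=\la\Omega,\phi(0)\pi(y_1)\phi(0)\pi(y_2)\cdots\pi(y_{n-1})\phi(0)\Omega\ra.$$
This already produces the $\sum_i p_i=0$ part of the support statement. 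Next, the restriction of $\pi$ to the abelian translation subgroup $V\subset\widetilde{\bold P}$ admits a joint projection-valued spectral measure $dP(q)$ supported in $\overline V_+\cup\{0\}$ by the positive-energy hypothesis combined with $d\ge 2$ and Lorentz invariance of $\sigma(\pi|_V)$. Inserting $\pi(y)=\int e^{i(q,y)}\,dP(q)$ into the formula above and Fourier-transforming in each $y_i$ then shows that $\widehat{\mathcal{W}}_n$ is supported in $(\overline V_+\cup\{0\})^{n-1}$; a change of variables $q_i=p_{i+1}+\cdots+p_n$, arising from the identity $\sum_i p_i x_i=(\sum_i p_i)x_1+\sum_j q_j y_j$, translates this back into the claimed support condition on $(p_1,\ldots,p_n)$. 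The delicate point is that $\phi(0)$ is only an operator-valued distribution, so the spectral manipulations must be carried out after first smearing each $y_i$ against a Schwartz test function on $V$ and treating the resulting bounded operators as the objects to which the spectral theorem is applied, before removing the regularization at the end.
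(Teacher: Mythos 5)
Your proof takes essentially the same approach as the paper's: properties (1), (3), (4), (5) are dispatched formally from the axioms, the super-Hermitian structure, and the definition of $f^*$, while (2) is reduced via translation invariance to difference variables $y_i=x_{i+1}-x_i$ and then follows from the spectral theorem applied to $\pi|_V$ together with the positive-energy hypothesis. The only difference is one of detail: the paper writes out (2) only for $n=2$ and remarks that the general case is similar, whereas you carry out the general $n$ (including the change of variables from $p_j$ to $q_k=\sum_{j>k}p_j$) and flag the distributional subtlety about smearing before applying the spectral theorem — both useful elaborations on the paper's sketch.
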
 

\begin{proof} (1) follows from the invariance of the vacuum vector and the field map. (3) follows from the fact that for real $f$, $\phi(f)$ is hermitian symmetric. (4) follows from the space locality axiom. (5) follows from positivity 
of the inner product on $\mathcal H$. So it remains to prove (2). Let us do so for $n=2$, the general proof is similar. 

By translation invariance we have 
$$
W_2(v_1,v_2)=\Bbb W(v)
$$ 
where $v=v_2-v_1$. Thus our job is to show that the Fourier transform of $\Bbb W$ is supported on $\overline V_+$. We have 
$$
\Bbb W(v)=\la \Omega,\phi(0)\phi(v)\Omega\ra=
$$
$$
=\la \Omega,\phi(0)\pi(v)\phi(0)\pi(-v)\Omega\ra=
\la \phi(0)\Omega,\pi(v)\phi(0)\Omega\ra.
$$
So the statement follows from the fact that every character of $V$ which occurs in $\mathcal H$ belongs to 
$\overline V_+$. 
\end{proof} 

In fact, it turns out that these necessary conditions are also sufficient, and we have the following theorem, which can be proved by following the above reconstruction procedure (but we will not give a proof): 

\begin{theorem} If a collection of distributions $W_n$ satisfies conditions (1)-(5) of Proposition \ref{cond15} then they define a Wightman QFT. 
\end{theorem}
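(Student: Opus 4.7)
The plan is to reverse the construction sketched just before the theorem: starting from the distributions $W_n$, I would build $\mathcal H$, $\mathcal D$, $\Omega$, $\pi$, and $\phi$ by hand on the tensor algebra $\widetilde{\mathcal D} := T(\mathcal S \otimes R)$, and then verify the Wightman axioms A1--A4. First, define a sesquilinear form on $\widetilde{\mathcal D}$ by
$$\la f_1\otimes\cdots\otimes f_n,\, g_1\otimes\cdots\otimes g_m\ra := W_{n+m}\big((f_1\otimes\cdots\otimes f_n)^*\otimes g_1\otimes\cdots\otimes g_m\big),$$
extended sesquilinearly. Condition (3) gives hermitian symmetry in the super-sense and condition (5) gives nonnegativity. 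Let $\mathcal N$ be its null space; set $\mathcal D := \widetilde{\mathcal D}/\mathcal N$ with the induced positive definite inner product, let $\mathcal H$ be its completion, and let $\Omega$ be the image of $1$ (after normalizing $W_0 = 1$).

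Next I would define the field map $\phi(f)$ for $f\in \mathcal S\otimes R^*$ as left tensor multiplication on $\widetilde{\mathcal D}$. That $\mathcal N$ is stable (so $\phi(f)$ descends to $\mathcal D$) follows from the identity $\la \phi(f)x, y\ra = \pm\la x, \phi(\bar f)y\ra$, which is built into the definition of the inner product. Weak continuity (A2) is then immediate from the fact that the $W_n$ are tempered distributions. Axiom A3 is tautological, because $f_1\otimes\cdots\otimes f_n$ represents exactly $\phi(f_1)\cdots\phi(f_n)\Omega$, and A1 follows from the same identity using condition (3). For the representation $\pi$ of $\widetilde{\bold P}$, let $\widetilde{\bold P}$ act diagonally on $\widetilde{\mathcal D}$ via its natural action on $\mathcal S\otimes R$; condition (1) implies this action preserves the form and fixes $1$, so it descends to an isometric action on $\mathcal D$ fixing $\Omega$, and extends by continuity to a unitary representation on $\mathcal H$ that is strongly continuous because the action on $\mathcal S$ is. For space locality (A4), one reduces $[\phi(f_1),\phi(f_2)] = 0$ on $\mathcal D$ to the assertion that
$$W_{n+2+m}(\cdots\otimes f_1\otimes f_2\otimes \cdots) = (-1)^{p(f_1)p(f_2)}W_{n+2+m}(\cdots\otimes f_2\otimes f_1\otimes\cdots)$$
whenever ${\rm supp}\,f_1$ and ${\rm supp}\,f_2$ are spacelike separated; this is precisely condition (4) applied to the two adjacent arguments.

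The main obstacle will be the positive-energy property of $\pi$. Using only the data above, one must show that $\sigma(\pi|_V) \subset \overline V_+$. The idea is to compute, for vectors $x = \phi(f_1)\cdots\phi(f_n)\Omega$ and $y = \phi(g_1)\cdots\phi(g_m)\Omega$, the matrix coefficient $\la x, \pi(v)y\ra$ as the value of $W_{n+m}$ on an appropriate tensor depending on $v$ only through a shift of the arguments $g_j \mapsto g_j(\,\cdot\, - v)$. Taking the Fourier transform in $v$ and applying condition (2) forces the resulting spectral measure on $V^*$ to be supported in $\overline V_+$. By Stone's theorem this gives a joint spectral decomposition of $\pi|_V$ supported in $\overline V_+$, and in particular $\widehat H_\pi \ge 0$. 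Finally, $\Omega$ is $\widetilde{\bold P}$-invariant by construction, the super-structure on $\mathcal H$ is inherited from the $\Bbb Z/2$-grading on $R$, and all axioms of a Wightman QFT are verified. Everything other than the positive-energy step is essentially bookkeeping on the tensor algebra; the delicate point is translating the momentum-support condition (2) for the distributions $W_n$ into a genuine spectral statement for the unitary representation $\pi|_V$ on $\mathcal H$.
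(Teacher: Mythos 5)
Your proposal is correct and follows exactly the reconstruction procedure sketched in the paragraph just before the theorem (the one explaining how $\mathcal H$, $\mathcal D$, $\Omega$, $\pi$, $\phi$ are recovered from the $W_n$), which is precisely what the paper indicates when it says the theorem ``can be proved by following the above reconstruction procedure.'' You have correctly identified the one non-routine step — converting the momentum-support condition (2) into a spectral statement for $\pi|_V$ via matrix coefficients and the SNAG/Stone theorem — and the rest (positivity and Cauchy–Schwarz to identify the radical, stability of the null space under $\phi(f)$, invariance under $\widetilde{\bold P}$ from condition (1), locality from condition (4)) is the standard Wightman reconstruction bookkeeping, accurately recounted.
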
 

\begin{remark} The 1-point Wightman function $W_1(x)=\la \Omega,\phi(x)\Omega\ra$ 
is a constant $c$ by translation invariance, i.e. it is an element of $R^*$, 
and by invariance under rotations it is in $(R^*)^{{\rm Spin}_+(V)}$.
Thus we may (and will) assume without loss of generality that $c=0$ (otherwise 
we can replace $\phi(x)$ by $\phi(x)-c$). So we may assume without loss of generality that 
$W_1=0$. 
\end{remark} 

\begin{remark}\label{ft} The positivity property for the 2-point function can be written as 
$$
\int_{V^2}\Bbb W(x_2-x_1)\overline{f(x_1)}f(x_2)dx_1dx_2\ge 0,
$$
where $\Bbb W(x)=W_2(0,x)$. Thus, taking Fourier transforms, we have 
$$
\int_{V}\widehat{\Bbb W}(p)\overline {\widehat f(p)}\widehat f(p)dp\ge 0.
$$
This shows that $\widehat{\Bbb W}(p)dp$ is a measure concentrated 
on $\overline V_+$ and valued in nonnegative hermitian forms on $R_{\Bbb C}$. 
\end{remark} 

\subsection{The mass spectrum of a Wightman QFT} 

Let $\mathcal H^{(1)}\subset \mathcal H$ be the closure of the span 
of vectors $\phi(x)\Omega$, $x\in V$. It is called the space of {\it $1$-particle states},\index{$1$-particle state} and it is 
clearly a $\widetilde{\bold P}$-subrepresentation of $\mathcal H$. The {\it mass spectrum}\index{mass spectrum of a QFT}
of the theory is determined by the structure of this representation. So we need to discuss the 
representation theory of $\widetilde{\bold P}$. 

Since $\widetilde{\bold P}$ is a semidirect product, its irreducible unitary representations are unitarily induced. 
Namely, let $\mathcal O$ be an orbit of ${\rm Spin}_+(V)$ on $V$ 
and $\rho$ be an irreducible unitary representation 
of the stabilizer $\widetilde{\bold P}_0$ of a point $v_0\in \mathcal O$.   
Then $\rho$ defines an equivariant Hilbert bundle 
on $\mathcal O$ with total space 
$(\widetilde{\bold P}\times \mathcal \rho)/\widetilde{\bold P}_0$ where $\widetilde{\bold P}_0$ acts diagonally. 
Thus we can consider the space $\mathcal H_{\mathcal O,\rho}$ of square integrable half-densities 
on $\mathcal O$ with values in this bundle. This space carries a unitary representation 
of $\widetilde{\bold P}$. A theorem of Mackey then says that this unitary 
representation is irreducible, and all irreducible unitary representations of 
$\widetilde{\bold P}$ are obtained uniquely in this way. For example, if $\mathcal O=\lbrace 0\rbrace$, 
then $\mathcal H_{0,\rho}$ is just a unitary irreducible representation of ${\rm Spin}_+(V)$. 

Now we are ready to discuss the structure of the representation $\mathcal H^{(1)}$. 
By taking Fourier transforms (see Remark \ref{ft}), we see that if $\mathcal H_{\mathcal O,\rho}$ occurs in 
$\mathcal H^{(1)}$ then $\rho$ needs to be finite dimensional. For example, for 
$d\ge 3$ and $\mathcal O=\langle 0\rangle$ the only choice is the trivial representation, as the group
${\rm Spin}_+(V)$ is a connected semisimple non-compact Lie group. Moreover, if the theory has a unique vacuum then the trivial representation occurs in $\mathcal H$ discretely with multiplicity $1$, as the span of the vacuum vector $\Omega$. As $\mathcal H^{(1)}$ is orthogonal to $\Omega$ (since $W_1=0$), 
we see that the trivial representation does not occur in $\mathcal H^{(1)}$. 

Let us now consider what happens with other orbits. By the positive energy condition, the only orbits that can occur are 
$X_m^+$ defined by $E=\sqrt{p^2+m^2}$, $E>0$ (where for $d=2$ the set $X_0^+$ falls into two 
orbits $X_0^{++}$ and $X_0^{+-}$ defined by $p=\pm E>0$). For $m>0$ this is the upper sheet of a 2-sheeted hyperboloid and for $m=0$ it is the upper part of the light cone (which is a union of two orbits for $d=2$). 

In the case $m>0$, we may take $v_0=(m,0)$, 
then $\widetilde{\bold P}_0={\rm Spin}(d-1)$, so $\rho$ is a (necessarily finite dimensional) unitary representation of this compact Lie group. Physicists say that this representation corresponds to a {\it massive particle of mass $m$ and 
type $\rho$}.\index{massive particle} Particles arising in physically relevant quantum field theories are usually {\it scalars}\index{scalar particle}
($\rho=\Bbb C$), {\it spinors}\index{spinor particle} ($\rho$ is a spinor representation of ${\rm Spin}(d-1)$) and {\it vectors}\index{vector particle} ($\rho=\Bbb C^{d-1}$ is the vector representation ${\rm Spin}(d-1)$). Note that by the spin-statistics theorem, 
scalars and vectors are bosons and spinors are fermions. 

If $m=0$, $d\ge 3$, then we can take $v_0=(1,1,0,...,0)$, and the stabilizer 
is the non-reductive Lie group ${\rm Spin}(d-2)\ltimes \Bbb R^{d-2}$. Since $\rho$ is finite dimensional, 
$\Bbb R^{d-2}$ has to act trivially, so $\rho$ 
is an irreducible representation of the compact Lie group ${\rm Spin}(d-2)$. 
Physicists say that this representation corresponds to a {\it massless particle of 
type $\rho$}.\index{massless particle} The classification of massless particles is the same 
as for massive ones; however, note that since for massless particles $\rho$ is a representation of 
${\rm Spin}(d-2)$ rather than ${\rm Spin}(d-1)$, they in general have fewer components than massive ones; for example, a massless vector has one fewer component than a massive one. 

If $m=0,d=2$ then there are two choices for $v_0$: $(1,1)$ and $(1,-1)$. They have trivial stabilizer, so $\rho=\Bbb C$. 
Thus we have two types of massless particles: {\it right-moving}\index{right-moving particle} and {\it left-moving},\index{left-moving particle} corresponding to the two choices 
of $v_0$. These particles are called this way since the corresponding operators $\phi(x)$ satisfy the conditions 
$\phi(t,x)=\phi(0,x-t)$, $\phi(t,x)=\phi(0,x+t)$, respectively, which classically would be right-moving and left-moving waves.  

 The set $M$ of numbers $m$ corresponding to representations $\mathcal H_{X_m^+,\rho}$ (or $\mathcal H_{X_0^{+\pm},\rho}$ for $d=2$) occurring in $\mathcal H^{(1)}$ is called the {\it mass spectrum}\index{mass spectrum of a QFT} of the theory. 
One says that the theory has a {\it mass gap}\index{mass gap} when ${\rm inf} M=m>0$. In this case the spectrum of 
$\widehat H$ is $\lbrace 0\rbrace\cup [m,+\infty]$, so there is a gap between $0$ and $m$. 
To find the mass spectrum, it suffices to look at the function $\widehat{\Bbb W}$: 
the mass spectrum is just the intersection of its support with the time axis (this follows from Remark \ref{ft}). 

\subsection{Free theory of a scalar boson} 

Let us now construct a Wightman QFT corresponding to a scalar boson of mass $m>0$. 
Recall that in the Lagrangian setting we had a 2-point function 
$G_M(x_2-x_1)$, where $G_M(x)$ is a distribution satisfying the Klein-Gordon equation
$$
(\square+m^2)G_M=i\delta.
$$  
So at first sight for the corresponding Wightman QFT we want to have 
$\Bbb W(x)=G_M(x)$, so that the Lagrangian and Hamiltonian approach agree.  
However, the function $G_M(x)$ is even, while for $\Bbb W(x)$ we are supposed to have 
$\Bbb W(-x)=\overline{\Bbb W(x)}$, so our equality needs to be relaxed. 
 In fact, the correct condition is that the identity $\Bbb W(x)=G_M(x)$ only needs to hold
when $x$ is spacelike or when $x\in \overline V_+$. 
When $x\in \overline V_-$, we should rather have $\Bbb W(x)=\overline{G_M(x)}$. In other 
words, 
$$
G_M(x_2-x_1)=W_2^T(x_1,x_2)
$$ 
is the so-called {\it time ordered}\index{time-ordered 2-point function} 2-point function, i.e. one obtained 
from $W_2(x_1,x_2)$ when $x_1,x_2$ are put in the chronological order (where in the spacelike separated case 
the order does not matter due to space locality). 

We claim that with this definition the function $\Bbb W(x)$ satisfies the Klein-Gordon equation 
$$
(\square+m^2)\Bbb W=0
$$
on the nose (without the delta-function on the right hand side). Indeed, we have ${\rm Re}\Bbb W(x)={\rm Re}G(x)$, which satisfies the Klein-Gordon equation, so it remains to show that ${\rm Im}\Bbb W(x)$ 
satisfies it as well. But it is easy to see that 
$(\square+m^2){\rm Im}\Bbb W(x)$ is a distribution supported 
at the origin of homogeneity degree $-d$, so it is a multiple of $\delta$. Since ${\rm Im}\Bbb W(x)$ is an odd function, this distribution must be zero, as claimed. 

Also, since $G_M(x)$ is real for spacelike $x$, we get $\Bbb W(-x)=\overline{\Bbb W(x)}$. 
Thus the Fourier transform $\widehat {\Bbb W}(p)$ 
is real valued, supported on the hyperboloid $X_m$ and invariant under $SO_+(V)$. 
It follows that  
$$
\widehat {\Bbb W}(p)=c_+\delta_{X_m^+}+c_-\delta_{X_m^-}.
$$ 
where $c_\pm \in \Bbb R$. 
but in fact it can be shown that only $\delta_{X_m^+}$ occurs (this follows from the exponential decay of the Euclidean 2-point correlation function at infinity). Thus 
$$
\widehat {\Bbb W}(p)=c\delta_{X_m^+}. 
$$
In fact, one can show that $c=2\pi$. 

Similarly, we define higher $W_n$ for $n>2$ by the Wick formula, and 
this analysis implies after some work that these functions define a Wightman QFT. 

In this case, $\mathcal H^{(1)}=L^2(X_m^+)$, so we have a single particle of mass $m$. 

The theory of a free massless scalar, as well as massive and massless spinor is defined similarly. 

\subsection{Normal ordering, composite operators and operator product expansion in a free QFT}\label{compop}

In classical field theory, given a classical scalar field $\phi(x)$, we may consider arbitrary 
polynomials and even any smooth functions of $\phi$. The same is true for quantum mechanics, where $\phi(t)$ is a self-adjoint (possibly unbounded) operator on the Hilbert space $\mathcal H$ of quantum states, so using 
its spectral decomposition, we may define functions of $\phi$. However, in quantum field theory in $d+1$ dimensions with $d\ge 1$ the situation is more complicated. Indeed, in this case $\phi(x)$ is not a usual operator-valued function of $x$, but rather a generalized one -- an operator-valued distribution, and we know
that for singular distributions, such as $\delta(x)$, we cannot even define the square 
$\delta(x)^2$.   

Indeed, let $\phi(x)$ be a quantum scalar boson. Then the 2-point correlation function
$$
\langle\phi(x)\phi(y)\rangle=\langle \Omega,\phi(x)\phi(y)\Omega\rangle=G(x-y)
$$
blows up when $|x-y|^2=0$ (so in Euclidean signature, when $x=y$), so the operator $\phi^2(x)$ 
cannot possibly be well defined. 

Thus, if we want to quantize the classical field $\phi^2(x)$, we need to regularize the 
corresponding operator product. This can be done by a standard regularization procedure 
called the {\it normally ordered product}.\label{Normally ordered product} 

For example, in Euclidean signature, the operator product $\phi(x)\phi(y)$ is well defined 
when $x\ne y$: indeed, by Wick's formula 
$$
\langle \phi(x)\phi(y)\phi(z_1)...\phi(z_k)\rangle=
$$
$$
G(x-y)\langle \phi(z_1)...\phi(z_k)\rangle
+\sum_{i\ne j}G(x-z_i)G(y-z_j)\langle \phi(z_1)...\widehat \phi(z_i)...\widehat \phi(z_j)...\phi(z_k)\rangle,
$$
where the hat indicates omissions (here $x,y,z_1,...,z_k$ are distinct). Now, when $x\to y$, the first summand in this formula blows up while the second one does not. So it is natural to define the {\it normally ordered product} $\colon \phi(x)\phi(y)\colon$ just by throwing away the singular terms, i.e. by the condition that its correlation function with $\phi(z_1)...\phi(z_k)$ is 
 $$
\langle \colon\phi(x)\phi(y)\colon\phi(z_1)...\phi(z_k)\rangle=
\sum_{i\ne j}G(x-z_i)G(y-z_j)\langle \phi(z_1)...\widehat \phi(z_i)...\widehat \phi(z_j)...\phi(z_k)\rangle.
$$
This is equivalent to just saying that 
$$
\colon\phi(x)\phi(y)\colon=\phi(x)\phi(y)-G(x-y).
$$
Note that while $\phi(x)\phi(y)$ blows up when $x=y$, the normally ordered product 
$\colon \phi(x)\phi(y)\colon$ does not: 
$$
\langle \colon\phi^2(x)\colon\phi(z_1)...\phi(z_k)\rangle=
\sum_{i\ne j}G(x-z_i)G(x-z_j)\langle \phi(z_1)...\widehat \phi(z_i)...\widehat \phi(z_j)...\phi(z_k)\rangle.
$$
This defines a {\it composite operator}\index{Composite operator} $\colon\phi^2(x)\colon$, which is 
a well defined operator-valued distribution. 

Similarly one may define the normally ordered product $\colon \phi(x_1)...\phi(x_m)\colon$ of any number of factors, by removing all the singular terms from the correlators. For example,
$$
\colon \phi(x)\phi(y)\phi(z)\colon=\phi(x)\phi(y)\phi(z)-G(x-y)\phi(z)-G(y-z)\phi(x)-G(z-x)\phi(y).
$$
Such a product is well defined for all values of $x_1,...,x_k$ and is commutative (independent of ordering of factors) and associative. We can also differentiate 
by $x_j$ any number of times, to define the normally ordered product of arbitrary derivatives of $\phi$. Evaluating such products on the diagonal (when all points are the same), we obtain 
composite operators attached to any differential monomials (hence polynomials) with respect to $\phi$, such as $\colon\phi^3(x)\colon$, $:\phi_{x_i}\phi_{x_j}:$, etc. 

\begin{exercise} Derive a formula for the correlation function of several composite operators 
(evaluated at different points) in the theory of the scalar boson. 
\end{exercise} 

In particular, we can now consider the product of two composite operators, e.g. 
$\colon\phi^2(x)\colon \phi(y)$. Of course, this has a singularity 
at $x=y$, and an important problem is to understand the nature of this singularity. 
This is achieved by the procedure called the {\it operator product expansion},\index{Operator product expansion} which replaces the non-existent multiplication of composite operators. 

To explain this procedure, consider first the simplest example of operator product: 
$$
\phi(x)\phi(y)=G(x-y)+\colon \phi(x)\phi(y)\colon.
$$
Using Taylor's formula, this can be rewritten so that the right hand side only contains $\phi(y)$ and no $\phi(x)$:
$$
\phi(x)\phi(y)=G(x-y)+\sum_{\bold n}\frac{(x-y)^{\bold n}}{\bold n!}\colon \partial^{\bold n}\phi(y)\cdot \phi(y)\colon,
$$
where $\bold n:=(n_1,...,n_{d+1})$, $(x-y)^\bold n:=\prod_i (x_i-y_i)^{n_i}$, 
$\partial^{\bold n}:=\prod_i \partial_{x_i}^{n_i}$, and $\bold n!:=\prod_i n_i!$. 
In this sum, all terms except the first one are regular (i.e., continuous) at $x=y$. 

Let us now try to write down a similar expansion for a more complicated example of operator product, 
$\colon\phi^2(x)\colon \phi(y)$. We have 
$$
\langle \colon\phi^2(x)\colon \phi(y)\phi(z_1)...\phi(z_k)\rangle=2G(x-y)\langle \phi(x)\phi(z_1)...\widehat{\phi(z_j)}...\phi(z_k)\rangle+
$$
$$
\sum_{j,m,n\text{ distinct}}G(x-z_j)G(x-z_m)G(y-z_n)\langle \phi(z_1)...\widehat{\phi(z_j)}...\widehat{\phi(z_m)}...\widehat{\phi(z_n)}...\phi(z_k)\rangle.
$$
 Thus we get 
$$
\colon\phi^2(x)\colon \phi(y)=2G(x-y)\phi(y)+\colon \phi^2(x)\phi(y)\colon. 
$$
As before, using Taylor's formula, this can be rewritten so that the right hand side only contains $\phi(y)$ and no $\phi(x)$:
$$
\colon\phi^2(x)\colon \phi(y)=2G(x-y)\phi(y)+\sum_{\bold n,\bold m}\frac{(x-y)^{\bold n+\bold m}}{\bold n!\bold m!}\colon \partial^{\bold n}\phi(y)\cdot \partial^{\bold m}\phi(y)\cdot \phi(y)\colon.
$$
And again, all terms except the first one are regular at $x=y$.

As a final example, consider the product $\colon\phi^2(x)\colon \cdot \colon\phi^2(y)\colon$. 
A similar computation yields 
$$
\colon\phi^2(x)\colon \cdot \colon\phi^2(y)\colon =
2G^2(x-y)+4G(x-y)\colon \phi(x)\phi(y)\colon+\colon \phi^2(x)\phi^2(y)\colon, 
$$
and as before we can expand this to remove $\phi(x)$ using Taylor's formula. Namely, 
expanding the second summand, we get 
$$
\colon\phi^2(x)\colon \cdot \colon\phi^2(y)\colon =
$$
$$
2G^2(x-y)+4G(x-y)\sum_{\bold n}\frac{(x-y)^{\bold n}}{\bold n!}\colon \partial^{\bold n}\phi(y)\cdot \phi(y)\colon+\colon \phi^2(x)\phi^2(y)\colon, 
$$
and the last summand can be expanded similarly. We now see that there are many singular terms: $G(x)$ behaves as $|x|^{1-d}$ for $d>1$ and as $\log|x|$ for $d=1$, so 
the singular terms are the ones with $|\bold n|\le d-1$, where $|\bold n|:=\sum_i n_i$. 
For example, for $d=2$ for the massless boson we have 
$$
\colon\phi^2(x)\colon \cdot \colon\phi^2(y)\colon =
$$
$$
\frac{2}{|x-y|^2}+\frac{4}{|x-y|}\colon \phi^2(y)\colon+\sum_{j=1}^3\frac{4}{|x-y|}(x_j-y_j)\colon \partial_{x_i}\phi(y)\cdot \phi(y)\colon+\text{ regular}.
$$

Yet we see that the number of singular terms is finite. 
In fact, it is not hard to prove the following proposition (see \cite{QFS}, vol 1, p.449). 

\begin{proposition}\label{ope} Let $A,B$ be two composite operators in the theory of 
scalar boson. Then there exist a unique collection of functions 
$F_j(y)$ and composite operators $C_j(y)$ such that we have an asymptotic expansion 
$$
A(x)B(y)\sim \sum_j F_j(x-y)C_j(y),\ x\to y 
$$
such that for every $N$ we have $|F_j(z)|=O(|z|^N)$, $z\to 0$, for all but finitely many $j$. 
In particular, there are finitely many singular terms (not continuous at $x=y$). 
\end{proposition}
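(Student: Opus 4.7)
\medskip

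\textbf{Proof plan.} By linearity, I may reduce to the case in which $A$ and $B$ are normally ordered monomials in derivatives of $\phi$, say
\[
A(x)=\colon\!\partial^{\bold n_1}\phi(x)\cdots\partial^{\bold n_k}\phi(x)\!\colon,\qquad
B(y)=\colon\!\partial^{\bold m_1}\phi(y)\cdots\partial^{\bold m_l}\phi(y)\!\colon .
\]
The key tool is the version of Wick's theorem for products of two normally ordered expressions. Applied iteratively starting from the defining identity $\phi(x)\phi(y)=G(x-y)+{\colon}\phi(x)\phi(y){\colon}$ (and its differentiated versions), or derived directly from the generating function
$\colon e^{\phi(f)}\!\colon\,\cdot\,\colon e^{\phi(g)}\!\colon = e^{\int f G g}\,\colon e^{\phi(f)+\phi(g)}\!\colon$, it yields a \emph{finite} sum
\[
A(x)B(y) \;=\; \sum_{M}\Bigl(\prod_{(i,j)\in M}\partial^{\bold n_i}_x\partial^{\bold m_j}_y G(x-y)\Bigr)\, \colon\!\prod_{i\notin \pi_1(M)}\!\partial^{\bold n_i}\phi(x)\cdot\prod_{j\notin \pi_2(M)}\!\partial^{\bold m_j}\phi(y)\!\colon ,
\]
where the sum runs over partial matchings $M$ between $\{1,\dots,k\}$ and $\{1,\dots,l\}$, and $\pi_1,\pi_2$ denote the two projections. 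This is finite because $|M|\le \min(k,l)$.

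Next, I apply Taylor's formula with integral remainder to each surviving $\partial^{\bold n_i}\phi(x)$, expanding around $y$: for any fixed $N\ge 0$,
\[
\partial^{\bold n_i}\phi(x)=\sum_{|\bold s_i|<N}\frac{(x-y)^{\bold s_i}}{\bold s_i!}\partial^{\bold n_i+\bold s_i}\phi(y)\;+\;R^{(i)}_N(x,y),
\]
where each remainder $R^{(i)}_N$ is smooth and vanishes to order $N$ at $x=y$. Substituting and re-normal-ordering (which costs nothing, as all fields are now based at $y$), each matching $M$ contributes a sum of terms of the form $F_{M,\bold s}(x-y)\cdot C_{M,\bold s}(y)$, with
\[
F_{M,\bold s}(z)=\prod_i\frac{z^{\bold s_i}}{\bold s_i!}\prod_{(i,j)\in M}\partial^{\bold n_i}_x\partial^{\bold m_j}_y G(z)\Big|_{z=x-y},\qquad C_{M,\bold s}(y)=\colon\!\prod \partial^{\bold n_i+\bold s_i}\phi(y)\cdot \prod \partial^{\bold m_j}\phi(y)\!\colon ,
\]
plus an error uniformly $O(|x-y|^{N-c})$, where $c$ depends only on $k,l,d$ and the $|\bold n_i|,|\bold m_j|$. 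The claimed bound on the number of non-negligible $F_j$ follows from two elementary counts: the number of matchings $M$ is bounded by $k!\,l!$, and for fixed $M$ the function $\prod_{(i,j)\in M}\partial^{\bold n_i}_x\partial^{\bold m_j}_y G(x-y)$ has at worst a power-type singularity of some fixed order $\nu_M$ at the origin (since $G$ does, by the formula \eqref{bess}). Hence the product $F_{M,\bold s}(z)$ satisfies $F_{M,\bold s}(z)=O(|z|^{|\bold s|-\nu_M})$, so for each $N$ only finitely many pairs $(M,\bold s)$ produce an $F_{M,\bold s}$ that is not $O(|z|^N)$; taking $N$ arbitrarily large one obtains the required asymptotic expansion.

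For uniqueness, suppose two such expansions yield the relation $\sum_j F_j(z)C_j(y)=0$ (as an asymptotic series) with the $C_j$ a fixed set of distinct composite operators. Pairing with vacuum matrix elements of appropriate strings $\phi(z_1)\cdots\phi(z_p)\Omega$ and using the explicit Wick formula for $\langle C_j(y)\phi(z_1)\cdots\phi(z_p)\rangle$, one verifies that distinct $C_j$ produce linearly independent multilinear functionals of the $\phi(z_\alpha)$'s (indeed, two distinct Wick monomials in $\phi$ and its derivatives have distinct leading coincidence singularities as the $z_\alpha$ approach $y$). This forces each $F_j=0$ order by order, proving uniqueness.

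The main obstacle is the bookkeeping in the second step: one must verify that after expanding the Green's function derivatives and the Taylor series, the resulting double sum truly reorganizes into a single asymptotic expansion in powers of $x-y$ whose coefficients are composite operators at $y$, with uniform control of the remainder. The rest is essentially Wick's theorem plus Taylor's formula.
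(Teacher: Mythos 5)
Your proof is correct, and it fills a gap the paper itself leaves open: the text only works out the cases $\phi(x)\phi(y)$, $\colon\phi^2(x)\colon\phi(y)$, and $\colon\phi^2(x)\colon\cdot\colon\phi^2(y)\colon$ by hand and then states the general proposition with a citation to \cite{QFS}, vol.~1, p.~449, rather than proving it. Your mechanism --- Wick's theorem for the product of two normally ordered monomials producing a finite sum over partial matchings, each contributing a product of (finitely singular) derivatives of the Green's function times a leftover normally ordered field string, which is then Taylor expanded at the base point $y$ and reassembled --- is precisely the pattern the worked examples exhibit, just carried out uniformly. The two counting observations (finitely many matchings; for each fixed matching a power-type singularity of bounded order, so that the polynomial factor $\prod_i z^{\bold s_i}/\bold s_i!$ kills the singularity for all but finitely many $\bold s$) are exactly what is needed to verify the ``$O(|z|^N)$ for all but finitely many $j$'' condition and correctly identify the finitely many singular terms.

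One minor point worth tightening: the uniqueness argument is stated a bit loosely. It should be phrased relative to a fixed linearly independent family of composite operators (a basis of the space $\mathcal V$ of local operators at $y$, as in the state--operator correspondence), and the linear independence needed is that the correlation functions $\langle C_j(y)\phi(z_1)\cdots\phi(z_p)\rangle$ for distinct basis elements $C_j$ are linearly independent as functions of the $z_\alpha$, which follows from the state--operator isomorphism $\mathcal V\cong \mathcal F$ together with nondegeneracy of the Fock inner product. With that precision made, the ``order-by-order'' conclusion that each $F_j=0$ is clean. Similarly, the remainder estimate $O(|x-y|^{N-c})$ should be understood in the weak sense (evaluated against arbitrary correlation functions with spectators $\phi(z_1),\dots,\phi(z_p)$); this is implicit in your write-up and is the right topology for the statement.
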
  

The expansion of Proposition \ref{ope} is called the {\it operator product expansion}. It is not hard to show that it exists in any free quantum field theory. 
 
\subsection{Symmetries in quantum field theory} 

In studying any physical system, it is crucial to find all its symmetries and use them to their full potential. For example, the equations of motion of a particle in a rotationally symmetric potential field can be fully solved 
by utilizing the rotational symmetry (see \cite{A}). 

The most fundamental fact about symmetries in classical or quantum mechanics is that for any 1-parameter group of symmetries of the system there is an (essentially unique) observable responsible for this symmetry, which is conserved in this system; i.e., every 1-parameter symmetry corresponds to a conservation law, and vice versa. This statement is called {\it Noether's theorem}\index{Noether's theorem}. 

Let us first explain the precise meaning of Noether's theorem in the setting of classical mechanics. Suppose we have a system with phase space a symplectic manifold $(M,\omega)$ (typically $M=T^*X$, where $X$ is the configuration space, and $\omega=d\alpha$ is the differential of the Liouville form) and hamiltonian $H\in C^\infty(M)$. Let $g^t$ be a 1-parameter group of symmetries of this system, i.e., of symplectic diffeomorphisms of $M$ which preserve $H$. Let $v:=\frac{d}{dt}|_{t=0}g^t$ be the vector field generating the flow $g^t$. Then we have $L_v \omega=0$ (i.e., $v$ is a symplectic vector field, so $\omega_v:=\omega(v,?)$ is a closed 1-form), and $L_v H=0$. 
Let us assume that $M$ is simply connected (for example, we can restrict ourselves to a neighborhood of a point in $M$ or $X$). In this case $\omega_v$ is exact, so there exists $Q\in C^\infty(M)$ (unique up to adding a constant) such that $\omega_v=dQ$. Then for any observable $F\in C^\infty(M)$ we have $L_vF=\lbrace Q,F\rbrace$. 
Moreover $\lbrace Q,H\rbrace=L_vH=0$. The observable $Q$ is thus conserved under the hamiltonian flow and is the conservation law corresponding to the 1-parameter group $g^t$. It is called (especially in the setting of field theory) the {\it Noether (conserved) charge} of the symmetry.\index{Noether charge}\index{Conserved charge}

A trivial example of this is the hamiltonian flow $h^t$ defined by the hamiltonian $H$ itself, i.e., the time translation symmetry; in this case $Q=H$, so the corresponding conserved quantity is $H$ (the energy). Other examples include the momenta $p_1,...,p_n$ 
which corresponds to translation symmetry (for $X=\Bbb R^n$) 
and {\it angular momenta}\index{angular momentum} $M_{kj}:=x_kp_j-x_jp_k$ corresponding to rotational symmetries around the codimension $2$ hyperplanes $x_k=x_j=0$. 

More generally, suppose $G$ is a Lie group acting (on the right) by symmetries of the system. Let $\g={\rm Lie}G$ be the Lie algebra of $G$. Any element 
$y\in \g$ gives rise to a 1-parameter subgroup $e^{ty}\in G$, so 
defines a conserved quantity $Q_y$ such that 
$$
\lbrace Q_y,F\rbrace=y\cdot F:=\tfrac{d}{dt}|_{t=0} e^{ty}\cdot F
$$
for each $F\in C^\infty(M)$, where $(g\cdot F)(m)=F(mg)$, $m\in M$. More precisely, $Q_y$ is defined only up to adding a constant, so let us fix some linear assignment $y\mapsto Q_y$. 

Moreover, it is clear that for $y,z\in \g$ 
$$
\lbrace Q_y,Q_z\rbrace =Q_{[y,z]}+C(y,z),
$$
where $C(y,z)$ is a skew-symmetric bilinear form on $\g$ which arises because $Q_y$ is uniquely determined by $y$ only up to adding a constant.
Furthermore, by the Jacobi identity, the form $C$ is a 2-cocycle :
$$
C([x,y],z)+C([y,z],x)+C([z,x],y)=0.
$$
It follows that the assignment $y\mapsto Q_y$ is almost a homomorphism $\g\to C^\infty(M)$, but not quite: rather, 
it defines a homomorphism 
$$
\mu: \widehat \g\to C^\infty(M),
$$ 
where 
$\widehat \g:=\g\oplus \Bbb R$ is a 1-dimensional central extension 
of $\g$ with commutator 
$$
[(y,a),(z,b)]=([y,z],C(y,z)).
$$
Namely, $\mu(y,a)=Q_y+a$.

The map $\mu$ may be viewed as an element of $C^\infty(M)\otimes \widehat\g^*$, i.e., geometrically as a $C^\infty$-map 
$$
\mu: M\to \widehat\g^*.
$$ 
This map is called the {\it moment map}\index{moment map} and plays a fundamental role in symplectic geometry. 

The following example shows that the cohomology class of $C$ may be nonzero, which means that we may not be able to choose $Q_y$ to make $C=0$. 

\begin{example} The group $\Bbb R^{2n}$ acts on $M=T^*\Bbb R^n$ 
(with trivial hamiltonian $H=0$) by translations. 
So we have $\g=\Bbb R^{2n}$ and $C(y,z)=\omega(y,z)$. 
Thus $\widehat \g$ is the {\it Heisenberg Lie algebra}\index{Heisenberg Lie algebra} $\Bbb R^{2n}\oplus \Bbb R$ with commutation relations 
 $$
[(y,a),(z,b)]=([y,z],C(y,z)),
$$
which is a non-trivial central extension of $\g$. 
\end{example} 

However, in many examples the cohomology class $[C]\in H^2(\g)$ is, in fact, zero, i.e., $\widehat \g=\g\oplus \Bbb R$ as Lie algebras. For instance, this is automatically so if $H^2(\g)=0$ (e.g., if $G$ is a compact Lie group). In this case, we may choose $Q_y$ so that $C=0$, and we have a moment map 
$$
\mu: M\to \g^*.
$$
For example, for translation symmetries of the free particle, 
$\mu$ is the momentum $\bold p$ of the particle, which explains the terminology ``moment map". 

A similar discussion applies to classical field theory, using the formalism of Subsection \ref{hfcft}. Namely, in this case, 
the Noether charge is given by the integral over the space of a certain local field called {\it Noether current}\index{Noether current}. 

For example, consider the free massive boson $\phi$ on the spacetime $\Bbb R^d\times \Bbb R$. The Hamiltonian is 
$$
H=\frac{1}{2}\int_{X} (\phi_t^2+|d_x\phi|^2+m^2\phi^2)dx.
$$
Thus $H=\int_{\Bbb R^d} Jdx$ where 
\begin{equation}\label{kertime}
J=\frac{1}{2}(\phi_t^2+|d_x\phi|^2+m^2\phi^2)=\frac{1}{2}(:\phi_t^2:+\sum_{j=1}^d:\phi_{x_j}^2:+m^2:\phi^2:)
\end{equation} 
is the Noether current associated to the time translation symmetry. 

Similarly, the Noether current for the spacial translation 
in the $i$-th coordinate is 
\begin{equation}\label{kerspace}
J_k=\phi_t \phi_{x_k}. 
\end{equation} 
Indeed, using the formulas of Subsection \ref{hfcft}, we have 
$$
\lbrace J_k(x),\phi(y)\rbrace=-\phi_{x_k}(x)\delta(x-y),\ 
\lbrace J_k(x),\phi_t(y)\rbrace=\phi_t(x)\delta_{x_k}(x-y).
$$
Thus defining the charge 
$$
P_k=\int_{\Bbb R^d} J_k(x)dx,
$$ 
using integration by parts, we get 
$$
\lbrace P_k,\phi(y)\rbrace=-\phi_{x_k}(y),\ \lbrace P_k,\phi_t(y)\rbrace=
-\phi_{tx_k}(y),
$$
as needed. 

Furthermore, this discussion extends to quantum theory, with observables replaced by operators as usual. Namely, 
in this case, we have a unitary projective representation $\pi: G\to {\rm Aut}(\mathcal H)$ of the Lie group $G$ of symmetries on the Hilbert space $\mathcal H$ of quantum states of the system, so that 
$[\pi(g),\widehat H]=0$, where $\widehat H: \mathcal H\to \mathcal H$ is the hamiltonian
(an unbounded self-adjoint operator). The quantum Noether charges 
corresponding to these symmetries simply define the corresponding Lie algebra representation $\pi_*: \g\to {\rm End}(\mathcal S)$, where 
$\mathcal S$ is a certain dense subspace of $\mathcal H$ (of smooth vectors) on which all the operators $\pi_*(y)$ are defined. For instance, in quantum mechanics, like in classical one, the time translation corresponds to the Hamiltonian $\widehat H$, the spacial translations to 
the momentum operators $\widehat p_j:=-i\hbar \partial_{x_j}$, 
and rotations around $x_k=x_j=0$ to the angular momentum operators\index{Angular momentum operator}
$$
\widehat M_{kj}:=-i\hbar (x_k\partial_j-x_j\partial_k).
$$

Finally, in quantum field theory, by analogy with classical one, a quantum Noether charge is an operator of the form 
$$
Q=\int_{\Bbb R^d}J(x)dx, 
$$
where $J(x)$ is a quantum local operator called the {\it quantum Noether current}. For example, in the case of a free massive boson, the 
currents $J(x)$ and $J_k(x)$ for time and space translations 
are given by the same formulas \eqref{kertime},\eqref{kerspace}, 
but now with $\phi(x,t)$ being the quantum field corresponding to the massive boson (say, in the setting of Wightman axioms) rather than the classical field, and with normal ordered product 
instead of the usual product: 
$$
J=\frac{1}{2}(\colon\phi_t^2\colon+\sum_{j=1}^d\colon\phi_{x_j}^2\colon+m^2\colon\phi^2\colon),
$$
$$
J_k=\colon\phi_t \phi_{x_k}\colon,
$$
and the corresponding charges, as in the classical case, are given by integration 
of the current over the space. For example, for the free boson
\begin{equation}\label{quanham}
\widehat H=\int_{\Bbb R^d} J(x)dx
\end{equation} 
is the quantum hamiltonian, and 
$$
\widehat P_k:=\int_{\Bbb R^d} J_k(x)dx 
$$
are the quantum momentum operators. 

\subsection{Field theories on manifolds} \label{manif} 

As already mentioned above, an important feature of classical and quantum field theory is the possibility to consider them not just on a Euclidean or Minkowskian space, but more generally on Riemannian and Lorentzian manifolds. The main examples are theories 
on $X\times \Bbb R$, where $X$ is a Riemannian $d$-dimensional space manifold with metric $g_{ij}dx^i dx^j$ (Einstein summation) and $\Bbb R$ is the time line, with Lorentzian metric 
$$
|dx|^2:=(dt)^2-g_{ij}dx^i dx^j,
$$ 
and Euclidean theories on a Riemannian $d+1$-dimensional spacetime manifold $M$. 

Here we will consider only classical field theories on manifolds. 
These theories can then be quantized using either Lagrangian or Hamiltonian approach, but we will not discuss this, except in some examples. The story is parallel to the case of flat space considered above, but we should make sure that the kinetic term and other terms in the Lagrangian are defined canonically (i.e., do not depend on the choice of coordinates). For simplicity consider the Euclidean case (in the Lorentzian case the story is similar). We restrict ourselves to reviewing the most common types of classical fields in such theories, as well as the corresponding kinetic and other terms in their Lagrangians.
A more complete discussion can be found in \cite{QFS}. 

1. {\bf Scalar (bosonic) fields.} In the simplest case a scalar field is just a real function on $M$ (real scalar), but one can also consider scalars valued in a finite dimensional real vector space with a positive inner product (for example, $\Bbb C$, for complex scalars) or, more generally, valued in a real vector bundle on $M$. The kinetic term for a scalar $\phi: M\to E$ valued in a vector space $E\cong E^*$ with inner product is $|d\phi(x)|^2$, the squared norm of the vector $d\phi\in T_{\phi(x)}M\otimes E$ with respect to the inner products on $T_{\phi(x)}M$ and $E$. Thus if 
this vector has components $(d\phi)_{ij}$ in orthonormal bases then 
$$
|d\phi|^2=\sum_{i,j}(d\phi)_{ij}^2.
$$
More generally, if $E$ is a vector bundle on $M$ then we need to fix 
an inner product on $E$ (i.e., $E$ should be an orthogonal bundle) 
and also a connection $A$ preserving this inner product, which gives rise to the covariant derivative operator $\nabla_A$; if $E$ is trivialized on a local chart $U\subset M$ then $A$ becomes a 1-form on $U$ with values in $\mathfrak{o}(E)$ and we have $\nabla_A=d+A$. In this case, an $E$-valued scalar field $\phi$ is a section of $E$ over $M$, and the kinetic term is $|\nabla_A \phi|^2$, which in local trivialization has the form $|d\phi+A\phi|^2$. 

Note that for a scalar field $\phi$, we can always add to the kinetic term a mass term $m^2|\phi|^2$, where $m^2$ is a real number. More generally, we can add a mass term $(\phi, Q\phi)$, where $Q$ is a self-adjoint endomorphism of $E$. 

2. {\bf Spinor (fermionic) fields.} Spinor fields can be defined on a spin manifold $M$, i.e., an oriented manifold equipped with a spin structure (a lift of the tangent bundle from $SO(n)$ to ${\rm Spin}(n)$). For such a manifold, we have the canonically defined {\it spin bundle}\index{spin bundle} $S_M$, which is the associated bundle to the above ${\rm Spin}(n)$ bundle via the spin representation ${\rm Spin}(n)\to {\rm Aut}(S)$. 
This bundle carries a natural inner product and a connection induced by the Levi-Civita connection of $M$ that preserves this inner product. Moreover, as explained in Subsection \ref{spinorssec1}, in even dimensions we have $S=S_+\oplus S_-$, where $S_+,S_-$ are irreducible representations of ${\rm Spin}(n)$, so we have $S_M=S_{M+}\oplus S_{M-}$, an orthogonal decomposition of $S_M$ into two subbundles. 

Spinor fields, in the most basic case, are sections of the Spin bundle $S_M$. The sections of $S_{M+}$ and $S_{M-}$, as noted in Subsection  
\ref{ferlag}, are called {\it chiral spinors}. 

The possible kinetic and mass terms for spinors on the flat space are described in Subsection \ref{ferlag}, and the story on the curved manifold is similar. The only new feature is that we have to define the Dirac operator $\bold D$ for a spinor field on an arbitrary spin manifold. 
To this end, all we have to do is replace ordinary partial derivatives 
in formula \eqref{dirop} by the covariant ones with respect to the Levi-Civita connection: 
\begin{equation}\label{dirop1}
\bold{D}=\sum_i \Gamma_i \nabla^{LC}_i.
\end{equation}

More generally, similar to the scalar field case, we may consider spinors valued in a vector bundle $E$ with an inner product and an orthogonal connection $A$, i.e., sections of the bundle $S_M\otimes E$. This bundle carries a tensor product connection $\nabla^{\rm total}=\nabla^{LC}\otimes \nabla_A$, and the Dirac operator is defined by 
the formula 
$$
\bold{D}=\sum_i \Gamma_i \nabla^{\rm total}_i.
$$

3. {\bf Gauge fields.}\index{Gauge field} Let $G$ be a compact Lie group $\g={\rm Lie}G$ equipped with a positive invariant inner product. 
Gauge fields are connections $A$ on principal $G$-bundles
$E$ on $M$, so in local trivialization $A$ is a 1-form on $M$
with values in $\g$ and the covariant derivative with respect to $A$ looks like $\nabla_A=d+A$. The connection 
$A$ has curvature $F_A$ (called {\it field strength}\index{field strength} in physical terminology), which is a 2-form on $M$ with values in the adjoint bundle ${\rm ad}E$. In local trivialization the curvature of $A$ is the Maurer-Cartan 
form 
$$
F_A=dA+\frac{1}{2}[A,A].
$$
In particular, if $G$ is abelian then we just have $F_A=dA$. 
The kinetic term for a gauge field $A$ is $|F_A|^2$, where the squared norm is taken with respect to the inner product on $(\wedge^2 T^*M\otimes {\rm ad}E)_x$ induced by the inner products on $T_xM$ and $\g$ (note that this does not depend on the identification of Lie algebras $({\rm ad}E)_x\cong \g$ since the form on $\g$ is invariant). 

It makes sense to fix the topological type of the 
$C^\infty$-bundle $E$ (which does not change under deformations)
and consider the space ${\rm Conn}(E)$ of all connections $A$ 
on $E$. If $A_1,A_2\in {\rm Conn}E$ then $\nabla_{A_1}-\nabla_{A_2}\in \Omega^1(M)\otimes {\rm ad} E$, so 
${\rm Conn}(E)$ is an affine space with underlying vector space
$\Omega^1(M)\otimes {\rm ad} E$. Moreover, this space carries a natural right affine linear action of the gauge group $\mathcal G_E=C^\infty(M,E)$, which in local trivialization looks like 
$$
A^g=g^{-1} dg+g^{-1}Ag.
$$
The configuration space of a classical gauge theory is then 
$$
\mathcal M:=\sqcup_{\text{ topological types $E$}}{\rm Conn}(E)/\mathcal G_E,
$$ 
so the phase space is 
the cotangent bundle $T^*\mathcal M$.

\section{Perturbative expansion for interacting QFT} 

\subsection{General strategy of quantization}
We now pass to non-free field theories defined by the action 
$S(\phi):=\int \mathcal L(\phi)dx$ in Minkowski space $V\cong \Bbb R^d$, where $\mathcal L(\phi)$ is a local Poincar\'e-invariant Lagrangian. The general strategy of quantization of such theories is as follows. 

{\bf Step 1.}  Write down the Euclidean path integral correlators for the theory: 
$$
\la \phi(x_1)...\phi(x_n)\ra=\int \phi(x_1)...\phi(x_n)e^{-\frac{S_E(\phi)}{\hbar}}D\phi.
$$
Compute the corresponding formal expansion in $\hbar$ using the Feynman rules 
(as we have done in the case of quantum  mechanics, $d=1$). 

{\bf Step 2.} Perform Borel summation of this formal series, to obtain actual functions defined for small enough $\hbar>0$.  

{\bf Step 3.} Perform the Wick rotation of these functions to Minkowski space to obtain Wightman correlation functions $W_n$. 

{\bf Step 4.} Use the functions $W_n$ to define a Wightman QFT, i.e., extract 
the Hilbert space $\mathcal H$, the representation $\pi$ of the (double cover of the) Poincar\'e group on $\mathcal H$, 
the vacuum vector $\Omega$ and the field map $\phi$. 

All these steps are non-trivial, and while Step 1 can be performed fully rigorously, starting from Step 2 a rigorous implementation is only known for a handful of theories treated in constructive field theory (and for many Lagrangians the ultimate Wightman QFT, in fact, does not exist). For most physically interesting theories, doing these steps rigorously is still an open problem. In this section, we will only discuss Step 1. 

\subsection{The $\phi^3$ theory} 

As a running example, we will use the theory of a scalar boson $\phi$ with Euclidean Lagrangian 
$$
\mathcal L_E(\phi):=\frac{1}{2}((d\phi)^2+m^2\phi^2)+\frac{g}{6}\phi^3,
$$
which we will call the {\it $\phi^3$-theory}.\index{$\phi^3$-theory} This theory is a deformation of the theory of free scalar boson 
obtained by adding a single interaction term $\frac{g}{6}\phi^3$, 
which in Feynman calculus corresponds to a 3-valent vertex. 
Physically this vertex corresponds to an interaction in which two particles 
collide and transform into a third one. 

We will set $\hbar=1$ and consider the formal expansion in powers of $g$ 
(which is equivalent to Step 1 by rescaling $\phi$). 

Let us compute the 1-loop correction to the 2-point correlation function of the free theory
$$
\widehat G_0(p)=\frac{1}{p^2+m^2}
$$ 
in the momentum 
space presentation. It is easy to see that this correction is given by a single Feynman diagram 

\setlength{\unitlength}{0.435cm}
    \begin{picture}(6,12)(7,-1)

 \put(18,4){\line(1,0){10}}
      \put(18,9){\line(1,0){10}}
      \put(18,4){\line(0,1){5}}
      \put(28,4){\line(0,1){5}}
      %%% make graph
      \put(19,7){\circle*{0.2}}
      \put(19,7){\line(1,0){2.75}}
      \put(21.65,7){\circle*{0.2}}
      \put(22.5,7){\circle{1.75}}
      \put(19,6){\makebox(0,0)[c]{$1$}}
      \put(23.3,7){\line(1,0){2.75}}
      \put(26,6){\makebox(0,0)[c]{$2$}}
      \put(23.35,7){\circle*{0.2}}
      \put(26.05,7){\circle*{0.2}}
      \put(23.5,5){\makebox(0,0)[c]{}}
      \put(27,8){\makebox(0,0)[c]{}}

    \end{picture}

The amplitude of this Feynman diagram is  
$$
A(p)=\frac{g^2}{2(p^2+m^2)^2}\int_V \frac{dq}{(q^2+m^2)((p-q)^2+m^2)}.
$$
If $d<4$, this integral is convergent and can be computed explicitly. 
To this end, we may use the following lemma from multivariable calculus, which is known in physics 
literature as the {\it Feynman famous formula}:\index{Feynman famous formula}

\begin{lemma} Let $\Delta_n$ be the $n-1$-dimensional simplex  
defined in $\Bbb R^n$ by the equation
$$
y_1+...+y_n=1,
$$ 
and $dy$ be the Lebesgue measure 
on $\Delta_n$ of volume $1$. Then for positive numbers $a_1,...,a_n$ we have 
$$
\int_{\Delta_n}\frac{dy}{(a_1y_1+...+a_ny_n)^n}=\frac{1}{a_1...a_n}.
$$
\end{lemma}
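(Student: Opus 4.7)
The plan is to prove the identity by representing each factor $1/a_i$ as a one-dimensional exponential integral and then making a radial/angular change of variables that produces the simplex $\Delta_n$. Concretely, I would start from the elementary identity
$$
\frac{1}{a_1\cdots a_n}=\int_{[0,\infty)^n}e^{-(a_1t_1+\cdots+a_nt_n)}\,dt_1\cdots dt_n,
$$
which is valid for any $a_i>0$ by Fubini and the one-variable formula $\int_0^\infty e^{-at}dt=1/a$.

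Next I would change variables from $(t_1,\dots,t_n)\in[0,\infty)^n$ to $(s,y)$ where $s:=t_1+\cdots+t_n\in[0,\infty)$ and $y:=(t_1/s,\dots,t_n/s)\in \Delta_n$. This is a diffeomorphism between $(0,\infty)^n$ and $(0,\infty)\times \Delta_n^\circ$ (the open part of the simplex), and a short Jacobian computation (e.g.\ using coordinates $(y_1,\dots,y_{n-1})$ on $\Delta_n$ with $y_n=1-\sum_{i<n}y_i$) gives
$$
dt_1\cdots dt_n=\frac{s^{n-1}}{(n-1)!}\,ds\,dy,
$$
where $dy$ is normalized so that $\int_{\Delta_n}dy=1$ (the factor $(n-1)!$ is precisely the reciprocal of the Euclidean volume of $\Delta_n$ in the coordinate chart $(y_1,\dots,y_{n-1})$).

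Substituting and using $a_1t_1+\cdots+a_nt_n=s(a_1y_1+\cdots+a_ny_n)$, one then performs the $s$-integral by the identity $\int_0^\infty s^{n-1}e^{-sA}ds=(n-1)!/A^n$ with $A:=a_1y_1+\cdots+a_ny_n>0$, which cancels the $(n-1)!$ from the Jacobian and yields
$$
\frac{1}{a_1\cdots a_n}=\int_{\Delta_n}\frac{dy}{(a_1y_1+\cdots+a_ny_n)^n},
$$
as claimed. There is no real obstacle here beyond keeping track of the combinatorial factor $(n-1)!$ and justifying the change of variables (which is standard since the transformation is smooth on the open simplex and the boundary has measure zero); the most error-prone point is the normalization convention for $dy$, and I would double-check it by testing the formula on the trivial case $n=1$ and the case $a_1=\cdots=a_n=1$ (where both sides equal $1$).
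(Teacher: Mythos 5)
Your proof is correct and is essentially the paper's argument run in the opposite direction: the paper starts from $\int_{\Delta_n}(a\cdot y)^{-n}\,dy$, rewrites the integrand via the Gamma-integral $\frac{1}{A^n}=\frac{1}{(n-1)!}\int_0^\infty t^{n-1}e^{-At}\,dt$, and then changes variables $(t,y)\mapsto z=ty$ to land on $\int_{[0,\infty)^n}e^{-a\cdot z}\,dz=\prod_j a_j^{-1}$. You start from $\prod_j a_j^{-1}=\int_{[0,\infty)^n}e^{-a\cdot t}\,dt$, pass to polar-simplicial coordinates $(s,y)$, and integrate out $s$ --- the same change of variables and the same Gamma integral, traversed in reverse. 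Both routes hinge on the identical Jacobian factor $s^{n-1}/(n-1)!$ (with $dy$ normalized to unit mass), which you have computed correctly.
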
 

\begin{proof} We have 
$$
\frac{1}{(a_1y_1+...+a_ny_n)^n}=\frac{1}{(n-1)!}\int_0^\infty t^{n-1}e^{-(a_1y_1+...+a_ny_n)t}dt.
$$
So we get 
$$
\int_{\Delta_n}\frac{dy}{(a_1y_1+...+a_ny_n)^n}=\frac{1}{(n-1)!}\int_{\Delta_n}\int_0^\infty t^{n-1}e^{-(a_1y_1+...+a_ny_n)t}dtdy
$$
$$
=\frac{1}{(n-1)!}\int_{t\Delta_n}\int_0^\infty e^{-a_1z_1+...+a_nz_n}dtdz
$$
$$
=\int_{z_1,...,z_n\ge 0}e^{-a_1z_1+...+a_nz_n}dz=\prod_{j=1}^n\int_0^\infty e^{-a_jz_j}dz_j=\frac{1}{a_1...a_n}.
$$
\end{proof} 

Applying the Feynman famous formula to our integral and making a change of variable 
$q\mapsto q+(1-y)p$, we have 
$$
\int_V \frac{dq}{(q^2+m^2)((p-q)^2+m^2)}=\int_0^1\int_V \frac{dq}{((1-y)q^2+y(p-q)^2+m^2)^2}dy=
$$
$$
\int_0^1\int_V \frac{dq}{(q^2+M^2(y,p))^2}dy,
$$
where 
$$
M^2(y,p):=y(1-y)p^2+m^2.
$$
Now, using spherical coordinates
$$
\int_V \frac{dq}{(q^2+M^2)^2}=C_d\int_0^\infty \frac{r^{d-1}dr}{(r^2+M^2)^2},
$$
where $C_d$ is the area of the unit sphere in $\Bbb R^d$.
Thus for $d=2$
$$
\int_V \frac{dq}{(q^2+M^2)^2}=2\pi\int_0^\infty \frac{rdr}{(r^2+M^2)^2}=\pi \int_0^\infty\frac{ds}{(s+M^2)^2}=\frac{\pi}{M^2}.
$$
It follows that 
$$
\int_V \frac{dq}{(q^2+m^2)((p-q)^2+m^2)}=\pi\int_0^1 \frac{dy}{y(1-y)p^2+m^2}
$$
$$
=\frac{2\pi}{p^2\sqrt{\frac{4 m^2}{p^2} + 1}}{\rm arccotanh}\sqrt{\frac{4 m^2}{p^2} + 1}.
$$
The case $d=3$ can be computed similarly. 

However, for $d\ge 4$ we encounter our first difficulty: the integral diverges (as the integrand behaves at infinity as $|q|^{-4}$). More specifically, for a cutoff $\Lambda>0$,  
define 
$$
A_\Lambda(p):=\frac{g^2}{2(p^2+m^2)^2}\int_{|q|\le \Lambda} \frac{dq}{(q^2+m^2)((p-q)^2+m^2)},
$$
the integral over the ball in $V$ of radius $\Lambda$. 
Then 
$$
A_\Lambda(p)\sim \pi^2\frac{g^2}{(p^2+m^2)^2}\log(\tfrac{\Lambda}{m}),\ \Lambda\to \infty
$$
for $d=4$ and 
$$
A_\Lambda(p)\sim C_d\frac{g^2}{2(d-4)(p^2+m^2)^2}\Lambda^{d-4},\ \Lambda\to \infty
$$
if $d>4$. A way to remedy this difficulty is to add a $\Lambda$-dependent term in the Lagrangian, called a {\it counterterm},\index{counterterm} which blows up as $\Lambda\to \infty$ but which will cancel this divergence, in the sense that when integration is performed over the ball $|q|\le \Lambda$ then the integral has a finite limit as $\Lambda\to \infty$. 

For example, consider $d=4$. In this case modulo $g^3$ the momentum space 2-point function computed 
with cutoff $\Lambda$ looks like 
$$
\widehat G_{\Lambda,m^2}(p)=\frac{1}{p^2+m^2}+\pi^2\frac{g^2}{(p^2+m^2)^2}\log(\tfrac{\Lambda}{m})+...
$$
(here we explicitly indicate dependence of $\widehat G$ on $m^2$ since we are about to vary it). 
Let us try to fix the divergence by replacing the parameter $m^2$ by $m^2+Kg^2\log(\tfrac{\Lambda}{m})$ for a constant $K$. 
So we have 
$$
\widehat G_{\Lambda,m^2+Kg^2\log(\tfrac{\Lambda}{m})}(p)=\frac{1}{p^2+m^2+Kg^2\log(\tfrac{\Lambda}{m})}+\pi^2\frac{g^2}{(p^2+m^2)^2}\log(\tfrac{\Lambda}{m})+...
$$
$$
=\frac{1}{p^2+m^2}+(\pi^2-K)\frac{g^2}{(p^2+m^2)^2}\log(\tfrac{\Lambda}{m})+...
$$
where we ignore terms of order higher than $g^2$. Thus to cancel the divergence, we should take 
$K=\pi^2$, i.e., replace the Lagrangian with 
$$
\mathcal L_{E,\Lambda}:=\frac{1}{2}((d\phi)^2+(m^2+\pi^2g^2\log(\tfrac{\Lambda}{m}))\phi^2)+\frac{g}{6}\phi^3.
$$
For this Lagrangian, if integration is performed with cutoff $\Lambda$, then 
the 2-point function modulo $g^2$ will have a finite limit as 
$\Lambda\to \infty$, given by 
$$
\widehat G(p)=\frac{1}{p^2+m^2}+\frac{g^2}{2(p^2+m^2)^2}I(p),
$$
where 
$$
I(p)=\lim_{\Lambda\to \infty} \left(\int_{\Bbb R^4} \frac{dq}{(q^2+m^2)((p-q)^2+m^2)}-2\pi^2\log(\tfrac{\Lambda}{m})\right).
$$
This limit is easy to compute using the Feynman famous formula. Namely, computing similarly to the $d<4$ case, we 
get 
$$
I(p)=\int_0^1 I(p,y)dy,\ I(p,y):=\lim_{\Lambda\to \infty} \left(\int_0^\Lambda \frac{r^3dr}{(r^2+M^2(y,p))^2}-2\pi^2\log (\tfrac{\Lambda}{m})\right).
$$
So 
$$
I(p,y)=2\pi^2(\log m-\frac{1}{2}(1+\log(y(1-y)p^2+m^2))).
$$
Thus 
$$
I(p)=2\pi^2\left(\frac{1}{2}+\sqrt{\frac{4 m^2}{p^2} + 1}\cdot {\rm arccotanh}\sqrt{\frac{4 m^2}{p^2} + 1}\right).
$$

For $d>4$ the calculation becomes more elaborate. 
Namely, while for $d=5$ we have 
$$
A_\Lambda(p)\sim C_5\frac{g^2}{2(p^2+m^2)^2}\Lambda+O(1), \lambda\to \infty,
$$
so the procedure is the same, with mass parameter modification
$m^2\mapsto m^2+K\Lambda$, already for $d=6$ we will have to take a deeper expansion of 
the divergent integral: 
$$
A_\Lambda(p)\sim \frac{g^2}{4(p^2+m^2)^2}(C_6\Lambda^2+Cp^2\log(\tfrac{\Lambda}{m})+O(1)),\ \Lambda\to \infty
$$
We can cancel the most singular term $C_6\Lambda^2$ by mass  modification 
$m^2\mapsto m^2+K\Lambda^2$, but after that we will still have logarithmic divergence, 
$Cp^2\log(\tfrac{\Lambda}{m})$, which depends on $p$. To kill this divergence, we must modify 
the coefficient of $\frac{1}{2}(d\phi)^2$ in the Lagrangian by a counterterm, changing it from $1$
to $1+C'g^2\log(\tfrac{\Lambda}{m})$ for an appropriate constant $C'$. Also we find that the 1-loop correction 
to the 3-point function is logarithmically divergent:
the corresponding contribution in momentum presentation is, up to scaling,  
$$
\frac{g^3}{\prod_{j=1}^3(p_j^2+m^2)}J(p_1,p_2,p_3)\delta(p_1+p_2+p_3),
$$
where 
$$
J(p_1,p_2,p_3)=\int_V \frac{dq}{(q^2+m^2)((q-p_1)^2+m^2)((q-p_1-p_2)^2+m^2)}
$$
for $p_1+p_2+p_3=0$, which is divergent and behaves like $\log \Lambda$ when computed over the ball of radius 
$\Lambda$. So to kill this divergence, we must change the coefficient of $\frac{1}{6}\phi^3$ 
in the Lagrangian by a counterterm, changing it from $g$ to $g+C''g^3\log(\tfrac{\Lambda}{m})$. 

We are starting to see the main idea of {\it renormalization theory},\index{renormalization theory} which allows us to regularize divergent integrals coming from Feynman diagrams in all orders of perturbation series. This idea is that the coefficients 
of the Lagrangian are actually {\bf not} meaningful physical quantities, --- they are just mathematical parameters depending on the scale (cutoff) $\Lambda$ at which we are doing the computation, and may blow up when $\Lambda\to \infty$ (called the {\it ultraviolet limit},\index{ultraviolet limit} as $\Lambda$ has the meaning of frequency of oscillation). Rather, the meaningful quantity is the answer, the correlation functions $\la \phi(x_1)...\phi(x_n)\ra$ (or their Fourier transforms, if we work in the momentum realization). This answer depends on some parameters, which are the actual 
parameters of the theory. So the coefficients in the Lagrangian must be adjusted in such a way that 
the answer has a finite limit as $\Lambda\to \infty$. The specific answer we will get will depend on the adjustment procedure, but in good cases (called renormalizable) will lie in a nice universal family (often, but not always depending 
on finitely many parameters). 

\subsection{Super-renormalizable, renormalizable, and non-renormalizable theories} 
Let us discuss this more systematically. Consider a theory of a scalar boson with a general Lagrangian. Given a 
Feynman diagram $\Gamma$, we have the corresponding Feynman integral $I_\Gamma$ in momentum space realization, which is an integral of a rational volume form over a real vector space. We can define the {\it superficial degree of divergence}\index{superficial degree of divergence} $D(\Gamma)$ to be the degree of the numerator of this form (where the differentials of coordinates have degree $1$) minus the degree of its denominator. It is clear that if $D(\Gamma)\ge 0$ then the integral diverges. Note that the converse is false: if $d(\Gamma)<0$, the integral may still diverge. 

Let us compute $D(\Gamma)$. The degree of the denominator is easy to compute: 
it is just $2e(\Gamma)$ where $e(\Gamma)$ is the number of internal edges of $\Gamma$ 
(indeed, every edge contributes a propagator, which is the inverse of a quadratic function). 
On the other hand, the number of integrations over $V$ is the number of loops, i.e, 
$d(e(\Gamma)-v(\Gamma)+1)$, where $v(\Gamma)$ is the number of internal vertices. 
Finally, the terms in the Lagrangian containing derivatives of $\phi$ contribute the number of such derivatives 
to the degree of the numerator. It follows that 
$$
D(\Gamma)=(d-2)e(\Gamma)-dv(\Gamma)+d+N,
$$
where $N$ is the total number of derivatives in vertex monomials. In particular, when there are no derivatives, we have 
$$
D(\Gamma)=(d-2)e(\Gamma)-dv(\Gamma)+d.
$$
This shows that we may compute $D(\Gamma)$ as a sum of contributions over vertices, defining
the degree $D(\Phi)$ of a differential monomial $\Phi$ standing at a fully internal vertex 
(one whose all edges are internal) as the contribution of this vertex to $D(\Gamma)$. 
Indeed, every $\Phi$ contributes 
$$
D(\Phi)=\frac{d-2}{2}e(\Phi)-d+N_\Phi,
$$ 
where $e(\Phi)$ is the number of edges of $\Phi$ (i.e., its degree with respect to $\phi$) 
and $N_\Phi$ is the number of derivatives in $\Phi$. 

We see that a more natural invariant is 
$$
[\Phi]:=D(\Phi)+d,
$$
as it is multiplicative:
$$
[\Phi_1\Phi_2]=[\Phi_1][\Phi_2].
$$
This is not surprising since $\Phi$ comes with a volume factor $dx$, 
so $D(\Phi)$ is actually the scaling dimension of $\Phi dx$; thus 
to get the scaling dimension of $\Phi$, we need to add 
$d$ (as the scaling dimension of $dx$ is $-d$). 
This motivates 

\begin{definition} The number $[\Phi]$ is called the {\it classical scaling dimension}\index{classical scaling dimension} of the differential monomial $\Phi$.
\end{definition} 

Thus for a Feynman diagram $\Gamma$ we have 

\begin{equation}\label{degfo} 
D(\Gamma)=d-\frac{k(d-2)}{2}+\sum_\Phi D(\Phi),
\end{equation}
where $k$ is the number of external vertices of $\Gamma$.

For example, for $\Phi=\phi^n$ we get 
$$
D(\phi^n)=\tfrac{n}{2}(d-2)-d=(\tfrac{n}{2}-1)d-n,
$$
Each derivative adds a $1$ to the degree, so for instance 
$$
D(\phi^{n-2}(d\phi)^2)=(\tfrac{n}{2}-1)(d-2).
$$
So for the 1-loop Feynman diagram $\Gamma$ for the $k$-point function (a cycle with $k$ legs), we have 
$$
D(\Gamma)=d-\frac{k}{2}(d-2)+kD(\phi^3)=d-\frac{k}{2}(d-2)+\frac{k}{2}(d-6)=d-2k. 
$$

\begin{definition} Let $\Phi$ be a differential monomial in $\phi$. We will say 
that $\Phi$ is {\it super-renormalizable}\index{super-renormalizable theory} if 
$D(\Phi)<0$, {\it renormalizable}\index{renormalizable theory} (or critical)\index{critical theory} if $D(\Phi)=0$, and 
{\it non-renormalizable}\index{non-renormalizable theory} if $D(\Phi)>0$. 
\end{definition} 

Thus super-renormalizable terms improve convergence, renormalizable ones do not affect it, and 
non-renormalizable ones worsen it. 

\begin{example} 1. The kinetic term $(d\phi)^2$ has $D=0$, so is renormalizable;  
in fact, this is so by definition in any QFT. 
Note that this can be used to easily 
compute the classical scaling dimensions of monomials. Namely, 
we have $[(d\phi)^2]=D((d\phi)^2)+d=d$, so 
$2[\phi]+2=d$, i.e. $[\phi]=\frac{d-2}{2}$. 
Using multiplicativity, we now immediately compute 
$[\Phi]$ for any $\Phi$. 

2. The mass term $\phi^2$ has $D=-2$, so it is super-renormalizable. 
The term $\phi^3$ has $D=\frac{1}{2}d-3$, so it is super-renormalizable 
for $d<6$, renormalizable for $d=6$ and non-renormalizable for $d>6$. 
\end{example} 

\begin{definition} A Lagrangian is called 

$\bullet$ super-renormalizable if all its terms except the kinetic term are super-renormalizable; 

$\bullet$ renormalizable (or critical) if all its terms are at worst renormalizable and there is at least one renormalizable non-kinetic (i.e., interacting) term; 

$\bullet$ non-renormalizable if it contains non-renormalizable terms. 
\end{definition} 
 
 Clearly, every Lagrangian is of exactly one of these three types. 
 
\begin{proposition} 
(i) If a Lagrangian is super-renormalizable then the degree of superficial divergence of the corresponding Feynman diagrams is bounded above, and there are finitely many superficially divergent diagrams with any given number of external edges; moreover, if $d>2$ then there are finitely many superficially divergent diagrams altogether. 

(ii) If a Lagrangian is renormalizable, then there are infinitely many superficially divergent diagrams with a fixed number of external edges, but the degree of superficial divergence of these diagrams is still bounded above. 

(iii) If a Lagrangian is non-renormalizable, then the degree 
of superficial divergence of diagrams with a fixed number of external edges is unbounded above. 
\end{proposition}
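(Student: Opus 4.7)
The plan is to reduce all three statements to the vertex-additive identity~\eqref{degfo},
\[
D(\Gamma) = d - \tfrac{k(d-2)}{2} + \sum_\Phi D(\Phi),
\]
in which the sum runs over the interior vertices of $\Gamma$ and $\Phi$ is the differential monomial occurring at that vertex. I first record two structural facts that will be used throughout: the interaction terms of the Lagrangian form a finite list, so the set of values $D(\Phi)$ can take is finite; and the valency of any interior vertex is bounded by the maximum degree (in $\phi$ and its derivatives) occurring in these monomials.

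For part~(i), super-renormalizability means $D(\Phi)<0$ for every interaction $\Phi$, so by finiteness of the list one may set $\delta := \min_\Phi |D(\Phi)| > 0$. Then $\sum_\Phi D(\Phi) \le -\delta\, v(\Gamma)$, where $v(\Gamma)$ is the number of interior vertices, and the identity above yields
\[
D(\Gamma) \le d - \tfrac{k(d-2)}{2} - \delta\, v(\Gamma).
\]
For fixed $k$ the requirement $D(\Gamma) \ge 0$ forces $v(\Gamma)$ to be bounded, and combined with the uniform bound on vertex valencies this leaves only finitely many isomorphism classes of superficially divergent diagrams. The same inequality also bounds $D(\Gamma)$ itself. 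If $d > 2$, the coefficient $(d-2)/2$ of $k$ is strictly positive, so $D(\Gamma) \ge 0$ additionally bounds $k$, which yields finitely many superficially divergent diagrams in total.

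For part~(ii), renormalizability gives $D(\Phi) \le 0$ for every interaction term, so $D(\Gamma) \le d - k(d-2)/2$ regardless of the number of interior vertices, giving the claimed upper bound for fixed $k$. For the infinitude I choose an interaction $\Phi_0$ with $D(\Phi_0)=0$, start from any superficially divergent diagram $\Gamma_0$ with $k$ external edges, and insert additional $\Phi_0$-vertices along an interior edge, pairing the residual $\deg\Phi_0 - 2$ legs into self-contractions. Each such insertion preserves connectedness, keeps the external leg count at $k$, and contributes an extra $D(\Phi_0)=0$ to the sum in~\eqref{degfo}, so $D(\Gamma)$ is unchanged. This produces infinitely many pairwise non-isomorphic superficially divergent diagrams with the same $k$.

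For part~(iii), the identical insertion construction works with a term $\Phi_0$ satisfying $D(\Phi_0)>0$: each insertion now raises $D(\Gamma)$ by the fixed positive amount $D(\Phi_0)$, so the superficial degree of divergence of diagrams with fixed $k$ grows without bound. The main technical nuisance I expect is the parity condition in the insertion step: when $\deg\Phi_0$ is odd, a single insertion leaves an odd number of free legs to be matched. This is handled by inserting $\Phi_0$-vertices in pairs joined by an additional edge, leaving $2\deg\Phi_0 - 4$ free legs to be paired among themselves; the inserted subgraph then remains connected, preserves $k$, and adds $2D(\Phi_0)>0$ per step, which is all that is needed.
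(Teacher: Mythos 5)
Your argument is correct and follows the same route as the paper's, which simply appeals to formula \eqref{degfo} and leaves the details to the reader. Your filling-in is sound, though note that the finiteness of the interaction list --- on which your estimate $\sum_\Phi D(\Phi)\le -\delta\,v(\Gamma)$ and the uniform valency bound both rely --- is automatic for $d>2$ but is an extra hypothesis for $d=2$, where a super-renormalizable Lagrangian such as $\tfrac12(d\phi)^2+U(\phi)$ may contain infinitely many super-renormalizable monomials.
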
 

\begin{proof} This is clear from formula \eqref{degfo}.
\end{proof} 

This means that for a non-renormalizable Lagrangian, regularization of divergent 
integrals will definitely get out of control. Namely, if we want to regularize diagrams 
with unbounded above degree of superficial divergence, then we will have to introduce counterterms with 
unlimited number of derivatives, and our renormalized Lagrangian will no longer depend on a finite number of derivatives of $\phi$. 

On the other hand, if the Lagrangian is renormalizable, then for $d>2$ there are only finitely many 
terms that we will need to modify in the renormalization procedure; namely, these are the possible super-renormalizable and renormalizable terms in the Lagrangian. The fact that this procedure works to all orders of perturbation theory is a rather non-trivial fact which we will not prove here; but the result is 
a finite-parametric family of perturbative QFT. 

In two dimensions, there is an additional feature - 
there are infinitely many (super)renormalizable terms in the Lagrangian; but they all have at most two derivatives. 

Finally, in the super-renormalizable case the renormalization procedure is completed in finitely many steps.  

\subsection{Critical dimensions of some important QFT}\label{critthe} 

For interacting QFT defined by Lagrangians, the theory is only (super-)renormalizable in small dimensions,
and becomes non-renormalizable when dimension grows. If a theory is renormalizable in some dimension $d$
and non-renormalizable for bigger dimensions, we say that $d$ is the {\it critical dimension}\index{critical dimension}
of the theory. 

\subsubsection{Scalar bosons} 
For example, since $D(\phi^n)=(\tfrac{n}{2}-1)d-n$, for a scalar boson, 
a term $\phi^n$ is (super-)renormalizable  iff 
$d\le \frac{2n}{n-2}$. So in a (super-)renormalizable theory, 
the term $\phi^3$ can be present only for $d\le 6$, $\phi^4$ 
only for $d\le 4$, $\phi^5$ and $\phi^6$ only for $d\le 3$. 
Also, since $D(\phi^{n-2}(d\phi)^2)=(\tfrac{n}{2}-1)(d-2)$, such terms with $n>2$ cannot 
be present in a (super-)renormalizable theory unless $d=2$. With more derivatives things get even worse. 
So we obtain 

\begin{proposition} For the scalar bosonic field $\phi$, the most general (super-)renormalizable 
non-quadratic Poincar\'e-invariant Lagrangian is (up to scaling): 

$\bullet$ $d>6$: none;

$\bullet$: $d=5,6$: $\mathcal L=\frac{1}{2}(d \phi)^2+P_3(\phi)$;

$\bullet$: $d=4$: $\mathcal L=\frac{1}{2}(d \phi)^2+P_4(\phi)$;

$\bullet$: $d=3$: $\mathcal L=\frac{1}{2}(d \phi)^2+P_6(\phi)$;

$\bullet$: $d=2$: $\mathcal L=\frac{1}{2}g(\phi)(d \phi)^2+U(\phi)$,

where $P_m$ is a polynomial of degree $m$, and $U$ and $g$ are arbitrary 
(real analytic) functions. 
\end{proposition}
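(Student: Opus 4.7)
The plan is to classify all Poincar\'e-invariant differential monomials $\Phi$ in $\phi$ satisfying the renormalizability inequality $D(\Phi) = [\Phi] - d \le 0$, i.e., $[\Phi] \le d$, and then reduce the resulting finite (or in $d=2$ countable) list modulo integration by parts. The multiplicativity of the classical scaling dimension together with $[\phi] = \tfrac{d-2}{2}$ and $[\partial_i] = 1$ immediately gives, for a Poincar\'e-scalar monomial with $n$ factors of $\phi$ and a total of $N$ derivatives, the formula
\[
[\Phi] = n\,\tfrac{d-2}{2} + N.
\]
Poincar\'e invariance forces $N$ to be even, since every Lorentz index of a $\partial_i$ must be contracted against another, so we restrict to $N \in 2\mathbb{Z}_{\ge 0}$.

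Next I would run a case analysis on $d$, restricting to $n \ge 3$ (non-quadratic). For $d \ge 7$ the minimum of $n(d-2)/2$ on $n\ge 3$ is $3(d-2)/2 > d$, ruling out every non-quadratic monomial. For $d = 5,6$ one gets $3(d-2)/2 \in \{\tfrac{9}{2},6\}$, so the only choice is $n=3$, $N=0$, giving $\phi^3$. For $d=4$ the condition becomes $n + N \le 4$ with $N$ even, hence $n \in \{3,4\}$ with $N=0$, giving $\phi^3$ and $\phi^4$. For $d=3$ the condition is $n/2 + N \le 3$, so $n \le 6$ with $N=0$ (the $N=2$ branch yields $n \le 2$, already quadratic), producing $\phi^n$ for $3 \le n \le 6$. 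Finally, for $d=2$ one has $[\phi]=0$, so the inequality reduces to $N \le 2$ with no restriction on $n$, yielding the two infinite families $\phi^n$ and $\phi^n(\partial\phi)^2$ (together with $\phi^n\Delta\phi$ and $\partial_i\phi\,\partial_j\phi^{n-1}\cdot\eta^{ij}$-type variants).

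The last step is reducing these monomials modulo total derivatives (which do not affect the action). This is where the most care is needed. In $d \ge 3$ the non-quadratic list has $N=0$, so no reduction is needed, and one adds back the allowed quadratic pieces $\tfrac{1}{2}(\partial\phi)^2$ together with mass and linear terms (absorbed into $P_k$). In $d=2$, one checks by Leibniz that any monomial with two derivatives can be brought to the form $\phi^k(\partial\phi)^2$ modulo a total derivative, since $\phi^{k+1}\Delta\phi = \partial_i(\phi^{k+1}\partial^i\phi) - (k+1)\phi^k(\partial\phi)^2$; summing over $k$ with arbitrary coefficients gives the general form $\tfrac{1}{2}g(\phi)(\partial\phi)^2$ for arbitrary smooth $g$, and the $N=0$ terms assemble into $U(\phi)$. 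Assembling the four cases gives precisely the tabulated Lagrangians.

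The main obstacle is the $d=2$ step: one must verify that no scalar with two derivatives is missed by the ansatz $g(\phi)(\partial\phi)^2$ and that the reduction modulo total derivatives is \emph{surjective} onto this family (not just onto a codimension-one subspace). The potentially worrisome monomials are $\phi^k\partial_i\phi\,\partial_j\phi\,\eta^{ij}$ versus $\phi^{k+1}\Delta\phi$; a clean integration-by-parts argument, together with the fact that in two dimensions the only $SO(1,1)$-invariant tensor built from two gradients is the contraction $(\partial\phi)^2$, closes this. All other steps are essentially algebraic bookkeeping using the scaling dimension formula.
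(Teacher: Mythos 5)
Your proof is correct and takes essentially the same approach as the paper: both rely on the multiplicative scaling-dimension formula $[\phi]=\tfrac{d-2}{2}$, $[\partial]=1$, the constraint $[\Phi]\le d$, and a case-by-case check (the paper states $D(\phi^n)$ and $D(\phi^{n-2}(d\phi)^2)$ and then remarks ``with more derivatives things get even worse''). Your added care about the parity of $N$ and the integration-by-parts reduction to $g(\phi)(d\phi)^2$ in $d=2$ makes explicit what the paper leaves implicit, but it is the same underlying argument, not a different route.
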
 

Note that without loss of generality, one may assume that $P_m$ are missing 
the constant and linear terms. Thus the number of parameters for the theory 
with Lagrangian $\frac{1}{2}(d \phi)^2+P_m(\phi)$ is $m-1$ (the coefficients of $P_m$). 

\subsubsection{Fermions} 

Recall that for a fermionic field $\psi$ the kinetic term 
looks like $(\psi,\bold D\psi)$. This implies that 
$$
2[\psi]+1=d,
$$
i.e., 
$$
[\psi]=\frac{d-1}{2},
$$
which is always positive. So for mass terms $(\psi,M\psi)$ we have $D=-1$ and they are super-renomalizable. 
Beyond quadratic, we see that the only possibly (super-)renormalizable terms in $\psi$ for $d\ge 2$
are of the general shape $\psi^{2k}$, and 
$$
D(\psi^{2k})=2k[\psi]-d=k(d-1)-d=(k-1)(d-1)-1.
$$
The only case when this is (super-)renormalizable is $d=2$ and $k=2$, i.e., 
the term $\psi^4$, in which case $D=0$ (critical). Such terms indeed occur in the so-called {\it Gross-Neveu model}.\index{Gross-Neveu model}

For $d>2$, any fermionic term in a renormalizable Lagrangian must therefore be quadratic in the fermions. 
But it can contain other (bosonic) fields as factors. For example, 
$[\phi^n\psi^2]=n\frac{d-2}{2}+d-1$, so 
$$
D(\phi^n\psi^2)=n\frac{d-2}{2}-1.
$$
This shows that in 3 dimensions we can have a term 
$\phi\psi^2$ (Yukawa interaction) and $\phi^2\psi^2$, while in $4$ dimensions we can have 
only the Yukawa term $\phi\psi^2$, and for $d>4$ there are no possible (super-)renromalizable terms. 

\subsubsection{Gauge theory} 
A similar result holds when $\phi$ is vector-valued, i.e., has any number of components. This allows us to treat another important example, which is {\it gauge theory}.\index{gauge theory} 

Recall from Subsection \ref{manif} that to define a gauge theory, we fix a compact Lie group $G$ (for example, $U(n)$) and the field is a connection $\nabla$ on a principal $G$-bundle $P$ on $V$. Since all such bundles are trivial, we may think of $\nabla$ as a 1-form $A$ with values in $\g={\rm Lie}G$; i.e. $\nabla_A=d+A$. The curvature of $\nabla_A$ is given by the formula 
$$
F_A=dA+\frac{1}{2}[A,A],
$$
and the Lagrangian of the pure gauge theory is 
$$
\mathcal L:=\int_V |F_A|^2dx.
$$
As mentioned in Subsection \ref{manif}, 
he subtlety here is that $A$ is only considered up to gauge transformations
$\nabla_A\mapsto g^{-1}\nabla_Ag$, i.e., 
$A\mapsto g^{-1}dg+g^{-1}Ag$, where $g: V\to G$ 
is a smooth function with prescribed behavior at infinity, 
but this is irrelevant for the discussion of critical dimension. 

If $G$ is abelian (e.g. $G=U(1)$) then the Lagrangian is quadratic and this theory is free (this is 
the quantum electrodynamics without matter, i.e., quantization of Maxwell equations). 
This theory satisfies Wightman axioms in all dimensions, and its Wightman functions 
can be explicitly computed similarly to the case of scalar boson.
  
However, if $G$ is non-abelian (e.g. $G=SU(2)$ for weak interactions and $G=SU(3)$ 
for strong interactions in the standard model) then the Lagrangian is not quadratic and the equations of motion are not linear (they are the {\it Yang-Mills equations}).\index{Yang-Mills equations} Treating $A$ as a (vector-valued) boson, 
we see that the non-quadratic terms in the Lagrangian 
are of schematic form $A^2dA$ and $A^4$. The degrees of these terms are 
$\frac{1}{2}(d-4)$ and $d-4$, so we see that this theory is critical in dimension $4$ 
(the physical case!) and super-renormalizable in lower dimensions, but non-renormalizable for $d>4$. 
Note that the fact that we have a vector boson rather than 
a collection of scalar bosons (under the action of $\bold P$) does not matter for the dimension count. 

Note also that in $d\le 4$ dimensions we can also consider 
renormalizable Lagrangians with terms $(\nabla_A\phi)^2$ or $(\psi,\bold D_A\psi)$, where $\phi$ 
is a scalar and $\psi$ a spinor with values in the associated bundle $P\times_G \rho$, where $\rho$ is a finite dimensional 
representation of $G$ (it is easy to check that all occurring monomials have $D\le 0$). Such terms do occur in 
the standard model; the simplest case is $(\psi,\bold D_A\psi)$ where 
$A$ is a $U(1)$-connection and $\psi$ is a spinor valued in the 
tautological representation of $U(1)$, corresponding to an electron. 

\subsubsection{$\sigma$-model} 

The $\sigma$-{\it model}\index{$\sigma$-model} is a theory of a scalar boson taking values in a Riemannian manifold $M$. Thus the field is a map $\phi: V\to M$, and the Lagrangian is 
$\mathcal L=\frac{1}{2}(d\phi)^2$, which in local coordinates has the form 
$$
\mathcal L=\frac{1}{2}\sum_{i,j=1}^{\dim M}g_{ij}(\phi)d\phi^id\phi^j,
$$
where $g_{ij}$ is the Riemannian metric on $M$. 
We may also add a potential $U(\phi)$, where $U$ is a smooth function on $M$. 
By the above computations, this Lagrangian for a non-constant metric is renormalizable 
only in dimension $d=2$, but in this case $g_{ij}$ and $U$ can be arbitrary. 

\subsubsection{Gravity}

The theory of gravity (general relativity) is a theory of a bosonic field $h(x)$ 
taking values in symmetric tensors $S^2V^*$; i.e., the Minkowskian metric 
on $V$ is perturbed by setting 
$g=g_0+h$, where $g_0$ is the standard 
Minkowskian metric. The Lagrangian 
of general relativity is 
$$
\mathcal L=R(g)
$$
where $R$ is the scalar curvature of the metric $g$. Since curvature is expressed in terms of second derivatives of the metric, up to scaling this can be schematically written in terms of $h$ as  
$$
\mathcal L=(dh)^2+...
$$
where the dots stand for terms having at most two derivatives in $h$. Thus the general shape of this Lagrangian (for the purposes of computing classical scaling dimensions) is the same as for the $\sigma$-model; so this theory is only renormalizable in two dimensions. This is one of the main reasons why it has not yet been possible to incorporate gravity into the standard model, which lives in $4$ spacetime dimensions. 

\begin{remark} 
We have seen in Subsection \ref{compop} 
that even in a free quantum field theory, the composite operators 
like $\phi^2(x)$ are not automatically defined, and require a normal ordering procedure to regularize them. This is all the more so in an interacting QFT.  

It turns out that the normal ordering procedure, composite operators, and operator product expansion in a critical perturbative QFT can be defined analogously to the free case, using renormalization theory. We will not discuss it here and refer the reader to \cite{QFS}, vol. 1, p. 452. 
\end{remark}

\section{Two-dimensional conformal field theory} 

\subsection{Classical free massless scalar in two dimensions} \label{mlsb}
Consider a free massless scalar boson $\phi$ on $\Bbb R^2$ with Lagrangian $\mathcal L=\frac{1}{2}(d\phi)^2$. 
In this case the local functional $\phi(t,x)$ satisfies the 2-dimensional wave (=string) equation
$$
\phi_{tt}-\phi_{xx}=0, 
$$
so it splits into a sum of two functionals 
$$
\phi=\tfrac{1}{\sqrt{2}}\phi_L+\tfrac{1}{\sqrt{2}}\phi_R,
$$
where 
$$
\phi_L(t,x)=\psi_L(x+t),\ \phi_R(t,x)=\psi_R(x-t),
$$
which for obvious reasons are called the {\it left-mover and right-mover}.\index{left-mover}\index{right-mover}
In other words, we have 
$$
(\partial_t-\partial_x)\phi_L=0,\ (\partial_t+\partial_x)\phi_R=0.
$$
So we get 
$$
\phi_x+\phi_t=\sqrt{2}\psi_L'(x+t),\ \phi_x-\phi_t=\sqrt{2}\psi_R'(x-t). 
$$
So the Poisson bracket of $\psi_L',\psi_R'$ is given by 
$$
\lbrace \psi_L'(x),\psi_L'(y)\rbrace=\delta'(x-y),\ \lbrace \psi_R'(x),\psi_R'(y)\rbrace=-\delta'(x-y),
$$
$$
\lbrace\psi_L'(x),\psi_R'(y)\rbrace=0.
$$

Thus upon Wick rotation, which replaces $t$ with $it$ and makes 
$\phi$ complex-valued, setting $u:=x+it$, we have
$$
\overline\partial_u \phi_L=0,\ \partial_u \phi_R=0,
$$
i.e., $\phi_L=\psi_L(u)$ is holomorphic and $\phi_R=\psi_R(\overline u)$ is antiholomorphic.  

Now consider the case when $x$ runs over the circle $\Bbb R/2\pi \Bbb Z$, with Lebesgue measure normalized to have volume $1$. Then, if we still want to have a decomposition of $\phi$ into a left-mover and a right-mover, we should ``kill the zero mode" by requiring that $\int_0^{2\pi}\phi(t,x)dx=0$ (otherwise we have a solution $\phi(t,x)=t$ of the string equation which cannot be written as a sum of a left-moving and right-moving periodic wave). Then we may introduce the coordinate $z=e^{iu}$ which takes values in $\Bbb C^\times$, and $\phi_L,\phi_R$ become holomorphic, respectively antiholomorphic fields on $\Bbb C^\times$, which we'll denote by 
$\varphi$, $\varphi^*$. So we have Laurent expansions 
$$
\varphi(z)=\sum_{n\in \Bbb Z}\varphi_n z^{-n},\ \varphi^*(\overline z)=\sum_{n\in \Bbb Z}\varphi^*_n \overline z^{-n},
$$
with $\varphi_0=\varphi_0^*=0$. 
When $z$ is on the unit circle, these are just the Fourier expansions 
of $\phi_L(0,x)$, $\phi_R(0,x)$, and for 
$$
a(z)=\sum_{n\in \Bbb Z}a_nz^{-n-1}:=i\partial_z \varphi(z),\ a^*(\overline z)=\sum_{n\in \Bbb Z}a_n^*\overline z^{-n-1}:=-i\overline\partial_z\varphi^*(\overline z),
$$
where $a_0=a_0^*=0$, 
we have 
$$
za=\partial_u\phi_L=\psi_L'(u),\ \overline z a^*=\overline \partial_u\phi_R=\psi_R'(\overline u).
$$
Thus for $z=e^{iu},w=e^{iv}$ we get 
\begin{equation}\label{ftpb}
\lbrace za(z),wa(w)\rbrace=\delta'(u-v).
\end{equation}
Note that  
$$
\delta'(u-v)=i\sum_{n\in \Bbb Z}nz^{n}w^{-n}.
$$ 
So setting 
$$
\delta(w-z):=\sum_{n\in \Bbb Z}z^nw^{-n-1}
$$
(Fourier expansion of the distribution $\delta(w-z)$ on $(S^1)^2$, where 
$|z|=|w|=1$), we can write \eqref{ftpb} as 
\begin{equation}\label{ftpb1}
\lbrace a(z),a(w)\rbrace=-i\delta'(w-z).
\end{equation} 
In components, this takes the form
$$
\lbrace \sum_{m\in \Bbb Z}a_m z^{-m},\sum_{n\in \Bbb Z}a_{-n} w^{n}\rbrace=-i\sum_{n\in \Bbb Z}nz^{-n}w^n.
$$
Thus we get 
\begin{equation}\label{mlsb1}
\lbrace a_n, a_m\rbrace=-in\delta_{n,-m}.
\end{equation} 
Similarly, 
\begin{equation}\label{mlsb2}
\lbrace a_n^*,  a_m^*\rbrace=in\delta_{n,-m},
\end{equation}
and 
\begin{equation}\label{mlsb3}
\lbrace a_n, a_m^*\rbrace=0,
\end{equation} 
which in terms of generating functions can be written as 
\begin{equation}\label{ftpb2}
\lbrace a^*(z),a^*(w)\rbrace=i\delta'(w-z), \lbrace a(z),a^*(w)\rbrace=0.
\end{equation} 
Finally, let us write down the hamiltonian of the theory in terms of the Fourier (=Laurent) modes $a_n$. 
Recall that in the original notation it has the form 
$$
H=\frac{1}{2}\int_{\Bbb R/2\pi \Bbb Z} (\phi_t^2+\phi_x^2)dx.
$$
Thus we have 
$$
H=\frac{1}{4}\int_{\Bbb R/2\pi \Bbb Z} ((\overline z a^*-z a)^2+(\overline z a^*+za)^2)dx=\frac{1}{2}\int_{\Bbb R/2\pi \Bbb Z}(\overline z^2a^{*2}+z^2a^2)dx,
$$
i.e., 
\begin{equation}\label{classham}
H=\sum_{n>0}(a_{-n}a_n+a_{-n}^*a_n^*).
\end{equation}
It satisfies the relations 
$$
\lbrace a_m,H\rbrace=-ima_m,\ \lbrace a_m^*,H\rbrace=ima_m^*.
$$

\subsection{Free quantum massless scalar on 
$\Bbb R\times \Bbb R/2\pi \Bbb Z$ with killed zero mode}

Consider now the free QFT of a massless scalar boson $\phi$ on $\Bbb R\times \Bbb R/2\pi \Bbb Z$ with 
Minkowskian metric $dt^2-dx^2$, with killed zero mode, i.e., a quantization of 
the classical field theory described in Subsection \ref{mlsb}. Since this is not a theory 
on a vector space, it won't satisfy Wightman axioms. However, we can naturally quantize 
the commutation relations \eqref{mlsb1},\eqref{mlsb2},\eqref{mlsb3} (with $\hbar=1$), by replacing them with 
$$
[a_n,a_m]=n\delta_{n,-m},\ [a_n^*,a_m^*]=-n\delta_{n,-m},\ [a_n,a_m^*]=0.
$$
In other words, for $a(z)=\sum_{n\in \Bbb Z}a_nz^{-n-1}$, $a^*(\overline z)=\sum_{n\in \Bbb Z}a_n^*\overline z^{-n-1}$ we have 
$$
[a(z),a(w)]=\delta'(w-z),\ [a^*(z),a^*(w)]=-\delta'(w-z),\ [a(z),a^*(w)]=0,
$$
which quantize equations \eqref{ftpb1},\eqref{ftpb2} (this is a field-theoretic generalization of 
the analysis of Subsection \ref{harosc3}, with an infinite sequence of harmonic oscillators 
labeled by positive integers). Thus we see that the Euclidean space-locality property is satisfied. 

This shows that we have an infinite system of independent harmonic oscillators. 
To restate this algebraically, consider the infinite dimensional Heisenberg Lie algebra 
$\mathcal A$ with basis $a_n,n\ne 0$ and $K$ (central) with commutation relations 
$$
[a_n,a_m]=n\delta_{n,-m}K.
$$
Then we see that some dense subspace of the Hilbert space $\mathcal H$ 
of our theory should carry a pair of commuting actions of 
$\mathcal A$ (by left-movers and right-movers), 
with $K$ acting by $1$ and $-1$, respectively (we'll denote the second copy of $\mathcal A$ by $\mathcal A^*$). 

Let us now describe the Hilbert space $\mathcal H$. 
Note that the Lie algebra $\mathcal A$ has 
an irreducible {\it Fock representation}\index{Fock representation}
$\mathcal F$ generated by $\Omega$ with
defining relations 
$$
a_n\Omega=0,\ n>0,\quad K\Omega=\Omega.
$$ 
As a vector space, $\mathcal F$ is the {\it Fock space}\index{Fock space}
$$
\mathcal F=\Bbb C[X_1,X_2,...]
$$ 
(with $\Omega=1$), on which the operators 
$a_{-n}$ for $n>0$ act by multiplication by $X_n$ and $a_n$ 
act by $n\frac{\partial}{\partial X_n}$. 

Now, the hamiltonian of the system (which we rescale for convenience by a factor of $2$) should satisfy the commutation relations 
$$
[\widehat H,a_n]=-na_n,\ [\widehat H,a_n^*]=na_n^*.
$$
Thus we see that if we want the spectrum of $\widehat H$ to be bounded below and if $\Omega\in \mathcal H$ 
is the lowest eigenvector of $\widehat H$ then we must have 
$$
a_n\Omega=0,\ a_{-n}^*\Omega=0
$$
for $n>0$. But in this case the space $\mathcal D$ generated from 
$\Omega$ by the action of $a_n,a_n^*$ has to be the irreducible representation 
$\mathcal F\otimes \mathcal F^*$ of the Lie algebra $\mathcal A\oplus \mathcal A^*$, where 
$\mathcal F^*:=\Bbb C[X_1^*,X_2^*,...]$ with 
$a_n^*$ acting by multiplication by $X_n^*$ and $a_{-n}^*\mapsto n\frac{\partial}{\partial X_n^*}$ for $n>0$. 

Thus the space $\mathcal D$ is the tensor product of polynomial algebras $\Bbb C[X_j]$ 
and $\Bbb C[X_j^*]$. Each of the algebras $\Bbb C[X_j]$ carries a positive inner product 
with $X_j^n$ being an orthogonal basis and $||X_j^n||^2=j^nn!$, and similarly for $\Bbb C[X_j^*]$. 
This yields a positive inner product $\la,\ra$ 
on $\mathcal F,\mathcal F^*$ and $\mathcal D$, with respect to which $a_i^\dagger=a_{-i}$ and $a_i^{*\dagger}=a_{-i}^*$. 
The Hilbert space $\mathcal H$ is the completion of $\mathcal D$ with respect to $\la,\ra$.

This implies that the quantum Hamiltonian has to be given by the formula 
$$
\widehat H=\sum_{n>0}(a_{-n}a_n+a_n^*a_{-n}^*)+C
$$
obtained by quantizing the classical hamiltonian \eqref{classham} (note that the annihilation operators are written on the right 
to make sure the infinite sum makes sense). We may write $\widehat H$ as the sum of left-moving and right-moving parts: 
$$
\widehat H=\widehat H_L+\widehat H_R,
$$
where 
$$
\widehat H_L:=\sum_{n>0}a_{-n}a_n+\tfrac{C}{2},\ \widehat H_L:=\sum_{n>0}a_{n}^*a_{-n}^*+\tfrac{C}{2}.
$$

\subsection{$\zeta$-function regularization} 
At the moment it is not clear what the right value of $C$ should be. To answer this question, recall that the hamiltonian of a single harmonic oscillator is $z\partial_z+\frac{1}{2}$ acting on $\Bbb C[z]$. 
This suggests that the formula for $\widehat H$ should be 
$$
\widehat H=\sum_{n>0}(a_{-n}a_n+a_n^*a_{-n}^*+n)=\frac{1}{2}\sum_{n\ne 0} (a_{-n}a_n+a_n^*a_{-n}^*),
$$
which is a more symmetric and natural formula for quantization of $H$. This formula, however, does not make sense, 
since the series 
$$
1+2+3+...
$$ 
is divergent. We may, however, regularize it using 
{\it $\zeta$-function regularization}.\index{$\zeta$-function regularization} 

Namely, recall that the Riemann $\zeta$-function
is defined by the formula 
$$
\zeta(s)=\sum_{n=1}^\infty n^{-s}.
$$
It is well known that this function extends meromorphically to the entire complex plane with a unique (simple) pole 
at $s=1$ and satisfies the functional equation, which says that the function $\pi^{-\frac{s}{2}}\Gamma(\tfrac{s}{2})\zeta(s)$
is symmetric under the change $s\mapsto 1-s$:
$$
\zeta(1-s)=\pi^{\frac{1}{2}-s}\frac{\Gamma(\tfrac{s}{2})}{\Gamma(\tfrac{1-s}{2})}\zeta(s).
$$
Now, it is natural to define 
$$
C=1+2+3+...:=\zeta(-1).
$$
But the functional equation for $s=2$ implies 
that 
$$
\zeta(-1)=\frac{\pi^{-\frac{3}{2}}}{\Gamma(-\frac{1}{2})}\zeta(2)=-\frac{\pi^{-\frac{3}{2}}}{2\pi^{\frac{1}{2}}}\frac{\pi^2}{6}=-\frac{1}{12}.
$$
So from this point of view it is natural to set 
$$
C:=-\frac{1}{12}
$$

\begin{remark} Recall that for integer ${\rm g}\ge 1$
$$
\zeta(2{\rm g})=(-1)^{{\rm g}+1}2^{2{\rm g}-1}\frac{B_{2{\rm g}}}{(2{\rm g})!}\pi^{2{\rm g}}.
$$
So the functional equation for $\zeta$ implies that 
$$
\zeta(1-2{\rm g})=\pi^{\frac{1}{2}-2{\rm g}}\frac{\Gamma({\rm g})}{\Gamma(\frac{1}{2}-{\rm g})}\cdot (-1)^{{\rm g}+1}2^{2{\rm g}-1}\frac{B_{2{\rm g}}}{(2{\rm g})!}\pi^{2{\rm g}}=
-\frac{B_{2{\rm g}}}{2{\rm g}}.
$$
Thus the Harer-Zagier theorem can be interpreted as the statement that 
the Euler characteristic of the moduli space of curves of genus ${\rm g}$ is $\zeta(1-2{\rm g})$. 
In particular, for ${\rm g}=1$ we get the Euler characteristic of $SL_2(\Bbb Z)$, which is $-\frac{1}{12}$. 
\end{remark} 

\subsection{Modularity of the partition function} 
The value $-1/12$ turns out indeed to be the most natural value of $C$. To see this, let us 
return to Euclidean signature $|du|^2=dt^2+dx^2$ and put our theory on the 
complex torus $E=E_\tau=\Bbb C^\times/q^{\Bbb Z}\cong \Bbb R/2\pi T\Bbb Z\times \Bbb R/2\pi \Bbb Z$, where 
$$
T>0,\ \tau=iT,\ q=e^{2\pi i\tau}=e^{-2\pi T}\in (0,1).
$$
In this case, as we know from quantum mechanics, we should consider 
the partition function 
$$
Z(\tau):={\rm Tr}(e^{2\pi i\tau \widehat H}).
$$
Note that 
$$
\widehat H(P\otimes Q)=(\deg P+\deg Q+C)P\otimes Q,
$$
where $P\in \mathcal F$ and $Q\in \mathcal F^*$, and the degree is given by 
$$
\deg(X_n)=\deg(X_n^*)=n.
$$ 
Thus we have 
$$
Z(\tau)=\frac{e^{2\pi i\tau(C+\frac{1}{12})}}{\eta(\tau)^2},
$$
where 
$$
\eta(\tau):=q^{\frac{1}{24}}\prod_{n=1}^{\infty}(1-q^n)
$$
is the Dedekind $\eta$-function.
Now recall that $\eta(\tau)$ is a modular form of weight $\frac{1}{2}$, namely, 
$$
\eta(-\tfrac{1}{\tau})=\sqrt{-i\tau}\cdot\eta(\tau).
$$
So the partition function $Z$ has a nice modular property for a unique value of $C$, which is 
exactly $-\frac{1}{12}$. 
 
Let us explain why we should expect $Z(\tau)$ to have a modular property. 

For this, note that the Lagrangian of the theory 
$$
\mathcal L(\phi)=\frac{1}{4\pi}\int_E (d\phi)^2 =\frac{1}{4\pi}\int_E d\phi\wedge *d\phi
$$ 
is {\it conformally invariant},\index{conformal invariance} as it is written purely in terms of the Hodge *-operator which depends only on the conformal 
structure on $E$. The same is true for the equation of motion, which is the Laplace's equation $\Delta \phi=0$. 
In other words, our classical field theory is {\it conformal}.\index{conformal classical field theory} Thus we could hope that the corresponding quantum theory 
is conformal as well. This should mean that $Z(-\tfrac{1}{\tau})=Z(\tau)$, since the complex tori $E_{-\frac{1}{\tau}}$ and $E_\tau$ are conformally equivalent. 

This said, we note that this modular property is only satisfied up to a linear factor in $\tau$: 
in fact, we have 
$$
Z(-\tfrac{1}{\tau})=-i\tau Z(\tau). 
$$
This is because we have killed the zero mode, which we should, in fact, 
have included (after all, the space cycle in the torus $E_\tau$ is not $SL_2(\Bbb Z)$-invariant, hence neither is 
the condition that the integral of $\phi$ over this cycle vanishes). This is done in the next subsection. 

\subsection{Including the zero mode} 
 The zero mode corresponds to the periodic solutions 
$\phi(t,x)=\alpha+\mu t$ of the string equation ($\alpha,\mu\in \Bbb R$), which for nonzero $\mu$ cannot be split into 
a left-moving and right-moving periodic wave. So putting back the zero mode 
corresponds to replacing the Hilbert space $\mathcal H$ with $\mathcal H_{\rm full}:=\mathcal H\otimes L^2(\Bbb R)$, where $L^2(\Bbb R)$ is the Hilbert space of a quantum-mechanical free massless particle, and the Hamiltonian $\widehat H$ by  
$$
\widehat H_{\rm full}:=\widehat H+\widehat \mu^2,
$$
where $\widehat\mu$ is the quantum momentum operator for this quantum mechanical particle, acting on $L^2(\Bbb R)$ by multiplication by the momentum $\mu$. Thus we may write 
$$
\mathcal H_{\rm full}=\int_{\Bbb R}\mathcal H_\mu d\mu,
$$
with $\mathcal H_\mu=\mathcal F_\mu\otimes \mathcal F_\mu^*$
where $\mathcal F_\mu=\mathcal F$ but with $a_0=\mu$ instead of $a_0=0$, and similarly 
$\mathcal F_\mu^*=\mathcal F^*$ but with $a_0^*=\mu$ instead of $a_0^*=0$. 
Then we still have 
$$
\widehat H=\widehat H_L+\widehat H_R,
$$
where 
$$
\widehat H_L=\tfrac{1}{2}a_0^2+\sum_{n>0}a_{-n}a_n-\tfrac{1}{24},\ \widehat H_R=\tfrac{1}{2}a_0^{*2}+\sum_{n>0}a_{n}^*a_{-n}^*-\tfrac{1}{24}.
$$

According to Remark \ref{givesen}, the partition function of such a particle when time runs over $\Bbb R/L\Bbb Z$
is, up to scaling, $L^{-\frac{1}{2}}$. Thus the full partition function 
should be 
$$
\mathcal Z(\tau)=(-i\tau)^{-\frac{1}{2}}Z(\tau). 
$$
And then we have the genuine modular property: 
$$
\mathcal Z(-\tfrac{1}{\tau})=\mathcal Z(\tau).
$$

We note that the function $\mathcal Z(\tau)$ has a natural extension 
to arbitrary $\tau\in \Bbb C_+$  (not necessarily purely imaginary), which 
is just the path integral over a ``non-rectangular" complex torus $E_\tau$. 
To explain this, note that we have a natural action of 
the translation group $\Bbb R/2\pi \Bbb Z$ 
on our spacetime, hence we should expect its action 
on the Hilbert space $\mathcal H$. 
The infinitesimal generator $D$ of this group 
should satisfy the commutation relations 
$$
[D,a_n]=na_n,\ [D,a_n^*]=na_n^*
$$ 
(which differs from the corresponding relations for $\widehat H$ by the sign in the first relation). 
As $D\Omega=0$, it follows that 
$$
D(P\otimes Q)=(\deg P-\deg Q)P\otimes Q,
$$
i.e., 
$$
D=\widehat H_L-\widehat H_R.
$$ 
Let $s\in \Bbb R$ and $\tau:=iT+s$. Then a twisted version of the Feynman-Kac formula implies that 
given $s\in \Bbb R$, we have   
$$
Z(\tau)={\rm Tr}(e^{-2\pi T\widehat H}e^{2\pi isD})=|q|^{-\frac{1}{12}}{\rm Tr}(q^{\widehat H_L} \overline q^{\widehat H_R}),
$$
where $q=e^{-2\pi(T+is)}=e^{2\pi i\tau}$. Thus we still have 
$$
Z(\tau)=\frac{1}{|\eta(\tau)|^{2}}.
$$
Hence 
$$
\mathcal Z(\tau)=\frac{1}{\sqrt{{\rm Im}\tau} |\eta(\tau)|^2},
$$
which is a (real analytic) modular function for $SL(2,\Bbb Z)$, i.e., invariant under 
$\tau\mapsto \frac{a\tau+b}{c\tau+d}$ for $a,b,c,d\in \Bbb Z$, $ad-bc=1$. 
Thus here we have a genuine quantum conformal symmetry (as the moduli of complex tori $E_\tau$ 
is exactly $\Bbb C_+/SL_2(\Bbb Z)$). Indeed, this function is obviously symmetric 
under $\tau\mapsto \tau+1$, and we've seen that it is invariant under $\tau\mapsto -1/\tau$, but these two transformations generate $SL_2(\Bbb Z)$.  

\subsection{Correlation functions on the cylinder and torus} 
We may also consider correlation functions of the quantum fields $a$ and $a^*$. 
They are computed separately in $\mathcal F$ and $\mathcal F^*$ and 
can be easily found using representation theory. For example, 
we have $a_{n}a_{-n}\Omega=n\Omega$ for $n>0$, so the 2-point function is given by 
$$
\la \Omega, a(z)a(w)\Omega\ra=\sum_{n=1}^\infty nz^{-n-1}w^{n-1}=\frac{1}{(z-w)^2}.
$$
More precisely, the series converges only for $|w|<|z|$, but the function analytically continues 
to all $z\ne w$. Since our theory is free, the higher correlation functions are given by Wick's formula:

\begin{proposition}\label{corfu} We have 
$$
\la \Omega,a(z_1)....a(z_{2k})\Omega\ra=\sum_{\sigma\in \Pi_{2k}}\frac{1}{\prod_{j\in [1,2k]/\sigma}(z_j-z_{\sigma(j)})^2},
$$
and the $2k+1$-point correlation functions are zero. 
\end{proposition}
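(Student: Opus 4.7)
The plan is to prove this by the standard operator-theoretic version of Wick's theorem, using the fact that the theory is free (Heisenberg) and that the only nontrivial contraction is a c-number computed already in the preceding paragraph.

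First I would split $a(z) = a_+(z) + a_-(z)$ into its creation and annihilation parts, where
$$a_+(z) := \sum_{n\ge 1} a_{-n}\, z^{n-1}, \qquad a_-(z) := \sum_{n\ge 1} a_n\, z^{-n-1}.$$
By the defining relations of the Fock module we have $a_-(z)\Omega = 0$ and $\langle \Omega, a_+(z) = 0$. The commutation relations $[a_n, a_m] = n\delta_{n,-m}$ give
$$[a_-(z), a_+(w)] = \sum_{n\ge 1} n\, z^{-n-1} w^{n-1},$$
which converges in the region $|w| < |z|$ to $(z-w)^{-2}$. All other commutators $[a_\pm(z), a_\pm(w)]$ vanish, since they are built from mutually commuting subalgebras. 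Note that $(z-w)^{-2} = (w-z)^{-2}$, so the ``contraction'' is symmetric in its two arguments, exactly matching the right-hand side of the proposition.

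Next I would establish the Wick expansion by induction on the number of operators. Fix an ordering $|z_1| > |z_2| > \cdots > |z_n|$ so that all the series manipulations are absolutely convergent; the final identity then extends by analytic continuation to arbitrary distinct $z_i$. Write $a(z_1) = a_+(z_1) + a_-(z_1)$ inside the correlator. The term with $a_+(z_1)$ kills $\langle \Omega$ on the left. For the term with $a_-(z_1)$, move this operator to the right past $a(z_2), \dots, a(z_n)$ using
$$a_-(z_1) a(z_j) = a(z_j) a_-(z_1) + [a_-(z_1), a_+(z_j)] = a(z_j) a_-(z_1) + \frac{1}{(z_1 - z_j)^2}.$$
After commuting all the way through, the ``leftover'' piece $a(z_2)\cdots a(z_n)\, a_-(z_1)\Omega$ vanishes because $a_-(z_1)\Omega = 0$. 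What remains is a sum over $j \in \{2, \dots, n\}$ of $(z_1 - z_j)^{-2}$ times an $(n-2)$-point correlator $\langle \Omega, a(z_2)\cdots \widehat{a(z_j)} \cdots a(z_n)\Omega\rangle$. This is exactly the recursion characterizing the sum over matchings in $\Pi_n$, so by induction the claim follows.

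Finally, the vanishing of the odd correlation functions is immediate from the $\mathbb{Z}/2$ symmetry $a_n \mapsto -a_n$ of the Heisenberg algebra, which preserves all defining relations and fixes $\Omega$, hence induces an automorphism of the Fock module under which $a(z) \mapsto -a(z)$; any correlator with an odd number of $a$'s is therefore equal to its own negative. The only subtlety worth double-checking is the convergence/analytic continuation step: one must verify that the partial sums obtained while commuting $a_-(z_1)$ through all the other factors converge in the specified region $|z_1|>\cdots>|z_n|$, which follows because each commutator is just the geometric-type series for $(z_i - z_j)^{-2}$. I expect this bookkeeping to be the only mildly delicate point; everything else is formal and parallels the proof of the bosonic Wick theorem (Theorem \ref{wick}) in the infinite-dimensional setting.
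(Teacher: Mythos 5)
Your proof is correct. Note that the paper does not actually supply an argument for this proposition: it computes only the $2$-point function $\la\Omega,a(z)a(w)\Omega\ra$ and then asserts that the higher correlators ``are given by Wick's formula'' because the theory is free, relegating a direct algebraic proof to the exercise immediately following. Your argument is precisely the solution to that exercise: splitting $a(z)$ into creation and annihilation parts $a_\pm(z)$, computing $[a_-(z),a_+(w)]=(z-w)^{-2}$ in the region $|w|<|z|$, and commuting $a_-(z_1)$ to the right to kill $\Omega$ yields the recursion over matchings, which is exactly the $n$-point Wick identity. This is the standard operator-algebraic route for free field correlators, made rigorous by the annulus-of-convergence and analytic-continuation step you correctly flag as the only delicate point. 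The $\Bbb Z/2$-symmetry argument for the vanishing of odd correlators is a clean alternative to the (equally valid) observation that your own recursion bottoms out at $\la\Omega,a(z)\Omega\ra=0$.
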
 

We note that since $\mathcal F$ is generated by $\Omega$ as an $\mathcal A$-module, 
these functions determine $a(z)$ as a local operator (=quantum field). More generally, they determine the operators 
$a(z_1)...a(z_r)$ when $z_i\ne z_j$, which are symmetric in $z_1,..,z_r$ due to space locality. However, these operators are not well defined (have poles) on the diagonals $z_i=z_j$. 
 
\begin{exercise}
Give a direct algebraic proof of Proposition \ref{corfu}.
\end{exercise}  

\begin{exercise}
Compute the normalized 2-point correlation function of the quantum field $\widetilde a(z):=za(z)$ on the torus $E:=\Bbb R/2\pi T\Bbb Z\times \Bbb R/2\pi \Bbb Z$ in terms of theta functions. 

{\bf Hint.} This correlation function is given by 
$$
\frac{\la \widetilde a(z)\widetilde a(w)\ra_E}{\la \emptyset\ra_E}={\rm Tr}_{\mathcal F}(\widetilde a(z)\widetilde a(w)e^{-2\pi T\widehat H_L}).
$$
\end{exercise} 

\subsection{Infinitesimal conformal symmetry: the Virasoro algebra} 

We have already pointed out that the theory of a free massless scalar in two dimensions is classically conformally invariant and saw some manifestations of the fact that this invariance survives at the quantum level (modular invariance of the partition function on the torus). However, to study conformal symmetry systematically, 
we need to consider {\it infinitesimal conformal symmetry},\index{infinitesimal conformal symmetry} given by ``infinitesimal conformal mappings", i.e., holomorphic 
vector fields on $\Bbb C^\times$. 

For simplicity we consider polynomial vector 
fields $P(z)\partial_z$ where $P$ is a Laurent polynomial 
(this is sufficient since polynomial fields are dense in all holomorphic vector fields in an appropriate topology). 
Such vector fields form a Lie algebra called the {\it Witt algebra}\label{Witt algebra} (or {\it centerless Virasoro algebra}
in the physics literature), and we'll denote it by $W$. A convenient basis of $W$ is $\lbrace L_n=-z^{n+1}\partial_z, n\in \Bbb Z\rbrace$ which satisfies the commutation relations 
$$
[L_n,L_m]=(n-m)L_{m+n},\ m,n\in \Bbb Z.
$$

The Lie algebra $W$ acts by symmetries of the classical field theory of a free massless scalar, since 
its Lagrangian is conformally invariant. In fact, importantly, this action is only $\Bbb R$-linear and not $\Bbb C$-linear, which is a good thing - this means that we have an action of the complexification $W_{\Bbb C}=W\oplus W^*$, 
where $W^*$ is the Lie algebra of antiholomorphic vector fields; in other words, 
we have two commuting actions of $W$. 

If our theory is quantum-mechanically conformally invariant, then the Lie algebra $W\oplus W^*$ 
should act on the space $\mathcal D$ in a way compatible with the action of $\mathcal A\oplus \mathcal A^*$, i.e., 
so that 
$$
[L_n,a(z)]=z^{n+1}a'(z)+(n+1)z^na(z), 
$$
$$
[L_n^*,a^*(\overline z)]=\overline z^{n+1}a^{*\prime}(\overline z)+(n+1)\overline z^na_*(\overline z),
$$
$$
[L_n^*,a(z)]=[L_n,a^*(\overline z)]=0,
$$
or in components 
$$
[L_n,a_m]=-ma_{m+n},\ [L_m^*,a_n^*]=-ma_{m+n}^*,\ [L_n,a_m^*]=[L_n^*,a_m]=0.
$$
Is there such an action? To figure this out, first note that the operators $L_0, L_0^*$ satisfy the same commutation relations with $a,a^*$ as $\widehat H_L,-\widehat H_R$ respectively. Since $\mathcal D$ is an irreducible $\mathcal A\oplus \mathcal A^*$-module, this means that by Schur's lemma we must have
$$
L_0=\widehat H_L+C_L,\ L_0^*=-\widehat H_R+C_R
$$
for some constants $C_L,C_R$. This shows that $L_n$ has to shift 
the grading in $\mathcal F$ by $n$, and similarly for $L_n^*$ and $\mathcal F^*$. 

Now by analogy with the formula 
$$
L_0=\sum_{k\ge 1} a_{-k}a_k+{\rm const},
$$ 
define for $n\ne 0$
\begin{equation}\label{formvir}
L_n:=\frac{1}{2}\sum_{k\in \Bbb Z}a_{-k}a_{k+n}.
\end{equation} 
It is easy to check that this operator on $\mathcal F$ (and hence on
$\mathcal D=\mathcal F\otimes \mathcal F^*$) is well defined, and satisfies the desired commutation relations
$$
[L_n,a(z)]=-z^{n+1}a'(z)+(n+1)z^na(z),\ [L_n,a^*(z)]=0.
$$
Again using irreducibility of $\mathcal D$ and Schur's lemma, we
see that if the desired action of $W$ exists at all, then $L_n$ {\bf must} be given by formula \eqref{formvir} 
(note that here we can't add a constant since $L_n$ must shift the degree). So it remains to check 
if the constructed operators satisfy the commutation relations of $W$. 

First assume $n\ne -m$. In this case using the Jacobi identity, we see that 
the operator $[L_n,L_m]-(n-m)L_{m+n}$ commutes with $a,a^*$, so again by Schur's lemma 
it must be a constant; however, since it shifts degree, we get the desired relation 
$$
[L_n,L_m]-(n-m)L_{m+n}=0.
$$
So it remains to consider the case $n=-m>0$. In this case the same argument shows that 
$$
[L_n,L_{-n}]-2nL_0=C(n), 
$$
where $C(n)\in \Bbb C$, and we have an action of $W$ if 
$C(n)=0$ for all $n$. So let us compute $C(n)$. To this end, note that the eigenvalue by which 
$[L_n,L_{-n}]$ acts on $\Omega$ is $2nC_L+C(n)$. So it suffices to compute 
this eigenvalue, i.e., the vector 
$L_nL_{-n}\Omega$. 

In terms of the polynomial realization, we have 
$$
L_{-n}\Omega=\tfrac{1}{2}\sum_{0<j<n}X_jX_{n-j}.
$$
Thus 
$$
L_nL_{-n}\Omega=\tfrac{1}{4}\sum_{0<j<n}j(n-j)\tfrac{\partial^2}{\partial X_j\partial X_{n-j}}\sum_{0<j<n}X_jX_{n-j}=
\tfrac{1}{2}\sum_{0<j<n}j(n-j)=\frac{n^3-n}{12}.
$$
So 
$$
C(n)=\frac{n^3-n}{12}.
$$
Thus we see that we almost have an action of $W$, but not quite - no matter how we choose $C_L$, the cubic term in $n$ will be present (a quantum anomaly)! Instead, we have a {\it projective} representation of $W$, which is, in fact, a representation of a {\it central extension}\index{central extension} of $W$. Such projective actions are, in fact, common in quantum mechanics, since quantum states correspond not to actual unit vectors in the space of states, but rather to vectors up to a phase factor, on which (as well as on quantum observables) there is a genuine action of the symmetry group. Prototypical examples of this are the {\it Heisenberg uncertainty relation} $[\widehat p,\widehat x]=-i\hbar$, when the classical 2-dimensional group (or Lie algebra) of translations of the phase plane is replaced in quantum theory 
by the 3-dimensional Heisenberg group (Lie algebra), and the 
{\it phenomenon of spin}, when the classical rotational symmetry group $SO(3)$ 
is replaced in quantum theory by its double cover $SU(2)$. 

This motivates the following definition. 

\begin{definition} The {\it Virasoro algebra}\index{Virasoro algebra} is the 1-dimensional central extension 
of the Witt algebra $W$ with basis $L_n,n\in \Bbb Z$ and 
$C$ (a central element) with commutation relations 
$$
[L_n,L_m]=(n-m)L_{m+n}+\frac{n^3-n}{12}\delta_{n,-m}C.
$$
\end{definition} 

Thus we have a 1-dimensional central ideal $\Bbb CC\subset {\rm Vir}$ spanned by $C$, and 
${\rm Vir}/\Bbb CC\cong W$. 

So we obtain 

\begin{theorem} The formulas 
$$
L_0=\sum_{k\ge 1}a_{-k}a_k,\ L_n=\tfrac{1}{2}\sum_{k\in \Bbb Z}a_{-k}a_{k+n},n\ne 0
$$
define an action of ${\rm Vir}$ on $\mathcal F$ with $C$ acting by $1$. 
\end{theorem}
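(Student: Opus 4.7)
I will organize the argument around the fact that the algebra $\mathcal{A}$ acts irreducibly on $\mathcal{F}$, so that Schur's lemma reduces the Virasoro relations to scalar computations on the vacuum $\Omega$.

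First I will check that each $L_n$ is a well-defined operator on $\mathcal{F}=\mathbb{C}[X_1,X_2,\dots]$. For a fixed polynomial $P\in\mathcal{F}$ of bounded degree, almost all of the operators $a_{-k}a_{k+n}$ appearing in \eqref{formvir} annihilate $P$: either $k+n$ is large and positive (so $a_{k+n}P=0$), or $-k$ is large and positive and the annihilation operator $a_{-k}$ applied to the monomial $a_{k+n}P$ kills it for degree reasons. Thus each $L_n$ is well-defined on $\mathcal{F}$, and preserves $\mathcal{D}=\mathcal{F}\otimes\mathcal{F}^*$ acting trivially on the second factor.

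Next I will verify the commutation relations
$$
[L_n,a_m]=-m\,a_{m+n},\qquad [L_n,a_m^*]=0.
$$
The second is obvious since $L_n$ only involves the $a_j$'s. For the first (say $n\neq 0$), the Leibniz rule for commutators gives
$$
[L_n,a_m]=\tfrac{1}{2}\sum_{k\in\mathbb{Z}}\bigl(a_{-k}[a_{k+n},a_m]+[a_{-k},a_m]a_{k+n}\bigr),
$$
and substituting $[a_j,a_l]=j\delta_{j,-l}$ collapses the sum to the two terms coming from $k=-m-n$ and $k=m$, each contributing $-\tfrac{m}{2}a_{m+n}$. The case $n=0$ is handled identically.

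Now I will exploit the irreducibility of $\mathcal{D}$ as an $\mathcal{A}\oplus\mathcal{A}^*$-module. For any $n,m$, the operator
$$
T_{n,m}:=[L_n,L_m]-(n-m)L_{n+m}
$$
commutes with every $a_j$ and $a_j^*$ by the Jacobi identity combined with Step~2. By Schur's lemma, $T_{n,m}$ is scalar on $\mathcal{D}$. However $T_{n,m}$ shifts degree by $n+m$, so whenever $n+m\neq 0$ the scalar must be zero, yielding the desired relation. It remains to handle $n+m=0$, i.e.\ to compute a single scalar $C(n)\in\mathbb{C}$ defined by
$$
[L_n,L_{-n}]=2nL_0+C(n).
$$
The main (and only remaining) obstacle is the explicit evaluation of $C(n)$, which will be the hands-on part of the proof. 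Since $L_n\Omega=0$ for $n\geq 0$ (check by degree: every summand in \eqref{formvir} annihilates $\Omega$ when $n>0$) and $L_0\Omega=0$, applying the relation to $\Omega$ reduces this to computing $L_nL_{-n}\Omega$. I will use the polynomial realization: one finds $L_{-n}\Omega=\tfrac{1}{2}\sum_{0<j<n}X_jX_{n-j}$, and then
$$
L_n L_{-n}\Omega=\tfrac{1}{4}\sum_{0<j<n} j(n-j)\,\tfrac{\partial^2}{\partial X_j\partial X_{n-j}}\sum_{0<j<n}X_jX_{n-j}=\tfrac{1}{2}\sum_{0<j<n}j(n-j)=\frac{n^3-n}{12}\,\Omega.
$$
Hence $C(n)=\tfrac{n^3-n}{12}$, matching the Virasoro cocycle, and the proof is complete. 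The action of the central element $C$ is by $1$ by construction.
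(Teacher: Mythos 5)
Your proof is correct and follows essentially the same approach as the paper's: verify $[L_n,a_m]=-ma_{m+n}$, use Schur's lemma and the degree-shifting property to get the Virasoro relations for $n+m\ne 0$ for free, and pin down the central term $C(n)=\frac{n^3-n}{12}$ by the explicit computation of $L_nL_{-n}\Omega$ in the polynomial realization. The one thing you add beyond what the paper spells out explicitly is the careful argument for well-definedness of the sum defining $L_n$, which is a welcome detail.
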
 

It is easy to check that the same theorem holds more generally on the space $\mathcal F_\mu$ where $a_0=\mu$.
The only change is that $L_0$ acquires an additional summand $\tfrac{1}{2}\mu^2$:
$$
L_0=\tfrac{1}{2}\mu^2+\sum_{k\ge 1}a_{-k}a_k. 
$$

If $C$ acts on a representation $\Bbb V$ of ${\rm Vir}$ by a scalar $c$ (as it will, for instance, on every irreducible representation) then one says that $\Bbb V$ has {\it central charge $c$}.\index{central charge}  Thus $\mathcal F_\mu$ 
is a representation of ${\rm Vir}$ of central charge $c=1$. 

Similarly, the formulas 
$$
L_0^*=-\tfrac{1}{2}\mu^2-\sum_{k\ge 1}a_{k}^*a_{-k}^*,\ L_n^*=-\tfrac{1}{2}\sum_{k\in \Bbb Z}a_{k}^*a_{-k+n}^*,n\ne 0
$$
define an action of ${\rm Vir}$ on $\mathcal F_\mu^*$ with the central element $C^*$ acting by $-1$ (i.e., of central charge $c=-1$).

Thus we obtain two commuting projective actions of $W$ on the space $\mathcal D=\mathcal F\otimes \mathcal F^*$
which define usual linear actions only for the central extension ${\rm Vir}$ 
of $W$. Still, the corresponding adjoint action of $W$ on quantum observables 
is a genuine linear action, so this quantum field theory is {\it conformal}.\index{conformal quantum field theory}

We note that the Virasoro action preserves the positive Hermitian form on $\mathcal F_\mu$ in the sense that 
$$
L_n^\dagger=L_{-n}.
$$
Thus $\mathcal F_\mu$ is a {\it positive energy unitary representation of} ${\rm Vir}$ (positive energy means that 
$L_0$ is diagonalizable with spectrum bounded below). 

More generally, we may consider the theory of $\ell$ massless scalars $\phi_1,...,\phi_\ell$. 
In this case $\mathcal D=\mathcal F^{\otimes \ell}\otimes \mathcal F^{*\otimes \ell}$, 
and $\mathcal F^{\otimes \ell}$ is a positive energy unitary ${\rm Vir}$-module with central charge $c=\ell$ (the tensor product of 
$\ell$ copies of $\mathcal F$). 

\begin{exercise} 1. Show that ${\rm Vir}$ is a non-trivial central extension of $W$ (i.e., not isomorphic to $W\oplus \Bbb C$ as a Lie algebra).

2. Show that ${\rm Vir}$ is a universal central extension of $W$, i.e., every non-trivial central extension of $W$ by $\Bbb C$ 
is isomorphic to ${\rm Vir}$.
\end{exercise} 

\subsection{Normal ordering, composite operators and operator product expansion in conformal field theory} 

Let us now summarize the theory of normal ordering, composite operators and operator product expansion from Subsection \ref{compop} in the case of conformal field theory, for the running example of a quantum massless scalar boson. We have seen that the operator product $a(z)a(w)$ is well defined only if $w\ne z$ and has a pole when $w=z$, leading to the local operator $a(z)^2$ not being well defined. So let us expand this operator product in a Laurent series near $w=z$ and identify the singular part involving negative powers of $w-z$. 
For this purpose consider the difference 
$$
:a(z)a(w):=a(z)a(w)-\frac{1}{(z-w)^2}.
$$
The formula for the correlation functions for $a(z)$ implies that 
$$
\la \Omega,a(z_1)...a(z_{i-1}) :a(z_i)a(z_{i+1}): a(z_{i+2})...a(z_n)\Omega\ra=
$$
$$
\sum_{\sigma\in \Pi_{2k}: \sigma(i)\ne i+1}\frac{1}{\prod_{j\in \Pi_{2k}/\sigma}(z_j-z_{\sigma(j)})^2}.
$$
Note that this function is regular at $z_i=z_{i+1}$, hence the operator 
$:a(z)a(w):$ is regular at $z=w$, i.e., defined for all $z,w\in \Bbb C^\times$. 
This operator is called the {\it normally ordered product}\index{normally ordered product} of $a(z)$ and $a(w)$. 
In particular, although the square $a(z)^2$ is not defined, we have a well defined 
normally ordered square $:a(z)^2:$. 

In terms of Laurent coefficients, 
$$
:a(z)a(w):=\sum_{m,n\in \Bbb Z}:a_na_m:z^{-n-1}w^{-m-1},
$$
where $:a_na_m:=a_na_m$ if $m\ge n$ 
and $:a_na_m:=a_ma_n$ if $m<n$ (normal ordering of modes). 
Of course, this ordering only matters if $m+n=0$. 
In particular, we see that   
$$
\tfrac{1}{2}:a(z)^2:=T(z):=\sum_{n\in \Bbb Z}L_nz^{-n-2},
$$
the generating function of the Virasoro modes $L_n$. 
This operator is called the {\it (quantum) energy-momentum tensor}.\index{energy-momentum tensor}

Thus we see that the Virasoro modes $L_n$ may be viewed as Noether charges
for the corresponding infinitesimal conformal symmetries, in the holomorphic sector of the theory. 
The corresponding Noether currents are $z^{n+1}T(z)$, as 
$$
L_n=\frac{1}{2\pi i}\oint z^{n+1}T(z)dz.
$$
The Noether charges for the full theory are then $L_n+ \overline L_n$, with currents $z^{n+1}T(z)+\overline{z^{n+1}T(z)}$. 
In particular, the Hamiltonian $H$, up to adding a constant, is 
$L_0+\overline{L_0}$, which agrees with formula \eqref{quanham}.

Similarly, we may define the normal ordered products of more than two factors,  
$:a(z_1)....a(z_n):$. This can be done by induction in $n$. Namely, we have 
\begin{equation}\label{indu}
:a(z_0)a(z_1)...a(z_n):=a(z_0):a(z_1)...a(z_n):-\sum_{k\in [1,n]}\frac{:\prod_{j\ne k}a(z_j):}{(z_0-z_k)^2}
\end{equation}
It is easy to see that 
the operator $:a(z_1)....a(z_n):$ has no singularities and is well defined for all values $z_1,...,z_n\in \Bbb C^\times$. 
Thus for every $r_1,..,r_n$ we have the operator 
$$
:a^{(r_1)}(z_1)...a^{(r_n)}(z_n):=\partial_{z_1}^{r_1}...\partial_{z_n}^{r_n}:a(z_1)....a(z_n):
$$
Setting $z_1=...=z_n$, we can then define the local operator $:P(a)(z):$
for any differential polynomial $P$ in $a(z)$. This local operator, called a {\it composite operator},\index{composite operator} is a quantization 
of the corresponding local functional $P(a)(z)$ in classical field theory. 

\begin{exercise} (The state-operator correspondence) 
Show that the map $P\mapsto P(a)(z)\Omega|_{z=0}$ 
is well defined and gives an isomorphism between the space 
$\mathcal V$ of (polynomial) local operators and the Fock space $\mathcal F$. 
\end{exercise} 

More generally, repeatedly using \eqref{indu}, 
we have 
\scriptsize
$$
:a(z_1)...a(z_n):\cdot :a(w_1)...a(w_m):=
\sum_{I\subset [1,n], J\subset [1,m], s: I\cong J}
\frac{:\prod_{i\notin I}a(z_i)\prod_{j\notin J}a(w_j):}{\prod_{i\in I}(z_i-w_{s(i)})^2}.
$$
\normalsize So setting $z_i=z,w_j=w$, we obtain 
$$
:a(z)^n::a(w)^m:=\sum_{k=0}^{\min(m,n)}k!\binom{n}{k}\binom{m}{k}\frac{:a(z)^{n-k}a(w)^{m-k}:}{(z-w)^{2k}}.
$$
E.g. for $n=m=1$ we get the familiar identity
$$
a(z)a(w)=\frac{1}{(z-w)^2}+:a(z)a(w):=\frac{1}{(z-w)^2}+\text{regular terms}.
$$

More generally, for $n=1$ and any $m$ we get 
$$
a(z):a(w)^m:=\frac{m:a^{m-1}(w):}{(z-w)^2}+:a(z)a(w)^m:
$$
$$
=\frac{m:a^{m-1}(w):}{(z-w)^2}+\text{regular terms}.
$$
For $m=2$ this can be written as 
$$
a(z)T(w)=\frac{a(w)}{(z-w)^2}+\text{regular terms},
$$
which encodes the commutation relations between $a_i$ and $L_j$. 

For $n=2$, $m=2$ we get 
$$
:a(z)^2::a(w)^2:=\frac{2}{(z-w)^4}+\frac{4:a(z)a(w):}{(z-w)^2}+:a(z)^2a(w)^2:=
$$
$$
\frac{2}{(z-w)^4}+\frac{4:a(w)^2:}{(z-w)^2}+
\frac{4:a(w)a'(w)}{z-w}+\text{regular terms}.
$$
This can also be written as 
$$
T(z)T(w)=\frac{1}{2(z-w)^4}+\frac{2T(w)}{(z-w)^2}+\frac{T'(w)}{z-w}+\text{regular terms},
$$ 
which encodes the commutation relations between $L_i$. More generally, at central charge $c$ 
this relation would look like 
$$
T(z)T(w)=\frac{c}{2(z-w)^4}+\frac{2T(w)}{(z-w)^2}+\frac{T'(w)}{z-w}+\text{regular terms}.
$$ 

These are the simplest examples of the {\it operator product expansion}.\index{operator product expansion} In fact, we have the following theorem, whose 
proof we will leave to the reader: 

\begin{theorem}\label{opethm}
For any local operators $P,Q\in \mathcal V$, there exist a unique finite sequence of 
local operators $R_{1},...,R_{N}\in \mathcal V$ such that 
$$
P(a)(z)Q(a)(w)=\sum_{j=1}^{N}R_j(a)(w)(z-w)^{-j}+\text{\rm regular terms},
$$
where $(z-w)^{-j}:=\sum_{k\ge 0} \binom{k+j-1}{j-1}z^{-j-k}w^k$. 
\end{theorem}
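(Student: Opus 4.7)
The plan is to reduce to a combinatorial Wick-type identity for products of normally ordered strings and then perform a Taylor expansion around $w$. The state-operator correspondence $\mathcal{V} \cong \mathcal{F} = \mathbb{C}[X_1, X_2, \ldots]$ from the preceding exercise lets me reduce the assertion (by bilinearity) to the case where
$$P(a)(z) = {:}a^{(r_1)}(z)\cdots a^{(r_n)}(z){:}, \qquad Q(a)(w) = {:}a^{(s_1)}(w)\cdots a^{(s_m)}(w){:}.$$
By differentiating and specializing, it is in turn enough to prove a multipoint version, namely an identity for ${:}a(z_1)\cdots a(z_n){:}\cdot{:}a(w_1)\cdots a(w_m){:}$ which I can then specialize to $z_i = z$, $w_j = w$ after applying $\prod_i \partial_{z_i}^{r_i}\prod_j \partial_{w_j}^{s_j}$.

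The key step is to establish the generalized Wick formula
$$
{:}a(z_1)\cdots a(z_n){:}\cdot{:}a(w_1)\cdots a(w_m){:} = \sum_{I, \phi}\;\prod_{i \in I}\frac{1}{(z_i - w_{\phi(i)})^2}\cdot{:}\prod_{i \notin I} a(z_i)\prod_{j \notin \phi(I)} a(w_j){:},
$$
where $I$ ranges over subsets of $\{1,\ldots,n\}$ and $\phi\colon I \hookrightarrow \{1,\ldots,m\}$ over injections. I would prove this by induction on $n+m$, using the recursion \eqref{indu} to peel off $a(z_1)$ from the left factor, commute it past the right factor using the base OPE $a(z_1)a(w_j) = (z_1-w_j)^{-2} + {:}a(z_1)a(w_j){:}$, and then reassemble. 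The inductive bookkeeping of which contractions survive matches the sum over $(I,\phi)$ exactly because each contraction on the right between $a(z_1)$ and some $a(w_j)$ corresponds to extending $(I',\phi')$ for the $(n-1)+m$ case by $1 \in I$ with $\phi(1) = j$, and the already-subtracted terms in \eqref{indu} correspond to contractions internal to the left factor (which do not appear in the sum).

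Once the generalized Wick identity is in hand, I apply the appropriate derivatives and set $z_i = z$, $w_j = w$. Each term becomes a product of a pole of the form $(z-w)^{-2k-\ell}$ (where $k = |I|$ and $\ell$ counts derivatives that landed on the $(z_i - w_{\phi(i)})^{-2}$ factors) times a normally ordered expression ${:}S(z,w){:}$ that is regular at $z=w$. Taylor expanding $S(z,w)$ in $(z-w)$ around $w$, each pole $(z-w)^{-N}$ produces only finitely many singular contributions $R_j(w)(z-w)^{-j}$ with $j \leq N$, and the coefficients $R_j$ are explicit normally ordered differential polynomials in $a$ at $w$, i.e., elements of $\mathcal{V}$. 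Summing over all terms gives the desired expansion with finitely many singular pieces.

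For uniqueness, if $\sum_j R_j(w)(z-w)^{-j}$ and $\sum_j R'_j(w)(z-w)^{-j}$ have the same singular parts modulo regular terms, apply both sides to the vacuum $\Omega$, multiply by $(z-w)^N$ for $N = \max\{j : R_j \neq R'_j\}$, and let $z \to w$: this isolates $(R_N - R'_N)(a)(w)\Omega$, which by the state-operator correspondence forces $R_N = R'_N$, and one proceeds downward. The main obstacle I expect is the clean verification of the generalized Wick formula; the combinatorics of partial matchings has to be bookkept carefully to avoid double-counting internal versus cross contractions, but the recursion \eqref{indu} is precisely designed to make this bookkeeping work.
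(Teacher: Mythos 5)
Your proof is correct and follows exactly the route the paper sets up: the generalized Wick identity displayed just before the theorem statement is the key tool, and your argument fills in the intended steps---iterating \eqref{indu} to derive that identity, then differentiating, specializing $z_i=z$, $w_j=w$, and Taylor-expanding the normally ordered remainder in $z-w$ to extract the finitely many singular coefficients $R_j\in\mathcal V$. The uniqueness argument, applying to $\Omega$ and invoking the state-operator correspondence, is the standard one and is valid.
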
 

Note that the space locality property implies that $Q(a)(w)P(a)(z)$ is given by the same formula, but with $(z-w)^{-j}$ expanded in the opposite direction, i.e., 
$(z-w)^{-j}:=-\sum_{k<0} \binom{k+j-1}{j-1}z^{-j-k}w^k$. 
Thus. we have 
$$
[P(a)(z),Q(a)(w)]=\sum_{j=1}^{N}\frac{1}{(j-1)!}R_j(a)(w)\delta^{(j-1)}(w-z).
$$
Thus Theorem \ref{opethm} gives us information about commutators between the modes 
of $P$ and $Q$. For example, as we have seen above, 
$$
[a(z),a(w)]=\delta'(w-z), 
$$
and also 
$$
[a(z),T(w)]=a(w)\delta'(w-z),
$$
$$
[T(z),T(w)]=\frac{c}{12}\delta'''(w-z)+2T(w)\delta'(w-z)+T'(w)\delta(w-z),
$$
where in our example $c=1$. 

Moreover, it is clear that one can uniquely continue the expansion of Theorem \ref{opethm} to also include terms of nonnegative degree; namely, we simply need to expand the regular terms into a Taylor series with respect to $z-w$ for fixed $w$. For example, we have 
an asymptotic expansion 
$$
a(z)a(w)\sim \frac{1}{(z-w)^2}+\sum_{k=0}^\infty :a^{(k)}(w)a(w):\frac{(z-w)^k}{k!}
$$
So in general we have 
$$
P(a)(z)Q(a)(w)\sim \sum_{j=-\infty}^{N}R_j(a)(w)(z-w)^{-j}.
$$

This formula is called the {\it operator product expansion} of the product of $P$ and $Q$. 
The operator product expansion satisfies certain axioms, 
which means that it defines on the space $\mathcal V\cong \mathcal F$ an algebraic structure called a 
{\it vertex algebra}\index{vertex algebra} (which we will not discuss here, however). 

\subsection{Vertex operators} 

{\it Vertex operators}\index{vertex operator} are obtained by quantizing the local functional $e^{i\lambda \varphi(z)}$, where 
$$
\varphi(z)=-i\int a(z)dz=-i(a_0\log z+\sum_{n\ne 0}\tfrac{a_{-n}}{n}z^{n}+a_0^\vee)
$$
and $a_0^\vee$ is a constant of integration (dual variable to $a_0$). In other words, we have 
$$
e^{i\lambda \varphi(z)}=e^{\lambda \int a(z)dz}=e^{\lambda(a_0\log z+\sum_{n\ne 0}\frac{a_{-n}}{n}z^{n})}e^{\lambda a_0^\vee}.
$$
A natural quantization of this functional is the operator  
$$
X(\lambda,z):=:e^{\lambda(a_0\log z+\sum_{n\ne 0}\frac{a_{-n}}{n}z^n)}:e^{\lambda a_0^\vee}=
$$
$$
=e^{\lambda\sum_{n>0}\frac{a_{-n}}{n}z^n}e^{-\lambda\sum_{n>0}\frac{a_{n}}{n}z^{-n}}
z^{\lambda\mu} e^{\lambda\partial_\mu},
$$
which, due to the last factor, acts from $\mathcal F_\mu$ to $\mathcal F_{\mu+\lambda}$ 
by $X_0(\lambda,z)z^{\lambda\mu}$, where 
$$
X_0(\lambda,z):=e^{\lambda\sum_{n>0}\frac{a_{-n}}{n}z^n}e^{-\lambda\sum_{n>0}\frac{a_{n}}{n}z^{-n}}.
$$
Here we work over the group algebra of $\Bbb C$ with basis $z^\alpha,\alpha\in \Bbb C$. 

Now note that if $[A,B]$ commutes with $A,B$ then by the Campbell-Hausdorff formula 
$$
e^A e^B=e^B e^A e^{[A,B]}, 
$$
and that 
$$
[\sum_{n>0}\frac{a_{n}}{n}z^{-n},\sum_{n>0}\frac{a_{-n}}{n}w^{n}]=\sum_{n>0}\frac{z^{-n}w^{n}}{n}=-\log(1-\tfrac{w}{z}).
$$
Thus 
$$
X_0(\lambda,z)X_0(\nu,w)=(1-\tfrac{w}{z})^{\lambda \nu}:X_0(\lambda,z)X_0(\nu,w):
$$
for $|w|<|z|$. So we get
$$
X(\lambda,z)X(\nu,w)=(z-w)^{\lambda \nu}:X(\lambda,z)X(\nu,w):
$$
for $|w|<|z|$, where the normal ordering puts $\partial_\mu$ to the right of $\mu$. 
More generally, we see that 
$$
X(\lambda_1,z_1)...X(\lambda_n,z_n)=\prod_{1\le j<k\le n}(z_j-z_k)^{\lambda_j \lambda_k} :X(\lambda_1,z_1)...X(\lambda_n,z_n):
$$
for $|z_1|>...>|z_n|$. 
In particular, denoting the highest weight vector of $\mathcal F_\mu$ by $\Omega_\mu$, we have  
$$
\la \Omega_{\mu+\lambda}, X(\lambda_1,z_1)...X(\lambda_n,z_n)\Omega_\mu \ra=
\prod_{j=1}^n z_j^{\lambda_j\mu}\prod_{1\le j<k\le n}(z_j-z_k)^{\lambda_j \lambda_k}.
$$
for $|z_1|>...>|z_n|$.

We see that this correlation function admits analytic continuation to the complement of the diagonals $z_i\ne z_j$, but this continuation is not, in general, single valued. In other words, the fields $X(\lambda,z)$ in general do not satisfy space locality. Instead, we have 
\begin{equation}\label{commrel}
X(\lambda,z)X(\nu,w)=e^{\pi i\lambda\nu}X(\nu,w)X(\lambda,z),
\end{equation} 
which is understood in the sense of analytic continuation along a path where $v:=w/z$ passes from the region $|v|<1$ to the region $|v|>1$ along positive reals, avoiding the point $v=1$ from above. 
In particular, 
$$
X(\lambda,z)X(\lambda,w)=e^{\pi i\lambda^2}X(\lambda,w)X(\lambda,z),
$$
i.e., $X(\lambda,z)$ has ``statistics $\lambda^2/2$" (where 
statistics $\alpha\in \Bbb R/ \Bbb Z$ means 
that switching the order produces a phase factor 
$e^{2\pi i\alpha}$; e.g. statistics 
$0$ corresponds to bosons and statistics $1/2$ to 
fermions).

Note that if we apply commutation relation \eqref{commrel} 
twice, we obtain a multiplier 
$e^{2\pi i\lambda\nu}$, which corresponds to the fact that the operator product $X(\lambda,z)X(\nu,w)$ is multivalued in general. 

This is an example of appearance of a {\it braiding}\index{braiding} 
in conformal field theory. Namely, relation \eqref{commrel} 
is called {\it braided space-locality}\index{braided space-locality} (or {\it braided commutativity}),\index{braided commutativity} since it 
can be viewed as commutativity in a suitable braided monoidal category. 

Note also that 
$$
X'(\lambda,z)=\lambda :a(z)X(\lambda,z):
$$
where $X':=\partial_z X$, and 
$$
[a_n,X(\lambda,z)]=\lambda z^nX(\lambda,z).
$$
Hence 
$$
[L_n,X(\lambda,z)]=z^{n+1}X'(\lambda,z)+\frac{\lambda^2}{2}
(n+1)z^nX(\lambda,z),
$$
which implies that $X(\lambda,z)$ has spin $\lambda^2/2$. 
Thus we have the spin-statistics property for $X(\lambda,z)$, which generalizes the usual one: spin modulo $\Bbb Z$ equals statistics. 

As noted in Remark \ref{anyons}, such quantum fields are called ``anyons" (as they can have any spin and statistics) and can exist only in two dimensions. The most general 
spin-statistics property for these anyons 
says that if $X,Y$ are anyons of spins $s_X,s_Y\ge 0$ then 
$$
X(z)Y(w)=e^{2\pi i \sqrt{s_Xs_Y}}Y(w)X(z).
$$
In particular, we see that if $\lambda^2\in \Bbb Z$ is odd then $X(n\lambda,z)$ 
behave like fermions for odd $n$ and like bosons for even $n$ with respect to each other (i.e., the corresponding operators $X(n_1\lambda,z)$ and 
$X(n_2\lambda,z)$ commute if $n_1n_2$ is even 
and anticommute if $n_1n_2$ is odd), while for even $\lambda^2$ they all behave like bosons (i.e., the operators commute). 

\subsection{The circle-valued theory} 

Now consider the theory of a massless scalar on $\Bbb C^\times$ 
with values in the circle $\Bbb R/2\pi r\Bbb Z$. This theory is the same 
as the line-valued one, except for the zero mode, which 
entails the following circle-valued solutions of the string equation: 
$$
\phi(t,x)=\alpha+\mu t+Nrx,
$$
where $\alpha\in \Bbb R/2\pi r\Bbb Z$, $\mu\in \Bbb R$, and 
$N$ is an integer (the winding number). The space of such solutions 
is a disjoint union of cylinders $T^*S^1$ labeled by values of $N$. 
So in quantum theory we get the Hilbert space
$$
\mathcal H_r^\circ=\bigoplus_{N,\ell\in \Bbb Z}\mathcal H_r^\circ(N,\ell), 
$$
where 
$\mathcal H_r^\circ(N,\ell)$ is the completion of 
$\mathcal F_{\frac{1}{\sqrt{2}}(\ell r^{-1}+Nr)}\otimes \mathcal F_{\frac{1}{\sqrt{2}}(\ell r^{-1}-Nr)}^*$. Thus we obtain the following formula for the partition function on the torus $E_\tau$: 
$$
\mathcal Z_r^\circ(\tau)=|\eta(\tau)|^{-2}\vartheta_r(\tau,\overline \tau),
$$
where
$$
\vartheta_r(\tau,\overline \tau):=\sum_{\ell,N\in \Bbb Z}e^{\frac{1}{2}\pi i\tau (\ell r^{-1}+Nr)^2-\frac{1}{2}\pi i\overline \tau (\ell r^{-1}-Nr)^2}=
$$
$$
\sum_{\ell,N\in \Bbb Z}e^{-\pi (\ell^2 r^{-2}+N^2r^2){\rm Im}\tau+2\pi i\ell N {\rm Re}\tau}.
$$
This shows an interesting duality $\mathcal Z_r^\circ(\tau)=\mathcal Z_{r^{-1}}^\circ(\tau)$; in fact, 
we see that the whole theory with parameter $r$ is equivalent to the one with parameter $r^{-1}$. 
This duality is called {\it T-duality},\index{$T$-duality} and it plays an important role in string theory. 

Also we note that $\vartheta_r$ is a real modular form of weight $1$:
$$
\vartheta_r(-\tfrac{1}{\tau},-\tfrac{1}{\overline \tau})=|\tau| \vartheta_r(\tau,\overline \tau),
$$
which leads to modular invariance of the function $\mathcal Z_r(\tau)$, as expected in a conformal field theory. 
To see this, it is enough to note that in the exponential we have a quadratic form on $\Bbb Z^2$ with matrix 
$$
Q(\tau)=\begin{pmatrix} r^2{\rm Im\tau} & -i{\rm Re}\tau\\  -i{\rm Re \tau} & r^{-2}{\rm Im \tau}\end{pmatrix} 
$$
So 
$$
Q(\tau)^{-1}=|\tau|^{-2}\begin{pmatrix} r^{-2}{\rm Im\tau} & i{\rm Re}\tau\\  i{\rm Re \tau} & r^{2}{\rm Im \tau}\end{pmatrix}=
\begin{pmatrix} r^{-2}{\rm Im\tau'} & -i{\rm Re}\tau'\\  -i{\rm Re \tau'} & r^{2}{\rm Im \tau'}\end{pmatrix}=SQ(\tau')S,
$$
where $\tau':=-\frac{1}{\tau}$ and $S=\begin{pmatrix} 0& 1\\ 1& 0\end{pmatrix}$. Thus the result follows from the Poisson summation formula. 

We see that if $r^2=\frac{p}{q}\in \Bbb Q$ (in lowest terms), 
this conformal field theory has a special property called {\it rationality}:\index{rational conformal field theory}  the Hilbert space $\mathcal H_r^\circ$ is the completion of a finite sum 
of ``sectors" $\oplus_{i=1}^n \mathcal V_i\otimes \mathcal V_i^*$, where 
the left-moving fields act on $\mathcal V_i$ and 
right-moving ones in $\mathcal V_i^*$, so that 
$\vartheta_r(\tau,\overline \tau)$ and hence $\mathcal Z_r^\circ(\tau)$ 
are finite sums of products of a holomorphic
and an antiholomorphic function (in fact, it is easy to see that $n=2pq$). 
For example, 
the vacuum vector $\Omega$ is contained in
the tensor product $\mathcal V(pq)\otimes \mathcal V(pq)^*$
where for $s\in \Bbb Z_{>0}$ we defined
$\mathcal V(s):=\oplus_{m\in \Bbb Z}\mathcal F_{m\sqrt{2s}}$. 
The space $\mathcal V(s)$ is a vertex algebra 
called the {\it lattice vertex algebra}\label{lattice vertex algebra}
attached to the even lattice $\sqrt{2s}\Bbb Z$. 
This algebra is generated by the vertex operators 
$X(m\sqrt{2s},z)$ (which, as we know, satisfy the bosonic version of space locality). 

\begin{example} Consider the case $r=1$. In this case we have two sectors, 
the vacuum sector $\mathcal V(2)\otimes \mathcal V(2)^*$ 
and another one, $\mathcal W\otimes \mathcal W^*$, 
where $\mathcal W=\oplus_{n\in 2\Bbb Z+1}\mathcal F_{\frac{n}{\sqrt{2}}}$. 
The particles corresponding to $\mathcal F_{\frac{n}{\sqrt{2}}}$ 
for odd $n$ are anyons with statistics $\frac{1}{4}$, so they 
satisfy the braided commutativity relation 
of the form $X(z)Y(w)=iY(w)X(z)$. 

It is not difficult to show that the Fourier modes of the vertex operators 
$X(\sqrt{2},z)$ and $X(-\sqrt{2},z)$ generate a projective action 
of the Lie algebra $\mathfrak{sl}_2[z,z^{-1}]$ on $\mathcal V(2)=\Bbb L_0$ and on $\mathcal W=\Bbb L_1$, which are exactly the irreducible integrable representations of the affine Kac-Moody algebra $\widehat{\mathfrak{sl}}_2=\mathfrak{sl}_2[t,t^{-1}]\oplus \Bbb CK$ (the universal central extension of $\mathfrak{sl}_2[t,t^{-1}]$ at level $k=1$ (i.e., $K$ acts by $1$), namely 
$X(\sqrt{2},z)$, $X(-\sqrt{2},z)$, $\sqrt{2} a(z)$ give the currents $e(z)$, $f(z)$ and $h(z)$, where for $b\in {\mathfrak{sl}}_2$
$$
b(z):=\sum_n (b\otimes t^n)z^{-n-1}.
$$
This is the so called {\it Frenkel-Kac vertex operator construction}\label{Frenkel-Kac vertex operator construction} of level 1 irreducible integrable modules (defined for any finite dimensional simply-laced simple Lie algebra) 
in the simplest special case $\g=\mathfrak{sl}_2$. 
Thus the circle-valued theory of a free boson for 
$r=1$ is the so-called {\it Wess-Zumino-Witten model}\index{Wess-Zumino-Witten model} in the simplest example 
of the Lie algebra $\mathfrak{sl}_2$ and level $1$. 
\end{example} 

\begin{example} Let $r=\sqrt{2}$. In this case we have four sectors: 
$\mathcal V_j\otimes \mathcal V_{-j}^*$, $j=0,1,2,3$, 
where $\mathcal V_j=\oplus_{n\in 4\Bbb Z+j}\mathcal F_{\frac{n}{2}}$. 
In particular, $\mathcal V_0=\mathcal V(4)$ and particles in 
$\mathcal V_2=\mathcal F_1$ are fermions arising in the boson-fermion correspondence. 
\end{example} 

\subsection{Free massless fermions} 

In a similar way to free massless bosons, 
one can describe the theory of a free massless fermion $\xi(z)$. 
As explained in Subsection \ref{ferlag}, in two dimensions it makes sense to consider 
chiral spinors taking values in the tautological representation of 
${\rm Spin}(2)=U(1)$ with kinetic term $(\xi,\bold D \xi)$. 
So we have a single quantum field 
$$
\xi(z)=\sum_{n\in \Bbb Z+\frac{1}{2}}\xi_n z^{-n-\frac{1}{2}}
$$ 
and the conjugate quantum field $\xi_*(\overline z)$. 
The modes of $\xi(z)$ satisfy the relation 
$$
[\xi(z),\xi(w)]_+=\delta(z-w), 
$$
where $[,]_+$ is the supercommutator. 
This yields the Clifford algebra relations 
$$
\xi_n\xi_m+\xi_m\xi_n=\delta_{m,-n}
$$
for $m,n\in \Bbb Z$. This algebra has a unique irreducible positive energy representation 
$\Lambda=\wedge (\xi_{-1/2},\xi_{-3/2},...)$ on which $\xi_j$ acts by multiplications for $j<0$ 
and by differentiations for $j>0$. There is an invariant positive Hermitian inner product on $\Lambda$ 
in which the Clifford monomials in $\xi_j$, $j>0$ form an orthonormal basis (invariance means that 
$\xi_j^\dagger=\xi_{-j}$). Thus the Hilbert space of the theory is the completion of $\mathcal D:=\Lambda\otimes \Lambda^*$, where $\Lambda^*$ is the dual of $\Lambda$ corresponding to antiholomorphic fields. 

The hamiltonian $H$ is supposed to satisfy commutation relations
$$
[H,\xi_n]=-\xi_n,\ [H,\xi_n^*]=\xi_n^*,
$$ 
So we have 
$$
H=H_L+H_R,
$$
where
$$
H_L=\sum_{n>0}n\xi_{-n}\xi_n
$$
and similarly for $H_R$. 
The Virasoro algebra 
is defined by 
$$
L_m=\frac{1}{2}\sum_{n\in \Bbb Z+\frac{1}{2}}n:\xi_n\xi_{-n+m}:,
$$
i.e., $H_L=L_0$. 

\begin{exercise} Show that these operators $L_n$ satisfy the Virasoro commutation relations 
with central charge $c=\frac{1}{2}$. 
\end{exercise}

\printindex


\begin{thebibliography}{99999}

\bibitem[A]{A} V.  Arnold, Mathematical methods of classical mechanics, 
Graduate Texts in Mathematics, Springer, 1989. 

\bibitem[Co]{Co} S. Coleman, Lectures on quantum field theory, 
\url{https://arxiv.org/pdf/1110.5013}

\bibitem[BIPZ]{BIPZ} E. Br\'ezin, C. Itzykson, G. Parisi, J.-B. Zuber, 
Planar diagrams, Comm. Math. Phys., v. 59, p.35--51, 1978. 

\bibitem[De]{De} P. Deift, Orthogonal polynomials and random matrices: a Riemann-Hilbert approach, Courant lecture notes, v. 3, AMS, 2000. 

\bibitem[QFS]{QFS} P. Deligne, P. Etingof, D. Freed, L. Jeffrey, D. Kazhdan, J. Morgan, D. Morrison, E. Witten, Quantum fields and strings: a course for mathematicians, v. 1 and 2, AMS, Providence, 1999. 

\bibitem[HZ]{HZ} J. Harer, D. Zagier, The Euler characteristic of the moduli space of curves, Inv. Math.,v.85, p.457-- 485. 

\bibitem[IZ]{IZ} C. Itzykson, J.-B. Zuber, Quantum Field theory, 
Dover, 2006.

\bibitem[PS]{PS} M. Peskin, D. Schroeder, An introduction to quantum field theory, Taylor and Francis, 1995. 

\bibitem[W]{W} S. Weinberg, Quantum theory of fields, volumes I-III, Cambridge U. Press, 2005. 

\end{thebibliography}
\end{document}